\definecolor{winered}{rgb}{0.6,0,0}
\definecolor{lessblue}{rgb}{0,0,0.7}
\newcommand{\myitem}[3]{\item[#2]\def\@currentlabel{#3}\label{#1}}
\def\@tocline#1#2#3#4#5#6#7{
\begingroup
  \par
    \parindent\z@ \leftskip#3 \relax \advance\leftskip\@tempdima\relax
                  \rightskip\@pnumwidth plus 4em \parfillskip-\@pnumwidth
    \ifcase #1 
       \vskip 0.6em \hskip 0em 
       \or
       \or \hskip 0em 
       \or \hskip 1em 
    \fi%
    %
    #6
    %
    \nobreak\relax{\leavevmode\leaders\hbox{\,.}\hfill}
    \hbox to\@pnumwidth {\@tocpagenum{#7}}
  \par
\endgroup
}
 \def\l@section{\@tocline{0}{0pt}{0pc}{}{}}
\renewcommand{\tocsection}[3]{%
  \indentlabel{\@ifnotempty{#2}{ 
    \ignorespaces\bfseries{#2. #3}}}
  \indentlabel{\@ifempty{#2}{\ignorespaces\bfseries{#3}}{}} 
    \vspace{1.5pt}}
\renewcommand{\tocsubsection}[3]{%
  \indentlabel{\@ifnotempty{#2}{
    \ignorespaces#2. #3}}
  \indentlabel{\@ifempty{#2}{\ignorespaces #3}{}}
    \vspace{1.5pt}}
\renewcommand{\tocsubsubsection}[3]{%
  \indentlabel{\@ifnotempty{#2}{
    \ignorespaces#2. #3}}
  \indentlabel{\@ifempty{#2}{\ignorespaces #3}{}}
    \vspace{1.5pt}}
\def\@nomenstarted{0}
\newlength{\@nomenoldtabcolsep}
\newcommand{\nomenstart}
  {%
    \def\@nomenstarted{1}%
    \setlength{\@nomenoldtabcolsep}{\tabcolsep}%
    \setlength{\tabcolsep}{3.5pt}%
    \begin{longtable}{p{0.11\textwidth} p{0.86\textwidth}}
  }
\newcommand{\nomenitem}[2]{%
    \ifcase\@nomenstarted%
      \or 
      \or \\ 
    \fi%
    #1\,{\leavevmode\leaders\hbox{\,.}\hfill} & #2%
    \def\@nomenstarted{2}%
  }%
\newcommand{\nomenend}
  {\\%
      \end{longtable}%
      \setlength{\tabcolsep}{\@nomenoldtabcolsep}%
      \def\@nomenstarted{0}%
  }
\newcommand{\vast}{\bBigg@{4}}
\newcommand{\Vast}{\bBigg@{5}}
\newcommand{\VAST}[1]{\bBigg@{#1}}
\numberwithin{equation}{section}
\numberwithin{figure}{section}
\newtheorem{thm}{Theorem}[section]
\newtheorem{prop}[thm]{Proposition}
\newtheorem{lemma}[thm]{Lemma}
\newtheorem{cor}[thm]{Corollary}
\newtheorem*{thm*}{Theorem}
\newtheorem*{prop*}{Proposition}
\newtheorem*{cor*}{Corollary}
\newtheorem*{conj*}{Conjecture}
\theoremstyle{definition}
\newtheorem{definition}[thm]{Definition}
\newtheorem{notation}[thm]{Notation}
\theoremstyle{remark}
\newtheorem{rmk}[thm]{Remark}
\newcommand{\fakephantomsection}{%
  \Hy@MakeCurrentHref{\@currenvir.\the\Hy@linkcounter}
  \Hy@raisedlink{\hyper@anchorstart{\@currentHref}\hyper@anchorend}%
  \Hy@GlobalStepCount\Hy@linkcounter%
}
\newcommand{\mc}{\mathcal}
\newcommand{\cA}{\mc A}
\newcommand{\cC}{\mc C}
\newcommand{\cH}{\mc H}
\newcommand{\cN}{\mc N}
\newcommand{\cO}{\mc O}
\newcommand{\cR}{\mc R}
\newcommand{\cV}{\mc V}
\newcommand{\C}{\mathbb{C}}
\newcommand{\N}{\mathbb{N}}
\newcommand{\R}{\mathbb{R}}
\newcommand{\Z}{\mathbb{Z}}
\newcommand{\Sph}{\mathbb{S}}
\newcommand{\sfa}{\mathsf{a}}
\newcommand{\sfb}{\mathsf{b}}
\newcommand{\sfr}{\mathsf{r}}
\newcommand{\fa}{\mathfrak{a}}
\newcommand{\fm}{\mathfrak{m}}
\newcommand{\slg}{\slashed{g}{}}
\renewcommand{\Re}{\operatorname{Re}}
\renewcommand{\Im}{\operatorname{Im}}
\newcommand{\Id}{\operatorname{Id}}
\newcommand{\supp}{\operatorname{supp}}
\newcommand{\tr}{\operatorname{tr}}
\newcommand{\diag}{\operatorname{diag}}
\newcommand{\Res}{\operatorname{Res}}
\newcommand{\QNM}{{\mathrm{QNM}}}
\newcommand{\dS}{{\mathrm{dS}}}
\newcommand{\eps}{\epsilon}
\newcommand{\hra}{\hookrightarrow}
\newcommand{\la}{\langle}
\newcommand{\ol}{\overline}
\newcommand{\pa}{\partial}
\newcommand{\dd}{{\mathrm d}}
\newcommand{\ra}{\rangle}
\newcommand{\tot}{\mathrm{tot}}
\newcommand{\ul}[1]{\underline{#1}{}}
\newcommand{\weakto}{\rightharpoonup}
\newcommand{\wh}{\widehat}
\newcommand{\wt}{\widetilde}
\newcommand{\xra}{\xrightarrow}
\newcommand{\bop}{{\mathrm{b}}}
\newcommand{\cop}{{\mathrm{c}}}
\newcommand{\Qop}{{\mathrm{Q}}}
\newcommand{\qop}{{\mathrm{q}}}
\newcommand{\scop}{{\mathrm{sc}}}
\newcommand{\chop}{{\mathrm{c}\semi}}
\newcommand{\cl}{{\mathrm{cl}}}
\newcommand{\scbtop}{{\mathrm{sc}\text{-}\mathrm{b}}}
\newcommand{\schop}{{\mathrm{sc},\semi}}
\newcommand{\semi}{\hbar}
\newcommand{\low}{{\mathrm{low}}}
\newcommand{\ff}{\mathrm{ff}}
\newcommand{\lb}{{\mathrm{lb}}}
\newcommand{\rb}{{\mathrm{rb}}}
\newcommand{\tlb}{{\mathrm{tlb}}}
\newcommand{\trb}{{\mathrm{trb}}}
\newcommand{\bface}{{\mathrm{bf}}}
\newcommand{\cface}{{\mathrm{cf}}}
\newcommand{\dface}{{\mathrm{df}}}
\newcommand{\iface}{{\mathrm{if}}}
\newcommand{\mface}{{\mathrm{mf}}}
\newcommand{\nface}{{\mathrm{nf}}}
\newcommand{\scface}{{\mathrm{scf}}}
\newcommand{\sface}{{\mathrm{sf}}}
\newcommand{\tface}{{\mathrm{tf}}}
\newcommand{\zface}{{\mathrm{zf}}}
\newcommand{\res}{{\mathrm{res}}}
\newcommand{\cp}{{\mathrm{c}}}
\newcommand{\Diff}{\mathrm{Diff}}
\DeclareMathOperator{\Op}{Op}
\newcommand{\Opb}{\Op_\bop}
\newcommand{\Opsc}{\Op_\scop}
\newcommand{\Vb}{\cV_\bop}
\newcommand{\Vscbt}{\cV_\scbtop}
\newcommand{\VQ}{\cV_\Qop}
\newcommand{\Vq}{\cV_\qop}
\newcommand{\Vch}{\cV_\chop}
\newcommand{\Diffch}{\Diff_\chop}
\newcommand{\Diffb}{\Diff_\bop}
\newcommand{\Diffscbt}{\Diff_\scbtop}
\newcommand{\DiffQ}{\Diff_\Qop}
\newcommand{\Diffq}{\Diff_\qop}
\newcommand{\Psib}{\Psi_\bop}
\newcommand{\Psiscbt}{\Psi_\scbtop}
\newcommand{\PsiQ}{\Psi_\Qop}
\newcommand{\Psiq}{\Psi_\qop}
\newcommand{\Psich}{\Psi_\chop}
\newcommand{\Psisc}{\Psi_\scop}
\newcommand{\Psisch}{\Psi_\schop}
\newcommand{\Tscbt}{{}^\scbtop T}
\newcommand{\Vsc}{\cV_\scop}
\newcommand{\Diffsc}{\Diff_\scop}
\newcommand{\Vsch}{\cV_\schop}
\newcommand{\Diffsch}{\Diff_\schop}
\newcommand{\Psih}{\Psi_\semi}
\newcommand{\Omegab}{{}^{\bop}\Omega}
\newcommand{\OmegaQ}{{}^{\Qop}\Omega}
\newcommand{\Omegaq}{{}^{\qop}\Omega}
\newcommand{\Omegasch}{{}^\schop\Omega}
\newcommand{\Omegascbt}{{}^\scbtop\Omega}
\newcommand{\Omegach}{{}^{\chop}\Omega}
\newcommand{\Omegasc}{{}^{\scop}\Omega}
\newcommand{\Tb}{{}^{\bop}T}
\newcommand{\Tch}{{}^\chop T}
\newcommand{\TQ}{{}^\Qop T}
\newcommand{\Tq}{{}^\qop T}
\newcommand{\Tsc}{{}^{\scop}T}
\newcommand{\Tsch}{{}^\schop T}
\newcommand{\SQ}{{}^\Qop S}
\newcommand{\Sch}{{}^\chop S}
\newcommand{\Ssc}{{}^{\scop}S}
\newcommand{\half}{{\tfrac{1}{2}}}
\newcommand{\sigmab}{{}^\bop\upsigma}
\newcommand{\sigmascbt}{{}^\scbtop\upsigma}
\newcommand{\sigmach}{{}^\chop\upsigma}
\newcommand{\sigmaQ}{{}^\Qop\upsigma}
\newcommand{\sigmaq}{{}^\qop\upsigma}
\newcommand{\sigmasc}{{}^\scop\upsigma}
\newcommand{\sigmasch}{{}^\schop\upsigma}
\newcommand{\loc}{{\mathrm{loc}}}
\newcommand{\CI}{\cC^\infty}
\newcommand{\CIdot}{\dot\cC^\infty}
\newcommand{\CIc}{\cC^\infty_\cp}
\newcommand{\CmI}{\cC^{-\infty}}
\newcommand{\Hb}{H_{\bop}}
\newcommand{\Hbext}{\bar H_{\bop}}
\newcommand{\Hbsupp}{\dot H_{\bop}}
\newcommand{\Hext}{\bar H}
\newcommand{\Hsupp}{\dot H}
\newcommand{\Hsc}{H_{\scop}}
\newcommand{\Ric}{\mathrm{Ric}}
\newcommand{\bhm}{\fm}
\newcommand{\bha}{\fa}
\newcommand{\openbigpmatrix}[1]
  {%
    \def\@bigpmatrixsize{#1}%
    \addtolength{\arraycolsep}{-#1}%
    \begin{pmatrix}%
  }
\newcommand{\closebigpmatrix}
  {%
    \end{pmatrix}%
    \addtolength{\arraycolsep}{\@bigpmatrixsize}%
  }
\newlength{\enummargin}\setlength{\enummargin}{1.5em}
\newcommand{\usref}[1]{{\upshape\ref{#1}}}
\newcommand*{\fwbw}[1]{\expandafter\@fwbw\csname c@#1\endcsname}
\newcommand*{\@fwbw}[1]{\ifcase #1 \or {\rm fw}\or {\rm bw}\fi}
\AddEnumerateCounter{\fwbw}{\@fwbw}
\begin{document}

\title[Mode stability of Kerr--de~Sitter]{Mode stability and shallow quasinormal modes of Kerr--de~Sitter black holes away from extremality}

\date{\today}

\begin{abstract}
  A Kerr--de~Sitter black hole is a solution $(M,g_{\Lambda,\bhm,\bha})$ of the Einstein vacuum equations with cosmological constant $\Lambda>0$. It describes a black hole with mass $\bhm>0$ and specific angular momentum $\bha\in\R$. We show that for any $\eps>0$ there exists $\delta>0$ so that mode stability holds for the linear scalar wave equation $\Box_{g_{\Lambda,\bhm,\bha}}\phi=0$ when $|\bha/\bhm|\in[0,1-\eps]$ and $\Lambda\bhm^2<\delta$. In fact, we show that all quasinormal modes $\sigma$ in any fixed half space $\Im\sigma>-C\sqrt\Lambda$ are equal to $0$ or $-i\sqrt{\Lambda/3}(n+o(1))$, $n\in\N$, as $\Lambda\bhm^2\searrow 0$. We give an analogous description of quasinormal modes for the Klein--Gordon equation.

  We regard a Kerr--de~Sitter black hole with small $\Lambda\bhm^2$ as a singular perturbation either of a Kerr black hole with the same angular momentum-to-mass ratio, or of de~Sitter spacetime without any black hole present. We use the mode stability of subextremal Kerr black holes, proved by Whiting and Shlapentokh-Rothman, as a black box; the quasinormal modes described by our main result are perturbations of those of de~Sitter space. Our proof is based on careful uniform a priori estimates, in a variety of asymptotic regimes, for the spectral family and its de~Sitter and Kerr model problems in the singular limit $\Lambda\bhm^2\searrow 0$.
\end{abstract}


\author{Peter Hintz}
\address{Department of Mathematics, ETH Z\"urich, R\"amistrasse 101, 8092 Z\"urich, Switzerland}
\email{peter.hintz@math.ethz.ch}

\maketitle


\section{Introduction}
\label{SI}

The metric of a subextremal Kerr--de~Sitter (KdS) spacetime depends on the parameters $\Lambda>0$ (cosmological constant), $\bhm>0$ (mass of the black hole), and $\bha\in\R$ (specific angular momentum). It involves the quartic polynomial
\begin{equation}
\label{EqImu}
  \mu_{\Lambda,\bhm,\bha}(r) = (r^2+\bha^2)\Bigl(1-\frac{\Lambda r^2}{3}\Bigr) - 2\bhm r.
\end{equation}
The spacetime, or the set of parameters $(\Lambda,\bhm,\bha)$, is called \emph{subextremal} if $\mu_{\Lambda,\bhm,\bha}$ has four distinct real roots
\[
  r_{\Lambda,\bhm,\bha}^- < r_{\Lambda,\bhm,\bha}^C < r_{\Lambda,\bhm,\bha}^e < r_{\Lambda,\bhm,\bha}^c.
\]
For subextremal parameters, the KdS metric is given on the \emph{domain of outer communications}
\begin{equation}
\label{EqIKdSDOC}
  M_{\Lambda,\bhm,\bha}^{\rm DOC} = \R_t \times (r_{\Lambda,\bhm,\bha}^e, r_{\Lambda,\bhm,\bha}^c)_r \times (0,\pi)_\theta \times (0,2\pi)_\phi 
\end{equation}
in Boyer--Lindquist coordinates (introduced in the special case $\Lambda=0$ in \cite{BoyerLindquistKerr}) by
\begin{align}
\label{EqIMetric}
\begin{split}
  g_{\Lambda,\bhm,\bha} &:= -\frac{\mu_{\Lambda,\bhm,\bha}(r)}{b_{\Lambda,\bhm,\bha}^2\varrho_{\Lambda,\bhm,\bha}^2(r,\theta)}(\dd t-\bha\,\sin^2\theta\,\dd\phi)^2 + \varrho_{\Lambda,\bhm,\bha}^2(r,\theta)\Bigl(\frac{\dd r^2}{\mu_{\Lambda,\bhm,\bha}(r)} + \frac{\dd\theta^2}{c_{\Lambda,\bhm,\bha}(\theta)}\Bigr) \\
    &\qquad + \frac{c_{\Lambda,\bhm,\bha}(\theta)\sin^2\theta}{b_{\Lambda,\bhm,\bha}^2\varrho_{\Lambda,\bhm,\bha}^2(r,\theta)}\bigl( (r^2+\bha^2)\dd\phi - \bha\,\dd t\bigr)^2,
\end{split} \\
\label{EqIMetric2}
  &\hspace{-1em}b_{\Lambda,\bhm,\bha}:=1+\frac{\Lambda\bha^2}{3},\quad
    c_{\Lambda,\bhm,\bha}(\theta):=1+\frac{\Lambda\bha^2}{3}\cos^2\theta,\quad
    \varrho_{\Lambda,\bhm,\bha}^2(r,\theta):=r^2+\bha^2\cos^2\theta.
\end{align}
Its physical relevance stems from the fact that it is a solution of the Einstein vacuum equation $\Ric(g_{\Lambda,\bhm,\bha})-\Lambda g_{\Lambda,\bhm,\bha}=0$. It was discovered by Carter \cite{CarterHamiltonJacobiEinstein}, following the earlier discovery \cite{KerrKerr} of the Kerr metric, which is obtained by formally setting $\Lambda=0$:
\[
  g_{\bhm,\bha} := g_{0,\bhm,\bha};\qquad
  \Ric(g_{\bhm,\bha})=0\ \ \text{on}\ \ \R_t\times(r_{\bhm,\bha}^e,\infty)_r\times(0,\pi)_\theta\times(0,2\pi)_\phi.
\]
For $\Lambda=0$, the condition for subextremality is that $r^2+\bha^2-2\bhm r$ have two distinct real roots $r_{\bhm,\bha}^C<r_{\bhm,\bha}^e$; these roots are $\bhm\mp\sqrt{\bhm^2-\bha^2}$, and thus a Kerr spacetime is subextremal if and only if $|\bha/\bhm|<1$. When $\Lambda\bhm^2>0$ is sufficiently small, this is also a sufficient condition for the subextremality of the KdS spacetime; see Figure~\ref{FigI} below, and Lemma~\ref{LemmaKRoots} for a weaker---but sufficient for our purposes---statement.

The above expression for the metric becomes singular at $r=r_{\Lambda,\bhm,\bha}^e$ and $r=r_{\Lambda,\bhm,\bha}^c$. This is merely a coordinate singularity, as can be seen by passing to the coordinates
\begin{equation}
\label{EqIKerrStar}
\begin{alignedat}{2}
  t_* &:= t - T_{\Lambda,\bhm,\bha}(r),&\qquad T_{\Lambda,\bhm,\bha}'(r)&=(r^2+\bha^2)\frac{b_{\Lambda,\bhm,\bha}}{\mu_{\Lambda,\bhm,\bha}(r)}F_{\Lambda,\bhm,\bha}(r), \\
  \phi_* &:= \phi - \Phi_{\Lambda,\bhm,\bha}(r),&\qquad \Phi_{\Lambda,\bhm,\bha}'(r)&=\bha \frac{b_{\Lambda,\bhm,\bha}}{\mu_{\Lambda,\bhm,\bha}(r)}F_{\Lambda,\bhm,\bha}(r),
\end{alignedat}
\end{equation}
where $F_{\Lambda,\bhm,\bha}(r)=2\frac{r-r_{\Lambda,\bhm,\bha}^e}{r_{\Lambda,\bhm,\bha}^c-r_{\Lambda,\bhm,\bha}^e}-1$. Expressed in the coordinates $(t_*,r,\theta,\phi_*)$, the metric $g_{\Lambda,\bhm,\bha}$ extends real analytically to
\begin{equation}
\label{EqIKdSExt}
  \wt M_{\Lambda,\bhm,\bha} = \R_{t_*} \times \wt X_{\Lambda,\bhm,\bha},\qquad
  \wt X_{\Lambda,\bhm,\bha} := (r_{\Lambda,\bhm,\bha}^C,\infty)_r \times \Sph^2_{\theta,\phi_*}.
\end{equation}
See~\cite[Equation~(5)]{PetersenVasySubextremal} for the explicit expression.\footnote{In the main part of the paper, we will make a different choice of $F_{\Lambda,\bhm,\bha}(r)$ which has better properties in the limit $\Lambda\bhm^2\searrow 0$; see~\S\ref{SsKL}.} The two null hypersurfaces
\[
  \cH^+_{\Lambda,\bhm,\bha} = \R_{t_*}\times\{r_{\Lambda,\bhm,\bha}^e\}\times\Sph^2_{\theta,\phi_*}, \qquad
  \ol\cH{}^+_{\Lambda,\bhm,\bha} = \R_{t_*}\times\{r_{\Lambda,\bhm,\bha}^C\}\times\Sph^2_{\theta,\phi_*}
\]
are called the \emph{(future) event horizon} and \emph{(future) cosmological horizon}, respectively.

The object of main interest in this paper is the set
\[
  \QNM(\Lambda,\bhm,\bha)\subset\C
\]
of \emph{resonances}, or \emph{quasinormal modes}, of the scalar wave operator $\Box_{g_{\Lambda,\bhm,\bha}}$. Here, $\sigma\in\QNM(\Lambda,\bhm,\bha)$ if and only if there exists a \emph{resonant state} $u_0(r,\theta,\phi_*)\in\CI(\wt X_{\Lambda,\bhm,\bha})$ so that $u(t_*,r,\theta,\phi_*)=e^{-i\sigma t_*}u_0(r,\theta,\phi_*)\in\CI(M_{\Lambda,\bhm,\bha})$ is a \emph{mode solution} of the wave equation $\Box_{g_{\Lambda,\bhm,\bha}}u=0$. (For an equivalent definition in terms of Boyer--Lindquist coordinates, see e.g.\ \cite[Theorem~3]{DyatlovQNM} or \cite[Definition~2.4]{CasalsTeixeiradCModes}.)

\begin{thm}[Quasinormal modes of Kerr--de~Sitter black holes away from extremality: massless scalar fields]
\label{ThmI}
  Fix $C>0$, and let $\eps>0$. Then there exists $\delta>0$ so that for\footnote{The quantities $\bha/\bhm$, $\Lambda\bhm^2$, and $\Lambda^{-\frac12}\sigma$ are dimensionless; see~\S\ref{SsIS}.} $|\bha/\bhm|\leq 1-\eps$ and $\Lambda\bhm^2\in(0,\delta)$, every
  \[
    \sigma\in\QNM(\Lambda,\bhm,\bha),\qquad \Im(\Lambda^{-1/2}\sigma)>-C
  \]
  either satisfies $\sigma=0$ or $\sigma=-i\sqrt{\Lambda/3}(n+o(1))$ for some $n\in\N$ as $\Lambda\bhm^2\searrow 0$. Moreover, the only mode solutions with $\sigma=0$ are constant functions. Conversely, for any $n\in\N$ and $\eta>0$ there exists, for sufficiently small $\Lambda\bhm^2>0$ and for any $\bha/\bhm\in[-1+\eps,1-\eps]$, an element $\sigma\in\QNM(\Lambda,\bhm,\bha)$ with $|\sigma\sqrt{3/\Lambda}+i n|<\eta$.
\end{thm}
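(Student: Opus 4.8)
\emph{Setup.} I would work in the stationary picture of~\eqref{EqIKdSExt}: conjugating $\Box_{g_{\Lambda,\bhm,\bha}}$ by the Mellin transform in $t_*$ gives a holomorphic family of operators $P_{\Lambda,\bhm,\bha}(\sigma)$ on $\wt X_{\Lambda,\bhm,\bha}$, elliptic in the interior and with radial points at the horizons $\cH^+$ and $\ol\cH{}^+$, whose inverse between the natural spaces of functions smooth up to the horizons is meromorphic with poles exactly at $\QNM(\Lambda,\bhm,\bha)$. By dimensional analysis (see the footnote to Theorem~\ref{ThmI}) I reduce to the parameters $\bha/\bhm\in[-1+\eps,1-\eps]$, the small quantity $\fc:=\Lambda\bhm^2\in(0,\delta)$, and $\sigma_0:=\Lambda^{-1/2}\sigma$ ranging over a fixed compact $K\subset\{\Im\sigma_0>-C\}$. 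Two length scales are present: $r\sim\bhm$, where $\mu_{\Lambda,\bhm,\bha}$ agrees to leading order with its $\Lambda=0$ form and the metric is close to Kerr; and $r\sim\Lambda^{-1/2}$, where the $2\bhm r$ term in $\mu$ is negligible and the metric limits to static de~Sitter. Their ratio is $\sqrt\fc$, so as $\fc\searrow 0$ a flat collar $\bhm\ll r\ll\Lambda^{-1/2}$ opens between them.

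\emph{Model operators.} In the black-hole region, $\bhm^2 P_{\Lambda,\bhm,\bha}(\sigma)$ limits as $\fc\searrow 0$ to the spectral family of exact subextremal Kerr at frequency $\bhm\sigma=\sqrt\fc\,\sigma_0\to 0$, i.e.\ to the \emph{zero-energy Kerr operator} $L^{\mathrm{Kerr}}$ on $(r^e_{\bhm,\bha},\infty)_r\times\Sph^2$. Here I would invoke the Whiting/Shlapentokh-Rothman mode stability of subextremal Kerr as a black box; combined with the b-structure of $L^{\mathrm{Kerr}}$ at $r\to\infty$ (indicial roots $\ell$ and $-\ell-1$ on the degree-$\ell$ harmonic), this yields that $L^{\mathrm{Kerr}}$ is invertible on suitable weighted spaces modulo the explicit leading asymptotics, with kernel on decaying spaces exactly $\C\cdot 1$. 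In the cosmological region, $\Lambda^{-1}P_{\Lambda,\bhm,\bha}(\sigma)$ (after rescaling $r=\Lambda^{-1/2}\rho$) limits to the de~Sitter spectral family $\widehat{\Box_{g_{\dS}}}(\sigma_0)$, solvable in hypergeometric/Gegenbauer functions: it is invertible on the natural spaces for $\sigma_0\notin\{0\}\cup(-i\sqrt{\Lambda/3}\,\N)$, and at $\sigma_0=-i\sqrt{\Lambda/3}\,n$ its kernel is finite-dimensional and spanned by resonant states that are \emph{smooth down to $\rho=0$} (polynomial there); at $\sigma_0=0$ the only such state is the constant, and for large $\ell$ the de~Sitter resonances on the degree-$\ell$ harmonic lie far below $\Im\sigma_0=-C$. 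Near $\rho=0$ the de~Sitter model degenerates to the same flat operator as $L^{\mathrm{Kerr}}$ near $r=\infty$—the Euclidean scaling operator with indicial roots $\ell$ and $-\ell-1$—which is the mechanism by which the two models are glued.

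\emph{Uniform estimates and conclusion.} The core is a uniform-in-$\fc$ a priori estimate: for $\sigma_0$ in a compact subset of $\{\Im\sigma_0>-C\}$ at distance $\geq\eta$ from $\{0\}\cup(-i\sqrt{\Lambda/3}\,\N)$, one has $\|u\|_{\cX}\lesssim\|P_{\Lambda,\bhm,\bha}(\sigma)u\|_{\cY}$ with constant independent of small $\fc$. I would prove it by patching: on $r\lesssim\fc^{-1/2}\bhm$ apply the $L^{\mathrm{Kerr}}$ estimate, absorbing the errors from the frequency $\sqrt\fc\,\sigma_0$ and from $\mu-\mu_{0,\bhm,\bha}$ (both $O(\sqrt\fc)$); on $r\gtrsim\fc^{1/2}\Lambda^{-1/2}$ apply the de~Sitter estimate; on the overlap $\bhm\ll r\ll\Lambda^{-1/2}$ both operators are $O(\sqrt\fc)$-close to the flat Laplacian and the two parametrices are matched on the degree-$\ell$ harmonic via the pair $r^\ell,r^{-\ell-1}$. (Equivalently, glue the two model inverses directly in the collar.) Hence $P_{\Lambda,\bhm,\bha}(\sigma)$ is invertible on $K$ away from $\eta$-neighborhoods of $\{0\}\cup(-i\sqrt{\Lambda/3}\,\N)$ once $\fc<\delta$, which gives the $o(1)$ localization of $\QNM(\Lambda,\bhm,\bha)\cap\{\Im\sigma_0>-C\}$. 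That $\sigma=0$ has only constant mode solutions follows by running the same estimate at $\sigma_0=0$ as a Grushin problem: the defect is finite-dimensional with $\fc\to0$ limit equal to the (shared) span of the de~Sitter and Kerr constants, and the effective finite matrix is invertible on its orthocomplement for small $\fc$, so $\ker P_{\Lambda,\bhm,\bha}(0)=\C\cdot 1$.

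\emph{Existence, and the main obstacle.} For the converse, fix $n\in\N$ and set up a Grushin problem near $\sigma_0=-i\sqrt{\Lambda/3}\,n$: the de~Sitter resolvent has a genuine pole there, and since its resonant states are regular at $\rho=0$ they match in the collar to solutions of $L^{\mathrm{Kerr}}$ with prescribed regular ($r^\ell$) leading behavior; gluing produces a holomorphic family of finite matrices $E_{-+}(\sigma_0,\fc)$ with $E_{-+}(\cdot,0)$ vanishing at $\sigma_0=-i\sqrt{\Lambda/3}\,n$ to the order of the de~Sitter pole. A Rouch\'e argument then gives a zero of $\det E_{-+}(\cdot,\fc)$—hence an element of $\QNM(\Lambda,\bhm,\bha)$—within $\eta$ of $-i\sqrt{\Lambda/3}\,n$, uniformly in $\bha/\bhm\in[-1+\eps,1-\eps]$. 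I expect the main difficulty to be the uniform estimate across the transition, specifically: (i) controlling the threshold behavior of the zero-energy Kerr operator as $r\to\infty$ \emph{simultaneously} with $\fc\searrow0$—a b-type normal-operator analysis at Kerr's spatial infinity that must stay uniform as the cosmological scale moves in, and which is what excludes spurious low-lying resonances formed by the collision of scales; and (ii) treating all angular momenta $\ell$ at once, where for large $\ell$ one needs a separate estimate of semiclassical character (with $\ell$ the large parameter) to confirm that the high-$\ell$ sectors contribute no quasinormal modes above $\Im\sigma=-C\sqrt\Lambda$.
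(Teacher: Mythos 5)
Your bounded‑frequency picture is essentially the right one: two length scales $r\sim\bhm$ and $r\sim\Lambda^{-1/2}$, a flat collar in between, zero‑energy Kerr and de~Sitter spectral family as model operators, matching along the indicial roots $r^\ell,r^{-\ell-1}$, and a Grushin/Rouch\'e argument at the de~Sitter poles. This is, in broad strokes, the content of \S\ref{SsKBd} of the paper and of the earlier work \cite{HintzXieSdS} (which treats exactly the bounded‑frequency, fixed‑$\ell$ case for $\bha=0$). However, there is a decisive gap right at the start of your argument.

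\emph{The main gap: you compactify the frequency too early.} You write ``$\sigma_0:=\Lambda^{-1/2}\sigma$ ranging over a fixed compact $K\subset\{\Im\sigma_0>-C\}$,'' and then build all estimates for $\sigma_0\in K$. But Theorem~\ref{ThmI} asserts control of \emph{every} $\sigma\in\QNM(\Lambda,\bhm,\bha)$ in the full half space $\Im\sigma_0>-C$, in particular for $|\Re\sigma_0|\to\infty$. Nothing in your argument rules out a sequence of resonances with $\Im\sigma_0>-C$ but $|\Re\sigma_0|\to\infty$ as $\fc:=\Lambda\bhm^2\searrow 0$. This is precisely the regime the paper identifies as the central difficulty (\S\ref{SsIA}, items~\eqref{ItILarge}--\eqref{ItIELarge}) and the reason the Q‑calculus with its five boundary hypersurfaces is introduced. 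The missing ingredients are:
(a) semiclassical (high‑$|\sigma_0|$) estimates for de~Sitter, including propagation \emph{through the conic point} $r=0$ in the sense of \cite{HintzConicProp}, with radial points over the cone (\S\ref{SssKSyif});
(b) uniform low‑energy ($\tilde\sigma=\bhm\sigma\to 0$) resolvent estimates for the Kerr model, involving the scattering–b–transition calculus (\S\ref{SsPscbt}, Proposition~\ref{PropKnfZ});
(c) the \emph{same} conic model operator $\Box_\tface(1)=\tilde\Delta+1$ (Lemma~\ref{LemmaKtf}) appearing both as the high‑frequency model at the de~Sitter cone point and as the zero‑to‑nonzero transition model at Kerr spatial infinity;
(d) uniform bounds for the Kerr spectral family at \emph{bounded nonzero real} frequencies $\tilde\sigma$ (Proposition~\ref{PropKnfNz})—this is where the Whiting/Shlapentokh‑Rothman theorem is actually needed as a black box, whereas in your proposal you only invoke it for the zero‑energy Kerr operator, where the statement is much easier (Lemma~\ref{LemmaKz} gives a self‑contained Wronskian argument);
(e) semiclassical trapping estimates for Kerr at $|\tilde\sigma|\to\infty$ (\S\ref{SssKSyt}, via \cite{DyatlovSpectralGaps}).
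Without (a)--(e), there is no way to prove the statement for the whole half space; your proposal stops where the paper's real work begins.

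\emph{Secondary points.} The obstacle you do flag—a large‑$\ell$ semiclassical estimate—is not the relevant large parameter; the paper does not separate variables at all, and high angular momenta at bounded $\sigma_0$ are handled geometrically by the q‑microlocal estimates (radial points at the horizons, propagation, and the b‑normal operator at the cone point), not by an $\ell$‑semiclassical argument. The parameter that genuinely needs semiclassical treatment is $|\Re\sigma_0|$. Also, your characterization of $\ker L^{\mathrm{Kerr}}$ on decaying spaces as ``exactly $\C\cdot 1$'' is off: constants do not decay, and the paper shows (Lemma~\ref{LemmaKz}) that the kernel on the weighted b‑Sobolev spaces $\Hbext^{s,\gamma}$, $\gamma\in(-\tfrac32,-\tfrac12)$, is \emph{trivial}; the constant function is the KdS zero‑resonance, but it lives outside that space and is handled separately (via the Grushin/Rouch\'e count at $\sigma_0=0$ in Proposition~\ref{PropKBdYes}). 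Finally, the paper deliberately avoids the parametrix‑gluing/harmonic‑matching you propose in favor of nested a~priori estimates (Proposition~\ref{PropKA}): the latter is what makes the high‑frequency uniformity tractable, because the q‑/Q‑Sobolev norms encode the matching automatically.
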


Thus, the set $(\Lambda/3)^{-\frac12}\QNM(\Lambda,\bhm,\bha)$ converges in any half space $\Im\sigma>-C$ to the set $-i\N_0$ as $\Lambda\bhm^2\searrow 0$ when $|\bha/\bhm|$ remains bounded away from $1$. The full result, Theorem~\ref{ThmK} (together with Lemma~\ref{LemmaKdSQNM}), is more precise: we show the convergence of resonances \emph{with multiplicity}, and we also prove the convergence of (generalized) resonant states, appropriately rescaled, to (generalized) resonant states on the static patch of de~Sitter space (see~\S\ref{SsIA}). (Petersen--Vasy \cite{PetersenVasyAnalytic}, based on earlier work by Galkowski--Zworski \cite{GalkowskiZworskiHypo}, showed that resonant states are \emph{analytic}, but our analysis does not make use of this fact.) We refer the reader to \cite[\S\S 1 and 4]{HintzXieSdS} for plots and numerics in the Schwarzschild--de~Sitter case $\bha=0$, and to Figure~\ref{FigI} below for a schematic illustration of Theorem~\ref{ThmI}.

Mode stability is an immediate consequence of Theorem~\ref{ThmI}:\footnote{The KdS parameter range covered by Corollary~\ref{CorIMS} has been confirmed to constitute a ``large'' range in the sense of \cite[{\includegraphics[width=0.8em]{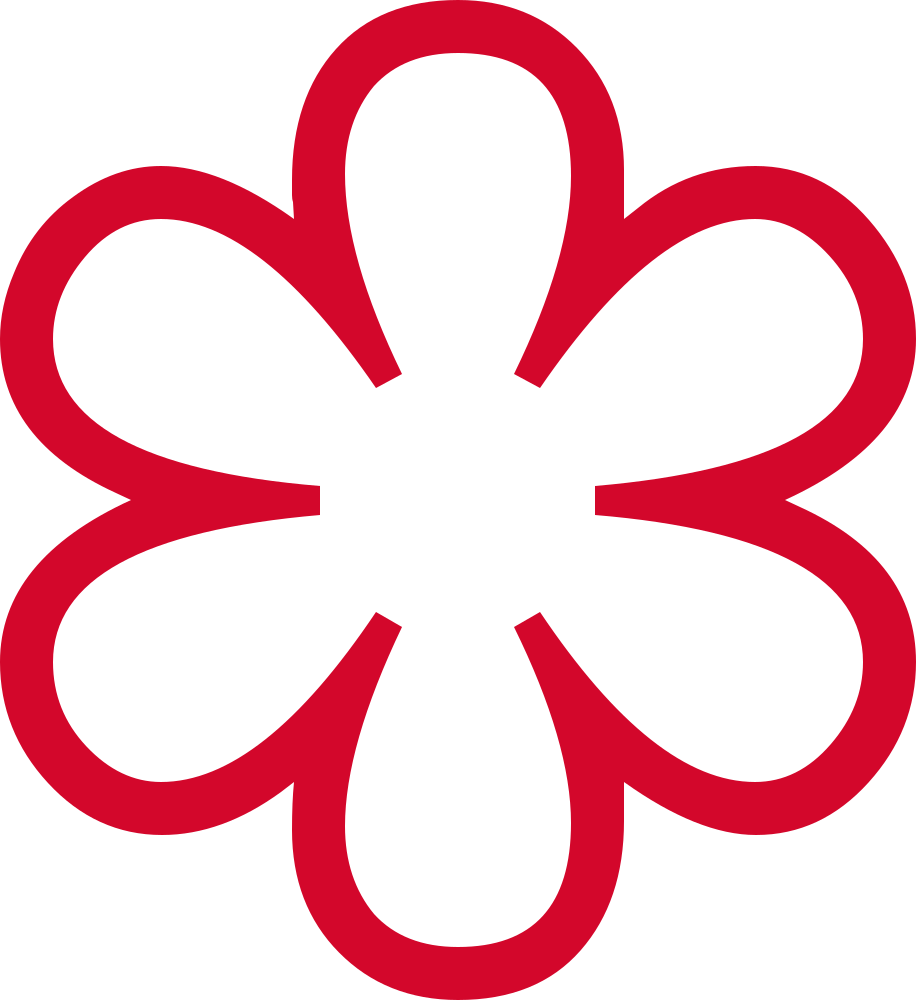}\,\includegraphics[width=0.8em]{MichelinStar.png}} Conjecture~4]{ZworskiResonanceReview}.}

\begin{cor}[Mode stability of Kerr--de~Sitter black holes away from extremality]
\label{CorIMS}
  For any $\eps>0$, there exists $\delta>0$ so that mode stability holds for the scalar wave equation on Kerr--de~Sitter black holes with parameters $\Lambda,\bhm,\bha$ satisfying $|\bha/\bhm|\leq 1-\eps$ and $\Lambda\bhm^2\in(0,\delta)$. That is, no $\sigma\in\C$ with $\Im\sigma\geq 0$ and $\sigma\neq 0$ is a quasinormal mode; equivalently, for $\sigma\in\QNM(\Lambda,\bhm,\bha)$, either $\Im\sigma<0$ or $\sigma=0$. Moreover, for $\sigma=0$, the only mode solutions are constants.
\end{cor}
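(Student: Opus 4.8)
The plan is to obtain Corollary~\ref{CorIMS} as an immediate consequence of Theorem~\ref{ThmI}, with no additional analysis required. Fix $\eps>0$ and apply Theorem~\ref{ThmI} with, say, $C=1$; this produces $\delta_0>0$ such that for $|\bha/\bhm|\leq 1-\eps$ and $\Lambda\bhm^2\in(0,\delta_0)$, every $\sigma\in\QNM(\Lambda,\bhm,\bha)$ with $\Im(\Lambda^{-1/2}\sigma)>-1$ is either $0$ or of the form $-i\sqrt{\Lambda/3}(n+o(1))$ with $n\in\N=\{1,2,\dots\}$. I read the $o(1)$ as a single modulus of convergence valid for all quasinormal modes in this half-space at once: there is a function $\omega\colon(0,\delta_0)\to(0,\infty)$ with $\omega(s)\to0$ as $s\searrow0$ such that every nonzero $\sigma$ as above satisfies $|\sigma\sqrt{3/\Lambda}+i n|<\omega(\Lambda\bhm^2)$ for some $n\geq1$. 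Shrink to $\delta\in(0,\delta_0]$ so that $\omega(s)<\tfrac12$ for all $s\in(0,\delta)$.

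Now suppose $|\bha/\bhm|\leq 1-\eps$, $\Lambda\bhm^2\in(0,\delta)$, and $\sigma\in\QNM(\Lambda,\bhm,\bha)$ with $\Im\sigma\geq0$ and $\sigma\neq0$. Since $\Im(\Lambda^{-1/2}\sigma)=\Lambda^{-1/2}\Im\sigma\geq0>-1$, the above dichotomy applies and yields $n\geq1$ with $|\sigma\sqrt{3/\Lambda}+i n|<\tfrac12$. Writing $\sigma\sqrt{3/\Lambda}=-i n+z$ with $|z|<\tfrac12$, I get $\sqrt{3/\Lambda}\,\Im\sigma=\Im(\sigma\sqrt{3/\Lambda})=-n+\Im z<-n+\tfrac12\leq-\tfrac12<0$, contradicting $\Im\sigma\geq0$. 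Hence no such $\sigma$ exists, which is exactly the stated mode stability. The remaining assertion---that for $\sigma=0$ the only mode solutions are constants---is literally part of the conclusion of Theorem~\ref{ThmI}, so nothing further is needed.

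I do not anticipate any genuine obstacle: Corollary~\ref{CorIMS} is a formal consequence of Theorem~\ref{ThmI}. The only point that deserves care is the bookkeeping of the $o(1)$ in Theorem~\ref{ThmI} as a uniform statement over the quasinormal modes in the half-space $\Im(\Lambda^{-1/2}\sigma)>-C$ (equivalently, one may invoke the sharper Theorem~\ref{ThmK} together with Lemma~\ref{LemmaKdSQNM}). All of the real work lies in the proof of Theorem~\ref{ThmI} itself.
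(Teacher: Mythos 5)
Your proof is correct and matches the paper's own approach: the paper simply states that Corollary~\ref{CorIMS} "is an immediate consequence of Theorem~\ref{ThmI}," and your argument is exactly the careful fleshing-out of that implication, including the correct observation that the $o(1)$ in Theorem~\ref{ThmI} should be read uniformly over the half-space (for which Theorem~\ref{ThmK} provides the rigorous backing). Nothing further is needed.
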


In particular, when $\Lambda>0$ and the ratio $|\bha/\bhm|<1$ are fixed, this implies the mode stability of KdS when the black hole mass $\bhm$ is sufficiently small. Alternatively, when $\bhm$ and $|\bha/\bhm|<1$ are fixed, we conclude mode stability when $\Lambda>0$ is sufficiently small; this regime is of particular astrophysical interest since, according to the currently favored $\Lambda$CDM model, $\Lambda$ is indeed positive but very small.

\begin{figure}[!ht]
\centering
\includegraphics{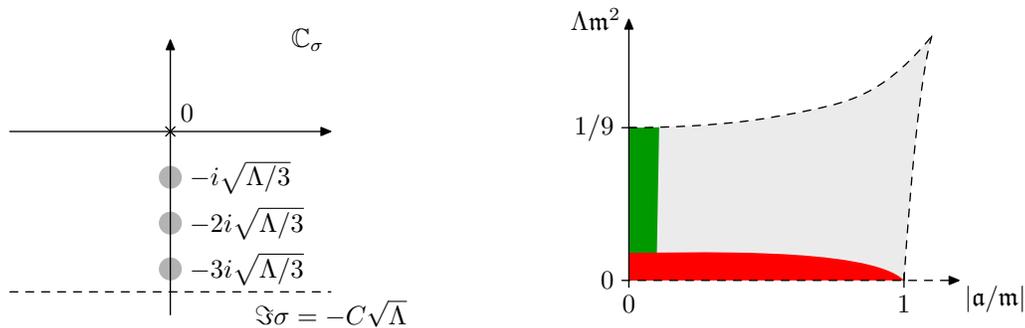}
\caption{\textit{On the left:} illustration of Theorem~\ref{ThmI}. The wave operator on Kerr--de~Sitter spacetimes with small $\Lambda\bhm^2$ has a resonance at $0$, and resonances near $-i n\sqrt{\Lambda/3}$, $n=1,2,3,\ldots$ \textit{On the right:} the dashed region is the parameter space of subextremal Kerr--de~Sitter black holes. The red region is a schematic depiction of the set of parameters to which Theorem~\ref{ThmI} applies. Mode stability is known to hold in the union of the red region (Corollary~\ref{CorIMS}) and the green region (see~\S\ref{SsIMS}).}
\label{FigI}
\end{figure}

The KdS black holes considered in Theorem~\ref{ThmI} fit into Vasy's framework \cite[\S6]{VasyMicroKerrdS}, recently extended to the full subextremal range of KdS black holes by Petersen--Vasy \cite{PetersenVasySubextremal}. This implies resonance expansions for solutions of the wave equation up to exponentially decaying remainders.\footnote{In the present context, the dynamical assumptions required by Vasy's framework follow already by combining the $r$-normal hyperbolicity for every $r$ of the trapped set of subextremal Kerr black holes, proved by Dyatlov \cite{DyatlovWaveAsymptotics}, with the structural stability of such trapped sets \cite{HirschPughShubInvariantManifolds}. In fact, however, in the course of our proof of Theorem~\ref{ThmI}, we directly prove the meromorphicity of, and high energy estimates for, the inverse of the spectral family of $\Box_{g_{\Lambda,\bhm,\bha}}$ in $\Im(\Lambda^{-\frac12}\sigma)>-C$, which imply such resonance expansions.} We state this in the simplest form, and only record the terms corresponding to the quasinormal modes captured by Theorem~\ref{ThmI}:

\begin{cor}[Resonance expansions for waves]
\label{CorI}
  Put $x=(r,\theta,\phi_*)$. For $C>0$ and $\eps>0$, let $\delta>0$ be as in Theorem~\usref{ThmI}, and suppose $|\bha/\bhm|\leq 1-\eps$ and $\Lambda\bhm^2\in(0,\delta)$. Let $X:=[r_-,r_+]\times\Sph^2_{\theta,\phi_*}$, where $r_-\in(r_{\Lambda,\bhm,\bha}^C,r_{\Lambda,\bhm,\bha}^e)$ and $r_+\in(r_{\Lambda,\bhm,\bha}^c,\infty)$. Let $u=u(t_*,x)$ denote the solution of the initial value problem
  \[
    \Box_{g_{\Lambda,\bhm,\bha}}u=0,\qquad (u,\pa_{t_*}u)|_{t_*=0} = (u_0,u_1) \in \CI(X)\oplus\CI(X).
  \]
  Then $u$ has an asymptotic expansion
  \[
    \biggl|u(t_*,x) - u_0 - \sum_{j=1}^N\biggl(\sum_{k=0}^{k_j} t_*^k e^{-i\sigma_j t_*}u_{j k}(x)\biggr)\biggr| \leq C_1 e^{-C\sqrt{\Lambda}\,t_*},
  \]
  where $u_0\in\C$, and where $\sigma_1,\ldots,\sigma_N\in\C$ (possibly with repetitions) are the quasinormal modes with $0>\Im(\Lambda^{-\frac12}\sigma_j)\geq-C$, and the $\sum_{k=0}^{k_j}t_*^k e^{-i\sigma_j t_*}u_{j k}$ are (generalized) mode solutions\footnote{We do not rule out the possibility that some of the resonances controlled by Theorem~\ref{ThmI} are not simple; hence the need to allow for $k_j\geq 1$.} of the wave equation. In particular,
  \[
    |u(t_*,x)-u_0| \leq C_2 \exp\biggl(-\Bigl((1+\eta)\sqrt{\frac{\Lambda}{3}}\,\Bigr)t_*\biggr),
  \]
  where $\eta=\eta(\Lambda\bhm^2,\bha/\bhm)\to 0$ as $\Lambda\bhm^2\searrow 0$. Above, $C_1,C_2$ are constants depending only on $\Lambda,\bhm,\bha$, and on the initial data $u_0,u_1$.
\end{cor}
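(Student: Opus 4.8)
The plan is to run the standard Fourier--Laplace transform and contour-deformation argument, using as the two analytic inputs about the spectral family $P(\sigma)$ of $\Box_{g_{\Lambda,\bhm,\bha}}$: (i) that $P(\sigma)^{-1}$ continues meromorphically from $\Im\sigma\gg 1$ to the half space $\Im(\Lambda^{-1/2}\sigma)>-C$, with poles exactly at the points of $\QNM(\Lambda,\bhm,\bha)$; and (ii) that one has polynomial high energy bounds there, i.e.\ $\|P(\sigma)^{-1}\|\lesssim(1+|\sigma|)^{N}$ between suitable (b-)Sobolev spaces on $X=[r_-,r_+]\times\Sph^2_{\theta,\phi_*}$ as $|\Re\sigma|\to\infty$ in the strip. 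Both hold by Vasy's framework \cite{VasyMicroKerrdS} in the form extended by Petersen--Vasy \cite{PetersenVasySubextremal}, the dynamical hypotheses being verified through $r$-normal hyperbolicity of the Kerr trapped set \cite{DyatlovWaveAsymptotics} together with its structural stability \cite{HirschPughShubInvariantManifolds}; alternatively, and this is what the proof of Theorem~\ref{ThmI} carries out, one establishes (i) and (ii) directly and uniformly as $\Lambda\bhm^2\searrow 0$.

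Granting this, I would first extend $(u_0,u_1)$ to $\R_{t_*}\times X$ and solve the forward wave equation; the solution is smooth, lies in $\CI(X)$ for each $t_*$ with at most exponential growth, and by finite speed of propagation agrees on the domain of outer communications with the solution of the original initial value problem. Its Fourier--Laplace transform $\widehat u(\sigma)=\int e^{i\sigma t_*}u(t_*)\,\dd t_*$, holomorphic for $\Im\sigma\gg 1$, solves $P(\sigma)\widehat u(\sigma)=f(\sigma)$ with $f(\sigma)$ affine in $\sigma$ and built from $(u_0,u_1)$, so by (i) the function $\widehat u(\sigma)=P(\sigma)^{-1}f(\sigma)$ is meromorphic in $\Im(\Lambda^{-1/2}\sigma)>-C$. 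Writing $u(t_*)=\frac1{2\pi}\int_{\Im\sigma=\nu_0}e^{-i\sigma t_*}P(\sigma)^{-1}f(\sigma)\,\dd\sigma$ with $\nu_0\gg 1$, I would push the contour down to $\Im\sigma=-C\sqrt\Lambda$: (ii) makes this legitimate — after commuting through a fixed number of $t_*$-derivatives, equivalently integrating by parts in $\sigma$, to defeat the polynomial growth — and shows the shifted integral is $O(e^{-C\sqrt\Lambda\,t_*})$, with constant depending on $(u_0,u_1)$. By Theorem~\ref{ThmI} the only poles crossed are $\sigma=0$ and finitely many $\sigma_1,\dots,\sigma_N$ (with repetitions) having $0>\Im(\Lambda^{-1/2}\sigma_j)\ge -C$; at a pole $\sigma_j$ of order $k_j+1$ the residue of $e^{-i\sigma t_*}P(\sigma)^{-1}f(\sigma)$ equals $\sum_{k=0}^{k_j}t_*^k e^{-i\sigma_j t_*}u_{jk}$, where $u_{jk}\in\CI(X)$ (interior elliptic regularity for $P(\sigma)$ together with the radial-point and propagation estimates at the horizons give the smoothness; the analyticity of resonant states from \cite{PetersenVasyAnalytic} is not needed), and these $\sum_k t_*^k e^{-i\sigma_j t_*}u_{jk}$ are generalized mode solutions of the wave equation.

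It remains to deal with $\sigma=0$. Theorem~\ref{ThmI} says the resonant states there are precisely the constants, so to get the constant term $u_0$ I would show that the pole of $\widehat u(\sigma)$ at $0$ is simple: a hypothetical generalized resonant state, paired against the constant mode via Green's identity on a $t_*$-slab, is forced to have vanishing defect, a contradiction. Its residue is then a constant $u_0\in\C$ — a multiple of a weighted average of the data — which completes the displayed expansion. For the final inequality, I would apply that expansion with, say, $C=1$ (shrinking $\delta$ if necessary): after subtracting $u_0$, the slowest-decaying surviving contribution is from the cluster of quasinormal modes near $-i\sqrt{\Lambda/3}$, which decays like $t_*^{k}e^{-\sqrt{\Lambda/3}(1+o(1))t_*}$ as $\Lambda\bhm^2\searrow 0$, while the boundary integral decays like $e^{-\sqrt\Lambda\,t_*}$, faster since $\sqrt\Lambda>\sqrt{\Lambda/3}$; absorbing the polynomial factor and the $o(1)$ into $\eta=\eta(\Lambda\bhm^2,\bha/\bhm)\to 0$ gives $|u-u_0|\le C_2\exp\bigl(-(1+\eta)\sqrt{\Lambda/3}\,t_*\bigr)$.

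The main obstacle is not located in the corollary but in its inputs (i) and (ii): the meromorphic continuation of $P(\sigma)^{-1}$ and the high energy estimates, \emph{uniform} in the singular limit $\Lambda\bhm^2\searrow 0$, which are exactly the content of Theorem~\ref{ThmI} and its proof and which I would simply quote. Within the corollary itself the only delicate points are the simplicity of the zero resonance (ruling out $t_*$-growth in the constant term) and the justification of the contour shift from merely polynomial — rather than exponential — high energy bounds; both are handled by arguments that are by now standard in this setting.
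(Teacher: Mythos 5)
Your proposal is correct and runs along the same lines as the paper's. The paper simply cites the abstract resonance-expansion machinery — Theorem 1.5 of Petersen--Vasy, built on Theorem 1.4 of Vasy's Kerr--de~Sitter paper — together with Theorem~\ref{ThmI} and notes (in the footnote preceding the corollary) that meromorphy of $P(\sigma)^{-1}$ and high-energy estimates in $\Im(\Lambda^{-1/2}\sigma)>-C$ are also proved directly in the course of proving Theorem~\ref{ThmI}; your contour-deformation argument (Fourier--Laplace transform, rapid decay of $\widehat f(\sigma)$ for smooth data via commuting $t_*$-derivatives/integrating by parts, residue extraction at each pole) is precisely what those citations implement. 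One small economy you missed: you do not need the Green's identity pairing to rule out a double pole at $\sigma=0$. Theorem~\ref{ThmK}\eqref{ItK0} (the detailed form of Theorem~\ref{ThmI}) already asserts $m_\bhm(0)=1$ and $\Res_\bhm(0)=\{\text{constants}\}$, where $\Res_\bhm(0)$ is defined in Definition~\ref{DefKRes} to include \emph{generalized} resonant states — so the absence of $t_*$-growing modes at $\sigma=0$, hence simplicity of the pole, comes for free from the multiplicity/Grushin count in Proposition~\ref{PropKBdYes}, not from a separate boundary-pairing computation (though that argument would also work and is standard). The rest — justifying the contour shift, the cluster near $-i\sqrt{\Lambda/3}$ governing the leading decay rate, and absorbing the polynomial factor and the $o(1)$ into $\eta$ — matches the intended reasoning, and you correctly identify that the real work is in the inputs (i) and (ii), not in the corollary itself.
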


See \cite[Theorem~1.5]{PetersenVasySubextremal} (based on \cite[Theorem~1.4]{VasyMicroKerrdS}) for a more precise statement which has weaker regularity requirements and allows for the presence of forcing. See moreover \cite{HintzVasyKdSStability} and \cite[Theorem~1.6]{PetersenVasySubextremal} (based on \cite{HintzVasyQuasilinearKdS}) for applications of such resonance expansions to quasilinear equations.

\begin{rmk}[Spacetime degeneration]
  A uniform description of the singular limit of (waves on) KdS \emph{spacetimes} as $\bhm\searrow 0$ is beyond the scope of this paper.
\end{rmk}

As an illustration of the flexibility of our method of proof, we also show:

\begin{thm}[Quasinormal modes of Kerr--de~Sitter black holes away from extremality: massive scalar fields]
\label{ThmIKG}
  Let $\nu\in\C$. Denote by $\QNM(\nu;\Lambda,\bhm,\bha)$ the set of resonances for the Klein--Gordon operator $\Box_{g_{\Lambda,\bhm,\bha}}-\frac{\Lambda}{3}\nu$.\footnote{The normalization of the zeroth order term is chosen so that $\nu$ is dimensionless; see~\S\ref{SsIS}.} Put $\lambda_\pm=\frac32\pm\sqrt{\frac94-\nu}$. Let $C>0$. Then for any $\eps>0$, there exists $\delta>0$ so that for $|\bha/\bhm|\leq 1-\eps$ and $\Lambda\bhm^2\in(0,\delta)$, every $\sigma\in\QNM(\nu;\Lambda,\bhm,\bha)$ with $\Im(\Lambda^{-\frac12}\sigma)>-C$ satisfies
  \[
    \sigma = -i\sqrt{\Lambda/3}(\lambda_\pm+n+o(1))
  \]
  for some $n\in\N_0$ as $\Lambda\bhm^2\searrow 0$. Conversely, there does exist a quasinormal mode near each $-i\sqrt{\Lambda/3}(\lambda_\pm+n)$.
\end{thm}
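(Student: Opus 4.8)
The plan is to run the proof of Theorem~\ref{ThmK} (the precise form of Theorem~\ref{ThmI}) with only the two model problems governing the singular limit $\Lambda\bhm^2\searrow 0$ modified. In the \emph{Kerr near-region}, where one rescales lengths by $\bhm$, so that $g_{\Lambda,\bhm,\bha}=\bhm^2 g_{\Lambda\bhm^2,1,\bha/\bhm}$ and the natural frequency is $\hat\sigma=\bhm\sigma$, the Klein--Gordon operator becomes $\bhm^{-2}\bigl(\Box_{g_{\Lambda\bhm^2,1,\bha/\bhm}}-\tfrac{\Lambda\bhm^2}{3}\nu\bigr)$, whose zeroth order term is $O(\Lambda\bhm^2)$ and hence negligible; thus the Kerr model problem is unchanged---it is the massless wave operator $\Box_{g_{1,\bha/\bhm}}$, and the mode stability of subextremal Kerr (Whiting, Shlapentokh-Rothman) enters exactly as in the massless case, the extra term being an arbitrarily small lower-order perturbation of all the relevant estimates (elliptic, radial-point estimates at the event and cosmological horizons, the trapping estimate, and the behavior of the zero-frequency normal operator whose only mode solution is the constant). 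In the \emph{de~Sitter far-region}, where $r=\infty$ is a cross-section of $\scri^+$, the term $\tfrac{\Lambda}{3}\nu$ is of the same order as $\Box$ and genuinely changes the model, which becomes the Klein--Gordon operator $\Box_{g_\dS}-\tfrac{\Lambda}{3}\nu$ on de~Sitter space with cosmological constant $\Lambda$. A direct computation---separation into spherical harmonics, followed by analysis of the resulting hypergeometric ODE---identifies its resonances with $\Im(\Lambda^{-1/2}\sigma)>-C$ as lying precisely at $-i\sqrt{\Lambda/3}(\lambda_\pm+n)$, $n\in\N_0$, with (generalized) resonant states exhibiting the two characteristic asymptotic rates $r^{-\lambda_\pm}$ at $\scri^+$; this is the source of the values in the statement.

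With the models in hand, I would re-derive the uniform a~priori estimates for the $e^{i\sigma t_*}$-conjugated spectral family of $\Box_{g_{\Lambda,\bhm,\bha}}-\tfrac{\Lambda}{3}\nu$ in every asymptotic regime used in the proof of Theorem~\ref{ThmK}: bounded and high frequency, the elliptic region, the event and cosmological horizons, the trapped set, the Kerr near-region, the de~Sitter far-region, and the transition regions gluing the two. Away from any fixed neighborhood of $\{-i\sqrt{\Lambda/3}(\lambda_\pm+n):n\in\N_0\}$, the invertibility of both model operators is inherited, with uniformly bounded inverse, by the KdS spectral family once $\Lambda\bhm^2$ is small; hence no such $\sigma$ is a quasinormal mode, which gives the first assertion. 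For the converse, near a model value $\sigma_0=-i\sqrt{\Lambda/3}(\lambda_\pm+n)$ I would set up a finite-dimensional Grushin problem for the KdS spectral family whose Schur complement is, to leading order as $\Lambda\bhm^2\searrow 0$, that of the de~Sitter model---built by continuing the de~Sitter resonant state into the Kerr region, where the operator is $O(\Lambda\bhm^2)$-close to $\Box_{g_{1,\bha/\bhm}}$, and correcting near the black hole; since the de~Sitter Schur complement vanishes at $\sigma_0$, a Rouch\'e argument yields a quasinormal mode within $o(1)$ of $\sigma_0$, with the expected multiplicity.

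The principal new obstacle relative to the massless case is that the exponents $\lambda_\pm=\tfrac32\pm\sqrt{\tfrac94-\nu}$ need not be the integers $\{0,3\}$: they may be non-integer, may differ by a positive integer (forcing logarithmic terms and possible Jordan blocks in the de~Sitter model, as already happens for $\nu=0$ where $\lambda_+-\lambda_-=3$), or---when $\Re\nu>\tfrac94$---may be complex with $\Re\lambda_\pm=\tfrac32$. Accordingly the b-type estimates near $r=\infty=\scri^+$, where the relevant normal operator has indicial roots built from $\lambda_\pm$, must be carried out with a possibly complex and, at $\nu=\tfrac94$, borderline weight, and one must check that the threshold conditions and the choice of function spaces can be arranged uniformly for $\nu$ in compact subsets of $\C$. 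I expect this analysis at $\scri^+$---and its interface with the transition region $\bhm\ll r\ll\Lambda^{-1/2}$, where the constant mode solution of the Kerr model must be matched to the $r^{-\lambda_-}$ behavior dictated by the de~Sitter model (the mechanism by which the $\sigma=0$ resonance of the massless problem migrates to $\sigma\approx-i\sqrt{\Lambda/3}\lambda_-$)---to be the main technical burden; the remaining regimes require only cosmetic changes to the massless argument.
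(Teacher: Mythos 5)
Your high‑level plan is correct and matches the paper's: the mass term is $O(\Lambda\bhm^2)$ in the Kerr near‑region rescaling and so leaves the Kerr model unchanged, while it only survives in the de~Sitter far‑region model $\Box_{g_\dS}(\sigma)-\nu$ at bounded frequencies; the de~Sitter Klein--Gordon resonances at $-i\sqrt{\Lambda/3}(\lambda_\pm+n)$ come from the hypergeometric ODE analysis of \cite{HintzXieSdS}; and the Grushin/Rouch\'e argument then transfers them, with multiplicity, to the KdS spectral family. The paper proves the theorem in essentially one sentence: the mass term affects neither the Q-principal symbol nor any of the normal operators of the total spectral family \emph{except} $N_{\mface_{\sigma_0}}$, which becomes $\Box_{g_\dS}(\sigma)-\nu$; hence the entire proof of Theorem~\ref{ThmK} carries over verbatim with $\QNM_\dS$ replaced by $\QNM_\dS(\nu)$ (and part~\eqref{ItK0}, concerning the constant zero mode, dropped).

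Your third paragraph, however, rests on a misunderstanding of where the analysis lives. There is no ``$r=\infty=\scri^+$'' in the de~Sitter far‑region model: the spatial domain is $\dot\Omega=[0,2)_r\times\Sph^2$, capped at $r=2$ by a Cauchy hypersurface just beyond the cosmological horizon, and the only boundary where b‑weights appear is the \emph{conical singularity $r=0$} (the remnant of the blown‑down black hole). The b‑normal operator there is the flat $3$‑dimensional Laplacian, with indicial roots $\ell+1$ and $-\ell$ on the degree‑$\ell$ spherical harmonics; a zeroth‑order term $-\nu$ is strictly lower order as a b‑operator at $r=0$ and does not move these roots, so the weight window $l\in(\tfrac12,\tfrac32)$ in Lemma~\ref{LemmaKBddS} is $\nu$‑independent. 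Likewise, the transition region $\bhm\lesssim r\lesssim 1$ matches the constant zero‑mode of the Kerr model at $\hat r\to\infty$ to the $r=0$ indicial behavior of the de~Sitter cone, not to any $r^{-\lambda_\pm}$ profile; the exponents $\lambda_\pm$ are \emph{temporal} decay rates of the de~Sitter modes (the imaginary parts of the resonances), not spatial asymptotic rates relevant to the gluing. In short, the ``principal new obstacle'' you describe does not arise, and no additional weight analysis, threshold conditions, or complex‑exponent bookkeeping is needed beyond the massless case: the sole change is the location of the de~Sitter model resonances.
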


Solutions of the Klein--Gordon equation admit resonance expansions in a manner analogous to Corollary~\ref{CorI}. \emph{For the remainder of this introduction, we restrict attention to the massless case (Theorem~\usref{ThmI}) unless explicitly stated otherwise.}

\subsection{Prior work on quasinormal modes and resonance expansions on de~Sitter black hole spacetimes}
\label{SsIQ}

In the special case $\bha=0$ of \emph{Schwarzschild--de~Sitter} black holes (in which case the subextremality condition becomes $0<9\Lambda\bhm^2<1$), the discreteness of $\QNM(\Lambda,\bhm,0)$ was shown by S\'a Barreto--Zworski \cite{SaBarretoZworskiResonances}, relying in particular on \cite{MazzeoMelroseHyp}. For fixed parameters $(\Lambda,\bhm)$, they also characterized resonances in the high frequency regime $|\Re\sigma|\gg 1$, and showed that in conic sectors $\Im\sigma>-\theta|\Re\sigma|$ (with $\theta>0$ sufficiently small) they are given by
\begin{equation}
\label{EqISBZw}
  \Bigl(\pm l\pm\frac12-\frac{i}{2}\Bigl(n+\frac12\Bigr)\Bigr)\frac{(1-9\Lambda\bhm^2)^{\frac12}}{3\Lambda^{\frac12}\bhm}\sqrt{\Lambda/3} + o(1),\qquad l\to\infty.\quad (n=0,1,2,\ldots)
\end{equation}
When $\Lambda\bhm^2\searrow 0$, these approximate resonances thus leave any fixed half space $\Im(\Lambda^{-\frac12}\sigma)>-C$; in this sense, Theorem~\ref{ThmI} concerns an altogether different regime of resonances than \cite{SaBarretoZworskiResonances}. One is moreover led to conjecture that (at least away from the negative imaginary axis) Theorem~\ref{ThmI} continues to hold in the larger $\bhm$-dependent range $\Im\sigma\geq-c\bhm^{-1}$ for any $c<\frac{1}{12\sqrt 3}$.

Still for $\bha=0$, the author and Xie \cite{HintzXieSdS} proved a version of Theorem~\ref{ThmI} which only provides uniform control of resonances in any fixed ball $|\Lambda^{-\frac12}\sigma|\leq C$ provided they are associated with mode solutions which moreover have a \emph{fixed} angular momentum $l\in\N_0$ (i.e.\ their dependence on the angular variables is given by a degree $l$ spherical harmonic). The proof proceeded via uniform estimates for a degenerating family of ordinary differential equations, whereas the proof of Theorem~\ref{ThmI} requires more sophisticated tools (see~\S\ref{SsIA}).

On Schwarzschild--de~Sitter spacetimes, high energy resolvent estimates and resonance expansions similar to Corollary~\ref{CorI} were established in \cite{MelroseSaBarretoVasySdS,MelroseSaBarretoVasyResolvent} (exponential decay to constants on $\wt M_{\Lambda,\bhm,0}$) and previously in \cite{BonyHaefnerDecay}: in the latter paper, Bony--H\"afner showed that on $M_{\Lambda,\bhm,0}^{\rm DOC}$, waves are convergent sums over possibly infinitely many resonances, up to an error term which has any desired amount of exponential decay. In recent work, Mavrogiannis \cite{MavrogiannisSdSMorawetz} gives a proof of exponential decay to constants (thus exponential energy decay) using vector field (`physical space') techniques; this improves on earlier work by Dafermos--Rodnianski \cite{DafermosRodnianskiSdS} which gave superpolynomial energy decay. An alternative definition of the quasinormal mode spectrum, as the set of eigenvalues of an appropriate evolution semigroup, and a proof of some of its salient properties (such as discreteness), was given by Warnick \cite{WarnickQNMs}, and extended to asymptotically flat settings by Gajic--Warnick \cite{GajicWarnickModel}; see also \cite{GalkowskiZworskiGevrey2}.

These results were generalized to the case of slowly rotating Kerr--de~Sitter black holes in a series of papers by Dyatlov. In~\cite{DyatlovQNM}, Dyatlov defined resonances by exploiting the separability of the wave equation and proved the discreteness of the set of resonances; he moreover showed exponential decay to constants of waves first in $M_{\Lambda,\bhm,\bha}^{\rm DOC}$, and then in $\wt M_{\Lambda,\bhm,\bha}$ in~\cite{DyatlovQNMExtended} using red-shift estimates of Dafermos--Rodnianski \cite{DafermosRodnianskiRedShift} near the horizons. The paper \cite{DyatlovAsymptoticDistribution} gives a description of the high energy resonances generalizing and significantly refining~\eqref{EqISBZw}, and proves resonance expansions for solutions of the wave equation, up to error terms with any desired amount of exponential decay. As in the case $\bha=0$, the semiclassical methods of \cite{DyatlovAsymptoticDistribution} are effective only in a high frequency regime, and all resonances captured by it leave the subset of the complex frequency plane described in Theorem~\ref{ThmI} when $\Lambda\bhm^2\searrow 0$.

A key ingredient in Dyatlov's works is a robust approach to the analysis of the spectral family at high frequencies near the trapped set. Wunsch--Zworski \cite{WunschZworskiNormHypResolvent,WunschZworskiNormHypResolventCorrection} showed that the trapped set of slowly rotating Kerr black holes is $k$-normally hyperbolic for every $k$ \cite{HirschPughShubInvariantManifolds}; this was later extended to the full subextremal range, and to KdS black holes with either small angular momentum or small cosmological constant by Dyatlov \cite{DyatlovWaveAsymptotics}. Moreover, \cite{WunschZworskiNormHypResolvent} provided microlocal semiclassical (i.e.\ high energy) estimates at the trapped set. Dyatlov subsequently devised a particularly elegant method \cite{DyatlovSpectralGaps} to prove semiclassical estimates at normally hyperbolic trapped sets; we will use \cite{DyatlovSpectralGaps} (rephrased as a propagation estimate as in \cite[Theorem~4.7]{HintzVasyQuasilinearKdS}) as a black box in the present paper. Dyatlov's method has since been extended to give estimates at the trapped set for waves on asymptotically Kerr(--de~Sitter) spacetimes \cite{HintzPolyTrap}.

\begin{rmk}[Further comments on trapping]
  An important conceptual feature of the analysis of the trapped set in \cite{DyatlovQNM} is that it is based solely on the dynamical structure of the trapped set (which is stable under perturbations \cite{HirschPughShubInvariantManifolds}), rather than on the separability of the wave equation. Using the separability, estimates at the trapped set of rotating Kerr spacetimes can be proved using rather explicit pseudodifferential multipliers, as shown by Tataru--Tohaneanu \cite{TataruTohaneanuKerrLocalEnergy}; see also \cite{DafermosRodnianskiKerrBoundedness,DafermosRodnianskiKerrDecaySmall} and the definitive \cite{DafermosRodnianskiShlapentokhRothmanDecay} for a very explicit approach of this nature. Andersson--Blue \cite{AnderssonBlueHiddenKerr} can avoid this issue altogether by exploiting a second order `hidden' symmetry operator which is closely related to the complete integrability of the geodesic flow on Kerr spacetimes.
\end{rmk}

Vasy's influential non-elliptic Fredholm theory \cite{VasyMicroKerrdS} provides a general framework for proving the discreteness of resonance spectra and for establishing resonance expansions of waves. This framework is fully microlocal, and makes use in particular of radial point estimates (originating in \cite{MelroseEuclideanSpectralTheory}) and real principal type propagation estimates \cite{DuistermaatHormanderFIO2}, together with high energy estimates in the presence of normally hyperbolic trapping. Without having to separate variables, \cite{VasyMicroKerrdS} recovers the results on exponential decay to constants proved in \cite{BonyHaefnerDecay,MelroseSaBarretoVasySdS,DyatlovQNMExtended}. A detailed account is given by Dyatlov--Zworski \cite{DyatlovZworskiBook}.

The absence of modes for the Klein--Gordon equation in $\Im\sigma\geq 0$ can be proved directly for all $\nu>0$ (in the notation of Theorem~\ref{ThmIKG}) in the case $\bha=0$. In the case of small $\bha/\bhm\neq 0$, it also follows for sufficiently small $\nu>0$ from a perturbative calculation off the massless KdS case (see \cite{DyatlovQNM} or \cite[Lemma~3.5]{HintzVasySemilinear}). We also note that Besset--H\"afner \cite{BessetHaefnerBomb} proved, by such perturbative means, the existence of exponentially growing modes for weakly charged and weakly massive scalar fields on slowly rotating Kerr--Newman--de~Sitter spacetimes.

\subsection{Prior work on mode stability}
\label{SsIMS}

We now turn to a discussion of the problem of mode stability for black hole spacetimes. Mode stability (for massless scalar waves) is a much weaker statement than Theorem~\ref{ThmI}, and by itself far from sufficient to obtain Corollary~\ref{CorI} (or even just boundedness of waves).\footnote{When combined with the Fredholm theory of \cite{PetersenVasySubextremal}, it does however imply the existence of a spectral gap, i.e.\ a small number $\alpha>0$ so that $0$ is the only resonance in $\Im(\Lambda^{-\frac12}\sigma)>-\alpha$; and this gives decay to constants, at the rate $e^{-\alpha t_*}$, of smooth linear waves.} It is, however, more amenable to direct investigations. Indeed, for $\bha=0$, mode stability can be proved via an integration by parts argument (when $\Im\sigma>0$) and a Wronskian (or boundary pairing) argument (when $\sigma\in\R\setminus\{0\}$); and the zero mode can be analyzed using an integration by parts argument as well. (Even for the linearized Einstein equation, an appropriate notion of mode stability for Schwarzschild--de~Sitter black holes can be proved with moderate effort, see e.g.\ \cite{KodamaIshibashiMaster} and \cite[\S7]{HintzVasyKdSStability}.) Given Vasy's general perturbation-stable framework \cite{VasyMicroKerrdS}, or using the arguments specific to the Kerr--de~Sitter metric by Dyatlov \cite[Theorem~4]{DyatlovQNM}, mode stability follows for the wave equation on KdS with parameters $(\Lambda,\bhm,\bha)$ provided $|\bha/\bhm|$ is sufficiently small.

For subextremal KdS black holes with $\bha\neq 0$, one can consider separated mode solutions
\[
  e^{-i\sigma t}e^{i m\phi}S(\theta)R(r),\qquad m\in\Z,
\]
where the angular function $S$ and the radial function $R$ solve decoupled ordinary differential equations (ODEs). Mode stability can then be proved for certain values $\sigma\in\R$ by means of Wronskian (or energy) arguments for the radial ODE. More precisely, this applies to $\sigma$ which are not superradiant (see \cite[\S1.6]{ShlapentokhRothmanModeStability} for this notion on Kerr spacetimes); the set of superradiant frequencies $\sigma\in\R$ is a nonempty (when $\bha\neq 0$ and $m\neq 0$) open interval centered roughly around $m\bha$. This argument also excludes resonances outside an appropriate subset of the upper half plane. (There are no superradiant modes when one restricts to axially symmetric mode solutions, i.e.\ $m=0$, so mode stability for \emph{axially symmetric} scalar perturbations holds true.) Casals--Teixeira da Costa \cite{CasalsTeixeiradCModes} exploit subtle discrete symmetries of the radial ODE, conjectured in \cite{AminovGrassiHatsudaQNM,HatsudaTeukolskyAlt}, to prove mode stability outside a smaller, but still always nonempty, subset of the closed upper half plane. Numerical evidence \cite{YoshidaUchikataFutamaseKdS,HatsudaKdSQNM} supports the conjecture that mode stability does hold in the full subextremal range.

By contrast with the Kerr--de~Sitter case, the mode stability of subextremal Kerr spacetimes is settled (and $0$ is not a resonance in this case). It was proved in $\Im\sigma>0$ by Whiting \cite{WhitingKerrModeStability} who used a carefully defined integral transform which maps the radial function $R$ to another function which satisfies an ODE for which Wronskian arguments can be applied successfully; Shlapentokh-Rothman \cite{ShlapentokhRothmanModeStability} showed that Whiting's transformation can be used to prove mode stability on the real axis.\footnote{The quantitative main result of \cite{ShlapentokhRothmanModeStability} was a key input in the proof of decay of solutions of the wave equation on subextremal Kerr spacetimes by Dafermos--Rodnianski--Shlapentokh-Rothman \cite{DafermosRodnianskiShlapentokhRothmanDecay}. The merely qualitative mode stability result is sufficient for this purpose as well if one uses it, in conjunction with strong (Fredholm and high energy) estimates for the spectral family, to exclude the presence of a nontrivial nullspace of the spectral family for $\Im\sigma\geq 0$; see \cite{HintzPrice} and also Propositions~\ref{PropKSyKerr}, \ref{PropKnfNz}, and \ref{PropKnfZ}.} Mode stability in $\Im\sigma\geq 0$ for the Teukolsky equation for other values of the spin $s\in\half\Z$ (with $s=0$ corresponding to the scalar wave equation) was subsequently proved by Andersson--Ma--Paganini--Whiting \cite{AnderssonMaPaganiniWhitingModeStab}. A different proof of these mode stability results, based on a discrete symmetry of the relevant confluent Heun equation, was given in \cite{CasalsTeixeiradCModes}.

\begin{thm}[Mode stability of subextremal Kerr black holes \cite{WhitingKerrModeStability,ShlapentokhRothmanModeStability}]
\label{ThmIKerr}
  Denote by $g_{\bhm,\bha}=g_{0,\bhm,\bha}$ the Kerr metric on $\R_{t_*}\times[r_{\bhm,\bha}^e,\infty)_r\times\Sph^2_{\theta,\phi_*}$, expressed in terms of the coordinates $t_*,\phi_*$ in~\eqref{EqIKerrStar} where we take $F_{0,\bhm,\bha}(r)$ to be equal to $-1$ near $r=r_{\bhm,\bha}^e$ and equal to $0$ for $r>2 r_{\bhm,\bha}^e$.\footnote{In these coordinates, $g_{\bhm,\bha}$ extends analytically down to, and across, the future event horizon $\cH^+_{\bhm,\bha}=\cH^+_{0,\bhm,\bha}$, with the level sets of $t_*$ being transversal to $\cH^+_{\bhm,\bha}$. See~\eqref{EqKKerr} for the explicit form of this metric when the black hole mass and (specific) angular momentum are $1$ and $\hat\bha$, respectively, and the function $F_{0,\bhm,\bha}$ is denoted $-\tilde\chi^e$.} Let $0\neq\sigma\in\C$, $\Im\sigma\geq 0$. Suppose $u(t_*,r,\theta,\phi_*)=e^{-i\sigma t_*}u_0(r,\theta,\phi_*)$ is a mode solution of $\Box_{g_{\bhm,\bha}}u=0$, where $u_0$ is smooth on $[r_{\bhm,\bha}^e,\infty)_r\times\Sph^2_{\theta,\phi_*}$, and so that $e^{-i\sigma r}r^{-2 i\bhm\sigma}u_0(r,\theta,\phi_*)=r^{-1}v_0(r^{-1},\theta,\phi_*)$ where $v_0=v_0(\rho,\theta,\phi_*)$ is smooth on $[0,1/r_{\bhm,\bha}^e)\times\Sph^2_{\theta,\phi_*}$. Then $u_0\equiv 0$ on $[r_{\bhm,\bha}^e,\infty)\times\Sph^2$.
\end{thm}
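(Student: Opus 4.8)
The plan is to reduce to a family of ordinary differential equations by separation of variables and then apply Whiting's integral transformation, treating the cases $\Im\sigma>0$ and $\sigma\in\R\setminus\{0\}$ separately.

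\emph{Step 1 (separation of variables).} Since $\pa_{t_*}$ and $\pa_{\phi_*}$ are Killing for $g_{\bhm,\bha}$ and commute with $\Box_{g_{\bhm,\bha}}$, I would decompose $u_0$ into Fourier modes $e^{i m\phi_*}$, $m\in\Z$, and then expand each mode into oblate spheroidal harmonics $S_{m\ell}(\cos\theta;\bha\sigma)$ with separation constants $\lambda_{m\ell}$; by the assumed smoothness of $u_0$ these decompositions converge, so it suffices to show that each radial coefficient $R=R_{m\ell}$ vanishes. On $(r_{\bhm,\bha}^e,\infty)_r$, $R$ solves the spin-$0$ Teukolsky radial equation
\[
  \frac{\dd}{\dd r}\Bigl(\mu_{0,\bhm,\bha}(r)\,\frac{\dd R}{\dd r}\Bigr) + \Bigl(\frac{K(r)^2}{\mu_{0,\bhm,\bha}(r)} - \lambda_{m\ell}\Bigr)R = 0,\qquad K(r):=(r^2+\bha^2)\sigma - \bha m.
\]
The smoothness of $u$ in the $(t_*,\phi_*)$ coordinates across $\cH^+_{\bhm,\bha}$ selects the horizon-regular Frobenius solution at $r=r_{\bhm,\bha}^e$ (with a prescribed complex power of $r-r_{\bhm,\bha}^e$, determined by the frequency relative to the horizon generator), while the asymptotic hypothesis at infinity selects the outgoing solution $R\sim e^{i\sigma r}r^{-1+2i\bhm\sigma}(1+O(r^{-1}))$. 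Converting the Boyer--Lindquist/Kerr-star dictionary into these boundary conditions — in particular the interplay of $\Phi_{0,\bhm,\bha}(r)$ with the Frobenius exponents at $r_{\bhm,\bha}^e$ — is bookkeeping, but must be done carefully to get the signs right.

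\emph{Step 2 ($\Im\sigma>0$).} Here I would apply Whiting's transformation: $R$ is sent to a function $\widetilde R$ on the same interval, defined by integrating $R$ against an explicit kernel (a product of exponentials and powers of $r$, $r-r_{\bhm,\bha}^e$, $r-r_{\bhm,\bha}^C$, depending on an auxiliary variable) which solves an ODE of the same structural type. The two things to verify are: (i) the horizon-regular and outgoing boundary conditions on $R$ are intertwined by the transform into decay of $\widetilde R$ at both ends — this uses the Frobenius/asymptotic normal forms from Step 1 and is where the precise kernel matters; and (ii) the transform is engineered so that, after passing to a tortoise-type coordinate and multiplying by a positive integrating factor, pairing the transformed equation with $\overline{\widetilde R}$ yields an identity $\int\bigl(|\widetilde R'|^2 + V|\widetilde R|^2\bigr)=0$ with vanishing boundary terms and $V$ of a definite sign forced by $\Im\sigma>0$. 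Then $\widetilde R\equiv 0$, and since the transform is invertible on the relevant decaying class, $R\equiv 0$.

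\emph{Step 3 ($\sigma\in\R\setminus\{0\}$) and conclusion.} On the real axis the quantity $V$ of Step 2 is no longer pointwise signed, and I would instead use Shlapentokh-Rothman's boundary-pairing argument: the Wronskian of $\widetilde R$ with $\overline{\widetilde R}$ is constant in the tortoise variable, its limits at $r=\infty$ and $r=r_{\bhm,\bha}^e$ compute the transmitted and horizon energy fluxes, and the outgoing/horizon-regular boundary conditions — tracking, in the superradiant window, the sign of $\sigma$ relative to the horizon angular velocity — force these fluxes to have opposite signs unless $\widetilde R\equiv 0$. Either way each $R_{m\ell}$ vanishes, and summing over the decomposition of Step 1 gives $u_0\equiv 0$. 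The main obstacle throughout is Step 2(i)--(ii) and its real-axis analogue in Step 3: constructing Whiting's transformation, verifying that it maps solutions to solutions of the transformed equation while respecting the delicate boundary behavior at the horizon and at infinity, and arranging the sign in the resulting energy (or flux) identity — the superradiant frequencies being the genuinely hard case. Since this is precisely the content of \cite{WhitingKerrModeStability,ShlapentokhRothmanModeStability}, in the body of the paper we invoke Theorem~\ref{ThmIKerr} as a black box.
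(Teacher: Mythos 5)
Your proposal matches the paper's own treatment: the paper also reduces to the separated radial ODE, observes that smoothness at $r=r^e_{\bhm,\bha}$ in $(t_*,r,\theta,\phi_*)$ coordinates encodes the horizon-regular (incoming) boundary condition in Boyer--Lindquist coordinates, and then invokes Shlapentokh-Rothman's Theorems~1.5 and 1.6 (which incorporate Whiting's transformation for $\Im\sigma>0$) as a black box for each separated component. The extra detail you give on the interior of the Whiting/Shlapentokh-Rothman argument is accurate but, as you yourself note, is precisely the content being cited rather than re-proved.
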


In the references, this is stated for fully separated mode solutions and in Boyer--Lindquist coordinates, with the incoming condition at the event horizon being equivalent to smoothness down to $r=r_{\bhm,\bha}^e$ in the $(t_*,r,\theta,\phi_*)$ coordinates. Theorem~\ref{ThmIKerr} as stated follows by applying \cite[Theorems~1.5 and 1.6]{ShlapentokhRothmanModeStability} to each individual separated component $e^{i m\phi}S(\theta)R(r)$ of $u_0$. The case of $\sigma=0$, in which the boundary condition at infinity becomes an appropriate decay requirement, is analyzed in Lemma~\ref{LemmaKz}.

Teixeira da Costa \cite{TeixeiradCModes} proved the mode stability of \emph{extremal} Kerr black holes, i.e.\ $|\bha|=\bhm$, using an appropriate integral transform---which due to the different character of the radial ODE, related to the presence of a degenerate event horizon, is substantially different from that introduced by Whiting. (The exceptional values $\sigma\in (2\bhm)^{-1}\N_0$ are not covered by this result.) See \cite[Theorem~1.2]{TeixeiradCModes}; see also Remark~\ref{RmkIExtr} for the relation between Teixeira da Costa's result and the topic of the present paper.

We remark that mode stability \emph{fails} for the Klein--Gordon equation on subextremal Kerr spacetimes for a large range of parameters, as shown by Shlapentokh-Rothman \cite{ShlapentokhRothmanBlackHoleBombs}. Moschidis \cite{MoschidisSuperradiant} proved a number of related mode instability results for deformations of the Kerr spacetime by means of potentials or metric deformations which either exhibit stable trapping or feature a non-Euclidean conic infinity. These results do not have a bearing on Theorem~\ref{ThmIKG} however, since the scalar field mass term vanishes in the appropriate Kerr limit. (In any case, depending on the value of $\nu$, Theorem~\ref{ThmIKG} implies mode stability or mode instability.)

A proof of mode stability for the scalar wave equation on Kerr--de~Sitter black holes (without restriction to axially symmetric modes), beyond the Schwarzschild--de~Sitter case and its small perturbations, has remained elusive, with all attempts so far having been based on integral transforms \cite{SuzukiTakasugiUmetsuKdS,UmetsuKdS} or discrete symmetries \cite{CasalsTeixeiradCModes}. The starting point for the present paper is the idea, substantiated in a simple special case in \cite{HintzXieSdS}, that subextremal KdS spacetimes with small $\Lambda\bhm^2$ can be regarded as singular perturbations of subextremal Kerr spacetimes and of de~Sitter space, and that one can extrapolate mode stability and the approximate values of quasinormal modes from these two singular limits. We explain this in some detail in~\S\ref{SsIA}.

\begin{rmk}[KdS mode stability in the full subextremal range]
  In the event that a direct proof (via an integral transform, discrete symmetries, or otherwise) of the conjectural mode stability of all subextremal KdS black holes should be found, the recent work by Petersen--Vasy \cite{PetersenVasySubextremal} would immediately imply exponential decay to constants of solutions of the wave equation. But even then, Theorem~\ref{ThmI} and Corollary~\ref{CorI} would give, in the regime in which they apply, significantly more precise information on the quasinormal mode spectrum which likely remains out of reach for any direct methods. We hope that the rather general singular perturbation perspective put forth in the present paper can be put to use in other settings involving spectral or resonance analysis in singular limits.
\end{rmk}

\subsection{Scaling}
\label{SsIS}

In order to reduce the number of parameters, we note:

\begin{lemma}[Scaling]
\label{LemmaIS}
  For $s>0$, let $M_s\colon(t_*,r,\theta,\phi_*)\mapsto(s t_*,s r,\theta,\phi_*)$. Then on the extended spacetime $\wt M_{\Lambda s^2,\bhm/s,\bha/s}$ (see~\eqref{EqIKdSExt}), we have
  \begin{equation}
  \label{EqISMetric}
    M_s^*g_{\Lambda,\bhm,\bha} = s^2 g_{\Lambda s^2,\bhm/s,\bha/s}
  \end{equation}
  In the notation of Theorems~\usref{ThmI} and \usref{ThmIKG}, we furthermore have
  \begin{equation}
  \label{EqISQNM}
  \begin{split}
    \QNM(\Lambda,\bhm,\bha) &= s^{-1}\QNM(\Lambda s^2,\bhm/s,\bha/s), \\
    \QNM(\nu;\Lambda,\bhm,\bha) &= s^{-1}\QNM(\nu;\Lambda s^2,\bhm/s,\bha/s).
  \end{split}
  \end{equation}
\end{lemma}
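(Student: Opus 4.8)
The plan is to derive both claims from the single observation that, after $r\mapsto s r$, every building block of the metric~\eqref{EqIMetric}--\eqref{EqIMetric2} transforms homogeneously under the parameter rescaling $(\Lambda,\bhm,\bha)\mapsto(\Lambda s^2,\bhm/s,\bha/s)$. From~\eqref{EqImu} and~\eqref{EqIMetric2} one reads off at once that $\mu_{\Lambda,\bhm,\bha}(s r)=s^2\mu_{\Lambda s^2,\bhm/s,\bha/s}(r)$ and $\varrho_{\Lambda,\bhm,\bha}^2(s r,\theta)=s^2\varrho_{\Lambda s^2,\bhm/s,\bha/s}^2(r,\theta)$, while $b_{\Lambda,\bhm,\bha}=b_{\Lambda s^2,\bhm/s,\bha/s}$ and $c_{\Lambda,\bhm,\bha}(\theta)=c_{\Lambda s^2,\bhm/s,\bha/s}(\theta)$. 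Homogeneity of $\mu$ forces its roots to scale as $r^\bullet_{\Lambda,\bhm,\bha}=s\,r^\bullet_{\Lambda s^2,\bhm/s,\bha/s}$ for $\bullet\in\{-,C,e,c\}$; thus subextremality is preserved and $M_s$ indeed maps $\wt M_{\Lambda s^2,\bhm/s,\bha/s}$ diffeomorphically onto $\wt M_{\Lambda,\bhm,\bha}$, taking the sets $\wt X$, the domains of outer communications, and the horizons to one another.

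First I would prove~\eqref{EqISMetric} in Boyer--Lindquist coordinates, i.e.\ show $D_s^*g_{\Lambda,\bhm,\bha}=s^2 g_{\Lambda s^2,\bhm/s,\bha/s}$ for the dilation $D_s\colon(t,r,\theta,\phi)\mapsto(s t,s r,\theta,\phi)$. Pulling back~\eqref{EqIMetric} sends $\dd t,\dd r$ to $s\,\dd t,s\,\dd r$ and $r$ to $s r$; inserting the homogeneities above together with $s^2 r^2+\bha^2=s^2(r^2+(\bha/s)^2)$, $s\,\dd t-\bha\sin^2\theta\,\dd\phi=s(\dd t-(\bha/s)\sin^2\theta\,\dd\phi)$, and $(s^2 r^2+\bha^2)\dd\phi-\bha s\,\dd t=s^2\bigl((r^2+(\bha/s)^2)\dd\phi-(\bha/s)\,\dd t\bigr)$, one checks that each of the three terms of~\eqref{EqIMetric} acquires exactly a factor $s^2$. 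To transfer this to the horizon-regular coordinates~\eqref{EqIKerrStar} I would observe that $F_{\Lambda,\bhm,\bha}(s r)=F_{\Lambda s^2,\bhm/s,\bha/s}(r)$ (immediate from the definition of $F$ and the scaling of the roots of $\mu$), whence~\eqref{EqIKerrStar} yields $\tfrac{\dd}{\dd r}\bigl(T_{\Lambda,\bhm,\bha}(s r)\bigr)=s\,\tfrac{\dd}{\dd r}\bigl(T_{\Lambda s^2,\bhm/s,\bha/s}(r)\bigr)$ and $\tfrac{\dd}{\dd r}\bigl(\Phi_{\Lambda,\bhm,\bha}(s r)\bigr)=\tfrac{\dd}{\dd r}\bigl(\Phi_{\Lambda s^2,\bhm/s,\bha/s}(r)\bigr)$, so $T$ and $\Phi$ scale homogeneously up to additive constants. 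Consequently $M_s$ intertwines the coordinate change from~\eqref{EqIKerrStar} at parameters $(\Lambda s^2,\bhm/s,\bha/s)$ with the one at parameters $(\Lambda,\bhm,\bha)$ followed by $D_s$ and a constant translation in $(t,\phi)$; the latter is an isometry of the Boyer--Lindquist form, whose coefficients are $t$- and $\phi$-independent. Hence~\eqref{EqISMetric} holds on $M^{\rm DOC}_{\Lambda,\bhm,\bha}$, and since both sides are real-analytic metrics on $\wt M_{\Lambda s^2,\bhm/s,\bha/s}$, it holds everywhere.

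Finally, for~\eqref{EqISQNM} I would combine~\eqref{EqISMetric} with the naturality of the d'Alembertian under diffeomorphisms, $M_s^*(\Box_g v)=\Box_{M_s^*g}(M_s^*v)$, and its behavior under constant conformal rescaling, $\Box_{s^2 g}=s^{-2}\Box_g$; these give $M_s^*\circ\Box_{g_{\Lambda,\bhm,\bha}}=s^{-2}\Box_{g_{\Lambda s^2,\bhm/s,\bha/s}}\circ M_s^*$ and, for the Klein--Gordon operator, $M_s^*\circ\bigl(\Box_{g_{\Lambda,\bhm,\bha}}-\tfrac{\Lambda}{3}\nu\bigr)=s^{-2}\bigl(\Box_{g_{\Lambda s^2,\bhm/s,\bha/s}}-\tfrac{\Lambda s^2}{3}\nu\bigr)\circ M_s^*$, the dimensionless parameter $\nu$ remaining unchanged (which is the point of the normalization $\tfrac\Lambda3\nu$). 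If $u=e^{-i\sigma t_*}u_0$, $u_0\in\CI(\wt X_{\Lambda,\bhm,\bha})$, is a mode solution, then $M_s^*u=e^{-i(s\sigma)t_*}(M_s^*u_0)$ with $M_s^*u_0\in\CI(\wt X_{\Lambda s^2,\bhm/s,\bha/s})$ is again of mode form, is smooth on $M^{\rm DOC}_{\Lambda s^2,\bhm/s,\bha/s}$, and is annihilated by $\Box_{g_{\Lambda s^2,\bhm/s,\bha/s}}$ (resp.\ by the corresponding Klein--Gordon operator); thus $s\sigma\in\QNM(\Lambda s^2,\bhm/s,\bha/s)$ (resp.\ $\in\QNM(\nu;\Lambda s^2,\bhm/s,\bha/s)$). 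Running the same argument with $s$ replaced by $1/s$ and initial parameters $(\Lambda s^2,\bhm/s,\bha/s)$ gives the opposite inclusion, so~\eqref{EqISQNM} follows. I do not expect any real difficulty here; the only point needing a little care is the compatibility of the coordinates~\eqref{EqIKerrStar} with the dilation, i.e.\ the homogeneity of $T_{\Lambda,\bhm,\bha}$ and $\Phi_{\Lambda,\bhm,\bha}$ (the additive constants being harmless), while everything else is direct substitution into~\eqref{EqIMetric}--\eqref{EqIMetric2}.
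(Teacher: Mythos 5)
Your proof is correct and follows essentially the same route as the paper: establish the homogeneity of $\mu$, $\varrho^2$, $b$, $c$ (and hence of the roots of $\mu$), deduce that $M_s$ is the dilation $(t,r,\theta,\phi)\mapsto(st,sr,\theta,\phi)$ in Boyer--Lindquist coordinates by the homogeneity of $T$ and $\Phi$ (up to additive constants), read off the metric scaling from the explicit form, extend by analytic continuation, and conclude the QNM relation by pullback. The only presentational difference is that you verify the Boyer--Lindquist metric equality by explicit substitution before transferring to the $(t_*,\phi_*)$ chart, whereas the paper goes through the coordinate change first; and you invoke the general covariance and conformal behavior of $\Box$ explicitly, which the paper folds into a one-line equivalence of PDEs. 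Incidentally, your homogeneity statement for $\Phi$ is the correct one: $\Phi_{\Lambda,\bhm,\bha}(sr)=\Phi_{\Lambda s^2,\bhm/s,\bha/s}(r)$ up to a constant (degree $0$ in $s$, not degree $1$ as the paper's displayed equation for $M_s^*\Phi$ would suggest — this must be a typo there, since an extra factor of $s$ would be inconsistent with $M_s$ fixing the angular coordinate).
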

\begin{proof}
  The expressions~\eqref{EqImu} and \eqref{EqIMetric2} imply that
  \begin{equation}
  \label{EqISPf}
  \begin{alignedat}{2}
    &(M_s^*\mu_{\Lambda,\bhm,\bha})(r) = s^2\mu_{\Lambda s^2,\bhm/s,\bha/s}(r),&\qquad
    &(M_s^*\varrho^2_{\Lambda,\bhm,\bha})(r,\theta)=s^2\varrho^2_{\Lambda s^2,\bhm/s,\bha/s}(r,\theta), \\
    &M_s^*b_{\Lambda,\bhm,\bha}=b_{\Lambda,\bhm,\bha}=b_{\Lambda s^2,\bhm/s,\bha/s},&\qquad
    &(M_s^*c_{\Lambda,\bhm,\bha})(\theta)=c_{\Lambda,\bhm,\bha}(\theta)=c_{\Lambda s^2,\bhm/s,\bha/s}(\theta).
  \end{alignedat}
  \end{equation}
  Therefore, $r_{\Lambda,\bhm,\bha}^\bullet=s r_{\Lambda s^2,\bhm/s,\bha/s}^\bullet$ for $\bullet=-,C,e,c$. Plugged into~\eqref{EqIKerrStar} (with the choice of $F_{\Lambda,\bhm,\bha}$ made there), this gives
  \[
    (M_s^*(\pa_r T_{\Lambda,\bhm,\bha}))(r)=\pa_r T_{\Lambda s^2,\bhm/s,\bha/s}(r);
  \]
  since $M_s^*(s\pa_r)=\pa_r$, we can choose the constant of integration for $T_{\Lambda,\bhm,\bha}$ so that
  \[
    (M_s^* T_{\Lambda,\bhm,\bha})(r) = s T_{\Lambda s^2,\bhm/s,\bha/s}(r).
  \]
  We can similarly arrange $(M_s^*\Phi_{\Lambda,\bhm,\bha})(r)=s\Phi_{\Lambda s^2,\bhm/s,\bha/s}(r)$. We conclude that $M_s$ takes the form $(t,r,\theta,\phi)\mapsto(s t,s r,\theta,\phi)$ in Boyer--Lindquist coordinates. The claim~\eqref{EqISMetric} then follows on $M_{\Lambda,\bhm,\bha}^{\rm DOC}$ from~\eqref{EqISPf} and the explicit form~\eqref{EqIMetric} of $g_{\Lambda,\bhm,\bha}$. On the extended manifold $\wt M_{\Lambda s^2,\bhm/s,\bha/s}$, the equality~\eqref{EqISMetric} follows by analytic continuation, or directly by inspection of the explicit form~\eqref{EqKExt} of the metric in $(t_*,r,\theta,\phi_*)$ coordinates.

  As a consequence of~\eqref{EqISMetric}, pulling back along $M_s^*$ or $M_{1/s}^*$ proves the equivalence
  \[
    \Bigl(\Box_{g_{\Lambda,\bhm,\bha}}-\frac{\Lambda}{3}\nu\Bigr)\bigl(e^{-i\sigma t_*}u(r,\theta,\phi_*)\bigr)=0 \Leftrightarrow
    \Bigl(\Box_{g_{\Lambda s^2,\bhm/s,\bha/s}}-\frac{\Lambda s^2}{3}\nu\Bigr)\bigl(e^{-i(s \sigma)t_*}u(s r,\theta,\phi_*)\bigr)=0.
  \]
  Thus, $\sigma\in\QNM(\nu;\Lambda,\bhm,\bha)$ if and only if $s\sigma\in\QNM(\nu;\Lambda s^2,\bhm/s,\bha/s)$. This implies~\eqref{EqISQNM} and finishes the proof.
\end{proof}

It thus suffices to consider the first asymptotic regime mentioned after Corollary~\ref{CorIMS}. Concretely, we take $s=\sqrt{3/\Lambda}$ in Lemma~\ref{LemmaIS}, and henceforth work with
\[
  \Lambda=3.
\]

\subsection{Singular limits and asymptotic regimes}
\label{SsIA}

We now describe a few elements of the proof of Theorem~\ref{ThmI}. Let us fix $\Lambda=3$, and fix also the ratio $\bha/\bhm=\hat\bha\in(-1,1)$; thus, in this section we exclusively work with Kerr--de~Sitter metrics
\[
  g_{\Lambda,\bhm,\bha} = g_{3,\bhm,\hat\bha\bhm},
\]
and we are interested in the limit $\bhm\searrow 0$. For notational simplicity, we work with Boyer--Lindquist coordinates here, and we restrict our attention to frequencies $\sigma$ which lie in a strip rather than a half space; thus, $\Im\sigma$ is bounded, but $\Re\sigma$ is unbounded.

For fixed $r>0$, the Kerr--de~Sitter metric $g_{\Lambda,\bhm,\bha}=g_{3,\bhm,\hat\bha,\bhm}$ in~\eqref{EqIMetric} converges, as the black hole mass tends to $0$ (i.e.\ the black hole `disappears'), to the de~Sitter metric
\[
  g_\dS = -(1-r^2)\dd t^2 + \frac{1}{1-r^2}\dd r^2 + r^2\slg,\qquad \slg=\dd\theta^2+\sin^2\theta\,\dd\phi^2.
\]
This metric is singular at the cosmological horizon $r=1$, but a coordinate change similar to~\eqref{EqIKerrStar} shows that this is merely a coordinate singularity (see~\eqref{EqKdS}). Moreover, $g_\dS$ is the expression in polar coordinates $(r,\theta,\phi)$ of a metric on $\R_t\times B(0,1)$, where $B(0,1):=\{x\in\R^3\colon r=|x|<1\}$, which is smooth across $x=0$. One can then define resonances and mode solutions for $\Box_{g_\dS}$ as in the Kerr--de~Sitter setting explained before Theorem~\ref{ThmI}; the set of quasinormal modes of $\Box_{g_\dS}$ (which are known explicitly, see Lemma~\ref{LemmaKdSQNM}) is then precisely the limit of $\QNM(3,\bhm,\hat\bha\bhm)$ as $\bhm\searrow 0$ in Theorem~\ref{ThmI}.

Now, $g_{\Lambda,\bhm,\bha}$ does not converge smoothly to $g_\dS$. Rather, in rescaled coordinates
\[
  \hat t = \frac{t}{\bhm},\qquad
  \hat r = \frac{r}{\bhm},
\]
the rescaled metric $\bhm^{-2}g_{\Lambda,\bhm,\bha}$ converges, for fixed $\hat r>0$ and as $\bhm\searrow 0$, to the metric
\begin{align*}
  &\hat g = -\frac{\hat\mu(\hat r)}{\hat\varrho^2(r,\theta)}\bigl(\dd\hat t{-}\hat\bha\,\sin^2\theta\,\dd\phi\bigr)^2 + \frac{\hat\varrho^2(r,\theta)}{\hat\mu(r)}\dd\hat r^2 + \hat\varrho^2(\hat r,\theta)\dd\theta^2 + \frac{\sin^2\theta}{\hat\varrho^2(\hat r,\theta)}\bigl((\hat r^2{+}\hat\bha^2)\dd\phi{-}\hat\bha\,\dd\hat t\bigr)^2, \\
  &\qquad \hat\mu(\hat r):=\hat r^2-2\hat r+\hat\bha^2,\qquad
    \hat\varrho^2(\hat r,\theta):=\hat r^2+\hat\bha^2\cos^2\theta,
\end{align*}
of a Kerr black hole with mass $1$ and angular momentum $\hat\bha$. Note the relationship
\begin{equation}
\label{EqIASigma}
  e^{-i\sigma t}=e^{-i\tilde\sigma\hat t},\qquad \tilde\sigma=\bhm\sigma,
\end{equation}
between frequencies on the KdS spacetime and frequencies for the rescaled observer on the Kerr spacetime. Thus, $\tilde\sigma$ is small compared to $\sigma$ when $\bhm>0$ is small; but since $|\sigma|$ itself may be large, the rescaled frequency $\tilde\sigma$ may nonetheless be large too---or not, depending on the relative size of $|\sigma|$ and $\bhm^{-1}$.

\begin{rmk}[Simple model]
\label{RmkISimple}
  An operator on $(2\bhm,2)_r\times\Sph^1_\theta$ that the reader may keep in mind in the subsequent discussion is
  \[
    P_\bhm(\sigma):=\Bigl(1-\frac{2\bhm}{r}-r^2\Bigr)D_r^2+r^{-2}D_\theta^2-\sigma^2,\qquad D=\frac{1}{i}\pa.
  \]
  (This is a poor approximation of the spectral family of the Schwarzschild--de~Sitter wave operator.) The two singular limits as $\bhm\searrow 0$ are
  \begin{equation}
  \label{EqISimpleScale}
  \begin{alignedat}{3}
    P_\bhm(\sigma) &\to P_0(\sigma)&&=(1-r^2)D_r^2+r^{-2}D_\theta^2-\sigma^2,&\qquad& r\simeq 1, \\
    \bhm^2 P_\bhm(\sigma) &\to \hat P(\tilde\sigma)&&=\Bigl(1-\frac{2}{\hat r}\Bigr)D_{\hat r}^2+\hat r^{-2}D_\theta^2-\tilde\sigma^2,&\qquad& \hat r\simeq 1,\quad \tilde\sigma=\lim_{\bhm\searrow 0} \bhm\sigma.
  \end{alignedat}
  \end{equation}
  (In the second line, $\sigma$ may vary with $\bhm$.) Here, $P_0(\sigma)$ plays the role of the de~Sitter model, and $\hat P(\tilde\sigma)$ that of the Kerr model.
\end{rmk}

We now list the different frequency regimes for $\sigma$ and $\tilde\sigma$ as $\bhm\searrow 0$, together with a brief description of the two limiting problems that one needs to study in each regime.

\begin{enumerate}
\item \textit{Bounded frequencies.} $\sigma$ remains bounded as $\bhm\searrow 0$: the spectral theory for de~Sitter space for bounded frequencies enters---and thus the de~Sitter quasinormal mode spectrum---but the Kerr wave operator enters only at frequency $\tilde\sigma=0$ by~\eqref{EqIASigma}.
\item\label{ItILarge} \textit{Large frequencies.} $1\ll|\Re\sigma|\ll\bhm^{-1}$, i.e.\ $\sigma$ is large but remains small compared to $\bhm^{-1}$: this involves high energy (semiclassical) analysis on de~Sitter space---where there are no quasinormal modes---and low (i.e.\ near zero) frequency analysis for the Kerr wave operator. From this point onwards, we are in the high frequency regime from the perspective of the de~Sitter limit.
\item\label{ItIVLarge} \textit{Very large frequencies.} $|\Re\sigma|$ is comparable to $\bhm^{-1}$: in this case, $\tilde\sigma=\bhm\sigma$ is, in the limit $\bhm\searrow 0$, of unit size but real. Thus, we are in a bounded real frequency regime for the Kerr wave operator. Excluding the possibility of KdS resonances in this regime thus requires as an input the absence of modes on the real axis for the Kerr wave operator (Theorem~\ref{ThmIKerr}).
\item\label{ItIELarge} \textit{Extremely large frequencies.} Finally, when $|\Re\sigma|$ is large compared to $\bhm^{-1}$ as $\bhm\searrow 0$, then we are in a high (real) frequency regime ($|\tilde\sigma|=|\bhm\sigma|\gg 1$) also from the perspective of the Kerr model. In this case, the absence of Kerr modes follows directly using semiclassical methods.
\end{enumerate}

More concretely then, in the bounded frequency regime, the uniform analysis of the spectral family $\Box_{g_{\Lambda,\bhm,\bha}}(\sigma)=e^{i t\sigma}\Box_{g_{\Lambda,\bhm,\bha}}e^{-i t\sigma}$ (acting on functions of the spatial variables only) takes place on function spaces which incorporate the two different spatial limiting regimes: for $\hat r\simeq 1$, we measure regularity with respect to $\pa_{\hat r}$, $\pa_\omega$ (spherical derivatives), and for $r\simeq 1$ with respect to $\pa_r$, $\pa_\omega$; put differently, writing $x\in\R^3$ for spatial coordinates on de~Sitter space, we use $\pa_{\hat x}=\bhm\pa_x$ (where $\hat x=\frac{x}{\bhm}$) for bounded $|\hat x|$, and $\pa_x$ when $|x|\simeq 1$. (In the region $\hat r\gtrsim 1$, the vector fields $r\pa_r$, $\pa_\omega$ work in both regimes simultaneously.) This is conveniently phrased on a geometric resolution (blow-up) of the total space $[0,1]_\bhm\times B(0,1)_x$ in which one introduces polar coordinates around $(\bhm,x)=(0,0)$, see Figure~\ref{FigIqSingle}.

\begin{figure}[!ht]
\centering
\includegraphics{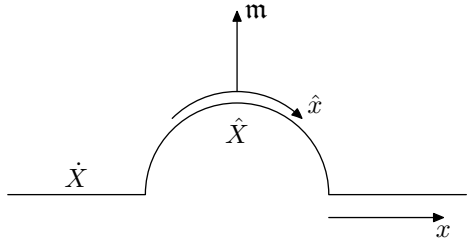}
\caption{The total space for analysis at bounded frequencies.}
\label{FigIqSingle}
\end{figure}

We call this total space the \emph{q-single space $X_\qop$ of $X=B(0,1)$}, and refer to the corresponding scale of function spaces as (weighted) \emph{q-Sobolev spaces} $H_{\qop,\bhm}^{s,l,\gamma}$: these are spaces of functions of the spatial variables, and indeed equal to $H^s$ as a set, but with norms that degenerate in a specific manner as $\bhm\searrow 0$. For functions supported in $\hat r\gtrsim 1$, the $\bhm$-dependent norm on $H_{\qop,\bhm}^{s,l,\gamma}$ for integer $s$ is given by
\[
  \|u\|_{H_{\qop,\bhm}^{s,l,\gamma}}^2 = \sum_{j+|\alpha|\leq s} \Bigl\| r^{-l}\Bigl(\frac{\bhm}{r}\Bigr)^{-\gamma} (r D_r)^j D_\omega^\alpha u \Bigr\|_{L^2}^2,
\]
where $L^2$ is the standard $L^2$-norm on $X$. The algebra of q-(pseudo)differential operators is described in detail in~\S\ref{Ssq}; it is a close relative of the surgery calculus of McDonald \cite{McDonaldThesis} and Mazzeo--Melrose \cite{MazzeoMelroseSurgery}, see Remark~\ref{RmkqComp}.

The proof of Theorem~\ref{ThmI} for bounded $\sigma$ uses a priori estimates on q-Sobolev spaces for $u$ in terms of $\Box_{g_{\Lambda,\bhm,\bha}}(\sigma)u$, with constants that are uniform as $\bhm\searrow 0$. These estimates are based on three ingredients.

\begin{enumerate}
\item \textit{Symbolic analysis: elliptic regularity, radial point estimates, microlocal propagation of regularity.} This is a direct translation to the q-calculus of the corresponding estimates introduced in the black hole setting by Vasy \cite{VasyMicroKerrdS}; by design, these q-estimates are uniform in $\bhm$. They take the form
  \begin{equation}
  \label{EqIEst0}
    \|u\|_{H_{\qop,\bhm}^{s,l,\gamma}} \leq C \bigl( \|\Box_{g_{\Lambda,\bhm,\bha}}(\sigma)u \|_{H_{\qop,\bhm}^{s-1,l-2,\gamma}} + \|u\|_{H_{\qop,\bhm}^{s_0,l,\gamma}} \bigr),\qquad s_0<s;
  \end{equation}
  that is, symbolic (or principal symbol) arguments control $u$ to leading order in the q-differentiability sense. The differential order $s-1$ on $\Box_{g_{\Lambda,\bhm,\bha}}(\sigma)u$ reflects the usual loss of one derivative in radial point or hyperbolic propagation estimates. The shift of $-2$ in the weight $l-2$ reflects the scaling near the Kerr regime $\hat r\simeq 1$, cf.\ \eqref{EqISimpleScale}.
\item \textit{Estimates for the Kerr model problem.} This is a quantitative estimate for a function $v$ on $\hat X$ (i.e.\ expressed in the rescaled coordinates $\hat x$) in terms of the zero energy operator $\Box_{\hat g}(0)$ applied to $v$. Apart from involving symbolic estimates as before, such an estimate involves analysis at spatial infinity, where the operator $\Box_{\hat g}(0)$ is an elliptic element of Melrose's b-algebra \cite{MelroseTransformation,MelroseAPS}. Applying this estimate to the error term $\|u\|_{H_{\qop,\bhm}^{s_0,l,\gamma}}$ in~\eqref{EqIEst0} (cut off to a neighborhood of $\hat X$ in Figure~\ref{FigIqSingle}) and noting that $\Box_{\hat g}(0)$ and $\bhm^2\Box_{g_{\Lambda,\bhm,\bha}}(\sigma)$ differ by an operator whose coefficients vanish to leading order at $\bhm=0$ for bounded $\hat r$, this gives the improved estimate
  \begin{equation}
  \label{EqIEst1}
    \|u\|_{H_{\qop,\bhm}^{s,l,\gamma}} \leq C \bigl( \|\Box_{g_{\Lambda,\bhm,\bha}}(\sigma)u \|_{H_{\qop,\bhm}^{s-1,l-2,\gamma}} + \|u\|_{H_{\qop,\bhm}^{s_0,l_0,\gamma}} \bigr),\qquad s_0<s,\ l_0<l.
  \end{equation}

\item \textit{Estimates for the de~Sitter model problem.} This is a quantitative estimate for a function $v$ on $\dot X$ (see Figure~\ref{FigIqSingle}) in terms of $\Box_{g_\dS}(\sigma)v$ where $\Box_{g_\dS}(\sigma)$ is the spectral family of de~Sitter space. The caveat here is that the singular limit $\bhm\searrow 0$ leaves a mark not just geometrically (as in Figure~\ref{FigIqSingle}) but also analytically, in that the point $x=0$ is blown up, and q-Sobolev spaces involve a choice of weight at $r=0$. Indeed, in the near-de Sitter region $\bhm\lesssim r$, q-Sobolev spaces are cone Sobolev spaces (i.e.\ weighted b-Sobolev spaces) with cone point at $r=0$, and for appropriate weights one has elliptic estimates at the cone point. (This issue was already addressed in a simple setting in \cite[\S2.1]{HintzXieSdS}.) Thus, if $\sigma$ is not a de Sitter quasinormal mode, one can apply this quantitative estimate to the error term in~\eqref{EqIEst1} and thereby weaken the error term to\footnote{While not apparent from this sketch, careful accounting of the orders required to apply the two model operator estimates, and of the q-regularity of the error term, shows that the symbolic analysis is indeed necessary in order to get an error term with differential order $\leq s$ here.} $C\|u\|_{H_{\qop,\bhm}^{s_0,l_0,\gamma_0}}$ where $\gamma_0<\gamma$. But this is bounded by $C\bhm^\delta\|u\|_{H_{\qop,\bhm}^{s,l_0+\delta,\gamma_0+\delta}}$ where $0<\delta\leq\min(l-l_0,\gamma-\gamma_0)$, and hence \emph{small} compared to $\|u\|_{H_{\qop,\bhm}^{s,l,\gamma}}$ when $\bhm$ is small. Therefore, we obtain a uniform estimate
  \[
    \|u\|_{H_{\qop,\bhm}^{s,l,\gamma}} \leq C\|\Box_{g_{\Lambda,\bhm,\bha}}(\sigma)u\|_{H_{\qop,\bhm}^{s-1,l-2,\gamma}}
  \]
  for all sufficiently small $\bhm$, and for bounded $\sigma$ which are at most at a fixed small distance away from de Sitter quasinormal modes. See Proposition~\ref{PropKBdNo}. A Grushin problem setup together with Rouch\'e's theorem takes care of KdS quasinormal modes \emph{near} de~Sitter quasinormal modes.
\end{enumerate}

\begin{rmk}[Comparison with \cite{HintzXieSdS}]
  The work \cite{HintzXieSdS} demonstrated how on the spherically symmetric Schwarzschild--de~Sitter spacetime, and after separation into spherical harmonics, uniform estimates for a degenerating family of ordinary differential equations in the radial variable imply Theorem~\ref{ThmI} for bounded spectral parameters and for fixed spherical harmonic degrees. In the present paper, we adopt a point of view based fully on the analysis of \emph{partial} differential operators; the part of the proof concerned with bounded frequencies is conceptually very similar to \cite[\S3]{HintzXieSdS}, except now the uniform estimates are proved using microlocal means, as described above. The remaining three frequency regimes~\eqref{ItILarge}--\eqref{ItIELarge} are not covered by \cite{HintzXieSdS}.
\end{rmk}

The large frequency regime~\eqref{ItILarge} is the most delicate one. From the perspective of de~Sitter space, uniform analysis away from the cone point utilizes semiclassical Sobolev spaces (i.e.\ measuring regularity with respect to $h\pa_x$ for $|x|\simeq 1$ where $h=|\sigma|^{-1}$), but there is now an artificial conic point at $r=0$ through which we need to propagate semiclassical estimates (along null-bicharacteristics which hit the cone point or emanate from it). We do this by adapting the semiclassical propagation estimates which were proved in \cite{HintzConicProp} by means of the semiclassical cone calculus introduced in \cite{HintzConicPowers}: this involves radial point estimates at incoming and outgoing radial sets over the cone point, and estimates for a model operator on an exact Euclidean cone which here is the spectral family of the Laplacian at frequency $1$ (i.e.\ on the spectrum). In terms of the model of Remark~\ref{RmkISimple}, we are considering $h^2 P_0(h^{-1})=(1-r^2)(h D_r)^2+r^{-2}(h D_\theta)^2-1$, and the model operator arises by passing to $\tilde r:=r/h$ and taking the limit $h\searrow 0$ for bounded $\tilde r$, giving $D_{\tilde r}^2+\tilde r^{-2}D_\theta^2-1$. (We refer the reader to \cite{MelroseWunschConic,MelroseVasyWunschEdge,MelroseVasyWunschDiffraction,XiConeParametrix,YangDiffraction} for further results on propagation through, and diffraction by, conic singularities.)

From the perspective of the rescaled Kerr model on the other hand, the large frequency regime~\eqref{ItILarge} puts us into a regime of \emph{low} frequencies $\tilde\sigma$, and we need to prove uniform estimates for the spectral family $\Box_{\hat g}(\tilde\sigma)$ for real $\tilde\sigma$ near $0$. Uniform estimates for low energy resolvents on asymptotically flat spaces or spacetimes have a long history going back to work by Jensen--Kato \cite{JensenKatoResolvent}, with recent contributions including \cite{GuillarmouHassellResI,GuillarmouHassellResII,GuillarmouHassellSikoraResIII,BonyHaefnerResolvent,DonningerSchlagSofferPrice,DonningerSchlagSofferSchwarzschild,TataruDecayAsympFlat,VasyLowEnergy,VasyLowEnergyLag,HintzPrice,StrohmaierWatersHodge,MorganDecay,MorganWunschPrice}. Here, we use an approach that matches up exactly with the semiclassical cone analysis on the de~Sitter side: we work with function spaces (and a corresponding ps.d.o.\ algebra which we call the \emph{scattering-b-transition algebra}---see~\S\ref{SsPscbt}---which is taken directly from \cite{GuillarmouHassellResI} except for different terminology) which resolve the transition from the (elliptic) b-analysis at zero frequency to (non-elliptic) scattering theory (in the spirit of \cite{MelroseEuclideanSpectralTheory}) at nonzero frequencies. The \emph{same} model operator as above (conic Laplacian at frequency $1$) now captures the transition from zero to nonzero energies for the low energy spectral family of the Kerr wave operator. This is less precise than, but technically simpler than the very precise second microlocal approach introduced recently by Vasy \cite{VasyLowEnergyLag}. In terms of the model of Remark~\ref{RmkISimple}, we pass to $\hat\rho=\hat r^{-1}$ in order to work at spatial infinity, so $\tilde\sigma^{-2}\hat P(\tilde\sigma)=(1-2\hat\rho)\tilde\sigma^{-2}(\hat\rho^2 D_{\hat\rho})^2+\hat\rho^2\tilde\sigma^{-2}D_\theta^2-1$, then introduce $\tilde\rho=\hat\rho/\tilde\sigma$, and pass to the limit $\tilde\sigma\searrow 0$ for bounded $\tilde\rho$; this produces $(\tilde\rho^2 D_{\tilde\rho})^2+\tilde\rho^2 D_\theta^2-1$. Upon identifying $\tilde\rho=\tilde r^{-1}$, this is the same operator as the one arising from the high frequency cone point perspective above.

On the level of estimates, we combine symbolic estimates and estimates for the two model spectral families by means of an appropriate family of $(\bhm,\sigma)$-dependent \emph{Q-Sobolev norms} which reduce to semiclassical cone Sobolev norms in the high energy de~Sitter regime, and to scattering-b-transition Sobolev norms in the low energy Kerr regime. Concretely, an integer order norm with these properties is
\begin{align*}
  &\|u\|_{H_{\Qop,\bhm,\sigma}^{s,(l,\gamma,l',\sfr)}}^2 = \sum_{j+|\alpha|\leq s} \Bigl\| r^{-l}\Bigl(\frac{\bhm}{r}\Bigr)^{-\gamma}(h+r)^{-l'+l}\Bigl(\frac{h}{h+r}\Bigr)^{-\sfr+\gamma}\Bigl(\frac{h}{h+r}r D_r\Bigr)^j\Bigl(\frac{h}{h+r}D_\omega\Bigr)^\alpha u \Bigr\|_{L^2}^2, \\
  &\qquad h:=|\sigma|^{-1}\in(\bhm,1],
\end{align*}
for $u$ with support in $r\gtrsim\bhm$. For fixed $\bhm>0$ and $\sigma$, this is equivalent to the $H^s$-norm, but it degenerates in the correct manner as $\bhm\searrow 0$. (In the main part of the paper, such \emph{weighted Q-Sobolev norms} have an extra order, denoted $b$, which however does not matter outside the extremely high frequency regime. Moreover, the order $\sfr$ will be variable to accommodate incoming and outgoing radial point estimates.)

Next, in the very large frequency regime~\eqref{ItIVLarge}, we are now, from the de~Sitter perspective, fully in a semiclassical regime. The symbolic propagation through the conic singularity again follows \cite{HintzConicProp}, but the model problem at the cone point is now the spectral family of the Kerr wave operator at bounded nonzero real frequencies. Estimates for the latter are limiting absorption principle type estimates; they are proved as in \cite{MelroseEuclideanSpectralTheory} up to compact error terms, and removing these error terms precisely requires the mode stability for the Kerr spacetime \cite{ShlapentokhRothmanModeStability}. (This is reminiscent of propagation results for $3$- or $N$-body scattering \cite{VasyThreeBody,VasyManyBody}, where microlocal propagation of decay through collision planes requires the invertibility of a spectral problem for a subsystem.)

In the extremely large frequency regime~\eqref{ItIELarge} finally, we can use semiclassical methods also for the spectral family on the Kerr spacetime (and therefore the absence of extremely large frequency quasinormal modes can be proved entirely using symbolic means). Here, the full null-geodesic dynamics of the Kerr spacetime enter; this is described in detail in \cite{DyatlovWaveAsymptotics}, and we can use this and the relevant microlocal propagation results, in particular at the trapped set \cite{DyatlovSpectralGaps}, as black boxes.

While the analysis of bounded frequencies is done separately (see~\S\ref{SsKBd}), the analysis of all three high frequency regimes is phrased in terms of the single aforementioned family of weighted Q-Sobolev spaces. These capture regularity with respect to a Lie algebra of vector fields adapted to each of the regimes discussed. We adopt a fully geometric microlocal point of view and describe this underlying Lie algebra of \emph{Q-vector fields} on a suitable total space (a resolution of $\ol{\R_\sigma}\times[0,1]_\bhm\times B(0,1)$ where $\ol\R=\R\cup\{-\infty,+\infty\}$); the full spectral family $(\sigma,\bhm)\mapsto\Box_{g_{3,\bhm,\hat\bha\bhm}}(\sigma)$ is then (for fixed $\Im\sigma$) a \emph{single} element of a corresponding space of Q-differential operators. Its microlocal analysis is accomplished by means of an algebra of Q-pseudodifferential operators. Q-geometry and Q-analysis are developed in detail in~\S\ref{SQ}.

\begin{rmk}[Separation of variables]
  It is conceivable that one can prove Theorem~\ref{ThmI} by starting with Carter's separation of variables \cite{CarterHamiltonJacobiEinstein} and extending the ODE techniques introduced in \cite{HintzXieSdS} to keep track of uniformity in half spaces $\Im\sigma>-C$ and also in the parameters $(\ell,m)$ of the spheroidal harmonics (generalizing the usual parameters $\ell\in\N_0$ and $m\in\Z\cap[-\ell,\ell]$ of spherical harmonics); we shall not pursue this possibility here. We merely note that this approach would introduce yet another large parameter ($|\ell|+|m|\to\infty$). Elements of the low frequency analysis for the Kerr model in the case $\hat\bha=0$ are developed from a separation of variables point of view in \cite{DonningerSchlagSofferPrice,DonningerSchlagSofferSchwarzschild}. 
\end{rmk}

\begin{rmk}[Mode stability in the full subextremal range]
\label{RmkIExtr}
  For simplicity of notation, fix the black hole mass to be $1$, and consider a sequence $(\Lambda_j,1,\bha_j)$ of subextremal KdS parameters with $\Lambda_j\searrow 0$, $|\bha_j|<1$. Then the limiting Kerr parameters $(1,\bha)$, $\bha=\lim\bha_j$, may be extremal. While the mode stability of extremal Kerr black holes is known \cite{TeixeiradCModes} (with the exceptional frequencies requiring separate treatment), there do not exist any estimates yet on the spectral family on an extremal Kerr spacetime (in any frequency regime) which could take the place of the estimates on the subextremal Kerr spectral family used above. \emph{If} such estimates were available, one could likely generalize Theorem~\ref{ThmI} to all subextremal KdS black holes (possibly even including the extremal case) when $\Lambda\bhm^2$ is sufficiently small; at present, this is out of reach however.
\end{rmk}

The analytic framework introduced in this paper is very flexible. In particular, it can be generalized in a straightforward manner to degenerating families of operators acting on sections of vector bundles. In particular, for the Teukolsky equation on Kerr--de~Sitter spacetimes, we expect an analogue of Theorem~\ref{ThmI} to hold; this would be an important step towards an unconditional proof of the nonlinear stability of Kerr--de~Sitter black holes without restriction to small angular momenta. (The case of small angular momenta was treated in \cite{HintzVasyKdSStability}.) Furthermore, other singular limits with similar scaling behavior can be analyzed using the same approach. As a simple (albeit contrived) example, the operator
\[
  \Box_{g_\dS}+\bhm^{-2}V(x/\bhm),
\]
where $V\in\CIc(\R^3)$ (or more generally with inverse cubic decay), fits into our framework: the analogue of the de Sitter model is now simply the spectral family of $\Box_{g_\dS}$, while the analogue of the Kerr model is $\Delta_{\hat x}-\sigma^2+V(\hat x)$, i.e.\ the spectral family of the Schr\"odinger operator $\Delta+V$ on $\R^3_{\hat x}$. Thus, if $\Delta+V$ has no resonances in the closed upper half plane, then the resonances of $\Box_{g_\dS}+\bhm^{-2}V(x/\bhm)$ have the same description as in Theorem~\ref{ThmI}. (Note that separation of variables is not available at all for this operator when $V$ has no symmetries.)
  
On the other hand, if the Kerr model of the equation under study has zero energy resonances or bound states---as is the case for the Maxwell equations \cite{SterbenzTataruMaxwellSchwarzschild,AnderssonBlueMaxwellKerr} or the equations of linearized gravity \cite{AnderssonBackdahlBlueMaKerr,HaefnerHintzVasyKerr}---the bounded frequency analysis sketched above fails. It is an interesting open problem to analyze the limiting behavior of KdS quasinormal modes in this case.

\subsection{Outline of the paper}
\label{SsIO}

The technical heart of the paper is~\S\ref{SQ}. We first discuss in detail the geometric and analytic tools (\emph{q-analysis}) which we will use for the uniform analysis at bounded frequencies---see~\S\ref{Ssq}---before describing the appropriate large frequency generalization (\emph{Q-analysis}) in~\S\S\ref{SsQS}--\ref{SsQH}. The main result of the paper, Theorem~\ref{ThmK}, is set up in~\S\S\ref{SsKL}--\ref{SsKMain}. After placing the full spectral family of a degenerating family of Kerr--de~Sitter spacetimes into the framework of Q-analysis in~\S\ref{SsKS}, the proof of Theorem~\ref{ThmK} occupies~\S\S\ref{SsKSy}--\ref{SsKU}, with \S\ref{SsKU} describing the modifications necessary to treat resonances in a full half space (rather than merely in strips, as described in~\S\ref{SsIA}). The proof of Theorem~\ref{ThmIKG} does not require any further work, and is given in~\S\ref{SsKG}.

Appendix~\ref{SP} reviews elements of geometric singular analysis and recalls the various pseudodifferential algebras (the b-, scattering, semiclassical scattering, semiclassical cone, and scattering-b-transition algebras) that are used in the analysis of the model problems discussed in~\S\ref{SsIA}. Appendix~\ref{SQSemi} contains supplementary material for \S\ref{SsQP}; this is included for conceptual completeness, but it is not used in the proofs of the main results.

\subsection*{Acknowledgments}

I am very grateful to Simone Ferraro for an inspiring conversation during our time as Miller Research Fellows at UC Berkeley which spawned the idea for the present work (and for the earlier \cite{HintzXieSdS}). I am grateful to Andr\'as Vasy for discussions about his work \cite{PetersenVasySubextremal} with Oliver Lindblad Petersen, which prompted the writing of this paper. Thanks are also due to Dietrich H\"afner and Andr\'as Vasy who shared with me their unpublished manuscript \cite{HaefnerVasyKerrUnfinished}, as well as to Maciej Zworski for encouragement and support. This research is supported by the U.S.\ National Science Foundation under Grant No.\ DMS-1955614, and by a Sloan Research Fellowship.

\section{Geometric and analytic setup of the singular limit}
\label{SQ}

Let us fix an $n$-dimensional manifold $X$ without boundary, and fix a point $0\in X$ and local coordinates $x\in B(0,2)=\{x\in\R^n\colon|x|<2\}$ so that $x=0$ at the point $0$. (All constructions presented below go through whether $X$ is compact or not. The main case of interest in this paper is when $X\subset\R^3$ is the spatial part of the de~Sitter manifold. For compact $X$ the discussion of function spaces is slightly simplified.)

We first describe somewhat briefly the geometric and analytic setup for the degenerate limit for fixed frequencies in~\S\ref{Ssq}; we call this q-analysis. The geometric setup for uniform analysis across all frequency regimes is then discussed in detail in~\S\S\ref{SsQS}--\ref{SsQP}; we call this Q-analysis. (The letters `q' and `Q' stand for `quasinormal modes'.) We freely make use of the material in Appendix~\ref{SP}.

\subsection{q-geometry and -analysis}
\label{Ssq}

When, in the context of Theorems~\ref{ThmI} and \ref{ThmIKG}, the frequency $\sigma$ is fixed, the following space captures the geometric degeneration of the spacetime as $\bhm\to 0$.

\begin{definition}[q-single space]
\label{DefqSingle}
  The \emph{q-single space} of $X$ is the resolution $X_\qop$ of $[0,1]_\bhm\times X$ defined as the blow-up
  \[
    X_\qop := \bigl[ [0,1]\times X; \{0\}\times\{0\} \bigr].
  \]
  We denote by $\zface_\qop$ the front face, and by $\mface_\qop$ the lift of $\{0\}\times X$. We write $\rho_{\zface_\qop},\rho_{\mface_\qop}\in\CI(X_\qop)$ for defining functions of these two boundary hypersurfaces.
\end{definition}

See Figure~\ref{FigqSingle}. Our interest will be in uniform analysis as $\bhm\searrow 0$; thus, one may as well replace $[0,1]$ by any other interval $[0,\bhm_0]$ with $\bhm_0>0$. We work with a closed interval of values of $\bhm$ since it will be convenient to keep all parameter spaces compact.

\begin{figure}[!ht]
\centering
\includegraphics{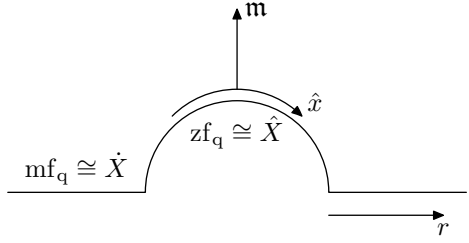}
\caption{The q-single space $X_\qop$ when $\dim X=1$.}
\label{FigqSingle}
\end{figure}

\begin{rmk}[q-analysis and analytic surgery]
\label{RmkqComp}
  In the case that $X$ is 1-dimensional, the set $\{0\}\subset X$ is a hypersurface, and $X_\qop$ is equal to the \emph{single surgery space} defined in \cite{MazzeoMelroseSurgery}; this was first introduced by McDonald \cite{McDonaldThesis}. For higher-dimensional $X$, the single surgery space is defined via blow-up of a hypersurface of $X$, rather than a point as in the q-single space above. However, much of the discussion of the geometry, Lie algebra of vector fields, and pseudodifferential calculus carries over from \cite[\S\S3--4]{MazzeoMelroseSurgery} to the q-setting with minor changes. We shall nonetheless give a self-contained account here to fix the notation and to facilitate the subsequent generalization to the Q-calculus.
\end{rmk}

We denote by $\bhm$ the lift of the first coordinate on $[0,1]\times X$ to $X_\qop$; we furthermore write
\begin{gather}
\label{EqqCoord1}
  x = r\omega,\qquad r\geq 0,\ \omega\in\Sph^{n-1}, \\
\label{EqqCoord2}
  \hat x := \frac{x}{\bhm},\qquad
  \hat r := \frac{r}{\bhm},\qquad
  \hat\rho := \hat r^{-1} = \frac{\bhm}{r}.
\end{gather}
We finally put
\begin{equation}
\label{EqqXs}
  \dot X := [X;\{0\}] = [0,2)_r\times\Sph^{n-1},\qquad
  \hat X := \ol{\R^3_{\hat x}}.
\end{equation}
Thus, $\pa\dot X=r^{-1}(0)\subset\dot X$ is the front face of $\dot X$. Moreover, $\hat X$ is the radial compactification $\ol{T_0}X$ of the tangent space $T_0 X$. We have natural diffeomorphisms
\[
  \zface_\qop \cong \hat X,\qquad
  \mface_\qop \cong \dot X,
\]
and we shall use both notations for these boundary hypersurfaces.

\begin{definition}[q-vector fields]
\label{DefqVF}
  The space of \emph{q-vector fields} on $X$ is defined as
  \[
    \Vq(X) := \{ V\in\Vb(X_\qop) \colon V\bhm=0 \}.
  \]
  For $m\in\N_0$, we denote by $\Diffq^m(X)$ the space of $m$-th order q-differential operators, consisting of locally finite sums of up to $m$-fold compositions of elements of $\Vq(X)$ (a $0$-fold composition being multiplication by an element of $\CI(X_\qop)$). For $\alpha=(\alpha_\zface,\alpha_\mface)\in\R^2$, put
  \[
    \Diffq^{m,\alpha}(X) = \rho_{\zface_\qop}^{-\alpha_\zface}\rho_{\mface_\qop}^{-\alpha_\mface}\Diffq^m(X) = \bigl\{ \rho_{\zface_\qop}^{-\alpha_\zface}\rho_{\mface_\qop}^{-\alpha_\mface}A \colon A\in\Diffq^m(X) \bigr\}.
  \]
\end{definition}

Since $X_\qop\cap\{\bhm>0\}=(0,1]\times X$, an element $V\in\Vq(X)$ is thus a smooth family $(0,1]\ni\bhm\mapsto V_\bhm\in\cV(X)$ of smooth vector fields on $X$ which degenerate in a particular fashion in the limit $r\to 0$, $\bhm\to 0$. Since $\Vb(X_\qop)$ is a Lie algebra, and since $[V,W]\bhm=V(W\bhm)-W(V\bhm)=0$ whenever $V\bhm=0$ and $W\bhm=0$, we conclude that also $\Vq(X)$ is a Lie algebra.

\begin{rmk}[Comparison with \cite{HintzXieSdS}]
\label{RmkqXie}
  The uniform ODE analysis of \cite{HintzXieSdS} was phrased in terms of horizontal b-vector fields on the subset of $[[0,1)_\bhm\times[0,1)_r;\{0\}\times\{0\}]$ where $\bhm\lesssim r\lesssim 1$; thus, the b-behavior at the lift of $r=0$ was excised. The q-single space and class of q-vector fields defined here, even in the ODE setting where $X$ is an open interval containing $0$, is more natural, as it does not introduce an artificial b-boundary at the lift of $r=0$.
\end{rmk}

In local coordinates $\bhm\geq 0$, $\hat x\in\R^3$ near the interior $\zface_\qop^\circ$ of $\zface_\qop$, the space $\Vq(X)$ is spanned by $\pa_{\hat x^j}$ ($j=1,\ldots,n$) over $\CI(X_\qop)$. Near the interior $\mface_\qop^\circ$, $\Vq(X)$ is spanned by $\pa_{x^j}$ ($j=1,\ldots,n$) or equivalently by $\pa_r$, $\pa_\omega$ (schematic notation for spherical vector fields). Near the corner $\zface_\qop\cap\mface_\qop$, where we have local coordinates $\hat\rho,r,\omega$, we can use $r\pa_r-\hat\rho\pa_{\hat\rho}$, $\pa_\omega$ as a spanning set. A global frame near $\zface_\qop$ is given by $\sqrt{\bhm^2+|x|^2}\pa_{x^j}$ ($j=1,\ldots,n$). In particular, if we regard $\cV(X)$ as the subset of $\bhm$-independent vector fields on $X_\qop$, then
\begin{equation}
\label{EqqVFX}
  \cV(X)\subset\rho_{\zface_\qop}^{-1}\Vq(X),\qquad
  \Diff^m(X) \subset \rho_{\zface_\qop}^{-m}\Diffq^m(X) = \Diffq^{m,(m,0)}(X).
\end{equation}

We denote by
\[
  \Tq X \to X_\qop
\]
the \emph{q-vector bundle} which has local frames given by the above collections of vector fields; thus there is a bundle map $\Tq X\to T X_\qop$ so that $\Vq(X)=\CI(X,\Tq X)$. From the above local coordinate descriptions, we can then also conclude that the restriction maps
\begin{equation}
\label{EqqVFNorm}
  N_{\zface_\qop} \colon \Vq(X) \to \Vb(\hat X),\qquad
  N_{\mface_\qop} \colon \Vq(X) \to \Vb(\dot X)
\end{equation}
are surjective, and their kernels are $\rho_{\zface_\qop}\Vq(X)$ and $\rho_{\mface_\qop}\Vq(X)$, respectively. These maps thus induce bundle isomorphisms
\begin{equation}
\label{EqqBundleIso}
  \Tq_{\zface_\qop}X \cong \Tb\hat X,\qquad
  \Tq_{\mface_\qop}X \cong \Tb\dot X,
\end{equation}
and corresponding isomorphisms of cotangent bundles. We can define the q-principal symbol for $V\in\Vq(X)$ as $\sigmaq^1(V)\colon\Tq^*X\ni\xi\mapsto i\xi(V)$, and by linearity and multiplicativity we can define $\sigmaq^m(A)\in P^m(\Tq^*X)$ for $A\in\Diffq^m(X)$; the principal symbol $\sigmaq^m(A)$ vanishes if and only if $A\in\Diffq^{m-1}(X)$. We also have surjective restriction maps
\begin{equation}
\label{EqqDiffNorm}
  N_{\zface_\qop} \colon \Diffq^m(X) \to \Diffb^m(\hat X),\qquad
  N_{\mface_\qop} \colon \Diffq^m(X) \to \Diffb^m(\dot X),
\end{equation}
and $\sigmab^m(N_H(A))=\sigmaq^m(A)|_{\Tq^*_H X}$ for $H=\zface_\qop,\mface_\qop$ under the above bundle isomorphisms. These maps can be defined completely analogously to restrictions of b-vector fields: that is, $N_{\zface_\qop}(A)u=(A\tilde u)|_{\zface_\qop}$ for $u\in\CIdot(\hat X)=\CI(\zface_\qop)$ where $\tilde u\in\CI(X_\qop)$ is any smooth extension of $u$; similarly for $N_{\mface_\qop}$.

\begin{definition}[Weighted q-Sobolev spaces]
\label{DefqSob}
  Suppose $X$ is compact, and fix a finite collection $V_1,\ldots,V_N\in\Vq(X)$ of q-vector fields which at any point of $X_\qop$ span the q-tangent space. Fix any weighted positive density $\nu=\rho_{\zface_\qop}^{\alpha_\zface}\rho_{\mface_\qop}^{\alpha_\mface}\nu_0$ where $0<\nu_0\in\CI(X_\qop,\Omegaq X)$. We then define, for $s\in\N_0$ and $l,\gamma\in\R$, the function space $H_{\qop,\bhm}^{s,l,\gamma}(X,\nu)$ to be equal to $H^s(X)$ as a set, but equipped with the squared norm
  \[
    \|u\|_{H_{\qop,\bhm}^{s,l,\gamma}(X,\nu)}^2 := \sum_{\alpha\in\N_0^N,\ |\alpha|\leq m} \|\rho_{\zface_\qop}^{-l}\rho_{\mface_\qop}^{-\gamma}V^\alpha u\|_{L^2(X,\nu_\bhm)}^2,\qquad V^\alpha=\prod_{j=1}^N V_j^{\alpha_j},
  \]
  where we write $0<\nu_{\bhm_0}\in\CI(X,\Omega X)$ for the restriction of $\nu$ to $\bhm^{-1}(\bhm_0)$.
\end{definition}

In particular, if $\nu=|\dd x|$, then for $u$ supported in $|\hat x|\lesssim 1$, resp.\ $r\gtrsim 1$, the norm $\|u\|_{H_{\qop,\bhm}^s(X)}$ is uniformly equivalent to $\bhm^{n/2}\|u\|_{\Hb^s(\hat X)}$ (since $|\dd x|=\bhm^n|\dd\hat x|$), resp.\ $\|u\|_{\Hb^s(\dot X)}$.

To analyze q-differential operators using microlocal techniques, we need to define a corresponding pseudodifferential algebra.

\begin{definition}[q-double space]
\label{DefqDouble}
  The \emph{q-double space} of $X$ is defined as the resolution of $[0,1]_\bhm\times X^2$ given by
  \[
    X^2_\qop := \bigl[ [0,1]\times X^2; \{0\}\times\{0\}\times\{0\}; \{0\}\times\{0\}\times X, \{0\}\times X\times\{0\} \bigr].
  \]
  We denote the front face of $X^2_\qop$ by $\zface_{\qop,2}$, the lift of $\{0\}\times X^2$ by $\mface_{\qop,2}$, and the lift of $[0,1]\times\diag_X$ (with $\diag_X\subset X^2$ denoting the diagonal) by $\diag_\qop$. Furthermore, $\lb_{\qop,2}$, resp.\ $\rb_{\qop,2}$ denotes the lift of $\{0\}\times\{0\}\times X$, resp.\ $\{0\}\times X\times\{0\}$. See Figure~\ref{FigqDouble}.
\end{definition}

\begin{figure}[!ht]
\centering
\includegraphics{FigqDouble}
\caption{The q-double space $X^2_\qop$.}
\label{FigqDouble}
\end{figure}

\begin{lemma}[b-fibrations from the q-double space]
\label{LemmaqbFib}
  The left projection $[0,1]\times X\times X\ni(\bhm,x,x')\mapsto(\bhm,x)$ and right projection $(\bhm,x,x')\mapsto(\bhm,x')$ lift to b-fibrations $\pi_L,\pi_R\colon X^2_\qop\to X_\qop$.
\end{lemma}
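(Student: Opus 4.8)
The plan is to verify that $\pi_L$ and $\pi_R$ are b-maps and that they satisfy the b-fibration condition (surjectivity of the b-differential and no boundary hypersurface of the total space mapped to a corner of the base). By the symmetry $x \leftrightarrow x'$ of the construction of $X^2_\qop$, it suffices to treat $\pi_L$. First I would recall that $[0,1]\times X^2 \to [0,1]\times X$, $(\bhm,x,x')\mapsto(\bhm,x)$, is trivially a fibration (projection off the $x'$ factor), hence a b-fibration when $X$ has no boundary; the content of the lemma is entirely about how this interacts with the three blow-ups $\beta\colon X^2_\qop \to [0,1]\times X^2$ and the one blow-up $\beta'\colon X_\qop\to[0,1]\times X$. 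The standard tool is the commutative-diagram criterion: one checks that the p-submanifolds being blown up upstairs map, under the projection, either into the center blown up downstairs or into a p-submanifold transversal to it (or are already "resolved" by a downstairs blow-up), so that the lift exists and remains a b-fibration. Concretely, the center $\{0\}\times\{0\}\times\{0\}$ and $\{0\}\times\{0\}\times X$ both project to $\{0\}\times\{0\}$, the center of $\beta'$; the center $\{0\}\times X\times\{0\}$ projects onto $\{0\}\times X$, which becomes $\mface_\qop$ downstairs and does not need further resolution.

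The key steps, in order: (1) Show the left projection lifts to a smooth map $\pi_L\colon X^2_\qop\to X_\qop$. For this I would work in the local coordinates already introduced in the excerpt — $\bhm,\hat x,\hat x'$ near $\zface_{\qop,2}^\circ$; $\bhm,x,x'$ near $\mface_{\qop,2}^\circ$; $\hat\rho,r,\omega,\hat x'$ and similar mixed coordinates near the corners and near $\lb_{\qop,2},\rb_{\qop,2}$ — and in each chart write $\pi_L$ in terms of the corresponding coordinates on $X_\qop$ (namely $\bhm,\hat x$; or $\bhm,x$; or $\hat\rho,r,\omega$), checking smoothness by inspection. The only place requiring care is the chart straddling $\rb_{\qop,2}$ (where $x'$ is small but $x$ is not), and the chart near $\lb_{\qop,2}$: there the coordinate $\bhm/r = \hat\rho$ used on $X_\qop$ must be expressible smoothly, which it is because $r = |x|$ only involves the left variable. (2) Verify $\pi_L$ is a b-map: each boundary defining function on $X_\qop$ pulled back is a product of boundary defining functions on $X^2_\qop$ times a positive smooth factor. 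Explicitly, $\pi_L^*\rho_{\mface_\qop}$ is (up to units) $\rho_{\mface_{\qop,2}}\rho_{\lb_{\qop,2}}$ and $\pi_L^*\rho_{\zface_\qop}$ is $\rho_{\zface_{\qop,2}}\rho_{\lb_{\qop,2}}$ — one reads off the exponents from how each of $\zface_\qop$ and $\mface_\qop$ sits over the blow-up centers. (3) Check the b-differential ${}^\bop(d\pi_L)$ is surjective fiberwise: this follows because already the unlifted projection has surjective ordinary differential, and the blow-ups only affect the normal directions, which are handled by the product structure from step (2). (4) Check no hypersurface of $X^2_\qop$ maps into a corner of $X_\qop$: from step (2) each boundary hypersurface of $X^2_\qop$ maps into the closure of exactly one hypersurface of $X_\qop$ ($\mface_{\qop,2}\mapsto\mface_\qop$, $\zface_{\qop,2}\mapsto\zface_\qop$, $\lb_{\qop,2}\mapsto$ both meet only at the corner so one must check its image lies in a single face — in fact $\lb_{\qop,2}$ maps onto the corner $\zface_\qop\cap\mface_\qop$, so one must instead verify the weaker b-fibration condition that $\lb_{\qop,2}$ is not mapped to a corner but rather the map restricted to it is still a b-fibration onto the corner, which is the standard allowance), $\rb_{\qop,2}\mapsto$ interior away from $\bhm=0$, i.e.\ no boundary, $\diag_\qop\mapsto X_\qop$).

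The main obstacle I expect is step (1) combined with the $\lb_{\qop,2}$ analysis in step (4): one has to be scrupulous about the \emph{order} of the blow-ups in Definition~\ref{DefqDouble} (first the triple-point center $\{0\}^3$, then the two "edge" centers $\{0\}\times\{0\}\times X$ and $\{0\}\times X\times\{0\}$, which have become disjoint p-submanifolds after the first blow-up) and to confirm that this order is compatible with the single blow-up defining $X_\qop$, so that the universal property of the blow-up actually furnishes the lift. The cleanest way to organize this is to appeal to the pullback/commuting-blow-up lemmas for b-fibrations (as in Melrose's treatment, or as used in \cite{MazzeoMelroseSurgery}): one notes that $X^2_\qop$ is obtained from $[0,1]\times X \times X$, and that $\pi_L$ factors through $X_\qop \times X$ at the level of the relevant blow-ups, after which the remaining projection $X_\qop\times X \to X_\qop$ is a genuine fibration, hence a b-fibration. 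I would phrase the proof this way to avoid a chart-by-chart slog, flagging the coordinate computations only where the factorization is not immediate — namely at $\lb_{\qop,2}$, where the identification $\lb_{\qop,2}\cong$ (front face of $[\hat X; \text{pt}]$) $\times$ (something) must be checked to be consistent with $\pi_L$ landing on the corner of $X_\qop$ via a b-fibration.
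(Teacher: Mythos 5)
Your final paragraph identifies essentially the paper's argument: the paper proves the lemma precisely by the factorization you sketch, namely by blowing up $\{0\}\times\{0\}\times X$ first (which produces $X_\qop\times X$, with the projection off the second factor an honest fibration), then observing that the remaining two blow-ups — of the lift of $\{0\}^3$ and then of the lift of $\{0\}\times X\times\{0\}$ — are each b-transversal to the lifted projection, so the b-fibration property persists; a single commutation of blow-ups (legitimate because $\{0\}^3\subset\{0\}\times\{0\}\times X$) then recovers the order in Definition~\ref{DefqDouble}. So the right route is the one you put last; the chart-by-chart verification you lead with is what this factorization lets one avoid.

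However, the intermediate claims in your steps (2) and (4) are not merely imprecise but wrong, and if taken at face value would sink a write-up.

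First, the exponent computation in step (2) has a swap: since $\rho_{\mface_\qop}\sim\bhm/(\bhm+|x|)$ depends only on $(\bhm,x)$, it pulls back to vanish where $\bhm$ does but $|x|$ does not, namely on $\mface_{\qop,2}$ and on $\rb_{\qop,2}$ (where $x'$ is small, $x$ is not) — not on $\lb_{\qop,2}$. The correct exponents are
\[
  \pi_L^*\rho_{\zface_\qop}\ \sim\ \rho_{\zface_{\qop,2}}\,\rho_{\lb_{\qop,2}},\qquad
  \pi_L^*\rho_{\mface_\qop}\ \sim\ \rho_{\mface_{\qop,2}}\,\rho_{\rb_{\qop,2}},
\]
so each of the four boundary hypersurfaces of $X_\qop^2$ appears in exactly one of the two products. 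That is precisely the b-normality statement, and it is clean — there is nothing delicate happening at $\lb_{\qop,2}$.

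Second, your step (4) asserts that $\lb_{\qop,2}$ is mapped onto the corner $\zface_\qop\cap\mface_\qop$ and then invokes a nonexistent clause in the definition of b-fibration that tolerates this. Both halves are wrong. B-normality \emph{forbids} a boundary hypersurface of the domain from mapping into a boundary face of codimension $\geq 2$ of the target; had $\lb_{\qop,2}$ really landed in the corner, the map would not be a b-fibration. But it does not: on $\lb_{\qop,2}^\circ$ one has $(\bhm,x)$ small and $x'$ bounded away from $0$, so $\pi_L$ collapses $x'$ and carries $\lb_{\qop,2}$ diffeomorphically onto (a copy of $X$ times) the full front face $\zface_\qop$, with the angular variable $(\bhm,x)/\sqrt{\bhm^2+|x|^2}$ sweeping out all of $\zface_\qop$. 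Only the corner $\lb_{\qop,2}\cap\mface_{\qop,2}$ hits the corner $\zface_\qop\cap\mface_\qop$, which is allowed. Similarly, your remark that $\rb_{\qop,2}$ maps to ``interior away from $\bhm=0$'' is incorrect — $\rb_{\qop,2}$ lies over $\bhm=0$ and maps onto $\mface_\qop$. These are exactly the facts encoded in the corrected exponent matrix above; if you run the factorization argument as the paper does, you never need to guess them, since the b-normality of each intermediate map is automatic from the b-transversality of the respective blow-up center.
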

\begin{proof}
  We only consider the left projection. It lifts to a projection $[[0,1]\times X\times X;\{0\}\times\{0\}\times X]=X_\qop\times X\to X_\qop$ which is b-transversal to $\{0\}\times\{0\}\times\{0\}$, and hence lifts to a b-fibration
  \begin{equation}
  \label{EqqbFib}
    \bigl[ [0,1]\times X\times X; \{0\}\times\{0\}\times X; \{0\}\times\{0\}\times\{0\}\bigr] \to X_\qop.
  \end{equation}
  On the left, we can reverse the order of the two blow-ups since the second center is contained in the first. Since the map~\eqref{EqqbFib} is b-transversal to the lift of $\{0\}\times X\times\{0\}$, this lift can be blown up, and the map~\eqref{EqqbFib} lifts to the desired b-fibration.
\end{proof}

It is easy to check in local coordinates on $X^2_\qop$ that the lift of $\Vq(X)$ to $X_\qop^2$ along $\pi_L$ is transversal to $\diag_\qop$ (see also~\eqref{EqqQuant} below). (This can also be deduced from the analogous statement for b-double spaces by using Lemma~\ref{LemmaqBdy}, together with the analogous statement in $\bhm>0$.) The resulting isomorphism $\Tq X\cong N\diag_\qop$ induces a bundle isomorphism $N^*\diag_\qop\cong\Tq^*X$.

\begin{definition}[q-pseudodifferential operators]
\label{DefqPsdo}
  Let $s,l,\gamma\in\R$. Then $\Psiq^{s,l,\gamma}(X)$ is the space of all smooth families of bounded linear operators on $\CIc(X)$, parameterized by $\bhm\in(0,1]$, with Schwartz kernels $\kappa\in\rho_{\zface_{\qop,2}}^{-l}\rho_{\mface_{\qop,2}}^{-\gamma}I^{m-\frac14}(X^2_\qop,\diag_\qop;\pi_R^*\Omegaq X)$ which vanish to infinite order at $\lb_{\qop,2}$ and $\rb_{\qop,2}$, and which are conormal at $\zface_{\qop,2}$ and $\mface_{\qop,2}$. When $X$ is non-compact, we furthermore demand that $\kappa$ is properly supported, i.e.\ the projection maps $\pi_L,\pi_R\colon\supp\kappa\to X_\qop$ are proper.
\end{definition}

A typical element of $\Psi_\qop^{s,l,\gamma}(X)$ is given in coordinates $\bhm>0$ and $x,x'\in\R^n$ (the lift of coordinates on $X$ centered around $0$ to the left and right factor of $X^2$) as a quantization\footnote{In these local coordinates, we can take $\rho_{\zface_{\qop,2}}=\sqrt{\bhm^2+|x|^2+|x'|^2}$.}
\begin{equation}
\label{EqqQuant}
  (\Op_{\qop,\bhm}(a)u)(x) = (2\pi)^{-n}\iint \exp\Bigl(i\frac{x-x'}{\rho_{\zface_{\qop,2}}}\xi\Bigr)\chi\Bigl(\frac{|x-x'|}{\rho_{\zface_{\qop,2}}}\Bigr) a(\bhm,x,\xi) u(x')\,\frac{\dd x}{\rho_{\zface_{\qop,2}}^n}\,\dd\xi,
\end{equation}
where $\chi\in\CIc((-\half,\half))$ is identically $1$ near $0$, and $a$ is the local coordinate expression of an element of the symbol space $S^{s,l,\gamma}(\Tq^*X)$ consisting of conormal functions on $\ol{\Tq^*}X$ with weights $-s$, $-l$, and $-\gamma$ at fiber infinity, over $\zface_{\qop,2}$, and over $\mface_{\qop,2}$, respectively.

\begin{lemma}[Boundary hypersurfaces of $X^2_\qop$]
\label{LemmaqBdy}
  In the notation of~\S\usref{SsPbsc}, We have natural diffeomorphisms
  \begin{equation}
  \label{EqqBdy}
    \zface_{\qop,2} \cong \hat X^2_\bop,\qquad
    \mface_{\qop,2} \cong \dot X^2_\bop.
  \end{equation}
\end{lemma}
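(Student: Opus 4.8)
\emph{Proof idea.} The two isomorphisms are the double-space counterparts of the single-space identifications $\zface_\qop\cong\hat X$ and $\mface_\qop\cong\dot X$ recorded after~\eqref{EqqXs}. The plan is to obtain them from the naturality of iterated blow-ups applied to the blow-down $\beta\colon X^2_\qop\to[0,1]\times X^2$, together with the blow-up commutation relations used already in the proof of Lemma~\usref{LemmaqbFib}. Write $C_1=\{0\}\times\{0\}\times\{0\}$, $C_2=\{0\}\times\{0\}\times X$, $C_3=\{0\}\times X\times\{0\}$ for the three centers of Definition~\usref{DefqDouble}; note that $C_1=C_2\cap C_3$, that all three lie in $\{\bhm=0\}\times X^2$, and that after $C_1$ is blown up the strict transforms of $C_2$ and $C_3$ are disjoint. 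Since $\beta$ is a diffeomorphism over $\{\bhm>0\}$, all the work takes place at $\bhm=0$.

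\emph{The face $\mface_{\qop,2}$.} Each $C_i$ is contained in the p-submanifold $M_0:=\{0\}\times X^2\cong X^2$, so by the restriction-of-blow-up principle --- the lift of a p-submanifold through the blow-up of a center it contains is the blow-up of that submanifold along the center --- the lift of $M_0$ through the three successive blow-ups is
\[
  \mface_{\qop,2}\;\cong\;\bigl[X^2;\ \{0\}\times\{0\};\ \{0\}\times X;\ X\times\{0\}\bigr],
\]
where after the point $\{0\}\times\{0\}$ is blown up, the last two centers denote their (disjoint) strict transforms. I would then identify the right-hand side with $\dot X^2_\bop$ using $\dot X^2=[X;\{0\}]^2=[X^2;\{0\}\times X,X\times\{0\}]$ and the fact that the corner $(\pa\dot X)^2\subset\dot X^2$ is the lift of $\{0\}\times\{0\}$; pulling this nested point blow-up to the front --- legitimate because $\{0\}\times\{0\}$ lies in both $\{0\}\times X$ and $X\times\{0\}$ --- gives $[\dot X^2;(\pa\dot X)^2]=\dot X^2_\bop$. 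This reordering is the one point in this step where an explicit local model near the corner $r=r'=0$ should be written down.

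\emph{The face $\zface_{\qop,2}$.} This is the lift of the front face $F$ of the first blow-up $[[0,1]\times X^2;C_1]$. Since $C_1$ is a boundary point of $[0,1]\times X^2$ and $\R^{2n}$ radially compactifies to the closed $2n$-ball, $F$ is canonically the inward spherical normal bundle $\ol{\R^{2n}_{(\hat x,\hat x')}}$ of $C_1$, with $\pa F=\Sph^{2n-1}$ the locus $\{\bhm=0\}$; this is the two-variable form of $\zface_\qop\cong\hat X$. A short projective-coordinate computation shows that the strict transforms of $C_2$ and $C_3$ meet $F$ in the two disjoint coordinate spheres at infinity
\[
  L_2=\{0\}\times\Sph^{n-1},\qquad L_3=\Sph^{n-1}\times\{0\}\ \subset\ \Sph^{2n-1}=\pa F,
\]
the poles of the join decomposition $\Sph^{2n-1}=\Sph^{n-1}*\Sph^{n-1}$; hence $\zface_{\qop,2}\cong[\ol{\R^{2n}};L_2,L_3]$. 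It remains to identify this with $\hat X^2_\bop=[\ol{\R^n}_{\hat x}\times\ol{\R^n}_{\hat x'};(\pa\ol{\R^n})^2]$, which is the standard description of the b-double space of a radially compactified Euclidean space (\S\usref{SsPbsc}): the composite of the blow-down $\hat X^2_\bop\to\ol{\R^n}\times\ol{\R^n}$ with the identity on the interior extends to a smooth map onto $\ol{\R^{2n}}$ --- the corner blow-up resolves the ratio $|\hat x'|/|\hat x|$ and is carried onto the lift $[\Sph^{2n-1};L_2,L_3]$ of $\pa\ol{\R^{2n}}$, while the two boundary faces of $\hat X^2_\bop$ are carried onto $L_3$ and $L_2$ --- and this map is exactly the blow-down of $[\ol{\R^{2n}};L_2,L_3]$.

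\emph{Naturality and the main obstacle.} Finally I would verify that the two diffeomorphisms are the expected ones, namely compatible with the left and right b-fibrations of Lemma~\usref{LemmaqbFib}: $\pi_L$ restricts on $\mface_{\qop,2}$, resp.\ $\zface_{\qop,2}$, to the left b-fibration $\dot X^2_\bop\to\dot X$, resp.\ $\hat X^2_\bop\to\hat X$, and similarly for $\pi_R$. (The identification swaps the ``left'' and ``right'' boundary faces of the b-double spaces relative to the labelling $\lb_{\qop,2},\rb_{\qop,2}$, since the rescaling $\hat x=x/\bhm$ carries ``$x$ bounded away from $0$'' to ``$\hat x\in\pa\hat X$''.) This is the form in which the lemma is used when restricting Schwartz kernels of $\Psiq$-operators to the boundary faces. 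I expect the only genuinely delicate part to be the bookkeeping of the various projective coordinate charts near the corners of $X^2_\qop$ --- above all near $\zface_{\qop,2}\cap\mface_{\qop,2}$, where one must exhibit adapted coordinates in each of the regimes $\bhm\gtrsim r,r'$, $r\gtrsim\bhm,r'$, $r'\gtrsim\bhm,r$ and check that the iterated blow-up factors as claimed through $F$ and through $\dot X^2$ --- but this is routine and parallels the corresponding computations for the surgery double space in \cite[\S\S3--4]{MazzeoMelroseSurgery}.
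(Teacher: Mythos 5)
Your proof is correct and takes essentially the same approach as the paper's: for $\mface_{\qop,2}$ one restricts the iterated blow-up to the lift of $\{\bhm=0\}$ (obtaining $[X^2;\{0\}\times\{0\};\{0\}\times X;X\times\{0\}]$) and then commutes the nested point blow-up past the two partial-diagonal blow-ups to reach $[\dot X^2;(\pa\dot X)^2]=\dot X^2_\bop$; for $\zface_{\qop,2}$ one restricts to the front face $\ol{T_{(0,0)}}X^2\cong\ol{\R^{2n}}$ of the first blow-up, identifies the intersections with the lifts of the remaining centers as the two spheres $L_2,L_3\subset\Sph^{2n-1}=\pa\ol{\R^{2n}}$, and then identifies $[\ol{\R^{2n}};L_2,L_3]$ with $\hat X^2_\bop$ by direct computation. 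The paper states the $\zface$-step and the orientation/naturality bookkeeping more tersely, but the argument and all the essential ingredients coincide with yours.
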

\begin{proof}
  The front face of $[[0,1]\times X^2;\{0\}\times\{0\}\times\{0\}]$ is the radial compactification $\ol{T_{(0,0)}}(X^2)$. The lift of $\{0\}\times\{0\}\times X$, resp.\ $\{0\}\times X\times\{0\}$ intersects this at $\{0\}\times\pa(\ol{T_0}X)$, resp.\ $\pa(\ol{T_0}X)\times\{0\}$. The first isomorphism in~\eqref{EqqBdy} is thus the same as the fact---which can be checked by direct computation---that the resolution of $\ol{\R^{2 n}}$ at $\{0\}\times\pa\ol{\R^n}$ and $\pa\ol{\R^n}\times\{0\}$ is naturally diffeomorphic to $(\ol{\R^n})^2_\bop$.

  For the second isomorphism in~\eqref{EqqBdy}, note that the lift of $\{0\}\times X^2$ to $[[0,1]\times X^2;\{0\}\times\{0\}\times\{0\}]$ is $[X^2;\{0\}\times\{0\}]$. In this manifold, we then further blow up the lift of $X\times\{0\}$---resulting in $[X\times\dot X;\{0\}\times\pa\dot X]$---and then we blow up the lift of $\{0\}\times\dot X$, which can in fact be done prior to blowing up $\{0\}\times\pa\dot X$ and thus results in $[X\times\dot X;\{0\}\times\dot X;\{0\}\times\pa\dot X]=[\dot X^2;(\pa\dot X)^2]=\dot X^2_\bop$, as claimed.
\end{proof}

The principal symbol map $\sigmaq^{s,l,\gamma}$ fits into the short exact sequence
\[
  0 \to \Psiq^{s-1,(l,\gamma)}(X) \hra \Psi^{s,l,\gamma}(X) \xra{\sigmaq^{s,l,\gamma}} S^{s,l,\gamma}(\Tq^*X)/S^{s-1,l,\gamma}(\Tq^*X) \to 0.
\]
Restricting to operators whose Schwartz kernels are classical (denoted by an added subscript `$\cl$') at $\zface_{\qop,2}$ and $\mface_{\qop,2}$ (thus smooth when the corresponding order vanishes), we obtain from Lemma~\ref{LemmaqBdy} surjective normal operator maps
\begin{equation}
\label{EqqPsdoNorm}
  N_{\zface_\qop} \colon \Psi_{\qop,\cl}^{s,0,\gamma}(X) \to \Psib^{s,\gamma}(\hat X),\qquad
  N_{\mface_\qop} \colon \Psi_{\qop,\cl}^{s,l,0}(X) \to \Psib^{s,l}(\dot X).
\end{equation}
As in the case of q-differential operators, the principal symbols of $N_H(A)$ are related to that of $A$ by restriction using~\eqref{EqqBundleIso}. Also, the normal operators can be defined via testing, and therefore are multiplicative once we know that $\Psiq(X)$ is closed under composition; we turn to this now.

Pushforward along $\pi_L$ maps the Schwartz kernel of elements of $\Psi_{\qop,\cl}^{s,l,\gamma}(X)$, resp.\ $\Psi_\qop^{s,l,\gamma}(X)$ into $\rho_{\zface_\qop}^{-l}\rho_{\mface_\qop}^{-\gamma}\CI(X_\qop)$, resp.\ $\cA^{(-l,-\gamma)}(X_\qop)$. Therefore, compositions of q-ps.d.o.s are well-defined as maps on conormal functions on $X_\qop$. One can prove that the composition is again a q-ps.d.o.\ using the explicit quantization map in local coordinates above and direct estimates for the residual remainders (in $\Psiq^{-\infty,l,\gamma}(X)$). A geometric proof proceeds via the construction of an appropriate triple space:

\begin{definition}[q-triple space]
\label{DefqTriple}
  Define the following submanifolds of $[0,1]_\bhm\times X^3$:
  \begin{alignat*}{3}
    &&C&=\{(0,0,0,0)\}, \\
    L_F&=\{0\}\times \{0\}\times\{0\}\times X, &\quad
    L_S&=\{0\}\times X\times\{0\}\times\{0\},&\quad
    L_C&=\{0\}\times \{0\}\times X\times\{0\}, \\
    P_F&=\{0\}\times X\times X\times\{0\},&\quad
    P_S&=\{0\}\times \{0\}\times X\times X,&\quad
    P_C&=\{0\}\times X\times\{0\}\times X.
  \end{alignat*}
  The \emph{q-triple space} of $X$ is then defined as
  \[
    X^3_\qop := \bigl[ [0,1]\times X^3; C; L_F,L_S,L_C; P_F,P_S,P_C \bigr].
  \]
  We denote by $\zface_{\qop,3}$ and $\mface_{\qop,3}$ the lifts of $C$ and $\{0\}\times X^3$, respectively. For $*=F,S,C$, we denote by $\bface_{\qop,*}$ and $\mface_{\qop,*}$ the lifts of $L_*$ and $P_*$, respectively; and $\diag_{\qop,*}$ denotes the lift of $[0,1]\times(\pi^X_*)^{-1}(\diag_\qop)$ where $\pi^X_*\colon X^3\to X^2$ are the projections $\pi^X_F\colon(x,x',x'')\mapsto(x,x')$, $\pi^X_S\colon(x,x',x'')\mapsto(x',x'')$, $\pi^X_C\colon(x,x',x'')\mapsto(x,x'')$. Finally, $\diag_{\qop,3}$ is the lift of $[0,1]\times\diag_3$ where $\diag_3=\{(x,x,x)\colon x\in X\}$ is the triple diagonal.
\end{definition}

\begin{lemma}[b-fibrations from the q-triple space]
  The projection map $[0,1]_\bhm\times X^3\ni(\bhm,x,x',x'')\mapsto(\bhm,x,x')\in[0,1]\times X^2$ to the first and second factor of $X^3$ lifts to a b-fibration $\pi_F\colon X^3_\qop\to X^2_\qop$, similarly for the lifts $\pi_S$, $\pi_C\colon X^3_\qop\to X^2_\qop$ of the projections to the second and third, resp.\ first and third factor of $X^3$.
\end{lemma}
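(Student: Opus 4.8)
The plan is to follow the template of the proof of Lemma~\ref{LemmaqbFib} (which is in turn the analogue of Melrose's construction of the b-triple space): realize the source $X^3_\qop$ and the target $X^2_\qop$ by iterated blow-ups arranged in a common order, and lift the relevant projection through these blow-ups one center (or group of centers) at a time. At each stage one invokes the standard toolkit: (a) a product bundle projection is a b-fibration; (b) a b-submersion is b-transversal to every p-submanifold, so a b-fibration lifts to a b-fibration under the \emph{simultaneous} blow-up of a p-submanifold of the target and its preimage in the source; (c) a b-fibration lifts to a b-fibration under the blow-up of a source p-submanifold which it maps, via a b-fibration, onto a p-submanifold (or boundary face) already present in the target --- this is the analogue of the blow-up of $\{0\}^3$ in the proof of Lemma~\ref{LemmaqbFib}; and (d) blow-ups of nested or transversal p-submanifolds commute.

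First I would reduce to one case by symmetry. The symmetric group acting on $[0,1]\times X^3$ by permuting the three $X$-factors permutes the blow-up centers $C$, $\{L_F,L_S,L_C\}$, and $\{P_F,P_S,P_C\}$ within each of these three groups (for instance, transposing the first two factors fixes $C$, $L_F$, $P_F$ and swaps $L_S\leftrightarrow L_C$ and $P_S\leftrightarrow P_C$), and hence lifts to a diffeomorphism of $X^3_\qop$. Since $\pi_S$ and $\pi_C$ each equal $\pi_F$ precomposed with such a diffeomorphism and postcomposed with a diffeomorphism of $X^2_\qop$ (the identity or the factor swap), it suffices to prove that $\pi_F$ is a b-fibration.

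Next I would record the combinatorics. The projection $p_0\colon(\bhm,x,x',x'')\mapsto(\bhm,x,x')$ from $[0,1]\times X^3$ to $[0,1]\times X^2$ is a b-fibration (a trivial fibre bundle with fibre $X$). Writing the centers defining $X^2_\qop$ as $Z_0=\{0\}\times\{0\}\times\{0\}$, $Z_\lb=\{0\}\times\{0\}\times X$, $Z_\rb=\{0\}\times X\times\{0\}$ --- the lifts of $Z_\lb$ and $Z_\rb$ being disjoint once $Z_0$ is blown up --- one computes $p_0^{-1}(Z_0)=L_F$, $p_0^{-1}(Z_\lb)=P_S$, $p_0^{-1}(Z_\rb)=P_C$, while $C\subset L_F$, $L_S\subset P_F\cap P_C$, $L_C\subset P_F\cap P_S$, $L_F\subset P_S\cap P_C$, and $p_0$ maps $C$, $L_S$, $L_C$, $P_F$ onto $Z_0$, $Z_\rb$, $Z_\lb$, $\{0\}\times X^2$, respectively, via b-fibrations (in fact diffeomorphisms, except that $C$ is a single boundary point). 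Since blow-ups within $\{L_F,L_S,L_C\}$, resp.\ $\{P_F,P_S,P_C\}$, commute once $C$, resp.\ $C$ and the three $L_\bullet$, have been blown up, $X^3_\qop$ may be produced in the order $C$; $L_F$; $L_S,L_C$; $P_F$; $P_S,P_C$. I would then lift $p_0$ along this sequence, interleaving it with the target order $Z_0$; $Z_\lb$; $Z_\rb$, so that the blow-ups group into steps of type (b) --- $L_F$ together with $Z_0$, $P_S$ together with $Z_\lb$, $P_C$ together with $Z_\rb$, using that these source centers are the $p$-preimages of the corresponding target centers --- and steps of type (c) --- the ``extra'' centers $C$, $L_S$, $L_C$, $P_F$, which map via b-fibrations onto p-submanifolds or boundary faces already present downstairs --- with a few applications of (d) to match the stated order of $X^3_\qop$ (e.g.\ commuting the blow-up of $C$ with that of $L_F$, since $C\subset L_F$). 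Composing the resulting b-fibrations yields $\pi_F$, and $\pi_S,\pi_C$ follow by the symmetry above.

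The main obstacle is getting the interleaved order exactly right together with the accompanying commutations, and verifying the hypotheses of (b)--(d) at each of the seven stages: concretely, one must check in local coordinates near the triple corner --- where $C$ and all the $L_\bullet$ and $P_\bullet$ meet --- that after each blow-up the relevant b-normal (or b-differential) maps remain surjective, so that the b-fibration property survives throughout, and that $\pi_F$ carries each boundary hypersurface of $X^3_\qop$ into a single boundary hypersurface of $X^2_\qop$. This is routine but somewhat lengthy bookkeeping, entirely parallel to the corresponding statement for the b-triple space and to the proof of Lemma~\ref{LemmaqbFib}.
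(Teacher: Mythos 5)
Your proposal follows essentially the same approach as the paper: reduce to $\pi_F$ by symmetry, identify the preimages $L_F=p_0^{-1}(Z_0)$, $P_S=p_0^{-1}(Z_\lb)$, $P_C=p_0^{-1}(Z_\rb)$, lift through those blow-ups by the simultaneous-blow-up result, insert the remaining source centers $C,L_S,L_C,P_F$ by b-transversality, and use commutation of nested/transversal blow-ups (Melrose's Proposition~5.11.2) to match the stated order of $X^3_\qop$. The only difference is that you describe the sequence of lifts in the abstract rather than writing out the specific intermediate spaces and the precise commutations ($C\subset L_F$, $L_S\subset P_C$ transversal to $P_S$, $L_C\subset P_S$ transversal to $P_C$) that the paper spells out.
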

\begin{proof}
  We only prove the result for $\pi_F$. Since the lifted projection $[[0,1]\times X^3;L_F]\to[[0,1]\times X^2;\{0\}\times\{0\}\times\{0\}]$ is b-transversal to the lift of $C\supset L_F$, it lifts to a b-fibration
  \[
    \bigl[[0,1]\times X^3;C;L_F] \to \bigl[ [0,1]\times X^2; \{0\}\times\{0\}\times\{0\} \bigr].
  \]
  The preimage of the lift of $\{0\}\times\{0\}\times X$, resp.\ $\{0\}\times X\times\{0\}$, is the lift of $P_S$, resp.\ $P_C$, and thus the lifted projection
  \[
    \bigl[ [0,1]\times X^3;C;L_F;P_S,P_C \bigr] \to X^2_\qop
  \]
  is a b-fibration still. It is b-transversal to the lift of $L_S$, and thus lifts to a b-fibration if we blow up $L_S$ in the domain; since $L_S$ and $P_S$ are transversal, and since $L_S\subset P_C$, \cite[Proposition~5.11.2]{MelroseDiffOnMwc} implies that we can commute the blow-up of $L_S$ through that of $P_S,P_C$. Arguing similarly for $L_C$, we thus have a b-fibration
  \[
    \bigl[ [0,1]\times X^3; C; L_F,L_S,L_C; P_S,P_C \bigr] \to X_\qop^2.
  \]
  This is b-transversal to the lift of $P_F$; blowing up $P_F$ in the domain thus gives the desired b-fibration $\pi_F$.
\end{proof}

For later use, we record
\begin{equation}
\label{EqqPreimages}
\begin{alignedat}{2}
  \pi_F^{-1}(\zface_{\qop,2}) &= \zface_{\qop,3}\cup\bface_{\qop,F}, &\qquad
  \pi_F^{-1}(\mface_{\qop,2}) &= \mface_{\qop,3}\cup\mface_{\qop,F}, \\
  \pi_F^{-1}(\lb_{\qop,2}) &= \bface_{\qop,C}\cup\mface_{\qop,S}, &\qquad
  \pi_F^{-1}(\rb_{\qop,2}) &= \bface_{\qop,S}\cup\mface_{\qop,C}, \\
  \pi_F^{-1}(\diag_\qop) &= \diag_{\qop,F}.
\end{alignedat}
\end{equation}
similarly for the preimages under $\pi_S$ and $\pi_C$.

\begin{prop}[Composition of q-ps.d.o.s]
\label{PropqComp}
  Let $A_j\in\Psiq^{s_j,l_j,\gamma_j}(X)$, $j=1,2$. Then $A_1\circ A_2\in\Psiq^{s_1+s_2,l_1+l_2,\gamma_1+\gamma_2}(X)$.
\end{prop}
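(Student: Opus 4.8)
\emph{Proof strategy.} The plan is to realize the composition as a pushforward on the $q$-triple space $X^3_\qop$, in the classical manner of Melrose's calculus and its surgery-calculus incarnation \cite{MazzeoMelroseSurgery}. Write $\kappa_j$ for the (right $\Omegaq X$-density valued) Schwartz kernel of $A_j$ on $X^2_\qop$. Using the three projections of Definition~\ref{DefqTriple} with $x'$ playing the role of the inner integration variable --- so that $\kappa_1$ is a function of $(x,x')$, $\kappa_2$ of $(x',x'')$, and the result of $(x,x'')$ --- one has, up to the standard identification of the relevant density bundles,
\[
  \kappa_{A_1\circ A_2} \;=\; (\pi_C)_*\bigl(\pi_F^*\kappa_1\cdot\pi_S^*\kappa_2\bigr),
\]
an identity that holds a priori as an equality of conormal functions (cf.\ the discussion preceding Definition~\ref{DefqTriple}). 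The proof then has three steps: pull back, multiply, push forward.

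\emph{Pullback and multiplication.} Since $\pi_F$ and $\pi_S$ are $b$-fibrations (by the lemma on $b$-fibrations from the $q$-triple space), and since $\pi_F^{-1}(\diag_\qop)=\diag_{\qop,F}$ and $\pi_S^{-1}(\diag_\qop)=\diag_{\qop,S}$ are interior $p$-submanifolds of the expected codimension (this is the triple-space version of the transversality noted after Lemma~\ref{LemmaqbFib}; on $\bhm>0$ it is classical), the pullbacks $\pi_F^*\kappa_1$ and $\pi_S^*\kappa_2$ are conormal to $\diag_{\qop,F}$, resp.\ $\diag_{\qop,S}$, of differential orders $s_1$, resp.\ $s_2$, with boundary orders distributed over the preimages recorded in~\eqref{EqqPreimages}. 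Next, $\diag_{\qop,F}\pitchfork\diag_{\qop,S}$ with $\diag_{\qop,F}\cap\diag_{\qop,S}=\diag_{\qop,3}$, so the product $\pi_F^*\kappa_1\cdot\pi_S^*\kappa_2$ is conormal with respect to the normal-crossings pair $(\diag_{\qop,F},\diag_{\qop,S})$, of conormal order $s_1+s_2$ along $\diag_{\qop,3}$ and smooth away from $\diag_{\qop,F}\cup\diag_{\qop,S}$. A direct bookkeeping with~\eqref{EqqPreimages} shows that the product has order $l_1+l_2$ at $\zface_{\qop,3}$, order $\gamma_1+\gamma_2$ at $\mface_{\qop,3}$, and \emph{infinite} order at every other boundary hypersurface of $X^3_\qop$ --- because at each of those at least one of the two factors is pulled back from $\lb_{\qop,2}$ or $\rb_{\qop,2}$, where the corresponding $\kappa_j$ vanishes to infinite order.

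\emph{Pushforward.} Finally, apply the pushforward theorem for $b$-fibrations, in its version for conormal distributions \cite{MelroseDiffOnMwc}, to $\pi_C$. The infinite-order vanishing of the integrand at all boundary hypersurfaces other than $\zface_{\qop,3}$ and $\mface_{\qop,3}$ makes the fiber integration well-defined and free of logarithmic terms, and it forces the output to vanish to infinite order at $\lb_{\qop,2}$ and $\rb_{\qop,2}$; the faces $\zface_{\qop,3}\mapsto\zface_{\qop,2}$ and $\mface_{\qop,3}\mapsto\mface_{\qop,2}$ contribute the weights $l_1+l_2$ and $\gamma_1+\gamma_2$ there. As for the interior singularity: away from $\diag_{\qop,3}$ the conormal singularity along $\diag_{\qop,F}$ (resp.\ $\diag_{\qop,S}$) is transversal to the fibers of $\pi_C$ and thus integrates to a term smooth off $\diag_\qop$; along $\diag_{\qop,3}$, which $\pi_C$ maps diffeomorphically onto $\diag_\qop$, the two singularities combine and the fiber integration --- with the standard conormal normalization --- leaves a conormal singularity at $\diag_\qop$ of order $s_1+s_2$. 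This is exactly $A_1\circ A_2\in\Psiq^{s_1+s_2,l_1+l_2,\gamma_1+\gamma_2}(X)$; when $X$ is noncompact, the proper support hypotheses make all of the above well-defined and keep the composite properly supported.

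\emph{Main obstacle.} The delicate point is the last step: one must verify carefully that $\pi_C$ is indeed a $b$-fibration to which the pushforward theorem applies (the positivity condition at the fibered-out faces being automatic here from the infinite-order vanishing at $\lb_{\qop,2},\rb_{\qop,2}$), and that the transversalities $\pi_F\pitchfork\diag_\qop$, $\pi_S\pitchfork\diag_\qop$, $\diag_{\qop,F}\pitchfork\diag_{\qop,S}$, and the diffeomorphism $\pi_C\colon\diag_{\qop,3}\to\diag_\qop$ genuinely hold in local coordinate charts near each corner of $X^3_\qop$ --- on $\bhm>0$ all of this reduces to the classical composition of $\Psi^\bullet(X)$, and at $\bhm=0$ it is the coordinate computation carried out in \cite[\S4]{MazzeoMelroseSurgery} for the surgery calculus, transcribed to the $q$-setting. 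A more computational alternative, which sidesteps the triple space, is to split each $A_j$ into a piece with Schwartz kernel supported near $\diag_\qop$ --- whose composition is handled by the local quantizations~\eqref{EqqQuant} together with symbolic composition in the fiber variables $\rho_{\zface_{\qop,2}}^{-1}(x-x')$ --- and a residual piece in $\Psiq^{-\infty,l_j,\gamma_j}(X)$; the four resulting cross-compositions are then estimated directly, using the infinite-order vanishing of the residual kernels at $\lb_{\qop,2},\rb_{\qop,2}$ and pushforward along the $b$-fibrations $\pi_L,\pi_R$ of Lemma~\ref{LemmaqbFib}.
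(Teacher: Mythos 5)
Your proof is correct and follows the same triple-space pushforward route as the paper (the $b$-fibrations $\pi_F,\pi_S,\pi_C$, the preimages~\eqref{EqqPreimages}, and Melrose's pullback/pushforward theorems from \cite{MelroseDiffOnMwc,MelrosePushfwd}). The only substantive difference is that the paper first reduces to $l_j=\gamma_j=0$ by conjugation by $\rho_{\zface_\qop}^{l_j}\rho_{\mface_\qop}^{\gamma_j}$ and then makes explicit the density bookkeeping you defer to ``the standard identification of the relevant density bundles'': it inserts a reference $q$-density $\nu_1$ on $X_\qop$ and the $b$-density $\nu_2=|\dd\bhm/\bhm|$ to write $\kappa=(\nu_1\nu_2)^{-1}(\pi_C)_*(\pi_F^*\kappa_1\cdot\pi_S^*\kappa_2\cdot\pi_C^*\nu_1\cdot\pi^*\nu_2)$, exhibiting the term being pushed forward as a bounded conormal section of $\Omegab X^3_\qop$ (vanishing to infinite order at the hypersurfaces mapping to $\lb_{\qop,2},\rb_{\qop,2}$), so that the pushforward theorem applies directly.
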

\begin{proof}
  Since the space $\Psiq^s(X)$ is invariant under conjugation by powers of $\rho_{\zface_\qop}$ and $\rho_{\mface_\qop}$, it suffices to prove the result for $l_1=l_2=0$ and $\gamma_1=\gamma_2=0$. Write the Schwartz kernel $\kappa$ of $A_1\circ A_2$ in terms of the Schwartz kernels $\kappa_1,\kappa_2$ of $A_1,A_2$ as
  \[
    \kappa = (\nu_1\nu_2)^{-1}(\pi_C)_* \bigl( \pi_F^*\kappa_1 \cdot \pi_S^*\kappa_2 \cdot \pi_C^*\nu_1\cdot\pi^*\nu_2 \bigr)
  \]
  where $0<\nu_1\in\CI(X_\qop;\Omegaq X)$ is an arbitrary q-density, and $\nu_2=|\frac{\dd\bhm}{\bhm}|$ is a b-density on $[0,2)_\bhm$ with $\pi\colon X_\qop^3\to[0,1]$ denoting the lifted projection. The term in parentheses is then a bounded conormal section of $\pi_F^*\Omegaq X\otimes\pi_S^*\Omegaq X\otimes\pi_C^*\Omegaq X\otimes\pi^*\Omegab_{[0,1]}[0,2)\cong\Omegab X^3_\qop$ which vanishes to infinite order at the boundary hypersurfaces of $X^3_\qop$ which map to $\lb_{\qop,2}$ or $\rb_{\qop,2}$ under $\pi_C$. The conclusion then follows using pullback and pushforward results for conormal distributions, see \cite[\S4]{MelroseDiffOnMwc} and \cite{MelrosePushfwd}.
\end{proof}

A proof of the uniform (for $\bhm\in(0,1]$, the point being uniformity as $\bhm\searrow 0$) boundedness of elements of $\Psiq^0(X)$ on $L^2(X,\nu)$ for $0<\nu\in\CI(X_\qop,\Omegaq X)$ can be reduced, using H\"ormander's square root trick (see the proof of~\cite[Theorem~2.1.1]{HormanderFIO1}), to the uniform $L^2$-boundedness of elements of $\Psiq^{-\infty}(X)$. Such elements have Schwartz kernels $\kappa\in\CI(X_\qop^2,\pi_R^*\Omegaq X)$ which vanish to infinite order at $\lb_{\qop,2}$ and $\rb_{\qop,2}$. Pushforward of $\kappa$ along $\pi_L$ thus gives an element of $\CI(X_\qop)$. The Schur test implies the desired $L^2$-boundedness; since $\Psiq(X)$ is invariant under conjugation by weights, we deduce boundedness on $L^2(X,\nu)$ for any weighted q-density $\nu$. One can then define weighted Sobolev spaces $H_\qop^{s,l,\gamma}(X)$ also for real orders $s\in\R$ in the usual manner (cf.\ \S\ref{SsPH}), and any $A\in\Psiq^{s,l,\gamma}(X)$ defines a (uniformly as $\bhm\searrow 0$) bounded map $H_\qop^{\tilde s,\tilde l,\tilde\gamma}(X)\to H_\qop^{\tilde s-s,\tilde l-l,\tilde\gamma-\gamma}(X)$ for all $\tilde s,\tilde l,\tilde\gamma\in\R$.

The normal operator maps~\eqref{EqqVFNorm}--\eqref{EqqDiffNorm} for q-differential operators imply relationships between integer order q-Sobolev spaces on $X$ and families of b-Sobolev spaces on collar neighborhoods of $\hat X$ and $\dot X$. We immediately state the version for general orders, which rests on~\eqref{EqqPsdoNorm}; for brevity, we restrict to the class of densities which we will use in~\S\ref{SK}.

\begin{prop}[Relationships between Sobolev spaces]
\label{PropqHRel}
  Fix a density $\nu=\rho_{\zface_q}^{n/2}\nu_0$ where $0<\nu_0\in\CI(X,\Omegaq X)$.\footnote{This includes as a special case $\bhm$-independent smooth positive densities on $X$.}
  \begin{enumerate}
  \item\label{ItqHRelzf} Consider the (change of coordinates) map $\phi_{\zface_q}\colon(0,1]_\bhm\times\hat X^\circ\ni(\bhm,\hat x)\mapsto(\bhm,\bhm\hat x)\in X_\qop$, and let $\chi\in\CI(X_\qop)$ be identically $1$ near $\zface_\qop$ and supported in a collar neighborhood of $\zface_\qop\subset X_\qop$. Then we have a uniform equivalence of norms
    \begin{equation}
    \label{EqqHRelzf}
      \|\chi u\|_{H_{\qop,\bhm}^{s,l,\gamma}(X)} \sim \bhm^{\frac{n}{2}-l}\| \phi_{\zface_q}^*(\chi u)|_\bhm \|_{\Hb^{s,\gamma-l}(\hat X,|\dd\hat x|)},
    \end{equation}
    in the sense that there exists a constant $C>1$ independent of $\bhm\in(0,1]$ so that the left hand side is bounded by $C$ times the right hand side, and vice versa.
  \item\label{ItqHRelmf} Consider the inclusion map $\phi_{\mface_\qop}\colon(0,1]_\bhm\times\dot X^\circ\hra X_\qop$, and let $\chi\in\CI(X_\qop)$ be identically $1$ near $\mface_\qop$ and supported in a collar neighborhood of $\dot X\subset X_\qop$. Then we have a uniform equivalence of norms
    \begin{equation}
    \label{EqqHRelmf}
      \|\chi u\|_{H_{\qop,\bhm}^{s,l,\gamma}(X)} \sim \bhm^{-\gamma} \|\phi_{\mface_\qop}^*(\chi u)|_\bhm\|_{\Hb^{s,l-\gamma}(\dot X,\nu_\cop)},
    \end{equation}
    where $\nu_\cop$ is the lift of a fixed smooth positive density on $X$ to $\dot X$. (Thus, one can take $\nu_\cop=|\dd x|=r^{n-1}|\dd r\,\dd g_{\Sph^{n-1}}|$ near $r=0$.)
  \end{enumerate}
\end{prop}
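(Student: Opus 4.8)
\emph{Reduction and the coordinate dictionary.} Both parts are local near a single boundary hypersurface of $X_\qop$, so the plan is to reduce, via a partition of unity subordinate to $\supp\chi$, to the case that $u$ is supported in $\supp\chi$, and then to argue parts~\eqref{ItqHRelzf} and~\eqref{ItqHRelmf} in parallel; I describe part~\eqref{ItqHRelzf}, part~\eqref{ItqHRelmf} being the same argument and in fact simpler since $\phi_{\mface_\qop}$ involves no rescaling. The first step is to make precise the identification, furnished by $\phi_{\zface_q}$, of the q-structure near $\zface_\qop$ with the b-structure on $\hat X=\ol{\R^n_{\hat x}}$: for each fixed $\bhm\in(0,1]$, $\hat x\mapsto\bhm\hat x$ is a diffeomorphism between a truncation of $\hat X$ and a truncation of $X$, and globally $\phi_{\zface_q}$ maps a collar of $\zface_\qop$ in $X_\qop$ diffeomorphically onto a neighborhood of $\pa\hat X$ in $[0,1]_\bhm\times\hat X$, restricting on $\zface_\qop$ to the diffeomorphism $\zface_\qop\cong\hat X$ of~\eqref{EqqXs}. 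I would then check, in the charts of Definition~\usref{DefqSingle} and its blow-down, the three facts that drive everything: (i) the global q-frame $\sqrt{\bhm^2+|x|^2}\,\pa_{x^j}$ near $\zface_\qop$ pulls back to $\la\hat x\ra\pa_{\hat x^j}$, $\la\hat x\ra:=\sqrt{1+|\hat x|^2}$, so that $\phi_{\zface_q}^*$ carries $\Vq(X)$ localized near $\zface_\qop$ into the space of bounded, smooth-in-$\bhm$ (up to $\bhm=0$) families of b-vector fields on $\hat X$, whose span together with the $\pa_{\hat x^j}$ is all of $\Vb(\hat X)$ uniformly in $\bhm$; (ii) $\phi_{\zface_q}^*\rho_{\zface_\qop}\sim\bhm\,\rho_{\pa\hat X}^{-1}$ and $\phi_{\zface_q}^*\rho_{\mface_\qop}\sim\rho_{\pa\hat X}$, up to smooth positive factors; and (iii) with $\nu=\rho_{\zface_q}^{n/2}\nu_0$ as in the hypothesis, $\nu$ restricts on each slice $\{\bhm=\bhm_0\}$ to a density uniformly comparable to a fixed $\bhm$-independent density on $X$ (the content of the footnote), whence $\phi_{\zface_q}^*(\nu|_\bhm)=\bhm^n\mu_\bhm$ with $\mu_\bhm$ uniformly comparable on $\supp\phi_{\zface_q}^*\chi$ to $|\dd\hat x|$.

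\emph{Integer orders.} For $s\in\N_0$ I would substitute the frame $V_j=\sqrt{\bhm^2+|x|^2}\,\pa_{x^j}$ into Definition~\usref{DefqSob}, apply $\phi_{\zface_q}^*$, and use (i)--(iii): a typical summand $\|\rho_{\zface_\qop}^{-l}\rho_{\mface_\qop}^{-\gamma}V^\alpha u\|_{L^2(X,\nu|_\bhm)}^2$ becomes, up to $\bhm$-uniform constants, $\bhm^{\,n-2l}$ times $\|\rho_{\pa\hat X}^{-(\gamma-l)}W^\alpha(\phi_{\zface_q}^*u)|_\bhm\|_{L^2(\hat X,|\dd\hat x|)}^2$ with $W_j=\la\hat x\ra\pa_{\hat x^j}$: the scalar $\bhm^{-2l}$ comes from the $\bhm$-part of $\rho_{\zface_\qop}^{-l}$, its residual $\la\hat x\ra^{-l}\sim\rho_{\pa\hat X}^{\,l}$ combines with $\rho_{\mface_\qop}^{-\gamma}\sim\rho_{\pa\hat X}^{-\gamma}$ into the b-weight $-(\gamma-l)$, and the residual $\bhm^n$ is the density Jacobian. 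Summing over $|\alpha|\le s$, using that the $W_j$ (supplemented by the $\pa_{\hat x^j}$) span $\Vb(\hat X)$ with uniform constants, and taking square roots yields~\eqref{EqqHRelzf} for integer $s$; the exponent on $\rho_{\zface_q}$ in $\nu$ is chosen precisely so that the prefactor comes out as $\bhm^{n/2-l}$.

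\emph{Real orders, and part~\eqref{ItqHRelmf}.} For general $s\in\R$ I would pass to a pseudodifferential characterization of both sides. Fix $\Lambda_\qop\in\Psiq^{s,l,\gamma}(X)$, elliptic at fiber infinity and supported near $\zface_\qop$, chosen so that under $\phi_{\zface_q}$ it corresponds, after extracting the scalar factor $\bhm^{-l}$, to a bounded $\bhm$-smooth family in $\Psib^{s,\gamma-l}(\hat X)$ with elliptic (fiber-infinity) symbol: this is possible because $\phi_{\zface_q}$ identifies the q-calculus localized near $\zface_\qop$ with exactly such a family, which is the collar version of Lemma~\usref{LemmaqBdy} and~\eqref{EqqPsdoNorm}. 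An elliptic-parametrix argument then reduces the two-sided bound $\|\chi u\|_{H_{\qop,\bhm}^{s,l,\gamma}}\sim\|\Lambda_\qop(\chi u)\|_{L^2(X,\nu|_\bhm)}$, and its analogue on $\hat X$, to the $L^2$ case already handled, using the uniform $L^2$-boundedness of q-, resp.\ b-, pseudodifferential operators recorded after Proposition~\usref{PropqComp}, resp.\ in Appendix~\usref{SP}. Part~\eqref{ItqHRelmf} is the identical argument with $\phi_{\zface_q}$ replaced by the inclusion $\phi_{\mface_\qop}$: there is no rescaling, so $\phi_{\mface_\qop}^*(\chi u)|_\bhm=(\chi u)|_\bhm$; near $\mface_\qop$ one has $\rho_{\zface_\qop}\sim\rho_{\pa\dot X}$ and $\rho_{\mface_\qop}\sim\bhm\,\rho_{\pa\dot X}^{-1}$, producing the scalar $\bhm^{-\gamma}$ and the b-weight $-(l-\gamma)$; $\Vq(X)$ restricts near $\mface_\qop$ to a bounded $\bhm$-smooth family in $\Vb(\dot X)$ (the map $N_{\mface_\qop}$ of~\eqref{EqqVFNorm} with its $\bhm$-dependence retained); and $\nu|_\bhm$ is uniformly comparable to the fixed density $\nu_\cop$.

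\emph{Main obstacle.} Everything except one point is bookkeeping of $\bhm$-powers and of weights at $\pa\hat X$, resp.\ $\pa\dot X$. The one substantive step is the collar refinement invoked for real orders: one must know that $\phi_{\zface_q}$, resp.\ the inclusion $\phi_{\mface_\qop}$, identifies the q-pseudodifferential calculus in a \emph{full} neighborhood of $\zface_\qop$, resp.\ $\mface_\qop$ --- not just its restriction to that face, which is what~\eqref{EqqPsdoNorm} records --- with a bounded, smooth-in-$\bhm$ family of b-pseudodifferential operators on $\hat X$, resp.\ $\dot X$, so that b-calculus estimates transfer with constants independent of $\bhm\in(0,1]$.
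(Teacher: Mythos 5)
Your proposal is correct and takes essentially the same route as the paper: reduce to $l=\gamma=0$ using the boundary defining function relations, settle $s=0$ by the density computation $\phi_{\zface_q}^*|\dd x|=\bhm^n|\dd\hat x|$, and handle general $s$ by comparing $L^2$-norms of an elliptic q-ps.d.o.\ and an elliptic b-ps.d.o. The only remark worth adding is that the ``main obstacle'' you flag at the end --- identifying the q-calculus in a full collar of $\zface_\qop$ with a smooth $\bhm$-family of b-ps.d.o.s on $\hat X$ --- is resolved more cheaply than a calculus-level isomorphism: the paper simply takes the Schwartz kernel of a fixed elliptic $A_0\in\Psib^s(\hat X)$, extends it dilation-invariantly off $\zface_{\qop,2}$ via $(\phi_{\zface_q}^{-1})^*\kappa_0$, and cuts off (equation~\eqref{EqqHRelzfOp}); this produces a \emph{single} q-ps.d.o.\ $A$ elliptic near $\zface_\qop$ whose norm computes both sides, which is all the equivalence needs. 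Your integer-$s$ unpacking of the vector-field norm is a useful sanity check that the paper skips, but it is subsumed by the ps.d.o.\ argument once the elliptic operator $A$ is in hand.
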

\begin{proof}
  Via division by $\bhm^l$, we can reduce to the case $l=0$. Moreover, $\rho_{\mface_q}:=\bhm/\sqrt{|x|^2+\bhm^2}$ is a defining function of $\mface_q$, and its pullback along $\phi_{\zface_q}$ is $\la\hat r\ra^{-1}$ which is a defining function of $\pa\hat X$; therefore, we may also reduce to the case $\gamma=0$.

  For part~\eqref{ItqHRelzf}, the $L^2$-case $s=0$ now follows from the observation that $\phi_{\zface_q}^*(|\dd x|)=\bhm^n|\dd\hat x|$. For $s>0$, fix an elliptic operator $A_0\in\Psib^s(\hat X)$ (independent of $\bhm$) with Schwartz kernel $\kappa_0$, and fix also $\tilde\chi\in\CIc(X_\qop)$ to be identically $1$ near $\supp\chi$ but still with support in a collar neighborhood of $\zface_\qop$; define then $A\in\Psiq^s(\hat X)$ via its Schwartz kernel $\kappa$ as
  \begin{equation}
  \label{EqqHRelzfOp}
    \kappa=(\pi_L^*\tilde\chi)(\pi_R^*\tilde\chi)\cdot(\phi_{\zface_q}^{-1})^*\kappa_0
  \end{equation}
  where $\pi_L,\pi_R\colon X^2_\qop\to X_\qop$ denote the lifted left and right projections. (Thus, $\kappa$ is obtained from $\kappa_0$ via dilation-invariant extension off $\zface_{\qop,2}$, followed by cutting it off to a neighborhood of $\zface_{\qop,2}$.) In particular, $A$ is elliptic as a q-ps.d.o.\ near the q-cotangent bundle over $\supp\chi$. We then have a uniform equivalence of norms
  \begin{align*}
    \|\chi u\|_{H_{\qop,\bhm}^s(X)} &\sim \|(\chi u)|_\bhm\|_{L^2(X)} + \|A(\chi u)|_\bhm\|_{L^2(X)} \\
      &\sim \bhm^{\frac{n}{2}} \bigl( \|\phi_{\zface_q}^*(\chi u)|_\bhm\|_{L^2(\hat X,|\dd\hat x|)} + \| A_0(\phi_{\zface_q}^*(\chi u)|_\bhm)\|_{L^2(\hat X,|\dd\hat x|)}\bigr) \\
      &\sim \bhm^{\frac{n}{2}} \|\phi_{\zface_q}^*(\chi u)|_\bhm\|_{\Hb^s(\hat X,|\dd\hat x|)},
  \end{align*}
  as claimed. For $s<0$, the claim follows by duality.

  The proof of part~\eqref{ItqHRelmf} is completely analogous; one now takes an elliptic operator $A_0\in\Psib^s(\dot X)$ to measure $\Hb^s(\dot X)$-norms, and relates this to $H_\qop^s(X)$-norms by measuring the latter using a q-ps.d.o.\ $A$ defined analogously to~\eqref{EqqHRelzfOp}.
\end{proof}

\subsection{Q-single space}
\label{SsQS}

We shall control (solutions of) the degenerating spectral family for an infinite range of spectral parameters on the following space, which is a resolution of a parameter-dependent version of the q-single space $X_\qop$ from Definition~\ref{DefqSingle}: 

\begin{definition}[Q-single space]
\label{DefQSingle}
  The \emph{Q-single space} of $X$ is the resolution of $\ol{\R_\sigma}\times[0,1]_\bhm\times X$ defined as the iterated blow-up
  \begin{align}
  \label{EqQSingleXq}
    X_\Qop &:= \bigl[ \ol\R\times X_\qop; \pa\ol\R\times\zface_\qop; \pa\ol\R\times\mface_\qop \bigr] \\
  \label{EqQSingle}
      &=\bigl[ \ol\R\times[0,1]\times X; \ol\R\times\{0\}\times\{0\}; \pa\ol\R\times\{0\}\times\{0\}; \pa\ol\R\times\{0\}\times X \bigr].
  \end{align}
  We denote its boundary hypersurfaces as follows:
  \begin{enumerate}
  \item $\mface$ (the `main face') is the lift of $\ol\R\times\{0\}\times X$;
  \item $\zface$ (the `zero energy face') is the lift of $\ol\R\times\{0\}\times\{0\}$;
  \item $\nface$ (the `nonzero energy face') is the lift of $\pa\ol\R\times\{0\}\times\{0\}$;
  \item $\iface$ (the `intermediate semiclassical face') is the lift of $\pa\ol\R\times\{0\}\times X$;
  \item $\sface$ (the `semiclassical face') is the lift of $\pa\ol\R\times[0,1]\times X$.
  \end{enumerate}
  The hypersurfaces $\nface$, $\iface$, $\sface$ have two connected components each, denoted $\nface_\pm$, $\iface_\pm$, $\sface_\pm$, corresponding to whether $\sigma=+\infty$ or $-\infty$. For $H\subset X_\Qop$ equal to any one of these boundary hypersurfaces, we denote by $\rho_H\in\CI(X_\Qop)$ a defining function of $H$, i.e.\ $H=\rho_H^{-1}(0)$ and $\dd\rho_H\neq 0$ on $H$. For $H=\nface$, we denote by $\rho_H$ a total boundary defining function for $\nface^+\cup\nface^-$, likewise for $H=\iface,\sface$.
\end{definition}

We introduce a variety of functions defined on (subsets of) $X_\Qop$. We denote by $\sigma,\bhm$ the lifts of the first two coordinates on $\R\times[0,1]\times X$. We furthermore write $x=r\omega$ and $\hat x=\frac{x}{\bhm}$, $\hat r=\frac{r}{\bhm}$, $\hat\rho=\hat r^{-1}$ as in~\eqref{EqqCoord1}--\eqref{EqqCoord2}. We also set
\begin{equation}
\label{EqQCoord}
  h = |\sigma|^{-1},\qquad
  \tilde r := \frac{r}{h},\qquad
  \tilde\rho := \tilde r^{-1} = \frac{h}{r},\qquad
  \tilde\sigma := \bhm\sigma,\qquad
  \tilde h := |\tilde\sigma|^{-1} = \frac{h}{\bhm}.
\end{equation}
See Figure~\ref{FigQSingle}.

\begin{figure}[!ht]
\centering
\includegraphics{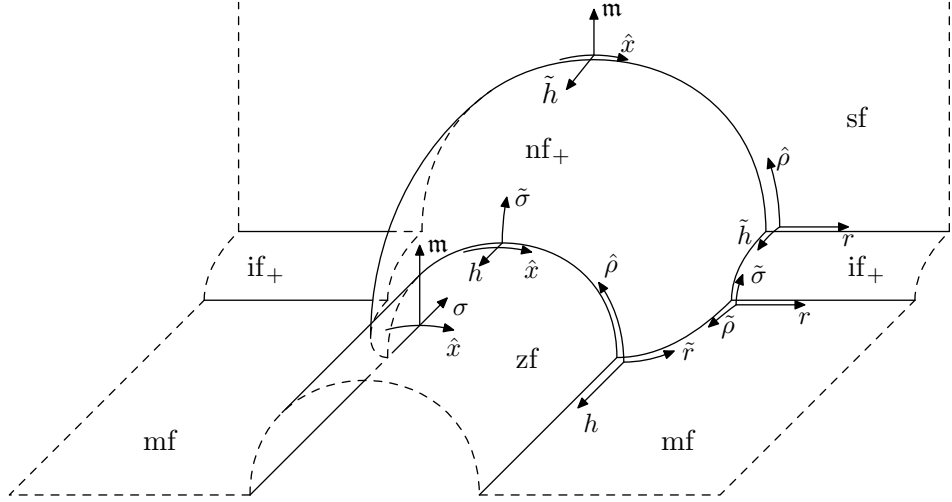}
\caption{The Q-single space $X_\Qop$ in the case $X=(-1,1)$, restricted to $\sigma>-C$, and the coordinates~\eqref{EqqCoord1}, \eqref{EqqCoord2}, and \eqref{EqQCoord}.}
\label{FigQSingle}
\end{figure}

\begin{figure}[!ht]
\centering
\includegraphics{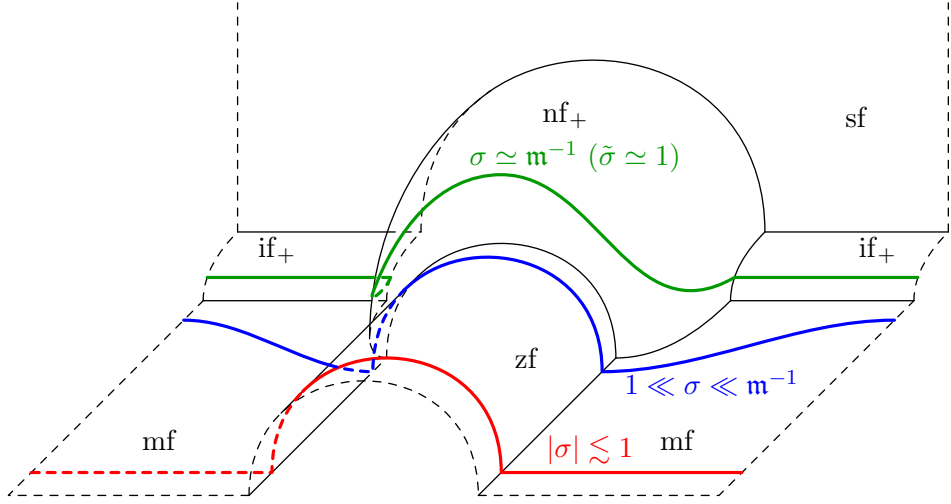}
\caption{The Q-single space $X_Q$ for $X=(-1,1)$. We show here the intersections of three level sets of the (rescaled) frequency variable $\sigma$ ($\tilde\sigma$) with $\bhm^{-1}(0)$: one level set $\sigma=\sigma_0$ where $|\sigma_0|\lesssim 1$ is bounded and thus the rescaled Kerr frequency $\tilde\sigma_0=\bhm\sigma_0=0$ vanishes; one level set $\sigma=\sigma_1$ where $\sigma_1$ is large but $\tilde\sigma_1:=\bhm\sigma_1$ still vanishes; and one level set $\tilde\sigma=\bhm\sigma=\tilde\sigma_0$ where the rescaled frequency $\tilde\sigma_0$ is of order $1$.}
\label{FigQSingleLvl}
\end{figure}

\begin{prop}[Structure of boundary hypersurfaces]
\fakephantomsection
\label{PropQStruct}
  \hspace{0in}\begin{enumerate}
  \item\label{ItQStructzf} The restriction of $(\sigma,\hat x)$ to the interior of $\zface$ induces a diffeomorphism
    \[
      \zface \cong \ol{\R_\sigma} \times \hat X
    \]
    Thus, $\zface$ is the total space of the (trivial) fibration $\hat X-\zface\to\ol{\R_\sigma}$.
  \item\label{ItQStructmf} The restriction of $(\sigma,(r,\omega))$ to the interior of $\mface$ induces a diffeomorphism
    \[
      \mface \cong \bigl[ \ol\R \times \dot X; \pa\ol\R\times\pa\dot X \bigr].
    \]
  \item\label{ItQStructnf} The restriction of $(\tilde\sigma,\hat x)$ to the interior of $\nface_\pm$ induces a diffeomorphism
    \[
      \nface_\pm \cong \bigl[ (\pm[0,\infty]) \times \hat X; \{0\}\times\hat X \bigr].
    \]
  \item\label{ItQStructif} The restriction of $(\tilde\sigma,x)$ to the interior of $\iface_\pm$ induces a diffeomorphism
    \[
      \iface_\pm \cong (\pm[0,\infty])\times\dot X.
    \]
  \end{enumerate}
\end{prop}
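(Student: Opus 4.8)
\emph{Approach.} The statement is a bookkeeping exercise about the threefold blow-up in Definition~\ref{DefQSingle}. I would prove all four parts at once by tracking, for each of the five boundary hypersurfaces $H\in\{\mface,\zface,\nface,\iface,\sface\}$ of $X_\Qop$, how $H$ is produced and then modified by the successive blow-ups, using three standard facts about real blow-ups of manifolds with corners (see \cite[Ch.~5]{MelroseDiffOnMwc}): blowing up a boundary hypersurface of a closed submanifold $Y$ leaves $Y$ unchanged; blowing up a center disjoint from $Y$ leaves $Y$ (and its lift) unchanged; and two p-submanifolds whose intersection is contained in the current blow-up center $Z$ have disjoint lifts after $Z$ is blown up. On the interior of each face the asserted coordinates — $(\sigma,\hat x)$ on $\zface^\circ$, $(\sigma,(r,\omega))$ on $\mface^\circ$, $(\tilde\sigma,\hat x)$ on $\nface_\pm^\circ$, $(\tilde\sigma,x)$ on $\iface_\pm^\circ$ — are visibly a coordinate system, so the content is to identify the closure, i.e.\ which corners of the naive model get further resolved, and this is dictated precisely by the pairwise intersections of the faces.

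\emph{The faces $\zface$ and $\mface$.} After the first blow-up, $\zface$ (resp.\ $\mface$) is the lift of the boundary hypersurface $\ol\R\times\zface_\qop$ (resp.\ $\ol\R\times\mface_\qop$) of $\ol\R\times X_\qop$; by the isomorphisms $\zface_\qop\cong\hat X$ and $\mface_\qop\cong\dot X$ of \S\ref{Ssq} these are $\ol\R\times\hat X$ and $\ol\R\times\dot X$, with the coordinates as claimed. For $\zface$: the second center $\pa\ol\R\times\zface_\qop$ is a boundary hypersurface of $\ol\R\times\zface_\qop$, so its blow-up does not affect $\zface$; and since $(\ol\R\times\zface_\qop)\cap(\pa\ol\R\times\mface_\qop)=\pa\ol\R\times(\zface_\qop\cap\mface_\qop)$ is contained in that second center, the lifts of $\ol\R\times\zface_\qop$ and of the third center $\pa\ol\R\times\mface_\qop$ are disjoint afterwards, so the third blow-up does not affect $\zface$ either — giving $\zface\cong\ol\R\times\hat X$, i.e.\ part~\ref{ItQStructzf}. (It is essential here that the second blow-up is performed before the third: blowing up $\pa\ol\R\times\mface_\qop$ directly would resolve the corner $\pa\ol\R\times\pa\hat X$ of $\ol\R\times\hat X$.) For $\mface$: the second center meets $\ol\R\times\mface_\qop$ in the codimension-two corner $\pa\ol\R\times(\zface_\qop\cap\mface_\qop)=\pa\ol\R\times\pa\dot X$, so its blow-up produces $[\ol\R\times\dot X;\pa\ol\R\times\pa\dot X]$; the third center then meets the result only in a boundary hypersurface and so does nothing further, giving part~\ref{ItQStructmf}.

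\emph{The faces $\nface$ and $\iface$.} These are the front faces of the second and third blow-ups. The second center $\pa\ol\R\times\zface_\qop$ is the clean codimension-two corner $(\pa\ol\R\times X_\qop)\cap(\ol\R\times\zface_\qop)$ of $\ol\R\times X_\qop$, so $\nface$ is the bundle over $\pa\ol\R\times\zface_\qop\cong\pa\ol\R\times\hat X$ with fiber the inward unit quarter-circle in the two normal directions, which I would parametrize by the ratio $\tilde\sigma=\bhm\sigma\in\pm[0,\infty]$ (using $h=|\sigma|^{-1}$ and $\rho_{\zface_\qop}$, equal to $\bhm$ near $\zface_\qop^\circ$, as the two normal defining functions); the subsequent third blow-up meets $\nface$ over $\pa\hat X$ and yields the resolution stated in part~\ref{ItQStructnf}. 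Likewise, after the second blow-up the lift of the third center is the clean codimension-two corner cut out by $h$ and $\rho_{\mface_\qop}$, so $\iface$ is the analogous interval bundle over $\pa\ol\R\times\dot X$, again parametrized by $\tilde\sigma$, and since no later blow-up touches it this gives part~\ref{ItQStructif} with no further resolution.

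\emph{Main obstacle.} The delicate point throughout is the behavior at the ``triple corner'' where $\sigma=\pm\infty$, $\bhm=0$ and $x=0$ meet — the locus where $\zface$, $\nface$, $\iface$, $\mface$ and $\sface$ all come together, i.e.\ (in the coordinates \eqref{EqqCoord1}, \eqref{EqqCoord2}, \eqref{EqQCoord}) where $\hat\rho=\bhm/r$, $h$ and $\tilde h=h/\bhm$ all degenerate. There one must determine exactly which of the three lifted centers are nested, which are transversal, and which become disjoint after the earlier blow-ups, since it is precisely this that decides the asymmetry between the faces (e.g.\ why $\zface$ is \emph{not} resolved at $\{\pm\infty\}\times\pa\hat X$ whereas $\mface$ \emph{is} resolved at $\pa\ol\R\times\pa\dot X$). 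I would settle this with one explicit computation in the three or four charts furnished by the iterated polar-coordinate changes near that corner, or equivalently by a careful application of the commutation lemmas of \cite[Ch.~5]{MelroseDiffOnMwc}; the rest is routine.
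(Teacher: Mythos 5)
Your overall strategy — tracking each face through the iterated blow-up of Definition~\ref{DefQSingle} using standard facts about lifts of p-submanifolds — is precisely the paper's strategy, and your treatments of $\zface$ and $\mface$ (parts~\eqref{ItQStructzf} and \eqref{ItQStructmf}) are essentially identical to the paper's: for $\zface$, the second center is a boundary hypersurface of it and the third center becomes disjoint from its lift after the second blow-up; for $\mface$, the second blow-up resolves the corner $\pa\ol\R\times\pa\dot X$ and the third blow-up then only touches a boundary hypersurface. Your argument for $\iface$ (part~\eqref{ItQStructif}) is terse but sound; the paper likewise settles it with one short coordinate computation, showing that $\iface$ is the front face of blowing up $\{\hat\rho=0,\tilde\rho=0\}$ with the fiber parametrized by $\hat\rho/\tilde\rho=\pm\tilde\sigma$, a ratio that is globally well-defined over $\dot X$.

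The gap is in part~\eqref{ItQStructnf}. You describe $\nface$ after the second blow-up as a quarter-circle bundle over $\pa\ol\R\times\hat X$ ``parametrized by the ratio $\tilde\sigma$''; but, as you yourself parenthetically note, $\rho_{\zface_\qop}$ equals $\bhm$ only over $\zface_\qop^\circ$. Near $\pa\hat X$ one has $\rho_{\zface_\qop}\sim r$, so the ratio $\rho_{\zface_\qop}/h$ is $\tilde r$, not $\tilde\sigma$. Hence after the second blow-up $\nface_\pm$ is $[0,\infty]_\mu\times\hat X$ with $\mu$ interpolating between $|\tilde\sigma|$ (interior) and $\tilde r$ (near $\pa\hat X$), and the third blow-up resolves the corner $\{\mu=\infty\}\cap(\{\pm\infty\}\times\pa\hat X)$, i.e.\ $\tilde r=\infty$, $\hat\rho=0$. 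To pass from there to the statement of part~\eqref{ItQStructnf} one still must exhibit the diffeomorphism
\[
  \bigl[\,[0,\infty]_{\tilde r}\times[0,1)_{\hat\rho};\{\infty\}\times\{0\}\,\bigr]\;\cong\;\bigl[\,[0,\infty]_{\tilde\sigma}\times[0,1)_{\hat\rho};\{0\}\times\{0\}\,\bigr],\qquad (\tilde r,\hat\rho)\mapsto(\tilde r\hat\rho,\hat\rho),
\]
which is \eqref{EqQnfBlowup} in the paper's proof. This is the crux of part~\eqref{ItQStructnf}: it converts the blow-up of one corner into the blow-up of a \emph{different} corner and in doing so exchanges the roles of the front face and the lift of $\{\tilde r=\infty\}$. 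It is not a nesting/transversality/disjointness bookkeeping matter and is not covered by the commutation lemmas of \cite[Ch.~5]{MelroseDiffOnMwc}, so your ``main obstacle'' paragraph — which frames the remaining work in precisely those terms — misses where the actual content lies. Once you isolate and verify this diffeomorphism (e.g.\ by matching the two pairs of projective charts), your argument closes.
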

\begin{proof}
  The front face of $[\ol\R\times[0,1]\times X;\ol\R\times\{0\}\times\{0\}]=\ol\R\times[[0,1]\times X;\{0\}\times\{0\}]$ is diffeomorphic to $\ol\R\times\ol{T_0}X=\ol\R\times\hat X$ (with coordinates $\sigma$, $\hat x$ in the interior). The boundary hypersurface $\zface$ is obtained from this front face by blowing up $\sigma=\pm\infty$ which does not change the smooth structure. (Note that the lift of the final submanifold $\pa\ol\R\times\{0\}\times X$ in~\eqref{EqQSingle} is disjoint from this front face.) This proves part~\eqref{ItQStructzf}.

  For part~\eqref{ItQStructmf}, we note that the lift of $\ol\R\times\{0\}\times X$ to $X_\Qop$ is given by first resolving $\ol\R\times X$ at $\ol\R\times\{0\}$ (which produces $\ol\R\times\dot X$) followed by the resolution of $\pa\ol\R\times\pa\dot X$. Within this space then, the final resolution in~\eqref{EqQSingle} only blows up the lift of $\pa\ol\R\times\dot X$, which does not change the smooth structure.

  For part~\eqref{ItQStructnf}, we first note that the front face $\nface_\pm'$ of the blow-up of the lift of $\{\pm\infty\}\times\{0\}\times\{0\}$ to $X_\Qop':=[\ol\R\times[0,1]\times X;\ol\R\times\{0\}\times\{0\}]$ is diffeomorphic to $[0,\infty]_\mu\times\hat X$ where $\mu=\frac{\nu}{h}$ with $\nu=(r^2+\bhm^2)^{1/2}$ a defining function of the front face of $X_\Qop'$. The final blow-up in~\eqref{EqQSingle} restricts to $\nface_\pm'$ as the blow-up of $\{\infty\}\times\pa\hat X$, that is,
  \[
    \nface_\pm = \bigl[ [0,\infty]_\mu\times\hat X; \{\infty\}\times\pa\hat X\bigr].
  \]
  Upon restriction to a compact subset $K$ of the interior $\hat X^\circ$ in the second factor (thus $r\lesssim\bhm$), we can replace $\nu$ by $\bhm$, and thus $\mu$ by $\frac{\bhm}{h}=\pm\tilde\sigma$. (That is, $\mu/(\pm\tilde\sigma)$ is a positive smooth function on $[0,\infty]_\mu\times K$.) Near the boundary of $\hat X$ on the other hand, let us work in the collar neighborhood $[0,1)_{\hat\rho}\times\Sph^{n-1}_\omega$ of $\pa\hat X\subset\hat X$. Since there we can replace $\nu$ by $r$ and thus $\mu$ by $\tilde r$, the lift of $[0,\infty]_\mu\times[0,1)_{\hat\rho}\times\Sph^{n-1}_\omega$ to $\nface_\pm$ is
  \[
    \bigl[ [0,\infty]_{\tilde r} \times [0,1)_{\hat\rho}\times\Sph^{n-1}_\omega; \{\infty\}\times\{0\}\times\Sph^{n-1}_\omega \bigr] = \bigl[ [0,\infty]\times[0,1); \{\infty\}\times\{0\} \bigr] \times \Sph^{n-1}.
  \]
  Observe then that the map $(\tilde r,\hat\rho)\to(\tilde r\hat\rho,\hat\rho)$ induces a diffeomorphism
  \begin{equation}
  \label{EqQnfBlowup}
    \bigl[ [0,\infty]\times[0,1); \{\infty\}\times\{0\} \bigr] \cong \bigl[ [0,\infty] \times [0,1); \{0\}\times\{0\} \bigr].
  \end{equation}
  Since $\tilde r\hat\rho=\pm\tilde\sigma$, this proves part~\eqref{ItQStructnf}.

  Finally, for the proof of part~\eqref{ItQStructif}, we note that coordinates near the lift of $\{\infty\}\times\{0\}\times X$ to $[\ol\R\times[0,1]\times X;\pa\ol\R\times\{0\}\times\{0\}]$ are $r\geq 0$, $\omega\in\Sph^{n-1}$, $\hat\rho=\frac{\bhm}{r}\geq 0$, and $\tilde\rho=\frac{h}{r}\geq 0$, with the lift of $\{\infty\}\times\{0\}\times X$ given by $\tilde\rho=\hat\rho=0$. Therefore,
  \[
    \iface \cong [0,\infty]_{\hat\rho/\tilde\rho} \times X,
  \]
  and it remains to note that $\hat\rho/\tilde\rho=\bhm/h=\pm\tilde\sigma$.
\end{proof}

\begin{definition}[Pieces of $\zface$, $\mface$ and $\nface_\pm$]
\label{DefQPieces}
  We define
  \begin{alignat*}{2}
    \mface_{\pm,\semi} &:= \mface \cap \sigma^{-1}(\pm[1,\infty]), \\
    \nface_{\pm,\tilde\semi} &:= \nface_\pm \cap \tilde\sigma^{-1}(\pm[1,\infty]), &\qquad
    \nface_{\pm,\low} &:= \nface_\pm \cap \tilde\sigma^{-1}(\pm[0,1]).
  \end{alignat*}
  We furthermore set, for $\sigma_0\in\R$ and $\tilde\sigma_0\in\R\setminus\{0\}$,
  \[
    \zface_{\sigma_0} := \zface \cap \sigma^{-1}(\sigma_0),\qquad
    \mface_{\sigma_0} := \mface \cap \sigma^{-1}(\sigma_0),\qquad
    \nface_{\tilde\sigma_0} := \nface \cap \tilde\sigma^{-1}(\tilde\sigma_0).
  \]
\end{definition}

Thus, using the notation for the single spaces for semiclassical cone, $\scbtop$-transition, and semiclassical scattering analysis from~\S\S\ref{SsPch}, \ref{SsPbsc}, and \ref{SsPscbt}, respectively, Proposition~\ref{PropQStruct} provides diffeomorphisms
\begin{equation}
\label{EqQPieces}
\begin{alignedat}{2}
  \mface_{\pm,\semi} &\cong \dot X_\chop&\quad&\text{(with semiclassical parameter $h=|\sigma|^{-1}\in[0,1]$)}, \\
  \nface_{\pm,\tilde\semi} &\cong \hat X_{\scop,\tilde\semi}&\quad&\text{(with semiclassical parameter $\tilde h=|\tilde\sigma|^{-1}\in[0,1]$)}, \\
  \nface_{\pm,\low} &\cong \hat X_\scbtop&\quad&\text{(with spectral parameter $\tilde\sigma\in\pm[0,1]$)},
\end{alignedat}
\end{equation}
as well as
\[
  \zface_{\sigma_0} \cong \hat X,\qquad
  \mface_{\sigma_0} \cong \dot X,\qquad
  \nface_{\tilde\sigma_0} \cong \hat X.
\]

\subsection{Q-vector fields and differential operators}
\label{SsQV}

We next turn to the class of $\sigma$- and $\bhm$-dependent vector fields on $X$ on which our uniform analysis will be based.

\begin{definition}[Q-vector fields]
\label{DefQVF}
  The space of \emph{Q-vector fields} on $X$ is defined as
  \[
    \VQ(X) := \{ V\in\rho_\iface\rho_\sface\Vb(X_\Qop) \colon V\sigma=0,\ V\bhm=0 \}.
  \]
\end{definition}

Since $X_\Qop\cap\{\sigma\in\R,\ \bhm>0\}=\R_\sigma\times(0,1]_\bhm\times X$, an element $V\in\VQ(X)$ is thus a smooth family $\R\times(0,1]\ni(\sigma,\bhm)\mapsto V_{\sigma,\bhm}$ of smooth vector fields on $X$ which degenerate or become singular in a particular fashion in the limits $r\to 0$, $\bhm\to 0$, $|\sigma|\to\infty$, or any combination thereof.

\begin{lemma}[Properties of $\VQ(X)$]
\label{LemmaQVF}
  The space $\VQ(X)$ is Lie algebra, and in fact
  \begin{equation}
  \label{EqQVFComm}
    V,W\in\VQ(X) \implies [V,W] \in \rho_\sface\rho_\iface\VQ(X).
  \end{equation}
  Moreover, for any weight $w=\prod_H\rho_H^{\alpha_H}$ where $H\subset X_\Qop$ ranges over all boundary hypersurfaces and $\alpha_H\in\R$, we have $w^{-1}[V,w]\in\rho_\sface\rho_\iface\CI(X_\Qop)$ for any $V\in\VQ(X)$.
\end{lemma}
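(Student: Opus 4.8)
The plan is to reduce both assertions to the Leibniz rule together with two standard facts about b-vector fields on the manifold with corners $X_\Qop$: that $\Vb(X_\Qop)$ is a Lie algebra and a $\CI(X_\Qop)$-module, and that $V_0\rho_H\in\rho_H\CI(X_\Qop)$ for every boundary defining function $\rho_H$ and every $V_0\in\Vb(X_\Qop)$ (see \cite{MelroseDiffOnMwc}). First I would note that $\VQ(X)$ is a $\CI(X_\Qop)$-module: if $V=\rho_\iface\rho_\sface V_0$ with $V_0\in\Vb(X_\Qop)$ and $V\sigma=V\bhm=0$, and $f\in\CI(X_\Qop)$, then $fV=\rho_\iface\rho_\sface(f V_0)$ and $(fV)\sigma=(fV)\bhm=0$; similarly $\rho_\iface\rho_\sface\VQ(X)\subset\VQ(X)$, since an extra factor of $\rho_\iface\rho_\sface$ only improves the vanishing and preserves the conditions on $\sigma,\bhm$.

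For the commutator statement I would write $V=w V_0$, $W=w W_0$ with $w:=\rho_\iface\rho_\sface$ and $V_0,W_0\in\Vb(X_\Qop)$, and expand
\[
  [V,W]=w\,(V_0 w)\,W_0-w\,(W_0 w)\,V_0+w^2[V_0,W_0].
\]
Since $V_0\rho_\iface\in\rho_\iface\CI$ and $V_0\rho_\sface\in\rho_\sface\CI$, the Leibniz rule gives $V_0 w\in w\,\CI(X_\Qop)$, and likewise $W_0 w\in w\,\CI(X_\Qop)$; combined with $[V_0,W_0]\in\Vb(X_\Qop)$ this shows $[V,W]\in w^2\Vb(X_\Qop)=\rho_\iface^2\rho_\sface^2\Vb(X_\Qop)$. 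Moreover $[V,W]\sigma=V(W\sigma)-W(V\sigma)=0$ and $[V,W]\bhm=0$, since $V,W$ annihilate $\sigma,\bhm$. Hence $Y:=(\rho_\iface\rho_\sface)^{-1}[V,W]\in\rho_\iface\rho_\sface\Vb(X_\Qop)$ still annihilates $\sigma$ and $\bhm$ — this holds on the interior $\{\sigma\in\R,\ \bhm>0,\ \rho_\iface\rho_\sface\neq0\}$, hence everywhere by continuity — so $Y\in\VQ(X)$ and $[V,W]\in\rho_\iface\rho_\sface\VQ(X)$, which is \eqref{EqQVFComm}; together with $\rho_\iface\rho_\sface\VQ(X)\subset\VQ(X)$ this in particular yields that $\VQ(X)$ is a Lie algebra.

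For the weight statement, I would use that $[V,w]$ is multiplication by the function $Vw$, so that $w^{-1}[V,w]=w^{-1}(Vw)$, and apply the Leibniz rule to $w=\prod_H\rho_H^{\alpha_H}$:
\[
  w^{-1}(Vw)=\sum_H \alpha_H\,\rho_H^{-1}(V\rho_H),
\]
a locally finite sum over the boundary hypersurfaces $H$. Writing $V=\rho_\iface\rho_\sface V_0$ and using $V_0\rho_H\in\rho_H\CI(X_\Qop)$, each summand equals $\alpha_H\,\rho_\iface\rho_\sface\,\bigl(\rho_H^{-1}(V_0\rho_H)\bigr)\in\rho_\iface\rho_\sface\CI(X_\Qop)$ (this is transparent also for $H=\iface$ or $\sface$), so the displayed identity — valid a priori only on the interior, where $w>0$ is smooth — exhibits $w^{-1}[V,w]$ as an element of $\rho_\iface\rho_\sface\CI(X_\Qop)$.

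I do not anticipate a genuine obstacle here; the statements are essentially bookkeeping. The two points that require a little care are (i) extracting the \emph{second}-order vanishing at $\iface$ and $\sface$ in the bracket, which is precisely where one uses that \emph{both} $V$ and $W$ carry a factor $\rho_\iface\rho_\sface$ and that $V_0 w, W_0 w\in w\,\CI$; and (ii) checking that dividing $[V,W]$ by $\rho_\iface\rho_\sface$ does not destroy the defining conditions $Y\sigma=Y\bhm=0$, which follows by continuity from the interior. The local finiteness of the product and sum over boundary hypersurfaces renders the (a priori infinite) index set in the weight statement harmless.
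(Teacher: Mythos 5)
Your proof is correct and follows essentially the same route as the paper: write $V=\rho_\iface\rho_\sface V_0$, $W=\rho_\iface\rho_\sface W_0$ with $V_0,W_0\in\Vb(X_\Qop)$, expand the bracket by the Leibniz rule, and use that $w^{-1}V_0 w\in\CI(X_\Qop)$ for any weight $w$ and any $V_0\in\Vb(X_\Qop)$, together with $[V,W]\sigma=[V,W]\bhm=0$. Your version is somewhat more explicit about the bookkeeping (the full Leibniz expansion of $[wV_0,wW_0]$ and the continuity argument justifying that $(\rho_\iface\rho_\sface)^{-1}[V,W]$ still annihilates $\sigma$ and $\bhm$), but there is no difference in substance.
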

\begin{proof}
  The final claim follows from the fact that $w^{-1}[V_0,w]\in\CI(X_\Qop)$ for \emph{any} $V_0\in\Vb(X_\Qop)$. In order to prove~\eqref{EqQVFComm}, we observe that $[V,W]\sigma=V W\sigma-W V\sigma=0$, likewise $[V,W]\bhm=0$; moreover, we have, for $w:=\rho_\sface\rho_\iface$ and $V=w V_0$, $W=w W_0\in\VQ(X)$,
  \[
    [V,W] = w\bigl((w^{-1}[V_0,w])W_0 - (w^{-1}[W_0,w])w V_0\bigr) \in w\Vb(X_\Qop),
  \]
  which implies the claim.
\end{proof}

We make this explicit in various local coordinate systems; we use the notation from~\eqref{EqqCoord1}, \eqref{EqqCoord2}, and \eqref{EqQCoord}.
\begin{enumerate}
\item The intersection of $X_\Qop$ with $|\hat x|<C$ is $[[0,1]_\bhm\times\ol{\R_\sigma}\times B;\{0\}\times\pa\ol\R\times B]$ where $B=\{\hat x\in\hat X\colon|\hat x|<C\}$ is a ball. Thus, in the coordinates $\bhm,\sigma,\hat x$, a basis of Q-vector fields is given by $\pa_{\hat x^j}$ ($j=1,\ldots,n$) in the set where $\sigma$ is bounded or even where $|\sigma|\geq 1$ but $\tilde\sigma$ is bounded, and by $\tilde h\pa_{\hat x^j}$ when $|\tilde\sigma|\gtrsim 1$ (where $\tilde h$ is a defining function of $\sface$).
\item The intersection of $X_\Qop$ with $r>c>0$ is $[[0,1]_\bhm\times\ol{\R_\sigma}\times A;\{0\}\times\pa\ol\R\times A]$ where $A=\{x\in X\colon|x|>c\}$. A basis of Q-vector fields, in the coordinates $\bhm,\sigma,x$ (or $r,\omega$ instead of $x$) is then for bounded $\sigma$ given by $\pa_{x^j}$ ($j=1,\ldots,n$) (or equivalently $\pa_r$ and spherical vector fields, which we schematically write as $\pa_\omega$), and for large $|\sigma|$ by $h\pa_{x^j}$ (or $h\pa_r$, $h\pa_\omega$).
\end{enumerate}

It remains to consider the subset of $X_\Qop$ where $|\hat x|>C$ and $r<c$.
\begin{enumerate}
\setcounter{enumi}{2}
\item Near the interior of $\zface\cap\mface$, we have local coordinates $\sigma\in\R$, $\hat\rho\geq 0$, $r\geq 0$, $\omega\in\Sph^{n-1}$. Q-vector fields are spanned by $r\pa_r-\hat\rho\pa_{\hat\rho}$, $\pa_\omega$.
\item Near the corner $\zface\cap\mface\cap\nface_+$, local coordinates are $h\geq 0$, $\tilde r\geq 0$, $\hat\rho\geq 0$, $\omega\in\Sph^{n-1}$, and Q-vector fields are spanned by $\tilde r\pa_{\tilde r}-\hat\rho\pa_{\hat\rho}$, $\pa_\omega$.
\item Near the corner $\mface\cap\nface_+\cap\iface_+$, local coordinates are $r\geq 0$, $\tilde\rho\geq 0$, $\tilde\sigma\geq 0$, $\omega$, with  $\tilde\rho$ a defining function of $\iface_+$. Q-vector fields are thus spanned by $\tilde\rho(r\pa_r-\tilde\rho\pa_{\tilde\rho})$, $\tilde\rho\pa_\omega$.
\item Near the corner $\nface_+\cap\iface_+\cap\sface$ finally, local coordinates are $r\geq 0$, $\hat\rho\geq 0$, $\tilde h\geq 0$, $\omega$, with $\hat\rho$ and $\tilde h$ being local defining functions of $\iface_+$ and $\sface$, respectively. Thus, Q-vector fields are spanned by $\hat\rho\tilde h(r\pa_r-\hat\rho\pa_{\hat\rho})$, $\hat\rho\tilde h\pa_\omega$.
\end{enumerate}

One can also give a more global description: in $|\hat x|\lesssim 1$, resp.\ $|\hat x|\gtrsim 1$, Q-vector fields are spanned by
\begin{equation}
\label{EqQSpan}
  \frac{h}{h+\bhm}\pa_{\hat x^j}\ (j=1,\ldots,n),\qquad \text{resp.}\qquad \frac{h}{h+r}r\pa_r,\ \frac{h}{h+r}\pa_\omega.
\end{equation}

\begin{definition}[Q-bundles]
\label{DefQBundle}
  We denote by $\TQ X\to X_\Qop$ the \emph{Q-vector bundle}, which is the vector bundle equipped with a smooth bundle map $\TQ X\to T X_\Qop$ with the property that $\VQ(X)=\CI(X_\Qop,\TQ X)$. The dual bundle $\TQ^*X$ is the \emph{Q-cotangent bundle}.
\end{definition}

We next study restrictions of Q-vector fields to various boundary hypersurfaces of $X_\Qop$. We use the notation from Appendix~\ref{SP}. The following result, based on~\eqref{EqQPieces} is the reason for the appearance of the various model problems in uniform singular analysis in the Q-setting:

\begin{lemma}[Restriction to boundary hypersurfaces]
\fakephantomsection
\label{LemmaQVFRes}
  \hspace{0in}\begin{enumerate}
  \item\label{ItQVFReszf} Restriction to $\zface$ induces a surjective map $N_\zface\colon\VQ(X)\to\CI(\ol\R;\Vb(\hat X))$ with kernel $\rho_\zface\VQ(X)$.
  \item\label{ItQVFResmf} Restriction to $\mface_{\sigma_0}$ induces a surjective map $N_{\mface_{\sigma_0}}\colon\VQ(X)\to\Vb(\dot X)$. Restriction to $\mface_{\pm,\semi}$ induces a surjective map $N_{\mface_{\pm,\semi}}\colon\VQ(X)\to\Vch(\dot X)$ (see~\S\usref{SsPch}). The kernel of $\oplus_{\sigma_0\in\R}N_{\mface_{\sigma_0}}$ is $\rho_\mface\VQ(X)$.
  \item\label{ItQVFResnf} Restriction to $\nface_{\pm,\low}$, resp.\ $\nface_{\pm,\tilde\semi}$ induces a surjective map $N_{\nface_{\pm,\low}}\colon\VQ(X)\to\Vscbt(\hat X)$ (see~\S\usref{SsPscbt}), resp.\ $N_{\nface_{\pm,\tilde\semi}}\colon\VQ(X)\to\Vsch(\hat X)$ (see~\S\usref{SsPbsc}). The kernel of $(N_{\nface_{\pm,\low}},N_{\nface_{\pm,\tilde\semi}})$ is $\rho_{\nface_\pm}\VQ(X)$.
  \end{enumerate}
\end{lemma}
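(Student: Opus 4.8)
The plan is to verify each of the three statements by working in the local coordinate systems enumerated after Lemma~\ref{LemmaQVF}, exactly as one does for restriction maps of b-vector fields, and then to identify the resulting algebras of vector fields on the boundary hypersurfaces with the model algebras via the diffeomorphisms~\eqref{EqQPieces} and the identifications in Proposition~\ref{PropQStruct}. In each case the restriction map $N_H$ is defined by the testing procedure already recalled at the end of~\S\ref{Ssq}: given $u\in\CI(H)$, pick a smooth extension $\tilde u\in\CI(X_\Qop)$, and set $N_H(V)u=(V\tilde u)|_H$; this is well-defined precisely because $V$ is tangent to $H$ (which holds since $V\in\rho_\iface\rho_\sface\Vb(X_\Qop)$ is tangent to every boundary hypersurface), and the map is an algebra homomorphism. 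Surjectivity and the identification of the kernel are then local computations.

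For part~\eqref{ItQVFReszf}: near $\zface$ we have the local coordinate systems (1) ($\bhm,\sigma,\hat x$ with $\sigma$ bounded, or $|\sigma|\ge 1$ but $\tilde\sigma$ bounded), (3) ($\sigma,\hat\rho,r,\omega$ near $\zface\cap\mface$), and (4) ($h,\tilde r,\hat\rho,\omega$ near $\zface\cap\mface\cap\nface_+$). In all of these, $\rho_\iface$ and $\rho_\sface$ restrict to $\zface$ to be smooth and \emph{positive} (since $\zface$ meets neither $\iface$ nor $\sface$ in its interior, and at the corners with $\nface$ the defining functions $h,\tilde r,\tilde h$ of $\iface,\sface$ are bounded away from zero along $\zface$ — indeed $\rho_\iface$ near the corner $\nface_+\cap\iface_+$ is $\hat\rho$ but restricted to $\zface$ this is the b-coordinate on $\hat X$, smooth and tangent to $\pa\hat X$). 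Hence on $\zface$ the spanning sets for $\VQ(X)$ collapse: $\pa_{\hat x^j}$ spans $\Vb(\hat X)$ in coordinate patch (1), and $r\pa_r-\hat\rho\pa_{\hat\rho}$, $\pa_\omega$ spans $\Vb(\hat X)$ near $\pa\hat X$ in patches (3)/(4) after using $\hat X\cong\zface_{\sigma}$ fiberwise over $\ol\R_\sigma$, with the $\sigma$-dependence being smooth by construction. Conversely any smooth family $\ol\R\ni\sigma\mapsto W_\sigma\in\Vb(\hat X)$ extends to an element of $\VQ(X)$ (extend constantly off $\zface$ in a collar and cut off; this lands in $\VQ(X)$ by the local descriptions). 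The kernel of $N_\zface$ consists of those $V$ whose coefficients in these frames vanish on $\zface=\{\rho_\zface=0\}$, i.e.\ $\rho_\zface\VQ(X)$, using that $\VQ(X)$ is a $\CI(X_\Qop)$-module (it is $\CI(X_\Qop,\TQ X)$).

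Parts~\eqref{ItQVFResmf} and~\eqref{ItQVFResnf} are the same kind of argument but using the refined structure of $\mface$ and $\nface_\pm$ from Proposition~\ref{PropQStruct}~\eqref{ItQStructmf}--\eqref{ItQStructif}. For~\eqref{ItQVFResmf}, at a fixed bounded frequency $\sigma_0$ one is on $\mface_{\sigma_0}\cong\dot X$ and the coordinate systems (2) (with $\sigma$ bounded, frame $\pa_r,\pa_\omega$) and (3) ($r\pa_r-\hat\rho\pa_{\hat\rho}$, $\pa_\omega$ restricting to $r\pa_r,\pa_\omega$ on $\hat\rho=0$) exhibit $\Vb(\dot X)$; over $\mface_{\pm,\semi}\cong\dot X_\chop$ with semiclassical parameter $h$, coordinate system (2) with large $|\sigma|$ gives the frame $h\pa_r,h\pa_\omega$ and system (5) near $\mface\cap\nface_+\cap\iface_+$ gives $\tilde\rho(r\pa_r-\tilde\rho\pa_{\tilde\rho}),\tilde\rho\pa_\omega$; these are precisely the generators of $\Vch(\dot X)$ in the relevant charts (with $\rho_\iface$ playing the role of the semiclassical-cone transition variable), using~\eqref{EqQPieces}. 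The kernel of the total restriction to all $\mface_{\sigma_0}$ ($\sigma_0\in\R$) is $\rho_\mface\VQ(X)$ by the same module argument. For~\eqref{ItQVFResnf}, on $\nface_{\pm,\low}\cong\hat X_\scbtop$ (spectral parameter $\tilde\sigma$) the coordinate system (1) with $\tilde\sigma$ bounded gives $\pa_{\hat x^j}$ and systems (4),(5) near the corners give $\tilde r\pa_{\tilde r}-\hat\rho\pa_{\hat\rho}$, $\pa_\omega$ and $\hat\rho\tilde h(r\pa_r-\hat\rho\pa_{\hat\rho})$, $\hat\rho\tilde h\pa_\omega$ — matching the generators of $\Vscbt(\hat X)$ in~\S\ref{SsPscbt}; on $\nface_{\pm,\tilde\semi}\cong\hat X_{\scop,\tilde\semi}$ (semiclassical parameter $\tilde h$) the coordinate system (1) with $|\tilde\sigma|\gtrsim1$ gives $\tilde h\pa_{\hat x^j}$ and system (6) gives $\hat\rho\tilde h(r\pa_r-\hat\rho\pa_{\hat\rho}),\hat\rho\tilde h\pa_\omega$, matching $\Vsch(\hat X)$. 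The joint kernel of $(N_{\nface_{\pm,\low}},N_{\nface_{\pm,\tilde\semi}})$ is then $\rho_{\nface_\pm}\VQ(X)$, again by the module structure, once one checks these two pieces cover $\nface_\pm$ with overlap $\nface_{\pm,\low}\cap\nface_{\pm,\tilde\semi}=\nface_\pm\cap\tilde\sigma^{-1}(\pm1)$, on which the two descriptions agree.

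The main obstacle I anticipate is purely bookkeeping: one must check, with some care, that the factors $\rho_\iface$ and $\rho_\sface$ built into the definition of $\VQ(X)$ restrict to the \emph{correct} powers on each model face so that the collapsed frame is exactly that of the target algebra — in particular that $\rho_\iface$ restricts on $\nface_{\pm,\low}$ to (a positive multiple of) the $\scbtop$-transition defining function, on $\mface_{\pm,\semi}$ to (a positive multiple of) the semiclassical-cone transition defining function, etc. These identifications are forced by Proposition~\ref{PropQStruct} and the explicit coordinate changes~\eqref{EqQnfBlowup}, but they are the step where an off-by-one power would break the statement, so that is where the verification has to be done carefully rather than schematically. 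Everything else (surjectivity via extend-and-cut-off, kernel via the $\CI(X_\Qop)$-module structure of $\VQ(X)=\CI(X_\Qop,\TQ X)$, multiplicativity via the testing definition) is routine and parallels the corresponding statements for b-, scattering-, and semiclassical-type vector fields recalled in Appendix~\ref{SP}.
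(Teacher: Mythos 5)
Your proposal is correct and follows essentially the same route as the paper's proof: both work through the local coordinate systems and spanning sets listed before the statement of Lemma~\ref{LemmaQVFRes}, check in each chart that the restricted frame coincides with the generating set of the target model algebra ($\Vb$, $\Vch$, $\Vscbt$, $\Vsch$), and deduce the remaining claims from the module structure. Your treatment is somewhat more explicit than the paper about surjectivity (via extend-and-cut-off) and the kernel characterization (via the $\CI(X_\Qop)$-module structure), which the paper leaves implicit, and the ``off-by-one power'' caveat you flag is indeed the only place a careless reading could go wrong; but those powers are already correctly encoded in the listed spanning sets, so no gap.
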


We could leave $\mface$ in one piece; then restriction to $\mface$ induces a map from $\VQ(X)$ onto the space of b-vector fields on $[\ol{\R_\sigma}\times\dot X;\pa\ol\R\times\pa\dot X]$ which annihilate $\sigma$ and vanish at the lift of $\pa\ol\R\times\dot X$. This target space consists of smooth families of b-vector fields which degenerate like semiclassical cone vector fields as $|\sigma|\to\infty$. (An analogous remark applies to $\nface_\pm$.) The reason for splitting $\mface$ (or $\nface_\pm$) is that the analysis at high energies $|\sigma|\to\infty$ (or $|\tilde\sigma|\to\infty$) will be conceptually different from the analysis at bounded frequencies $\sigma$ (or $\tilde\sigma$).

\begin{proof}[Proof of Lemma~\usref{LemmaQVFRes}]
  We prove this using the coordinate systems and local spanning sets of $\VQ(X)$ listed before the statement of Lemma~\ref{LemmaQVFRes}. Thus, part~\eqref{ItQVFReszf} follows from the observation that the map $N_\zface$, in the coordinates $\bhm,\sigma,\hat x$, resp.\ $\sigma,\hat\rho,r,\omega$, maps $\pa_{\hat x^j}$ to itself ($j=1,\ldots,n$), resp.\ $r\pa_r-\hat\rho\pa_{\hat\rho}$, $\pa_\omega$ to $-\hat\rho\pa_{\hat\rho}$, $\pa_\omega$, with coefficients that are smooth on $\ol{\R_\sigma}\times\hat X$.

  For part~\eqref{ItQVFResmf}, consider first the case of bounded $\sigma$. The conclusion is then clear in $r>c>0$, whereas near $\mface\cap\zface$ and in the coordinates $\sigma,\hat\rho,r,\omega$, the map $N_\mface$ maps $r\pa_r-\hat\rho\pa_{\hat\rho}\mapsto r\pa_r$ and $\pa_\omega\mapsto\pa_\omega$, thus has range equal to smooth families (in $\sigma$) of elements of $\Vb(\dot X)$. In the coordinates $h,\tilde r,\hat\rho,\omega$ near $\mface\cap\nface_\pm\cap\zface$, with $\hat\rho$ a defining function of $\mface$, the map $N_\mface$ takes $\tilde r\pa_{\tilde r}-\hat\rho\pa_{\hat\rho}\mapsto\tilde r\pa_{\tilde r}$, $\pa_\omega\mapsto\pa_\omega$, thus its range consists of $\chop$-vector fields indeed. This is true also in the coordinates $r,\tilde\rho,\pm\tilde\sigma,\omega$ near $\mface\cap\nface_\pm\cap\iface_\pm$ (with $\pm\tilde\sigma$ defining $\mface$), in which $N_\mface$ maps $\tilde\rho(r\pa_r-\tilde\rho\pa_{\tilde\rho})$ and $\tilde\rho\pa_\omega$ to the same expressions; since the semiclassical face of $\dot X_\chop$ is defined by $\tilde\rho=0$, this proves part~\eqref{ItQVFResmf}.

  For part~\eqref{ItQVFResnf}, the maps $N_{\nface_\pm,\low}$ and $N_{\nface_\pm,\tilde\semi}$ are given by the restriction of coefficients of Q-vector fields, with respect to the bases listed in the various coordinate systems prior to the statement of Lemma~\ref{LemmaQVF}, to $\nface_\pm$. These vector fields are indeed $\scbtop$-vector fields on $\nface_{\pm,\low}$ (with scattering behavior at $\tilde\rho=0$, cf.\ the coordinate system near $\nface_\pm\cap\mface\cap\iface_\pm$), and semiclassical scattering vector fields on $\nface_{\pm,\tilde\semi}$ (with $\tilde h$ the semiclassical parameter, and with scattering behavior at $\hat\rho=0$, cf.\ the coordinate system near $\nface_+\cap\iface_+\cap\sface$).
\end{proof}

\begin{cor}[Bundle identifications]
\label{CorQBundle}
  The restriction maps of Lemma~\usref{LemmaQVFRes} induce bundle isomorphisms
  \begin{alignat*}{2}
    \TQ_\zface X &\cong \ol\R \times \Tb\hat X\ \text{(as bundles over $\zface=\ol\R\times\hat X$)},\hspace{-20em}& \\
    \TQ_{\mface_{\sigma_0}}X &\cong \Tb\dot X, &\qquad
    \TQ_{\mface_{\pm,\semi}}X &\cong \Tch\dot X, \\
    \TQ_{\nface_{\pm,\low}}X &\cong \Tscbt\hat X, &\qquad
    \TQ_{\nface_{\pm,\tilde\semi}}X &\cong {}^{\scop\tilde\semi}T\hat X,
  \end{alignat*}
  and $\TQ_{\nface_{\tilde\sigma_0}}X \cong\Tsc\hat X$, where $\sigma_0\in\R$ and $\tilde\sigma_0\in\R\setminus\{0\}$.
\end{cor}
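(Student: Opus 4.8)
The plan is to deduce all of these isomorphisms from Lemma~\ref{LemmaQVFRes} by the standard mechanism that converts a surjective module map with the correct kernel into a bundle isomorphism. First I would record the general principle: for a smooth vector bundle $E\to Y$ over a manifold with corners and an embedded $p$-submanifold $H\subset Y$, restriction of sections $\CI(Y,E)\to\CI(H,E|_H)$ is onto, with kernel $\cI_H\cdot\CI(Y,E)$ where $\cI_H\subset\CI(Y)$ is the vanishing ideal of $H$ (so $\cI_H=\rho_H\CI(Y)$ when $H$ is a boundary hypersurface). Hence, if $N\colon\CI(Y,E)\to\CI(H,F)$ is $\CI(Y)$-linear, surjective, and annihilates $\cI_H\cdot\CI(Y,E)$, then $N$ descends to a surjective $\CI(H)$-linear map $\CI(H,E|_H)\to\CI(H,F)$; and once this descended map is also injective, it is induced by a bundle isomorphism $E|_H\cong F$ via the usual correspondence between locally free modules and vector bundles.

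Next I would apply this with $E=\TQ X$ and $Y=X_\Qop$, so that $\CI(Y,E)=\VQ(X)$. For $H=\zface$ the map $N_\zface$ of Lemma~\ref{LemmaQVFRes} is $\CI(X_\Qop)$-linear, surjective onto $\CI(\ol\R;\Vb(\hat X))$, and has kernel exactly $\rho_\zface\VQ(X)=\cI_\zface\cdot\VQ(X)$; the descended map $\CI(\zface,\TQ_\zface X)\to\CI(\zface,\ol\R\times\Tb\hat X)$ is then automatically an isomorphism (surjectivity is given, injectivity is forced by the kernel statement), using the identification $\zface\cong\ol\R\times\hat X$ of Proposition~\ref{PropQStruct}. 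This yields $\TQ_\zface X\cong\ol\R\times\Tb\hat X$. The same argument, with the other parts of Lemma~\ref{LemmaQVFRes} and the diffeomorphisms~\eqref{EqQPieces}, produces $\TQ_{\mface_{\pm,\semi}}X\cong\Tch\dot X$, $\TQ_{\nface_{\pm,\low}}X\cong\Tscbt\hat X$, and $\TQ_{\nface_{\pm,\tilde\semi}}X\cong{}^{\scop\tilde\semi}T\hat X$; for $\nface_\pm$ one localizes near $\nface_{\pm,\low}$ and $\nface_{\pm,\tilde\semi}$ separately and uses that the combined restriction map $(N_{\nface_{\pm,\low}},N_{\nface_{\pm,\tilde\semi}})$ has kernel $\rho_{\nface_\pm}\VQ(X)$.

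For the faces at a fixed spectral parameter, $H=\mface_{\sigma_0}$ and $H=\nface_{\tilde\sigma_0}$ are $p$-submanifolds whose vanishing ideal is generated near $H$ by two functions---$\rho_\mface$ and $\sigma-\sigma_0$, resp.\ $\rho_\nface$ and $\tilde\sigma-\tilde\sigma_0$---both of which the relevant restriction map annihilates, so the map descends to $\CI(H,\TQ_H X)$. Surjectivity onto $\Vb(\dot X)$ is part of Lemma~\ref{LemmaQVFRes}; surjectivity onto $\Vsc(\hat X)$ for $\nface_{\tilde\sigma_0}$ I would get from the standard fact, recalled in~\S\ref{SsPscbt}, that the scattering-b-transition tangent bundle restricts to the scattering tangent bundle at any nonzero value of the spectral parameter. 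Injectivity of the descended maps---equivalently, that the restriction map takes a local $\CI$-frame of $\VQ(X)$ to a local $\CI$-frame of the target---I would read off directly from the explicit local spanning sets listed before Lemma~\ref{LemmaQVFRes} and the computations in its proof (for instance, near $\mface\cap\zface$, in coordinates $\sigma,\hat\rho,r,\omega$ with $\hat\rho$ defining $\mface$, the frame $r\pa_r-\hat\rho\pa_{\hat\rho},\,\pa_\omega$ of $\VQ(X)$ maps under $N_{\mface_{\sigma_0}}$ to the b-frame $r\pa_r,\,\pa_\omega$ on $\dot X$). The only point that requires any care is this last bookkeeping of local frames across the several corners of $X_\Qop$; but since the needed computations are exactly those already carried out in the proof of Lemma~\ref{LemmaQVFRes}, there is in fact nothing new to do here, which is why the corollary is immediate.
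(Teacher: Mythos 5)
Your proposal is correct and supplies precisely the elaboration the paper leaves implicit: the corollary is stated without proof, since the author regards it as the standard passage from the surjective $\CI$-linear restriction maps with identified kernels in Lemma~\ref{LemmaQVFRes} to bundle isomorphisms over the respective faces, together with the diffeomorphisms of Proposition~\ref{PropQStruct} and equation~\eqref{EqQPieces}. Your general principle (module isomorphism $\Rightarrow$ bundle isomorphism) and the localization near $\nface_{\pm,\low}$, $\nface_{\pm,\tilde\semi}$, $\mface_{\pm,\semi}$ are exactly what is needed.

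One small caveat worth noting: for the level-set faces $\mface_{\sigma_0}$ and $\nface_{\tilde\sigma_0}$, the lemma does not literally identify $\ker N_{\mface_{\sigma_0}}$ (it identifies $\ker\bigoplus_{\sigma_0}N_{\mface_{\sigma_0}}$), so your module-theoretic shortcut does not apply directly there. You correctly recognize this and fall back on the fiberwise check that the local spanning sets of $\VQ(X)$ exhibited in the proof of Lemma~\ref{LemmaQVFRes} are mapped to frames of the target bundles (e.g., $r\pa_r-\hat\rho\pa_{\hat\rho},\pa_\omega\mapsto r\pa_r,\pa_\omega$ near $\mface\cap\zface$), which is the actual content of the verification. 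This is the intended argument, so the proof is sound.
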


\begin{definition}[Q-differential operators]
\label{DefQDiff}
  For $m\in\N_0$, we denote by $\DiffQ^m(X)$ the space of locally finite sums of up to $m$-fold compositions of elements of $\VQ(X)$ (a $0$-fold composition is, by definition an element of $\CI(X_\Qop)$). Given a collection $\alpha=(\alpha_H)$ of weights $\alpha_H\in\R$ for $H=\zface,\mface,\nface,\iface,\sface$, we denote more generally
  \[
    \DiffQ^{m,\alpha}(X) = \biggl(\prod_H\rho_H^{-\alpha_H}\biggr)\DiffQ^m(X) = \biggl\{ \biggl(\prod_H\rho_H^{-\alpha_H}\biggr) A \colon A \in \DiffQ^m(X) \biggr\}.
  \]
\end{definition}

Analogously to Q-vector fields, Q-differential operators $A\in\DiffQ^m(X)$ are smooth families $(\bhm,\sigma)\mapsto A_{\bhm,\sigma}\in\Diff^m(X)$ of differential operators on $X$ which degenerate in a particular fashion as $\bhm\to 0$, $|\sigma|\to\infty$, and/or $r\to 0$. Note that elements of $\DiffQ(X)$ commute with multiplication by $\bhm$ and $\sigma$, with
\begin{equation}
\label{EqQDiffEx}
  \bhm \in \DiffQ^{0,(-1,-1,-1,-1,0)}(X),\quad
  \sigma \in \DiffQ^{0,(0,0,1,1,1)}(X).
\end{equation}
Thus, for instance, it suffices to restrict in Definition~\ref{DefQDiff} to the case $\alpha_\mface=\alpha_\nface=0$. We also remark that a $\sigma$-independent q-differential operator $A\in\Diffq^m(X)$ defines an element $A\in\DiffQ^{m,(0,0,0,m,m)}(X)$; this is a consequence of the fact that $V\in\Vq(X)$, regarded as a $\sigma$-independent vector field on $X_\Qop$, satisfies $V\in\rho_\iface^{-1}\rho_\sface^{-1}\VQ(X)$, as follows directly from the definition. Recalling~\eqref{EqqVFX}, this implies that, regarding an operator on $X$ as an $\bhm$- and $\sigma$-independent operator on $X_\qop$ and $X_\Qop$,
\begin{equation}
\label{EqQDiffEx2}
  \Diff^m(X)\subset\rho_{\zface_\qop}^{-m}\Diffq^m(X)\subset\DiffQ^{m,(m,0,m,m,m)}(X).
\end{equation}

The principal symbol $\sigmaQ^1(V)$ of $V\in\VQ(X)$, defined as mapping $\xi\in\TQ^*X$ to $i\xi(V)$, is a fiber-linear function. The property~\eqref{EqQVFComm} implies that the principal symbol extends to a multiplicative family of maps $\sigmaQ^m$ with the property that
\begin{equation}
\label{EqQDiffSym}
  0 \to \rho_\sface\rho_\iface\DiffQ^{m-1}(X) \hra \DiffQ^m(X) \xra{\sigmaQ^m} P^m(\TQ^*X)/\rho_\sface\rho_\iface P^{m-1}(\TQ^*X) \to 0
\end{equation}
is a short exact sequence. By Lemma~\ref{LemmaQVFRes}, we get multiplicative normal operator maps
\begin{equation}
\label{EqQNormOp}
\begin{alignedat}{2}
  N_\zface &\colon \DiffQ^m(X) \to \CI(\ol\R;\Diffb^m(\hat X)), \\
  N_{\mface_{\sigma_0}} &\colon \DiffQ^m(X) \to \Diffb^m(\dot X), &\qquad
  N_{\mface_{\pm,\semi}} &\colon \DiffQ^m(X) \to \Diffch^m(\dot X), \\
  N_{\nface_{\pm,\low}} &\colon \DiffQ^m(X) \to \Diffscbt^m(\hat X), &\qquad
  N_{\nface_{\pm,\tilde\semi}} &\colon \DiffQ^m(X) \to \Diff_{\scop\tilde\semi}^m(\hat X),
\end{alignedat}
\end{equation}
as well as similar maps on spaces of weighted operators (with the weight at $H$ required to be $0$ in the definition of $N_H$). Moreover, the principal symbol of $N_\zface(P)$ is given by the restriction of $\sigmaQ^m(P)$ to $\TQ^*_\zface X\cong\ol\R\times\Tb^*\hat X$ via Corollary~\ref{CorQBundle}, similarly for the principal symbols of the other normal operators. Note also that the vanishing of $N_\zface(P)$, resp.\ $N_{\mface_{\sigma_0}}(P)$ for all $\sigma_0$, resp.\ $N_{\nface_{\pm,\low}}(P)$ and $N_{\nface_{\pm,\tilde\semi}}(P)$ implies that $P$ vanishes to leading order the appropriate boundary hypersurface, i.e.\ $P\in\rho_\zface\DiffQ^m(X)$, resp.\ $P\in\rho_\mface\DiffQ^m(X)$, resp.\ $P\in\rho_{\nface_\pm}\DiffQ^m(X)$. Together with $\sigmaQ^m(P)$, these normal operators thus capture $P$ to leading order in all $6$ senses (corresponding to the $6$ orders in Definition~\ref{DefQDiff}).

Furthermore, we can restrict to level sets $\sigma^{-1}(\sigma_0)$ or $\tilde\sigma(\tilde\sigma_0)$ for $\sigma_0\in\R$ or $\tilde\sigma_0\in\R\setminus\{0\}$. This gives normal operator homomorphisms
\[
  N_{\sigma_0} \colon \DiffQ^m(X)\to\Diffq^m(X),\qquad
  N_{\nface_{\tilde\sigma_0}} \colon \DiffQ^m(X)\to\Diffsc^m(\hat X).
\]

See~\S\ref{SsKS} for the way in which the spectral family of interest in Theorem~\ref{ThmI} fits into this framework of Q-analysis.

\subsection{Q-pseudodifferential operators}
\label{SsQP}

The microlocal analysis of Q-differential operators relies on a corresponding Q-pseudodifferential algebra, which we proceed to define; analogously to Q-differential operators, a Q-ps.d.o.\ $A$ will be a smooth family $\R\times(0,1]\ni(\sigma,\bhm)\mapsto A_{\sigma,\bhm}$ of ordinary ps.d.o.s on a manifold $X$ without boundary.

\begin{definition}[Q-double space]
\label{DefQPDouble}
  Recall the q-double space $X_\qop^2$ of $X$ and its submanifolds $\zface_{\qop,2}\cong\hat X^2_\bop$, $\mface_{\qop,2}\cong\dot X^2_\bop$, $\lb_{\qop,2}$, $\rb_{\qop,2}$, and $\diag_\qop$ from Definition~\ref{DefqDouble}. The \emph{Q-double space} of $X$ is then defined as the resolution of $\ol{\R_\sigma}\times X^2_\qop$ given by
  \begin{equation}
  \label{EqQPDouble}
  \begin{split}
    X^2_\Qop &:= \bigl[ \ol\R \times X^2_\qop; \pa\ol\R\times\zface_{\qop,2}; \pa\ol\R\times\diag_\qop; \\
      &\quad\qquad \pa\ol\R\times(\diag_\qop\cap\,\mface_{\qop,2}), \pa\ol\R\times\lb_{\qop,2}, \pa\ol\R\times\rb_{\qop,2}; \pa\ol\R\times\mface_{\qop,2} \bigr].
  \end{split}
  \end{equation}
  We label its boundary hypersurfaces as follows:
  \begin{enumerate}
  \item $\zface_2$ is the lift of $\ol\R\times\zface_{\qop,2}$;
  \item $\mface_2$ is the lift of $\ol\R\times\mface_{\qop,2}$;
  \item $\nface_2$ is the lift of $\pa\ol\R\times\zface_{\qop,2}$;
  \item $\iface_2$ is the lift of $\pa\ol\R\times(\diag_\qop\cap\,\mface_{\qop,2})$, and $\iface'_2$ is the lift of $\pa\ol\R\times\mface_{\qop,2}$;
  \item $\sface_2$ is the lift of $\pa\ol\R\times\diag_\qop$, and $\sface'_2$ is the lift of $\pa\ol\R\times X^2_\qop$;
  \item $\lb_2$, resp.\ $\rb_2$ is the lift of $\ol\R\times\lb_{\qop,2}$, resp.\ $\ol\R\times\rb_{\qop,2}$;
  \item $\tlb_2$, resp.\ $\trb_2$ is the lift of $\pa\ol\R\times\lb_{\qop,2}$, resp.\ $\pa\ol\R\times\rb_{\qop,2}$.
  \end{enumerate}
  We denote by $\nface_{2,\pm}$ the connected components of $\nface_2$ corresponding to the value of $\sigma=\pm\infty$; similarly for $\iface_{2,\pm}$, $\iface_{2,\pm}'$, $\sface_{2,\pm}$, $\sface_{2,\pm}'$, $\tlb_{2,\pm}$, $\trb_{2,\pm}$. Furthermore, we write for $\sigma_0\in\R$ and $\tilde\sigma_0\in\R\setminus\{0\}$
  \begin{alignat*}{2}
    \mface_{2,\sigma_0} &:= \mface_2 \cap \sigma^{-1}(\sigma_0),&\qquad
    \mface_{2,\pm,\semi} &:= \mface_{2,\pm} \cap \sigma^{-1}(\pm[1,\infty]), \\
    \nface_{2,\pm,\low} &:= \nface_{2,\pm}\cap\tilde\sigma^{-1}(\pm[0,1]),&\qquad
    \nface_{2,\pm,\tilde\semi} &:= \nface_{2,\pm}\cap\tilde\sigma^{-1}(\pm[1,\infty]),
  \end{alignat*}
  and $\nface_{2,\tilde\sigma_0}:=\nface_2\cap\tilde\sigma^{-1}(\tilde\sigma_0)$. Finally, $\diag_\Qop$ denotes the lift of $\ol\R\times\diag_\qop$.
\end{definition}

In~\eqref{EqQPDouble}, note that $\pa\ol\R\times\lb_{\qop,2}$, $\pa\ol\R\times\rb_{\qop,2}$, and $\pa\ol\R\times(\diag_\qop\cap\,\mface_{\qop,2})$ are disjoint, and hence they can be blown up in any order.

\begin{lemma}[b-fibrations from the Q-double space]
\label{LemmaQPbFib}
  The left projection, resp.\ right projection $\R\times(0,1]\times X\times X\ni(\sigma,\bhm,x,x')\mapsto(\sigma,\bhm,x)\in\R\times(0,1]\times X$, resp.\ $(\sigma,\bhm,x')$ lifts to a b-fibration $\pi_L$, resp.\ $\pi_R\colon X^2_\Qop\to X_\Qop$.
\end{lemma}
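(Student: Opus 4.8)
The plan is to prove Lemma~\ref{LemmaQPbFib} by the same iterated-blow-up bookkeeping used for Lemmas~\ref{LemmaqbFib} and its analogues, working only with the left projection (the right projection being symmetric under the transposition $(x,x')\mapsto(x',x)$, which lifts to a diffeomorphism of $X^2_\Qop$ interchanging $\lb$-type and $\rb$-type faces). First I would recall that the left projection $\ol\R\times[0,1]\times X^2\to\ol\R\times[0,1]\times X$ already lifts, by Lemma~\ref{LemmaqbFib} applied fiberwise over $\ol\R$, to a b-fibration $\ol\R\times X^2_\qop\to\ol\R\times X_\qop$; call it $\pi_L^\qop$. The content of the lemma is then that the \emph{further} blow-ups in~\eqref{EqQPDouble} on the source, and in~\eqref{EqQSingleXq} on the target, are compatible with $\pi_L^\qop$, so that $\pi_L^\qop$ lifts to a b-fibration $X^2_\Qop\to X_\Qop$.

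The key steps, carried out in the order the blow-ups are performed in~\eqref{EqQPDouble}, are as follows. (i) Blow up $\pa\ol\R\times\zface_{\qop,2}$ in the source: since $\pi_L^\qop(\pa\ol\R\times\zface_{\qop,2})=\pa\ol\R\times\zface_\qop$, which is the first center in~\eqref{EqQSingleXq}, and $\pi_L^\qop$ maps a neighborhood of $\pa\ol\R\times\zface_{\qop,2}$ submersively onto a neighborhood of $\pa\ol\R\times\zface_\qop$ in the b-category, standard results on lifting b-fibrations under blow-up of the preimage of a boundary face (e.g.\ \cite[Proposition~5.12.1]{MelroseDiffOnMwc}) give a b-fibration $[\ol\R\times X^2_\qop;\pa\ol\R\times\zface_{\qop,2}]\to[\ol\R\times X_\qop;\pa\ol\R\times\zface_\qop]$, i.e.\ after this step the target is resolved to match the first blow-up in~\eqref{EqQSingleXq}. (ii) Blow up $\pa\ol\R\times\diag_\qop$ in the source: this center is \emph{interior} to the target (it maps to $\pa\ol\R\times X_\qop$, with $\pi_L^\qop$ transversal to $\diag_\qop$ along its lift, by the transversality of $\pi_L$ to $\diag_\qop$ noted after Lemma~\ref{LemmaqbFib}), so blowing it up does not require a corresponding blow-up downstairs, and the b-fibration persists because blow-up of a p-submanifold transversal to the fibers of a b-fibration yields a b-fibration. (iii) Blow up the three disjoint centers $\pa\ol\R\times(\diag_\qop\cap\mface_{\qop,2})$, $\pa\ol\R\times\lb_{\qop,2}$, $\pa\ol\R\times\rb_{\qop,2}$: the first is again transversal to the fibers (hence interior, no downstairs blow-up needed); $\pa\ol\R\times\lb_{\qop,2}$ maps under $\pi_L^\qop$ to $\pa\ol\R\times\zface_\qop$ (already blown up, so its lift is a boundary face mapping into the new front face, and the b-fibration property is preserved under blow-up of such a p-submanifold as in step (i)); and $\pa\ol\R\times\rb_{\qop,2}$ maps to $\pa\ol\R\times\zface_\qop$ as well but is mapped \emph{onto} it—more precisely, it is a p-submanifold of the already-resolved source over which the fibers of the lifted projection are transversal, so again blow-up preserves b-fibrationness; one should check that these three centers remain disjoint after step (ii), which they do since $\diag_\qop$ meets $\lb_{\qop,2}$ and $\rb_{\qop,2}$ only at the corner already separated by the $\diag_\qop$-blow-up. (iv) Finally blow up $\pa\ol\R\times\mface_{\qop,2}$ in the source: this maps to $\pa\ol\R\times\mface_\qop$, the third center in~\eqref{EqQSingleXq}, and the argument of step (i) applies once more to give the desired b-fibration $\pi_L\colon X^2_\Qop\to X_\Qop$.

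At each stage one must verify the two hypotheses needed to push a b-fibration through a blow-up: (a) the center upstairs is a p-submanifold of the already-resolved space, and (b) it is either the preimage of (a p-submanifold containing) a boundary p-submanifold downstairs that one simultaneously blows up, or it is b-transversal to the fibers of the current b-fibration (so that no downstairs blow-up is needed); in the former case one invokes the commutation/lifting lemmas of \cite[\S5]{MelroseDiffOnMwc}. The bulk of the work is the routine but attention-demanding check, in each of the local coordinate charts listed before Lemma~\ref{LemmaQVFRes} (adapted to the double space), that these transversality and containment relations hold and that the orders in which we blow up centers can be interchanged where claimed—exactly the style of argument already carried out in the proofs of Lemma~\ref{LemmaqbFib} and the q-triple-space fibration lemma. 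The main obstacle I anticipate is purely organizational: keeping track of which of the many boundary hypersurfaces of $X^2_\Qop$ maps into which hypersurface of $X_\Qop$ under each intermediate projection—in particular confirming that $\sface'_2$, $\iface'_2$, $\tlb_2$, $\trb_2$ all map to $\sface,\iface$ or their corners and contribute no unexpected blow-down, so that the final map is genuinely a b-\emph{fibration} (surjective b-submersion with the boundary-face-mapping property) rather than merely a b-map; there is no conceptual difficulty, only the need to be careful with the combinatorics of the five (resp.\ seven) boundary faces.
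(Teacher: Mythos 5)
Your high-level strategy matches the paper's (lift the b-fibration $\tilde\pi_L=\Id\times\pi_L^\qop\colon\ol\R\times X^2_\qop\to\ol\R\times X_\qop$ from Lemma~\ref{LemmaqbFib} through the further blow-ups using \cite[\S5]{MelroseDiffOnMwc}), but the step-by-step bookkeeping has a genuine gap. The key observation you are missing is the computation of preimages: $\tilde\pi_L^{-1}(\pa\ol\R\times\zface_\qop)=(\pa\ol\R\times\zface_{\qop,2})\cup(\pa\ol\R\times\lb_{\qop,2})$, since $\lb_{\qop,2}$ is the lift of $\{0\}\times\{0\}\times X$ and hence maps onto $\zface_\qop$ under the left projection. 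In your step (i), you blow up only $\pa\ol\R\times\zface_{\qop,2}$ in the source and claim \cite[Proposition~5.12.1]{MelroseDiffOnMwc} gives a b-fibration onto $[\ol\R\times X_\qop;\pa\ol\R\times\zface_\qop]$; but that lemma requires blowing up the \emph{entire} preimage of the downstairs center, so $\pa\ol\R\times\lb_{\qop,2}$ must be blown up at this same stage. With only one component of the preimage resolved, the lifted map is not a b-fibration (and need not even extend smoothly across the un-resolved component $\pa\ol\R\times\lb_{\qop,2}$), so the inductive ``each blow-up preserves the b-fibration property'' chain breaks at the very first step, and postponing the $\lb$-blow-up to step (iii) cannot repair it.

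There is also a concrete misidentification in your step (iii): you assert that $\pa\ol\R\times\rb_{\qop,2}$ maps to $\pa\ol\R\times\zface_\qop$. In fact $\rb_{\qop,2}$ is the lift of $\{0\}\times X\times\{0\}$, which under the \emph{left} projection $(\bhm,x,x')\mapsto(\bhm,x)$ goes to the lift of $\{0\}\times X$, i.e.\ to $\mface_\qop$ (not $\zface_\qop$). Thus $\pa\ol\R\times\rb_{\qop,2}$ belongs with $\pa\ol\R\times\mface_{\qop,2}$ and (after $\pa\ol\R\times\diag_\qop$ has been blown up) the lift of $\pa\ol\R\times(\diag_\qop\cap\mface_{\qop,2})$ as components of the preimage of $\pa\ol\R\times\mface_\qop$, and all three must be blown up together before that downstairs center. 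Once the preimage decomposition is set up correctly, the proof proceeds along the lines you describe, but the order of blow-ups must be rearranged to respect it rather than to follow the order in~\eqref{EqQPDouble} literally, which is why the paper also invokes the commutation lemma \cite[Proposition~5.11.2]{MelroseDiffOnMwc}.
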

\begin{proof}
  We only discuss the case of the left projection. Using Lemma~\ref{LemmaqbFib}, we start with the fact that the left projection lifts to a b-fibration $\tilde\pi_L\colon\ol\R\times X_\qop^2\to\ol\R\times X_\qop$; the preimages of the centers in~\eqref{EqQSingleXq} under it are
  \begin{equation}
  \label{EqQPbFibPre}
  \begin{split}
    \tilde\pi_L^{-1}(\pa\ol\R\times\zface_\qop) &= (\pa\ol\R\times\zface_{\qop,2}) \cup (\pa\ol\R\times\lb_{\qop,2}), \\
    \tilde\pi_L^{-1}(\pa\ol\R\times\mface_\qop) &= (\pa\ol\R\times\mface_{\qop,2}) \cup (\pa\ol\R\times\rb_{\qop,2}).
  \end{split}
  \end{equation}
  From the first line and \cite[Proposition~5.12.1]{MelroseDiffOnMwc}, we deduce that the lift of $\tilde\pi_L$ to
  \[
    \bigl[\ol\R\times X_\qop^2; \pa\ol\R\times\zface_{\qop,2}; \pa\ol\R\times\lb_{\qop,2} \bigr] \to \bigl[ \ol\R\times X_\qop; \pa\ol\R\times\zface_\qop \bigr]
  \]
  is a b-fibration. Since this is b-transversal to the lift of $\pa\ol\R\times\diag_\qop$ (which is mapped diffeomorphically to a copy of $X_\qop$), this lifts to a b-fibration
  \[
    \bigl[\ol\R\times X_\qop^2; \pa\ol\R\times\zface_{\qop,2}; \pa\ol\R\times\diag_\qop, \pa\ol\R\times\lb_{\qop,2} \bigr] \to \bigl[ \ol\R\times X_\qop; \pa\ol\R\times\zface_\qop \bigr].
  \]
  By~\eqref{EqQPbFibPre}, the preimage of the lift of $\pa\ol\R\times\mface_\qop$ under this map is the union of the lifts of $\pa\ol\R\times\mface_{\qop,2}$, $\pa\ol\R\times(\diag_\qop\cap\,\mface_{\qop,2})$, and $\pa\ol\R\times\rb_{\qop,2}$. By \cite[Proposition~5.11.2]{MelroseDiffOnMwc}, the lift
  \[
    \bigl[\ol\R\times X_\qop^2; \pa\ol\R\times\zface_{\qop,2}; \pa\ol\R\times\diag_\qop, \pa\ol\R\times\lb_{\qop,2}; \pa\ol\R\times\rb_{\qop,2}; \pa\ol\R\times(\diag_\qop\cap\mface_{\qop,2}); \pa\ol\R\times\mface_{\qop,2} \bigr] \to X_\Qop
  \]
  is therefore a b-fibration. This finishes the proof.
\end{proof}

\begin{definition}[Q-pseudodifferential operators]
\label{DefQP}
  Let $s\in\R$ and $\alpha=(\alpha_H)$ where $\alpha_H\in\R$ for $H=\mface,\zface,\nface,\iface,\sface$. Then $\PsiQ^{s,\alpha}(X)$ consists of all smooth families $A=(A_{\bhm,\sigma})_{\bhm\in(0,1],\sigma\in\R}$ of bounded linear operators on $\CIc(X)$ whose Schwartz kernels are elements of
  \begin{equation}
  \label{EqQP}
    \rho_{\zface_2}^{-\alpha_\zface}\rho_{\mface_2}^{-\alpha_\mface}\rho_{\nface_2}^{-\alpha_\nface}\rho_{\iface_2}^{-\alpha_\iface}\rho_{\sface_2}^{-\alpha_\sface}I^{m-\frac12}(X^2_\Qop,\diag_\Qop;\pi_R^*\OmegaQ X)
  \end{equation}
  which are conormal down to all boundary hypersurfaces of $X^2_\Qop$ and vanish to infinite order at all boundary hypersurfaces other than $\mface_2$, $\zface_2$, $\nface_2$, $\iface_2$, $\sface_2$ (and the lift of $\bhm^{-1}(1)$). The subspace of operators whose Schwartz kernels are classical conormal at $\zface_2,\mface_2,\nface_2$ is denoted $\Psi_{\Qop,\cl}^{s,\alpha}(X)$.
\end{definition}

\begin{rmk}[Defining functions]
\label{RmkQPDefFn}
  Note that $\pi_L^{-1}(\zface)=\zface_2\cup\lb_2$, and indeed the defining function $\zface$ lifts to $X_\Qop^2$ under $\pi_L$ to a product of defining functions of $\zface_2$ and $\lb_2$. In view of the infinite order of vanishing of Schwartz kernels of Q-ps.d.o.s at $\lb_2$, we can therefore replace the weight $\rho_{\zface_2}$ in~\eqref{EqQP} by (the left lift of) $\rho_\zface$. Similarly,
  \begin{alignat*}{2}
    \pi_L^{-1}(\mface)&=\mface_2\cup\rb_2,&\qquad
    \pi_L^{-1}(\nface)&=\nface_2\cup\tlb_2, \\
    \pi_L^{-1}(\iface)&=\iface_2\cup\iface'_2\cup\trb_2,&\qquad
    \pi_L^{-1}(\sface)&=\sface_2\cup\sface'_2.
  \end{alignat*}
  Similar statements hold for $\pi_R$ in place of $\pi_L$. Together, they imply that $\PsiQ^{s,\alpha}(X)$ is invariant under conjugation by weights $\prod\rho_H^{-\alpha_H}$ on $X_\Qop$.
\end{rmk}

For local coordinate descriptions, we shall use the smooth functions on $X_\Qop^2$ obtained by lifting coordinates on $X_\Qop$ to the left, resp.\ right factor; the left lift will be denoted by the same symbol, and the right lift with the primed symbol. For example, $\hat x$ and $\hat x'$ denote the left and right lift of the function on $X_\Qop$ denoted $\hat x$ in~\eqref{EqqCoord2}.

For bounded $\sigma$, Q-ps.d.o.s are smooth families (in $\sigma$) of q-ps.d.o.s, for which a local coordinate description was given in~\eqref{EqqQuant}. Consider next the region $|\hat x|,|\hat x'|\lesssim 1$ for $\sigma\gtrsim 1$. Near $\{\infty\}\times(\diag_\qop\cap\,\zface_{\qop,2}^\circ)\subset\ol\R\times X_\qop^2$, we can then use local coordinates $h\geq 0$, $\bhm\geq 0$, $\hat x'$, and $y:=\hat x-\hat x'$, with the diagonal defined by $y=0$. Upon blowing up $h=\bhm=0$, the lift of $h=0$ is defined by $\frac{h}{h+\bhm}=0$; upon passing to the subsequent blow-up of the lift of $\pa\ol\R\times\diag_\qop$, coordinates near the Q-diagonal are thus
\[
  y_\Qop:=\frac{y}{h/(h+\bhm)},
\]
and therefore a typical element of $\PsiQ^{s,\alpha}(X)$ is given by
\begin{equation}
\label{EqQPQuant1}
  (\Op_{\Qop,\bhm,h^{-1}}(a)u)(\hat x) = (2\pi)^{-n}\int \exp\Bigl(i\frac{\hat x-\hat x'}{h/(h+\bhm)}\cdot\xi\Bigr) \chi\bigl(|\hat x-\hat x'|\bigr) a(h,\bhm,\hat x,\xi)\,\dd\xi
\end{equation}
where $a$ is a symbol, or more precisely $a$ is conormal on $X_\Qop\times\ol{\R^n}$ with order $\alpha_H$ at $H\times\ol{\R^n}$ for $H=\zface,\nface,\sface$, and order $s$ at $X_\qop\times\pa\ol{\R^n}$; and $\chi\in\CIc((-\half,\half))$ is identically $1$ near $0$. Thus,~\eqref{EqQPQuant1} is essentially a semiclassical ps.d.o.\ with semiclassical parameter $\frac{h}{h+\bhm}$. We also note that the left lift of the basis $\frac{h}{h+\bhm}\pa_{\hat x^j}$ of $\VQ(X)$ in this coordinate system (see~\eqref{EqQSpan}) is given by $\pa_{y_\Qop^j}$, which is transversal to $\diag_\Qop=y_\Qop^{-1}(0)$.

Working in the region $|\hat x|,|\hat x'|\gtrsim 1$ for $\sigma\gtrsim 1$, we can use as smooth coordinates near $\{\infty\}\times\diag_\qop\subset\ol\R\times X^2_\qop$ the functions $h\geq 0$, $\frac{\bhm}{r'}\geq 0$, $r'\geq 0$, $\omega'\in\R^{n-1}$, $z=\frac{r-r'}{r'}$, $w=\omega-\omega'\in\R^{n-1}$ where we fix local coordinates on $\Sph^{n-1}$. Upon blowing up $\pa\ol\R\times\zface_{\qop,2}$ (given by $h=r'=0$), the lift of $h=0$ is given by $\frac{h}{h+r'}=0$; passing to the subsequent blow-up of the lift of $\pa\ol\R\times\diag_\qop$, coordinates transversal to the lifted diagonal are thus
\[
  (z_\Qop,w_\Qop) := \frac{(z,w)}{h/(h+r')}.
\]
These coordinates remain transversal to the lift of the diagonal to the subsequent blow-ups in~\eqref{EqQPDouble}. Thus, an element of $\PsiQ^{s,\alpha}(X)$ is given by
\begin{equation}
\label{EqQPQuant2}
\begin{split}
  (\Op_{\Qop,\bhm,h^{-1}}(a)u)(r,\omega) &= (2\pi)^{-n}\iint \exp\Bigl[i\Bigl(\frac{r-r'}{r'\frac{h}{h+r'}}\xi + \frac{\omega-\omega'}{h/(h+r')}\cdot\eta\Bigr)\Bigr] \\
    &\quad\hspace{6em} \times \chi\Bigl(\Bigl|\frac{r-r'}{r'}\Bigr|\Bigr)\chi(|\omega-\omega'|) a(h,\bhm,r,\xi,\eta)\,\dd\xi\,\dd\eta,
\end{split}
\end{equation}
where $a$ is conormal on $X_\Qop\times\ol{\R^n_{(\xi,\eta)}}$ with order $\alpha_H$ at $H\times\ol\R^n$ for all boundary hypersurfaces $H\subset X_\Qop$, and order $s$ at $X_\Qop\times\pa\ol{\R^n_{(\xi,\eta)}}$. Since the second spanning set of Q-vector fields in~\eqref{EqQSpan} lifts to the left factor of $X_\Qop^2$ as $\pa_{z_\Qop}$, $\pa_{w_\Qop}$, we conclude that also in this region the left lift of $\VQ(X)$ is transversal to $\diag_\Qop$.

As a consequence of the two transversality statements, we obtain a bundle isomorphism $\TQ X\cong T_{\diag_\Qop}X_\Qop^2/T\diag_\Qop=N\diag_\Qop$ given by the left lift; and therefore
\begin{equation}
\label{EqQPCon}
  N^*\diag_\Qop \cong \TQ^*X.
\end{equation}
Moreover, for $m\in\N_0$, we conclude that $\DiffQ^{m,\alpha}(X)\subset\PsiQ^{m,\alpha}(X)$ consists of those operators whose Schwartz kernels are Dirac distributions at $\diag_\Qop$. Generalizing~\eqref{EqQDiffSym}, the principal symbol map $\sigmaQ^{s,\alpha}$ on $\PsiQ^{s,\alpha}(X)$ fits into the short exact sequence
\[
  0 \to \rho_\iface\rho_\sface\PsiQ^{s,\alpha}(X) \hra \PsiQ^{s,\alpha}(X) \xra{\sigmaQ^{s,\alpha}} (S^{s,\alpha}/\rho_\iface\rho_\sface S^{s-1,\alpha})(\TQ^*X) \to 0.
\]
Finally, we conclude that pushforward along $\pi_L$ is a continuous map from $\Psi_\Qop^s(X)$, resp. $\Psi_{\Qop,\cl}^s(X)$ into $\cA^0(X_\Qop)$, resp.\ $\CI(X_\Qop)$; thus, Q-ps.d.o.s define bounded linear maps on $\cA^0(X_\Qop)$, or on $\CI(X_\Qop)$ for classical ps.d.o.s.

We may allow for the orders $s$, $\alpha_\iface$, $\alpha_\sface$ to be variable; in this paper we only need to consider the case that the $\iface$-order is variable,
\[
  \upalpha_\iface \in \CI(\ol{\TQ^*_\iface}X),
\]
while $s,\alpha_\sface$ are constant; for $\alpha=(\alpha_\zface,\alpha_\mface,\alpha_\nface,\upalpha_\iface,\alpha_\sface)$, the principal symbol map then takes values in $(S^{s,\alpha}/\rho_\iface^{1-2\delta}\rho_\sface S^{s-1,\alpha})(\TQ^*X)$ for any $\delta>0$.

In order to study the normal operators of Q-ps.d.o.s, we need the following result, which is the double space analogue of Lemma~\ref{LemmaQVFRes}:

\begin{prop}[Boundary hypersurfaces of $X_\Qop^2$]
\label{PropQPBdy}
  We have the following natural diffeomorphisms:
  \begin{enumerate}
  \item\label{ItQPBdyzf2} $\zface_2\cong\ol\R\times\hat X^2_\bop$;
  \item\label{ItQPBdymf2lo} $\mface_{2,\sigma_0}\cong\dot X^2_\bop$ (for $\sigma_0\in\R$);
  \item\label{ItQPBdymf2hi} $\mface_{2,\pm,\semi}\cong\dot X^2_\chop$ (see~\S\usref{SsPch}) with semiclassical parameter $h=|\sigma|^{-1}$;
  \item\label{ItQPBdynf2lo} $\nface_{2,\pm,\low}\cong\hat X^2_\scbtop$ (see~\S\usref{SsPscbt}) with spectral parameter $\tilde\sigma=\bhm\sigma$.
  \item\label{ItQPBdynf2hi} $\nface_{2,\pm,\tilde\semi}\cong\hat X^2_\schop$ (see~\S\usref{SsPbsc}) with semiclassical parameter $\tilde h=|\tilde\sigma|^{-1}$.
  \end{enumerate}
  That is, the local coordinates $\sigma,\hat x,\hat x'$ restrict to a map $\zface_2^\circ\to\R\times\R^n_{\hat x}\times\R^n_{\hat x'}$ which extends by continuity to the diffeomorphism in part~\eqref{ItQPBdyzf2}; similarly for the other diffeomorphisms.
\end{prop}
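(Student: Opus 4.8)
The plan is to prove each of the five diffeomorphisms by the same strategy used in Lemma~\ref{LemmaqBdy}: track the effect of the iterated blow-ups in~\eqref{EqQPDouble} on the relevant boundary hypersurface of $\ol\R\times X_\qop^2$, commuting and dropping blow-ups of centers that are either disjoint from or transversal to the hypersurface in question, and recognizing the end result as one of the model double spaces from Appendix~\ref{SP}. Concretely, for each boundary hypersurface $H$ of $X_\Qop^2$ we identify which of the six blow-up centers in~\eqref{EqQPDouble} meet the lift of the corresponding face of $\ol\R\times X_\qop^2$; the ones that do not meet it are irrelevant, and for the ones that do, their intersection with the face is the center of an \emph{induced} blow-up on the face.

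First I would dispose of $\zface_2$ (part~\eqref{ItQPBdyzf2}). Its preimage originates from $\ol\R\times\zface_{\qop,2}$; by Lemma~\ref{LemmaqBdy}, $\zface_{\qop,2}\cong\hat X_\bop^2$, so we start from $\ol\R\times\hat X_\bop^2$. The center $\pa\ol\R\times\zface_{\qop,2}$ is blown up, but since $\zface_{\qop,2}$ is already a boundary hypersurface of $X_\qop^2$, blowing up $\pa\ol\R\times\zface_{\qop,2}$ inside $\ol\R\times X_\qop^2$ does not alter the smooth structure of the lifted $\zface_2$ (it only blows up $\pa\ol\R$ within $\ol\R$); the remaining five centers in~\eqref{EqQPDouble} are either disjoint from $\zface_{\qop,2}$ (the $\lb_{\qop,2}$, $\rb_{\qop,2}$, $\mface_{\qop,2}$ centers, which lie in other faces) or meet $\zface_{\qop,2}$ in lower-dimensional submanifolds that, after the first blow-up, are separated off — one has to check in the local coordinates $\sigma,\hat x,\hat x'$ (equivalently $h,\bhm,\hat x'$ and $y=\hat x-\hat x'$ near the diagonal) that the lifts of $\pa\ol\R\times\diag_\qop$ and $\pa\ol\R\times(\diag_\qop\cap\mface_{\qop,2})$ do not touch $\zface_2^\circ$. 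This yields $\zface_2\cong\ol\R\times\hat X_\bop^2$.

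Next, for $\mface_{2,\sigma_0}$ and $\mface_{2,\pm,\semi}$ (parts~\eqref{ItQPBdymf2lo}, \eqref{ItQPBdymf2hi}): the relevant face descends from $\ol\R\times\mface_{\qop,2}\cong\ol\R\times\dot X_\bop^2$. Restricting to a fixed finite $\sigma=\sigma_0$ kills all the $\pa\ol\R$-centers and leaves just $\dot X_\bop^2$. For $\mface_{2,\pm,\semi}$, i.e.\ $|\sigma|\ge1$, we rescale $h=|\sigma|^{-1}$ and trace through the blow-ups of $\pa\ol\R\times\diag_\qop$, $\pa\ol\R\times(\diag_\qop\cap\mface_{\qop,2})$, $\pa\ol\R\times\mface_{\qop,2}$; using the coordinate analysis already carried out before~\eqref{EqQPQuant2} (where $(z_\Qop,w_\Qop)=(z,w)/(h/(h+r'))$), one recognizes the resulting semiclassical blow-up of $\dot X_\bop^2$ as the semiclassical cone double space $\dot X_\chop^2$ of~\S\ref{SsPch}. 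For $\nface_{2,\pm,\low}$ and $\nface_{2,\pm,\tilde\semi}$ (parts~\eqref{ItQPBdynf2lo}, \eqref{ItQPBdynf2hi}): here the face comes from $\pa\ol\R\times\zface_{\qop,2}\cong\pa\ol\R\times\hat X_\bop^2$, and the subsequent blow-ups of $\pa\ol\R\times\diag_\qop$, $\pa\ol\R\times(\diag_\qop\cap\mface_{\qop,2})$, $\pa\ol\R\times\lb_{\qop,2}$, $\pa\ol\R\times\rb_{\qop,2}$, $\pa\ol\R\times\mface_{\qop,2}$ restrict to $\nface_2$ to produce, after identifying $\tilde\sigma=\bhm\sigma$ as spectral parameter (for $|\tilde\sigma|\le1$) or $\tilde h=|\tilde\sigma|^{-1}$ as semiclassical parameter (for $|\tilde\sigma|\ge1$), exactly the scattering-b-transition double space $\hat X_\scbtop^2$ of~\S\ref{SsPbsc} or the semiclassical scattering double space $\hat X_\schop^2$ of~\S\ref{SsPbsc}; the diffeomorphism~\eqref{EqQnfBlowup} from the proof of Proposition~\ref{PropQStruct}\eqref{ItQStructnf} is the germ of the computation that matches the transition-face structure.

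The main obstacle will be part~\eqref{ItQPBdynf2lo} (and to a lesser extent~\eqref{ItQPBdynf2hi}): the face $\nface_{2,\pm,\low}$ sees \emph{five} of the six blow-ups, and one must carefully check, in overlapping coordinate charts near $\nface_2\cap\zface_2$, near $\nface_2\cap\diag_\Qop$, and near $\nface_2\cap\mface_2$, that the order in which these blow-ups restrict and the coordinate rescalings $\hat\rho,\tilde\rho,\tilde\sigma$ reproduce precisely the defining resolution of the $\scbtop$-double space (with its transition face and its scattering behavior at $\tilde\rho=0$ versus $\hat\rho=0$). Establishing the last sentence of the statement — that $\sigma,\hat x,\hat x'$ and the analogous coordinate tuples extend by continuity to the claimed diffeomorphisms — is then a matter of observing that in each case the stated functions are smooth coordinates on the interior of the face and restrict to a diffeomorphism onto the interior of the model, which by properness of the blow-down maps extends to the boundary; this is routine once the structural identification is in place. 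As in Lemmas~\ref{LemmaqbFib} and~\ref{LemmaQPbFib}, the commutation of blow-ups is justified by \cite[Propositions~5.11.2 and~5.12.1]{MelroseDiffOnMwc}.
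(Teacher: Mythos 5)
Your high-level strategy — restrict each blow-up in~\eqref{EqQPDouble} to the relevant boundary face, drop the irrelevant centers, and recognize the remaining induced resolution as the appropriate model double space — is the right one and matches what the paper does for parts~\eqref{ItQPBdyzf2}--\eqref{ItQPBdymf2hi}. You also correctly identify~\eqref{EqQnfBlowup} as "the germ of the computation" for parts~\eqref{ItQPBdynf2lo}--\eqref{ItQPBdynf2hi}. But for those last two parts, the proposal stops at a description of the obstacle rather than an argument, and in one crucial place the fallback you invoke is incorrect.

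Concretely: after blowing up $\{\infty\}\times\zface_{\qop,2}$, the front face is \emph{not} $\pa\ol\R\times\hat X^2_\bop$ (as your phrasing suggests) but $[0,\infty]_{\tilde h'}\times\hat X^2_\bop$ with $\tilde h'=h/\rho_{\zface_{\qop,2}}$, and the subsequent centers restrict to it as, in order, $\{0\}\times\diag_\bop$, $\{0\}\times\pa\diag_\bop$, $\{0\}\times\lb_\bop$, $\{0\}\times\rb_\bop$, $\{0\}\times\ff_\bop$. The essential step is then to pass from this presentation — where $\ff_\bop$ is blown up last — to one where $\ff_\bop$ is blown up \emph{first} (as in the definition of $X^2_\scbtop$), replacing $\tilde h'$ by $\tilde\sigma=\hat\rho_\tot/\tilde h'$. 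This is \emph{not} a commutation of blow-ups in a fixed ambient space: $\{0\}\times\ff_\bop$, $\{0\}\times\lb_\bop$, $\{0\}\times\rb_\bop$ are codimension-$2$ corners of $[0,\infty]\times\hat X^2_\bop$ that are neither nested nor transversal, so \cite[Propositions~5.11.2 and~5.12.1]{MelroseDiffOnMwc} do not apply. Your closing sentence appeals to precisely these commutation lemmas, which is therefore wrong for the step that actually matters. What is needed is a genuine change of coordinates (the double-space analogue of~\eqref{EqQnfBlowup}), and proving that it induces the claimed diffeomorphism of iterated blow-ups requires a hands-on local-coordinate check near each corner; this check is the core content of the paper's proof of parts~\eqref{ItQPBdynf2lo}--\eqref{ItQPBdynf2hi}, and your proposal neither carries it out nor acknowledges that it falls outside the standard commutation toolkit.

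A smaller imprecision in part~\eqref{ItQPBdyzf2}: $\lb_{\qop,2}$ and $\rb_{\qop,2}$ are \emph{not} disjoint from $\zface_{\qop,2}$ in $X^2_\qop$ (they meet it in corners). The correct statement is that the \emph{lifts} of $\pa\ol\R\times\lb_{\qop,2}$, etc., to $[\ol\R\times X^2_\qop;\pa\ol\R\times\zface_{\qop,2}]$ are disjoint from the lift of $\ol\R\times\zface_{\qop,2}$; the disjointness is created by the first blow-up, not present beforehand. Your conclusion is right, but the reason you give for it isn't.
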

\begin{proof}
  We obtain $\zface_2$ by first blowing up $\pa\ol\R\times\zface_{\qop,2}\subset\ol\R\times\zface_{\qop,2}$, which thus does not change the smooth structure of $\ol\R\times\zface_{\qop,2}$; the lifts of the remaining submanifolds in~\eqref{EqQPDouble} to $[\ol\R\times X^2_\qop;\pa\ol\R\times\zface_{\qop,2}]$ are disjoint from the lift of $\ol\R\times\zface_{\qop,2}$. This proves part~\eqref{ItQPBdyzf2}.

  Next, $\mface_2$ arises from $\ol\R\times\mface_\qop=\ol\R\times\dot X^2_\bop$ (see Lemma~\ref{LemmaqBdy}) by first blowing up its intersection $\pa\ol\R\times\ff_\bop$ with $\pa\ol\R\times\zface_{\qop,2}$, where $\ff_\bop$ denotes the front face of $\dot X^2_\bop$; then one blows up the intersection with the lift of $\pa\ol\R\times\diag_\qop$, which is equal to the intersection with the lift of $\pa\ol\R\times(\diag_\qop\cap\,\mface_{\qop,2})$ and thus given by the lift of $\pa\ol\R\times\diag_\bop$ to $[\ol\R\times\dot X^2_\bop;\pa\ol\R\times\ff_\bop]$. This blow-up thus produces
  \[
    \bigl[ \ol\R\times\dot X^2_\bop; \pa\ol\R\times\ff_\bop; \pa\ol\R\times\diag_\bop \bigr].
  \]
  The intersection of this space with the lift of $\pa\ol\R\times\lb_{\qop,2}$ is $\pa\ol\R\times\lb_\bop$, similarly for the right boundary, and hence blowing up both of these lifts produces
  \begin{equation}
  \label{EqQPBdymf2}
    \bigl[ \ol\R\times\dot X^2_\bop; \pa\ol\R\times\ff_\bop, \pa\ol\R\times\lb_\bop, \pa\ol\R\times\rb_\bop, \pa\ol\R\times\diag_\bop \bigr].
  \end{equation}
  The intersections of this space with the lift of $\pa\ol\R\times(\diag_\qop\cap\,\mface_{\qop,2})$ or with the lift of $\pa\ol\R\times\mface_{\qop,2}$ are both boundary hypersurfaces, hence their blow-up does not affect the smooth structure. Upon intersecting the space~\eqref{EqQPBdymf2} with $\sigma^{-1}(\sigma_0)$ or $\sigma^{-1}(\pm[1,\infty])$, we thus obtain the isomorphisms stated in parts~\eqref{ItQPBdymf2lo} and \eqref{ItQPBdymf2hi}.

  Finally, we consider $\nface_{2,+}$. Let $[0,\eps)_{\rho_{\zface_{\qop,2}}}\times\zface_{\qop,2}$ be a collar neighborhood of $\zface_{\qop,2}\subset X^2_\qop$. We take $\rho_{\zface_{\qop,2}}=\sqrt{\bhm^2+|x|^2+|x'|^2}$ for concreteness. Then the front face of $[\ol\R\times X_\qop^2;\{\infty\}\times\zface_{\qop,2}]$ is that of $[[0,1]_h\times[0,\eps)_{\rho_{\zface_{\qop,2}}}\times\zface_{\qop,2};\{0\}\times\{0\}\times\zface_{\qop,2}]$, and thus equal to
  \[
    \nface_{2,+}' := [0,\infty]_{\tilde h'} \times \zface_{\qop,2} = [0,\infty]_{\tilde h'} \times \hat X^2_\bop,\qquad \tilde h':=\frac{h}{\rho_{\zface_{\qop,2}}}.
  \]
  Its intersections with the lifts of
  \[
    \{\infty\}\times\diag_\qop,\quad
    \{\infty\}\times(\diag_\qop\cap\,\mface_{\qop,2}), \quad
    \{\infty\}\times\lb_{\qop,2},\quad
    \{\infty\}\times\rb_{\qop,2},\quad
    \{\infty\}\times\mface_{\qop,2}
  \]
  with $\nface_{2,+}'$ are given by
  \[
    \{0\}\times\diag_\bop,\quad
    \{0\}\times\pa\diag_\bop, \quad
    \{0\}\times\lb_\bop,\quad
    \{0\}\times\rb_\bop,\quad
    \{0\}\times\ff_\bop,
  \]
  respectively; we need to blow these up in the listed order. In fact, the first two blow-ups can be performed after the third and fourth (since the first/second and third/fourth submanifolds are disjoint); then, since $\pa\diag_\bop=\diag_\bop\cap\,\ff_\bop$, one can blow up $\{0\}\times\diag_\bop$, $\{0\}\times\pa\diag_\bop$, and $\{0\}\times\ff_\bop$ in the order $\{0\}\times\ff_\bop$, $\{0\}\times\pa\diag_\bop$, $\{0\}\times\diag_\bop$. Thus,
  \begin{equation}
  \label{EqQPBdynf2funny}
    \nface_{2,+} = \bigl[ [0,\infty]_{\tilde h'}\times\hat X_\bop^2; \{0\}\times\lb_\bop, \{0\}\times\rb_\bop; \{0\}\times\ff_\bop; \{0\}\times\pa\diag_\bop; \{0\}\times\diag_\bop \bigr].
  \end{equation}

  To analyze this space, we introduce $\hat\rho_\tot:=(1+|\hat x|^2+|\hat x'|^2)^{-\frac12}=\hat\rho_{\lb_\bop}\hat\rho_{\ff_\bop}\hat\rho_{\rb_\bop}$, which is a total boundary defining function of $\hat X^2_\bop$. We then claim that the change of coordinates map $(\tilde h',\hat x,\hat x')\mapsto(\tilde\sigma,\hat x,\hat x')$ with $\tilde\sigma=(1+|\hat x|^2+|\hat x'|^2)^{-\frac12}/\tilde h'=\frac{\hat\rho_\tot}{\tilde h'}$ induces a diffeomorphism\footnote{This is the analogue, in the double space setting, of the isomorphism~\eqref{EqQnfBlowup}.} 
  \begin{equation}
  \label{EqQPBdynf2partial}
  \begin{split}
    &\bigl[ [0,\infty]_{\tilde h'}\times\hat X_\bop^2; \{0\}\times\lb_\bop, \{0\}\times\rb_\bop; \{0\}\times\ff_\bop \bigr] \\
    &\qquad \cong \bigl[ [0,\infty]_{\tilde\sigma} \times \hat X_\bop^2; \{0\}\times\ff_\bop; \{0\}\times\lb_\bop, \{0\}\times\rb_\bop \bigr].
  \end{split}
  \end{equation}
  (See Figure~\ref{FigQPBdynf2}.) This is clear over the interior $(\hat X^\circ)^2$ of $\hat X^2_\bop$. We have $\ff_\bop\cong[0,\infty]_s\times(\pa\hat X)^2$ where $s=\frac{\hat\rho}{\hat\rho'}$ with $\hat\rho=|\hat x|^{-1}=\hat\rho_{\lb_\bop}\hat\rho_{\ff_\bop}$ and $\hat\rho'=|\hat x'|^{-1}=\hat\rho_{\rb_\bop}\hat\rho_{\ff_\bop}$ for suitable defining functions $\hat\rho_{\lb_\bop}$, $\hat\rho_{\ff_\bop}$, $\hat\rho_{\rb_\bop}$ of $\lb_\bop$, $\ff_\bop$, $\rb_\bop\subset\hat X^2_\bop$, so
  \[
    s=\frac{\hat\rho_{\lb_\bop}}{\hat\rho_{\rb_\bop}}.
  \]
  Thus, a collar neighborhood of $\ff_\bop\subset\hat X^2_\bop$ is given by $[0,\eps)_{\hat\rho_{\ff_\bop}}\times[0,\infty]_s\times(\pa X)^2$. Upon dropping the factor $(\pa X)^2$, the claim~\eqref{EqQPBdynf2partial} thus reads
  \begin{equation}
  \label{EqQPBdynf2partial2}
  \begin{split}
    &\bigl[ [0,\infty]_{\tilde h'} \times [0,\eps)_{\hat\rho_{\ff_\bop}} \times [0,\infty]_s; \{0\}\times[0,\eps)\times\{0\}, \{0\}\times[0,\eps)\times\{\infty\}; \{0\} \times \{0\} \times [0,\infty] \bigr] \\
    &\quad \cong \bigl[ [0,\infty]_{\tilde\sigma} \times [0,\eps) \times [0,\infty]; \{0\}\times\{0\}\times[0,\infty]; \{0\}\times[0,\eps)\times\{0\}, \{0\}\times[0,\eps)\times\{\infty\} \bigr]
  \end{split}
  \end{equation}
  via the change of coordinates map $\kappa\colon(\tilde h',\hat\rho_{\ff_\bop},s)\mapsto(\frac{\hat\rho_{\lb_\bop}\hat\rho_{\ff_\bop}\hat\rho_{\rb_\bop}}{\tilde h'},\hat\rho_{\ff_\bop},s)$, where we put $\hat\rho_{\lb_\bop}=\frac{s}{s+1}$ and $\hat\rho_{\rb_\bop}=\frac{1}{s+1}$. The proof of~\eqref{EqQPBdynf2partial2} proceeds by explicit calculations in local coordinate systems, and is pictorially given in Figure~\ref{FigQPBdynf2}.
  \begin{figure}[!ht]
  \centering
  \includegraphics{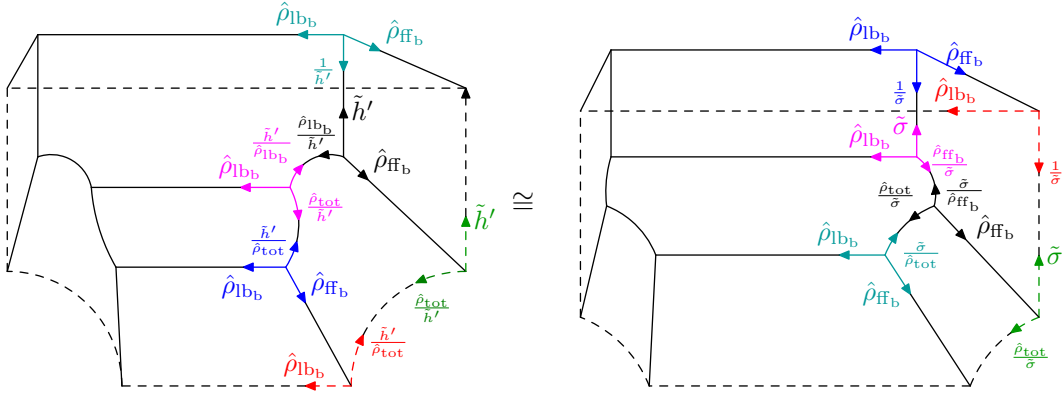}
  \caption{Illustration of (the proof of) the diffeomorphism~\eqref{EqQPBdynf2partial2}. \textit{On the left:} the space on the left in~\eqref{EqQPBdynf2partial2}. \textit{On the right:} the space on the right in~\eqref{EqQPBdynf2partial2}. Also shown are matching local coordinate systems near the various boundary faces; in the listed coordinates systems, we have $\hat\rho_\tot\sim\hat\rho_{\lb_\bop}\hat\rho_{\ff_\bop}$, and we also recall that $\tilde\sigma=\hat\rho_\tot/\tilde h'$.}
  \label{FigQPBdynf2}
  \end{figure}
  Using the diffeomorphism~\eqref{EqQPBdynf2partial} in~\eqref{EqQPBdynf2funny}, we then find that
  \[
    \nface_{2,+} = \bigl[ [0,\infty]_{\tilde\sigma}\times\hat X_\bop^2; \{0\}\times\ff_\bop; \{0\}\times\lb_\bop, \{0\}\times\rb_\bop; [0,\infty]\times\pa\diag_\bop; \{\infty\}\times\diag_\bop \bigr].
  \]
  This implies parts~\eqref{ItQPBdynf2lo} and \eqref{ItQPBdynf2hi}. The proof is complete.
\end{proof}

The relationship between the semiclassical, resp.\ doubly semiclassical cone algebras of \cite{HintzConicPowers} and the Q-algebra in the intermediate semiclassical regime $|\sigma|\sim\bhm^{-1}$ (mentioned in the discussion of the \emph{very large frequency regime} in~\S\ref{SsIA}), resp.\ fully semiclassical regime $|\sigma|\gg\bhm^{-1}$ is described in Appendix~\ref{SQSemi}.

We now switch to a less cumbersome notation for the weights, writing $l=\alpha_\zface$, $\gamma=\alpha_\mface$, $l'=\alpha_\nface$, $r=\alpha_\iface$, $b=\alpha_\sface$.

\begin{cor}[Normal operators]
\label{CorQPNormal}
  Restricting Schwartz kernels of classical Q-ps.d.o.s to the boundary hypersurfaces $\zface_2$, $\mface_{2,\pm,\semi}$, $\nface_{2,\pm,\low}$, and $\nface_{2,\pm,\tilde\semi}$ defines surjective normal operator maps
  \begin{alignat*}{2}
    N_\zface &\colon \Psi_{\Qop,\cl}^{s,(0,\gamma,l',r,b)}(X) &&\to \CI\bigl(\ol\R;\Psib^{s,\gamma}(\hat X)\bigr), \\
    N_{\mface_{\pm,\semi}} &\colon \Psi_{\Qop,\cl}^{s,(l,0,l',r,b)}(X) &&\to \Psich^{s,l,l',r}(\dot X), \\
    N_{\nface_{\pm,\low}} &\colon \Psi_{\Qop,\cl}^{s,(l,\gamma,0,r,b)}(X) &&\to \Psiscbt^{s,r,\gamma,l}(\hat X), \\
    N_{\nface_{\pm,\tilde\semi}} &\colon \Psi_{\Qop,\cl}^{s,(l,\gamma,0,r,b)}(X) &&\to \Psi_{\scop\tilde\semi}^{s,r,b}(\hat X).
  \end{alignat*}
  Moreover, for $\sigma_0\in\R$ and $\tilde\sigma_0\in\R\setminus\{0\}$, restriction to $\sigma^{-1}(\sigma_0)$, $\sigma^{-1}(\sigma_0)\cap\mface_2$, and $\nface_{2,\tilde\sigma_0}$ defines surjective maps
  \begin{alignat*}{2}
    N_{\sigma_0} &\colon \PsiQ^{s,(l,\gamma,l',r,b)}(X) &&\to \Psiq^{s,(l,\gamma)}(X), \\
    N_{\mface_{\sigma_0}} &\colon \PsiQ^{s,(l,0,l',r,b)}(X) &&\to \Psib^{s,l}(\dot X), \\
    N_{\nface_{\tilde\sigma_0}} &\colon \PsiQ^{s,(l,\gamma,0,r,b)}(X) &&\to \Psisc^{s,r}(\hat X),
  \end{alignat*}
  respectively. All statements hold also for variable $\iface$-orders $\sfr\in\CI(\ol{\TQ^*_\iface}X)$.
\end{cor}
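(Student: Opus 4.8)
The plan is to reduce every assertion to three facts already in place: the identification of the relevant boundary hypersurfaces of $X^2_\Qop$ with the model double spaces (Proposition~\usref{PropQPBdy}), the bundle isomorphisms $\TQ_H X\cong\Tb\hat X,\ \Tb\dot X,\ \Tch\dot X,\ \Tscbt\hat X,\ {}^{\scop\tilde\semi}T\hat X,\ \Tsc\hat X$ of Corollary~\usref{CorQBundle}, and the local quantization formulae~\eqref{EqQPQuant1}--\eqref{EqQPQuant2} for elements of $\PsiQ^{s,\alpha}(X)$ near $\diag_\Qop$. Each of the maps in the statement is, by definition, restriction of the Schwartz kernel to a boundary hypersurface, to a cross-section $\sigma^{-1}(\sigma_0)$, or to $\nface_{2,\tilde\sigma_0}$; so there are exactly two things to establish, namely that this restriction lands in the stated target algebra (well-definedness) and that it is onto (surjectivity).

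\textbf{Well-definedness.} For a boundary hypersurface $H$ of $X^2_\Qop$ — I will describe $H=\nface_{2,\pm,\low}$, the other cases being entirely analogous — I would first verify, in the coordinate systems of \S\usref{SsQV} and using the coordinates $y_\Qop=y/\bigl(h/(h+\bhm)\bigr)$, $(z_\Qop,w_\Qop)=(z,w)/\bigl(h/(h+r')\bigr)$ transversal to $\diag_\Qop$ introduced after Definition~\usref{DefQP}, that $\diag_\Qop$ meets $H$ transversally and that $\diag_\Qop\cap H$ is, under the diffeomorphism of Proposition~\usref{PropQPBdy}, the model diagonal: $\diag_\scbtop$ in the present case, $\diag_\bop$ for $\zface_2$ and $\mface_{2,\sigma_0}$, $\diag_\chop$ for $\mface_{2,\pm,\semi}$, $\diag_\schop$ for $\nface_{2,\pm,\tilde\semi}$, and $\diag_\scop$ for $\nface_{2,\tilde\sigma_0}$; this is also consistent with the fact that $\pi_L$ and $\pi_R$ restrict to the left and right b-fibrations of the model double space, compatibly with Lemma~\usref{LemmaQPbFib} and the preimage computations in the proof of Proposition~\usref{PropQPBdy}. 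Granting this, restriction of a kernel as in~\eqref{EqQP} to $H$ is conormal of the same order $s$ at the model diagonal; by~\eqref{EqQPCon} and Corollary~\usref{CorQBundle} its density factor $\pi_R^*\OmegaQ X$ restricts to the right-lifted density bundle of the model double space; and the standing hypothesis that kernels of classical Q-ps.d.o.s are classical conormal at $\zface_2$, $\mface_2$, $\nface_2$ guarantees that the restriction is polyhomogeneous, hence an honest element of $\CI\bigl(\ol\R;\Psib^{s,\gamma}(\hat X)\bigr)$, $\Psich^{s,l,l',r}(\dot X)$, $\Psiscbt^{s,r,\gamma,l}(\hat X)$, or $\Psi_{\scop\tilde\semi}^{s,r,b}(\hat X)$, respectively (the $\CI(\ol\R;\cdot)$ for $N_\zface$ reflecting $\zface_2\cong\ol\R\times\hat X^2_\bop$). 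No classicality is needed for $N_{\sigma_0}$, $N_{\mface_{\sigma_0}}$, $N_{\nface_{\tilde\sigma_0}}$: since $\sigma_0\in\R$ and $\tilde\sigma_0\neq 0$ avoid all blown-up loci, $X^2_\Qop$ is locally a product of an interval with $X^2_\qop$, resp.\ $\hat X^2_\scop$, near these cross-sections, and restriction visibly yields elements of $\Psiq^{s,(l,\gamma)}(X)$, $\Psib^{s,l}(\dot X)$, $\Psisc^{s,r}(\hat X)$. The remaining, purely combinatorial step is the matching of weights: one reads off from Proposition~\usref{PropQPBdy} and its proof which boundary hypersurface of $X^2_\Qop$ restricts to which face of the model double space, and compares with the order conventions of Appendix~\usref{SP} — for instance, for $N_{\nface_{\pm,\low}}$ the faces carrying the orders $\alpha_\iface$, $\alpha_\mface$, $\alpha_\zface$ restrict to the three non-diagonal faces of $\hat X^2_\scbtop$, producing the pattern $\Psiscbt^{s,r,\gamma,l}$, and similarly in the other cases.

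\textbf{Surjectivity.} Given a Schwartz kernel $\kappa_0$ representing an element of the target model algebra, I would build a preimage by collar extension: choose a collar neighborhood of $H$ in $X^2_\Qop$ using a boundary defining function of $H$ lifted as in Remark~\usref{RmkQPDefFn}, extend $\kappa_0$ to be independent of the normal variable along the collar, multiply by a cutoff supported near $H$, and include a density factor realizing the identification of $\OmegaQ X$ with the model density bundle. One then checks that the result lies in~\eqref{EqQP} with $\alpha_H=0$: conormality of order $s$ at $\diag_\Qop$ survives because near $H$ the Q-diagonal is the product of $\diag_\Qop\cap H$ with the normal interval; infinite-order vanishing at all other boundary hypersurfaces is built into the cutoff; and classicality at $\zface_2$, $\mface_2$, $\nface_2$ is inherited from $\kappa_0$ in the classical cases. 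At the symbolic level, as in~\eqref{EqQPQuant1}--\eqref{EqQPQuant2}, this is just the surjectivity of the restriction map from symbols on $X_\Qop\times\ol{\R^n}$ to symbols on $H\times\ol{\R^n}$, which is routine, and $N_H$ applied to the constructed kernel returns $\kappa_0$. For $N_{\sigma_0}$, $N_{\mface_{\sigma_0}}$, $N_{\nface_{\tilde\sigma_0}}$ one extends $\sigma$- or $\tilde\sigma$-independently near the cross-section, which is simpler still. The variable-order statement ($\sfr\in\CI(\ol{\TQ^*_\iface}X)$) goes through verbatim, since every step respects restriction of fibrewise-conormal functions on $\ol{\TQ^*}X$; and multiplicativity of the $N_H$, once composition of Q-ps.d.o.s is available, follows from their characterization by testing against $\CI(X_\Qop)$, exactly as recorded for q-ps.d.o.s in \S\usref{Ssq}.

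\textbf{Main obstacle.} I expect the only genuine difficulty to be bookkeeping rather than analysis: confirming, across the iterated blow-ups in~\eqref{EqQPDouble}, that $\diag_\Qop$ meets each $H$ transversally with the expected model diagonal, and that the five weights $\alpha_\zface,\alpha_\mface,\alpha_\nface,\alpha_\iface,\alpha_\sface$ map to the correct orders of each model algebra. Both are settled by working carefully through the coordinate systems of \S\S\usref{SsQS}--\usref{SsQP}, but sign and ordering errors are easy to make there; since the proof of Proposition~\usref{PropQPBdy} already carries out most of the face-by-face analysis, the efficient route is to quote it rather than repeat it.
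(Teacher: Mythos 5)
Your proposal is correct and takes essentially the same approach as the paper; the paper treats the corollary as immediate from Proposition~\ref{PropQPBdy} together with the density and weight conventions, and only spells out (after the statement) the equivalent ``testing'' characterization $N_H(A)u=(A\tilde u)|_H$. Your elaboration of the weight bookkeeping and the collar-extension argument for surjectivity is exactly the (routine) content the paper suppresses.
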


Since $\Psi_{\Qop,\cl}^s(X)$ acts boundedly on $\CI(X_Q)$ and is invariant under conjugation by weights, these normal operators can be defined via testing. That is, for $A\in\Psi_{\Qop,\cl}^{s,(0,\gamma,l',r,b)}(X)$, the operator $N_\zface(A)$ can be defined via $N_\zface(A)u:=(A\tilde u)|_\zface$ where $\tilde u\in\CI(X_Q)$ is any smooth extension of $u\in\CIdot(\zface)$; likewise for the other normal operators. In particular, the above normal operator maps are homomorphisms under composition, where we compose Q-ps.d.o.s as operators between spaces of weighted smooth functions (i.e.\ classical conormal distributions) on $X_Q$.

We finally show that the spaces $\PsiQ(X)$ and $\Psi_{\Qop,\cl}(X)$ are closed under composition. This can be done in a straightforward but tedious manner using the local coordinate descriptions~\eqref{EqQPQuant1}--\eqref{EqQPQuant2} (while residual operators, i.e.\ those with orders $s$, $\alpha_\iface$, $\alpha_\sface=-\infty$ are handled directly on the level of Schwartz kernels). Keeping in line with the presentation thus far, we instead sketch the proof based on an appropriate triple space.

We use the notation for the q-triple space $X_\qop^3$ from Definition~\ref{DefqTriple}, and furthermore write
\[
  \pa\ol\R\times \mface_{\qop,S/C} = \{ \pa\ol\R\times\mface_{\qop,S}, \pa\ol\R\times\mface_{\qop,C} \},
\]
similarly $\pa\ol\R\times\bface_{\qop,F/S/C}$, etc.

\begin{definition}[Q-triple space]
\label{DefQPCTriple}
  The \emph{Q-triple space} of $X$ is the resolution
  \begin{align*}
    X^3_\Qop := &\bigl[ \ol\R\times X^3_\qop; \pa\ol\R\times\zface_{\qop,3}; \pa\ol\R\times\bface_{\qop,F/S/C}; \pa\ol\R\times\diag_{\qop,3}; \pa\ol\R\times\diag_{\qop,F/S/C}; \\
      &\quad \pa\ol\R\times(\diag_{\qop,F/S/C}\cap\,\mface_{\qop,F/S/C}); \pa\ol\R\times\mface_{\qop,F/S/C}; \\
      &\quad \pa\ol\R\times(\diag_{\qop,F/S/C}\cap\,\mface_{\qop,3}); \pa\ol\R\times\mface_{\qop,3} \bigr].
  \end{align*}
\end{definition}

\begin{lemma}[b-fibrations from the Q-triple space]
\label{LemmaQPCbFib}
  The projection map $\ol{\R_\sigma}\times[0,1]_\bhm\times X^3\to\ol\R\times[0,1]\times X^2$ to the first and second factor of $X$, i.e.\ $(\sigma,\bhm,x,x',x'')\mapsto(\sigma,\bhm,x,x')$, lifts to a b-fibration $\pi_F\colon X_\Qop^3\to X_\Qop^2$, similarly for the lifts $\pi_S$, $\pi_C\colon X^3_\Qop\to X^2_\Qop$ of the projections to the second and third, resp.\ first and third factor of $X^3$.
\end{lemma}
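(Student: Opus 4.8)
The plan is to prove the assertion for $\pi_F$; the cases of $\pi_S$ and $\pi_C$ then follow by the obvious relabeling of the three factors of $X$. The starting point is the b-fibration $\pi_F\colon X^3_\qop\to X^2_\qop$ from the b-fibration lemma for the q-triple space proved above, which upon taking the product with the identity on $\ol{\R_\sigma}$ yields a b-fibration $\tilde\pi_F:=\mathrm{id}_{\ol\R}\times\pi_F\colon\ol\R\times X^3_\qop\to\ol\R\times X^2_\qop$. From here I would lift $\tilde\pi_F$ successively through the sequence of blow-ups in~\eqref{EqQPDouble} that defines $X^2_\Qop$, at each stage invoking \cite[Propositions~5.11.2 and~5.12.1]{MelroseDiffOnMwc}: whenever the target is blown up along (the lift of) a center $G$, b-transversality of the iteratively lifted map to $G$ together with the identification of the full preimage $\tilde\pi_F^{-1}(G)$ allows one to blow up that preimage in the domain and retain a b-fibration; and when $\tilde\pi_F^{-1}(G)$ is a union of several p-submanifolds one reorders them, and interleaves them with the blow-ups prescribed by Definition~\ref{DefQPCTriple}, using the commutation of blow-ups of disjoint or nested p-submanifolds. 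This is exactly the scheme carried out for $X^2_\Qop$ in the proof of Lemma~\ref{LemmaQPbFib}, and on the level of the q-spaces in the proof of the q-triple-space lemma; the order chosen in Definition~\ref{DefQPCTriple} is designed precisely to make every intermediate lift a b-fibration.

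The combinatorial input is the list of preimages under $\pi_F$ recorded in~\eqref{EqqPreimages}, namely $\pi_F^{-1}(\zface_{\qop,2})=\zface_{\qop,3}\cup\bface_{\qop,F}$, $\pi_F^{-1}(\diag_\qop)=\diag_{\qop,F}$, $\pi_F^{-1}(\lb_{\qop,2})=\bface_{\qop,C}\cup\mface_{\qop,S}$, $\pi_F^{-1}(\rb_{\qop,2})=\bface_{\qop,S}\cup\mface_{\qop,C}$, and $\pi_F^{-1}(\mface_{\qop,2})=\mface_{\qop,3}\cup\mface_{\qop,F}$, whence also $\pi_F^{-1}(\diag_\qop\cap\,\mface_{\qop,2})=(\diag_{\qop,F}\cap\,\mface_{\qop,3})\cup(\diag_{\qop,F}\cap\,\mface_{\qop,F})$. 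Products with $\pa\ol\R$ of these submanifolds are the preimages, under $\tilde\pi_F$ and its partial lifts, of the successive centers in~\eqref{EqQPDouble}. One also needs the incidence relations among the centers appearing in Definition~\ref{DefQPCTriple}: $\diag_{\qop,3}\subset\diag_{\qop,F}$ (indeed $\diag_{\qop,3}=\diag_{\qop,F}\cap\,\diag_{\qop,S}$), and likewise for $S$, $C$; the transversality and disjointness relations among the $\bface$-, $\diag$- and $\mface$-faces, which are the same ones already used in the q-triple-space lemma; and the fact that $\diag_{\qop,3}$, $\diag_{\qop,S}$, $\diag_{\qop,C}$ (together with the $\mface$-intersections that do not occur as preimages) are "invisible" to $\pi_F$, so that blowing them up at the prescribed stage affects neither the b-transversality nor the b-fibration property of the current lift.

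With this data in hand the verification proceeds step by step, mirroring the proof of Lemma~\ref{LemmaQPbFib}: pulling back the blow-up of $\pa\ol\R\times\zface_{\qop,2}$ forces the blow-ups of $\pa\ol\R\times\zface_{\qop,3}$ and $\pa\ol\R\times\bface_{\qop,F}$; b-transversality to the lift of $\pa\ol\R\times\diag_\qop$ forces the blow-up of $\pa\ol\R\times\diag_{\qop,F}$; pulling back the blow-ups of $\pa\ol\R\times\lb_{\qop,2}$, $\pa\ol\R\times\rb_{\qop,2}$ and $\pa\ol\R\times(\diag_\qop\cap\,\mface_{\qop,2})$ forces those of $\pa\ol\R\times\bface_{\qop,C}$, $\pa\ol\R\times\mface_{\qop,S}$, $\pa\ol\R\times\bface_{\qop,S}$, $\pa\ol\R\times\mface_{\qop,C}$, $\pa\ol\R\times(\diag_{\qop,F}\cap\,\mface_{\qop,3})$ and $\pa\ol\R\times(\diag_{\qop,F}\cap\,\mface_{\qop,F})$; and pulling back the final blow-up of $\pa\ol\R\times\mface_{\qop,2}$ forces those of $\pa\ol\R\times\mface_{\qop,3}$ and $\pa\ol\R\times\mface_{\qop,F}$. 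Inserting the auxiliary blow-ups of $\pa\ol\R\times\diag_{\qop,3}$, $\pa\ol\R\times\diag_{\qop,S/C}$ and the remaining $\mface$-intersections at the places dictated by Definition~\ref{DefQPCTriple}, and using \cite[Proposition~5.11.2]{MelroseDiffOnMwc} to commute the centers into that order, produces exactly $X^3_\Qop$ and realizes $\pi_F$ as the final lift. I expect the main obstacle to be precisely this last point: keeping track of the roughly dozen blow-up centers of $X^3_\Qop$, their incidence relations, and the b-transversality of each iterated lift of $\tilde\pi_F$ to the next center, so that the reorderings of blow-ups are legitimate. This is routine in that it uses only the two cited propositions of \cite{MelroseDiffOnMwc}, but it is lengthy; it is organized exactly as, and is no harder than, the corresponding bookkeeping in the proofs of the q-triple-space lemma and of Lemma~\ref{LemmaQPbFib}.
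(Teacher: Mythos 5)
Your proposal mirrors the paper's proof in its entirety: start from $\Id_{\ol\R}\times\pi_{\qop,F}$, lift successively through the blow-ups defining $X^2_\Qop$ using the preimage data~\eqref{EqqPreimages}, b-transversality for the diagonal centers, and the incidence relations among the centers of Definition~\ref{DefQPCTriple} together with \cite[Propositions~5.11.2 and 5.12.1]{MelroseDiffOnMwc} to justify reorderings. The bookkeeping you flag as the only nontrivial ingredient is exactly what the paper's proof carries out, via intermediate spaces $X^2_{\Qop,\flat}$ and $X^2_{\Qop,\sharp}$, with the genuine subtleties (which you correctly anticipate but do not verify) being that $\pa\ol\R\times\diag_{\qop,S}$ and $\pa\ol\R\times\diag_{\qop,C}$ can be blown up at the prescribed stage by b-transversality, and that the final blow-ups of $\pa\ol\R\times\mface_{\qop,F}$ and $\pa\ol\R\times\mface_{\qop,3}$ commute through the $\pa\ol\R\times(\diag_{\qop,*}\cap\mface_{\qop,3})$ blow-ups to produce exactly $X^3_\Qop$.
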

\begin{proof}
  Denote the lifted projection from Lemma~\ref{LemmaqbFib} by $\pi_{\qop,F}$. We make use of the description~\eqref{EqqPreimages} of the preimages of boundary hypersurfaces of $X^2_\qop$ under $\pi_{\qop,F}$. We start with the b-fibration $\Id\times\pi_{\qop,F}\colon\ol\R\times X^3_\qop\to\ol\R\times X^2_\qop$. By \cite[Proposition~5.12.1]{MelroseDiffOnMwc}, this map lifts to a b-fibration
  \[
    \bigl[ \ol\R\times X^3_\qop; \pa\ol\R\times\zface_{\qop,3}; \pa\ol\R\times\bface_{\qop,F} \bigr] \to \bigl[ \ol\R\times X^2_\qop; \pa\ol\R\times\zface_{\qop,2} \bigr].
  \]
  We next blow up $\pa\ol\R\times\lb_{\qop,2}$ and $\pa\ol\R\times\rb_{\qop,2}$ in the image; blowing up the preimages in the domain---see~\eqref{EqqPreimages}--- thus gives a b-fibration
  \begin{align*}
    &\bigl[ \ol\R\times X^3_\qop; \pa\ol\R\times\zface_{\qop,3}; \pa\ol\R\times\bface_{\qop,F/S/C}; \pa\ol\R\times\mface_{\qop,S/C} \bigr] \\
    &\hspace{6em} \to \bigl[ \ol\R\times X^2_\qop; \pa\ol\R\times\zface_{\qop,2}; \pa\ol\R\times(\lb_{\qop,2}\cup\rb_{\qop,2}) \bigr].
  \end{align*}
  We used here that $\pa\ol\R\times\mface_{\qop,S}$ and $\pa\ol\R\times\bface_{\qop,S}$ are disjoint to commute their blow-ups. Next, we blow up $\pa\ol\R\times\diag_\qop$ in the range and correspondingly $\pa\ol\R\times\diag_{\qop,F}$ in the domain; we may subsequently also blow up $\pa\ol\R\times\diag_{\qop,3}$ in the domain, as the lifted projection is b-transversal to this. This produces a b-fibration
  \begin{equation}
  \label{EqQPCbFib0}
  \begin{split}
    &\bigl[ \ol\R\times X^3_\qop; \pa\ol\R\times\zface_{\qop,3}; \pa\ol\R\times\bface_{\qop,F/S/C}; \pa\ol\R\times\diag_{\qop,3}; \pa\ol\R\times\diag_{\qop,F}; \pa\ol\R\times\mface_{\qop,S/C} \bigr] \\
    &\hspace{6em} \to X^2_{\Qop,\flat} := \bigl[ \ol\R\times X^2_\qop; \pa\ol\R\times\zface_{\qop,2}; \pa\ol\R\times\diag_\qop, \pa\ol\R\times(\lb_{\qop,2}\cup\rb_{\qop,2}) \bigr].
  \end{split}
  \end{equation}
  Here we use that $\pa\ol\R\times\diag_{\qop,3}\subset\pa\ol\R\times\diag_{\qop,F}$, which implies that we can switch the order of their blow-ups; and moreover $\mface_{\qop,S}$ and $\mface_{\qop,C}$ are disjoint from $\diag_{\qop,3}$ and $\diag_{\qop,F}$, hence their blow-ups can be commuted through to the end.

  In the domain, we next blow up $\pa\ol\R\times(\diag_{\qop,*}\cap\,\mface_{\qop,*})$ for $*=S,C$ (whose lifts get mapped diffeomorphically onto the lifts of $\pa\ol\R\times\lb_{\qop,2}$ and $\pa\ol\R\times\rb_{\qop,2}$); they can be commuted through the blow-ups of their supersets $\pa\ol\R\times\mface_{\qop,S/C}$. We thus obtain a b-fibration
  \begin{equation}
  \label{EqQPCbFib1}
  \begin{split}
    &\bigl[ \ol\R\times X^3_\qop; \pa\ol\R\times\zface_{\qop,3}; \pa\ol\R\times\bface_{\qop,F/S/C}; \pa\ol\R\times\diag_{\qop,3}; \pa\ol\R\times\diag_{\qop,F};  \\
    &\hspace{11em} \pa\ol\R\times(\diag_{\qop,S/C}\cap\,\mface_{\qop,S/C}); \pa\ol\R\times\mface_{\qop,S/C} \bigr] \to X^2_{\Qop,\flat}.
  \end{split}
  \end{equation}
  We can then blow up $\pa\ol\R\times\diag_{\qop,S}$ in the domain; this blow-up can be commuted through that of $\pa\ol\R\times\mface_{\qop,*}$ for $*=S$ (since the intersection $\pa\ol\R\times(\diag_{\qop,S}\cap\,\mface_{\qop,S})$ is blown up before) and also for $*=C$ (by disjointness), and then it can be commuted further through its subset $\pa\ol\R\times(\diag_{\qop,S}\cap\,\mface_{\qop,S})$. Arguing similarly for the blow-up of $\pa\ol\R\times\diag_{\qop,C}$, the map~\eqref{EqQPCbFib1} thus lifts to a b-fibration
  \begin{equation}
  \label{EqQPCbFib2}
  \begin{split}
    &\bigl[ \ol\R\times X^3_\qop; \pa\ol\R\times\zface_{\qop,3}; \pa\ol\R\times\bface_{\qop,F/S/C}; \pa\ol\R\times\diag_{\qop,3}; \pa\ol\R\times\diag_{\qop,F/S/C};  \\
    &\hspace{13em} \pa\ol\R\times(\diag_{\qop,S/C}\cap\,\mface_{\qop,S/C}); \pa\ol\R\times\mface_{\qop,S/C} \bigr] \to X^2_{\Qop,\flat}.
  \end{split}
  \end{equation}

  Next, blowing up $\pa\ol\R\times(\diag_\qop\cap\,\mface_{\qop,2})$ in the range, and using~\eqref{EqqPreimages} to deduce that we need to blow up $\pa\ol\R\times(\diag_{\qop,F}\cap\,\mface_{\qop,F})$ and $\pa\ol\R\times(\diag_{\qop,F}\cap\,\mface_{\qop,3})$ in the domain, we infer that the map~\eqref{EqQPCbFib2} lifts further to a b-fibration
  \begin{align*}
    &\bigl[ \ol\R\times X^3_\qop; \pa\ol\R\times\zface_{\qop,3}; \pa\ol\R\times\bface_{\qop,F/S/C}; \pa\ol\R\times\diag_{\qop,3}; \pa\ol\R\times\diag_{\qop,F/S/C}; \\
    &\qquad \pa\ol\R\times(\diag_{\qop,F/S/C}\cap\,\mface_{\qop,F/S/C}); \pa\ol\R\times\mface_{\qop,S/C}; \pa\ol\R\times(\diag_{\qop,F}\cap\,\mface_{\qop,3}) \bigr] \\
    &\ \to X_{\Qop,\sharp}^2 := \bigl[ \ol\R\times X^2_\qop; \pa\ol\R\times\zface_{\qop,2}; \pa\ol\R\times\diag_\qop, \pa\ol\R\times(\diag_\qop\cap\,\mface_{\qop,2}), \pa\ol\R\times(\lb_{\qop,2}\cup\rb_{\qop,2}) \bigr].
  \end{align*}
  For the commutation of blow-ups, we use here that $\diag_{\qop,F}$ is disjoint from $\mface_{\qop,S/C}$. To restore some symmetry, we then blow up $\pa\ol\R\times(\diag_{\qop,*}\cap\,\mface_{\qop,3})$ in the domain for $*=S,C$; these get mapped diffeomorphically onto the lift of $\pa\ol\R\times\mface_{\qop,2}$. Thus, we get a b-fibration
  \begin{align*}
    &\bigl[ \ol\R\times X^3_\qop; \pa\ol\R\times\zface_{\qop,3}; \pa\ol\R\times\bface_{\qop,F/S/C}; \pa\ol\R\times\diag_{\qop,3}; \pa\ol\R\times\diag_{\qop,F/S/C}; \\
    &\qquad \pa\ol\R\times(\diag_{\qop,F/S/C}\cap\,\mface_{\qop,F/S/C}); \pa\ol\R\times\mface_{\qop,S/C}; \pa\ol\R\times(\diag_{\qop,F/S/C}\cap\,\mface_{\qop,3}) \bigr] \to X_{\Qop,\sharp}^2.
  \end{align*}

  Finally, we again use \cite[Proposition~5.12.1]{MelroseDiffOnMwc} to lift this map to a b-fibration under the blow-up of $\pa\ol\R\times\mface_{\qop,2}$ in the range (producing $X^2_\Qop$) and of the lifts of its preimages $\pa\ol\R\times\mface_{\qop,F}$ and $\pa\ol\R\times\mface_{\qop,3}$ (in this order) in the domain; the resulting domain is naturally diffeomorphic to $X^3_\Qop$, since the blow-up of $\pa\ol\R\times\mface_{\qop,F}$ can be commuted through that of $\pa\ol\R\times(\diag_{\qop,*}\cap\,\mface_{\qop,3})$ for $*=F$ (since the set $\pa\ol\R\times(\diag_{\qop,F}\cap\,\mface_{\qop,F})$ containing their intersection is blown up earlier) and for $*=S,C$ (by disjointness). This finishes the proof.
\end{proof}

\begin{prop}[Composition of Q-ps.d.o.s]
\label{PropQPC}
  Let $A_j\in\PsiQ^{s_j,\alpha_j}(X)$, $j=1,2$. Then $A_1\circ A_2\in\PsiQ^{s_1+s_2,\alpha_1+\alpha_2}(X)$. The same holds true when working with $\Psi_{\Qop,\cl}$ instead.
\end{prop}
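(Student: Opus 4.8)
The plan is to follow the scheme of the proof of Proposition~\ref{PropqComp}, carried out this time on the Q-triple space $X_\Qop^3$ of Definition~\ref{DefQPCTriple} using the b-fibrations $\pi_F,\pi_S,\pi_C\colon X_\Qop^3\to X_\Qop^2$ of Lemma~\ref{LemmaQPCbFib}. By the conjugation invariance of $\PsiQ$ noted in Remark~\ref{RmkQPDefFn}, one first reduces to the case $\alpha_1=\alpha_2=0$. Then, fixing a Q-density $0<\nu_X\in\CI(X_\Qop;\OmegaQ X)$, the b-density $\nu_\bhm=|\tfrac{\dd\bhm}{\bhm}|$ on $[0,1]_\bhm$, and a fixed b-density $\nu_\sigma$ on $\ol{\R_\sigma}$, and writing $\kappa_1,\kappa_2$ for the Schwartz kernels of $A_1,A_2$, one expresses the Schwartz kernel of $A_1\circ A_2$ as
\[
  \kappa = (\nu_X\nu_\bhm\nu_\sigma)^{-1}\,(\pi_C)_*\bigl( \pi_F^*\kappa_1\cdot\pi_S^*\kappa_2\cdot\pi_C^*\nu_X\cdot\pi_\bhm^*\nu_\bhm\cdot\pi_\sigma^*\nu_\sigma \bigr),
\]
where $\pi_\bhm\colon X_\Qop^3\to[0,1]$ and $\pi_\sigma\colon X_\Qop^3\to\ol\R$ are the lifted projections; the term in parentheses is a $b$-density-valued distribution on $X_\Qop^3$, just as in the proof of Proposition~\ref{PropqComp}.

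Next I would invoke the pullback and pushforward theorems for conormal (and polyhomogeneous) distributions under b-fibrations. One checks in the local models~\eqref{EqQPQuant1}--\eqref{EqQPQuant2} --- which away from $\iface$ and $\sface$ reduce to the corresponding q-calculus statements, and on $\{\sigma,\bhm>0\}$ to the ordinary ps.d.o.\ calculus --- that $\pi_F$ is transversal to $\diag_\Qop$, with $\pi_F^{-1}(\diag_\Qop)$ equal to the lift $\diag_{\Qop,F}$ of $\ol\R\times\diag_{\qop,F}$, and similarly $\pi_S^{-1}(\diag_\Qop)=\diag_{\Qop,S}$. Hence $\pi_F^*\kappa_1$, resp.\ $\pi_S^*\kappa_2$, is conormal on $X_\Qop^3$ with respect to $\diag_{\Qop,F}$, resp.\ $\diag_{\Qop,S}$, conormal at every boundary hypersurface, and vanishing to infinite order at the lifts of those boundary hypersurfaces of $X_\Qop^2$ at which $\kappa_1$, $\kappa_2$ vanish rapidly. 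The geometric crux --- again verified in these coordinates --- is that $\diag_{\Qop,F}$ and $\diag_{\Qop,S}$ meet cleanly, with $\diag_{\Qop,F}\cap\diag_{\Qop,S}=\diag_{\Qop,3}$ (the lift of $\ol\R\times\diag_{\qop,3}$) and $\pi_F^*N^*\diag_\Qop$, $\pi_S^*N^*\diag_\Qop$ together spanning $N^*\diag_{\Qop,3}$; consequently $\pi_F^*\kappa_1\cdot\pi_S^*\kappa_2$ is conormal with respect to $\diag_{\Qop,3}$, of differential order $s_1+s_2$, with boundary orders the sums of the pulled-back boundary orders. Since $\pi_C$ is a b-fibration restricting to $\diag_{\Qop,3}$ as a fibration onto $\diag_\Qop=\pi_C(\diag_{\Qop,3})$, pushing forward along $\pi_C$ (see~\cite[\S4]{MelroseDiffOnMwc} and~\cite{MelrosePushfwd}) yields $\kappa$ conormal with respect to $\diag_\Qop$, of differential order $s_1+s_2$, and polyhomogeneous at $\zface_2,\mface_2,\nface_2$ whenever $\kappa_1,\kappa_2$ are --- which disposes of the $\Psi_{\Qop,\cl}$ case at the same time.

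Finally I would read off the boundary orders and check the vanishing conditions of Definition~\ref{DefQP}: exactly as in~\eqref{EqqPreimages}, record the preimages of the boundary hypersurfaces of $X_\Qop^2$ under $\pi_F,\pi_S,\pi_C$, and from these lists deduce that (i) any boundary hypersurface $G$ of $X_\Qop^3$ with $\pi_C(G)$ not among $\zface_2,\mface_2,\nface_2,\iface_2,\sface_2$ (nor the lift of $\bhm^{-1}(1)$) is mapped by $\pi_F$ or $\pi_S$ into a hypersurface at which $\kappa_1$ or $\kappa_2$ vanishes to infinite order, so that $\kappa$ vanishes rapidly at $\lb_2,\rb_2,\tlb_2,\trb_2,\iface'_2,\sface'_2$; and (ii) over each of $\zface_2,\mface_2,\nface_2,\iface_2,\sface_2$ the contributions of $\pi_F^*\kappa_1$ and $\pi_S^*\kappa_2$ combine so that the orders add, giving $A_1\circ A_2\in\PsiQ^{s_1+s_2,0}(X)$, hence $\PsiQ^{s_1+s_2,\alpha_1+\alpha_2}(X)$ in general after undoing the conjugation. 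Residual operators (with $s=\alpha_\iface=\alpha_\sface=-\infty$) would be handled directly on the level of Schwartz kernels, as a fiber integral of smooth $b$-densities that vanish rapidly at the relevant faces; and a variable $\iface$-order $\sfr\in\CI(\ol{\TQ^*_\iface}X)$ requires no change, since it affects only the behavior at $\iface$ and not the conormal singularity at $\diag_\Qop$. I expect the main obstacle to be precisely steps (i)--(ii) together with the clean-intersection and transversality verifications: these are mechanical once $X_\Qop^3$ and the three b-fibrations are in hand, but because of the large number of boundary hypersurfaces produced by the iterated blow-up in Definition~\ref{DefQPCTriple}, assembling the complete table of preimage relations and checking the order arithmetic against Definition~\ref{DefQP} is where essentially all of the effort lies.
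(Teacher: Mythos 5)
Your proposal follows the same strategy as the paper's own proof: reduce to the unweighted case via conjugation invariance (Remark~\ref{RmkQPDefFn}), express the composed Schwartz kernel as a pushforward along $\pi_C$ of the pullback product $\pi_F^*\kappa_1\cdot\pi_S^*\kappa_2$ on the Q-triple space (with the same density decomposition, modulo combining your $\nu_\bhm\nu_\sigma$ into a single b-density on $\ol\R\times[0,1]$), and then invoke the pullback/pushforward theorems for conormal distributions under the b-fibrations of Lemma~\ref{LemmaQPCbFib}, tracking the boundary-face preimage table as in~\eqref{EqqPreimages}. The paper states most of the transversality and preimage bookkeeping implicitly as ``one can then check,'' whereas you spell it out, but the argument is the same.
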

\begin{proof}
  The proof is similar to that of Proposition~\ref{PropqComp}. By Remark~\ref{RmkQPDefFn}, it suffices to consider the case $\alpha_1=\alpha_2=(0,0,0,0,0)$. Write the Schwartz kernel $\kappa$ of $A_1\circ A_2$ in terms of the Schwartz kernels $\kappa_1,\kappa_2$ of $A_1,A_2$ as
  \[
    \kappa = (\nu_1\nu_2)^{-1}(\pi_C)_* \bigl( \pi_F^*\kappa_1 \cdot \pi_S^*\kappa_2 \cdot \pi_C^*\nu_1\cdot\pi^*\nu_2 \bigr)
  \]
  where $0<\nu_1\in\CI(X_\Qop;\OmegaQ X)$ is an arbitrary q-density, and $\nu_2=\frac{\dd\bhm}{\bhm}\frac{\dd\sigma}{\la\sigma\ra}$ is a b-density on $\ol{\R_\sigma}\times[0,2)_\bhm$ with $\pi\colon X_\Qop^3\to\ol\R\times[0,1]$ denoting the lifted projection. One can then check that the term in parentheses is then a bounded conormal section of $\pi_F^*\OmegaQ X\otimes\pi_S^*\OmegaQ X\otimes\pi_C^*\OmegaQ X\otimes\pi^*\Omegab_{\ol\R\times[0,1]}(\ol\R\times[0,2))$ which vanishes to infinite order at the boundary hypersurfaces of $X^3_\Qop$ which map to $\iface'_2$, $\sface'_2$, $\lb_2$, $\rb_2$, $\tlb_2$, or $\trb_2$ under $\pi_C$; thus, it is a bounded conormal section of $\Omegab X^3_\Qop$ which vanishes at the aforementioned boundary hypersurfaces. The conclusion then follows using pullback and pushforward results for conormal distributions.
\end{proof}

\subsection{Q-Sobolev spaces}
\label{SsQH}

We now assume that $X$ is compact. We can define weighted Sobolev spaces (corresponding to the Lie algebra $\VQ(X)$) of integer differential order in the usual manner, analogously to Definition~\ref{DefqSob}; we leave it to the reader to spell this out. Here, we instead immediately record the definition for general orders, allowing in particular also for variable orders at $\iface$:

\begin{definition}[Weighted Q-Sobolev spaces]
\label{DefQH}
  Fix any positive weighted Q-density $\nu$ on $X_\Qop$, i.e.\ an element $\nu=(\prod\rho_H^{\nu_H})\nu_0$ where $0<\nu_0\in\CI(X_\Qop,\OmegaQ X)$ and $\nu_H\in\R$, and $H$ ranges over the boundary hypersurfaces $\zface,\mface,\nface,\iface,\sface$. Thus, the restriction $\nu_{\bhm,\sigma}$ is a smooth positive density on $X$ for any $\bhm\in(0,1]$, $\sigma\in\R$. Let $s\in\R$ and $l,\gamma,l',r,b\in\R$; put $w:=\rho_\zface^l\rho_\mface^\gamma\rho_\nface^{l'}\rho_\iface^r\rho_\sface^b$. Then for $s\geq 0$, and for $\bhm\in(0,1]$ and $\sigma\in\R$, we put
  \begin{subequations}
  \begin{equation}
  \label{EqQH}
    H_{\Qop,\bhm,\sigma}^{s,(l,\gamma,l',r,b)}(X,\nu) = H^s(X)
  \end{equation}
  with $(\bhm,\sigma)$-dependent norm
  \begin{equation}
  \label{EqQHNorm}
    \|u\|_{H_{\Qop,\bhm,\sigma}^{s,(l,\gamma,l',r,b)}(X,\nu)}^2 := \|w^{-1}u\|_{L^2(X,\nu_{\bhm,\sigma})}^2 + \|w^{-1}A_{\bhm,\sigma} u\|_{L^2(X,\nu_{\bhm,\sigma})},
  \end{equation}
  \end{subequations}
  where $A=(A_{\bhm,\sigma})\in\PsiQ^s(X)$ is any fixed Q-ps.d.o.\ with elliptic principal symbol. For $s<0$, we define the space~\eqref{EqQH} as a Hilbert space by letting it be the dual space (with respect to the inner product on $L^2(X,\nu_{\bhm,\sigma})$) of $H_{\Qop,\bhm,\sigma}^{-s,(-l,-\gamma,-l',-r,-b)}(X,\nu)$.\footnote{Equivalently, fixing an elliptic operator $A\in\PsiQ^{-s}(X)$, it is the space of all distributions of the form $u=u_0+A u_1$ where $u_0,u_1\in w L^2(X)$, equipped with the norm $\inf_{u=u_0+A u_1}\|w^{-1}u_0\|_{L^2(X,\nu_{\bhm,\sigma})}+\|w^{-1}u_1\|_{L^2(X,\nu_{\bhm,\sigma})}$; cf.\ \cite[Appendix~A]{MelroseVasyWunschEdge} for a general discussion of the underlying functional analysis.} Finally, for variable orders $\sfr\in\CI(\ol{\TQ^*_\iface}X)$, we define the norm on $H_{\Qop,\bhm,\sigma}^{s,(l,\gamma,l',\sfr,b)}(X,\nu)=H^s(X)$ to be
  \[
    \|u\|_{H_{\Qop,\bhm,\sigma}^{s,(l,\gamma,l',\sfr,b)}(X,\nu)}^2 := \|u\|_{H_{\Qop,\bhm,\sigma}^{s,(l,\gamma,l',r_0,b)}(X,\nu)}^2 + \|A u\|_{L^2(X,\nu_{\bhm,\sigma})}^2
  \]
  where $r_0=\min\sfr$, and where $A\in\PsiQ^{s,(l,\gamma,l',\sfr,b)}(X)$ is a fixed elliptic operator.
\end{definition}

We claim that any $A\in\PsiQ^0(X)$ is uniformly (for $\bhm\in(0,1]$ and $\sigma\in\R$) bounded on $L^2(X,\nu)$ when $0<\nu\in\CI(X_\Qop,\OmegaQ X)$ is a positive Q-density. As in~\S\ref{Ssq}, the proof can be reduced, using H\"or\-man\-der's square root trick, to the case that $A\in\PsiQ^{-\infty,(0,0,0,-\infty,-\infty)}(X)$; thus, the Schwartz kernel $\kappa$ of $A$ is a bounded conormal right Q-density on $X_\Qop^2$ which vanishes to infinite order at all boundary hypersurfaces except $\zface_2$, $\mface_2$, and $\nface_2$. The pushforward along the projection $X_\Qop^2\to X_\Qop$ (see Lemma~\ref{LemmaQPbFib}) is thus bounded (on $\bhm^{-1}([0,1])$) and conormal on $X_\Qop$ (and vanishes to infinite order at $\iface$ and $\sface$). The Schur test implies the claim. Directly from Definition~\ref{DefQH}, one can then show that for any orders $s,\tilde s\in\R$, $l,\tilde l,\gamma,\tilde\gamma,l',\tilde l',r,\tilde r,b,\tilde b\in\R$, any element $A=(A_{\bhm,\sigma})\in\PsiQ^{s,(l,\gamma,l',r,b)}(X)$ defines a uniformly bounded (as $\bhm,\sigma$ ranges over $(0,1]\times\R$) family of maps
\[
  A_{\bhm,\sigma} \colon H_{\Qop,\bhm,\sigma}^{\tilde s,(\tilde l,\tilde\gamma,\tilde l',\tilde r,\tilde b)}(X,\nu) \to H_{\Qop,\bhm,\sigma}^{\tilde s-s,(\tilde l-l,\tilde\gamma-\gamma,\tilde l'-l',\tilde r-r,\tilde b-b)}(X,\nu),
\]
similarly when the $\iface$-order is variable.

We next show how to relate Q-Sobolev spaces (and their norms) to b-, $\scbtop$-, $\chop$-, and semiclassical scattering Sobolev spaces near the various boundary hypersurfaces of $X_\Qop$, see Proposition~\ref{PropQStruct}, Definition~\ref{DefQPieces}, and equation~\eqref{EqQPieces}. We restrict attention to a certain class of $\sigma$-independent densities $\nu$, which are lifts of weighted q-densities on $X_\qop$ along the projection off the $\sigma$-coordinate.

\begin{prop}[Relationships between Sobolev spaces]
\label{PropQHRel}
  Fix a $\sigma$-independent density $\nu$ on $X_\Qop$ which is of the form $\nu=\rho_{\zface_q}^{n/2}\nu_0$, $0<\nu_0\in\CI(X,\Omegaq X)$, as in Proposition~\usref{PropqHRel}. Let $\sfr\in\CI(\ol{\TQ^*_\iface}X)$ be an order function which in $|x|<r_0$ (for some $r_0>0$) is invariant under the lift to $\TQ^*X$ of the dilation action $(\tilde\sigma,\bhm,x)\mapsto(\tilde\sigma,\lambda\bhm,\lambda x)$.
  \begin{enumerate}
  \item\label{ItQHRelzf} Put $\phi_\zface\colon\R\times(0,1]\times\hat X^\circ\ni(\sigma,\bhm,\hat x)\mapsto(\sigma,\bhm,\bhm\hat x)\in X_\Qop$, and let $\chi\in\CI(X_\Qop)$ be identically $1$ near $\zface$ and supported in a collar neighborhood thereof. Then for $\bhm\in(0,1]$ and $\sigma\in\R$, we have a uniform equivalence (in the same sense as in Proposition~\usref{PropqHRel})
    \begin{equation}
    \label{EqQHRelzf}
      \|\chi u\|_{H_{\Qop,\bhm,\sigma}^{s,(l,\gamma,l',\sfr,b)}(X)} \sim \la\sigma\ra^{l'-l}\bhm^{\frac{n}{2}-l}\|\phi_\zface^*(\chi u)|_{\sigma,\bhm}\|_{\Hb^{s,\gamma-l}(\hat X,|\dd\hat x|)}.
    \end{equation}
  \item\label{ItQHRelmfsemi} Put $\phi_{\mface,\pm,\semi}\colon(0,1]\times(0,1]\times\dot X^\circ\ni(h,\bhm,x)\mapsto(\pm h^{-1},\bhm,x)\in X_\Qop$, and let $\chi\in\CI(X_\Qop)$ be identically $1$ near $\mface$ and supported in a collar neighborhood thereof. Then, uniformly for $\bhm\in(0,1]$ and $h\in(0,1]$, we have
    \begin{equation}
    \label{EqQHRelmfsemi}
      \|\chi u\|_{H_{\Qop,\bhm,\pm h^{-1}}^{s,(l,\gamma,l',\sfr,b)}(X)} \sim \bhm^{-\gamma} \| \phi_{\mface,\pm,\semi}^*(\chi u)|_{h,\bhm} \|_{H_{\cop,h}^{s,l-\gamma,l'-\gamma,\sfr-\gamma}(\dot X,\nu_\cop)},
    \end{equation}
    where $\nu_\cop$ is the lift of a smooth positive density on $X$ to $\dot X$ as in Proposition~\usref{PropqHRel}\eqref{ItqHRelmf}.
  \item\label{ItQHRelnflow} Put $\phi_{\nface_{\pm,\low}}\colon\pm(0,1]\times(0,1]\times\hat X\ni(\tilde\sigma,\bhm,\hat x)\mapsto(\frac{\tilde\sigma}{\bhm},\bhm,\bhm\hat x)\in X_\Qop$, and let $\chi\in\CI(X_\Qop)$ be identically $1$ near $\nface_\pm$ and supported in a collar neighborhood thereof. Then, uniformly for $\tilde\sigma\in(0,1]$ and $\bhm\in(0,1]$,
    \begin{equation}
    \label{EqQHRelnflow}
      \|\chi u\|_{H_{\Qop,\bhm,\pm h^{-1}}^{s,(l,\gamma,l',\sfr,b)}(X)} \sim \bhm^{\frac{n}{2}-l'}\|\phi_{\nface_{\pm,\low}}^*(\chi u)|_{\tilde\sigma,\bhm}\|_{H_{\scbtop,\tilde\sigma}^{s,\sfr-l',\gamma-l',l-l'}(\hat X,|\dd\hat x|)}.
    \end{equation}
  \item\label{ItQHRelnfsemi} Put $\phi_{\nface_{\pm,\tilde\semi}}\colon(0,1]\times(0,1]\times\hat X\ni(\tilde h,\bhm,\hat x)\mapsto(\pm(\tilde h\bhm)^{-1},\bhm,\bhm\hat x)\in X_\Qop$, and let $\chi\in\CI(X_\Qop)$ be as in part~\eqref{ItQHRelnflow}. Then, uniformly for $\tilde h\in(0,1]$ and $\bhm\in(0,1]$,
  \begin{equation}
  \label{EqQHRelnfsemi}
      \|\chi u\|_{H_{\Qop,\bhm,\pm(\tilde h\bhm)^{-1}}^{s,(l,\gamma,l',\sfr,b)}(X)} \sim \bhm^{\frac{n}{2}-l'}\|\phi_{\nface_{\pm,\tilde\semi}}^*(\chi u)|_{\tilde h,\bhm}\|_{H_{\scop,\tilde h}^{s,\sfr-l',b}(\hat X,|\dd\hat x|)}.
  \end{equation}
  \end{enumerate}
\end{prop}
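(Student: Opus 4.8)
The plan is to follow the template of the proof of Proposition~\ref{PropqHRel}, using the surjective normal operator maps of Corollary~\ref{CorQPNormal} (together with the structure results Propositions~\ref{PropQStruct} and~\ref{PropQPBdy}) to transfer elliptic operators between $X_\Qop$ and the four model single spaces $\hat X$, $\dot X_\chop$, $\hat X_\scbtop$, $\hat X_\schop$. As a first step one reduces the number of weights: since $\PsiQ^{s,\alpha}(X)$ is invariant under conjugation by products of powers of the defining functions $\rho_H$ (Remark~\ref{RmkQPDefFn}), and each model calculus enjoys the analogous invariance, it suffices to prove each equivalence after multiplying $u$ by a suitable weight on $X_\Qop$. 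To identify that weight and to pin down the prefactors in~\eqref{EqQHRelzf}--\eqref{EqQHRelnfsemi}, one computes the pullbacks $\phi_H^*\rho_{H'}$ of all boundary defining functions of $X_\Qop$ under the maps $\phi_\zface$, $\phi_{\mface,\pm,\semi}$, $\phi_{\nface_{\pm,\low}}$, $\phi_{\nface_{\pm,\tilde\semi}}$ (e.g.\ near $\zface$ one has, up to positive smooth factors, $\phi_\zface^*\rho_\zface\sim\bhm\la\hat x\ra$, $\phi_\zface^*\rho_\mface\sim\la\hat x\ra^{-1}$, $\phi_\zface^*\rho_\nface\sim\la\sigma\ra^{-1}$), and compares these with the boundary defining functions of the relevant model space. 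This bookkeeping accounts for the powers of $\bhm$, $\la\sigma\ra$, $\tilde\sigma$, $\tilde h$ and for the dictionary between $(l,\gamma,l',r,b)$ and the model weights; thereafter one may assume all Q-weights are chosen so that the target model weights vanish, leaving only the stated $\bhm$-power prefactor.

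With that reduction in force, the case $s=0$ reduces to the density identities $\phi_\zface^*|\dd x|=\bhm^n|\dd\hat x|$, $\phi_{\nface_{\pm,\low}}^*|\dd x|=\bhm^n|\dd\hat x|$, $\phi_{\nface_{\pm,\tilde\semi}}^*|\dd x|=\bhm^n|\dd\hat x|$, and $\phi_{\mface,\pm,\semi}^*\nu_\cop=(\text{positive bounded})\cdot\nu_\cop$, together with the observation that the chosen density $\nu=\rho_{\zface_q}^{n/2}\nu_0$ differs from $|\dd x|$ (resp.\ from $\nu_\cop$) by a factor that is bounded and bounded below on the relevant collar. For $s>0$ one fixes an elliptic operator $A_0$ in the relevant model algebra --- $\Psib^s(\hat X)$ for part~\eqref{ItQHRelzf}, $\Psich^s(\dot X)$ for part~\eqref{ItQHRelmfsemi}, $\Psiscbt^s(\hat X)$ for part~\eqref{ItQHRelnflow}, $\Psi_{\scop\tilde\semi}^s(\hat X)$ for part~\eqref{ItQHRelnfsemi}, with variable order at the relevant face when $\sfr$ is variable --- and uses the surjectivity of $N_\zface$, $N_{\mface_{\pm,\semi}}$, $N_{\nface_{\pm,\low}}$, $N_{\nface_{\pm,\tilde\semi}}$ (Corollary~\ref{CorQPNormal}) to choose $A\in\Psi_{\Qop,\cl}^s(X)$ with $N_H(A)=A_0$; after cutting its Schwartz kernel off to a collar of $H$ with a function $\tilde\chi$ which is $1$ near $\supp\chi$, one may moreover arrange (exactly as in~\eqref{EqqHRelzfOp}) that $A$ is elliptic over the Q-cotangent bundle above $\supp\chi$ and that, in the relevant collar, its kernel is the $(\phi_H^{-1})^*$-pushforward of that of $A_0$. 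Then ellipticity and the $s=0$ case show that $\|\chi u\|_{H_{\Qop,\bhm,\sigma}^{s,\dots}}$ and $\|\phi_H^*(\chi u)|\|$ in the model space are each uniformly equivalent to the corresponding $L^2$-norm plus the $L^2$-norm of $A(\chi u)$ (resp.\ $A_0\phi_H^*(\chi u)$), and the latter two match up to the density prefactor; for variable $\sfr$, the hypothesis that $\sfr$ be dilation-invariant in $|x|<r_0$ guarantees that it descends to a well-defined order function on $\TQ^*_\iface X$, i.e.\ on the $\scbtop$- (resp.\ semiclassical scattering) model, so such an $A_0$ exists, while away from $x=0$ the order is constant and no issue arises. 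The case $s<0$ then follows by duality, since both $H_{\Qop,\bhm,\sigma}^{s,\dots}$ and the model spaces are defined by dualizing the positive-order spaces (Definition~\ref{DefQH}).

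The main obstacle is parts~\eqref{ItQHRelnflow} and~\eqref{ItQHRelnfsemi}, the analysis near $\nface_\pm$. There the identifications $\nface_{2,\pm,\low}\cong\hat X^2_\scbtop$ and $\nface_{2,\pm,\tilde\semi}\cong\hat X^2_\schop$ of Proposition~\ref{PropQPBdy} are routed through the delicate blow-down/blow-up diffeomorphism~\eqref{EqQPBdynf2partial}, so determining precisely which $\scbtop$- (resp.\ semiclassical scattering) defining functions correspond to which products of Q-defining functions --- and hence obtaining the exact weight dictionary $(l,\gamma,l',r,b)\leftrightarrow(\sfr-l',\gamma-l',l-l')$ (resp.\ $(\sfr-l',b)$) in~\eqref{EqQHRelnflow}--\eqref{EqQHRelnfsemi} --- requires careful bookkeeping. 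In addition, the map $\phi_{\nface_{\pm,\low}}$ incorporates the composite rescaling $(\tilde\sigma,\bhm,\hat x)\mapsto(\tilde\sigma/\bhm,\bhm,\bhm\hat x)$, so one must track the density and the support of the cutoff through both the blow-up $\hat x=x/\bhm$ and the spectral rescaling simultaneously, and verify that the equivalence of norms holds uniformly as $\bhm\searrow 0$ and $\tilde\sigma\searrow 0$ (resp.\ $\tilde h\searrow 0$) independently. The remaining parts follow the q-setting of Proposition~\ref{PropqHRel} essentially verbatim, with the semiclassical cone parameter $h=|\sigma|^{-1}$ treated as an extra variable on which all constructions depend smoothly and all estimates uniformly.
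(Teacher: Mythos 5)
Your proposal follows the same route as the paper: reduce to zero constant weights via the conjugation-invariance of $\PsiQ$ under powers of defining functions, establish $s=0$ from the density identities, handle $s>0$ by building an elliptic Q-ps.d.o.\ from a model-algebra ps.d.o.\ via dilation-invariant (resp.\ translation-invariant) extension of the Schwartz kernel off the relevant front face and cutoff to a collar, and conclude $s<0$ by duality; the key ingredient is the surjectivity of the normal operator maps of Corollary~\ref{CorQPNormal}, which you invoke correctly. One conceptual wrinkle in your account of parts~\eqref{ItQHRelnflow}--\eqref{ItQHRelnfsemi}: the claim that the dilation-invariance of $\sfr$ in $|x|<r_0$ ``guarantees that it descends to a well-defined order function on $\TQ^*_\iface X$'' has the logic inverted, since $\sfr$ is \emph{given} on $\ol{\TQ^*_\iface}X$ by hypothesis; the point is rather the converse direction, namely that the dilation-invariant extension (cut off near $\nface_2$) of a model operator $A_0\in\Psiscbt^{s,\sfr|_{\iface\cap\nface},0,0}(\hat X)$ --- whose $\scface$-order is only pinned down on $\TQ^*_{\iface\cap\nface}X$ --- actually lands in $\PsiQ^{s,(0,0,0,\sfr,0)}(X)$ with $\iface$-order the full $\sfr$, and the dilation-invariance of $\sfr$ near $x=0$ is exactly what reconciles the $\iface$-conormality of the extended kernel with that order function. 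This is a slip in the explanation, not in the method, so the argument is sound.
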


We remark that the invariance assumption on $\sfr$ is only used in parts~\eqref{ItQHRelnflow}--\eqref{ItQHRelnfsemi} and made there for simplicity; note that the assumption depends on the choice of local coordinates $x\in\R^n$ around $0\in X$. (Without this assumption, one gets slightly lossy two-sided estimates mirroring those in \cite[Corollary~3.7(2)]{HintzConicProp}; these would still be sufficient for our application.)

\begin{proof}[Proof of Proposition~\usref{PropQHRel}]
  It suffices to consider the case that all constant weights $l,\gamma,l',b$ are equal to $0$; furthermore, one can restrict to the case $s\geq 0$ since the case of $s<0$ then follows by duality. The $L^2$-case $s=0$ follows, for all four parts, as in Proposition~\ref{PropqHRel}.

  Part~\eqref{ItQHRelzf} for $s>0$ is then a parameter-dependent version of the estimate~\eqref{EqqHRelzf}, and the proof proceeds in the same manner: one extends the Schwartz kernel of an elliptic b-ps.d.o.\ $A_0\in\Psib^s(\hat X)$ via dilation-invariance (in $(\bhm,x,x')$) and translation-invariance (in $\sigma$), and cuts off the resulting kernel to a collar neighborhood of $\zface_2$ to obtain a Q-ps.d.o.\ $A$ which is elliptic near $\zface$ and can thus be used to compute $H_\Qop^s(X)$-norms in~\eqref{EqQHNorm}.

  For part~\eqref{ItQHRelmfsemi}, we fix an operator $A_0\in\Psich^{s,0,0,\sfr}(\dot X)$ with elliptic principal symbol. By Corollary~\ref{CorQPNormal}, this is the $\mface_{\pm,\semi}$-normal operator of some $A\in\PsiQ^{s,(0,0,0,\sfr,0)}(X)$, and in fact we can take the Schwartz kernel of $A$ to be given by the pushforward of the Schwartz kernel of $A_0$ (considered as a $\bhm$-independent distribution) along $\phi_{\mface,\pm,\semi}$, cut off in both factors to a collar neighborhood of $\mface_2$. The uniform equivalence~\eqref{EqQHRelmfsemi} then follows by arguments completely analogous to those in the proof of Proposition~\ref{PropqHRel}.

  The proof of parts~\eqref{ItQHRelnflow}--\eqref{ItQHRelnfsemi} is similar. The assumption on the order function $\sfr$ ensures that the cutoff (to a collar neighborhood of $\nface$ in both factors on the level of the Schwartz kernel) of the dilation-invariant extension off $\nface_2$ of an elliptic operator in $\Psiscbt^{s,\sfr,0,0}(\hat X)$ lies in $\PsiQ^{s,(0,0,0,\sfr,0)}(X)$.
\end{proof}

Finally, when $\Omega\subset X_\Qop$ is an open set, and writing $\alpha=(l,\gamma,l',\sfr,b)$, we denote by
\begin{equation}
\label{EqQHExt}
  \Hsupp_{\Qop,\bhm,\sigma}^{s,\alpha}(\ol\Omega) = \bigl\{ u\in H_{\Qop,\bhm,\sigma}^{s,\alpha}(X)\colon \supp u\subset\ol\Omega \bigr\},\quad
  \Hext_{\Qop,\bhm,\sigma}^{s,\alpha}(\Omega) = \bigl\{ u|_\Omega \colon u\in H_{\Qop,\bhm,\sigma}^{s,\alpha}(X) \bigr\}
\end{equation}
the spaces of supported, resp.\ extendible distribution (using H\"ormander's notation \cite[Appendix~B]{HormanderAnalysisPDE3}). The former space carries the subspace topology, and the latter space the quotient topology of $H_{\Qop,\bhm,\sigma}^{s,\alpha}(X)/\Hsupp_{\Qop,\bhm,\sigma}^{s,\alpha}(X_\Qop\setminus\Omega)$. In our application, we will take, in some fixed local coordinates $x\in\R^n$ around $0\in X$,
\[
  \Omega=X_\Qop\cap\{\bhm<r<2\}=X_\Qop\cap\{\hat r>1,\ r<2\},
\]
and the relationships recorded in Proposition~\ref{PropQHRel} remain valid upon using extendible Q-Sobolev spaces on $\Omega$ as well as extendible (b-, $\scbtop$-, and semiclassical scattering) Sobolev spaces on $\hat\Omega=\hat X\cap\{\hat r>1\}$ in~\eqref{EqQHRelzf}, \eqref{EqQHRelnflow}, \eqref{EqQHRelnfsemi}.

\section{Quasinormal modes of massless and massive scalar waves}
\label{SK}

In this section, we will prove Theorems~\ref{ThmI} and \ref{ThmIKG}. As discussed in~\S\ref{SsIS}, we may fix $\Lambda=3$. Moreover, we fix the ratio
\[
  \hat\bha := \frac{\bha}{\bhm} \in (-1,1).
\]
All estimates in this section will be uniform in the parameter $\hat\bha\in[-1+\eps,1-\eps]$ for any fixed $\eps>0$.

In~\S\ref{SsKL}, we fix some notation for the degenerating family of Kerr--de~Sitter spacetimes with parameters $(\Lambda,\bhm,\bha)=(3,\bhm,\hat\bha\bhm)$, with $\bhm\searrow 0$. In~\S\ref{SsKMain}, we recall the notions of generalized resonant states and the multiplicity of resonances; these feature in the detailed version of Theorem~\ref{ThmI}, see Theorem~\ref{ThmK}. As a preparation for the proof of Theorem~\ref{ThmK}, we show in~\S\ref{SsKS} how the spectral family of the wave operator on the degenerating Kerr--de~Sitter spacetimes fits into the framework of Q-analysis. The remaining sections, \S\S\ref{SsKSy}--\ref{SsKU}, contain the proof of Theorem~\ref{ThmK}; an outline is provided at the end of~\S\ref{SsKS}. In~\S\ref{SsKG}, we explain the minor modifications needed for the analysis of the Klein--Gordon equation, and thus prove Theorem~\ref{ThmIKG}.

\subsection{Limits of Kerr--de~Sitter metrics}
\label{SsKL}

Since the quantities involved in the definition~\eqref{EqIMetric} of the KdS metric depend only on $\bhm$ via $(\Lambda,\bhm,\bha)=(3,\bhm,\hat\bha\bhm)$, we denote them by
\begin{alignat*}{2}
  \mu_\bhm(r) &:= (r^2+\hat\bha^2\bhm^2)(1-r^2) - 2\bhm r, &\qquad
  b_\bhm &:= 1+\hat\bha^2\bhm^2, \\
  c_\bhm(\theta) &:= 1+\hat\bha^2\bhm^2\cos^2\theta, &\qquad
  \varrho_\bhm(r,\theta) &:= r^2+\hat\bha^2\bhm^2\cos^2\theta.
\end{alignat*}
As we shall prove momentarily, for $\bhm>0$ sufficiently small, the parameters $(3,\bhm,\hat\bha\bhm)$ are subextremal. We denote the roots of $\mu_\bhm$ by
\[
  r_\bhm^-<r_\bhm^C<r_\bhm^e<r_\bhm^c.
\]
\begin{lemma}[Roots of $\mu_\bhm$]
\label{LemmaKRoots}
  Define $\hat r^C:=1-\sqrt{1-\hat\bha^2}$, $\hat r^e:=1+\sqrt{1-\hat\bha^2}$. For sufficiently small $\bhm_0>0$ (depending on $\hat\bha$), and writing $\CI=\CI([0,\bhm_0])$, we have
  \begin{alignat*}{2}
    r_\bhm^-&\equiv-1\bmod \bhm\CI,&\quad
    r_\bhm^C&\equiv \bhm\hat r^C \bmod \bhm^2\CI, \\
    r_\bhm^e&\equiv \bhm\hat r^e \bmod \bhm^2\CI,&\quad
    r_\bhm^c&\equiv 1\bmod\bhm\CI.
  \end{alignat*}
\end{lemma}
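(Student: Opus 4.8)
The plan is to treat $\mu_\bhm$ as a perturbation in $\bhm$ of $\mu_0(r)=r^2(1-r^2)=-r^2(r-1)(r+1)$, which has simple roots at $r=\pm 1$ and a double root at $r=0$; the double root at the origin is resolved by the rescaling $r=\bhm\hat r$ already familiar from the introduction. All coefficients of $\mu_\bhm$ depend polynomially (in particular smoothly, and only through $\bhm^2$ except for the term $-2\bhm r$) on $\bhm$, so everything will come from two applications of the implicit function theorem plus bookkeeping of congruences modulo powers of $\bhm$.

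First I would locate the two outer roots. Since $\mu_0'(\pm 1)=\mp 2\neq 0$, the implicit function theorem applied to $(\bhm,r)\mapsto\mu_\bhm(r)$ near $(0,\pm 1)$ produces, for $\bhm_0>0$ small, smooth functions $\bhm\mapsto r_\bhm^-,r_\bhm^c$ on $[0,\bhm_0]$ with $r_0^-=-1$, $r_0^c=1$ and $\mu_\bhm(r_\bhm^-)=\mu_\bhm(r_\bhm^c)=0$; Taylor's theorem with integral remainder then gives $r_\bhm^-+1,\ r_\bhm^c-1\in\bhm\,\CI([0,\bhm_0])$.

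For the inner roots, the key computation is that substituting $r=\bhm\hat r$ yields
\begin{equation*}
  \bhm^{-2}\mu_\bhm(\bhm\hat r)=(\hat r^2+\hat\bha^2)(1-\bhm^2\hat r^2)-2\hat r=:\tilde Q(\bhm^2,\hat r),
\end{equation*}
a polynomial in $\lambda:=\bhm^2$ and $\hat r$, which at $\bhm=0$ reduces to the Kerr polynomial $\hat\mu(\hat r)=\hat r^2-2\hat r+\hat\bha^2$. Its roots are exactly $\hat r^C=1-\sqrt{1-\hat\bha^2}$ and $\hat r^e=1+\sqrt{1-\hat\bha^2}$, and they are simple (and distinct) because $\hat\mu'(\hat r^{C})=-2\sqrt{1-\hat\bha^2}$ and $\hat\mu'(\hat r^{e})=2\sqrt{1-\hat\bha^2}$ are nonzero thanks to $\hat\bha\in(-1,1)$. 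Applying the implicit function theorem to $(\lambda,\hat r)\mapsto\tilde Q(\lambda,\hat r)$ near $(0,\hat r^C)$ and $(0,\hat r^e)$ gives smooth functions $g^C,g^e$ of $\lambda$ near $0$ with $g^\bullet(0)=\hat r^\bullet$ and $\tilde Q(\lambda,g^\bullet(\lambda))=0$. Setting $\hat r_\bhm^\bullet:=g^\bullet(\bhm^2)$, we get $\hat r_\bhm^\bullet-\hat r^\bullet\in\bhm^2\,\CI([0,\bhm_0])$, and since $\mu_\bhm(\bhm\hat r_\bhm^\bullet)=\bhm^2\tilde Q(\bhm^2,\hat r_\bhm^\bullet)=0$, the points $r_\bhm^C:=\bhm\hat r_\bhm^C$ and $r_\bhm^e:=\bhm\hat r_\bhm^e$ are roots of $\mu_\bhm$ with $r_\bhm^\bullet-\bhm\hat r^\bullet\in\bhm^3\,\CI([0,\bhm_0])\subset\bhm^2\,\CI([0,\bhm_0])$.

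Finally I would check that the four roots just produced are distinct, real, and ordered as claimed for small $\bhm>0$: this is immediate from $r_\bhm^-=-1+O(\bhm)$, $r_\bhm^c=1+O(\bhm)$, $r_\bhm^C=\bhm\hat r^C+O(\bhm^2)$, $r_\bhm^e=\bhm\hat r^e+O(\bhm^2)$ together with $\hat r^C<\hat r^e$, so that after shrinking $\bhm_0$ one has $r_\bhm^-<r_\bhm^C<r_\bhm^e<r_\bhm^c$. Since $\mu_\bhm$ has degree $4$ in $r$ with leading coefficient $-1$, these are all of its roots; in particular $(3,\bhm,\hat\bha\bhm)$ is subextremal and the ordering matches the defining convention. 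The only point needing genuine care is the rescaling identity and the recognition that the limiting polynomial is $\hat\mu(\hat r)=\hat r^2-2\hat r+\hat\bha^2$ with simple roots (where $\hat\bha^2<1$ is used); I do not expect a real obstacle beyond this.
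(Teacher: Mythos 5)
Your proposal follows essentially the same route as the paper's proof: the implicit function theorem near the simple roots $r=\pm 1$ of $\mu_0$ for the outer roots, and the rescaling $r=\bhm\hat r$ together with the implicit function theorem applied to $\bhm^{-2}\mu_\bhm(\bhm\hat r)$ (whose limit is $\hat\mu(\hat r)=\hat r^2-2\hat r+\hat\bha^2$) for the inner roots. Your observation that the rescaled polynomial depends on $\bhm$ only through $\bhm^2$ even gives the slightly sharper remainder $\bhm^3\CI$ for the inner roots, but this is a minor refinement of the same argument.
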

\begin{proof}
  The simple roots of $\mu_0(r)=r^2(1-r^2)$ at $r=\pm 1$ extend to real analytic functions $r_\bhm^-=-1+\cO(\bhm)$ and $r_\bhm^c=1+\cO(\bhm)$ for small real $\bhm$. Note next that
  \[
    \bhm^{-2}\mu_\bhm(\bhm\hat r) = \hat r^2-2\hat r+\hat\bha^2 - \bhm^2(\hat r^2+\hat\bha^2)\hat r^2,
  \]
  for $\bhm=0$, has two simple roots at $\hat r=\hat r^C,\hat r^e$, which extend to real analytic functions $\hat r^C_\bhm,\hat r^e_\bhm$ for small $\bhm$, giving rise to the roots $r_\bhm^C=\bhm\hat r^C_\bhm$, $r_\bhm^e=\bhm\hat r^e_\bhm$, of $\mu_\bhm$.
\end{proof}

We use the coordinates $(t_*,r,\theta,\phi_*)$, see~\eqref{EqIKerrStar}, which we define using
\begin{equation}
\label{EqKFm}
  F_{3,\bhm,\hat\bha\bhm}(r) = F_\bhm(r) := -\chi^e\Bigl(\frac{r-r_\bhm^e}{\bhm}\Bigr) + \chi^c(r-r_\bhm^c),
\end{equation}
where $\chi^e\in\CIc(\R)$ and $\chi^c\in\CIc((-1,\infty))$ are both equal to $1$ at $0$. (Thus, the functions $t_*,\phi_*$ defined with respect to the choice of $F_\bhm$ here differ from those defined using the choice of $F_{3,\bhm,\hat\bha\bhm}$ in~\S\ref{SI} by the addition of smooth functions of $r$.) We fix $\chi^e,\chi^c$ in Lemma~\ref{LemmaKdSChi} below. The KdS metric $g_\bhm:=g_{3,\bhm,\hat\bha\bhm}$ takes the form
\begin{equation}
\label{EqKExt}
\begin{split}
  g_\bhm &= -\frac{\mu_\bhm(r)}{b_\bhm^2\varrho_\bhm^2(r,\theta)}\bigl(\dd t_*-\hat\bha\bhm\,\sin^2\theta\,\dd\phi_*\bigr)^2 - \frac{2 F_\bhm(r)}{b_\bhm}(\dd t_*-\hat\bha\bhm\,\sin^2\theta\,\dd\phi_*)\,\dd r \\
    &\quad + \varrho_\bhm^2(r,\theta)\frac{1{-}F_\bhm(r)^2}{\mu_\bhm(r)}\,\dd r^2 + \varrho_\bhm^2(r,\theta)\frac{\dd\theta^2}{c_\bhm(\theta)} + \frac{c_\bhm(\theta)\sin^2\theta}{b_\bhm^2\varrho_\bhm^2(r,\theta)}\bigl( (r^2+(\hat\bha\bhm)^2)\,\dd\phi_* - \hat\bha\bhm\,\dd t_*\bigr)^2.
\end{split}
\end{equation}
The dual metric is
\begin{equation}
\label{EqKExtDual}
\begin{split}
  g_\bhm^{-1} &= \varrho_\bhm(r,\theta)^{-2}\Bigl( -\frac{b_\bhm^2(1-F_\bhm(r)^2)}{\mu_\bhm(r)}\bigl((r^2+(\hat\bha\bhm)^2)\pa_{t_*}+\hat\bha\bhm\pa_{\phi_*}\bigr)^2 + \mu_\bhm(r)\pa_r^2 + c_\bhm(\theta)\pa_\theta^2 \\
    &\quad - 2 b_\bhm F_\bhm(r)\bigl((r^2+(\hat\bha\bhm)^2)\pa_{t_*}+\hat\bha\bhm\pa_{\phi_*}\bigr)\otimes_s\pa_r + \frac{b_\bhm^2}{c_\bhm(\theta)\sin^2\theta}\bigl(\pa_{\phi_*}+\hat\bha\bhm\,\sin^2\theta\,\pa_{t_*}\bigr)^2\Bigr).
\end{split}
\end{equation}

\begin{lemma}[Choice of time function]
\label{LemmaKdSChi}
  There exist smooth functions $\chi^e\in\CIc(\R)$ and $\chi^c\in\CIc((-1,\infty))$ with $\chi^e(0)=1$ and $\chi^c(0)=1$ so that $\dd t_*$ is (past) timelike with respect to $g_\bhm$ on $\R_{t_*}\times[\bhm,2]_r\times\Sph_{\theta,\phi_*}^2$ when $\bhm\in(0,\bhm_0]$ with $\bhm_0>0$ sufficiently small.
\end{lemma}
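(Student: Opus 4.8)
The plan is to reduce the statement to an explicit inequality for $F_\bhm$ and then construct the cutoffs. Reading off the coefficient of $\pa_{t_*}\otimes\pa_{t_*}$ from the dual metric~\eqref{EqKExtDual} gives
\[
  g_\bhm^{-1}(\dd t_*,\dd t_*) = \frac{b_\bhm^2}{\varrho_\bhm^2(r,\theta)}\biggl( -\frac{\bigl(1-F_\bhm(r)^2\bigr)\bigl(r^2+(\hat\bha\bhm)^2\bigr)^2}{\mu_\bhm(r)} + \frac{(\hat\bha\bhm)^2\sin^2\theta}{c_\bhm(\theta)} \biggr),
\]
so, since $b_\bhm,\varrho_\bhm^2>0$ and $g_\bhm$ is stationary and axisymmetric, the lemma amounts to the pointwise inequality
\[
  \frac{\bigl(1-F_\bhm(r)^2\bigr)\bigl(r^2+(\hat\bha\bhm)^2\bigr)^2}{\mu_\bhm(r)} > \frac{(\hat\bha\bhm)^2\sin^2\theta}{c_\bhm(\theta)} \qquad (r\in[\bhm,2],\ \theta\in[0,\pi])
\]
for all sufficiently small $\bhm>0$ (the time orientation then being fixed so that $\dd t_*$ is past-directed). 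Its right-hand side is $\le\hat\bha^2\bhm^2\le\bhm^2$, uniformly in $\theta$ and $\hat\bha\in[-1+\eps,1-\eps]$. The idea is that, for small $\bhm$, the two cutoffs in~\eqref{EqKFm} act on the spatially disjoint scales $r-r_\bhm^e\lesssim\bhm$ and $r\sim1$, and can be chosen independently so that the left-hand side above is bounded below by a fixed positive multiple of $\bhm^2$ near $r=r_\bhm^e$ and by a fixed positive constant elsewhere on $[\bhm,2]$; this dominates the right-hand side once $\bhm$ is small. I would carry this out on three overlapping regions.

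Near the event horizon, on $r\in[\bhm,R\bhm]$ with $R$ a large fixed constant, rescale by $\hat r=r/\bhm$. There the $\chi^c$-term in~\eqref{EqKFm} vanishes (its argument $r-r_\bhm^c\to-1$ leaves $\supp\chi^c$ if the latter is chosen small enough near the origin), so $F_\bhm(r)=-\chi^e(\hat r-\hat r^e_\bhm)$ where $r_\bhm^e=\bhm\hat r^e_\bhm$ and, by Lemma~\ref{LemmaKRoots}, $\hat r^e_\bhm=\hat r^e+O(\bhm)$; moreover $\mu_\bhm(\bhm\hat r)=\bhm^2\bigl(\hat\mu(\hat r)+O(\bhm^2)\bigr)$, $(r^2+(\hat\bha\bhm)^2)^2=\bhm^4(\hat r^2+\hat\bha^2)^2$, and $c_\bhm=1+O(\bhm^2)$. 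Dividing the inequality by $\bhm^2$, it becomes, up to $O(\bhm)$ errors, exactly the condition that $\dd\hat t_*$ (with $\hat t_*:=t_*/\bhm$) be timelike for the Kerr model metric $\hat g$ of \S\ref{SsIA}, i.e.
\[
  \frac{\bigl(1-\chi^e(\hat r-\hat r^e)^2\bigr)(\hat r^2+\hat\bha^2)^2}{\hat\mu(\hat r)} > \hat\bha^2\sin^2\theta \qquad (\hat r\in[1,R],\ \theta\in[0,\pi]).
\]
It is standard---a quantitative form of the fact that subextremal Kerr admits horizon-penetrating time functions with spacelike level sets---that one can choose $\chi^e\in\CIc(\R)$ with $\chi^e(0)=1$ so that this holds with a positive lower bound on the difference of the two sides, uniform over $\hat r\in[1,R]$, $\theta\in[0,\pi]$, and $\hat\bha\in[-1+\eps,1-\eps]$. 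The structural requirements are that $1-\chi^e(\hat r-\hat r^e)^2$ vanish simply at $\hat r=\hat r^e$, so that the ratio extends smoothly across the simple zero of $\hat\mu$ there with limiting value $-(\chi^e)'(0)((\hat r^e)^2+\hat\bha^2)^2/\sqrt{1-\hat\bha^2}$; that $|\chi^e|>1$ for $\hat r<\hat r^e$ and $|\chi^e|<1$ for $\hat r>\hat r^e$, so the ratio is positive; and that $\chi^e$ be supported in a bounded interval. A sufficiently steep choice (e.g.\ with $(\chi^e)'(0)\le-2$, and $\chi^e\ge2$ on the interval $[-1,0)$ of negative arguments relevant here) secures the quantitative bound. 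Fixing such a $\chi^e$, the inequality holds on $[\bhm,R\bhm]$ for small $\bhm$.

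On the bulk region $r\in[R\bhm/2,1-\delta_0]$ (for a small fixed $\delta_0>0$), both cutoffs vanish---the $\chi^e$-term because $\hat r\ge R/2$ leaves $\supp\chi^e$ for $R$ large, the $\chi^c$-term because $r-r_\bhm^c\le-\delta_0/2\notin\supp\chi^c$---so $F_\bhm\equiv0$, and $\mu_\bhm(r)=(r^2+(\hat\bha\bhm)^2)(1-r^2)-2\bhm r$ satisfies $\tfrac{2\bhm r}{r^2+(\hat\bha\bhm)^2}\le\tfrac{2\bhm}{r}\le\tfrac{4}{R}$ and $1-r^2\ge\delta_0$, whence the left-hand side of the inequality equals $\frac{r^2+(\hat\bha\bhm)^2}{(1-r^2)-2\bhm r/(r^2+(\hat\bha\bhm)^2)}\ge r^2\ge\tfrac{R^2}{4}\bhm^2$, which exceeds $\hat\bha^2\bhm^2$ once $R>4/\delta_0$ and $R\ge2$. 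Finally, near the cosmological horizon, on $r\in[1-\delta_0,2]$, the $\chi^e$-term vanishes (its argument $\to+\infty$), so $F_\bhm(r)=\chi^c(r-r_\bhm^c)$; using $r_\bhm^c=1+O(\bhm)$, the left-hand side is, up to an $O(\bhm)$ error, $r^2\bigl(1-\chi^c(r-1)^2\bigr)/(1-r^2)$ (understood via its smooth extension across $r=1$), i.e.\ up to $O(\bhm)$ the de~Sitter condition ($g_\dS$ has $\hat\bha=0$), which is elementary to arrange by taking $\chi^c\in\CIc((-1,\infty))$ with $\chi^c(0)=1$, $(\chi^c)'(0)>0$, $\chi^c>1$ on $(0,2]$, and $\supp\chi^c\subset(-\tfrac12\delta_0,N)$ for some $N>2$, so that $(1-\chi^c(r-1)^2)/(1-r^2)$ extends smoothly and positively across $r=1$ with a positive lower bound on $[1-\delta_0,2]$. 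Since $[\bhm,R\bhm]\cup[R\bhm/2,1-\delta_0]\cup[1-\delta_0,2]\supset[\bhm,2]$, combining the three regimes proves the inequality for all $\bhm\in(0,\bhm_0]$ with $\bhm_0>0$ small, which is the lemma. The main obstacle is the near--event-horizon step: one must verify the Kerr-model timelikeness with a lower bound uniform as $\hat\bha$ ranges over the compact set $[-1+\eps,1-\eps]$---in particular near $\hat r=\hat r^e$, where for $\hat\bha$ close to extremality $\hat\mu$ and $1-\chi^e(\cdot)^2$ are both small and one relies on the limiting value of the ratio recorded above---whereas the bulk and de~Sitter steps are soft.
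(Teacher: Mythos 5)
Your proof is correct in outline and takes essentially the same route as the paper's: read the timelikeness condition off the dual metric~\eqref{EqKExtDual}, verify it separately for the two model metrics $g_\dS$ (governed by $\chi^c$) and $\hat g$ (governed by $\chi^e$), and pass to small $\bhm$ by perturbation. Where you diverge is in the gluing step: the paper verifies the condition once for each model on its natural domain (all of $r\in[\tfrac12,2]$ for $g_\dS$, all of $\hat r\ge 1$ for $\hat g$) and then concludes by an open-condition/compactness argument using the smoothness of $g$ on the q-single space (Lemma~\ref{LemmaKMetric}), whereas you work through an explicit three-region decomposition $[\bhm,R\bhm]\cup[R\bhm/2,1-\delta_0]\cup[1-\delta_0,2]$ with $O(\bhm)$ error bookkeeping and a separate elementary bulk estimate. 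Both are valid; yours is more hands-on, the paper's is cleaner and in keeping with its q-geometric framework, which is exactly what Lemma~\ref{LemmaKMetric} is set up to deliver.

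On the Kerr step you essentially invoke the existence of a suitable $\chi^e$ without constructing it, and the one explicit sample constraint you do give is impossible: ``$\chi^e(0)=1$ and $\chi^e\geq 2$ on $[-1,0)$'' cannot be satisfied by a continuous function. This is not fatal---the remaining structural requirements you list ($\chi^e(0)=1$ with $(\chi^e)'(0)<0$ sufficiently negative, $|\chi^e|>1$ on $[-1,0)$, $|\chi^e|<1$ on $(0,\infty)$, compact support) are mutually consistent and do pin down a nonempty class of admissible cutoffs---but the example needs fixing, and one should actually verify that $|\chi^e|>1$ on $[-1,0)$ together with $(\chi^e)'(0)$ bounded away from $0$ gives the claimed uniform lower bound over $\hat r\in[1,R]$, $\hat\bha\in[-1+\eps,1-\eps]$ (straightforward, since $\hat\mu$ has a simple zero at $\hat r^e$ and $\hat\mu'(\hat r^e)=2\sqrt{1-\hat\bha^2}$ is bounded below on this compact parameter set). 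The paper sidesteps all of this with the reparameterization $\tilde\chi^e(\hat r)=1+\hat\mu(\hat r)\hat f(\hat r)$: then $(1-\tilde\chi^e{}^2)/\hat\mu=-\hat f(2+\hat\mu\hat f)$ is manifestly smooth across the horizon, the admissibility condition becomes an open interval of allowed values of $\hat f(\hat r)$ at each $\hat r$, and the construction is a one-line partition-of-unity argument. You would do well either to adopt that parameterization or to replace the impossible example with a correct one.
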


This can be proved directly by adapting the arguments of \cite[\S6.1]{VasyMicroKerrdS} to the present parameter-dependent setting; we postpone an alternative perturbative proof off the two limiting (de Sitter and Kerr) metrics until after the proof of Lemma~\ref{LemmaKMetric} below.

In $r>r_0$ for any $r_0>0$, the metric $g_\bhm$ converges, as $\bhm\searrow 0$, to the metric
\begin{equation}
\label{EqKdS}
\begin{split}
  g_\dS &:= -(1-r^2)\dd t_*^2 - 2\tilde\chi^c(r)\dd t_*\,\dd r + \frac{1-\tilde\chi^c(r)^2}{1-r^2}\dd r^2 + r^2\slg, \\
  g_\dS^{-1} &= -\frac{1-\tilde\chi^c(r)^2}{1-r^2}\pa_{t_*}^2 - 2\tilde\chi^c(r)\pa_{t_*}\otimes_s\pa_r + (1-r^2)\pa_r^2 + r^{-2}\slg^{-1},
\end{split}
\end{equation}
where $\tilde\chi^c(r):=\chi^c(r-1)$, and $\slg:=\dd\theta^2+\sin^2\theta\,\dd\phi_*^2$ is the standard metric on $\Sph^2$. Thus, $g_\dS$ is the de~Sitter metric\footnote{If we change coordinates via $t=t_*+T_\dS(r)$ where $T'_\dS(r)=\frac{\tilde\chi^c(r)}{1-r^2}$, then $g_\dS=-(1-r^2)\dd t_*^2+(1-r^2)^{-1}\dd r^2+r^2\slg$ is the de~Sitter metric in static coordinates.}---a nondegenerate Lorentzian metric on
\begin{equation}
\label{EqKdSSpace}
  \R_{t_*} \times X,\qquad X := B(0,3)=\{ x\in\R^3\colon |x|<3 \},
\end{equation}
with $(r,\theta,\phi_*)$ denoting polar coordinates on $X$. (We stress that $g_\dS$ is in fact smooth across $x=0$, though the geometry, resp.\ analysis of the limit $\bhm\searrow 0$ do see a remnant of the disappeared KdS black hole in the form of a conical singularity, resp.\ b-Sobolev spaces with weights at $r=0$.)

On the other hand, if we set $\hat t_*:=\bhm t_*$ and $\hat r:=\bhm r$ and express $g_\bhm$ in the coordinates $(\hat t_*,\hat r,\theta,\phi_*)$, then for $\hat r$ in any closed subinterval of $(\hat r^C,\infty)$, the rescaled metric $\bhm^{-2}g_\bhm$ converges, as $\bhm\searrow 0$, to the metric
\begin{subequations}
\begin{equation}
\label{EqKKerr}
\begin{split}
  &\hat g = -\frac{\hat\mu(\hat r)}{\hat\varrho^2(r,\theta)}\bigl(\dd\hat t_*-\hat\bha\,\sin^2\theta\,\dd\phi_*\bigr)^2 + 2\tilde\chi^e(\hat r)(\dd\hat t_*-\hat\bha\,\sin^2\theta\,\dd\phi_*)\,\dd\hat r \\
  &\qquad\qquad + \hat\varrho^2\frac{1-\tilde\chi^e(\hat r)^2}{\hat\mu(r)}\dd\hat r^2 + \hat\varrho^2(\hat r,\theta)\dd\theta^2 + \frac{\sin^2\theta}{\hat\varrho^2(\hat r,\theta)}\bigl((\hat r^2+\hat\bha^2)\dd\phi_*-\hat\bha\,\dd\hat t_*\bigr)^2, \\
  &\quad \hat\mu(\hat r):=\hat r^2-2\hat r+\hat\bha^2,\qquad
    \hat\varrho^2(\hat r,\theta):=\hat r^2+\hat\bha^2\cos^2\theta,\qquad
    \tilde\chi^e(\hat r)=\chi^e(\hat r-\hat r^e),
\end{split}
\end{equation}
of a Kerr black hole with mass $1$ and angular momentum $\hat\bha$.\footnote{A coordinate change in $\hat t_*$ and $\phi_*$ brings~\eqref{EqKKerr} into Boyer--Lindquist form.} The dual metric is
\begin{equation}
\label{EqKKerrDual}
\begin{split}
  \hat g^{-1} &= \hat\varrho(\hat r,\theta)^{-2}\Bigl(-\frac{1-\tilde\chi^e(\hat r)^2}{\hat\mu(\hat r)}\bigl((\hat r^2+\hat\bha^2)\pa_{\hat t_*}+\hat\bha\pa_{\phi_*}\bigr)^2 + \hat\mu(\hat r)\pa_{\hat r}^2 + \pa_\theta^2 \\
    &\qquad\qquad + 2\tilde\chi^e(\hat r)\bigl((\hat r^2+\hat\bha^2)\pa_{\hat t_*}+\hat\bha\pa_{\phi_*}\bigr)\otimes_s\pa_{\hat r} + \frac{1}{\sin^2\theta}\bigl(\pa_{\phi_*}+\hat\bha\,\sin^2\theta\,\pa_{\hat t_*}\bigr)^2\Bigr).
\end{split}
\end{equation}
This is a smooth nondegenerate Lorentzian metric on
\begin{equation}
\label{EqKKerrSpace}
  \R_{\hat t_*}\times(\hat r^C,\infty)_{\hat r}\times\Sph^2_{\theta,\phi_*}.
\end{equation}
\end{subequations}

The wave operators associated with the metrics $g_\bhm$, $g_\dS$, and $\hat g$ have as principal symbols the respective dual metric functions:

\begin{definition}[Dual metric functions]
\label{DefKDual}
  Let $\wt X_\bhm=\wt X_{3,\bhm,\hat\bha\bhm}$ (see~\eqref{EqIKdSExt}). The \emph{dual metric function} $G_\bhm\in\CI(T^*(\R_{t_*}\times\wt X_\bhm))$ of $g_\bhm$ is defined as
  \[
    G_\bhm(\zeta) = |\zeta|^2_{g_\bhm(z)^{-1}},\qquad
      z=(t_*,x)\in\R_{t_*}\times\wt X_\bhm,\quad \zeta\in T^*_z(\R_{t_*}\times\wt M_\bhm).
  \]
  The analogously defined dual metric functions of $g_\dS$ and $\hat g$ are denoted
  \[
    G_\dS\in\CI(T^*(\R_{t_*}\times X)),\quad\text{resp.}\quad\hat G\in\CI\bigl(T^*\bigl(\R_{\hat t_*}\times(\hat r^C,\infty)\times\Sph^2\bigr)\bigr).
  \]
\end{definition}

When $\bhm_0>0$ is sufficiently small, then Lemma~\ref{LemmaKRoots} implies that $r_\bhm^C < \bhm < r_\bhm^e < r_\bhm^c < 2$ for all $\bhm\in(0,\bhm_0]$ when $\bhm_0>0$ is sufficiently small. Put
\begin{equation}
\label{EqKSpatMfd}
  \Omega_\bhm := (\bhm,2)_r \times \Sph^2.
\end{equation}
Then, in the notation of~\eqref{EqIKdSDOC} and \eqref{EqIKdSExt}, the manifold $\R_{t_*}\times \ol{\Omega_\bhm}\subset\wt M_\bhm:=\wt M_{3,\bhm,\hat\bha\bhm}$ contains a neighborhood of the closure of $M_{3,\bhm,\hat\bha\bhm}^{\rm DOC}$.

\subsection{Resonances, multiplicity, and the main theorem}
\label{SsKMain}

We now prepare the precise statement of Theorem~\ref{ThmI}.

\begin{definition}[Spectral family]
\label{DefKFamily}
  For $\sigma\in\C$, we define\footnote{The more usual notation would be $\wh{\Box_{g_\bhm}}(\sigma)$. We do not use a hat here, however, to avoid overloading the notation.}
  \[
    \Box_{g_\bhm}(\sigma)\in\Diff^2(\ol{\Omega_\bhm})
  \]
  to be the unique operator with $\Box_{g_\bhm}(e^{-i\sigma t_*}u)=e^{-i\sigma t_*}\Box_{g_\bhm}(\sigma)u$ for $u\in\CIc(\Omega_\bhm)$. With $\Omega_\dS:=B(0,2)\subset X=B(0,3)$, we similarly define the spectral family of $\Box_{g_\dS}$, denoted
  \[
    \Box_{g_\dS}(\sigma)\in\Diff^2(\ol{\Omega_\dS}),\qquad \sigma\in\C.
  \]
  We finally denote by
  \[
    \Box_{\hat g}(\tilde\sigma)\in\Diff^2\bigl([1,\infty)_{\hat r}\times\Sph^2\bigr),\qquad \tilde\sigma\in\C,
  \]
  the spectral family of $\Box_{\hat g}$, so $\Box_{\hat g}(e^{-i\tilde\sigma\hat t_*}u)=e^{-i\tilde\sigma\hat t_*}\Box_{\hat g}(\tilde\sigma)u$ for $u\in\CIc((1,\infty)\times\Sph^2)$.
\end{definition}

Informally, $\Box_{g_\bhm}(\sigma)$, $\Box_\dS(\sigma)$, resp.\ $\Box_{\hat g}(\tilde\sigma)$ is obtained from $\Box_{g_\bhm}$, $\Box_{g_\dS}$, resp.\ $\Box_{\hat g}$ by replacing $\pa_{t_*}$, resp.\ $\pa_{\hat t_*}$ by $-i\sigma$, resp.\ $-i\tilde\sigma$. Thus, the spectral families are polynomials (hence holomorphic) in $\sigma$, resp.\ $\tilde\sigma$.

\begin{definition}[Space of resonant states]
\label{DefKRes}
  For $\sigma\in\R$, we define $\Res_\bhm(\sigma)\subset\CI(\R_{t_*}\times \ol{\Omega_\bhm})$ as the space of all \emph{generalized resonant states} $u=u(t_*,x)$ of $\Box_{g_\bhm}$ at frequency $\sigma$, i.e.\ solutions $u$ of $\Box_{g_\bhm}u=0$ which for some $n\in\N_0$ can be written as $u=\sum_{k=0}^n t_*^k e^{-i\sigma t_*}u_k(x)$ where $u_k\in\CI(\ol{\Omega_\bhm})$. The \emph{multiplicity} of $\sigma$ is
  \[
    m_\bhm(\sigma) := \dim\Res_\bhm(\sigma).
  \]
  We similarly define $\Res_\dS(\sigma)\subset\CI(\R_{t_*}\times\ol{\Omega_\dS})$ and $m_\dS(\sigma)$ with respect to $\Box_{g_\dS}$.
\end{definition}

Thus, $\sigma\in\QNM(\bhm):=\QNM(3,\bhm,\hat\bha\bhm)$ if and only if $\Res_\bhm(\sigma)\neq\{0\}$, i.e.\ $m_\bhm(\sigma)\neq 0$.\footnote{Note that the existence of a smooth resonant state is independent of the choice of the function $F_\bhm$ in~\eqref{EqKFm} or $F_{\Lambda,\bhm,\bha}$ in~\eqref{EqIKerrStar}, as long as these functions equal $-1$, resp.\ $+1$ at the event, resp.\ cosmological horizon.} For sufficiently small $\bhm$, the Fredholm theory of \cite[\S6]{VasyMicroKerrdS} can be shown to apply to $\Box_{g_\bhm}(\sigma)$ (see also~\eqref{EqKBdMap} below), and thus $\Box_{g_\bhm}(\sigma)^{-1}$ is a meromorphic family of operators on $\CI(\ol{\Omega_\bhm})$. As shown in \cite[\S5.1.1]{HintzVasyKdSStability}, an equivalent definition of $\Res_\bhm(\sigma)$ is then
\begin{equation}
\label{EqKResEquiv}
  \Res_\bhm(\sigma) = \biggl\{ \res_{\zeta=\sigma}\Bigl( e^{-i\zeta t_*}\Box_{g_\bhm}(\zeta)^{-1}p(\zeta)\Bigr) \colon p(\zeta)\ \text{is a polynomial with values in}\ \CI(\ol{\Omega_\bhm}) \biggr\},
\end{equation}
and the multiplicity can be computed via
\begin{equation}
\label{EqKResMult}
  m_\bhm(\sigma) = \frac{1}{2\pi i}\tr\oint_\sigma\Box_{g_\bhm}(\zeta)^{-1}\pa_\zeta\Box_{g_\bhm}(\zeta)\,\dd\zeta,
\end{equation}
where $\oint_\sigma$ is the contour integral over a circle enclosing $\sigma$ counterclockwise which contains no resonances other than $\sigma$. (The integral is a finite rank operator on $\CI(\ol{\Omega_\bhm})$, and hence its trace is well-defined.) There are analogous expressions for $\Res_\dS(\sigma)$ and $m_\dS(\sigma)$.

\begin{definition}[Quasinormal modes with multiplicity]
\label{DefKQNM}
  For $\bhm\in(0,\bhm_0]$, we put
  \begin{alignat*}{3}
    \QNM^*(\bhm) &:= \bigl\{ (\sigma,m_\bhm(\sigma))\in\C\times\N \colon &&m_\bhm(\sigma)\geq 1 \bigr\} &&\subset \C\times\N, \\
    \QNM_\dS^* &:= \bigl\{ (\sigma,m_\dS(\sigma))\in\C\times\N \colon &&m_\dS(\sigma)\geq 1 \bigr\} &&\subset \C\times\N.
  \end{alignat*}
  Furthermore, $\QNM(\bhm)=\QNM(3,\bhm,\hat\bha\bhm)$ is the projection of $\QNM^*(\bhm)$ to the first factor, and $\QNM_\dS$ is the projection of $\QNM^*_\dS$ to the first factor.
\end{definition}

\begin{lemma}[QNMs of de~Sitter space]
\label{LemmaKdSQNM}
  We have $\QNM_\dS=-i\N_0$, and
  \[
    \QNM_\dS^* = \{ (-i\ell, m) \colon \ell\in\N_0,\ m=m_\dS(-i\ell) \}
  \]
  where
  \begin{equation}
  \label{EqKdSQNM}
    m_\dS(-i\ell) = \begin{cases} 1, & \ell=0, \\ \ell^2+2, & \ell\geq 1. \end{cases}
  \end{equation}
\end{lemma}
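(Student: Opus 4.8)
The statement is classical, and I would prove it by separation of variables reduced to a hypergeometric equation. First I would pass to the static coordinate $t=t_*+T_\dS(r)$, $T_\dS'(r)=\tilde\chi^c(r)/(1-r^2)$ (see the footnote after \eqref{EqKdS}), in which $g_\dS=-(1-r^2)\dd t^2+(1-r^2)^{-1}\dd r^2+r^2\slg$ on $\R_t\times(B(0,3)\setminus\{r=1\})$. Since $\Box_{g_\dS}$ is stationary and $SO(3)$-invariant, it acts within each summand $e^{-i\sigma t}\cdot\{\text{degree-}\ell\text{ spherical harmonics}\}\otimes\{\text{radial functions}\}$, so I would first determine the ordinary resonant states $u=e^{-i\sigma t}Y_{\ell m}(\omega)R(r)$, deferring the generalized ones. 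The equation $\Box_{g_\dS}u=0$ becomes
\[
  \frac{\sigma^2}{1-r^2}R+\frac1{r^2}\bigl(r^2(1-r^2)R'\bigr)'-\frac{\ell(\ell+1)}{r^2}R=0 ,
\]
whose only singular points in $r\in(0,2]$ are $r=0$ and $r=1$. At $r=0$ the indicial roots are $\ell$ and $-\ell-1$, and smoothness of $u$ at the origin forces the $r^\ell$ branch; at the cosmological horizon $r=1$ the indicial roots are $\pm i\sigma/2$, and since $u=e^{-i\sigma t_*}\bigl(e^{-i\sigma T_\dS}Y_{\ell m}R\bigr)$ with $e^{-i\sigma T_\dS}\sim(1-r)^{i\sigma/2}$, smoothness of the resonant state in the $t_*$-coordinates selects the $(1-r)^{-i\sigma/2}$ branch with no logarithmic term.

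Next I would put $z=r^2$ and $R=z^{\ell/2}(1-z)^{-i\sigma/2}G$, turning the radial equation into the hypergeometric equation with parameters $a=\tfrac12(\ell-i\sigma)$, $b=\tfrac12(\ell-i\sigma+3)$, $c=\ell+\tfrac32$; the branch regular at $z=0$ is $G={}_2F_1(a,b;c;z)$. The two boundary conditions above are equivalent to the single requirement that ${}_2F_1(a,b;c;z)$ continue smoothly across $z=1$, and the Gauss connection formula — together with its confluent version when $i\sigma\in\Z$, in which a $\log(1-z)$ term with coefficient proportional to $1/(\Gamma(a)\Gamma(b))$ appears — shows this holds exactly when $a\in-\N_0$ or $b\in-\N_0$, i.e.\ when ${}_2F_1$ terminates and $R$ is a polynomial in $r^2$, which is then automatically smooth on all of $[0,2]$. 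Both conditions force $\sigma\in-i\N_0$: concretely $a\in-\N_0\iff\sigma=-i(\ell+2k)$ and $b\in-\N_0\iff\sigma=-i(\ell+3+2k)$, $k\in\N_0$; and neither can hold for $\Im\sigma>0$ or for $\sigma\in\R\setminus\{0\}$. Hence $\QNM_\dS\subset-i\N_0$.

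For the count, fix $\sigma=-iN$. A pair $(\ell,m)$ contributes a resonant state iff $N\in\{\ell\}\cup\{\ell+2,\ell+3,\ell+4,\dots\}$, i.e.\ iff $\ell\in\{0,1,\dots,N\}\setminus\{N-1\}=\{0,\dots,N-2\}\cup\{N\}$. Summing the dimensions $2\ell+1$ of the spaces of degree-$\ell$ spherical harmonics over these $\ell$ gives, for $N=0$, only $\ell=0$ and hence dimension $1$, and for $N\ge1$,
\[
  \sum_{\ell=0}^{N-2}(2\ell+1)+(2N+1)=(N-1)^2+(2N+1)=N^2+2 .
\]
In particular $\QNM_\dS=-i\N_0$, and the space of ordinary resonant states at $-iN$ has the asserted dimension.

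It remains to rule out generalized resonant states involving a genuine power of $t_*$, equivalently to show that $\Box_{g_\dS}(\sigma)^{-1}$ has only simple poles; granting this, \eqref{EqKResEquiv}--\eqref{EqKResMult} identify $m_\dS(-iN)$ with the dimension just computed. By $SO(3)$-invariance, $\Box_{g_\dS}(\sigma)^{-1}$ is the direct sum over $(\ell,m)$ of the inverses of the radial pencils, so it suffices to show each of these has a simple pole at $\sigma=-iN$. This is visible from the explicit hypergeometric structure: the Wronskian of the two boundary solutions, suitably normalized, is a meromorphic function of $\sigma$ whose zeros — arising from the two disjoint families $\sigma\in-i(\ell+2\N_0)$ and $\sigma\in-i(\ell+3+2\N_0)$ — are all simple; equivalently, a length-$2$ Jordan chain at $\sigma=-iN$ would force the natural pairing of $\pa_\sigma$ of the radial operator with the resonant and co-resonant states to vanish, contradicting simplicity of the zero. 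I expect this last point — establishing simplicity of the poles, i.e.\ the absence of Jordan blocks — together with the bookkeeping of which spherical degrees survive, to be the only parts requiring genuine care; everything else is a routine ODE and connection-formula computation.
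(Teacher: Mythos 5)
Your approach --- separation into spherical harmonics and reduction to a hypergeometric equation --- is essentially what the paper does: its proof of this lemma defers entirely to [HintzXieSdS, Proposition~2.1], which establishes precisely this description of the de~Sitter resonance set and multiplicities by explicit ODE analysis of this kind. Your reduction, the identification of the boundary conditions (indicial root $\ell$ at $r=0$, regularity across $r=1$ in the $t_*$-picture), the characterization of the resonance set by $a\in-\N_0$ or $b\in-\N_0$, and the final multiplicity count are all correct and match the paper.

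The one genuine gap is the absence of generalized resonant states, i.e.\ the simplicity of the poles, which you flag but do not close --- and the difficulty is sharper than your sketch suggests. The connection coefficient $A_2(\sigma)=\Gamma(c)\Gamma(a+b-c)/(\Gamma(a)\Gamma(b))$ that one would naively call the Wronskian is \emph{not} entire: the factor $\Gamma(a+b-c)=\Gamma(-i\sigma)$ has a simple pole at \emph{every} $\sigma\in-i\N_0$, which at each resonance exactly cancels the simple zero of $1/(\Gamma(a)\Gamma(b))$, so $A_2$ is finite and nonzero there and neither detects the resonances nor measures the pole order. (Correspondingly, the exponent-$0$ Frobenius solution $R_1(\sigma,\cdot)$ at $r=1$, which you implicitly use as the second boundary solution, itself has poles in $\sigma$ on $-i\N$, where the two indicial exponents $0$ and $i\sigma$ collide modulo $\Z$; this is why you had to fall back on the log-coefficient criterion rather than on $A_2=0$ precisely at the values $i\sigma\in\Z$.) To turn ``the Wronskian has simple zeros'' into a proof, you must first renormalize $R_1$ by an appropriate $\Gamma$-factor so that the family becomes holomorphic and nowhere vanishing in $\sigma$, and then verify that the so-normalized Wronskian equals, up to a nonvanishing holomorphic factor, the entire function $1/(\Gamma(a)\Gamma(b))$, whose zeros are indeed simple by the parity-disjointness you observe. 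Alternatively --- closer to your last sentence --- one can verify directly that the nondegeneracy pairing $\langle \pa_\sigma P_\ell(\sigma_0)R_0,\,R_0^*\rangle$ is nonzero using the explicit polynomial resonant state. Either way this is a concrete, if finite, computation that your writeup omits.
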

\begin{proof}
  This follows from \cite[Proposition~2.1]{HintzXieSdS} upon setting $\nu=0$, thus $\lambda_-(\nu)=0$ and $\lambda_+(\nu)=3$ in the notation of the reference. Indeed, for $l\in\N_0$, the space of generalized resonant states with angular dependence given by a degree $l$ spherical harmonic is non-trivial exactly at all spectral parameters $-i\ell$ for $\ell\in(l+2\N_0)\cup(3+l+2\N_0)$, and at each such resonance has dimension $2 l+1$. This gives
  \[
    m_\dS(-i\ell) = \sum_{\genfrac{}{}{0pt}{}{l\in\N_0}{\ell-l\in(2\N_0)\cup(3+2\N_0)}} (2 l+1) = (2\ell+1) + \sum_{k=0}^{\ell-2} (2 k+1).
  \]
  For $\ell=0$, resp.\ $1$, this evaluates to $1$, resp.\ $3=1^2+2$. For $\ell\geq 2$ the second sum is $(\ell-1)^2=\ell^2-2\ell+1$. This gives~\eqref{EqKdSQNM}.
\end{proof}

\begin{thm}[Quasinormal modes of KdS black holes away from extremality: detailed version]
\label{ThmK}
  Let $C_1>0$ be such that $\Im\sigma\neq-C_1$ for all $\sigma\in\QNM_\dS$. Let $\eps>0$ be such that for each $\sigma_*\in\QNM_\dS$ with $\Im\sigma_*\geq-C_1$, the only $\sigma\in\QNM_\dS$ with $|\sigma-\sigma_*|\leq 2\eps$ is $\sigma_*$ itself.\footnote{Thus, one can take any $\eps<\half$. The present formulation generalizes without change to the case of the Klein--Gordon equation.} Then there exists $\bhm_1>0$ so that the following statements hold.
  \begin{enumerate}
  \item\label{ItK1} If $\bhm\in(0,\bhm_1]$ and $\sigma\in\QNM(\bhm)$, $\Im\sigma\geq-C_1$, then there exists $\sigma_*\in\QNM_\dS$ so that $|\sigma-\sigma_*|\leq\eps$.
  \item\label{ItK2} The total multiplicity of QNMs near $\sigma_*\in\QNM_\dS$ with $\Im\sigma_*\geq-C_1$ is independent of $\bhm$, that is,
    \[
      m_\dS(\sigma_*) = \sum_{\genfrac{}{}{0pt}{}{\sigma\in\QNM(\bhm)}{|\sigma-\sigma_*|\leq\eps}} m_\bhm(\sigma),\qquad \bhm\in(0,\bhm_1].
    \]
  \item\label{ItK0} The only resonance $\sigma\in\QNM(\bhm)$ with $|\sigma|\leq\eps$ is $\sigma=0$, with $m_\bhm(0)=1$, and $\Res_\bhm(0)$ consists of all constant functions on $\R_{t_*}\times\ol{\Omega_\bhm}$.
  \item\label{ItK3} Let $K=[r_0,2]\times\Sph^2$, and let $\sigma_*\in\QNM_\dS$ with $\Im\sigma_*\geq-C_1$. Then for all sufficiently small $r_0>0$, the space
    \begin{equation}
    \label{EqK3}
      \biggl\{ u|_{[0,1]_{t_*}\times K} \colon u \in \sum_{\genfrac{}{}{0pt}{}{\sigma\in\QNM(\bhm)}{|\sigma-\sigma_*|\leq\eps}} \Res_\bhm(\sigma) \biggr\}
    \end{equation}
    has dimension $m_\dS(\sigma_*)$ and converges to $\{u|_{[0,1]\times K}\colon u\in\Res_\dS(\sigma_*)\}$ in the topology of $\CI([0,1]\times K)$. (That is, there exists an $\bhm$-dependent basis $u_{\bhm,1},\ldots,u_{\bhm,m_\dS(\sigma_*)}$ of the space~\eqref{EqK3} which converges in $\CI([0,1]\times K)$ to a basis of $\Res_\dS(\sigma_*)|_{[0,1]\times K}$.)
  \end{enumerate}
\end{thm}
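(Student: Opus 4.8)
The plan is to place the spectral family $\bhm\mapsto\Box_{g_\bhm}(\sigma)$ --- for $\Im\sigma$ fixed, say $\Im\sigma\geq-C_1$, regarded as a family parameterized by $(\bhm,\sigma)$ --- into the framework of Q-analysis, as done in~\S\ref{SsKS}, realizing it as a single element of a weighted space $\DiffQ^{2,\alpha}(X)$ of Q-differential operators on $X=B(0,3)$, and then to prove uniform Fredholm-type a priori estimates on suitably weighted Q-Sobolev spaces $H_{\Qop,\bhm,\sigma}^{s,(l,\gamma,l',\sfr,b)}(\ol{\Omega_\bhm})$, with variable order $\sfr$ at $\iface$ to accommodate incoming/outgoing radial point thresholds. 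Concretely one wants $\|u\|_{H_{\Qop,\bhm,\sigma}^{s,\alpha}}\leq C\|\Box_{g_\bhm}(\sigma)u\|_{H_{\Qop,\bhm,\sigma}^{s-1,\alpha'}}$ (together with the dual estimate for the adjoint), uniformly for all small $\bhm$ and all $\sigma$ with $\Im\sigma\geq-C_1$ and $\mathrm{dist}(\sigma,\QNM_\dS)\geq\eps$. For fixed $\bhm$ such estimates, via the theory of \cite{VasyMicroKerrdS}, give the meromorphy of $\Box_{g_\bhm}(\sigma)^{-1}$ on $\CI(\ol{\Omega_\bhm})$, and uniformity yields a uniform bound on $\Box_{g_\bhm}(\sigma)^{-1}$ away from $\eps$-neighborhoods of $\QNM_\dS$; this is part~\eqref{ItK1}. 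For part~\eqref{ItK2} one sets up, near each $\sigma_*\in\QNM_\dS$ with $\Im\sigma_*\geq-C_1$, a Grushin problem whose Schur complement is a holomorphic $m_\dS(\sigma_*)\times m_\dS(\sigma_*)$ matrix depending on $\bhm$ and converging as $\bhm\searrow 0$ to the Schur complement for $\Box_{g_\dS}(\sigma)$, whose determinant --- by Lemma~\ref{LemmaKdSQNM} and the corresponding de~Sitter computation --- vanishes to order exactly $m_\dS(\sigma_*)$ at $\sigma_*$ and nowhere else in $|\sigma-\sigma_*|\leq\eps$. Rouch\'e's theorem then shows $\sum_{|\sigma-\sigma_*|\leq\eps}m_\bhm(\sigma)=m_\dS(\sigma_*)$, using the formula~\eqref{EqKResMult}. (The bounded-frequency part of all this is carried out on the q-single space rather than the full Q-single space; this is Proposition~\ref{PropKBdNo}.)

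The uniform estimate is assembled from the symbolic Q-calculus of~\S\ref{SQ} together with invertibility, on appropriate weighted Sobolev spaces, of the normal operators of $\Box_{g_\bhm}(\sigma)$ listed in Corollary~\ref{CorQPNormal}: (i) elliptic regularity, real principal type propagation, and radial point estimates at the generalized radial sets over the horizons and over $\iface$, which are verbatim Q-translations of \cite{VasyMicroKerrdS} and uniform by construction; (ii) invertibility of $N_\zface(\Box_{g_\bhm}(\sigma))=\Box_{g_\dS}(\sigma)$ on $\dot X$, an elliptic b-operator near $r=0$, which holds on the relevant weighted b-Sobolev spaces exactly when $\sigma\notin\QNM_\dS$; (iii) invertibility of the Kerr normal operators --- $N_{\nface_{\pm,\low}}$, the $\scbtop$-operator interpolating $\Box_{\hat g}$ between zero and nonzero energy, $N_{\nface_{\tilde\sigma_0}}=\Box_{\hat g}(\tilde\sigma_0)$ for $\tilde\sigma_0\in\R\setminus\{0\}$, and the semiclassical scattering operator $N_{\nface_{\pm,\tilde\semi}}$ --- where removal of compact error terms in the first two uses Kerr mode stability (Theorem~\ref{ThmIKerr}) and the analysis of the $\tilde\sigma=0$ operator (Lemma~\ref{LemmaKz}); (iv) invertibility of the semiclassical de~Sitter normal operators $N_{\mface_{\pm,\semi}}$, semiclassical cone operators whose analysis requires propagation through the artificial conic point at $r=0$ and invertibility of the Euclidean-cone model --- the flat-space Laplacian spectral family at frequency $1$ --- following \cite{HintzConicProp,HintzConicPowers}; and (v) the normally hyperbolic trapped-set estimate of \cite{DyatlovSpectralGaps} for the extremely large frequency regime. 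Propagating Q-regularity out of the elliptic region into each boundary face, using the corresponding model invertibility to improve the weight there, and patching with a partition of unity on $X_\Qop$ yields a global estimate with relatively compact error; the error is absorbed because on $\{\Im\sigma\geq-C_1,\ \mathrm{dist}(\sigma,\QNM_\dS)\geq\eps\}$ the operator is in fact invertible.

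Part~\eqref{ItK0} is handled directly: constants lie in $\Res_\bhm(0)$, and a standard energy (integration-by-parts) argument on $\R_{t_*}\times\ol{\Omega_\bhm}$, uniform as $\bhm\searrow 0$, shows that for small $\bhm$ these are the only generalized resonant states at $\sigma=0$, so $m_\bhm(0)=1$ (consistent with part~\eqref{ItK2} at $\sigma_*=0$, where $m_\dS(0)=1$). For part~\eqref{ItK3} and the convergence statement, I would use the characterization~\eqref{EqKResEquiv} of $\Res_\bhm(\sigma)$ together with the finite-rank spectral projector $\Pi_{\bhm,\sigma_*}=\frac{1}{2\pi i}\oint_{\sigma_*}\Box_{g_\bhm}(\zeta)^{-1}\pa_\zeta\Box_{g_\bhm}(\zeta)\,\dd\zeta$: the uniform resolvent bounds on the contour $|\zeta-\sigma_*|=\eps$ show $\Pi_{\bhm,\sigma_*}$, hence the space~\eqref{EqK3}, depends continuously on $\bhm$ down to $\bhm=0$, at which it reduces to the de~Sitter projector onto $\Res_\dS(\sigma_*)$. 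Restricting to $K=[r_0,2]\times\Sph^2$ with $r_0>0$ fixed --- so that on $K$, which is bounded away from the degenerating region $r\lesssim\bhm$, the Q-Sobolev norms are uniformly equivalent to ordinary $\CI$-norms by Proposition~\ref{PropQHRel} --- and to $t_*\in[0,1]$, this gives the asserted $\CI([0,1]\times K)$-convergence of an $\bhm$-dependent basis of~\eqref{EqK3} to a basis of $\Res_\dS(\sigma_*)|_{[0,1]\times K}$; the dimension of~\eqref{EqK3} being $m_\dS(\sigma_*)$ is part~\eqref{ItK2}.

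The hard part will be the uniform estimates in the intermediate regimes: the \emph{large frequency regime} $1\ll|\sigma|\ll\bhm^{-1}$, where one must simultaneously run semiclassical cone propagation through the artificial cone point at $r=0$ (de~Sitter side) and the scattering-b-transition low-energy analysis at spatial infinity (Kerr side), and verify that the single shared model operator --- the conic Laplacian at frequency $1$ --- is invertible on the relevant spaces so that no spurious obstruction appears at the corners $\zface\cap\mface$ and $\mface\cap\nface_\pm\cap\iface_\pm$; and the \emph{very large frequency regime} $|\sigma|\sim\bhm^{-1}$, where the limiting-absorption estimate for $\Box_{\hat g}(\tilde\sigma_0)$, $\tilde\sigma_0\in\R\setminus\{0\}$, carries an a priori compact error whose removal is precisely where the qualitative mode stability of subextremal Kerr (Theorem~\ref{ThmIKerr}) is indispensable. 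Making the weights, variable orders, and the matching of the de~Sitter and Kerr model spaces mutually consistent at every corner of $X_\Qop$ --- so that all pieces patch into one estimate losing only a single derivative on $\Box_{g_\bhm}(\sigma)u$ --- is the main bookkeeping obstacle.
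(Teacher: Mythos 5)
Your proposal matches the paper's architecture closely: the full spectral family is realized as a single Q-differential operator via Proposition~\ref{PropKS}, uniform a priori estimates are assembled from Q-symbolic propagation plus invertibility of the normal operators at $\zface$, $\nface$, and $\mface$ (with Kerr mode stability removing compact error terms at real nonzero $\tilde\sigma$, and the conic Laplacian $\Box_\tface(1)$ serving as the shared model mediating the corners of $X_\Qop$), and parts~\eqref{ItK1}--\eqref{ItK2} are extracted via a Grushin reduction and Rouch\'e's theorem, exactly as in Propositions~\ref{PropKBdNo}--\ref{PropKBdYes}.

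Two points in your sketch deserve attention. First, for part~\eqref{ItK0}, the paper does \emph{not} run a zero-energy integration-by-parts argument on the KdS spacetime; it simply combines part~\eqref{ItK2} at $\sigma_*=0$ (where $m_\dS(0)=1$) with the trivial observation that constants lie in $\Res_\bhm(0)$, which already forces $m_\bhm(0)=1$ and excludes any other resonance in $|\sigma|\leq\eps$. Your proposed energy argument is therefore redundant given the Rouch\'e step; moreover, for $|\hat\bha|$ not small it would require separate care (the event horizon is rotating, so the standard Killing energy is not positive near it, and generalized resonant states $u_0+t_*u_1$ must also be ruled out), so it is safer to avoid it entirely. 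Second, in part~\eqref{ItK3}, asserting that the spectral projector $\Pi_{\bhm,\sigma_*}$ ``depends continuously on $\bhm$ down to $\bhm=0$'' glosses over the fact that the Q-Sobolev norms near $r=0$ degenerate as $\bhm\searrow 0$, so the $\bhm$-dependent projectors live on a degenerating family of spaces and do not converge in any fixed operator norm. The paper resolves this by pairing against a fixed finite family of test functions $p_j(\zeta)\in\CIc(\dot\Omega\cap\{r>r_\flat\})$ supported away from the degenerating region, expressing $\Box_{g_\bhm}(\zeta)^{-1}$ via the Schur-complement formula~\eqref{EqKBdYesBoxInv}, and upgrading a weak subsequential limit (after cutting off to $r\gtrsim r_0$ and using Proposition~\ref{PropqHRel}) to strong $\CI$-convergence on $[0,1]\times K$. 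Your restriction to a compact set $K$ away from $r=0$ is the right instinct, but the convergence of the projected resonant states must be proved through this explicit testing mechanism rather than asserted as continuity of abstract projectors.
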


Parts~\eqref{ItK1} and \eqref{ItK2} together give precise meaning to the statement that the quasinormal modes of Kerr--de~Sitter space with parameters $(\Lambda,\bhm,\bha)=(3,\bhm,\hat\bha\bhm)$ converge \emph{with multiplicity} to those of de~Sitter space in any half space $\Im\sigma\geq -C_1$ as $\bhm\searrow 0$.

\subsection{The spectral family as a Q-differential operator}
\label{SsKS}

As the starting point for the proof of Theorem~\ref{ThmK}, we now place $\Box_{g_\bhm}(\sigma)$ into the context of q- and Q-analysis. We use the terminology of~\S\ref{SQ}, with two small modifications: \begin{enumerate*} \item the mass $\bhm$ will be restricted to a short interval $[0,\bhm_0]$ (rather than $[0,1]$) where $\bhm_0>0$ is chosen according to the requirement stated before~\eqref{EqKSpatMfd}; and \item we shall write $\sigma_0$ for the real parameter that was previously denoted $\sigma$ in~\S\S\ref{SsQP}--\ref{SsQH}.\end{enumerate*} We reserve the symbol $\sigma$ for the spectral parameter (which might be complex).

Let $X$ denote a 3-dimensional torus; we work in a local coordinate chart $B(0,3)$ near a point $0\in X$ as in~\eqref{EqKdSSpace}. (We make $X$ compact merely so that Sobolev spaces are well-defined.) At fixed (or more generally for bounded) frequencies $\sigma\in\C$, our analysis will take place in the domain
\begin{subequations}
\begin{equation}
\label{EqKSOmegaq}
  \Omega_\qop := \{ \hat r>1,\ r<2 \} \cap X_\qop.
\end{equation}
Thus, $\Omega_\qop$ resolves $\bigsqcup_{\bhm\in(0,\bhm_0]}\{\bhm\}\times\Omega_\bhm$ in the singular limit $\bhm\searrow 0$, and we have
\begin{equation}
\label{EqKSOmegahat}
  \hat\Omega := \Omega_\qop\cap\,\zface_\qop = \{ \hat r>1 \} \cap \zface_\qop,\qquad
  \dot\Omega := \Omega_\qop\cap\,\mface_\qop = [0,2)_r\times\Sph^2.
\end{equation}
(Here, $\hat\Omega$ is a subset of the spatial manifold in~\eqref{EqKKerrSpace}; the radius $1$ is chosen for notational convenience.) We denote by $\ol{\Omega_\qop}=\{\hat r\geq 1,\ r\leq 2\}\cap X_\qop$ and $\ol{\hat\Omega}=\{\hat r\geq 1\}\cap\,\zface_\qop$ the closures of $\Omega_\qop$ and $\hat\Omega$ inside $X_\qop$. See Figure~\ref{FigKSOmega}.

\begin{figure}[!ht]
\centering
\includegraphics{FigKSOmega}
\caption{The domains $\Omega_\qop$, $\hat\Omega$, and $\dot\Omega\subset X_\qop$ defined in~\eqref{EqKSOmegaq} and~\eqref{EqKSOmegahat}, without the factor $\Sph^2$.}
\label{FigKSOmega}
\end{figure}

On the Q-single space $X_\Qop$, we shall work on the lift of $\ol{\R_{\sigma_0}}\times\Omega_\qop$,
\begin{equation}
\label{EqKSOmegaQ}
  \Omega_\Qop := \{ \hat r>1,\ r<2 \} \cap X_\Qop.
\end{equation}
\end{subequations}

We need to analyze also non-real frequencies $\sigma$. For now, we work in strips $\{\Im\sigma\leq C_1\}$ for arbitrary fixed $C_1>0$, and the total space of our analysis is therefore
\[
  [-C_1,C_1]\times\Omega_\Qop \subset [-C_1,C_1]\times X_\Qop.
\]
(The modifications needed to treat all of $\{\Im\sigma\geq-C_1\}$ will be discussed in~\S\ref{SsKU}.) The total spectral family $(\bhm,\sigma)\mapsto\Box_{g_\bhm}(\sigma)$, where $\bhm\in(0,\bhm_0]$ and $\sigma=\sigma_0+i\sigma_1$ with $\sigma_0\in\R$, $\sigma_1\in[-C_1,C_1]$, defines an element
\[
  \Box(\cdot+i\sigma_1)\in\Diff^2(\ol{\Omega_\Qop}\cap\{\bhm>0\}),
\]
with smooth dependence on $\sigma_1$. The following key result puts the total spectral family into the Q-analytic framework developed in~\S\ref{SQ}, and is indeed the motivation for the development of this framework.

\begin{prop}[Properties of the total spectral family]
\label{PropKS}
  The total spectral family $\Box(\cdot+i\sigma_1)$ satisfies
  \begin{equation}
  \label{EqKSCat}
    \Box(\cdot+i\sigma_1) \in \DiffQ^{2,(2,0,2,2,2)}(\ol{\Omega_\Qop}) = \rho_\zface^{-2}\rho_\mface^0\rho_\nface^{-2}\rho_\sface^{-2}\rho_\iface^{-2}\DiffQ^2(\ol{\Omega_\Qop}),
  \end{equation}
  and depends smoothly on $\sigma_1\in[-C_1,C_1]$. Moreover, in the notation of Corollary~\usref{CorQPNormal}:
  \begin{enumerate}
  \item\label{ItKSSymb} The Q-principal symbol of $\Box(\cdot+i\sigma_1)$ is $G(\cdot+i\sigma_1,-;-,-)$, given by
    \begin{equation}
    \label{EqKSSymb}
      G\colon(\sigma,\bhm;x,\xi)\mapsto G_\bhm|_x(-\sigma\,\dd t_*+\xi),
    \end{equation}
    where $x\in\ol{\Omega_\bhm}$, $\xi\in T^*_x\ol{\Omega_\bhm}$, and $\sigma=\sigma_0+i\sigma_1$, in the sense that $\sigmaQ^{2,(2,0,2,2,2)}(\Box(\cdot+i\sigma_1))$ is given by the equivalence class of $G(\cdot+i\sigma_1)$ in $(S^{2,(2,0,2,2,2)}/S^{1,(2,0,2,1,1)})(\TQ_{\ol{\Omega_\Qop}}^*X)$.
  \item\label{ItKSzf} We have $N_\zface(\bhm^2\Box(\cdot+i\sigma_1))=\Box_{\hat g}(0)$ (regarded as a $\sigma_0$-independent operator on $\ol{\R_{\sigma_0}}\times\ol{\hat\Omega}\subset\zface$, cf.\ Proposition~\usref{PropQStruct}\eqref{ItQStructzf}).
  \item\label{ItKSnf} For $\tilde\sigma_0\in\R\setminus\{0\}$, we have $N_{\nface_{\tilde\sigma_0}}(\bhm^2\Box(\cdot+i\sigma_1))=\Box_{\hat g}(\tilde\sigma_0)$.
  \item\label{ItKSmf} For $\sigma_0\in\R$, we have $N_{\mface_{\sigma_0}}(\Box(\cdot+i\sigma_1))=\Box_{g_\dS}(\sigma)$, where $\sigma=\sigma_0+i\sigma_1$.
  \end{enumerate}
\end{prop}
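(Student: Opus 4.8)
\emph{Plan of proof.} The spectral family $\Box_{g_\bhm}(\sigma)$ is obtained from the Laplace--Beltrami operator of $g_\bhm$ by formally replacing $\pa_{t_*}$ with $-i\sigma$, so it is a second-order operator built out of the components of $g_\bhm^{-1}$ (displayed in~\eqref{EqKExtDual}) and the volume density; its symbol is $G_\bhm\big|_x(-\sigma\,\dd t_*+\xi)$. I would establish all four claims by direct inspection in the local coordinate charts on $X_\Qop$ listed in~\S\ref{SsQV}, together with the two scaling relations underlying the whole construction: away from $\zface\cup\nface$ the metric $g_\bhm$ converges smoothly --- indeed polyhomogeneously --- to $g_\dS$ (see~\eqref{EqKdS}) as $\bhm\searrow 0$, while in the rescaled coordinates $\hat r=r/\bhm$, $\hat t_*=t_*/\bhm$ of~\S\ref{SsKL} one has the \emph{exact} identity $\bhm^2\Box_{g_\bhm}(\sigma)=\Box_{\hat g_\bhm}(\tilde\sigma)$, where $\hat g_\bhm:=\bhm^{-2}g_\bhm$ is a smooth family of metrics on $\ol{\hat\Omega}$ with $\hat g_0=\hat g$ (see~\eqref{EqKKerr}--\eqref{EqKKerrDual}) and $\tilde\sigma=\bhm\sigma$. (This identity is exact because $\Box_{\lambda g}=\lambda^{-1}\Box_g$ for constant $\lambda$, and because passing to the spectral family amounts to freezing $\pa_{t_*}=-i\sigma$, i.e.\ $\pa_{\hat t_*}=-i\tilde\sigma$.) Using Lemma~\ref{LemmaKRoots} and the elementary relations $\varrho_\bhm^2=r^2+\hat\bha^2\bhm^2\cos^2\theta$, $\mu_\bhm(\bhm\hat r)=\bhm^2\hat\mu_\bhm(\hat r)$ with $\hat\mu_\bhm$ smooth in $\bhm$, one checks that $\mu_\bhm$, $\varrho_\bhm^2$, $b_\bhm$, $c_\bhm$, $F_\bhm$ and the components of $g_\bhm^{-1}$ all lie in the appropriate weighted conormal classes on $X_\Qop$.

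For the membership~\eqref{EqKSCat} I would argue regime by regime. Near $\mface$ with $\sigma$ bounded, $\Box_{g_\bhm}(\sigma)$ is a smooth family of second-order operators on $\dot X$, hence lies in $\Diff^2(\dot X)\subset\DiffQ^{2,(2,0,2,2,2)}$ by~\eqref{EqQDiffEx2}, with vanishing weight at $\mface$. Near $\zface$ and near $\nface_\pm$ with $\tilde\sigma$ bounded, the identity $\Box_{g_\bhm}(\sigma)=\bhm^{-2}\Box_{\hat g_\bhm}(\tilde\sigma)$ exhibits $\Box_{g_\bhm}(\sigma)$ as $\bhm^{-2}$ --- which lies in $\DiffQ^{0,(2,2,2,2,0)}$ by~\eqref{EqQDiffEx} --- times a smooth family of ($\scbtop$- or semiclassical scattering) differential operators on $\ol{\hat\Omega}$; since $\Box_{\hat g}$ is asymptotically flat it vanishes to second order, as a b-operator, at spatial infinity of the Kerr model --- precisely where $\mface$ meets $\zface$ (where in addition $\tilde\sigma=\bhm\sigma_0$ vanishes to the same order) --- and this cancels the factor $\rho_\mface^{-2}$ coming from $\bhm^{-2}=\rho_\zface^{-2}\rho_\mface^{-2}$ there, restoring the weight $0$ at $\mface$. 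For $|\sigma|\to\infty$, away from $\zface\cup\nface$, the additional weight $-2$ at $\iface$ and $\sface$ comes from the degree-two $\sigma$-dependence, using $\sigma\in\DiffQ^{0,(0,0,1,1,1)}$ and the semiclassical rescalings built into $\VQ(X)$; what remains is a bookkeeping check that in each corner chart of~\S\ref{SsQV}, after extracting the stated weight, the operator is a polynomial of degree $\le 2$ in the listed spanning Q-vector fields with coefficients conormal up to the boundary. Smooth dependence on $\sigma_1\in[-C_1,C_1]$ is manifest since $\sigma_1$ enters polynomially.

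For the principal symbol~\eqref{ItKSSymb}, writing $\Box_{g_\bhm}(\sigma)$ --- after extracting the weight in~\eqref{EqKSCat} --- in the form $\tfrac12\sum g^{\mu\nu}_\bhm V_\mu V_\nu$ modulo terms of lower Q-order, with $V_\mu$ a local frame of $\VQ(X)$ and $-i\sigma\,\dd t_*$ inserted in the appropriate slot, multiplicativity of $\sigmaQ$ and the identification $N^*\diag_\Qop\cong\TQ^*X$ from~\eqref{EqQPCon} give that $\sigmaQ^{2,(2,0,2,2,2)}(\Box(\cdot+i\sigma_1))$ is the class of $G(\cdot+i\sigma_1)$; under the isomorphisms of Corollary~\ref{CorQBundle} this is compatible with the principal symbols of the model operators. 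For~\eqref{ItKSzf}--\eqref{ItKSmf} I would use the ``testing'' characterization of the normal operators ($N_H(A)u=(A\tilde u)|_H$ for a smooth extension $\tilde u$, cf.\ the discussion after Corollary~\ref{CorQPNormal} and equations~\eqref{EqqDiffNorm}, \eqref{EqQNormOp}). Applying this to $A=\bhm^2\Box(\cdot+i\sigma_1)$ and invoking $\bhm^2\Box_{g_\bhm}(\sigma)=\Box_{\hat g_\bhm}(\tilde\sigma)$ with $\hat g_\bhm\to\hat g$ and $\tilde\sigma=\bhm\sigma_0\to 0$ on $\zface$ yields $N_\zface(\bhm^2\Box(\cdot+i\sigma_1))=\Box_{\hat g}(0)$; restricting instead to the level set $\tilde\sigma=\tilde\sigma_0\neq 0$ yields $N_{\nface_{\tilde\sigma_0}}(\bhm^2\Box(\cdot+i\sigma_1))=\Box_{\hat g}(\tilde\sigma_0)$; and restricting $\Box(\cdot+i\sigma_1)$ to $\mface_{\sigma_0}$, where $\bhm=0$ so that $g_\bhm=g_\dS$ (lifted to $\dot X$, hence a b-operator near $r=0$), yields $N_{\mface_{\sigma_0}}(\Box(\cdot+i\sigma_1))=\Box_{g_\dS}(\sigma_0+i\sigma_1)$.

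The main obstacle is the uniform weight bookkeeping in the transition regions --- the corners $\zface\cap\mface\cap\nface_\pm$, $\mface\cap\nface_\pm\cap\iface_\pm$, and $\nface_\pm\cap\iface_\pm\cap\sface$ --- where the disappearance of the black hole, the conical point at $r=0$, and the high-frequency semiclassical limit interact simultaneously; there one must verify that the interplay of $\mu_\bhm$, $\varrho_\bhm^2$, the $F_\bhm$ cross-terms, and the powers of $\sigma$ produces exactly the weight $(2,0,2,2,2)$ and nothing worse, and in particular that the order of vanishing of the Kerr model operator at spatial infinity is precisely what forces the $\mface$-weight to be $0$. A more structural alternative, which I would fall back on if the chart computations become unwieldy, is to first show that the components of $g_\bhm^{-1}$ and the volume density lie in the appropriate weighted conormal classes on $X_\Qop$, and then invoke the closure of $\DiffQ$ under the operations entering the Laplace--Beltrami formula together with $\sigma\in\DiffQ^{0,(0,0,1,1,1)}$.
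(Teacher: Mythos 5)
Your proposal is correct in outline, but it follows the alternative route that the paper explicitly acknowledges and then declines: ``One can prove this by direct calculation using the form~\eqref{EqKExt} of the KdS metric. We instead give a conceptual proof\dots'' The paper's actual proof first shows (Lemma~\ref{LemmaKMetric}) that the family $g$ and its inverse extend smoothly, as sections of pullback bundles, to $\ol{\Omega_\qop}$, with the right restrictions to $\mface_\qop$ and $\zface_\qop$; it then proves Proposition~\ref{PropKSpectral} for the spectral family of $\nabla^g$ (building on the same computation for $\dd$, working with Christoffel symbols); and finally obtains Proposition~\ref{PropKS} in one line as $\Box(\cdot+i\sigma_1)=\tr_g(\nabla^g(\cdot+i\sigma_1)\circ\nabla^g(\cdot+i\sigma_1))$, invoking multiplicativity of $\sigmaQ$ and the normal operator maps. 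The paper notes (in a footnote) that this longer route ``has the advantage of allowing for straightforward generalizations\dots to spectral families of other geometric operators''---this is what the conceptual approach buys.

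The distinctive and genuinely useful idea in your write-up is the exact scaling identity $\bhm^2\Box_{g_\bhm}(\sigma)=\Box_{\hat g_\bhm}(\tilde\sigma)$ with $\hat g_\bhm:=\bhm^{-2}g_\bhm$ and $\tilde\sigma=\bhm\sigma$, which neatly packages the $\zface$- and $\nface$-normal operator computations and makes the verification of~\eqref{ItKSzf}--\eqref{ItKSnf} a one-liner via testing. You also correctly identify the key point where the $\mface$-weight $0$ comes from: the quadratic vanishing (in $\hat\rho=1/\hat r$) of $\Box_{\hat g}$ as a b-operator at Kerr spatial infinity cancels the $\rho_\mface^{-2}$ contributed by $\bhm^{-2}$ near $\zface\cap\mface$ (and $\nface\cap\mface$). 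Two minor points: the inclusion you cite is $\Diff^2(X)\subset\DiffQ^{2,(2,0,2,2,2)}(X)$ (equation~\eqref{EqQDiffEx2}), not $\Diff^2(\dot X)$; and the ``regime-by-regime bookkeeping check'' that you defer is precisely the part of the argument the paper absorbs by working bundle-theoretically, so if you pursue the coordinate route you would need to actually carry it out in the corner charts $\zface\cap\mface\cap\nface$, $\mface\cap\nface\cap\iface$ and $\nface\cap\iface\cap\sface$ listed in \S\ref{SsQV}. Neither issue is a gap; they are places where the sketch would need to be filled in.
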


One can prove this by direct calculation using the form~\eqref{EqKExt} of the KdS metric. We instead give a conceptual proof, which highlights the relevant structural properties of the family of metrics $g_\bhm$.\footnote{This route is longer, but it has the advantage of allowing for straightforward generalizations of Proposition~\ref{PropKS} to spectral families of other geometric operators---even if in the present paper we do not discuss such generalizations.} To begin with, we define
\[
  M:=\R_{t_*}\times X,\qquad
  \dot M:=\R_{t_*}\times\dot X,\qquad
  \hat M:=\R_{\hat t_*}\times\hat X,
\]
and identify $X$ with $\{0\}\times X\subset M$, likewise $\dot X\subset\dot M$ and $\hat X\subset\hat M$. Smooth stationary metrics on $M$ can be identified with smooth sections of $S^2 T^*_X M\to X$, likewise for $\dot M$, $\hat M$.

Denote now by
\[
  \pi_\qop\colon X_\qop\to X,\qquad
  \pi_\Qop\colon X_\Qop\to X
\]
the lifts of the projection maps $[0,\bhm_0]\times X\ni(\bhm,x)\mapsto x\in X$ and $\ol\R\times[0,\bhm_0]\times X\to X$, respectively. The pullback bundle $\pi_\qop^*T_X M\to\Omega_\qop$ will play two roles. Firstly, it is a bundle in (the tensor powers of) which geometric objects are valued (see Lemma~\ref{LemmaKMetric} below). Secondly, in $\bhm>0$ its sections are smooth families of horizontal vector fields; in this latter regard, we note:

\begin{lemma}[Bundle isomorphisms]
\label{LemmaKBundleIso}
  Let $\dot\beta\colon\mface_\qop=\dot X=[X;\{0\}]\to X$ denote the blow-down map. Then the identity map $(\pi_\qop^* T_X M)|_{(0,x)}=T_x M=T_x\dot M$ for $x\in X\setminus\{0\}$ extends to a bundle isomorphism
  \begin{equation}
  \label{EqKBundleIsomf}
    (\pi_\qop^*T_X M)|_{\mface_\qop} = \dot\beta^* T_{\dot X}\dot M.
  \end{equation}
  Moreover, the map $(\pi_\qop^* T X)|_{(\bhm,x)}=T_x X\ni V\mapsto\bhm V\in\Tq_{(\bhm,x)}X$ (for $\bhm\in(0,\bhm_0]$) extends by continuity to a smooth bundle map on $X_\qop$ and then restricts to $\zface_\qop=\hat X$ as an isomorphism
  \begin{equation}
  \label{EqKBundleIsozf}
    \iota \colon (\pi_\qop^*T X)|_{\zface_\qop} \cong \Tsc\hat X\qquad \text{(via `multiplication by $\bhm$')}.
  \end{equation}
\end{lemma}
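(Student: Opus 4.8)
The plan is to prove Lemma~\ref{LemmaKBundleIso} by working in the local coordinate $x\in B(0,3)$ around $0\in X$, where both bundle maps can be written down explicitly, and then noting that away from this chart (i.e.\ where $r\gtrsim 1$) nothing singular happens so the claimed identifications are tautological. Recall that $T_X M\to X$ denotes the restriction to $X=\{t_*=0\}$ of the tangent bundle of $M=\R_{t_*}\times X$; in the local chart, a frame of $\pi_\qop^*T_X M$ is given by the pullbacks of $\pa_{t_*},\pa_{x^1},\pa_{x^2},\pa_{x^3}$, which are $\bhm$-independent sections of $\pi_\qop^*T_X M$ over $X_\qop$.

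\emph{Isomorphism \eqref{EqKBundleIsomf}.} Here $\dot\beta\colon\dot X=[X;\{0\}]\to X$ is the blow-down map, and $\dot M=\R_{t_*}\times\dot X$, so $\dot\beta^*T_{\dot X}\dot M$ is the pullback to $\dot X$ of $T_X M$; over $\mface_\qop\cong\dot X$, a frame of $(\pi_\qop^*T_X M)|_{\mface_\qop}$ is again given by the pullbacks of $\pa_{t_*},\pa_{x^j}$. First I would observe that $\mface_\qop=\dot X$ (Definition~\ref{DefqSingle} and the text following \eqref{EqqXs}), that the blow-down map $\mface_\qop\to X$ factors through $\dot\beta$ (indeed it \emph{equals} $\dot\beta$ once one recalls $\pi_\qop|_{\mface_\qop}$ is the composite $\mface_\qop=\dot X\xrightarrow{\dot\beta}X$), and that $\pi_\qop^*T_X M$ restricted to $\mface_\qop$ is by definition $(\pi_\qop|_{\mface_\qop})^*T_X M=\dot\beta^*T_X M=\dot\beta^*T_{\dot X}\dot M$. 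The content is then simply that the identity on the open dense set $\mface_\qop\cap\{r>0\}=X\setminus\{0\}$ (where all three bundles agree with $T M|_X$) extends over $\pa\dot X=\{r=0\}$ to the \emph{smooth} identity of these frames; this is immediate since the frames $\{\pa_{t_*},\pa_{x^j}\}$ extend smoothly and agree on both sides. So \eqref{EqKBundleIsomf} requires essentially no computation beyond unwinding the definitions.

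\emph{Isomorphism \eqref{EqKBundleIsozf}.} This is the substantive half. Here one uses $T X$ (spatial tangent bundle) rather than $T_X M$; I would drop the $\pa_{t_*}$-direction and work with the pullback $\pi_\qop^*T X$, with frame the pullbacks of $\pa_{x^1},\pa_{x^2},\pa_{x^3}$. The map in question sends, for $\bhm>0$, the section $\pa_{x^j}$ of $\pi_\qop^*T X$ to $\bhm\pa_{x^j}=\pa_{\hat x^j}$ (using $\hat x=x/\bhm$, cf.\ \eqref{EqqCoord2}), which is a section of $\Tq X$ (see Definition~\ref{DefqVF} and the local spanning set $\pa_{\hat x^j}$ of $\Vq(X)$ near $\zface_\qop^\circ$ recorded after Remark~\ref{RmkqXie}). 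The first point is that $\bhm\pa_{x^j}$, written as $\pa_{\hat x^j}$, extends smoothly across $\zface_\qop=\{\bhm=0\}$ as a $\Tq X$-valued section — this is exactly the statement that $\pa_{\hat x^j}$ is a smooth q-vector field near $\zface_\qop^\circ$, and near $\zface_\qop\cap\mface_\qop$ one checks in the coordinates $\hat\rho,r,\omega$ that $\bhm\pa_{x^j}$ is a smooth combination of the spanning set $r\pa_r-\hat\rho\pa_{\hat\rho},\pa_\omega$ — so the bundle map $\iota\colon\pi_\qop^*T X\to\Tq X$, $V\mapsto\bhm V$, is smooth on all of $X_\qop$. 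Restricting to $\zface_\qop$: by Definition~\ref{DefqSingle} and the text after \eqref{EqqXs}, $\zface_\qop\cong\hat X=\ol{T_0 X}$; and by \eqref{EqqBundleIso} (or rather the discussion around it), $\Tq_{\zface_\qop}X\cong\Tb\hat X$. But on the \emph{radial compactification} $\hat X=\ol{T_0 X}$ of a vector space, there is a canonical isomorphism $\Tb\hat X\cong\Tsc\hat X$ over the interior which degenerates correctly; more precisely, on $\zface_\qop=\hat X$ the restriction $\iota|_{\zface_\qop}$ lands in $\Tsc\hat X$ — in coordinates $\hat x$ near $\hat X^\circ$ it is $V=a^j(0)\pa_{x^j}\mapsto a^j(0)\pa_{\hat x^j}$, a scattering vector field with constant coefficients, and near $\pa\hat X$ (coordinates $\hat\rho,\omega$) the images span $\hat\rho\pa_{\hat\rho},\hat\rho\pa_\omega$, i.e.\ exactly the scattering vector fields — and it is fiberwise bijective there (the coefficients $a^j(0)$ are free). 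Hence $\iota|_{\zface_\qop}\colon(\pi_\qop^*T X)|_{\zface_\qop}\to\Tsc\hat X$ is a bundle isomorphism.

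The main obstacle — really the only one requiring care — is the identification over $\pa\hat X$ in \eqref{EqKBundleIsozf}: one must verify that under the coordinate change $x=\bhm\hat x$, $\hat x=\hat r\omega$, $\hat\rho=\hat r^{-1}=\bhm/r$, the constant-coefficient vector fields $\bhm\pa_{x^j}=\pa_{\hat x^j}$ span, over $\CI$ near $\hat\rho=0$, precisely the scattering bundle $\Tsc\hat X$ (locally framed by $\hat\rho^2\pa_{\hat\rho},\hat\rho\pa_\omega$, or equivalently $\hat\rho\pa_{\hat\rho},\hat\rho\pa_\omega$ after absorbing a unit into the first — I should double-check the exact normalization of $\Tsc$ in Appendix~\ref{SP} and write $\hat\rho^2\pa_{\hat\rho}$ if that is the convention used there). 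This is the standard fact that translation-invariant vector fields on $\R^n$ are scattering vector fields on its radial compactification, so I would either cite the relevant statement in Appendix~\ref{SP} or record the one-line coordinate check, and that closes the proof.
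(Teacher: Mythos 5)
Your argument is correct and follows the same route as the paper's proof, which simply observes that both bundles in \eqref{EqKBundleIsomf} share the frame $\pa_{t_*},\pa_{x^j}$ and that $\bhm\pa_{x^j}=\pa_{\hat x^j}$ is a frame of $\Tsc\hat X$ for \eqref{EqKBundleIsozf}. The one small slip — "or equivalently $\hat\rho\pa_{\hat\rho},\hat\rho\pa_\omega$ after absorbing a unit into the first" — is not accurate ($\hat\rho^2\pa_{\hat\rho}$ and $\hat\rho\pa_{\hat\rho}$ differ by a non-unit factor $\hat\rho$), but you flag the convention check yourself and the correct frame $\hat\rho^2\pa_{\hat\rho},\hat\rho\pa_\omega$ is the one you actually verify against.
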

\begin{proof}
  For~\eqref{EqKBundleIsomf}, simply note that both bundles have, as smooth frames, the vector fields $\pa_{t_*}$ and $\pa_{x^j}$ ($j=1,2,3$). For~\eqref{EqKBundleIsozf}, note that $\bhm\pa_{x^j}=\pa_{\hat x^j}$ ($j=1,2,3$) is a frame of $\Tsc\hat X$.
\end{proof}

We shall also write $\iota$ for tensor powers of the isomorphism~\eqref{EqKBundleIsozf} or its adjoint. Writing $\ul\R=\hat X\times\R$ for the trivial bundle, we furthermore define the map
\begin{equation}
\label{EqKBundleSpacetime}
  \wt\iota \colon (\pi_\qop^* T_X M)|_{\zface_\qop} \xra{\cong} T_0\R_{\hat t_*}\oplus\Tsc\hat X,\qquad \pa_{t_*}\mapsto\pa_{\hat t_*},\quad V\mapsto\iota(V).
\end{equation}
(This is `multiplication by $\bhm$' for tangent vectors on the spacetime $M$.) Tensor powers of $\wt\iota$ or its adjoint are denoted by the same symbol.

\begin{lemma}[The family $g_\bhm$ on the q-single space]
\label{LemmaKMetric}
  For $\bhm\in(0,\bhm_0]$ and $x\in\ol{\Omega_\bhm}$, define the symmetric 2-tensor $g(\bhm,x)\in(\pi_\qop^* S^2 T^*_X M)|_{(\bhm,x)}=S^2 T^*_x M$ to be equal to $g_\bhm|_x$. Then
  \[
    g \in \CI\bigl(\ol{\Omega_\qop}; (\pi_\qop^*S^2 T^*_X M)|_{\ol{\Omega_\qop}}\bigr),\qquad
    g^{-1} \in \CI\bigl(\ol{\Omega_\qop}; (\pi_\qop^*S^2 T_X M)|_{\ol{\Omega_\qop}}\bigr).
  \]
  Moreover, $g|_{\mface_\qop}=g_\dS$ (under the identification~\eqref{EqKBundleIsomf}), and $\wt\iota^{-1}(g|_{\zface_\qop})=\hat g$. Furthermore, $|\dd g|=|\dd t_*||\dd g_X|$ where $|\dd g_X|\in\CI(\ol{\Omega_\qop};(\pi_\qop^*\Omega_X M)|_{\ol{\Omega_\qop}})$. (Explicitly, we have $|\dd g_X|=b^2\varrho^2\sin\theta\,|\dd r\,\dd\theta\,\dd\phi_*|$ where $b(\bhm)=b_\bhm$ and $\varrho(\bhm,r,\theta)=\varrho_\bhm(r,\theta)$.)
\end{lemma}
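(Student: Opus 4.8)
The plan is to track how the explicit coefficient functions in the metric~\eqref{EqKExt}, together with the choices of $r_\bhm^e$, $r_\bhm^c$ from Lemma~\usref{LemmaKRoots} and of $F_\bhm$ from~\eqref{EqKFm}, behave on the q-single space $X_\qop$. First I would note that $\mu_\bhm$, $b_\bhm$, $c_\bhm$, $\varrho_\bhm$ are polynomials in $(\bhm,r)$ (with $\theta$-dependence smooth and $\bhm$-independent for $c_\bhm$), hence certainly extend to be smooth on $[0,\bhm_0]\times X$, and therefore on $\ol{\Omega_\qop}$. The one genuinely singular-looking ingredient is $F_\bhm(r)=-\chi^e\bigl(\frac{r-r_\bhm^e}{\bhm}\bigr)+\chi^c(r-r_\bhm^c)$: the argument $\frac{r-r_\bhm^e}{\bhm}$ has an apparent $\bhm^{-1}$. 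But by Lemma~\usref{LemmaKRoots}, $r_\bhm^e=\bhm\hat r^e_\bhm$ with $\hat r^e_\bhm$ smooth in $\bhm$, so $\frac{r-r_\bhm^e}{\bhm}=\hat r-\hat r^e_\bhm$, which is a smooth function of $\hat r$ and $\bhm$ and hence smooth on the interior of $\zface_\qop$ and across $\zface_\qop\cap\mface_\qop$ — indeed $\chi^e$ being compactly supported, $\tilde\chi^e(\hat r)=\chi^e(\hat r-\hat r^e)$ makes sense on all of $\ol{\Omega_\qop}$ and the term $\chi^e\bigl(\frac{r-r_\bhm^e}{\bhm}\bigr)$ is supported in a bounded-$\hat r$ region, i.e.\ near $\zface_\qop$. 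Similarly $\chi^c(r-r_\bhm^c)$ is a smooth function of $r$ and $\bhm$ supported near $r=r_\bhm^c\approx 1$, i.e.\ away from $\zface_\qop$; thus $F_\bhm\in\CI(\ol{\Omega_\qop})$.

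Next I would address the one place in $g_\bhm^{-1}$ (equation~\eqref{EqKExtDual}) and $g_\bhm$ where a coefficient could blow up, namely the factors $\frac{1-F_\bhm(r)^2}{\mu_\bhm(r)}$. Here I would use that $1-F_\bhm(r)^2$ vanishes precisely where $F_\bhm(r)=\pm 1$: near $r=r_\bhm^e$ one has $F_\bhm=-1$ (since $\chi^e(0)=1$), near $r=r_\bhm^c$ one has $F_\bhm=+1$ (since $\chi^c(0)=1$), and $\mu_\bhm$ has simple zeros exactly at $r_\bhm^e$ and $r_\bhm^c$ within $\ol{\Omega_\bhm}$ (the other roots $r_\bhm^-$, $r_\bhm^C$ lying outside $(\bhm,2)$ for small $\bhm$ by Lemma~\usref{LemmaKRoots}). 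So $1-F_\bhm^2$ and $\mu_\bhm$ vanish to the same order at these two hypersurfaces; the quotient is smooth and nonvanishing there (here one uses that $\chi^e,\chi^c$ are identically $1$ \emph{near} $0$, hence $1-F_\bhm^2=0$ to infinite order at the horizons, while the horizon zeros of $\mu_\bhm$ are of order exactly one — wait, one must be slightly careful: $1-F_\bhm^2=(1-F_\bhm)(1+F_\bhm)$, and near $r_\bhm^e$ the factor $1+F_\bhm=1-\chi^e(\hat r-\hat r^e_\bhm)$ vanishes identically in a neighborhood of $\hat r=\hat r^e_\bhm$, so the ratio $\frac{1-F_\bhm^2}{\mu_\bhm}$ is genuinely smooth and in fact equals $\frac{1-F_\bhm}{\mu_\bhm}(1+F_\bhm)$ with $\frac{1-F_\bhm}{\mu_\bhm}$ bounded near $r_\bhm^e$ since $1-F_\bhm\to 2\neq 0$ there while $\mu_\bhm\neq 0$ once we are away from the horizon, and near the horizon $1+F_\bhm\equiv 0$). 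The upshot is that every coefficient of $g_\bhm$ in~\eqref{EqKExt} and of $g_\bhm^{-1}$ in~\eqref{EqKExtDual}, expressed in the frames $\pa_{t_*},\pa_{x^j}$ (equivalently $\pa_{t_*},\pa_r,\pa_\theta,\pa_{\phi_*}$), is smooth on $\ol{\Omega_\qop}$, giving $g\in\CI(\ol{\Omega_\qop};(\pi_\qop^*S^2 T^*_X M)|_{\ol{\Omega_\qop}})$ and $g^{-1}\in\CI(\ol{\Omega_\qop};(\pi_\qop^*S^2 T_X M)|_{\ol{\Omega_\qop}})$.

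Finally I would read off the restrictions to the two boundary faces. On $\mface_\qop$ (where $\bhm=0$), setting $\bhm=0$ directly in~\eqref{EqKExt} gives $g_\bhm|_{\bhm=0}=g_\dS$ from~\eqref{EqKdS}, using $\mu_0(r)=r^2(1-r^2)$, $b_0=1$, $c_0=1$, $\varrho_0^2=r^2$, $F_0(r)=\chi^c(r-1)=\tilde\chi^c(r)$ (the term $\chi^e(\frac{r-r_0^e}{0})$ being absent since $r_0^e=0$ and, for $r$ away from $0$, $\frac{r-r_0^e}{\bhm}\to+\infty$); under the identification~\eqref{EqKBundleIsomf} this is exactly the claimed $g|_{\mface_\qop}=g_\dS$. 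On $\zface_\qop$ (where we pass to $\hat x=x/\bhm$, $\hat t_*=\bhm t_*$, i.e.\ apply $\wt\iota^{-1}$), one has $\bhm^{-2}g_\bhm$ in the $(\hat t_*,\hat r,\theta,\phi_*)$-coordinates; inserting $r=\bhm\hat r$, using $\mu_\bhm(\bhm\hat r)=\bhm^2(\hat r^2-2\hat r+\hat\bha^2-\bhm^2(\dots))\to\bhm^2\hat\mu(\hat r)$ in the limit, $\varrho_\bhm^2(\bhm\hat r,\theta)=\bhm^2(\hat r^2+\hat\bha^2\cos^2\theta)$, $b_\bhm\to 1$, $c_\bhm\to 1$, $F_\bhm=-\tilde\chi^e(\hat r)$ near $\zface_\qop$, and $\dd t_*=\bhm^{-1}\dd\hat t_*$, $\dd r=\bhm\,\dd\hat r$, one obtains precisely $\hat g$ from~\eqref{EqKKerr}; this is the statement $\wt\iota^{-1}(g|_{\zface_\qop})=\hat g$. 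The density statement is immediate from~\eqref{EqKExt}: the factor $\dd t_*$ splits off, and the spatial volume density of $g_\bhm$ computed from the spatial block is $b_\bhm^2\varrho_\bhm^2\sin\theta\,|\dd r\,\dd\theta\,\dd\phi_*|$ (the cross term in $\dd r$ does not affect the top-degree determinant in the product form), which is manifestly in $\CI(\ol{\Omega_\qop};(\pi_\qop^*\Omega_X M)|_{\ol{\Omega_\qop}})$ since $b_\bhm,\varrho_\bhm$ are. The main obstacle is purely bookkeeping: one must verify smoothness of the quotients $\frac{1-F_\bhm^2}{\mu_\bhm}$ across \emph{both} horizons and simultaneously across the corner $\zface_\qop\cap\mface_\qop$, which is where the two cutoff functions $\chi^e,\chi^c$ and the two scalings $r$ vs.\ $\hat r$ must be reconciled — but the vanishing to infinite order of $1\mp F_\bhm$ at the respective horizons makes each such check routine.
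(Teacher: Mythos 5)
There is a genuine gap at the corner $\zface_\qop\cap\mface_\qop$, which is where most of the actual work in the paper's proof happens. You identify the horizons $r_\bhm^e,r_\bhm^c$ (where $\mu_\bhm$ has simple zeros) as ``the one place where a coefficient could blow up,'' and you dispatch $\mu_\bhm,b_\bhm,c_\bhm,\varrho_\bhm^2$ as unproblematic because they are polynomials in $(\bhm,r)$. But on $\ol{\Omega_\qop}$ the polynomials $\mu_\bhm$ and $\varrho_\bhm^2$ both vanish, to order two, all along $\zface_\qop$ (i.e.\ as $r\to 0$), and they enter the metric coefficients through the quotients $\mu_\bhm/\varrho_\bhm^2$, $\varrho_\bhm^2/\mu_\bhm$, and $r^2/\varrho_\bhm^2$. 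Your proof never addresses why these $0/0$ quotients extend smoothly. The paper's proof does this explicitly: in the q-coordinates $(\hat\rho,r,\theta,\phi_*)$ with $\hat\rho=\bhm/r$, one has $\mu_\bhm(r)=r^2\bigl[(1+\hat\bha^2\hat\rho^2)(1-r^2)-2\hat\rho\bigr]$ and $\varrho_\bhm^2=r^2(1+\hat\bha^2\hat\rho^2\cos^2\theta)$, so the $r^2$-factors cancel and the ratio is smooth and nonvanishing down to $\zface_\qop$ including the corner. Note in particular that the quotient you single out, $\frac{1-F_\bhm^2}{\mu_\bhm}$, is itself \emph{not} bounded at the corner (there $1-F_\bhm^2\to 1$ while $\mu_\bhm\sim r^2\to 0$); it is the product $\varrho_\bhm^2\cdot\frac{1-F_\bhm^2}{\mu_\bhm}=\frac{\varrho_\bhm^2}{\mu_\bhm}(1-F_\bhm^2)$ that is smooth, via the corner cancellation between $\varrho_\bhm^2$ and $\mu_\bhm$.

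Related to this, the parenthetical ``(equivalently $\pa_{t_*},\pa_r,\pa_\theta,\pa_{\phi_*}$)'' conflates two frames that are not equivalent near $\zface_\qop$: $\pa_\theta$ and $\pa_{\phi_*}$ (and dually $\dd\theta,\dd\phi_*$) degenerate as $r\to 0$, so the $\dd\theta^2$, $\dd\phi_*^2$ coefficients of $g_\bhm$ vanish at $\zface_\qop$ like $r^2$, and their smoothness does not by itself establish that $g$ is a smooth section of $\pi_\qop^*S^2T^*_X M$. What the paper actually uses is the frame $\dd t_*,\dd r,r\,\dd\theta,r\,\dd\phi_*$, which (because the angular variable $\omega$ remains a valid coordinate on $\ol{\Omega_\qop}$ after the blow-up) extends to a smooth \emph{nonvanishing} frame of $\pi_\qop^*T^*_X M$ over $\ol{\Omega_\qop}$; the coefficients of $g_\bhm$ in this frame are exactly the regularized quotients above. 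You could equally convert to the Cartesian frame $\dd t_*,\dd x^j$, but that requires exhibiting the same $r^2$-cancellations, which your write-up omits. Your analysis of $1-F_\bhm^2$ vanishing at the horizons, and the computation of the two normal operators and the density, are correct and in line with the paper; the corner is the missing piece.
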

\begin{proof}
  On $\ol{\Omega_\qop}$, the 1-forms $\dd t_*$, $\dd r$, $r\,\dd\theta$, and $r\,\dd\phi_*$ are smooth and nonzero sections of $\pi_\qop^*T^*_X M$. It thus suffices to show that the coefficients of $g_\bhm$ in~\eqref{EqKExt} (expressed in terms of symmetric tensor products of these 1-forms) are elements of $\CI(\ol{\Omega_\qop})$. On $\ol{\Omega_\qop}$, smooth coordinates are given by $\hat\rho=\frac{\bhm}{r}\in[0,1]$, $r\geq 0$, and $\theta,\phi_*$, and we then note that
  \[
    \frac{\mu_\bhm(r)}{b_\bhm^2\varrho_\bhm^2(r,\theta)} = \frac{(1+\hat\bha^2\hat\rho^2)(1-r^2)-2\hat\rho}{(1+\hat\bha^2\hat\rho^2 r^2)(1+\hat\bha^2\hat\rho^2\cos^2\theta)}
  \]
  is indeed smooth in these coordinates, similarly for the other coefficients of $g_\bhm$; note in particular that $F_\bhm=-\chi^e(\hat\rho^{-1})+\chi^c(r)$ is smooth. The membership of $g^{-1}$ follows similarly by inspection of the coefficients of $g_\bhm^{-1}$ in~\eqref{EqKExtDual} in the basis $\pa_{t_*}$, $\pa_r$, $r^{-1}\pa_\theta$, $r^{-1}\pa_{\phi_*}$.

  The computation of $g|_{\mface_\qop}$ was already performed in~\eqref{EqKdS}. The computation of $\iota^{-1}(g|_{\zface_\qop})$ amounts to taking the limit of $\bhm^{-2}g_\bhm$ as $\bhm\searrow 0$ for bounded $\hat r=|\hat x|$, which was done in~\eqref{EqKKerr}.
\end{proof}

We can now give a simple proof of Lemma~\ref{LemmaKdSChi}:

\begin{proof}[Proof of Lemma~\usref{LemmaKdSChi}]
  Using~\eqref{EqKdS} and writing $\tilde\chi^c(r)=1+(1-r^2)f(r)$, we compute
  \[
    |\dd t_*|_{g_\dS^{-1}}^2 = -\frac{1-(1+(1-r^2)f(r))^2}{1-r^2} = 2 f(r) + (1-r^2)f(r)^2.
  \]
  Note that in any region $r\leq r_1<1$, this is negative for $f(r)=-\frac{1}{1-r^2}$ (in which case $\tilde\chi^c=0$). More generally, in $r<1$, resp.\ at $r=1$, we have $|\dd t_*|_{g_\dS^{-1}}^2<0$ provided $-\frac{2}{1-r^2}<f(r)<0$, resp.\ $f(1)<0$. For $1<r\leq 3$, it is enough to ensure $f(r)<0$. We can thus use a partition of unity to construct a smooth $f$ so that $\dd t_*$ is past timelike for $g_\dS$, and so that $\tilde\chi^c(r)=0$ for $r\leq\half$.

  Next, using~\eqref{EqKKerrDual} and writing $\tilde\chi^e(\hat r)=1+\hat\mu(\hat r)\hat f(\hat r)$, we compute
  \begin{align*}
    \hat\varrho(\hat r,\theta)^2|\dd\hat t_*|_{\hat g^{-1}}^2 &= -\frac{1-(1+\hat\mu(\hat r)\hat f(\hat r))^2}{\hat\mu(\hat r)}(\hat r^2+\hat\bha^2)^2 + \hat\bha^2\sin^2\theta \\
      &\leq \hat\mu(\hat r)(\hat r^2+\hat\bha^2)^2\hat f(\hat r)^2 + 2(\hat r^2+\hat\bha^2)^2\hat f(\hat r) + \hat\bha^2.
  \end{align*}
  When $\hat f(\hat r)=-\hat\mu(\hat r)^{-1}$ (so $\tilde\chi^e=0$), the right hand side evaluates to $-\frac{(\hat r^2+\hat\bha^2)^2}{\hat\mu(\hat r)}+\hat\bha^2$ which is negative when $\hat\mu(\hat r)>0$ (since upon multiplication by $\hat\mu(\hat r)$, this is $-(\hat r^2+\hat\bha^2)^2+\hat\bha^2(\hat r^2+\hat\bha^2-2\hat r)=-\hat r^4-\hat\bha^2\hat r^2-2\hat\bha^2\hat r$). At $\hat r=\hat r^e$, we require $\hat f(\hat r)<\frac{\hat\bha^2}{2(\hat r^2+\hat\bha^2)^2}$. Where $\hat\mu(\hat r)\neq 0$, the set of allowed values of $\hat f(\hat r)$ is a nonempty open interval (depending continuously on $\hat r$). We can thus find an appropriate $\hat f(\hat r)$ so that, moreover, $\tilde\chi^e(\hat r)=0$ for $\hat r\geq 3$, say.

  Having fixed $\tilde\chi^c,\tilde\chi^e$ and thus $\chi^c,\chi^e$ in~\eqref{EqKFm} in this manner, the past timelike nature of $\dd t_*$ with respect to $g_\bhm$ now follows by continuity for all sufficiently small $\bhm>0$ in view of Lemma~\ref{LemmaKMetric}.
\end{proof}

\begin{prop}[Spectral family of the connection of $g$]
\label{PropKSpectral}
  For $\bhm\in(0,\bhm_0]$ and $\sigma\in\C$, denote by $\nabla^{g_\bhm}(\sigma)\in\Diff^1\bigl(\ol{\Omega_\bhm};T_{\ol{\Omega_\bhm}}M,T_{\ol{\Omega_\bhm}}^*M\otimes T_{\ol{\Omega_\bhm}}M\bigr)$, $\sigma\in\C$, the spectral family of the Levi-Civita connection of $g_\bhm$, defined analogously to Definition~\usref{DefKFamily}. Denote by $\nabla^g(\cdot+i\sigma_1)\colon(0,\bhm_0]\times\R\ni(\bhm,\sigma_0)\mapsto\nabla^{g_\bhm}(\sigma_0+i\sigma_1)$ the total spectral family. Then
  \begin{equation}
  \label{EqKSpectralSymb}
    \nabla^g(\cdot+i\sigma_1) \in \DiffQ^{1,(1,0,1,1,1)}\bigl(\ol{\Omega_\Qop};(\pi_\Qop^*T_X M)|_{\ol{\Omega_\Qop}},(\pi_\Qop^*(T_X^*M\otimes T_X M))|_{\ol{\Omega_\Qop}}\bigr).
  \end{equation}
  Its principal symbol is $\sigmaQ^{1,(1,0,1,1,1)}(\nabla^g(\cdot+i\sigma_1))(\sigma_0,\bhm,x,\xi)=(-\sigma\,\dd t_*+\xi)\otimes(-)$,\footnote{It is irrelevant here which (rescaled) cotangent bundle $\xi$ lies in. For example, if we take $\xi\in\TQ^*X$ of unit size (with respect to any fixed positive definite fiber metric), then $\xi$ has size $(\rho_\zface\rho_\nface\rho_\iface\rho_\sface)^{-1}$ as an element of $\pi_\Qop^*(T^*_X M)$.} where $\sigma=\sigma_0+i\sigma_1$.\footnote{The term $-\sigma\,\dd t_*$ only contributes to the principal symbol in the high frequency regime $|\Re\sigma|=|\sigma_0|\gg 1$, where in view of the boundedness of $\sigma_1$ the contribution of $-i\sigma_1\,\dd t_*$ is subprincipal and therefore does, in fact, not contribute to the principal symbol. When relaxing the assumption that $\Im\sigma$ be bounded, the imaginary part of $\sigma$ does matter, however; see~\S\ref{SsKU}.} Moreover,
  \begin{subequations}
  \begin{align}
  \label{EqKSpectralzf}
    N_\zface\bigl(\bhm\nabla^g(\cdot+i\sigma_1)\bigr) &= \nabla^{\hat g}(0), \\
  \label{EqKSpectralnf}
    N_{\nface_{\tilde\sigma}}\bigl(\bhm\nabla^g(\cdot+i\sigma_1)\bigr) &= \nabla^{\hat g}(\tilde\sigma), \\
  \label{EqKSpectralmf}
    N_{\mface_{\sigma_0}}\bigl(\nabla^g(\cdot+i\sigma_1)\bigr) &= \nabla^{g_\dS}(\sigma),\quad \sigma=\sigma_0+i\sigma_1,
  \end{align}
  \end{subequations}
  where we use the isomorphism $\wt\iota$ from~\eqref{EqKBundleSpacetime} in the first two lines (to identify the bundles $(\pi_\Qop^*T_X M)|_{\zface_{\sigma_0}}=(\pi_\Qop^*T_X M)|_{\nface_{\tilde\sigma_0}}=(\pi_\qop^*T_X M)|_{\zface_\qop}$ with $T_0\R_{\hat t_*}\oplus\Tsc\hat X$, likewise for their duals), and the identification~\eqref{EqKBundleIsomf} in the third line.

  Analogous statements hold for the spectral family of the exterior derivative $\dd$, resp.\ the gradient $\nabla^g$ (with $\dd(\cdot+i\sigma_1)$ a map from complex-valued functions to sections of $\pi_\Qop^* T_X^* M$, resp.\ $\pi_\Qop^*T_X M$, over $\ol{\Omega_\Qop}$); see~\eqref{EqKSpectralD0}--\eqref{EqKSpectralD2} below for the case of $\dd$.
\end{prop}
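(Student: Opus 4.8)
The strategy is to reduce Proposition~\ref{PropKSpectral} to the corresponding statement about the metric family, Lemma~\ref{LemmaKMetric}, together with the general compatibility of the Levi-Civita connection with the Q-differential structure. First I would note that on $\ol{\Omega_\qop}$ the one-forms $\dd t_*$, $\dd r$, $r\,\dd\theta$, $r\,\dd\phi_*$ and the vector fields $\pa_{t_*}$, $\pa_r$, $r^{-1}\pa_\theta$, $r^{-1}\pa_{\phi_*}$ are smooth nonvanishing sections of $\pi_\qop^*T_X^*M$ and $\pi_\qop^*T_X M$ respectively; since $g$ and $g^{-1}$ have $\CI(\ol{\Omega_\qop})$ coefficients in these frames by Lemma~\ref{LemmaKMetric}, the Christoffel symbols $\chr{g}{}^k_{ij}$ of $g_\bhm$, computed from $g_\bhm$ and $g_\bhm^{-1}$ and \emph{ordinary} (i.e.\ $\sigma$- and $\bhm$-independent) spatial derivatives, are of the schematic form $\rho_{\zface_\qop}^{-1}$ times an element of $\CI(\ol{\Omega_\qop})$: indeed each coefficient derivative costs one factor of $\rho_{\zface_\qop}^{-1}$ in passing from $\cV(X)$ to $\Vq(X)$, cf.~\eqref{EqqVFX}. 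This shows $\nabla^{g_\bhm}\in\rho_{\zface_\qop}^{-1}\Diffq^1$ acting between the relevant $\pi_\qop^*$-pullback bundles; replacing $\pa_{t_*}$ by $-i\sigma$ and tracking orders at $\nface$, $\iface$, $\sface$ via~\eqref{EqQDiffEx}, \eqref{EqQDiffEx2} then upgrades this to~\eqref{EqKSpectralSymb}. The principal symbol assertion is immediate from $\sigmaQ^1(V)(\xi)=i\xi(V)$ applied to $V=\pa_{x^j}$ (the connection differentiates, to leading order, just like $\dd$), with the $-\sigma\,\dd t_*$ replacing the $\pa_{t_*}$-component; the footnote about $\sigma_1$ being subprincipal at high frequency is precisely the statement that $\sigma_1\,\dd t_*$ contributes to $\DiffQ^{1,(1,0,1,1,0)}\subset\rho_\iface\rho_\sface\DiffQ^1$, hence drops out of the principal symbol mod $\rho_\iface\rho_\sface$.

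Next I would compute the normal operators. By the definition of the normal operator maps~\eqref{EqQNormOp} via testing (restricting $\bhm^k\Box$-type operators, applied to smooth extensions, to the appropriate boundary hypersurface) and the multiplicativity of these maps, it suffices to identify the restrictions of $g$ and $g^{-1}$ themselves. At $\mface_{\sigma_0}$ the relevant identification~\eqref{EqKBundleIsomf} of $(\pi_\qop^*T_X M)|_{\mface_\qop}$ with $\dot\beta^*T_{\dot X}\dot M$ is just ``keep the frame $\pa_{t_*}$, $\pa_{x^j}$''; since $g|_{\mface_\qop}=g_\dS$ by Lemma~\ref{LemmaKMetric}, the Christoffel symbols restrict to those of $g_\dS$, and $\pa_{t_*}\mapsto-i\sigma$ with $\sigma=\sigma_0+i\sigma_1$ yields $N_{\mface_{\sigma_0}}(\nabla^g(\cdot+i\sigma_1))=\nabla^{g_\dS}(\sigma)$, which is~\eqref{EqKSpectralmf}. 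At $\zface$ and at $\nface_{\tilde\sigma_0}$ the key point is that the factor $\bhm$ multiplying $\nabla^g$ is exactly what is needed to turn the overall $\rho_{\zface_\qop}^{-1}\sim\rho_\zface^{-1}$ (resp.\ $\rho_{\nface}^{-1}$) scaling into a bounded operator with nontrivial restriction; under the ``multiplication by $\bhm$'' isomorphism $\wt\iota$ of~\eqref{EqKBundleSpacetime} the frame $\pa_{t_*},\pa_{x^j}$ is sent to $\pa_{\hat t_*},\pa_{\hat x^j}$, and $\wt\iota^{-1}(g|_{\zface_\qop})=\hat g$ (Lemma~\ref{LemmaKMetric}) means the Christoffel symbols of $\bhm g_\bhm$'s connection, rewritten in hatted coordinates, restrict at $\bhm=0$ to those of $\hat g$. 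Since $\bhm\pa_{t_*}=\pa_{\hat t_*}\mapsto-i\tilde\sigma$ with $\tilde\sigma=\bhm\sigma$ (which equals $\tilde\sigma_0$ on $\nface_{\tilde\sigma_0}$ and $0$ on $\zface$ because $\tilde\sigma_0=0$ there), this gives~\eqref{EqKSpectralzf} and~\eqref{EqKSpectralnf}. The analogous statements for $\dd$ and for the gradient $\nabla^g=g^{-1}\dd$ follow the same way, even more simply for $\dd$ since it has no metric coefficients at all; they are recorded as~\eqref{EqKSpectralD0}--\eqref{EqKSpectralD2}.

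The main obstacle, as usual in this circle of ideas, is purely bookkeeping: one must verify that the \emph{specific} combination of weights $(1,0,1,1,1)$ at $(\zface,\mface,\nface,\iface,\sface)$ is correct and not merely an upper bound, i.e.\ that no additional vanishing or blow-up occurs at $\iface$ or $\sface$ beyond what~\eqref{EqQDiffEx2} predicts, and that the normal operators at $\nface$ are genuinely the scattering-type restrictions (so that $\Box_{\hat g}(\tilde\sigma_0)$ is the \emph{full} Kerr spectral family and not merely its leading part). This is where the explicit coordinate frames near the corners $\zface\cap\mface\cap\nface$, $\mface\cap\nface\cap\iface$, $\nface\cap\iface\cap\sface$ listed before Definition~\ref{DefQBundle} must be matched against the coefficients in~\eqref{EqKExt}, \eqref{EqKExtDual}; given Lemma~\ref{LemmaKMetric}, which already did the hard regularity work, this is a finite (if tedious) check. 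I would carry it out once for $\nabla^g$ in the frame $\pa_{t_*},\pa_r,r^{-1}\pa_\theta,r^{-1}\pa_{\phi_*}$, observe that $\Box_{g_\bhm}=\operatorname{tr}_{g_\bhm}\nabla^{g_\bhm}\dd$ (up to lower order, absorbed into $\rho_\iface\rho_\sface\DiffQ$), and thereby obtain Proposition~\ref{PropKS} as an immediate corollary, with the symbol~\eqref{EqKSSymb} read off from $\sigmaQ^2(\Box)=\sigmaQ^1(\dd)\cdot\sigmaQ^1(\dd)$ contracted with $g^{-1}$.
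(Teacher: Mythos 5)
Your proposal is correct and follows essentially the same route as the paper's proof: compute the Christoffel symbols $\Gamma^\lambda_{\mu\nu}(g_\bhm)\in\rho_{\zface_\qop}^{-1}\CI(\ol{\Omega_\qop})$ from the $\CI(\ol{\Omega_\qop})$-regularity of $g$ and $g^{-1}$ (Lemma~\ref{LemmaKMetric}) together with~\eqref{EqqVFX}, upgrade to the $\DiffQ$-membership by replacing $\pa_{t_*}\mapsto-i\sigma$ and invoking~\eqref{EqQDiffEx}--\eqref{EqQDiffEx2}, and identify the normal operators via the frame-matching isomorphisms $\wt\iota$ (at $\zface,\nface$) and~\eqref{EqKBundleIsomf} (at $\mface$). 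One small remark: the identity $\Box_g=\tr_g(\nabla^g\circ\dd)$ is exact, so the hedge ``up to lower order, absorbed into $\rho_\iface\rho_\sface\DiffQ$'' in your last sentence is unnecessary.
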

\begin{proof}
  Consider first the exterior derivative $\dd u=(\pa_{t_*}u)\dd t_*+\dd_X u$, where, with $\ul\R=X\times\R$ denoting the trivial bundle, $\dd_X\in\Diff^1(X;\ul\R,T^*X)$ is the spatial exterior derivative. From~\eqref{EqQDiffEx}--\eqref{EqQDiffEx2} we then deduce that
  \begin{align*}
    \dd(\sigma)=-i\sigma\,\dd t_*+\dd_X &\in \DiffQ^{0,(0,0,1,1,1)}(X;\ul\R,\pi_\Qop^*T^*_X M) + \DiffQ^{1,(1,0,1,1,1)}(X;\ul\R,\pi_\Qop^*T^*_X M) \\
      &= \DiffQ^{1,(1,0,1,1,1)}(X;\ul\R,\pi_\Qop^*T^*_X M),
  \end{align*}
  now with $\ul\R=X_\Qop\times\R$. This explicit expression implies
  \begin{subequations}
  \begin{equation}
  \label{EqKSpectralD0}
    N_{\mface_{\sigma_0}}(\dd(\cdot+i\sigma_1))=\dot\dd(\sigma)
  \end{equation}
  where $\dot\dd$ is the exterior derivative operator on $\dot M=\R_{t_*}\times\dot X$. The principal symbol at $(\sigma_0,\bhm,x,\xi)$ is $\xi$. Considering the rescaling $\bhm\dd(\sigma)=-i\bhm\sigma\,\dd t_*+\bhm\dd_X$, note that $\iota(\dd_X u)=\sum_{j=1}^3(\pa_{x^j}u)\dd\hat x^j$ and $\bhm\pa_{x^j}=\pa_{\hat x^j}$, and therefore
  \begin{equation}
  \label{EqKSpectralD1}
    N_\zface(\bhm\dd(\cdot+i\sigma_1))=\hat\dd(0),
  \end{equation}
  with $\hat\dd$ the exterior derivative on $\hat M=\R_{\hat t_*}\times\hat X$. When $\sigma=\tilde\sigma/\bhm$, then $\bhm\dd(\sigma)=-i\tilde\sigma\,\dd t_*+\bhm\dd_X$; thus,
  \begin{equation}
  \label{EqKSpectralD2}
    N_{\nface_{\tilde\sigma_0}}(\bhm\dd(\cdot+i\sigma_1)) = \hat\dd(\tilde\sigma).
  \end{equation}
  \end{subequations}

  The analogous statements about the gradient $\nabla^g$ on functions follow from~\eqref{EqKSpectralD0}--\eqref{EqKSpectralD2} and the description of $g^{-1}$ in Lemma~\ref{LemmaKMetric}.

  The analysis of the Levi-Civita connection $\nabla^{g_\bhm}$ is similar. In terms of local coordinates $x=(x^1,x^2,x^3)$ on $X$ and the corresponding coordinates $(t_*,x^1,x^2,x^3)$ on $M$, the Christoffel symbols $\Gamma_{\mu\nu}^\lambda(g_\bhm)$ satisfy
  \[
    \Gamma_{\mu\nu}^\lambda(g_\bhm) = \half(g_\bhm^{-1})^{\lambda\kappa}\bigl(\pa_\mu(g_\bhm)_{\nu\kappa} + \pa_\nu(g_\bhm)_{\mu\kappa} - \pa_\kappa(g_\bhm)_{\mu\nu}\bigr) \in \rho_{\zface_\qop}^{-1}\CI(\ol{\Omega_\qop}) \subset \rho_\zface^{-1}\rho_\nface^{-1}\CI(\ol{\Omega_\Qop})
  \]
  in view of~\eqref{EqqVFX} and Lemma~\ref{LemmaKMetric}. Now, $\nabla^{g_\bhm}(u^\mu\pa_\mu)=(\pa_\nu u^\mu)\dd x^\nu\otimes\pa_\mu+u^\mu\Gamma_{\mu\nu}^\lambda\dd x^\nu\otimes\pa_\lambda$. Passing to the spectral family amounts to replacing $\pa_0$ by $-i\sigma$, and we therefore obtain~\eqref{EqKSpectralSymb} as in the discussion of $\dd$ above; also~\eqref{EqKSpectralmf} follows directly by taking the limit as $\bhm\searrow 0$ in $r>r_0>0$.
  
  When analyzing the normal operators of $\bhm\nabla^{g_\bhm}(\sigma)$ at $\zface$ and $\nface$, one works with the coordinates $(\hat t_*,\hat x)=(t_*,x)/\bhm$ and identifies the vector $u^\mu\pa_\mu$ with $u^\mu\pa_{\hat\mu}$ (where $\pa_{\hat 0}=\pa_{\hat t_*}$ and $\pa_{\hat j}=\pa_{\hat x^j}$, $j=1,2,3$); note also that the differential operator $\bhm\pa_\nu$ for $\nu=1,2,3$ is equal to $\pa_{\hat\nu}$, while the spectral family of $\bhm\pa_0=\bhm\pa_{t_*}=\pa_{\hat t_*}$ is $-i\bhm\sigma=-i\tilde\sigma$. To obtain~\eqref{EqKSpectralzf}--\eqref{EqKSpectralmf}, it then remains to note that for bounded $\hat x$, Lemma~\ref{LemmaKMetric} implies
  \[
    \lim_{\bhm\searrow 0}\bigl(\bhm\Gamma_{\mu\nu}^\lambda(g_\bhm)\bigr) = \Gamma_{\hat\mu\hat\nu}^{\hat\lambda}(\hat g).\qedhere
  \]
\end{proof}

\begin{proof}[Proof of Proposition~\usref{PropKS}]
  Since $\Box(\cdot+i\sigma_1)=\tr_g(\nabla^g(\cdot+i\sigma_1)\circ\nabla^g(\cdot+i\sigma_1))$ in the notation of Lemma~\ref{LemmaKMetric}, we only need to appeal to Proposition~\ref{PropKSpectral} and use the multiplicativity of the principal symbol and normal operator maps.
\end{proof}

The plan of the remainder of this section is as follows:
\begin{itemize}
\item In~\S\ref{SsKSy}, we work exclusively with the principal (and subprincipal) symbol of $\Box(\cdot+i\sigma_1)$; this is enough to deduce the absence of extremely high energy resonances ($|\sigma|\gg\bhm^{-1}$), see Remark~\ref{RmkKSyAbsence}. The same methods also prove the invertibility of the $\nface_{\pm,\tilde\semi}$-normal operator of $\Box(\cdot+i\sigma_1)$ at high energies, see Proposition~\ref{PropKSyKerr}.
\item In~\S\ref{SsKnf}, we study the inverse of the spectral family of the wave operator on a Kerr spacetime at small and bounded (real) energies, cf.\ Proposition~\ref{PropKS}\eqref{ItKSnf}. We first prove uniform bounds on its inverse away from zero energy (Proposition~\ref{PropKnfNz})---which suffices to obtain the absence of very high energy resonances ($|\sigma|\sim\bhm^{-1}$)---before proving uniform bounds down to zero energy (Lemma~\ref{LemmaKz} and Proposition~\ref{PropKnfZ}).
\item Having inverted all normal operators that are related to the singular Kerr limit, we then turn in~\S\ref{SsKmf} to the inversion of the spectral family on de~Sitter space at high energies. This then implies the absence of high energy resonances ($|\sigma|\geq h_0^{-1}\gg 1$) for all sufficiently small $\bhm$, see Corollary~\ref{CorKAHigh}.
\item In~\S\ref{SsKBd}, we finally control the resonances in the compact subset of $\C_\sigma$ to which they have been constrained at this point.
\item In~\S\ref{SsKU}, we explain the modifications necessary to treat the singular limit $\bhm\searrow 0$ not just in a strip of frequencies $\sigma$, but in a half space $\Im\sigma\geq -C_1$. This will complete the proof of Theorem~\ref{ThmK} (and thus of Theorem~\ref{ThmI}).
\item The minimal modifications necessary to treat the Klein--Gordon equation are discussed in~\S\ref{SsKG}.
\end{itemize}

Throughout, we will use the ($\bhm$-dependent) spatial volume density $|\dd g_X|$ on $X_\qop$, its restriction $|\dd(g_\dS)|_X|$ to $\mface_\qop$ (which is the spatial volume density for the de~Sitter metric, i.e.\ $|\dd g_\dS|=|\dd t_*||\dd(g_\dS)|_X|$), and the spatial volume density
\begin{equation}
\label{EqKSDensity}
  0<|\dd\hat g_{\hat X}| = \hat\varrho^2\sin\theta\,|\dd\hat r\,\dd\theta\,\dd\phi_*|\in\CI\bigl(\ol{\hat\Omega};\Omegasc_{\ol{\hat\Omega}}\hat X\bigr)
\end{equation}
of the Kerr metric on $\hat M=\R_{\hat t_*}\times\hat X$, defined via $|\dd\hat g|=|\dd\hat t_*| |\dd\hat g_{\hat X}|$.

\subsection{Symbolic analysis}
\label{SsKSy}

In this section, we exploit the information given by Proposition~\ref{PropKS}\eqref{ItKSSymb}. The symbolic estimates for $\Box(\cdot+i\sigma_1)$ on the Q-characteristic set are microlocal propagation estimates which are well-established in the literature.\footnote{The positive commutator arguments used for their proofs only make use of the principal symbol, and hence work in the Q-calculus as well.} Concretely, we shall use radial point estimates over the event and cosmological horizons following \cite[\S2.4]{VasyMicroKerrdS} as well as at spatial infinity for the Kerr model operators following \cite{MelroseEuclideanSpectralTheory,VasyZworskiScl}, and estimates at normally hyperbolic trapping \cite{DyatlovSpectralGaps}. The Q-algebra is furthermore related to the semiclassical cone algebra developed in \cite{HintzConicPowers,HintzConicProp}, and we use the radial point estimates established in \cite[\S4.4]{HintzConicProp} at the cone point $\pa\dot X$ in the high frequency regime (in the terminology of~\S\ref{SsIA}). There are further radial sets lying over $\iface\cap\nface$ (thus in the very high frequency regime) where we will prove Q-microlocal estimates by means of standard positive commutator arguments.

We denote by
\[
  \Sigma \subset \ol{\TQ^*_{\ol{\Omega_\Qop}}}X
\]
the characteristic set of $\Box(\cdot+i\sigma_1)$ (which is independent of $\sigma_1$), i.e.\ the closure of the zero set of $(\rho_\zface\rho_\nface\rho_\sface\rho_\iface)^2 G$ in the notation of Proposition~\ref{PropKS}; more precisely, $\Sigma$ is the union of the characteristic sets of $\Box(\cdot+i\sigma_1)$ lying in
\begin{equation}
\label{EqKSySymbolicFaces}
  \ol{\TQ^*_\sface}X,\quad\ol{\TQ^*_\iface}X,\quad\SQ^*X,
\end{equation}
where $\SQ^*X\subset\ol{\TQ^*}X$ denotes the boundary at fiber infinity. In this section, we show:

\begin{prop}[Symbolic estimates]
\label{PropKSy}
  Let $s,\gamma,l',b\in\R$, and let $\sfr\in\CI(\ol{\TQ^*_\iface}X)$. Suppose that $s>\half+C_1$, and that $\sfr-l'>-\half$, resp.\ $\sfr-l'<-\half$ at the incoming, resp.\ outgoing radial set over $\iface\cap\nface$ (see~\eqref{EqKSyRif}, \eqref{EqKSyRifm}, \eqref{EqKSyRifp} below). Suppose moreover that $\sfr$ is non-increasing along the flow of the Hamiltonian vector field $H_G$ of the principal symbol $G$ of $\Box(\cdot+i\sigma_0)$. Then for any $s_0\in\R$, $\sfr_0\in\CI(\ol{\TQ^*_\iface}X)$, and $b_0\in\R$, there exists $C>0$ so that for all $\sigma_0\in\R$, $\bhm\in(0,\bhm_0]$, and $\sigma_1\in[-C_1,C_1]$, we have
  \begin{equation}
  \label{EqKSy}
    \| u \|_{\Hext_{\Qop,\sigma_0,\bhm}^{s,(l,\gamma,l',\sfr,b)}(\Omega_\Qop)} \leq C\Bigl( \| \Box_{g_\bhm}(\sigma_0+i\sigma_1)u \|_{\Hext_{\Qop,\sigma_0,\bhm}^{s-1,(l-2,\gamma,l'-2,\sfr-1,b)}(\Omega_\Qop)} + \|u\|_{\Hext_{\Qop,\sigma_0,\bhm}^{s_0,(l,\gamma,l',\sfr_0,b_0)}(\Omega_\Qop)} \Bigr).
  \end{equation}
\end{prop}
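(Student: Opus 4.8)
The estimate \eqref{EqKSy} is a \emph{global} a priori estimate on $\Omega_\Qop$, and the strategy is the standard one in microlocal analysis: cover the characteristic set $\Sigma$ (lying in the three boundary faces listed in \eqref{EqKSySymbolicFaces}) by a finite collection of open sets, prove a microlocal estimate adapted to each region, and then patch them together using elliptic regularity off $\Sigma$ together with a compactness/finiteness argument to absorb the overlap errors. Concretely I would first fix, once and for all, an elliptic operator $A\in\PsiQ^{s,(l,\gamma,l',\sfr,b)}(X)$ used to define the left-hand norm, and reduce everything to controlling $\WFQ'$-localized pieces of $u$.

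\textbf{Key steps, in order.} First, \emph{elliptic regularity}: on the elliptic set $\Ell(\Box(\cdot+i\sigma_1))$ — which is the complement of $\Sigma$ inside the three faces $\ol{\TQ^*_\sface}X$, $\ol{\TQ^*_\iface}X$, $\SQ^*X$, as well as all of $\TQ^*_\zface X$, $\TQ^*_\mface X$, $\TQ^*_{\nface_{\pm,\low}}X$, $\TQ^*_{\nface_{\pm,\tilde\semi}}X$ where the operator is elliptic in the respective sense (this uses the form of the principal symbol $G$ in \eqref{EqKSSymb}, which is the dual metric function, hence elliptic away from the light cone) — one gets $u$ bounded by $\Box_{g_\bhm}(\sigma)u$ in the target norm plus an error with arbitrarily negative orders everywhere. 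Second, \emph{real principal type propagation} along the Hamilton flow of $G$ inside $\Sigma\cap(\ol{\TQ^*_\sface}X\cup\SQ^*X\setminus\{\text{radial sets, trapping}\})$; the hypothesis that $\sfr$ is non-increasing along $H_G$ is exactly what makes the variable-order propagation estimate close in one direction. Third, \emph{radial point estimates} at the horizons: over the event and cosmological horizons one invokes the Vasy-style radial point estimates (\cite[\S2.4]{VasyMicroKerrdS}), whose high-regularity threshold is the condition $s>\tfrac12+C_1$ (the $C_1$ enters because $\Im\sigma=\sigma_1$ shifts the subprincipal symbol by $\sigma_1$ times the horizon weight); at spatial infinity of the Kerr model faces $\nface$ one uses the radial point estimates of \cite{MelroseEuclideanSpectralTheory,VasyZworskiScl}; at the cone point $\pa\dot X$ in the high-frequency (semiclassical cone) regime one uses \cite[\S4.4]{HintzConicProp}, and the conditions $\sfr-l'\gtrless-\tfrac12$ at the incoming/outgoing radial sets over $\iface\cap\nface$ are the threshold conditions for the corresponding positive-commutator estimates there (which I would prove directly, as the statement allows). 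Fourth, \emph{trapping}: the only remaining part of $\Sigma$ is the normally hyperbolic trapped set of the Kerr(-de Sitter) flow, where one quotes \cite{DyatlovSpectralGaps}, rephrased as a propagation estimate in the manner of \cite[Theorem~4.7]{HintzVasyQuasilinearKdS}; the key point is that all of this is symbolic (positive commutator) and hence transfers verbatim to the Q-calculus, with constants uniform in $(\bhm,\sigma_0,\sigma_1)$ by construction of $\PsiQ$. Fifth, \emph{patching}: a partition of unity in $\PsiQ^0(X)$ subordinate to the covering, the composition formula (Proposition~\ref{PropQPC}), and uniform $L^2$-boundedness of order-zero Q-operators combine the microlocal estimates; the overlap error terms all have strictly lower differential/decay order in at least one index and are absorbed into the $H^{s_0,(l,\gamma,l',\sfr_0,b_0)}$ term by interpolation and the order relations $s_0<s$, $\sfr_0<\sfr$, $b_0<b$ (together with the observation that the $\zface$-, $\mface$-, $\nface$-weights $l,\gamma,l'$ are the \emph{same} on both sides, so nothing needs to be absorbed there — this is why those three weights appear unchanged in \eqref{EqKSy}).

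\textbf{Main obstacle.} The routine parts are the individual microlocal estimates, which are all in the literature; the real work is \emph{bookkeeping uniformity across the six-parameter family of asymptotic regimes simultaneously}. The Q-space is designed precisely so that the single operator $\Box(\cdot+i\sigma_1)\in\DiffQ^{2,(2,0,2,2,2)}(\ol{\Omega_\Qop})$ encodes all regimes at once (Proposition~\ref{PropKS}), but one must check that the dynamical hypotheses — in particular the geometry of the radial sets over the corners $\iface\cap\nface$, $\iface\cap\sface$, etc., and the location and normal hyperbolicity of the trapped set — behave well under the blow-ups defining $X_\Qop$, so that a single choice of order function $\sfr$ satisfying the monotonicity along $H_G$ can be made globally. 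Verifying that the Hamilton flow of $(\rho_\sface\rho_\iface)^2 G$ is tangent to the boundary faces and that its radial sets are where the local positive-commutator arguments say they are (using the local coordinate spanning sets of $\VQ(X)$ from \S\ref{SsQV}) is the technically delicate step; once that is in hand, the patching argument is formal. I would structure the proof as: (i) a lemma identifying $\Char(\Box(\cdot+i\sigma_1))$ and its radial sets in each coordinate patch; (ii) the list of microlocal estimates with references; (iii) the partition-of-unity patching. The finiteness of the covering, hence of the implied constant $C$, is immediate since $\ol{\Omega_\Qop}$ and the relevant compactified cotangent bundle pieces are compact.
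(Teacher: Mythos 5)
Your proposal matches the paper's proof in structure and content: the paper works through the boundary faces of $X_\Qop$ --- positive commutator arguments at the conic radial sets $\cR_{\iface_+,\pm}$ with thresholds $\sfr-l'\gtrless-\tfrac12$ (\eqref{EqKSyRifm}--\eqref{EqKSyRifp}), at the horizons with threshold $s>\tfrac12+C_1$ via Lemma~\ref{LemmaKSyIm}, the trapping estimate of \cite{DyatlovSpectralGaps} in the Kerr regime near $\sface\cap\nface$, real-principal-type propagation as in \cite{VasyMicroKerrdS}, and energy estimates at the artificial Cauchy hypersurfaces $\hat r=1$ and $r=2$ --- exactly as you describe, just organized by face (\S\ref{SssKSyif}--\S\ref{SssKSyHor}) rather than by type of estimate. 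The one imprecision in your step~(1) is the phrase \emph{``error with arbitrarily negative orders everywhere'':} since $\sigmaQ^m$ is an isomorphism only modulo $\rho_\iface\rho_\sface$-better terms (cf.\ \eqref{EqQDiffSym}), symbolic parametrices gain only in the orders $s,\sfr,b$ and not at $\zface,\mface,\nface$ --- which you do note correctly later, when you observe that $l,\gamma,l'$ are unchanged in the error term of \eqref{EqKSy}.
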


The loss of one order in the Q-differential and $\iface$-decay sense ($s$ and $\sfr$) arises from real principal type or radial point propagation results, while the loss of two $\sface$-orders ($b$) arises at the trapped set.\footnote{Any loss in the $\sface$-order bigger than $1$ would be sufficient for this estimate, but we do not need a sharp estimate in the sequel.}

\begin{rmk}[Absence of very high energy resonances]
\label{RmkKSyAbsence}
  For sufficiently small $\tilde h=|\tilde\sigma|^{-1}>0$, the second, error, term on the right in~\eqref{EqKSy} is smaller than $\half$ times the left hand side. We conclude that any $u\in\Hext^s(\Omega_\bhm)$ with $\Box_{g_\bhm}(\sigma_0+i\sigma_1)u=0$ must vanish, provided $\tilde\sigma:=\bhm\sigma_0$ is sufficiently large in absolute value, and $\bhm>0$ is sufficiently small.
\end{rmk}

In the proof of Proposition~\ref{PropKSy}, we work our way systematically through the boundary faces~\eqref{EqKSySymbolicFaces} (over which the principal symbol is a well-defined function): first we work in $\ol{\TQ^*_\sface}X$ and $\ol{\TQ^*_\iface}X$, and then at fiber infinity $\SQ^*X\subset\ol{\TQ^*}X$. Since we work over the domain $\ol{\Omega_\Qop}$ where $r\geq\bhm$ (which we will henceforth not state explicitly anymore), the function $r$ is a defining function of $\zface\cup\nface$. Since in $\sigma>1$, the function $h=|\sigma|^{-1}$ is a defining function of $\nface\cup\sface\cup\iface$, we conclude that
\[
  \frac{h}{h+r}\ \text{is a defining function of}\ \sface\cup\iface.
\]
Furthermore, by the second part of~\eqref{EqQSpan}, smooth fiber-linear coordinates on $\TQ^*_{\ol{\Omega_\Qop}}X$ are given in polar coordinates $(r,\omega)$ on $X$ by writing the canonical 1-form as
\begin{equation}
\label{EqKSyTQCoord}
  \xi_\Qop \frac{h+r}{h}\frac{\dd r}{r} + \frac{h+r}{h}\eta_\Qop,\qquad \xi_\Qop\in\R,\ \eta_\Qop\in T^*\Sph^2.
\end{equation}

At radial and trapped sets, the subprincipal symbol of $\Box(\cdot+i\sigma_1)$ enters.

\begin{lemma}[Imaginary part]
\label{LemmaKSyIm}
  The operator
  \begin{equation}
  \label{EqKSyIm}
    \Im\Box(\cdot+i\sigma_1) := \frac{1}{2 i}\bigl(\Box(\cdot+i\sigma_1)-\Box(\cdot+i\sigma_1)^*\bigr) \in \DiffQ^{1,(2,0,1,1,1)}(X)
  \end{equation}
  has principal symbol $(\sigma_0,\bhm;x,\xi)\mapsto 2(\Im\sigma) g_\bhm^{-1}|_x(-\dd t_*,-(\Re\sigma)\,\dd t_*+\xi)$ where $\sigma=\sigma_0+i\sigma_1$.
\end{lemma}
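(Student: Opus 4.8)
The plan is to compute $\Im\Box(\cdot+i\sigma_1)$ by passing to the symbol calculus, using the factorization $\Box(\cdot+i\sigma_1)=\tr_g(\nabla^g\circ\nabla^g)$ and the fact that the formal adjoint is taken with respect to the spatial density $|\dd g_X|$ (restricted to $\bhm^{-1}(\bhm_0)$, though the computation is uniform). First I would note that the wave operator $\Box_{g_\bhm}$ on the spacetime is formally self-adjoint with respect to the spacetime density $|\dd g_\bhm|=|\dd t_*|\,|\dd g_X|$; conjugating by $e^{-i\sigma t_*}$ and stripping off $e^{-i\sigma t_*}$ to pass to the spectral family, the adjoint with respect to $|\dd g_X|$ of $\Box_{g_\bhm}(\sigma)$ is $\Box_{g_\bhm}(\bar\sigma)$. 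Hence
\[
  \Box(\cdot+i\sigma_1)-\Box(\cdot+i\sigma_1)^* = \Box_{g_\bhm}(\sigma_0+i\sigma_1)-\Box_{g_\bhm}(\sigma_0-i\sigma_1),
\]
which is holomorphic (indeed polynomial) in $\sigma=\sigma_0+i\sigma_1$ of degree $2$; subtracting the values at $\sigma$ and $\bar\sigma$ kills the $\sigma$-independent (second-order) part and leaves a first-order operator, namely $2 i\,\Im\sigma$ times the coefficient of the linear term. This already gives the membership claim: the linear-in-$\sigma$ part of $\Box(\cdot+i\sigma_1)$ is, by Proposition~\ref{PropKS} together with~\eqref{EqQDiffEx}, an element of $\sigma\cdot\DiffQ^{1,(2,0,1,1,1)}(X)$ — more precisely, writing $\Box(\sigma)=\Box_2+\sigma\Box_1+\sigma^2\Box_0$ with $\Box_j$ independent of $\sigma$, one reads off from~\eqref{EqKSCat} and the orders in~\eqref{EqQDiffEx} of $\sigma$ (and $\sigma^2$) that $\Box_1\in\DiffQ^{1,(2,0,1,1,1)}(X)$, so $\Im\Box(\cdot+i\sigma_1)=(\Im\sigma)\Box_1\in\DiffQ^{1,(2,0,1,1,1)}(X)$.

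For the principal symbol, the cleanest route is to use the product formula for $\Box_{g_\bhm}(\sigma)$ coming from $\nabla^g(\sigma)$. From Proposition~\ref{PropKSpectral}, $\nabla^g(\sigma)$ has principal symbol $(-\sigma\,\dd t_*+\xi)\otimes(-)$ and in fact $\nabla^g(\sigma)=\nabla^g_0-i\sigma\,\dd t_*\otimes(-)$ where the $\dd t_*$-term is order zero and the only place $\sigma$ enters. Taking the metric trace of the composition, the part of $\Box_{g_\bhm}(\sigma)$ that is linear in $\sigma$ is $\Box_1 = -i\bigl(\iota_{\nabla t_*}\circ\nabla^g_X + \nabla^g_X\circ\iota_{\nabla t_*}\bigr)$ up to a zeroth-order term with no $\sigma$, i.e.\ symbolically $\Box_1$ acts to leading order as $-2 i\, g_\bhm^{-1}(-\dd t_*,\xi)$ in the fiber variable $\xi$; more carefully, writing $G(\sigma,\bhm;x,\xi)=G_\bhm|_x(-\sigma\,\dd t_*+\xi)=G_\bhm|_x(\xi)-2\sigma\, g_\bhm^{-1}|_x(\dd t_*,\xi)+\sigma^2\,g_\bhm^{-1}|_x(\dd t_*,\dd t_*)$, the coefficient of $\sigma$ is $-2\, g_\bhm^{-1}|_x(\dd t_*,\xi)$, so $\sigmaQ^1(\Box_1)(\sigma_0,\bhm;x,\xi)=-2\, g_\bhm^{-1}|_x(\dd t_*,\xi)$ (with $\xi$ the appropriate rescaled fiber variable, and the $\sigma$-independence of $\Box_1$ meaning we may evaluate at real $\sigma_0$). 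Multiplying by $\Im\sigma$ and rewriting $\xi$ as $-\Re\sigma\,\dd t_* + \xi$ — which is legitimate since $g_\bhm^{-1}(\dd t_*,-\Re\sigma\,\dd t_*)$ is a scalar contributing only to lower order relative to the Q-rescaled $\xi$, or alternatively since we are simply reorganizing the argument of $G$ — yields exactly $2(\Im\sigma)\, g_\bhm^{-1}|_x(-\dd t_*,\,-(\Re\sigma)\,\dd t_*+\xi)$, as claimed.

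The main obstacle — really the only subtlety — is bookkeeping the orders: one must verify that $\Box_1$, extracted as the $\sigma$-linear coefficient, genuinely lands in $\DiffQ^{1,(2,0,1,1,1)}(X)$ rather than a weaker space, and that the replacement of $\xi$ by $-(\Re\sigma)\,\dd t_*+\xi$ in the symbol formula does not change the principal part. For the first point I would argue directly from~\eqref{EqQDiffEx}: since $\sigma\in\DiffQ^{0,(0,0,1,1,1)}(X)$, the relation $\Box(\cdot+i\sigma_1)\in\DiffQ^{2,(2,0,2,2,2)}(X)$ together with the fact that $\Box(\sigma)-\Box_2 = \sigma\Box_1+\sigma^2\Box_0$ and that $\Box_0\in\DiffQ^{0,(2,0,0,0,0)}(X)$ (it is the coefficient of $\sigma^2$, hence $\sigma^{-2}$ times something in $\DiffQ^{0,(2,0,2,2,2)}$, using $\sigma^2\in\DiffQ^{0,(0,0,2,2,2)}$ and that $\Box_0$ carries the full $\zface$-weight $2$ but is otherwise a smooth function — indeed $\Box_0 = g_\bhm^{-1}(\dd t_*,\dd t_*)$, which is $\bhm^{-2}$ times a smooth function near $\zface$, i.e.\ $\rho_\zface^{-2}$ times something smooth on $X_\Qop$), forces $\sigma\Box_1\in\DiffQ^{1,(2,0,2,2,2)}(X)$, whence $\Box_1\in\DiffQ^{1,(2,0,1,1,1)}(X)$. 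For the second point, the difference between the two forms of the symbol is $2(\Im\sigma)\,\Re\sigma\, g_\bhm^{-1}(\dd t_*,\dd t_*)$, which is a multiple of the (scalar) order-zero symbol and thus does not affect the first-order principal symbol of $\Im\Box(\cdot+i\sigma_1)$ — so the two expressions agree as principal symbols, and I would state the result in the form given in the lemma.
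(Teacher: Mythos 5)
There is a genuine gap in the algebraic step that drives your whole computation. Writing $\Box(\sigma)=\Box_2+\sigma\Box_1+\sigma^2\Box_0$, you claim that $\Box(\sigma)-\Box(\bar\sigma)=2i(\Im\sigma)\Box_1$ and hence $\Im\Box(\cdot+i\sigma_1)=(\Im\sigma)\Box_1$. This is false: only the $\sigma$-independent term $\Box_2$ cancels, while $\sigma^2-\bar\sigma^2=4i(\Im\sigma)(\Re\sigma)\neq 0$, so the correct identity is
\[
  \Im\Box(\cdot+i\sigma_1) = (\Im\sigma)\bigl(\Box_1 + 2(\Re\sigma)\Box_0\bigr),
\]
and the second summand cannot be dropped. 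You then attempt to repair this by ``rewriting $\xi$ as $-(\Re\sigma)\,\dd t_*+\xi$'' and asserting that the resulting difference $2(\Im\sigma)(\Re\sigma)\,g_\bhm^{-1}(\dd t_*,\dd t_*)$ is ``lower order.'' This is also incorrect: since $\Re\sigma\in\DiffQ^{0,(0,0,1,1,1)}(X)$ and $\Box_0\in\DiffQ^{0,(2,0,0,0,0)}(X)$, the product $(\Re\sigma)\Box_0$ lies in $\DiffQ^{0,(2,0,1,1,1)}(X)$ and after dividing out the weight $\rho_\zface^{-2}\rho_\nface^{-1}\rho_\iface^{-1}\rho_\sface^{-1}$ it is a symbol that does \emph{not} vanish at $\iface\cup\sface$, hence survives the quotient $P^1/\rho_\iface\rho_\sface P^0$ in \eqref{EqQDiffSym}. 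In other words, the $(\Re\sigma)\Box_0$ term is exactly what produces the $-(\Re\sigma)\,\dd t_*$ in the argument of $g_\bhm^{-1}$ in the lemma, and calling it ``lower order'' while simultaneously including it in the final formula is circular. Your argument for $\Box_1\in\DiffQ^{1,(2,0,1,1,1)}(X)$ by ``dividing $\sigma\Box_1\in\DiffQ^{1,(2,0,2,2,2)}(X)$ by $\sigma$'' is also dubious, since $\sigma^{-1}$ is not a weight on $X_\Qop$ (it is singular at $\sigma=0$, which is an interior hypersurface, not a boundary face).

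The paper takes a rather different route which avoids these pitfalls: it writes $\Im\Box_{g_\bhm}(\sigma_0+i\sigma_1)=\Re\int_0^{\sigma_1}\pa_\sigma\Box_{g_\bhm}(\sigma_0+i\tau)\,\dd\tau$, identifies $\pa_\sigma\Box_{g_\bhm}(\sigma)$ with the spectral family of the spacetime commutator $-i[\Box_{g_\bhm},t_*]$, and reads off the principal symbol directly as $\pa_\sigma G_\bhm=-H_{G_\bhm}t_*=2g_\bhm^{-1}(-\dd t_*,-\sigma\,\dd t_*+\cdot)$, which already contains the $-\sigma\,\dd t_*$ term in the second slot with no rewriting needed. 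For the order bookkeeping the paper also argues differently: it first observes that since the Q-principal symbol of $\Box(\cdot+i\sigma_1)$ (at orders $(2,(2,0,2,2,2))$) is real, $\Im\Box$ lies in $\DiffQ^{1,(2,0,2,1,1)}(X)$, and then uses that the $\nface_{\tilde\sigma_0}$-normal operators $\Box_{\hat g}(\tilde\sigma_0)$ are symmetric for real $\tilde\sigma_0$ to gain an extra order of vanishing at $\nface$, giving $(2,0,1,1,1)$. If you want to salvage a polynomial-in-$\sigma$ argument, you must at minimum carry the $(\Re\sigma)\Box_0$ term through the symbol calculation honestly.
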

\begin{proof}
  Since $\Box_{g_\bhm}$ is a symmetric operator on $\R_{t_*}\times\Omega_\bhm$ with respect to the volume form $|\dd g_\bhm|$, we have $\Box_{g_\bhm}(\sigma)^*=\Box_{g_\bhm}(\bar\sigma)$.

  For fixed $\sigma_1$, we have $\Im\Box(\cdot+i\sigma_1)\in\DiffQ^{1,(2,0,2,1,1)}(X)$ since the principal symbol of $\Box(\cdot+i\sigma_1)$ is real-valued; but since the $\nface_{\tilde\sigma_0}$-normal operators are symmetric (as they involve only \emph{real} frequencies), we obtain an order of improvement at $\nface$, leading to~\eqref{EqKSyIm}.

  Write $\Im\Box_{g_\bhm}(\sigma_0+i\sigma_1)=\Re\int_0^{\sigma_1} \pa_\sigma\Box_{g_\bhm}(\sigma_0+i\tau)\,\dd\tau$. Now,
  \[
    \pa_\sigma\Box_{g_\bhm}(\sigma)=\pa_\sigma(e^{i\sigma t_*}\Box_{g_\bhm} e^{-i\sigma t_*})=e^{i\sigma t_*}(i t_*\Box_{g_\bhm}-\Box_{g_\bhm} i t_*)e^{-i\sigma t_*}
  \]
  is the spectral family $-(i[\Box_{g_\bhm},t_*])(\sigma)$ of $-i[\Box_{g_\bhm},t_*]$, the principal symbol of which at $z=(0,x)\in\Omega_\bhm\subset M$ and $\zeta=-\sigma\,\dd t_*+\xi\in T^*_z M$ (where $\xi\in T^*_x X$) is $-H_{G_\bhm}t_*=\pa_\sigma G_\bhm$. Note then that $\pa_\sigma G_\bhm(z,\zeta)=2 g_\bhm^{-1}|_z(-\dd t_*,\zeta)$ (where $z=(t_*,x)$ and $\zeta\in T^*_X M$), and therefore the principal symbol of $-(i[\Box_{g_\bhm},t_*])(\sigma)$ is $2 g_\bhm^{-1}(-\dd t_*,-\sigma\,\dd t_*+\cdot)$. This implies the Lemma.
\end{proof}

\begin{notation}[Arbitrary orders]
\label{NotKOrder}
  In the arguments below, some orders of symbols on $\TQ^*X$ will be arbitrary by virtue of the symbols being supported away from some boundary hypersurfaces; in this case, we write `$*$' instead of specifying (arbitrary) orders at those boundary hypersurfaces. As an example, the lift of a compactly supported smooth function in $\hat r$ to $X_\Qop$ is an element of $S^{0,(0,*,0,*,0)}(\TQ^*X)$ (i.e.\ with the orders at $\mface$ and $\iface$ arbitrary). We use the same notation for Q-ps.d.o.s and Sobolev spaces.
\end{notation}

\subsubsection{Estimates near \texorpdfstring{$\iface$}{if}}
\label{SssKSyif}

We work at (large) positive frequencies $\sigma_0>1$ and indeed near $\sface_+\cup\iface_+$; the analysis in $\sigma_0<-1$ is completely analogous. Consider the semiclassical rescaling $h=|\sigma|^{-1}$, $z=h\sigma$,
\[
  G_{\semi,z}(h,\bhm,x,\xi) := |\sigma|^{-2}G(\sigma,\bhm,x,\xi) = G_\bhm|_x(-z\,\dd t_*+h\xi),\qquad \xi\in T^*_x\ol{\Omega_\bhm}.
\]
By Proposition~\ref{PropKS}\eqref{ItKSSymb} and the membership~\eqref{EqQDiffEx}, the symbol $G\in S^{2,(2,0,0,0,0)}$ is a quadratic form on the fibers of $\TQ^*X$ which is smooth down to $\sface_+\cup\iface_+$. Since $|\Im\sigma|\leq C_1$, we have $|z-1|\leq C h$, and therefore we can replace $G_{\semi,z}(h,\bhm,x,\xi)$ by
\begin{equation}
\label{EqKSyifGsemi}
  G_\semi(h,\bhm,x,\xi) := G_{\semi,1}(h,\bhm,x,\xi) = G_\bhm|_x(-\dd t_*+h\xi)
\end{equation}
without changing its principal part, i.e.\ its equivalence class modulo $S^{1,(2,0,0,-1,-1)}(\TQ^*X)$.

Let us consider a neighborhood of $\iface_+$. There, we have $h\lesssim r$, and thus $h/r$ is a joint defining function of $\iface_+\cup\sface_+$; replacing $\frac{h}{h+r}$ by $\frac{h}{r}$ in~\eqref{EqKSyTQCoord}, we write Q-covectors as
\begin{equation}
\label{EqKSyCoordif}
  h^{-1}(\xi\,\dd r + r\eta),\quad \xi\in\R,\ \eta\in T^*\Sph^2,
\end{equation}
with $\xi,\eta$ giving smooth fiber-linear coordinates. In terms of these, we have
\[
  G_\semi = G_\bhm(-\dd t_*+\xi\,\dd r+r\eta).
\]
At $\bhm=0$, this is the dual metric function $G_\dS$ of the de~Sitter metric, so from~\eqref{EqKdS} we find
\begin{equation}
\label{EqKSydS}
  G_\semi|_{\iface_+} = G_\dS = (1-r^2)\xi^2+|\eta|_{\slg^{-1}}^2 + 2\tilde\chi^c(r)\xi - \frac{1-\tilde\chi^c(r)^2}{1-r^2}.
\end{equation}

The structure of the characteristic set of~\eqref{EqKSydS} (in slightly different coordinates), as well as the dynamics of the null-bicharacteristic flow, was studied in detail in~\cite[\S4]{VasyMicroKerrdS}, with the caveat that now $r=0$ is resolved, i.e.\ blown up. (Recall here that $\iface_+=[0,\infty]_{\tilde\sigma}\times\dot X$ from Proposition~\ref{PropQStruct}\eqref{ItQStructif}.) We begin by noting that the Hamiltonian vector field in the coordinates~\eqref{EqKSyCoordif} takes the form
\[
  r h^{-1}H_p = (\pa_\xi p)(r\pa_r-\eta\pa_\eta) - \bigl((r\pa_r-\eta\pa_\eta)p\bigr)\pa_\xi + (\pa_\eta p)\pa_\omega-(\pa_\omega p)\pa_\eta,
\]
as can be seen by changing variables from the standard variables $(\xi_0,\eta_0)$ (with covectors written as $\xi_0\,\dd r+\eta_0$, thus $H_p=(\pa_{\xi_0}p)\pa_r-(\pa_r p)\pa_{\xi_0}+(\pa_{\eta_0}p)\pa_\omega-(\pa_\omega p)\pa_{\eta_0}$) to $(\xi,\eta)=(h\xi_0,h r^{-1}\eta_0)$. Thus, if $\half h^{-1}H_{G_\dS}r=(1-r^2)\xi+\tilde\chi^c(r)=0$ on $\Sigma$, then $0=G_\dS=|\eta|_{\slg^{-1}}^2-\frac{1}{1-r^2}$ forces $r<1$ when $(\xi,\eta)$ is finite, and then $\half(h^{-1}H_{G_\dS})^2 r=(1-r^2)h^{-1}H_{G_\dS}\xi=2 r^{-1}(1-r^2)(r^2\xi^2 + |\eta|^2_{\slg^{-1}}+\frac{r^2}{(1-r^2)^2})>0$. Therefore, the level sets of $r$ in $(0,1)$ are null-bicharacteristically convex.

At $r=0$ on the other hand, where $\tilde\chi^c(r)=0$, we have $G_\dS=\xi^2+|\eta|^2_{\slg^{-1}}-1$. The restriction of $\half r h^{-1}H_{G_\dS}$, as a b-vector field on $\TQ^*X$, to the characteristic set over $r=0$ is given by $\xi(r\pa_r-\eta\pa_\eta)+|\eta|^2_{\slg^{-1}}\pa_\xi+\eta\cdot\pa_\omega$ (at the center of $\slg$-normal coordinates $\omega$), which on the characteristic set is radial (i.e.\ vanishes as a vector field) only at
\begin{equation}
\label{EqKSyRif}
  \cR_{\iface_+,\pm} = \{ r=0,\ \xi=\pm 1,\ \eta=0 \} \cap \TQ^*_{\iface_+}X.
\end{equation}
The linearizations of $\half r h^{-1}H_{G_\dS}$ at these radial sets are
\begin{equation}
\label{EqKSyRadif}
  {\pm}(r\pa_r-\eta\pa_\eta),
\end{equation}
and inside the characteristic set over $r=0$, the $r h^{-1}H_{G_\dS}$-flow flows from the source at $\xi=-1$ to the sink at $\xi=+1$. This can be translated into an estimate by means of a standard symbolic positive commutator argument at radial sets; we sketch this near $\xi=-1$. Thus, using the local defining functions
\[
  \rho_\mface = \frac{\bhm}{h+\bhm},\qquad
  \rho_\nface = r,\qquad
  \rho_\iface = \frac{h+\bhm}{r},\qquad
  \rho_\sface = \frac{h}{h+\bhm},
\]
we consider a commutant (with constant orders, and recalling Notation~\ref{NotKOrder})
\begin{align*}
  a &= \rho_\mface^{-2\gamma} \rho_\nface^{-2 l'+2} \rho_\iface^{-2\sfr+1} \rho_\sface^{-2 b+1} \chi(\xi+1)\chi(|\eta|_{\slg^{-1}}^2)\chi(r)\chi(h/r)\chi(\bhm/r) \\
    &\qquad \in S^{*,(*,2\gamma,2 l'-2,2\sfr-1,2 b-1)}(\TQ^*X)
\end{align*}
where $\chi\in\CIc([0,2\delta))$ is identically $1$ on $[0,\delta]$ for some fixed small $\delta>0$, and satisfies $\chi'\leq 0$. The cutoffs localize to a neighborhood (in $\TQ^*X$) of $r=0$, $\xi=-1$, $\eta=0$ over $\iface_+$. Denote by $A=A^*\in\PsiQ^{*,(*,2\gamma,2 l'-2,2\sfr-1,2 b-1)}(X)$ a Q-quantization of $a$ (with Schwartz kernel supported in both factors in $\hat r>1$, $r<2$), and consider the $L^2$-pairing
\begin{equation}
\label{EqKSyCommCalc}
  2\Im\la\Box(\cdot+i\sigma_1)u,A u\ra = \la\cC u,u\ra,\qquad \cC := i[\Box(\cdot+i\sigma_1),A]+2(\Im\Box(\cdot+i\sigma_1))A.
\end{equation}
Thus, $c=\sigmaQ(\cC)\in S^{*,(*,2\gamma,2 l',2\sfr,2 b)}$, with the second summand of $\cC$ contributing an element of $S^{*,(*,2\gamma,2 l'-1,2\sfr,2 b)}$ by Lemma~\ref{LemmaKSyIm}, which is thus of lower order at $\nface$. By~\eqref{EqKSyRadif}, the rescaled symbol $\rho_\mface^{-2\gamma}\rho_\nface^{-2 l'}\rho_\iface^{-2\sfr}\rho_\sface^{-2 b}c$ is a positive multiple of $-2\sfr+2 l'-1$ at the radial set $\cR_{\iface_+,-}$; if $\sfr,l'$ are such that this is negative, then differentiation of $\chi(|\eta|^2)$ along $\eta\pa_\eta$ gives a contribution of the same sign (i.e.\ non-positive, and strictly negative where $\chi'<0$), and so does differentiation of $\chi(\bhm/r)$ along $-r\pa_r$ when $\delta>0$ is sufficiently small, whereas differentiation of $\chi(r)$ produces a nonnegative contribution which necessitates an a priori control assumption on $u$ on $\supp a\cap\supp\,\chi'(r)$. Therefore, in order to propagate Q-regularity from $r>0$ into the radial set, we need\footnote{This threshold condition is completely analogous to \cite[Theorem~4.10]{HintzConicProp}, where the notation $\sfb,\alpha$ is used instead of $\sfr,l'$.}
\begin{subequations}
\begin{equation}
\label{EqKSyRifm}
  \sfr > -\half + l'\quad \text{at}\ \cR_{\iface_+,-}.
\end{equation}
Under this assumption, we thus obtain a uniform (for $\sigma_0\in\R$, $\bhm\in(0,\bhm_0]$, and $\sigma_1\in[-C_1,C_1]$) estimate
\begin{equation}
\label{EqKSyEst}
  \|B u\|_{H_{\Qop,\sigma_0,\bhm}^{*,(*,\gamma,l',\sfr,b)}} \leq C\Bigl( \|\Box_{g_\bhm}(\sigma_0+i\sigma_1)u\|_{H_{\Qop,\sigma_0,\bhm}^{*,(*,\gamma,l'-2,\sfr-1,b-1)}} + \|E u\|_{H_{\Qop,\sigma_0,\bhm}^{*,(*,\gamma,l',\sfr,b)}} + \|u\|_{H_{\Qop,\sigma_0,\bhm}^{*,(*,\gamma,l',\sfr_0,b_0)}}\Bigr)
\end{equation}
for arbitrary $\sfr_0<\sfr$, $b_0<b$, for appropriate operators $B,E\in\PsiQ^0$ microlocalized in a neighborhood of $\iface_+$, where $B$ (quantizing a symbol arising from the elliptic leading order term of $c$ at $\cR_{\iface_+,-}$) is elliptic at $\cR_{\iface_+,-}$ and $E$ (quantizing a symbol arising from the term from differentiation of $\chi(r)$ above) can be taken to have operator wave front set contained in $r>0$.

Similarly, one can propagate regularity near $\cR_{\iface_+,+}$ from the a priori control regions $\bhm/r>0$ and a punctured neighborhood of $\cR_{\iface_+,+}$ inside of $r=0$ into $\cR_{\iface_+,+}$ itself, together with a uniform estimate that takes the same form, except now $E$ controls $u$ on these changed a priori control regions; the requirement on the orders is
\begin{equation}
\label{EqKSyRifp}
  \sfr < -\half+l'\quad\text{at}\ \cR_{\iface_+,+}.
\end{equation}
\end{subequations}
(Thus, an $\iface$-order $\sfr$ satisfying both~\eqref{EqKSyRifm} and \eqref{EqKSyRifp} must be variable. For real principal type propagation in between the two radial sets, one moreover needs $\sfr$ to be non-increasing along the direction of propagation; see e.g.\ \cite[\S4.1]{VasyMinicourse}.) We remark that if we restrict to bounded subsets of $\tilde\sigma\in[0,\infty)$, then the $\sface$-order $b$ becomes irrelevant, and thus the a priori control term in $\bhm/r>0$ (where also the $\iface$-order is irrelevant) is bounded by the overall error term (the last term in~\eqref{EqKSyEst}). This corresponds to the fact that the Q-calculus is not symbolic for finite Q-momenta away from $\iface\cup\sface$; instead, control at $\nface_{\tilde\sigma}$ requires the inversion of a model \emph{operator}, see~\S\ref{SsKnf} below.

The (microlocal) propagation estimates near $\iface_+$ but over $r>0$ are the same as those proved in \cite[\S4]{VasyMicroKerrdS}, except now the $\iface$-order $\sfr$ is variable---which, under the aforementioned monotonicity assumption on $\sfr$, does not necessitate any changes in the proofs of the propagation results. We sketch the computation of the null-bicharacteristic dynamics and of the positive commutator estimates in order to determine the relevant threshold conditions. To wit, we shall work near fiber infinity of the conormal bundle of the cosmological horizon $r=1$ of de~Sitter space; we work in $\xi<0$ and with the coordinates $\rho_\infty=|\xi|^{-1}$, $\hat\eta=\xi/\eta$ near fiber infinity. We may replace $G_\dS$ by the simpler expression $G_0=(1-r^2)\xi^2+|\eta|_{\slg^{-1}}^2$, for which one finds
\[
  \rho_\infty r h^{-1}H_{G_0} = -2(1-r^2)(r\pa_r-\hat\eta\pa_{\hat\eta}) + 2(r+r^{-1}|\hat\eta|^2)(\rho_\infty\pa_{\rho_\infty}+\hat\eta\pa_{\hat\eta}).
\]
At the radial set $w:=r-1=0$, $\rho_\infty=0$, $\hat\eta=0$, the linearization of this vector field is
\[
  4 w\pa_w + 2\rho_\infty\pa_{\rho_\infty} + 2\hat\eta\pa_{\hat\eta},
\]
and therefore this radial set is a source for the rescaled Hamiltonian flow. Thus, $H_{G_0}$ is to leading order at the radial set given by $\rho_\infty^{-1}h(4 w\pa_w+2\rho_\infty\pa_{\rho_\infty}+2\hat\eta\pa_{\hat\eta})$. Since we are working near $r=1$, we can take as local defining functions
\[
  \rho_\mface=\frac{\bhm}{h}, \qquad
  \rho_\iface=h+\bhm,\qquad
  \rho_\sface=\frac{h}{h+\bhm}.
\]
Consider the commutant
\[
  a = \rho_\infty^{-2 s+1}\rho_\mface^{-2\gamma}\rho_\iface^{-2\sfr+1}\rho_\sface^{-2 b+1} \chi(\rho_\infty)\chi(w^2)\chi(|\hat\eta|_{\slg^{-1}}^2)\chi(\bhm)\chi(h),
\]
with $\chi$ as before, and let $A=A^*$ denote a Q-quantization of $a$. The rescaled symbol $\rho_\infty^{2 s}\rho_\mface^{2\gamma}\rho_\iface^{2\sfr}\rho_\sface^{2 b}\,\sigmaQ(\cC)$ of the operator $\cC$ in~\eqref{EqKSyCommCalc} is now a sum of three types of terms: the first type arises from differentiating the weights of $a$, giving $-2(2 s-1)$ at the radial set; the second arises from differentiating the cutoffs in $\rho_\infty$, $w^2$, $|\hat\eta|_{\slg^{-1}}^2$, which give non-positive terms; and the third arises from the skew-adjoint part and at the radial set contributes (using Lemma~\ref{LemmaKSyIm}) twice $2\sigma_1 g_\dS^{-1}(-\dd t_*,-\dd r)$, so $-4\sigma_1$. In order to propagate out of the radial set, we thus need $-2(2 s-1)-4\sigma_1 < 0$, or equivalently
\[
  s > \half - \sigma_1,
\]
which in view of $\sigma_1\geq-C_1$ holds provided $s>\half+C_1$. Propagation out of the opposite radial set (at $r=1$ and $\xi>0$, $\xi^{-1}=0$, $\hat\eta=0$) requires the same threshold condition.

Finally, near $r=2$, say in $r\in[\frac32,2]$ for definiteness, we need to use energy estimates in order to deal with the presence of a Cauchy hypersurface at $r=2$; note that $\dd r$ is past timelike in this region. We can thus apply the semiclassical energy estimates of \cite[\S3.3]{VasyMicroKerrdS}, extended to general orders $s$, $\sfr$ using microlocal propagation results in a manner completely analogous to \cite[\S2.1.3]{HintzVasySemilinear}, in order to estimate $u$ in $\bar H_{\Qop,\sigma_0,\bhm}^{s,(*,\gamma,*,\sfr,b)}$ near $r=2$ in terms of its norm near $r=\frac32$.

\bigskip

To summarize, we can propagate Q-Sobolev regularity from the radial sets over the cosmological horizon towards the conic point $r=0$ and into $\cR_{\iface_+,-}$. For finite $\tilde\sigma$, this can be propagated further into $\cR_{\iface_+,+}$ and then outwards into $r>0$, at which point we have microlocal control on the whole Q-phase space over $\iface_+\cap\{r<2\}$; energy estimates near $r=2$ then give uniform control down to $r=2$. In order to complete the proof of the estimate~\eqref{EqKSy} for finite $\tilde\sigma$, it thus remains to control Q-regularity for bounded $\hat r$ and $r\simeq 1$, which is done in~\S\ref{SssKSyHor}.

In the semiclassical regime $\tilde\sigma\to\infty$, we cannot yet propagate into the outgoing radial set $\cR_{\iface_+,+}$ since this requires control on its unstable manifold also over $\nface^\circ\cap\sface$---which requires the analysis of the $\nface$-normal operator, i.e.\ the spectral family of the Kerr wave operator, at high energies. This is the subject of~\S\ref{SssKSyt} below. We remark that the radial point estimates at $\cR_{\iface_+,\pm}$ are, from the perspective of $\nface$, semiclassical scattering estimates in asymptotically Euclidean scattering; such estimates were first proved by Vasy--Zworski \cite{VasyZworskiScl} for high energy potential scattering on asymptotically Euclidean Riemannian manifolds.

\subsubsection{Estimates near \texorpdfstring{$\sface\cap\nface$}{the intersection of sf and nf}}
\label{SssKSyt}

Since the analysis in~\S\ref{SssKSyif} covers (an open neighborhood of) the corner $\sface\cap\nface\cap\iface$, we may work in a region $\hat r<\hat r_0$ for an arbitrary large $\hat r_0$; moreover, we work at large $|\tilde\sigma|=\tilde h^{-1}$, so local boundary defining functions are
\[
  \rho_\nface = \bhm,\quad
  \rho_\sface = \tilde h = \frac{h}{\bhm}.
\]
Our local coordinate system $\tilde h,\bhm,\hat r,\omega$ is disjoint from the other boundary hypersurfaces of $X_\Qop$. We introduce smooth fiber-linear coordinates on $\TQ^*X$ by writing the canonical 1-form as
\begin{equation}
\label{EqKSytCoord}
  \tilde h^{-1}(\xi\,\dd\hat r + \hat r\eta),\qquad \xi\in\R,\ \eta\in T^*\Sph^2.
\end{equation}
In these coordinates, the semiclassically rescaled principal symbol $G_\semi$ (see~\eqref{EqKSyifGsemi}) is, using the notation of Definition~\ref{DefKDual}, at $\bhm=0$ given by
\begin{equation}
\label{EqKSySym}
  G_\semi = \hat G|_{\hat x}(-\dd\hat t_*+\xi\,\dd\hat r + \hat r\eta).
\end{equation}
Indeed, this is the limit as $\bhm\searrow 0$ (for bounded $\hat x$) of $G_\bhm|_{\bhm\hat x}(-\dd(\bhm\hat t_*)+h\tilde h^{-1}(\xi\,\dd\hat r+\hat r\eta))$. But~\eqref{EqKSySym} is the semiclassical principal symbol of the spectral family $\tilde h^2\Box_{\hat g}(\tilde h^{-1})$; a full description of its characteristic set and null-bicharacteristic flow in the black hole exterior $\hat r>\hat r^e$ can be found in~\cite[\S\S3.1--3.2]{DyatlovWaveAsymptotics}. We in particular note that the trapped set of $G_\bhm$ lies over a fixed compact subset of radii $\hat r$ as $\bhm\searrow 0$; this follows from \cite{DyatlovWaveAsymptotics}. For us, it is convenient to use the fact that the trapped set depends \emph{smoothly} on $\bhm$ down to $\bhm=0$.\footnote{Using this fact runs counter to our insistence that only the Kerr model needs to be analyzed explicitly, whereas the Kerr--de Sitter wave operators are exclusively treated perturbatively. One may instead use that the trapping on subextremal Kerr spacetimes is $k$-normally hyperbolic (for any fixed $k$), as proved in \cite{WunschZworskiNormHypResolvent} and \cite[\S3.2]{DyatlovWaveAsymptotics}, together with the structural stability of such trapping \cite{HirschPughShubInvariantManifolds}, and note that the microlocal estimates \cite{DyatlovSpectralGaps} at the trapped set only require some large but finite amount of regularity of the defining functions for the stable and unstable manifolds; see \cite[Remark after Theorem~2]{DyatlovSpectralGaps}. Using the structural stability, the symbols of the semiclassical ps.d.o.s involved in the proofs of these estimates typically only depend continuously on the parameter $\bhm$, which is inconsequential for the standard semiclassical calculus (with continuous dependence on $\bhm\in[0,\bhm_0]$). The resulting uniform semiclassical estimates are then equivalent to estimates on Q-Sobolev spaces in the extremely high frequency regime under consideration here.} This is a consequence of the explicit description in \cite[Theorem~3.2]{PetersenVasySubextremal}, and by using this fact, one can apply the proof of \cite[Theorem~1]{DyatlovSpectralGaps} at once for the smooth family of trapped sets of $g_\bhm$. (For a direct positive commutator proof of these trapping estimates, albeit not in a semiclassical setting, see \cite[\S3]{HintzPolyTrap}.) Near the event horizon $\hat r=\hat r^e$ on the other hand, we can follow \cite[\S\S4.6 and 6.4]{VasyMicroKerrdS}, which applies in the present subextremal Kerr context (see also \cite[Lemma~4.3]{HintzPrice} and \cite[Theorem~4.3]{HaefnerHintzVasyKerr}).

Since the spectral parameter $\tilde\sigma$ is real---thus $\Box_{\hat g}(\tilde\sigma)$ is formally symmetric---the threshold regularity at the radial set at fiber infinity of the conormal bundle of the event horizon $\hat r=\hat r^e$ is equal to $\half$. For the same reason, the skew-adjoint part of $\Box_{\hat g}(\tilde\sigma)$ at the trapped set has vanishing principal symbol, and hence the estimates of \cite{DyatlovSpectralGaps} apply. (See \cite[Theorem~4.7]{HintzVasyQuasilinearKdS} for an explicit statement.)

Combining the trapping, radial point, microlocal propagation, elliptic regularity, and wave propagation (in $\hat r<\hat r^e$) results proves Proposition~\ref{PropKSy} at extremely high frequencies. We also record the following consequence of these estimates together with the radial point estimates proved in the previous section (cf.\ Proposition~\ref{PropQHRel}\eqref{ItQHRelnfsemi}):

\begin{prop}[Estimates for the Kerr spectral family at high energies]
\label{PropKSyKerr}
  There exists $\tilde h_0>0$ so that the following holds. Let $\sfr\in\CI(\ol{\Tsc^*_{\pa\hat X}}\hat X)$ is a variable order function so that $\sfr>-\half$, resp.\ $\sfr<-\half$ at the semiclassical incoming, resp.\ outgoing radial set over $\pa\hat X$, and so that $\sfr$ is monotone along the Hamilton flow inside the characteristic set.\footnote{At positive frequencies, these radial sets are given by $\cR_{\iface_+,-}$ and $\cR_{\iface_+,+}$ under the isomorphism of Corollary~\ref{CorQBundle}. In general, in the coordinates~\eqref{EqKSytCoord}, the incoming, resp.\ outgoing radial set is located at $(\xi,\eta)=(-1,0)$, resp.\ $(\xi,\eta)=(1,0)$, over $\pa\hat X$, see e.g.\ \cite[\S4.8]{VasyMinicourse} or \cite{MelroseEuclideanSpectralTheory} for the non-semiclassical setting, further \cite{VasyZworskiScl} for a global semiclassical commutator estimate, and \cite[\S5]{VasyLAPLag} for a refined semiclassical estimate.} Suppose $s>\half$. Then there exists $C>0$ so that
  \[
    \|u\|_{\Hext_{\scop,\tilde h}^{s,\sfr}(\hat\Omega)} \leq C\tilde h^{-2}\|\tilde h^2\Box_{\hat g}(\pm\tilde h^{-1})u\|_{\Hext_{\scop,\tilde h}^{s-1,\sfr+1}(\hat\Omega)},\quad 0<\tilde h\leq\tilde h_0.
  \]
\end{prop}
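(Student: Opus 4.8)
The plan is to derive the estimate in Proposition~\usref{PropKSyKerr} from Proposition~\usref{PropKSy} by restriction to the boundary hypersurface $\nface_{\pm,\tilde\semi}$, using the dictionary between Q-Sobolev spaces and semiclassical scattering Sobolev spaces on $\hat X$ provided by Proposition~\usref{PropQHRel}\eqref{ItQHRelnfsemi}, together with the identification $N_{\nface_{\pm,\tilde\semi}}(\bhm^2\Box(\cdot+i\sigma_1))$ of the normal operator. The point is that the symbolic estimate~\eqref{EqKSy} is uniform in $(\sigma_0,\bhm)$ over the entire Q-single space, and in particular holds on a collar neighborhood of $\nface_{\pm,\tilde\semi}$; passing to that face kills the error term on the right-hand side (the $s_0$- and $b_0$-orders being strictly smaller) up to a $\bhm$-dependent constant that can be absorbed, leaving a clean estimate for the normal operator.

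First I would set up the reduction precisely. Fix $\sigma_1=0$ (the Kerr spectral family involves only real frequencies, so this is the relevant slice), and recall from Proposition~\usref{PropKS}\eqref{ItKSnf}--\eqref{ItKSmf} and the discussion in~\S\ref{SssKSyt} that the semiclassically rescaled principal symbol of $\Box(\cdot+i\sigma_1)$ restricts over $\nface_{\pm,\tilde\semi}$ to the semiclassical principal symbol of $\tilde h^2\Box_{\hat g}(\pm\tilde h^{-1})$, with the radial sets $\cR_{\iface_+,\mp}$ (cf.~\eqref{EqKSyRif}) corresponding under Corollary~\usref{CorQBundle} to the semiclassical incoming/outgoing radial sets over $\pa\hat X$. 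The hypotheses on the variable order $\sfr$ in Proposition~\usref{PropKSyKerr}—namely $\sfr>-\half$ at the incoming and $\sfr<-\half$ at the outgoing radial set, plus monotonicity along the Hamilton flow—are exactly the restrictions to $\nface$ of the threshold conditions~\eqref{EqKSyRifm}, \eqref{EqKSyRifp} imposed in Proposition~\usref{PropKSy} (with $l'$ chosen so that the $\iface$-weight matches). I would then choose weights $(l,\gamma,l',b)$ in~\eqref{EqKSy} appropriately—e.g. $l'=0$ so that $\rho_\nface$-weights are trivial, and $b$ chosen (consistently with the $\sface$-order loss) so that the $\sface$-factor is harmless on the relevant piece of $\nface$—apply~\eqref{EqKSy} to a cutoff $\chi u$ supported near $\nface_{\pm,\tilde\semi}$, and invoke Proposition~\usref{PropQHRel}\eqref{ItQHRelnfsemi} to translate both sides into $H_{\scop,\tilde h}^{s,\sfr}(\hat\Omega)$ and $H_{\scop,\tilde h}^{s-1,\sfr+1}(\hat\Omega)$ norms. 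The factor $\tilde h^{-2}$ in the statement is precisely the $\rho_\sface^{-2}$ that Proposition~\usref{PropKS} places on $\Box$ (equivalently the $\bhm^2$ rescaling needed to make the normal operator nondegenerate), so it is bookkeeping rather than substance.

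The main obstacle is handling the error term $\|u\|_{\Hext_{\Qop}^{s_0,(l,\gamma,l',\sfr_0,b_0)}}$ in~\eqref{EqKSy}: a priori it is only controlled on the full Q-single space, not at $\nface$ alone, and one cannot simply "restrict" a non-sharp a priori term. The resolution—already anticipated in the remark in~\S\ref{SssKSyif} and the footnote there—is that in the regime $|\tilde\sigma|\to\infty$ under consideration, the $\sface$-order $b$ and the $\iface$-order $\sfr$ are the \emph{only} relevant orders near $\nface$ (the $\nface$-regularity being purely symbolic there, as the Q-calculus is elliptic or of real principal type away from trapping), so the error term is genuinely of strictly lower order in $\tilde h$ on the collar of $\nface_{\pm,\tilde\semi}$ and can be absorbed into the left-hand side after shrinking $\tilde h\le\tilde h_0$. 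Concretely, one uses that on $\tilde\sigma^{-1}(\pm[\tilde h_0^{-1},\infty])$ the defining function $\rho_\sface=\tilde h\le\tilde h_0$, so a gain of any positive power of $\rho_\sface$ (coming from $b_0<b$, together with $s_0<s$ for the differential order) produces a small constant $C\tilde h_0^\delta$; choosing $\tilde h_0$ small makes this less than $\tfrac12$ and the absorption goes through. One then only needs the already-established fact (from the trapping estimate of~\cite{DyatlovSpectralGaps}, the radial point estimates of~\S\ref{SssKSyt}, elliptic regularity, and wave propagation in $\hat r<\hat r^e$) that the symbolic estimate has no residual nonsymbolic error at $\nface$—i.e. that Proposition~\usref{PropKSy} is genuinely an a priori estimate with only the lower-order term—and the proof is complete.
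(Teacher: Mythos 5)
Your plan correctly identifies the ingredients---the symbolic estimates of \S\ref{SssKSyif}--\S\ref{SssKSyt}, the identification of $\nface_{\pm,\tilde\semi}$ with a semiclassical scattering single space via Proposition~\ref{PropQHRel}\eqref{ItQHRelnfsemi}, and absorption of the error term by shrinking $\tilde h_0$---but there is a genuine gap in how you pass from Proposition~\ref{PropKSy} to the statement. The paper does not derive Proposition~\ref{PropKSyKerr} by ``restricting'' the black-box estimate~\eqref{EqKSy} to $\nface$; rather, it applies the commutator estimates established in \S\ref{SssKSyif}--\S\ref{SssKSyt} directly to the semiclassical scattering operator $\tilde h^2\Box_{\hat g}(\pm\tilde h^{-1})$ on $\hat X$ (these being precisely the $N_{\nface_{\pm,\tilde\semi}}$-normal operators of the Q-commutants, cf.\ Corollary~\ref{CorQPNormal}), with the $\cO(\tilde h^\infty)$ residual of the semiclassical calculus absorbed for $\tilde h\le\tilde h_0$. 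Your route instead applies~\eqref{EqKSy} for $\bhm>0$ and then claims to ``pass to the face.'' But~\eqref{EqKSy} is an estimate for $\Box_{g_\bhm}(\sigma_0+i\sigma_1)$ acting on functions of $x\in\Omega_\bhm$, while the target concerns $\Box_{\hat g}(\pm\tilde h^{-1})$ acting on functions of $\hat x\in\hat\Omega$. After translating via Proposition~\ref{PropQHRel}\eqref{ItQHRelnfsemi} (the $\bhm^{\frac n2-l'}$ cancels; the weight shift $l'\mapsto l'-2$ supplies the $\bhm^2$ making $\bhm^2\Box_{g_\bhm}(\sigma_0)$ appear on the right), the right-hand side involves $\phi^*\bigl(\bhm^2\Box_{g_\bhm}(\sigma_0)\bigr)\phi_*$ rather than $\Box_{\hat g}(\tilde\sigma)$, and the difference lies in $\bhm\Psisch^2(\hat X)$. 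Getting the clean Kerr estimate therefore requires a genuine limiting argument $\bhm\to 0$ (uniform in $\tilde h\le\tilde h_0$), using the normal operator relation $N_{\nface}(\bhm^2\Box(\cdot+i\sigma_1))=\Box_{\hat g}(\tilde\sigma)$, the uniformity of the constant in~\eqref{EqKSy}, and the fact that the $\bhm$-dependent cutoff $\phi^*\chi$ tends to $1$ on compact subsets of $\hat X$. This is the crux of turning a uniform $(\bhm,\sigma_0)$-family estimate into a statement about a boundary-face model operator, and ``bookkeeping rather than substance'' undersells it.

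A smaller point: the parenthetical ``(coming from $b_0<b$, together with $s_0<s$ for the differential order)'' is misleading. In $H_{\scop,\tilde h}^{s,\sfr,b}$ lowering the differential order $s$ or the scattering decay order $\sfr$ does not produce a positive power of $\tilde h$ (the differential weight is $\tilde h$-neutral); the entire gain $\tilde h^{b-b_0}$ in the error term comes from $b_0<b$. The lower $s_0,\sfr_0$ serve only to make the error Sobolev norm dominated by the target norm with a uniform constant. With that corrected, your absorption step is fine in spirit.
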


In our application to the uniform analysis of $\Box(\cdot+i\sigma_1)$, we shall apply Proposition~\ref{PropKSyKerr} with $\sfr-l'$ in place of $\sfr$ (in particular, the threshold conditions here match those of Proposition~\ref{PropKSy}).

\subsubsection{Non-semiclassical estimates near the horizons}
\label{SssKSyHor}

Note that the only parts of the characteristic set $\Sigma$ not covered by the previous arguments are the conormal bundles over the cosmological horizon near $\mface$ and the event horizon near $\zface$, as well as their flowouts. The radial point estimates at the conormal bundles were however already discussed in the (more delicate) semiclassical setting in the previous two sections, as were the propagation estimates (including energy estimates to deal with the Cauchy hypersurfaces at $\hat r=1$ and $r=2$). This completes the proof of Proposition~\ref{PropKSy}.

\subsection{Estimates for the \texorpdfstring{$\nface_\pm$-}{low/bounded energy nf-}normal operator}
\label{SsKnf}

We now turn to estimates for the various normal operators of $\Box(\cdot+i\sigma_1)$ which were computed in Proposition~\ref{PropKS}\eqref{ItKSzf}--\eqref{ItKSmf}. The symbolic estimates proved in~\S\ref{SsKSy} restrict to symbolic estimates for all model operators, in the sense that e.g.\ for positive commutator arguments the same commutants can be used (with fewer localizers, corresponding to working on a boundary hypersurface of $X_\Qop$); on the level of function spaces, this relies on Proposition~\ref{PropQHRel}.

\begin{prop}[Uniform bounds on Kerr at bounded nonzero energies]
\label{PropKnfNz}
  Let $c\in(0,1)$, $s>\half$, and let $\sfr$ be as in Proposition~\usref{PropKSyKerr}.\footnote{For bounded nonzero $\tilde\sigma$, one can drop the rescaling of $\xi$ and $\eta$ in~\eqref{EqKSytCoord}, thus writing covectors simply as $\xi\,\dd\hat r+\hat r\eta$; the outgoing radial set is then given by $(\xi,\eta)=(\tilde\sigma,0)$ over $\hat r=\infty$, and the incoming radial set by $(\xi,\eta)=(-\tilde\sigma,0)$.} Then there exists $C>0$ so that for all $\tilde\sigma\in\R$ with $|\tilde\sigma|\in[c,c^{-1}]$,
  \begin{equation}
  \label{EqKnfNz}
    \|u\|_{\Hext_\scop^{s,\sfr}(\hat\Omega)} \leq C\|\Box_{\hat g}(\tilde\sigma)u\|_{\Hext_\scop^{s-1,\sfr}(\hat\Omega)}.
  \end{equation}
\end{prop}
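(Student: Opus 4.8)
The plan is to combine the symbolic scattering estimates available on $\hat X$ (which follow from the symbolic discussion in \S\ref{SsKSy} specialized to the $\nface_{\tilde\sigma}$-normal operator, i.e.\ to the Kerr spectral family $\Box_{\hat g}(\tilde\sigma)$ at a fixed real nonzero frequency) with the Kerr mode stability theorem (Theorem~\ref{ThmIKerr}) to upgrade a semi-Fredholm estimate with compact error to the uniform estimate \eqref{EqKnfNz}. Concretely, for each fixed real $\tilde\sigma\neq 0$, the operator $\Box_{\hat g}(\tilde\sigma)\in\Diff_\scop^{2,0}(\hat X)$ has, on the scattering cotangent bundle, exactly the structure analyzed in \cite{MelroseEuclideanSpectralTheory}: it is elliptic in the interior and up to fiber infinity away from a conic set, its characteristic set meets $\pa\hat X$ in a radial source/sink pair (the incoming and outgoing radial sets described in the footnote), with threshold regularity $\half$ because $\Box_{\hat g}(\tilde\sigma)$ is formally self-adjoint (the frequency being real), and it is of real principal type in between. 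Together with the wave/energy estimates near the event horizon $\hat r=\hat r^e$ following \cite[\S\S4.6, 6.4]{VasyMicroKerrdS} — these apply in the subextremal Kerr setting, see also \cite[Lemma~4.3]{HintzPrice}, \cite[Theorem~4.3]{HaefnerHintzVasyKerr} — and elliptic regularity and real principal type propagation, these yield, for $s>\half$ and for a variable order $\sfr$ of the stated type, an estimate of the form
\[
  \|u\|_{\Hext_\scop^{s,\sfr}(\hat\Omega)} \leq C_{\tilde\sigma}\bigl( \|\Box_{\hat g}(\tilde\sigma)u\|_{\Hext_\scop^{s-1,\sfr}(\hat\Omega)} + \|u\|_{\Hext_\scop^{s_0,\sfr_0}(\hat\Omega)}\bigr)
\]
for any $s_0<s$, $\sfr_0<\sfr$, where the inclusion $\Hext_\scop^{s,\sfr}(\hat\Omega)\hookrightarrow\Hext_\scop^{s_0,\sfr_0}(\hat\Omega)$ is compact. (The loss in the $\sfr$-order in the radial point estimate is the usual one; I state the estimate with matching $\sfr$ on both sides by allowing $\sfr$ to decrease through the source and recalling that the radial point estimate at the incoming radial set is from a punctured neighborhood, while at the outgoing radial set it is a priori — but for the purposes of \eqref{EqKnfNz} one arranges $\sfr$ so that the net effect is as written, cf.\ the threshold conditions in Proposition~\ref{PropKSyKerr}.)

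The second step is to remove the compact error term. By the standard functional-analytic argument (see e.g.\ \cite[\S2.1.3]{HintzVasySemilinear}), for each fixed $\tilde\sigma$ the above semi-Fredholm estimate, together with the analogous estimate for the adjoint $\Box_{\hat g}(\tilde\sigma)^*=\Box_{\hat g}(\tilde\sigma)$ acting on the dual spaces, shows that $\Box_{\hat g}(\tilde\sigma)\colon\Hext_\scop^{s,\sfr}(\hat\Omega)\to\Hext_\scop^{s-1,\sfr}(\hat\Omega)$ is Fredholm of index $0$ (with the appropriate supported/extendible spaces encoding the Cauchy data at $\hat r^e$ and no boundary condition at infinity). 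Hence the estimate \eqref{EqKnfNz} — with a constant depending on $\tilde\sigma$ — holds if and only if the nullspace of $\Box_{\hat g}(\tilde\sigma)$ on $\Hext_\scop^{s,\sfr}(\hat\Omega)$ is trivial. An element of this nullspace is, by elliptic regularity and the radial point propagation (which forces the outgoing behaviour), smooth on $[\hat r^e,\infty)\times\Sph^2$ and conormal at $\hat r=\infty$ of the precise outgoing type; unwinding the definition of the scattering Sobolev spaces and the relation between $\hat r=\infty$ and the Boyer–Lindquist radial infinity shows that such a $u$ gives exactly a mode solution $e^{-i\tilde\sigma\hat t_*}u$ of $\Box_{\hat g}$ satisfying the hypotheses of Theorem~\ref{ThmIKerr} (smoothness across the event horizon, and the stated $e^{-i\tilde\sigma\hat r}\hat r^{-2 i\tilde\sigma}$-normalized decay at infinity). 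Since $\tilde\sigma\neq 0$, Theorem~\ref{ThmIKerr} gives $u\equiv 0$. Therefore \eqref{EqKnfNz} holds for each individual $\tilde\sigma$ with $|\tilde\sigma|\in[c,c^{-1}]$.

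The final, and only genuinely non-formal, step is uniformity of the constant $C$ for $|\tilde\sigma|\in[c,c^{-1}]$. Here I would run the usual compactness/contradiction argument: if no uniform $C$ existed, there would be $\tilde\sigma_j$ with $|\tilde\sigma_j|\in[c,c^{-1}]$ and $u_j\in\Hext_\scop^{s,\sfr}(\hat\Omega)$ with $\|u_j\|_{\Hext_\scop^{s,\sfr}}=1$ but $\|\Box_{\hat g}(\tilde\sigma_j)u_j\|_{\Hext_\scop^{s-1,\sfr}}\to 0$. Passing to a subsequence, $\tilde\sigma_j\to\tilde\sigma_\infty$ with $|\tilde\sigma_\infty|\in[c,c^{-1}]$, so $\tilde\sigma_\infty\neq 0$; the symbolic (semi-Fredholm) estimate applied with the parameter-dependence made explicit — the relevant constants depend only on uniform bounds for the coefficients of $\Box_{\hat g}(\tilde\sigma)$ and on the location and nature of the radial sets, all of which vary continuously (indeed smoothly) in $\tilde\sigma$ on $[c,c^{-1}]\cup[-c^{-1},-c]$ — gives $\|u_j\|_{\Hext_\scop^{s,\sfr}}\leq C\bigl(\|\Box_{\hat g}(\tilde\sigma_j)u_j\|_{\Hext_\scop^{s-1,\sfr}}+\|u_j\|_{\Hext_\scop^{s_0,\sfr_0}}\bigr)$ with $C$ independent of $j$. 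By the compact inclusion a subsequence of $u_j$ converges in $\Hext_\scop^{s_0,\sfr_0}(\hat\Omega)$, and then the same inequality shows it is Cauchy, hence convergent, in $\Hext_\scop^{s,\sfr}(\hat\Omega)$; the limit $u_\infty$ has norm $1$ and satisfies $\Box_{\hat g}(\tilde\sigma_\infty)u_\infty=0$, contradicting the injectivity established via Theorem~\ref{ThmIKerr}. This contradiction yields the uniform constant and completes the proof. I expect this last uniformity step — more precisely, keeping careful track that the constants in the radial point and trapping/horizon estimates genuinely depend only continuously on $\tilde\sigma\in[c,c^{-1}]$, and that the Fredholm framework (choice of function spaces, Cauchy hypersurface at $\hat r^e$) is set up so that the nullspace elements are exactly the mode solutions covered by Theorem~\ref{ThmIKerr} — to be the main technical point; the rest is a direct translation of the symbolic machinery of \S\ref{SsKSy} to the single boundary hypersurface $\nface_{\tilde\sigma}$ via Proposition~\ref{PropQHRel}.
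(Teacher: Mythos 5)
Your overall strategy coincides with the paper's: a symbolic semi-Fredholm estimate with a compact error term, injectivity of the kernel via the Kerr mode stability theorem (Theorem~\ref{ThmIKerr}), and a compactness argument in $\tilde\sigma$ for the uniform constant. The uniformity step you spell out is correct in substance (though you can bypass the index-zero detour and the Cauchy-sequence bookkeeping: once the semi-Fredholm estimate forces $\|u_j\|_{\Hext_\scop^{s_0,\sfr_0}}\geq c_0>0$, a weak limit in $\Hext_\scop^{s,\sfr}$ that is nonzero in $\Hext_\scop^{s_0,\sfr_0}$ and annihilated by $\Box_{\hat g}(\tilde\sigma_\infty)$ already gives the contradiction).

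The genuine gap is in the injectivity step. After elliptic regularity, radial point, and propagation arguments, a kernel element $u$ only satisfies $u\in\Hext_\scop^{\infty,\sfr'}(\hat\Omega)$ with $\sfr'<-\half$ at the outgoing radial set -- i.e.\ variable-order $L^2$-based scattering decay. This is strictly weaker than the hypothesis of Theorem~\ref{ThmIKerr}, which demands that $e^{-i\tilde\sigma\hat r}\hat r^{-2i\tilde\sigma}u$ be \emph{smooth} down to $\hat\rho=\hat r^{-1}=0$ and vanish there (i.e.\ equal $\hat r^{-1}v_0(\hat\rho,\theta,\phi_*)$ with $v_0$ smooth). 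Writing ``unwinding the definition of the scattering Sobolev spaces $\ldots$ shows that $u$ satisfies the hypotheses of Theorem~\ref{ThmIKerr}'' papers over exactly the non-routine content here. The paper bridges it in two steps that you would need to supply: first, \emph{module regularity} at the outgoing radial set -- stability of $u$ under $\hat r(\pa_{\hat r}-i\tilde\sigma)$ and spherical vector fields, cf.\ \cite[\S 12]{MelroseEuclideanSpectralTheory}, \cite{HassellMelroseVasySymbolicOrderZero}, \cite[\S 2.4]{GellRedmanHassellShapiroZhangHelmholtz} -- which upgrades the outgoing scattering decay to conormality $e^{-i\tilde\sigma\hat r}u\in\Hbext^{\infty,l_0}(\hat\Omega)$ with $l_0<-\half$; and second, an \emph{ODE iteration} on $u_0:=e^{-i\tilde\sigma\hat r}\hat r^{-2i\tilde\sigma}u$ (the $\hat r^{-2i\tilde\sigma}$ factor is the long-range correction induced by the black hole mass and is essential -- without it one cannot hope for smoothness) using the regular-singular form $(\hat\rho\pa_{\hat\rho}-1)u_0=\hat\rho L' u_0$, $L'\in\Diffb^2$, to bootstrap conormality to $u_0\in\hat\rho\,\CI([0,1)_{\hat\rho}\times\Sph^2)$. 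Neither step follows from unwinding definitions.

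You also omit the final step: Theorem~\ref{ThmIKerr} yields $u\equiv 0$ only on $[\hat r^e,\infty)\times\Sph^2$, whereas $\hat\Omega=\{\hat r>1\}$ extends into the black hole interior since $\hat r^C<1<\hat r^e$ (and the Cauchy hypersurface is at $\hat r=1$, not $\hat r^e$ as you write). Vanishing on $(1,\hat r^e)$ must be argued separately; the paper does this by separating variables and observing that the radial ODE, after multiplication by $\hat\mu(\hat r)$, has a regular-singular point at $\hat r=\hat r^e$, so infinite-order vanishing there forces vanishing on $[1,\hat r^e]$ as well. Finally, a small note: for bounded nonzero $\tilde\sigma$ the operator $\Box_{\hat g}(\tilde\sigma)$ is not semiclassical, so ``trapping estimates'' play no role in this proposition -- they enter only in the high-energy regime of Proposition~\ref{PropKSyKerr}.
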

\begin{proof}
  The same symbolic arguments as in the previous section give the estimate
  \[
    \|u\|_{\Hext_\scop^{s,\sfr}(\hat\Omega)} \leq C\bigl(\|\Box_{\hat g}(\tilde\sigma)u\|_{\Hext_\scop^{s-1,\sfr}(\hat\Omega)} + \|u\|_{\Hext_\scop^{-N,-N}(\hat\Omega)}\bigr).
  \]
  for any $N$, which we take to satisfy $-N<\min(s,\sfr)$; thus, the embedding $\Hext_\scop^{s,\sfr}(\hat\Omega)\hra\Hext_\scop^{-N,-N}(\hat\Omega)$ is compact. The estimate~\eqref{EqKnfNz} (for a different constant $C$) then follows provided we show that any $u\in\Hext_\scop^{s,\sfr}(\hat\Omega)$ with $\Box_{\hat g}(\tilde\sigma)u=0$ necessarily vanishes. We reduce this to the mode stability result of Whiting and Shlapentokh-Rothman \cite{WhitingKerrModeStability,ShlapentokhRothmanModeStability} which we recalled in Theorem~\ref{ThmIKerr}.
  
  Radial point estimates at the conormal bundle of the event horizon, followed by propagation of regularity from there, imply that $u$ is smooth; at spatial infinity, $u$ has infinite scattering regularity since $\Box_{\hat g}(\tilde\sigma)$ is elliptic at high scattering frequencies. At the incoming radial set, $u$ has arbitrary scattering decay, and by propagating this to a punctured neighborhood of the outgoing radial set, we conclude that $u\ni\Hext_\scop^{\infty,\sfr'}(\hat\Omega)$ where $\sfr'$ is arbitrary except $\sfr'<-\half$ at the outgoing radial set. This can be further improved by means of module regularity at the outgoing radial set, i.e.\ stable regularity under application of $\hat r(\pa_{\hat r}-i\tilde\sigma)$ and spherical vector fields; this goes back to \cite[\S12]{MelroseEuclideanSpectralTheory} and \cite{HassellMelroseVasySymbolicOrderZero}, and is discussed in detail in the present setting in \cite[\S2.4]{GellRedmanHassellShapiroZhangHelmholtz} (see also \cite[Proposition~4.4]{BaskinVasyWunschRadMink} and \cite{HaberVasyPropagation}). We thus conclude that $e^{-i\tilde\sigma\hat r}u\in\Hbext^{\infty,l_0}(\hat\Omega)$ is conormal at $\hat r=\infty$ where $l_0<-\half$. Taking into account the modified asymptotics of outgoing spherical waves caused by the black hole mass (here $1$), we consider
  \[
    u_0(\hat r,\theta,\phi_*) := e^{-i\tilde\sigma\hat r}\hat r^{-2 i\tilde\sigma}u(\hat r,\theta,\phi_*).
  \]

  Thus, $u_0$ is conormal at $\hat\rho=\hat r^{-1}=0$, but we need more precise information. To this end, we observe that the equation satisfied by $u_0$ in the coordinates $(\hat\rho,\omega)\in[0,1)\times\Sph^2$ takes the form
  \[
    \bigl(2 i\tilde\sigma\hat\rho(\hat\rho\pa_{\hat\rho}-1) + \hat\rho^2 L\bigr)u_0 = 0,
  \]
  where $L\in\Diffb^2([0,1)_{\hat\rho}\times\Sph^2)$, see \cite[Definition~2.1, Lemma~2.7, and \S4]{HintzPrice}. Rewriting this as $(\hat\rho\pa_{\hat\rho}-1)u_0=\hat\rho L'u_0$ for a new operator $L'\in\Diffb^2$, the conormality of $u_0$ at $\hat\rho=0$ can be upgraded by an iterative procedure, based on the inversion of $\hat\rho\pa_{\hat\rho}-1$, to the fact that $u_0\in\hat\rho\CI([0,1)_{\hat\rho}\times\Sph^2)$. We can now apply Theorem~\ref{ThmIKerr} to conclude that $u_0=0$ (and thus $u=0$) in $\hat r\geq \hat r^e$.

  This then implies the vanishing of $u$ in $\hat r<\hat r^e$ as well: this can be shown by considering the projections of $u$ to its separated parts $e^{i m\phi_*}S(\theta)R(\hat r)$ and noting (by inspection of the dual metric~\eqref{EqKKerrDual}) that $R$ then satisfies an ODE which upon multiplication by $\hat\mu(\hat r)$ has a regular-singular point at $\hat r=\hat r^e$; hence the infinite order vanishing of $R$ at $\hat r^e$ implies $R\equiv 0$ also in $\hat r<\hat r^e$. The proof is complete.
\end{proof}

Uniform estimates near zero energy require, first of all, an estimate for the zero energy operator:

\begin{lemma}[Zero energy operator on Kerr]
\label{LemmaKz}
  Let $s>\half$ and $\gamma\in(-\frac32,-\half)$. Then
  \begin{equation}
  \label{EqKz}
    \|u\|_{\Hbext^{s,\gamma}(\hat\Omega)} \leq C\|\Box_{\hat g}(0)u\|_{\Hbext^{s-1,\gamma+2}(\hat\Omega)}.
  \end{equation}
\end{lemma}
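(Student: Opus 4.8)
The plan is to establish the estimate \eqref{EqKz} via the standard Fredholm package for b-operators combined with the mode stability input. The operator $\Box_{\hat g}(0)$ is an element of $\Diffb^2(\ol{\hat\Omega})$ which, at spatial infinity $\hat\rho = \hat r^{-1} = 0$, is elliptic in the b-sense with indicial roots $0$ and $1$ (this is the $\tilde\sigma = 0$ case of the structure recalled in the proof of Proposition~\ref{PropKnfNz}, cf.\ \cite{HintzPrice}: in the coordinates $(\hat\rho,\omega)$ one has $\Box_{\hat g}(0) = \hat\rho^2 L$ with $L\in\Diffb^2$ elliptic and with indicial polynomial having roots $0, 1$). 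At the event horizon $\hat r = \hat r^e$ the operator $\hat\mu(\hat r)^{-1}\Box_{\hat g}(0)$—or rather $\Box_{\hat g}(0)$ itself in the $(\hat t_*,\hat r,\theta,\phi_*)$ coordinates which are regular across the horizon—has a radial point at fiber infinity of the conormal bundle, with threshold regularity $\tfrac12$ since $\Box_{\hat g}(0)$ is formally self-adjoint (the spectral parameter is real, here $0$); this permits radial point estimates and, together with elliptic regularity in the interior and real principal type propagation, an a priori estimate of the form
\[
  \|u\|_{\Hbext^{s,\gamma}(\hat\Omega)} \leq C\bigl(\|\Box_{\hat g}(0)u\|_{\Hbext^{s-1,\gamma+2}(\hat\Omega)} + \|u\|_{\Hbext^{s_0,\gamma_0}(\hat\Omega)}\bigr)
\]
for any $s_0 < s$, $\gamma_0 < \gamma$, provided the weight $\gamma$ is chosen so that $-\gamma - \tfrac{n}{2} = -\gamma - \tfrac32$ avoids the indicial roots $0,1$, i.e.\ $\gamma \in (-\tfrac32, \tfrac12)$; restricting to $\gamma \in (-\tfrac32,-\tfrac12)$ as in the statement gives room on both sides. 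Here $s > \tfrac12$ is needed for the radial point estimate at the horizon.

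The embedding $\Hbext^{s,\gamma}(\hat\Omega) \hookrightarrow \Hbext^{s_0,\gamma_0}(\hat\Omega)$ is compact, so the a priori estimate upgrades to \eqref{EqKz} (with a new constant) once we show that the nullspace of $\Box_{\hat g}(0)$ on $\Hbext^{s,\gamma}(\hat\Omega)$ is trivial. So suppose $u \in \Hbext^{s,\gamma}(\hat\Omega)$ with $\Box_{\hat g}(0)u = 0$. Elliptic regularity and propagation give $u \in \CI$ in the interior and in fact $u \in \Hbext^{\infty,\gamma}(\hat\Omega)$. At infinity, the b-elliptic structure of $\Box_{\hat g}(0) = \hat\rho^2 L$ with indicial roots $0, 1$ and the requirement $\gamma < -\tfrac12 < 0$ (so that $-\gamma - \tfrac32 \in (-1,0)$, lying strictly between the roots in the appropriate normalization) forces a partial expansion: writing the equation as $(\hat\rho\pa_{\hat\rho})(\hat\rho\pa_{\hat\rho}-1)u + \hat\rho(\cdots) = 0$ and inverting the model operator $(\hat\rho\pa_{\hat\rho})(\hat\rho\pa_{\hat\rho}-1)$ iteratively, one finds $u = c + \hat\rho\,\CI$ modulo faster-decaying terms, where $c$ is a constant; but $u \in \Hbext^{s,\gamma}(\hat\Omega)$ with $\gamma < -\tfrac12$ excludes the constant term, so $u \in \hat\rho\,\CI([0,1)_{\hat\rho}\times\Sph^2)$, i.e.\ $u$ decays like $\hat r^{-1}$ with a full classical expansion. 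This is precisely the decay hypothesis in Theorem~\ref{ThmIKerr} at $\sigma = 0$ (in that theorem, $e^{-i\sigma\hat r}\hat r^{-2i\bhm\sigma}u_0 = \hat r^{-1}v_0$ with $v_0$ smooth; at $\sigma = 0$ and mass $1$ this reads $u_0 = \hat r^{-1}v_0$). Applying Theorem~\ref{ThmIKerr} separately to each separated mode $e^{im\phi_*}S(\theta)R(\hat r)$ (the wave equation separates on Kerr), we conclude $u \equiv 0$ in $\hat r \geq \hat r^e$. Finally, the vanishing propagates into $\hat r < \hat r^e$: the radial ODE for each separated component, after multiplication by $\hat\mu(\hat r)$, has a regular-singular point at $\hat r = \hat r^e$, so the infinite-order vanishing of $R$ at $\hat r^e$ (which follows from $u \equiv 0$ just to the right) implies $R \equiv 0$ throughout $(\hat r^C, \infty)$; hence $u \equiv 0$ on all of $\hat\Omega$.

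The main obstacle is the zero-energy threshold behavior: one must verify carefully that for $\gamma \in (-\tfrac32,-\tfrac12)$ the weight is admissible for the b-Fredholm estimate (i.e.\ $-\gamma - \tfrac32$ is not an indicial root of $\Box_{\hat g}(0)$ at $\hat\rho = 0$) \emph{and} that it forces the absence of the constant asymptotic term while still being weak enough that the leading $\hat r^{-1}$ behavior is allowed—this is exactly the narrow window that makes the nullspace computation work and connects to Theorem~\ref{ThmIKerr}. The appearance of the constant indicial root $0$ (reflecting the fact that $1$ is a "resonant state" at zero energy for the scalar Laplacian-like operator on an asymptotically Euclidean $3$-manifold, with the other root $1$ corresponding to the Green's function decay $\hat r^{-1}$) is the delicate point; the decay $\gamma < -\tfrac12$ kills it. The secondary technical point is the improvement of conormality at $\hat\rho = 0$ to a genuine $\hat\rho\,\CI$ expansion via the iterative inversion of $\hat\rho\pa_{\hat\rho} - 1$, but this is routine given the structural result of \cite{HintzPrice} quoted in the proof of Proposition~\ref{PropKnfNz}, and the reduction to Theorem~\ref{ThmIKerr} via separation of variables and the regular-singular ODE analysis at $\hat r^e$ is then immediate.
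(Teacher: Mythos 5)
Your Fredholm estimate and the characterization of the admissible weight window are essentially correct and match the paper's argument, with one minor imprecision: the indicial roots of $\hat\rho^{-2}\Box_{\hat g}(0)$ at $\hat\rho=0$ are not just $0$ and $1$. Upon separating into degree-$\ell$ spherical harmonics, the leading b-model is $(\hat\rho\pa_{\hat\rho})^2-\hat\rho\pa_{\hat\rho}-\ell(\ell+1)$, with indicial roots $\ell+1$ and $-\ell$; ranging over $\ell\in\N_0$ these sweep out all of $\Z$. The window $\gamma\in(-\tfrac32,-\tfrac12)$ places the threshold exponent $\gamma+\tfrac32$ in $(0,1)$, the unique open interval that avoids every one of these integers while retaining $\hat\rho^{\ell+1}$ and excluding $\hat\rho^{-\ell}$ for every $\ell$.

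The genuine gap is in the final step: you invoke Theorem~\ref{ThmIKerr} at $\sigma=0$, but that theorem is stated for $\sigma\neq 0$ only, and the paper explicitly notes immediately after stating it that ``the case of $\sigma=0$\ldots is analyzed in Lemma~\ref{LemmaKz}''---so the appeal is circular. The Whiting/Shlapentokh-Rothman mode stability result concerns the confluent Heun radial equation at nonzero spectral parameter; the zero-energy kernel is a separate, simpler fact that must be checked directly. The paper does this by projecting $u$ onto spherical harmonics $Y_{\ell m}$ (not spheroidal, since at $\tilde\sigma=0$ the angular equation reduces to the spherical Laplacian), absorbing the first-order $D_{\phi_*}$-cross term via a change of angular coordinate $\phi=\phi_*+\Phi(\hat r)$ with $\Phi'=-\hat\bha\chi^e/\hat\mu$, and arriving at the radial ODE $\bigl(D_{\hat r}\hat\mu D_{\hat r}-\hat\bha^2 m^2/\hat\mu+\ell(\ell+1)\bigr)v=0$. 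Elementary ODE arguments then close it: the regular-singular point at $\hat r=\infty$ upgrades the a priori decay $|v|=o(1)$ to $|v|=\cO(\hat r^{-\ell-1})$, after which one integrates by parts over $[\hat r^e,\infty)$ when $m\hat\bha=0$ (giving $v'=0$ for $\ell=0$, hence $v$ constant and thus $0$ by decay; and $v=0$ directly for $\ell\geq1$), while for $m\hat\bha\neq 0$ one evaluates a Wronskian at $\hat r\searrow\hat r^e$ using that $v=(\hat r-\hat r^e)^{i m\hat\bha/\beta}w$ with $w$ smooth and $\beta=\hat\mu'(\hat r^e)$, concluding $w(\hat r^e)=0$ and hence $v\equiv 0$. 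Your $\hat\rho\,\CI$-expansion step, while correct, is also more precision than the paper actually uses.
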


Recall from Proposition~\ref{PropKS}\eqref{ItKSzf} that the $\zface$-normal operator of $\Box(\cdot+i\sigma_1)$ is independent of $\sigma_1\in[-C_1,C_1]$ and $\sigma_0\in\ol\R$, and equal to the Kerr zero energy operator $\Box_{\hat g}(0)$; thus, Lemma~\ref{LemmaKz} proves the invertibility of $N_\zface(\Box(\cdot+i\sigma_1))$.

\begin{proof}[Proof of Lemma~\usref{LemmaKz}]
  Combining the symbolic estimates proved in~\S\ref{SsKSy}---or rather their restrictions to $\zface\cap\nface$, cf.\ Proposition~\ref{PropQHRel}---with elliptic b-theory near $\hat\rho=\hat r^{-1}=0$, we obtain the estimate
  \[
    \|u\|_{\Hbext^{s,\gamma}(\hat\Omega)} \leq C\bigl(\|\Box_{\hat g}(0)u\|_{\Hbext^{s-1,\gamma+2}(\hat\Omega)} + \|u\|_{\Hbext^{-N,-N}(\hat\Omega)}\bigr).
  \]
  (The b-analysis at $\hat\rho=0$ uses that $\Box_{\hat g}(0)$ is, to leading order as a b-operator, the Euclidean Laplacian $\hat\rho^2((\hat\rho D_{\hat\rho})^2+i\hat\rho D_{\hat\rho}+\Delta_\slg)$. Upon separation into spherical harmonics, this is a rescaling of the regular-singular ODE $(\hat\rho\pa_{\hat\rho})^2-\hat\rho\pa_{\hat\rho}-\ell(\ell+1)$, with $\ell\in\N_0$ labeling the degree of the spherical harmonic; the indicial solutions are $\hat\rho^{\ell+1}$ and $\hat\rho^{-\ell}$, and the choice of weight $\gamma$ ensures that the weighted $L^2$-space $\Hbext^{0,\gamma}(\hat\Omega)$ contains, for all $\ell$, the solution $\hat\rho^{\ell+1}$ but not $\hat\rho^{-\ell}$. See also \cite[Theorem~2.1]{GuillarmouHassellResI}.) Since the inclusion $\Hbext^{s,\gamma}(\hat\Omega)\hra\Hbext^{-N,-N}(\hat\Omega)$ is compact, it remains to prove the triviality of $\ker\Box_{\hat g}(0)$. This can be checked using explicit computations with special functions (as remarked in \cite{PressTeukolskyKerrII,TeukolskySeparation}), but we give a softer proof here, following \cite{HaefnerVasyKerrUnfinished}. 

  In view of~\eqref{EqKKerrDual} and~\eqref{EqKSDensity}, the operator $\Box_{\hat g}(0)$ is explicitly given by
  \begin{align*}
    \hat\varrho^2\Box_{\hat g}(0)&=D_{\hat r}\hat\mu(r) D_{\hat r} + \Delta_\slg - \frac{1-\chi^e(\hat r)^2}{\hat\mu(r)}(\bha D_{\phi_*})^2 + \bigl(\chi^e(\hat r)D_{\hat r}+D_{\hat r}\chi^e(\hat r)\bigr)\bha D_{\phi_*} \\
      &= D_{\hat r}\hat\mu(\hat r)D_{\hat r} + \Delta_\slg - \frac{\bha^2}{\hat\mu(\hat r)}D_\phi^2,
  \end{align*}
  where in the second line we passed to $\phi=\phi_*+\Phi(\hat r)$ with $\Phi'(\hat r)=-\frac{\bha\chi^e(\hat r)}{\hat\mu(\hat r)}$; note that
  \begin{equation}
  \label{EqKzPhi}
    \Phi(\hat r)=-\frac{\bha}{\beta}\log(\hat r-\hat r^e)+\tilde\Phi(\hat r),\qquad
    \beta:=\hat\mu'(\hat r^e)=\hat r^e-\hat r^c=2\sqrt{1-\hat\bha^2},
  \end{equation}
  with $\tilde\Phi$ smooth down to $\hat r=\hat r^e$. We may also arrange that $\Phi(\hat r)=0$ for large $\hat r$.

  Let now $u\in\ker\Box_{\hat g}(0)$. First of all, we have $u\in\Hbext^{\infty,\gamma}(\hat\Omega)$: conormality at, and smoothness near spatial infinity follows from the ellipticity (for large $\hat r$) of $\Box_{\hat g}(0)$ as a weighted b-differential operator, whereas smoothness near the ergoregion and in the black hole interior follows by combining radial point estimates at the event horizon and propagation estimates in the ergoregion and in the black hole interior $\hat r<\hat r^e$. Sobolev embedding for $u\in\Hbext^{\infty,\gamma+\frac32}(\hat\Omega,|\frac{\dd\hat r}{\hat r}\dd\slg|)$ implies that $|D_{\hat r}^j u|=\cO(\hat r^{-\gamma-\frac32-j})=o(\hat r^{-j})$ for any $j\in\N_0$ as $\hat r\to\infty$.

  Projecting $u(r,\theta,\phi_*)$ in the angular variables to a fixed spherical harmonic $Y_{\ell m}(\theta,\phi_*)=e^{i m\phi_*}S_{\ell m}(\theta)$, where $\ell\in\N_0$ and $m\in\Z\cap[-\ell,\ell]$, produces a separated solution
  \begin{equation}
  \label{EqKzvvstar}
    v_*(\hat r)Y_{\ell m}(\theta,\phi_*)=v(\hat r)Y_{\ell m}(\theta,\phi),\qquad
    v(\hat r)=e^{-i m\Phi(\hat r)}v_*(\hat r),
  \end{equation}
  where $v_*\in\CI([1,\infty)_{\hat r})$ satisfies $|v_*|=o(1)$ as $\hat r\to\infty$, and $v$ (which equals $v_*$ for large $\hat r$) satisfies
  \begin{equation}
  \label{EqKzODE}
    \Bigl(D_{\hat r}\hat\mu D_{\hat r} - \frac{\bha^2 m^2}{\hat\mu} + \ell(\ell+1)\Bigr)v=0.
  \end{equation}
  This is a regular-singular ODE at $\hat r=\infty$, with indicial solutions $\hat r^\ell$ (which does not decay as $\hat r\to\infty$) and $\hat r^{-\ell-1}$, and therefore we have $|v|=\cO(\hat r^{-\ell-1})$ and thus $|D_{\hat r}^j v|=\cO(\hat r^{-\ell-1-j})$ for all $j\in\N_0$.

  We first study the case $m\bha=0$, i.e.\ $\bha=0$ or $m=0$. Then $v$ is smooth on $[\hat r^e,\infty)$; upon multiplying~\eqref{EqKzODE} by $\bar v$ and integrating over $\hat r\in(\hat r^e,\infty)$, we may integrate by parts in view of $|v|=\cO(\hat r^{-1})$ and $|v'|=\cO(\hat r^{-2})$ as $\hat r\to\infty$. For $\ell=0$, we obtain $v'=0$, hence $v$ is constant and therefore must vanish since $v$ is required to decay at infinity; for $\ell\geq 1$, we obtain $v=0$ directly.

  When $m,\bha\neq 0$, the rescaling of~\eqref{EqKzODE} by $\hat\mu$ is of regular-singular type at $\hat\mu=0$, and by~\eqref{EqKzPhi} and \eqref{EqKzvvstar}, we have $v(\hat r)=(\hat r-\hat r^e)^{i m\bha/\beta}w(\hat r)$ where $w(\hat r)$ is smooth down to $\hat r=\hat r^e$. The Wronskian
  \[
   W:=\Im\bigl(v(\hat r)\mu D_{\hat r}\bar v(\hat r)\bigr)
 \]
 is constant, but decays to zero as $\hat r\to\infty$, and hence $W=0$. On the other hand, by evaluating its limit as $\hat r\searrow\hat r^e$, one finds $W=m\hat\bha|w(\hat r^e)|^2$; thus, $w(\hat r^e)=0$, and since the other indicial root of~\eqref{EqKzODE} is $-i m\bha/\beta\notin i m\bha/\beta-\N_0$, we conclude that $w$ vanishes identically, and therefore so does $v_*$ in $\hat r\geq\hat r^e$.

  Having shown that $v_*=0$ on $[\hat r^e,\infty)$, we obtain $v_*=0$ also on $[1,\hat r^e]$ since $v_*(\hat r)$ vanishes to infinite order at $\hat r=\hat r^e$ and satisfies $0=\hat\mu\hat\varrho^2\Box_{\hat g}(0)(v_* Y_{\ell m})$, which is a regular-singular ODE at $\hat r=\hat r^e$.
\end{proof}

Next, the transition between zero and nonzero frequencies is governed by a model operator on an exact cone; for purely imaginary spectral parameters, this was introduced in \cite{GuillarmouHassellResI}, while in the present context of real spectral parameters, this model operator was introduced in \cite[\S5]{GuillarmouHassellSikoraResIII}; see also \cite[Definition~2.4, \S5]{VasyLowEnergyLag}. In the following result, we work on the transition face $\tface\subset\hat X_\scbtop$, which (recalling the coordinates~\eqref{EqqCoord2} and \eqref{EqQCoord}) is
\[
  \tface = [0,\infty]_{\tilde r}\times\Sph^2,\qquad \tilde r=|\tilde\sigma|\hat r
\]
by Proposition~\ref{PropQStruct}\eqref{ItQStructnf}. Concretely, the $\tface$-normal operator of $\tilde\sigma^{-2}\Box_{\hat g}(\tilde\sigma)$ is
\begin{equation}
\label{EqKtfOp}
  \Box_\tface(1) := N_\tface(\Box_{\hat g}(\cdot)) = \tilde\Delta + 1,\qquad \tilde\Delta=D_{\tilde r}^2-\frac{2 i}{\tilde r}D_{\tilde r} + \tilde r^{-2}\Delta_\slg,
\end{equation}
see \cite[\S\S4.1 and 6]{VasyLowEnergy}.\footnote{This is the conjugation of the model operator in \cite[Definition~2.20]{HintzPrice} by $e^{i\tilde r}$. We remark that in \cite{HintzPrice}, which is based on \cite{VasyLowEnergyLag}, the analytic setup focuses on precise second microlocal/module regularity at the outgoing radial set, whereas in the present paper variable order estimates are sufficient.} On $\tface$, we work with the volume density $\tilde r^2|\dd\tilde r\,\dd\slg|$, and with Sobolev spaces
\[
  H_{\scop,\bop}^{s,\sfr,l}(\tface)
\]
which are scattering Sobolev spaces near $\tilde\rho=0$ (with variable decay order $\sfr$) and b-Sobolev spaces near $\tilde r=0$ (with decay order $l$ there). Note that
\[
  \Box_\tface(1) \in \Diff_{\scop,\bop}^{2,0,2}(\tface) = \Bigl(\frac{\tilde r}{\tilde r+1}\Bigr)^{-2}\Diff_{\scop,\bop}^2(\tface)
\]
is an unweighted scattering operator near $\tilde\rho=0$, and a weighted b-operator near $\tilde r=0$. The b-normal operator of $\tilde r^2\Box_\tface(1)$ at $\tilde r=0$ is $(\tilde r D_{\tilde r})^2-i\tilde r D_{\tilde r}+\Delta_\slg$, with indicial solutions $\hat r^{-\ell-1}Y_{\ell m}$ and $\hat r^\ell Y_{\ell m}$; the range $(\frac12,\frac32)$ of weights in Lemma~\ref{LemmaKtf} disallows the former, more singular, solution. The outgoing and incoming radial sets are as usual the graphs at $\tilde\rho=\tilde r^{-1}=0$ of $\dd\tilde r$ and $-\dd\tilde r$, respectively.

\begin{lemma}[Estimates for the $\tface$-normal operator]
\label{LemmaKtf}
  Let $s\in\R$, $l\in(\frac12,\frac32)$, and suppose $\sfr\in\CI(\ol{{}^{\scop,\bop}T^*_{\hat\rho^{-1}(0)}}\tface)$ is a variable order function which is monotone along the flow of the Hamiltonian vector field of the principal symbol of $\Box_\tface(1)$, and which satisfies $\sfr>-\half$, resp.\ $\sfr<-\half$ at the incoming, resp.\ outgoing radial set. Then there exists a constant $C>0$ so that
  \begin{equation}
  \label{EqKtfEst}
    \|u\|_{H_{\scop,\bop}^{s,\sfr,l}(\tface)} \leq C \| \Box_\tface(1)u \|_{H_{\scop,\bop}^{s-2,\sfr+1,l-2}(\tface)}.
  \end{equation}
\end{lemma}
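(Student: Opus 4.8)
The plan is to prove the a priori estimate~\eqref{EqKtfEst} by the standard method: establish it modulo a compact error term using symbolic/microlocal tools together with the exact solvability of the indicial equation at $\tilde r=0$, and then remove the error term by showing that $\Box_\tface(1)$ has trivial kernel on $H_{\scop,\bop}^{s,\sfr,l}(\tface)$. Because $\Box_\tface(1)=\tilde\Delta+1$ is a concrete operator on the exact cone $\tface=[0,\infty]_{\tilde r}\times\Sph^2$, all pieces of the argument are explicit.

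First I would assemble the microlocal estimate with compact error. Near $\tilde\rho=\tilde r^{-1}=0$, the operator $\Box_\tface(1)$ is an (unweighted) scattering operator whose characteristic set carries the usual incoming and outgoing radial sets at the graphs of $\mp\dd\tilde r$; since the operator is formally self-adjoint (it corresponds to real spectral parameter $1$), the threshold regularity is $\pm\tfrac12$, and the hypotheses on $\sfr$---monotone along the Hamilton flow, with $\sfr>-\tfrac12$ at the incoming and $\sfr<-\tfrac12$ at the outgoing radial set---are exactly what is needed to run radial point estimates (as in \cite{MelroseEuclideanSpectralTheory,VasyZworskiScl}) and propagate scattering decay from the incoming to the outgoing radial set. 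Elliptic regularity handles the rest of scattering phase space. Near $\tilde r=0$, the operator $\tilde r^2\Box_\tface(1)$ is the Euclidean Laplacian to leading order in the b-sense, with b-normal operator $(\tilde r D_{\tilde r})^2-i\tilde r D_{\tilde r}+\Delta_\slg$; after separation into spherical harmonics $Y_{\ell m}$ this is the rescaled Euler operator with indicial roots $-\ell-1$ and $\ell$, and the assumption $l\in(\tfrac12,\tfrac32)$ places the weight strictly between the roots for every $\ell\in\N_0$ (since $-\ell-1<\tfrac12<\tfrac32\leq\ell$ once $\ell\geq 2$, and the bottom two cases are checked directly). Hence elliptic b-theory applies at $\tilde r=0$. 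Combining these yields
\[
  \|u\|_{H_{\scop,\bop}^{s,\sfr,l}(\tface)} \leq C\bigl( \|\Box_\tface(1)u\|_{H_{\scop,\bop}^{s-2,\sfr+1,l-2}(\tface)} + \|u\|_{H_{\scop,\bop}^{-N,-N,l}(\tface)}\bigr)
\]
for any $N$; taking $-N$ below $s$ and below $\min\sfr$ makes the inclusion $H_{\scop,\bop}^{s,\sfr,l}(\tface)\hookrightarrow H_{\scop,\bop}^{-N,-N,l}(\tface)$ compact on the relevant (compact-in-$\tilde r\leq R$, controlled-at-infinity) pieces.

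It then remains to prove that $\ker\Box_\tface(1)\cap H_{\scop,\bop}^{s,\sfr,l}(\tface)=\{0\}$. Given such a $u$, the radial point and elliptic estimates above bootstrap it to $u\in H_{\scop,\bop}^{\infty,\sfr',l}(\tface)$ with $\sfr'$ arbitrary subject only to $\sfr'<-\tfrac12$ at the outgoing radial set; module regularity at the outgoing radial set (stability under $\tilde r(\pa_{\tilde r}-i)$ and spherical vector fields, as in \cite[\S12]{MelroseEuclideanSpectralTheory}, \cite{HassellMelroseVasySymbolicOrderZero}, \cite{GellRedmanHassellShapiroZhangHelmholtz}) shows $e^{-i\tilde r}u$ is conormal at $\tilde\rho=0$ with weight $<-\tfrac12$, i.e.\ $u$ is a genuine outgoing wave. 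Separating into spherical harmonics, the component $v(\tilde r)$ of $e^{-i\tilde r}u$ solves a rescaled Euler-type ODE at $\tilde r=0$ whose admissible root is $\tilde r^\ell$, so $v$ is smooth down to $\tilde r=0$; but the equation $(\tilde\Delta+1)(vY_{\ell m})=0$ is, after conjugating out $e^{i\tilde r}$, the spherical Bessel equation of order $\ell$, whose only solution regular at $\tilde r=0$ is (a multiple of) the spherical Bessel function $j_\ell(\tilde r)$, which is \emph{not} purely outgoing as $\tilde r\to\infty$ (it is $\tilde r^{-1}\sin(\tilde r-\ell\pi/2)+O(\tilde r^{-2})$, a superposition of $e^{\pm i\tilde r}$). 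This contradicts the outgoing condition unless $v\equiv 0$. Alternatively and more softly, a boundary-pairing (Wronskian) argument as in \cite{GuillarmouHassellSikoraResIII}, \cite{VasyLowEnergy} directly forces the coefficient of the incoming part to vanish and then the regularity at $\tilde r=0$ kills the remaining outgoing part; either way $u=0$.

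The main obstacle is purely bookkeeping rather than conceptual: one must be careful that the variable scattering order $\sfr$ is threaded consistently through the radial point estimate, the module-regularity improvement, and the ODE analysis at $\tilde r=0$, and that the weight $l\in(\tfrac12,\tfrac32)$ indeed excludes the singular indicial solution $\tilde r^{-\ell-1}Y_{\ell m}$ for \emph{all} $\ell$ simultaneously (the case $\ell=0$, where the two roots are $-1$ and $0$, being the binding one). Once these are in place, the compactness-plus-triviality argument closes the estimate. $\qed$
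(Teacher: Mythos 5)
Your proof takes essentially the same route as the paper: a symbolic estimate (scattering radial point estimates at $\tilde\rho=0$, elliptic b-theory at $\tilde r=0$) with a relatively compact error, followed by a Fredholm argument reducing the bound to the triviality of the outgoing nullspace, which you verify explicitly via separation into spherical harmonics and the Bessel function asymptotics. The paper's proof is terser and merely cites \cite{GuillarmouHassellResI} and \cite[Lemma~5.10]{HintzConicProp} for the kernel-triviality step, but the content is the same; spelling out the Bessel function argument as you do is a perfectly good way to close it.

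Two bookkeeping imprecisions are worth flagging, although neither affects the validity of the approach. First, the error term $C\|u\|_{H_{\scop,\bop}^{-N,-N,l}(\tface)}$ you write is \emph{not} relatively compact: the inclusion $H_{\scop,\bop}^{s,\sfr,l}\hookrightarrow H_{\scop,\bop}^{-N,-N,l}$ fails to be compact near $\tilde r=0$ if the b-weight is unchanged. What rescues this is precisely the ingredient you allude to but do not execute: since $l$ avoids the indicial roots, the b-\emph{normal operator} inversion at $\tilde r=0$ (not just the b-parametrix) yields a gain in the b-weight, so the correct error term is $C\|u\|_{H_{\scop,\bop}^{-N,-N,l'}(\tface)}$ for any $l'<l$, as the paper writes (with $l'=-N$). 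Second, your comparison of $l\in(\frac12,\frac32)$ to the indicial roots $-\ell-1$ and $\ell$ is off by the density shift: with the volume density $\tilde r^2\,\dd\tilde r\,\dd\slg$, the exponent $\tilde r^\alpha$ lies locally in $\Hb^{0,l}$ iff $\alpha>l-\frac32$, so the Mellin contour sits at $l-\frac32\in(-1,0)$, which is the quantity that must avoid the integer indicial set; the binding case is indeed $\ell=0$, where the roots are $-1$ and $0$, so your conclusion is right but the quantity you compared ($l$ rather than $l-\frac32$) is not.
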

\begin{proof}
  Radial point estimates at the scattering end $\tilde\rho=0$, and elliptic b-estimates at the small end $\tilde r=0$ of the cone $\tface$ give the estimate~\eqref{EqKtfEst} except for the presence of an additional, relatively compact, error term $C\|u\|_{H_{\scop,\bop}^{-N,-N,-N}(\tface)}$ on the right. The estimate~\eqref{EqKtfEst} thus follows from the nonexistence of outgoing elements in the kernel of $\Box_\tface(1)$, which is standard; it can be proved upon separation into spherical harmonics using Wronskian arguments, or by inspection of the asymptotic behavior of the explicit (Bessel function) solutions as done in \cite[\S\S3.4--3.5]{GuillarmouHassellResI} or \cite[Lemma~5.10]{HintzConicProp}.
\end{proof}

Lemmas~\ref{LemmaKz} and \ref{LemmaKtf} provide the normal operator estimates for the uniform low energy analysis of $\Box_{\hat g}(\cdot)\in\Diffscbt^{2,0,2,0}(\ol{\hat\Omega})\subset\Psiscbt^{2,0,2,0}(\ol{\hat\Omega})$ on the $\scbtop$-transition-Sobolev spaces $\Hext_{\scbtop,\tilde\sigma}^{s,\sfr,\gamma,l'}(\hat\Omega)$ introduced in~\S\ref{SsPH}, with $\gamma$ and $l'$ the weights at $\tface$ and $\zface$, respectively, and $\sfr\in\CI(\ol{\Tscbt^*_\scface}\hat X)$ denoting a variable scattering decay order function. Near $\scface\subset\hat X_\scbtop$, a defining function of $\scface$ is $\tilde\rho=\tilde r^{-1}$, and thus we can write $\scbtop$-covectors (cf.\ \eqref{EqPscbtFrame}) as
\[
  -\xi\frac{\dd\tilde\rho}{\tilde\rho^2}+\frac{\eta}{\tilde\rho} = \xi\,\dd\tilde r + \tilde r\eta = |\tilde\sigma| \bigl( \xi\,\dd\hat r+\hat r\eta \bigr)
\]
where $\eta\in T^*\Sph^2$. For $\tilde\sigma>0$, the outgoing (incoming) radial set is then given by $\xi=1$ ($\xi=-1$), $\eta=0$, $\hat\rho=0$, and the signs are reversed when $\tilde\sigma<0$.

\begin{prop}[Uniform bounds on Kerr near zero energy]
\label{PropKnfZ}
  Let $s>\half$, $l,\gamma\in\R$, and suppose $\gamma-l\in(-\frac32,-\half)$. Suppose $\sfr$ is a variable order function that is monotone along the Hamiltonian flow of the principal symbol of $\Box_\tface(1)$, and which satisfies $\sfr>\half$, resp.\ $\sfr<-\half$ at the incoming, resp.\ outgoing radial set. Then there exists $C>0$ so that, for $\tilde\sigma\in\pm[0,1]$, we have
  \begin{equation}
  \label{EqKnfZ}
    \|u\|_{\bar H_{\scbtop,\tilde\sigma}^{s,\sfr,\gamma,l}(\hat\Omega)} \leq C\|\Box_{\hat g}(\tilde\sigma)u\|_{\bar H_{\scbtop,\tilde\sigma}^{s-1,\sfr+1,\gamma+2,l}(\hat\Omega)}.
  \end{equation}
\end{prop}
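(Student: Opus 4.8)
\textbf{Proof strategy for Proposition~\usref{PropKnfZ}.}
The plan is to follow the now-standard ``normal operator method'' for the $\scbtop$-transition calculus from \cite{GuillarmouHassellResI} (see also \S\usref{SsPscbt}): symbolic and model-operator estimates give the full estimate modulo a relatively compact error term, and that error term is removed by proving injectivity. First I would assemble the a priori estimate. The symbolic content is exactly what was proved in~\S\usref{SsKSy}, restricted to $\nface_{\pm,\low}$ via Proposition~\usref{PropQHRel}\usref{ItQHRelnflow}: elliptic regularity on the characteristic set, radial point estimates at the conormal bundle of the event horizon $\hat r=\hat r^e$ (threshold $s>\tfrac12$, since $\Box_{\hat g}(\tilde\sigma)$ is formally self-adjoint at real $\tilde\sigma$), real principal type propagation through the ergoregion, radial point estimates at the incoming/outgoing radial sets over $\scface$ (where the monotone variable order $\sfr$ with $\sfr>-\tfrac12$ incoming, $\sfr<-\tfrac12$ outgoing is used), and wave propagation/energy estimates in $\hat r<\hat r^e$ down to the artificial Cauchy surface at $\hat r=1$. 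The two genuinely new pieces of localized information are the normal operators at the two boundary hypersurfaces $\zface$ and $\tface$ of $\hat X_\scbtop$: by Proposition~\usref{PropKS}\usref{ItKSzf}, $N_\zface(\Box_{\hat g}(\cdot)) = \Box_{\hat g}(0)$, which is invertible on the relevant weighted b-spaces by Lemma~\usref{LemmaKz} (this is where $\gamma-l\in(-\tfrac32,-\tfrac12)$ enters, via the indicial roots $\hat\rho^{\ell+1}$, $\hat\rho^{-\ell}$); and $N_\tface(\tilde\sigma^{-2}\Box_{\hat g}(\cdot))=\Box_\tface(1)=\tilde\Delta+1$, which is invertible on the cone Sobolev spaces $H_{\scop,\bop}^{s,\sfr,l}(\tface)$ by Lemma~\usref{LemmaKtf} (here the b-weight at $\tilde r=0$ in the range $(\tfrac12,\tfrac32)$ matches $\gamma-l$ after the $\tface$ rescaling, and the scattering order $\sfr$ and its threshold at the radial sets of $\Box_\tface(1)$ match those imposed on $\Box_{\hat g}(\tilde\sigma)$). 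Combining these in the usual parametrix construction in the $\scbtop$-calculus — compose a microlocal symbolic parametrix with the inverses of the two normal operators, extended off $\zface$ and $\tface$ in a dilation-invariant way and cut off — yields
\[
  \|u\|_{\bar H_{\scbtop,\tilde\sigma}^{s,\sfr,\gamma,l}(\hat\Omega)} \leq C\bigl(\|\Box_{\hat g}(\tilde\sigma)u\|_{\bar H_{\scbtop,\tilde\sigma}^{s-1,\sfr+1,\gamma+2,l}(\hat\Omega)} + \|u\|_{\bar H_{\scbtop,\tilde\sigma}^{-N,-N,-N,-N}(\hat\Omega)}\bigr)
\]
uniformly for $\tilde\sigma\in\pm[0,1]$, with the last norm relatively compact for $N$ large.

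The second step is to remove the error term by a contradiction/compactness argument uniform in $\tilde\sigma$. Suppose the estimate~\usref{EqKnfZ} fails; then there are $\tilde\sigma_j\in\pm[0,1]$ and $u_j$ of unit norm in $\bar H_{\scbtop,\tilde\sigma_j}^{s,\sfr,\gamma,l}(\hat\Omega)$ with $\Box_{\hat g}(\tilde\sigma_j)u_j\to 0$. Passing to a subsequence, $\tilde\sigma_j\to\tilde\sigma_\infty\in[-1,1]$. If $\tilde\sigma_\infty\neq 0$, the $\scbtop$-norms are uniformly equivalent to ordinary scattering norms $\bar H_\scop^{s,\sfr}(\hat\Omega)$ near $\tilde\sigma_\infty$, and one concludes via Proposition~\usref{PropKnfNz} (and its proof: a nonzero $u_\infty$ with $\Box_{\hat g}(\tilde\sigma_\infty)u_\infty=0$ would, after the elliptic/radial-point regularity upgrades and the normalization $u_0=e^{-i\tilde\sigma_\infty\hat r}\hat r^{-2i\tilde\sigma_\infty}u_\infty$, contradict the Whiting--Shlapentokh-Rothman mode stability Theorem~\usref{ThmIKerr}). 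The delicate case is $\tilde\sigma_\infty=0$: here the ``limiting object'' is not a single function on $\hat\Omega$ but a pair consisting of a b-kernel element at $\zface$ and a cone-kernel element at $\tface$, obtained by rescaling $u_j$ in the two asymptotic regimes $\hat r\lesssim 1/|\tilde\sigma_j|$ (giving, in the limit, an element of $\ker\Box_{\hat g}(0)$ on $\hat X$ with the weight $\gamma$ allowed — which is trivial by Lemma~\usref{LemmaKz}) and $\hat r\gtrsim 1/|\tilde\sigma_j|$ (giving, after rescaling $\tilde r=|\tilde\sigma_j|\hat r$, an outgoing element of $\ker\Box_\tface(1)$ on $\tface$ — trivial by Lemma~\usref{LemmaKtf}). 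Since both limiting kernels vanish, a standard matching/unique-continuation argument across the transition region forces $\|u_j\|\to 0$ (for instance in the relatively compact norm), contradicting unit norm.

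\textbf{Expected main obstacle.} The symbolic and parametrix bookkeeping is routine given~\S\usref{SsKSy} and Appendix~\usref{SP}; the crux is the uniform removal of the compact error as $\tilde\sigma\to 0$, i.e.\ making the ``two-step'' compactness argument rigorous with the correct uniform $\scbtop$-weights. Concretely, one must verify that the unit-norm sequence $u_j$ cannot concentrate in the transition regime, which is where the precise thresholds on $\sfr$ at the radial sets of $\Box_\tface(1)$ and the b-weight window $(\tfrac12,\tfrac32)$ at $\tilde r=0$ (matching $\gamma-l\in(-\tfrac32,-\tfrac12)$) are essential — they guarantee that \emph{both} model normal operators are invertible \emph{with no residual kernel}, so that the only possible obstruction to the uniform estimate is a genuine mode of $\Box_{\hat g}(\tilde\sigma)$ at some fixed nonzero $\tilde\sigma$, which Theorem~\usref{ThmIKerr} rules out. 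This is precisely the mechanism of \cite[Theorem~2.1 and \S3]{GuillarmouHassellResI} (adapted to real spectral parameter as in \cite[\S5]{GuillarmouHassellSikoraResIII}), so the argument should go through with only notational changes; I would cite those references for the parts of the limiting argument that are purely calculus-internal.
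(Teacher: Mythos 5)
Your ingredients are right: symbolic/radial-point estimates (restricted to $\nface_{\pm,\low}$ from \S\ref{SsKSy}), the $\zface$-model inversion $\Box_{\hat g}(0)^{-1}$ from Lemma~\ref{LemmaKz}, the $\tface$-model inversion $\Box_\tface(1)^{-1}$ from Lemma~\ref{LemmaKtf}, and Proposition~\ref{PropKnfNz} for $|\tilde\sigma|$ bounded away from zero (ultimately resting on Theorem~\ref{ThmIKerr}). The gap is in how you remove the residual error. You reduce to a relatively compact error $\|u\|_{\bar H_{\scbtop,\tilde\sigma}^{-N,\ldots,-N}}$ and then invoke a compactness/contradiction argument ``uniform in $\tilde\sigma$''. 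But that compactness is \emph{not} uniform as $\tilde\sigma\to 0$: the inclusion $\bar H_{\scbtop,\tilde\sigma}^{s,\sfr,\gamma,l}\hra\bar H_{\scbtop,\tilde\sigma}^{-N,\ldots,-N}$ degenerates because the $\scbtop$ geometry degenerates (the scattering end collapses into a b-end at $\tilde\sigma=0$). In your proposed ``two-step'' extraction at $\tilde\sigma_\infty=0$, the rescaled limits of $u_j$ in the regimes $\hat r\lesssim1/|\tilde\sigma_j|$ and $\hat r\gtrsim1/|\tilde\sigma_j|$ both vanish (by Lemmas~\ref{LemmaKz} and \ref{LemmaKtf}), but this does not by itself yield a contradiction: a unit-norm sequence concentrating in the transition region $1\ll\hat r\ll1/|\tilde\sigma_j|$ would have \emph{both} rescaled limits equal to zero. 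Your appeal to a ``standard matching/unique-continuation argument across the transition region'' is doing all of the actual work here, and it is not standard; making it precise is essentially equivalent to re-deriving a quantitative estimate. (Also note: the hypothesis of the proposition requires $\sfr>\half$, not merely $\sfr>-\half$, at the incoming radial set — the stronger threshold is used so that, after dropping by one order for the error term, one can still choose $\sfr_0<\sfr-1$ with $\sfr_0>-\half$ there.)

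The paper avoids compactness altogether with a direct a priori estimate. Starting from the symbolic estimate with error in $\bar H_{\scbtop,\tilde\sigma}^{s_0,\sfr_0,\gamma,0}$ ($s_0<s$, $\sfr_0<\sfr-1$), one improves first near $\zface$ via Lemma~\ref{LemmaKz} (using $\gamma-l\in(-\tfrac32,-\tfrac12)$ and the equivalence~\eqref{EqPHEquivzf}), then near $\tface$ via Lemma~\ref{LemmaKtf} (using~\eqref{EqPHEquivtf}); after both steps, the error is bounded by $|\tilde\sigma|^\delta$ times a norm with the \emph{same} weights $(\gamma,0)$ but lower $(s,\sfr)$ orders. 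This is manifestly absorbable for sufficiently small $|\tilde\sigma|$, and Proposition~\ref{PropKnfNz} covers the complementary range $|\tilde\sigma|\in[c,1]$. No contradiction extraction, weak limits, or matching argument appears. Incidentally, the Guillarmou--Hassell machinery in \cite{GuillarmouHassellResI,GuillarmouHassellSikoraResIII} that you cite for the ``purely calculus-internal'' part of your compactness step actually proceeds by an explicit parametrix construction rather than by any such limiting argument, so it supports the direct approach rather than the one you propose.
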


These estimates are closely related to those proved by Vasy in~\cite{VasyLowEnergyLag}; but whereas Vasy uses a second microlocal algebra which allows for precise module regularity control at the outgoing radial set (roughly speaking allowing the order $\sfr$ to be constant---and thus high---except for a jump right at the outgoing radial set), we prove a less precise estimate on variable order spaces here. Thus, using the simpler $\scbtop$-ps.d.o.\ algebra already introduced by Guillarmou--Hassell \cite{GuillarmouHassellResI}, we are still able to prove uniform low energy resolvent estimates.

\begin{proof}[Proof of Proposition~\usref{PropKnfZ}]
  Via multiplication by $|\tilde\sigma|^l$, one may reduce to the case that $l=0$. When $|\tilde\sigma|$ is bounded away from $0$, the estimate~\eqref{EqKnfZ} is the content of Proposition~\ref{PropKnfNz}. Symbolic estimates (which at the incoming radial set only require $\sfr>-\half$) give
  \[
    \|u\|_{\bar H_{\scbtop,\tilde\sigma}^{s,\sfr,\gamma,0}(\hat\Omega)} \leq C\bigl( \|\Box_{\hat g}(\tilde\sigma)u\|_{\bar H_{\scbtop,\tilde\sigma}^{s-1,\sfr+1,\gamma+2,0}(\hat\Omega)} + \|u\|_{\bar H_{\scbtop,\tilde\sigma}^{s_0,\sfr_0,\gamma,0}(\hat\Omega)}\bigr)
  \]
  for any $s_0<s$ and $\sfr_0<\sfr$; we shall take $s_0\in(\half,s)$, and choose $\sfr_0<\sfr-1$ with $\sfr_0>-\half$ at the incoming radial set and monotone along the Hamiltonian flow. Let $\chi=\chi(\tilde\sigma/\hat\rho)=\chi(\tilde r)\in\CIc([0,1))$ denote a cutoff, identically $1$ near $0$, to a neighborhood of $\zface$. Then by writing $u=\chi(\tilde\rho)u+(1-\chi(\tilde\rho))u$, with the second summand supported away from $\zface$, we have
  \[
    \|u\|_{\bar H_{\scbtop,\tilde\sigma}^{s_0,\sfr_0,\gamma,0}(\hat\Omega)} \leq \|\chi(\tilde\rho)u\|_{\bar H_{\scbtop,\tilde\sigma}^{s_0,\sfr_0,\gamma,0}(\hat\Omega)} + C\|u\|_{\bar H_{\scbtop,\tilde\sigma}^{s_0,\sfr_0,\gamma,-N}(\hat\Omega)}
  \]
  for any fixed $N$; we take $N=1$. Moreover, uniformly for $\tilde\sigma\in[0,1]$,
  \[
    \|\chi(\tilde\rho)u\|_{\bar H_{\scbtop,\tilde\sigma}^{s_0,\sfr_0,\gamma,0}(\hat\Omega)} \leq C\|\chi(\tilde\rho)u\|_{\Hbext^{s_0,\gamma}(\hat\Omega)}.
  \]
  (In fact, the norms on both sides, in the presence of the cutoff $\chi(\tilde\rho)$, are uniformly equivalent, see~\eqref{EqPHEquivzf}.) Using Lemma~\ref{LemmaKz},
  \begin{align*}
    \|\chi(\tilde\rho)u\|_{\Hbext^{s_0,\gamma}(\hat\Omega)} &\leq C\bigl( \|\chi(\tilde\rho)\Box_{\hat g}(0)u\|_{\Hbext^{s_0-1,\gamma+2}(\hat\Omega)} + \|[\Box_{\hat g}(0),\chi(\tilde\rho)]u\|_{\Hbext^{s_0-1,\gamma+2}(\hat\Omega)}\bigr) \\
      &\leq C\bigl( \|\chi(\tilde\rho)\Box_{\hat g}(0)u\|_{\Hext_{\scbtop,\tilde\sigma}^{s_0-1,*,\gamma+2,0}(\hat\Omega)} + \|[\Box_{\hat g}(0),\chi(\tilde\rho)]u\|_{\Hext_{\scbtop,\tilde\sigma}^{s_0-1,*,\gamma+2,0}(\hat\Omega)}\bigr) \\
      &\leq C\bigl( \|\Box_{\hat g}(\tilde\sigma)u\|_{\Hext_{\scbtop,\tilde\sigma}^{s_0-1,*,\gamma+2,0}(\hat\Omega)} + \|u\|_{\Hext_{\scbtop,\tilde\sigma}^{s_0,*,\gamma,-1}(\hat\Omega)}\bigr),
  \end{align*}
  where the `$*$' indicates that the order is arbitrary; we use here that $\chi(\tilde\rho)(\Box_{\hat g}(\tilde\sigma)-\Box_{\hat g}(0))\in\Diffscbt^{1,0,2,-1}(\hat\Omega)$ and $[\Box_{\hat g}(0),\chi(\tilde\rho)]\in\Diffscbt^{1,-\infty,2,-\infty}(\hat\Omega)$.

  We have now obtained the improved estimate
  \[
    \|u\|_{\bar H_{\scbtop,\tilde\sigma}^{s,\sfr,\gamma,0}(\hat\Omega)} \leq C\bigl( \|\Box_{\hat g}(\tilde\sigma)u\|_{\bar H_{\scbtop,\tilde\sigma}^{s-1,\sfr+1,\gamma+2,0}(\hat\Omega)} + \|u\|_{\bar H_{\scbtop,\tilde\sigma}^{s_0,\sfr_0,\gamma,-1}(\hat\Omega)}\bigr).
  \]
  The next step is to strengthen this further by weakening the weight of the error term at $\tface$. To this end, we fix a cutoff $\psi\in\CI(\hat X_\scbtop)$ which is supported in a small collar neighborhood of $\tface\subset\hat X_\scbtop$ and identically $1$ near $\tface$; then for any $\delta\in(0,1]$ and $N\in\R$, we have
  \[
   \|u\|_{\bar H_{\scbtop,\tilde\sigma}^{s_0,\sfr_0,\gamma,\delta}(\hat\Omega)} \leq \|\psi u\|_{\bar H_{\scbtop,\tilde\sigma}^{s_0,\sfr_0,\gamma,\delta}(\hat\Omega)} + C\|u\|_{\bar H_{\scbtop,\tilde\sigma}^{s_0,\sfr_0,-N,\delta}(\hat\Omega)}.
  \]
  We can estimate the first term, using~\eqref{EqPHEquivtf} and Lemma~\ref{LemmaKtf}, via pullback along the coordinate change $\phi\colon(\tilde\sigma,\tilde\rho,\omega)\mapsto(\tilde\sigma,\tilde\sigma\tilde\rho,\omega)\in[0,1]\times[0,1)_{\hat\rho}\times\Sph^2_\omega$, similarly to above by
  \begin{align}
    &|\tilde\sigma|^{\delta+\frac32}\|\phi^*(\psi u)\|_{H_{\scop,\bop}^{s_0,\sfr_0,-\gamma+\delta}(\tface)} \nonumber\\
    &\quad\leq C|\tilde\sigma|^{\delta+\frac32}\Bigl( \|\phi^*(\psi)\Box_\tface(1)(\phi^*u)\|_{H_{\scop,\bop}^{s_0-2,\sfr_0+1,-\gamma+\delta-2}(\tface)} + \|[\Box_\tface(1),\phi^*(\psi)]\phi^*u\|_{H_{\scop,\bop}^{s_0,\sfr_0,-\gamma+\delta}(\tface)}\Bigr) \nonumber\\
  \label{EqKnfZalmost}
    &\quad\leq C\Bigl( \|\psi\Box_{\hat g}(\tilde\sigma)u\|_{\Hext_{\scbtop,\tilde\sigma}^{s_0-2,\sfr_0+1,\gamma+2,\delta}(\hat\Omega)} + \|u\|_{\Hext_{\scbtop,\tilde\sigma}^{s_0,\sfr_0+1,\gamma-1,\delta}(\hat\Omega)}\Bigr),
  \end{align}
  where we fix $\delta>0$ so small that $-\gamma+\delta\in(\frac12,\frac32)$. Here, we used that $\psi(\Box_{\hat g}(\tilde\sigma)-\phi_*(\tilde\sigma^2\Box_\tface(1)))\in\Diffscbt^{2,0,-3,0}(\hat\Omega)$ (which is the statement that $\tilde\sigma^2\Box_\tface(1)$ is the $\tface$-normal operator of $\Box_{\hat g}(\tilde\sigma)$).

  Altogether, increasing the $\tface$-order of the final term in~\eqref{EqKnfZalmost} to $\gamma-\delta$ (thus making this term larger) for convenience, we have shown
  \begin{align*}
    \|u\|_{\bar H_{\scbtop,\tilde\sigma}^{s,\sfr,\gamma,0}(\hat\Omega)} &\leq C\bigl( \|\Box_{\hat g}(\tilde\sigma)u\|_{\bar H_{\scbtop,\tilde\sigma}^{s-1,\sfr+1,\gamma+2,0}(\hat\Omega)} + \|u\|_{\bar H_{\scbtop,\tilde\sigma}^{s_0,\sfr_0+1,\gamma-\delta,\delta}(\hat\Omega)}\bigr) \\
      &\leq C\|\Box_{\hat g}(\tilde\sigma)u\|_{\bar H_{\scbtop,\tilde\sigma}^{s-1,\sfr+1,\gamma+2,0}(\hat\Omega)} + C|\tilde\sigma|^\delta \|u\|_{\bar H_{\scbtop,\tilde\sigma}^{s_0,\sfr_0+1,\gamma,0}(\hat\Omega)}.
  \end{align*}
  Since $s_0<s$ and  $\sfr_0+1<\sfr$, the second term, for sufficiently small $|\tilde\sigma|$, can be absorbed into the left hand side. The proof is complete.
\end{proof}

\subsection{Estimates for the \texorpdfstring{$\mface_{\pm,\semi}$-}{high energy mf-}normal operator}
\label{SsKmf}

Having proved estimates for all normal operators related to the Kerr model, we now turn to the de~Sitter model at $\mface$ and prove high energy estimates. Since the de~Sitter model involves, analytically and geometrically, a cone point due to the blow-up of the spatial manifold $X$ at $0\in X$, these estimates do not follow from \cite[\S4]{VasyMicroKerrdS}. Rather, they involve propagation estimates on semiclassical cone spaces; indeed one can quote \cite[Theorem~4.10]{HintzConicProp}. The details are as follows. By Proposition~\ref{PropKS}\eqref{ItKSmf}, the $\mface$-normal operator of $\Box(\cdot+i\sigma_1)$ is the operator family $\sigma_0\mapsto\Box_{g_\dS}(\sigma_0+i\sigma_1)$. In the high energy regime $h=|\sigma_0|^{-1}\leq 1$, $\pm\sigma_0>0$, we rescale this to
\begin{equation}
\label{EqKmOp}
  h\mapsto h^2\Box_{g_\dS}(\pm h^{-1}+i\sigma_1).
\end{equation}
Near the lift $\sface$ of $h=0$ to $\dot X_\chop\subset\mface$, we have coordinates $\tilde h=h/r$, $r$, and $\omega\in\Sph^2$, and Q-covectors can be written as $h^{-1}(\xi\,\dd r+r\eta)$, $\xi\in\R$, $\eta\in T^*\Sph^2$, as in~\eqref{EqKSyCoordif}. In view of~\eqref{EqKdS}, the semiclassical cone principal symbol of~\eqref{EqKmOp} is then
\[
  (1-r^2)\xi^2 + |\eta|_{\slg^{-1}}^2 - \frac{1}{1-r^2}.
\]
The outgoing and incoming radial sets were computed already in~\S\ref{SssKSyif}, see~\eqref{EqKSyRif}. (Indeed, in view of Corollary~\ref{CorQBundle}, we have $\TQ_{\mface_{+,\semi}\cap\,\iface_+}X\cong\Tch^*_\sface\dot X$.) Furthermore, the $\tface$-model operator of~\eqref{EqKmOp} only depends on the metric $g_\dS$ at the point $0$ where it is the Minkowski metric on $\R_{t_*}\times X$, and therefore the model operator is
\[
  \Box_\tface(1) = D_{\tilde r}^2 - \frac{2 i}{\tilde r}D_{\tilde r} + \tilde r^2\Delta_\slg + 1, \qquad \tilde r=\frac{r}{h}.
\]
This is of course the same operator as in~\eqref{EqKtfOp}, since it is the restriction of $h^2\Box(\cdot+i\sigma_1)$ to the boundary face $\mface\cap\nface$ (see also Figure~\ref{FigQSingle}). Notice how what here is a model problem at high energy right at the conic singularity of the spatial de~Sitter manifold blown up at $0$ is the same as a model problem at low energy at spatial infinity of the asymptotically flat spatial Kerr manifold.

\begin{prop}[High energy estimates on de~Sitter space]
\label{PropKmf}
  There exists $h_0>0$ so that the following holds. Let $s>\half+C_1$, $l\in(\frac12,\frac32)$, $l'\in\R$ and $\sfr\in\CI(\ol{\Tch^*_\sface}\dot X)$, and assume that $\sfr$ is monotone along the Hamiltonian flow of the semiclassical cone principal symbol of $\Box_{g_\dS}(\sigma_0+i\sigma_1)$ (with $h=\pm\sigma_0^{-1}\geq 0$ the semiclassical parameter), and so that $\sfr-l'>\half$, resp.\ $\sfr-l'<-\half$ at the incoming, resp.\ outgoing radial set. Then there exists $C>0$ so that
  \[
    \|u\|_{\Hext_{\cop,h}^{s,l,l',\sfr}(\dot\Omega)} \leq C\|h^2\Box_{g_\dS}(\pm h^{-1}+i\sigma_1)u\|_{\Hext_{\cop,h}^{s-1,l-2,l',\sfr+1}(\dot\Omega)},\qquad 0<h\leq h_0.
  \]
 (Recall here the notation $\dot\Omega$ from~\eqref{EqKSOmegahat}.)
\end{prop}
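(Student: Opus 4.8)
The plan is to assemble Proposition~\ref{PropKmf} from three families of microlocal estimates for the rescaled operator~\eqref{EqKmOp}, exactly as in the proof of the Kerr low-energy bounds in~\S\ref{SsKnf}: (i) symbolic estimates away from the cone point (elliptic regularity, real principal type propagation, radial point estimates at the conormal bundle of the cosmological horizon $r=1$, and energy estimates near the artificial Cauchy hypersurface $r=2$); (ii) semiclassical cone radial point estimates at the incoming and outgoing radial sets $\cR_{\iface_+,\mp}$ over the cone point $r=0$, which are precisely \cite[Theorem~4.10]{HintzConicProp}; and (iii) an estimate for the $\tface$-model operator $\Box_\tface(1)=D_{\tilde r}^2-\frac{2 i}{\tilde r}D_{\tilde r}+\tilde r^2\Delta_\slg+1$, which is the \emph{same} operator as in~\eqref{EqKtfOp} and hence is handled by Lemma~\ref{LemmaKtf}. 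First I would record the structure of the characteristic set and the null-bicharacteristic flow; but this was already computed in~\S\ref{SssKSyif} (see~\eqref{EqKSydS}, \eqref{EqKSyRif}, \eqref{EqKSyRadif}), so nothing new is needed: the level sets of $r\in(0,1)$ are null-bicharacteristically convex, the flow inside the characteristic set over $r=0$ runs from the source $\cR_{\iface_+,-}$ to the sink $\cR_{\iface_+,+}$, and at the conormal bundle of $r=1$ there is a source/sink pair whose threshold regularity, by Lemma~\ref{LemmaKSyIm}, is $\half-\sigma_1$, hence $<s$ under the hypothesis $s>\half+C_1$.

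The argument then proceeds by a standard propagation-of-regularity chain. Starting from the a priori control region near $r=2$ (where $\dd r$ is timelike, giving an energy/hyperbolic estimate), I would propagate semiclassical cone regularity backward across $r\in(1,2)$ by real principal type propagation, into the radial source/sink over $r=1$ (using the threshold condition $s>\half+C_1$), out of the other radial set over $r=1$, and then inward along the convex level sets of $r$ down toward $r=0$, arriving at the incoming radial set $\cR_{\iface_+,-}$. The semiclassical cone radial point estimate of \cite[Theorem~4.10]{HintzConicProp} — whose threshold condition, under the identification of $\sfr,l'$ here with $\sfb,\alpha$ there, is exactly $\sfr-l'>\half$ at the incoming set and $\sfr-l'<-\half$ at the outgoing set as in~\eqref{EqKSyRifm}, \eqref{EqKSyRifp} — then propagates regularity into $\cR_{\iface_+,-}$ and, using the $\tface$-normal operator estimate from Lemma~\ref{LemmaKtf} (together with the weight condition $l\in(\frac12,\frac32)$ that disallows the more singular indicial root at $\tilde r=0$), past the cone point and out into a punctured neighborhood of the outgoing radial set $\cR_{\iface_+,+}$; from there one propagates outward into $r>0$ and, combined with the estimates already obtained in $r\in(0,2)$, this yields microlocal control on all of $\Sigma$ over $\dot\Omega$. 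The resulting estimate carries a relatively compact error term $C\|u\|_{\Hext_{\cop,h}^{s_0,l-\delta,l',\sfr_0}(\dot\Omega)}$ with $s_0<s$, $\sfr_0<\sfr$, $\delta>0$, which is absorbed for small $h$ once one knows that $\ker\Box_{g_\dS}(\pm h^{-1}+i\sigma_1)=\{0\}$ on the relevant semiclassical cone space. But this is immediate: the de~Sitter spectral family has no quasinormal modes with $\Re\sigma\neq 0$ — this is classical (see the discussion after Corollary~\ref{CorIMS} of the integration-by-parts and boundary-pairing arguments for $\bha=0$), and is in any case contained in Lemma~\ref{LemmaKdSQNM}, which locates $\QNM_\dS$ on $-i\N_0$ — so for $|\Re\sigma|=h^{-1}$ large there can be no nontrivial solution. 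Hence the error term can be dropped, which establishes the stated uniform estimate for $0<h\le h_0$.

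The main obstacle is bookkeeping rather than a genuinely new idea: one must verify that the orders line up correctly under the identification $\TQ_{\mface_{+,\semi}\cap\,\iface_+}X\cong\Tch^*_\sface\dot X$ (Corollary~\ref{CorQBundle}), so that the semiclassical cone radial point threshold of \cite{HintzConicProp} translates to the conditions $\sfr-l'\gtrless\pm\half$ stated in the proposition, and that the $\tface$-normal operator appearing here is literally $\Box_\tface(1)$ of~\eqref{EqKtfOp} with the $b$-normal operator $(\tilde r D_{\tilde r})^2-i\tilde r D_{\tilde r}+\Delta_\slg$ at $\tilde r=0$, so that the weight window $l\in(\frac12,\frac32)$ for Lemma~\ref{LemmaKtf} is the right one. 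Once these identifications are in place, and given that the monotonicity hypothesis on $\sfr$ is exactly what makes the real principal type propagation between the two radial sets over the cone point (and over $r=1$) work without loss, the proof is a routine concatenation of the cited microlocal estimates, completely parallel to the proof of Proposition~\ref{PropKnfZ}.
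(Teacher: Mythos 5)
Your outline of the propagation chain — from the Cauchy surface at $r=2$ to the radial sets over the cosmological horizon, inward along the bicharacteristically convex level sets of $r$, into $\cR_{\iface_+,-}$, through the cone point via Lemma~\ref{LemmaKtf}, and out past $\cR_{\iface_+,+}$ — is correct and matches the paper's reasoning (which compresses all of this into the phrase ``symbolic estimates'' plus the $\tface$-normal operator improvement, cf.\ the displayed estimate~\eqref{EqKmAlmost}). The threshold conditions and the weight window $l\in(\frac12,\frac32)$ are as you say.

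The gap is in the final absorption step, and it is a real one. You claim that the residual error term $C\|u\|_{\Hext_{\cop,h}^{s_0,l-\delta,l',\sfr_0}(\dot\Omega)}$ is absorbed for small $h$ ``once one knows that $\ker\Box_{g_\dS}(\pm h^{-1}+i\sigma_1)=\{0\}$.'' This does not work. First, the error term as you have written it does not exhibit a uniform gain in $h$: you have improved the orders at fiber infinity, at $\cface$ (by $\delta$), and at $\sface$ (by $\sfr-\sfr_0$), but the $\tface$-order $l'$ is unchanged. Near the corner $\tface\cap\sface$, where $r\sim h$, one has $\rho_\cface\sim 1$ and $\rho_\sface\sim 1$ while $\rho_\tface\sim h$; so $\rho_\cface^\delta\rho_\sface^{\sfr-\sfr_0}\sim 1$ there and the error term is \emph{not} $O(h^{\delta'})$ times the left-hand side. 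Second, and more fundamentally, ``relatively compact error plus trivial kernel'' is not a substitute for a uniform-in-$h$ estimate. Injectivity of $\Box_{g_\dS}(\pm h^{-1}+i\sigma_1)$ on the relevant space for each fixed $h>0$ gives, via Fredholm theory, an estimate with an $h$-dependent constant $C(h)$; there is no reason for $C(h)$ to stay bounded as $h\to 0$. The uniformity as $h\to 0$ is exactly what Proposition~\ref{PropKmf} asserts, and it cannot be read off from the fixed-$h$ Fredholm picture. (For this reason Lemma~\ref{LemmaKdSQNM} is not invoked in the paper's proof, and the proposition is not deduced from mode stability of de~Sitter space.)

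The correct mechanism — which you invoke Lemma~\ref{LemmaKtf} for, but don't carry through to the error term — is the quantitative gain of $h^\delta$ that the $\tface$-normal operator estimate delivers. After reducing to $l'=0$ by conjugation with $h^{l'}$, the symbolic estimate produces an error $\|u\|_{\Hext_{\cop,h}^{-N,l,0,\sfr_0}(\dot\Omega)}$ with $\sfr_0<\sfr-1$. Applying Lemma~\ref{LemmaKtf} near $\tface$ (using $l\in(\frac12,\frac32)$) replaces the $\tface$-order $0$ by $-1$ at the cost of raising the $\sface$-order from $\sfr_0$ to $\sfr_0+1$. Because $\rho_\tface\rho_\sface=h$, and since $\sfr_0+1+\delta<\sfr$ for some $\delta>0$, this improved error term is bounded by $C h^\delta\|u\|_{\Hext_{\cop,h}^{-N,l,0,\sfr}(\dot\Omega)}$, which for small $h$ can be absorbed directly into the left-hand side of~\eqref{EqKmAlmost}. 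This is the content of the paper's proof, parallel to the $|\tilde\sigma|^\delta$-gain in Proposition~\ref{PropKnfZ}, and no appeal to the structure of $\QNM_\dS$ is made or needed.
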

\begin{proof}
  Via multiplication by $h^{l'}$, we can reduce to the case $l'=0$. Using the assumptions on $s$ and $\sfr$, symbolic estimates (which control elements of semiclassical cone Sobolev spaces in the sense of regularity $s$ and semiclassical order $\sfr$) give
  \begin{equation}
  \label{EqKmAlmost}
    \|u\|_{\Hext_{\cop,h}^{s,l,0,\sfr}(\dot\Omega)} \leq C\bigl(\|h^2\Box_{g_\dS}(\pm h^{-1}+i\sigma_1)u\|_{\Hext_{\cop,h}^{s-1,l-2,0,\sfr+1}(\dot\Omega)} + \|u\|_{\Hext_{\cop,h}^{-N,l,0,\sfr_0}(\dot\Omega)}\bigr)
  \end{equation}
  for any fixed $N$ and $\sfr_0<\sfr$, which we fix subject to $\sfr_0<\sfr-1$, and $\sfr_0>-\half$ at the incoming radial set. The error term can then be estimated in terms of the $\tface$-normal operator $\Box_\tface(1)$ by using Lemma~\ref{LemmaKtf} in a manner completely analogous to the proof of Proposition~\ref{PropKnfZ}; this is where the assumption $l\in(\frac12,\frac32)$ is used. Thus, the last, error, term in~\eqref{EqKmAlmost} can be replaced by $\|u\|_{\Hext_{\cop,h}^{-N,l,-1,\sfr_0+1}(\dot\Omega)}\leq C h^\delta\|u\|_{\Hext_{\cop,h}^{-N,l,0,\sfr}(\dot\Omega)}$ if we choose $\delta>0$ small enough so that $\sfr_0+1+\delta<\sfr$ still.\footnote{The assumption $\sfr-l'>\half$ can be weakened to $\sfr-l'>-\half$, see the end of the proof of \cite[Theorem~4.10]{HintzConicProp}, but we do not need this precision here.} For small $h>0$, this error term can then be absorbed into the left hand side of~\eqref{EqKmAlmost}, finishing the proof.
\end{proof}

\subsection{Absence of high energy resonances}
\label{SsKA}

By combining the estimates proved in~\S\S\ref{SsKSy}--\ref{SsKmf}, we can now show:

\begin{prop}[Uniform estimates at high energies]
\label{PropKA}
  Let $s,\gamma,l',b\in\R$, and let $\sfr\in\CI(\ol{\TQ^*_\iface}X)$ be a variable order. Suppose that $s>\frac32+C_1$, $\gamma-l\in(-\frac32,-\half)$, and that $\sfr-l'>\half$, resp.\ $\sfr-l'<-\half$ at the incoming, resp.\ outgoing radial set over $\iface\cap\nface$. Suppose moreover that $\sfr$ is non-increasing along the Hamiltonian flow of the principal symbol of $\Box(\cdot+i\sigma_1)$. Let $h_0>0$ be as in Proposition~\usref{PropKmf} (i.e.\ sufficiently small). Then for any fixed $s_0<s$, $l_0<l$, $\gamma_0<\gamma$, $l'_0<l'$, $\sfr_0<\sfr$, $b_0<b$, there exists a constant $C>0$ so that for $|\sigma_0|\geq h_0^{-1}$, we have the uniform (for $|\sigma_0|\geq h_0^{-1}$, $\sigma_1\in[-C_1,C_1]$, $\bhm\in(0,\bhm_0]$) estimate
  \begin{equation}
  \label{EqKAHi}
    \|u\|_{\Hext_{\Qop,\sigma_0,\bhm}^{s,(l,\gamma,l',\sfr,b)}(\Omega_\Qop)} \leq C\bigl( \|\Box_{g_\bhm}(\sigma_0+i\sigma_1)u\|_{\Hext_{\Qop,\sigma_0,\bhm}^{s-1,(l-2,\gamma,l'-2,\sfr-1,b)}(\Omega_\Qop)} + \|u\|_{\Hext_{\Qop,\sigma_0,\bhm}^{s_0,(l_0,\gamma_0,l'_0,\sfr_0,b_0)}(\Omega_\Qop)}\bigr)
  \end{equation}
  In the remaining bounded frequency range $|\sigma_0|\leq h_0^{-1}$, we have a uniform (for $\sigma_0\in[-h_0^{-1},h_0^{-1}]$, $\sigma_1\in[-C_1,C_1]$, $\bhm\in(0,\bhm_0]$) estimate
  \begin{equation}
  \label{EqKABdd}
    \|u\|_{\Hext_{\qop,\bhm}^{s,(l,\gamma)}(\Omega_\qop)} \leq C\bigl( \|\Box_{g_\bhm}(\sigma_0+i\sigma_1)u\|_{\Hext_{\qop,\bhm}^{s-1,(l-2,\gamma)}(\Omega_\qop)} + \|u\|_{\Hext_{\qop,\bhm}^{s_0,(l_0,\gamma)}(\Omega_\qop)}\bigr).
  \end{equation}
\end{prop}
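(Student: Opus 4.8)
The estimate \eqref{EqKAHi} in the high frequency range is obtained by patching together the estimates proved in \S\S\ref{SsKSy}--\ref{SsKmf} via a microlocal partition of unity on $X_\Qop$ (restricted to $|\sigma_0|\geq h_0^{-1}$), together with the normal operator estimates at $\zface$, $\nface_\pm$, and $\mface_{\pm,\semi}$, upgraded to estimates on $\Omega_\Qop$ by the standard normal operator argument. Concretely, I would fix a finite cover of $\ol{\Omega_\Qop}\cap\{|\sigma_0|\geq h_0^{-1}\}$ by collar neighborhoods of $\zface$, $\nface_\pm$, $\sface_\pm$, $\iface_\pm$, $\mface_{\pm,\semi}$, and a region bounded away from all of these, with a subordinate partition of unity $\{\chi_H\}$ (lifted to smooth functions on $X_\Qop$). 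Away from all boundary hypersurfaces involved, $\Box(\cdot+i\sigma_1)$ is $\sigma_0$-independent after the semiclassical rescaling, and elliptic regularity plus Proposition~\ref{PropKSy} handle $\chi_H u$ there. For each boundary hypersurface $H$, Proposition~\ref{PropQHRel} identifies the Q-Sobolev norm of $\chi_H u$ (after the appropriate rescaling map $\phi_H$) with the corresponding model Sobolev norm --- b-Sobolev at $\zface$, $\scbtop$-transition Sobolev at $\nface_{\pm,\low}$, semiclassical scattering Sobolev at $\nface_{\pm,\tilde\semi}$, and semiclassical cone Sobolev at $\mface_{\pm,\semi}$. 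Applying Lemma~\ref{LemmaKz}, Propositions~\ref{PropKnfZ}, \ref{PropKSyKerr}, and \ref{PropKmf} respectively to the rescaled functions $\phi_H^*(\chi_H u)$, and using Proposition~\ref{PropKS}\eqref{ItKSzf}--\eqref{ItKSmf} to identify the model operators with the corresponding normal operators of $\Box(\cdot+i\sigma_1)$, one controls $\chi_H u$ in terms of $\Box_{g_\bhm}(\sigma_0+i\sigma_1)(\chi_H u)$ plus a term that vanishes to some positive order at $H$ --- i.e.\ a term of the form $C\bhm^\delta$ (or $C h^\delta$, $C\tilde h^\delta$) times the $H_\Qop^{s,(l,\gamma,l',\sfr,b)}$-norm, hence absorbable. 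The commutator $[\Box_{g_\bhm}(\sigma_0+i\sigma_1),\chi_H]$ is supported in the overlap regions and, by \eqref{EqQVFComm} together with the structure of $\DiffQ$, is one order lower in the $\iface$- and $\sface$-senses; its contribution is handled by the overlapping patch estimates, at the cost of one lower $\iface$-order $\sfr_0$, lower weights $l_0,\gamma_0,l'_0$, etc., in the error term. Summing over the finitely many $H$ and absorbing the small terms gives \eqref{EqKAHi}.

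For \eqref{EqKABdd}, the same scheme applies but with only the $\zface$- and $\mface_{\sigma_0}$-normal operators entering: indeed on the q-single space $X_\qop$, which is the level set $\sigma^{-1}(\sigma_0)$ of $X_\Qop$, the faces $\iface,\sface,\nface$ are not present, and the relevant model operators are $\Box_{\hat g}(0)=N_\zface(\bhm^2\Box_{g_\bhm}(\sigma_0+i\sigma_1))$ at $\zface_\qop$ and $\Box_{g_\dS}(\sigma_0+i\sigma_1)=N_{\mface_{\sigma_0}}(\Box_{g_\bhm}(\sigma_0+i\sigma_1))$ at $\mface_\qop$. Here the de~Sitter model operator is \emph{not} inverted uniformly (it is not, for $\sigma_0$ near a de~Sitter resonance), so at $\mface_\qop$ one only uses the propagation/radial point/energy estimates of \S\ref{SsKSy} (restricted to the level set) plus elliptic b-theory at the cone point, which gives control modulo a relatively compact error term of lower q-regularity and lower weight at $\zface_\qop$ --- but \emph{not} a gain in the $\mface$-weight $\gamma$. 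At $\zface_\qop$, Lemma~\ref{LemmaKz} provides the invertibility of $\Box_{\hat g}(0)$ (on b-Sobolev spaces with weight in $(-\tfrac32,-\tfrac12)$), which by Proposition~\ref{PropqHRel}\eqref{ItqHRelzf} translates to an estimate improving the $\zface_\qop$-weight by any $\delta\in(0,l-l_0]$; this absorbs the corresponding error term for small $\bhm$. The net output, after again summing the finitely many patch estimates and tracking which errors are absorbable, is \eqref{EqKABdd} with the error term at the stated lower orders $s_0<s$, $l_0<l$, and the \emph{same} weight $\gamma$ at $\mface_\qop$ --- this last point is essential, since uniform control of the $\mface$-weight would require the de~Sitter mode stability that is the subject of the bounded-frequency analysis in \S\ref{SsKBd}, not merely the normal operator estimates of this section.

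\textbf{Main obstacle.} The conceptual content is already isolated in the preceding propositions; the work here is bookkeeping, and the principal difficulty is \emph{uniformity} --- making sure the constant $C$ in each patch estimate is independent of $(\sigma_0,\sigma_1,\bhm)$ over the relevant ranges, and that the various rescaling maps $\phi_\zface,\phi_{\nface_{\pm,\low}},\phi_{\nface_{\pm,\tilde\semi}},\phi_{\mface,\pm,\semi}$ of Proposition~\ref{PropQHRel} are compatible on overlaps. The key structural fact making this work is that the Q-calculus is an honest pseudodifferential algebra on $X_\Qop$ (Proposition~\ref{PropQPC}) with uniform $L^2$-boundedness, so commutators, microlocal partitions, and the passage between $H_\Qop$-norms and model norms are all governed by a single calculus; the one place requiring genuine care is the corner regions $\zface\cap\nface$, $\mface\cap\nface$, $\nface\cap\iface$, etc., where two model operators' estimates must be glued --- but the crucial compatibility here is exactly the observation, highlighted after the statement of Proposition~\ref{PropKmf} and in \S\ref{SssKSyt}, that the $\tface$-model operator $\Box_\tface(1)=D_{\tilde r}^2-\tfrac{2i}{\tilde r}D_{\tilde r}+\tilde r^{-2}\Delta_\slg+1$ is \emph{common} to the high-energy de~Sitter cone model at $\mface\cap\nface$ and the low-energy Kerr transition model at $\nface\cap\mface$, so the two estimates from Lemma~\ref{LemmaKtf} match on the overlap. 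Given this, the gluing produces no new threshold conditions beyond those already assumed (on $s$, on $\sfr-l'$, on $\gamma-l$, on monotonicity of $\sfr$), and the proof is complete.
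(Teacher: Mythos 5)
Your proposal correctly identifies all the inputs -- Proposition~\ref{PropKSy}, Lemma~\ref{LemmaKz}, Propositions~\ref{PropKnfNz}, \ref{PropKnfZ}, \ref{PropKSyKerr}, \ref{PropKmf}, and the norm comparisons of Proposition~\ref{PropQHRel} -- and gets right the asymmetry between \eqref{EqKAHi} and \eqref{EqKABdd} (no improvement of the $\mface$-weight in the bounded frequency case). But the organizational structure you propose, a microlocal partition of unity applied to $u$ with the model estimates used locally and the pieces summed, is genuinely different from the paper's argument and, as written, has a gap.

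The gap is in the handling of the commutator and absorption. You claim the normal operator estimate at a face $H$ yields an error of the form $C\rho_H^\delta$ times the full $H_\Qop^{s,(l,\gamma,l',\sfr,b)}$-norm, ``hence absorbable.'' On the collar neighborhood $\supp\chi_H$ of $H$, however, $\rho_H$ is not uniformly small, so such a term is not absorbable; and indeed \eqref{EqKAHi} retains a (lower-order) error, which is only absorbed later in Corollary~\ref{CorKAHigh} after restricting to a genuinely small parameter regime. Moreover, the commutator $[\Box_{g_\bhm}(\sigma),\chi_H]u$ is a first order Q-operator applied to $u$; it is bounded by a full $s$-order norm of $u$ (with arbitrary $H$-weight but the original weights at the other faces), not by a norm at the lower order $s_0<s$. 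To make that term part of the error in \eqref{EqKAHi} you would need to run it through the other patches' estimates, whose own commutators land you back near $H$ -- a circularity your sketch does not break. The paper avoids this by proving the estimate as a \emph{sequential bootstrap} (analogous to the proof of Proposition~\ref{PropKnfZ}): it starts from the global symbolic estimate \eqref{EqKSy} -- whose error term is already at lower $s$, $\iface$, $\sface$ orders with the \emph{full} weights $l,\gamma,l'$ -- then cuts this error term off near $\zface$ and applies Lemma~\ref{LemmaKz} to improve its $\zface$-weight, regains the lost differential order by another application of \eqref{EqKSy}, and then iterates at $\nface$ and finally at $\mface$. Cutting off the error term, rather than $u$ itself, means the commutator is supported away from the face being improved and contributes at arbitrary weight there; since the error term already has the correct $s$-order, no circularity arises. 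This interleaving of symbolic and normal-operator improvements, in that fixed order, is the missing idea.

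One smaller point: the observation that the $\tface$-model operator $\Box_\tface(1)$ is shared by the high-energy de~Sitter cone analysis and the low-energy Kerr transition analysis is correct and is indeed highlighted after~\eqref{EqKtfOp} and in \S\ref{SsKmf}. But this compatibility is already built into the separate proofs of Propositions~\ref{PropKnfZ} and \ref{PropKmf} via Lemma~\ref{LemmaKtf}; when combining those estimates for \eqref{EqKAHi}, no additional matching at the corner $\mface\cap\nface$ is needed, since the Q-Sobolev norms and the Q-calculus already encode the uniformity there.
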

\begin{proof}
  This follows analogously to the proof of Proposition~\ref{PropKnfZ} from successive improvements of the error term of the symbolic estimate of Proposition~\ref{PropKSy} by means of the normal operator estimates proved in~\S\S\ref{SsKnf}--\ref{SsKmf}; the function spaces are related via Proposition~\ref{PropQHRel}.
  
  Thus, we fix $s_0<s-1$, $\sfr_0<\sfr-1$, and $b_0<b-2$ subject to the conditions that $s_0>\half+C_1$, and $\sfr_0-l'>-\half$ at the incoming radial set, and start with the estimate~\eqref{EqKSy}. We weaken the error term in~\eqref{EqKSy} at $\zface$: let $\chi$ be a cutoff to a neighborhood of $\zface$ as in Proposition~\ref{PropQHRel}\eqref{ItQHRelzf}, then Lemma~\ref{LemmaKz} implies (omitting the coordinate change $\phi_\zface$ from the notation)
  \begin{align*}
    &\|\chi u\|_{\Hext_{\Qop,\sigma_0,\bhm}^{s_0,(l,\gamma,l',*,*)}(\Omega_\Qop)}  \\
    &\quad \leq C\la\sigma\ra^{l'-l}\bhm^{\frac32-l}\|\chi u\|_{\Hbext^{s_0,\gamma-l}(\hat\Omega)} \\
    &\quad \leq C\la\sigma\ra^{(l'-2)-(l-2)}\bhm^{\frac32-(l-2)} \\
    &\quad\qquad \times \bigl(\|\chi\bhm^{-2}\Box_{\hat g}(0)u\|_{\Hbext^{s_0-1,\gamma-l+2}(\hat\Omega)} + \|\bhm^{-2}[\Box_{\hat g}(0),\chi]u\|_{\Hbext^{s_0-1,\gamma-l+2}(\hat\Omega)}\bigr) \\
    &\quad \leq C\bigl( \|\chi\Box_{g_\bhm}(\sigma_0+i\sigma_1)u\|_{\Hext_{\Qop,\sigma_0,\bhm}^{s_0-1,(l-2,\gamma,l'-2,*,*)}(\Omega_\Qop)} + \|u\|_{\Hext_{\Qop,\sigma_0,\bhm}^{s_0+1,(l-1,\gamma,l',*,*)}}\bigr),
  \end{align*}
  where we used $\gamma-l\in(-\frac32,-\half)$ in the application of Lemma~\ref{LemmaKz}, and the fact that $\chi(\Box_{g_\bhm}(\sigma_0+i\sigma_1)-\bhm^{-2}\Box_{\hat g}(0))\in\DiffQ^{2,(1,0,2,*,*)}$ (from Proposition~\ref{PropKS}\eqref{ItKSzf}); also $\bhm^{-2}[\Box_{\hat g}(0),\chi]\in\DiffQ^{1,(-\infty,0,2,*,*)}$ is a fortiori of this class. Since on the other hand for Q-Sobolev norms of $(1-\chi)u$ (which is supported away from $\zface$) the weight at $\zface$ is arbitrary, we can now improve the symbolic estimate~\eqref{EqKSy} (as far as the $\zface$-weight is concerned) to
  \[
    \| u \|_{\Hext_{\Qop,\sigma_0,\bhm}^{s,(l,\gamma,l',\sfr,b)}(\Omega_\Qop)} \leq C\Bigl( \| \Box_{g_\bhm}(\sigma_0+i\sigma_1)u \|_{\Hext_{\Qop,\sigma_0,\bhm}^{s-1,(l-2,\gamma,l'-2,\sfr-1,b)}(\Omega_\Qop)} + \|u\|_{\Hext_{\Qop,\sigma_0,\bhm}^{s_0+1,(l-\delta,\gamma,l',\sfr_0,b_0)}(\Omega_\Qop)} \Bigr)
  \]
  for any $\delta\in(0,1]$. For any fixed compact interval of $\sigma_0$, this implies the uniform estimate~\eqref{EqKABdd}. (The weight $l-\delta$ can be reduced to any $l_0$ using an interpolation argument.) Note also that we can apply Proposition~\ref{PropKSy} to the error term here and thereby reduce its differential order back to $s_0$.

  We work on the resulting error term $\|u\|_{\Hext_{\Qop,\sigma_0,\bhm}^{s_0,(l-\delta,\gamma,l',\sfr_0,b_0)}(\Omega_\Qop)}$ further by inverting the $\nface$-normal operator, which is $\bhm^{-2}\Box_{\hat g}(\bhm\cdot)$ by Proposition~\ref{PropKS}\eqref{ItKSnf}. Thus, reusing the symbol $\chi$ to now denote a cutoff to a collar neighborhood of $\nface$ which is identically $1$ near $\nface$, we use Proposition~\ref{PropQHRel}\eqref{ItQHRelnflow} and Proposition~\ref{PropKnfZ} to estimate, for $\tilde\sigma_0=\bhm\sigma_0$ with $\tilde\sigma_0\in\pm[0,1]$,
  \begin{align*}
    &\|\chi u\|_{\Hext_{\Qop,\sigma_0,\bhm}^{s_0,(l-\delta,\gamma,l',\sfr_0,*)}(\Omega_\Qop)} \\
    &\quad \leq \bhm^{\frac32-l'}\|\chi u\|_{\Hext_{\scbtop,\tilde\sigma_0}^{s_0,\sfr_0-l',\gamma-l',l-\delta-l'}(\hat\Omega)} \\
    &\quad\leq C \bhm^{\frac32-(l'-2)}\bigl( \|\chi \bhm^{-2}\Box_{\hat g}(\tilde\sigma_0)u\|_{\Hext_{\scbtop,\tilde\sigma_0}^{s_0-1,\sfr_0-l'+1,\gamma-l'+2,l-\delta-l'}(\hat\Omega)} \\
    &\quad\hspace{6em} + \|\bhm^{-2}[\Box_{\hat g}(\tilde\sigma_0),\chi]u\|_{\Hext_{\scbtop,\tilde\sigma_0}^{s_0-1,\sfr_0-l'+1,\gamma-l'+2,l-\delta-l'}(\hat\Omega)}\bigr) \\
    &\quad\leq C\bigl( \bigl\| \chi\Box(\cdot+i\sigma_1)u \|_{\Hext_{\Qop,\sigma_0,\bhm}^{s_0-1,(l-2-\delta,\gamma,l'-2,\sfr_0-1,*)}(\Omega_\Qop)} + \|u\|_{\Hext_{\Qop,\sigma_0,\bhm}^{s_0+1,(l-\delta,\gamma,l'-\delta',\sfr_0,*)}(\Omega_\Qop)} \bigr).
  \end{align*}
  Here, we fix $\delta>0$ sufficiently small so that $\gamma-(l-\delta)\in(-\frac32,-\half)$; and $\delta'\in(0,1]$ can be chosen arbitrarily. A completely analogous argument, now using Proposition~\ref{PropQHRel}\eqref{ItQHRelnfsemi} and Propositions~\ref{PropKSyKerr} and~\ref{PropKnfNz}, gives the high energy estimate (for $\tilde\sigma_0\in\pm[1,\infty]$)
  \begin{align*}
    &\|\chi u\|_{\Hext_{\Qop,\sigma_0,\bhm}^{s_0,(l-\delta,\gamma,l',\sfr_0,b_0)}(\Omega_\Qop)} \\
    &\quad \leq C\bigl( \bigl\| \chi\Box_{g_\bhm}(\sigma_0+i\sigma_1)u \|_{\Hext_{\Qop,\sigma_0,\bhm}^{s_0-1,(l-2-\delta,\gamma,l'-2,\sfr_0-1,b_0)}(\Omega_\Qop)} + \|u\|_{\Hext_{\Qop,\sigma_0,\bhm}^{s_0+1,(l-\delta,\gamma,l'-\delta',\sfr_0+1,b_0+2)}(\Omega_\Qop)} \bigr).
  \end{align*}
  (The semiclassical order $b_0+2$ of the error term arises from the fact that $\Box(\cdot+i\sigma_1)$ differs, near $\sface$, from its $\nface$-normal operator by an operator of class $\DiffQ^{2,(*,*,1,2,2)}$.) On the other hand, $(1-\chi)u$ is supported away from $\nface$, and hence for its Q-Sobolev norms the order at $\nface$ is arbitrary. We have thus established the uniform estimate
  \begin{equation}
  \label{EqKAAlmost}
    \| u \|_{\Hext_{\Qop,\sigma_0,\bhm}^{s,(l,\gamma,l',\sfr,b)}(\Omega_\Qop)} \leq C\Bigl( \| \Box_{g_\bhm}(\sigma_0+i\sigma_1)u \|_{\Hext_{\Qop,\sigma_0,\bhm}^{s-1,(l-2,\gamma,l'-2,\sfr-1,b)}(\Omega_\Qop)} + \|u\|_{\Hext_{\Qop,\sigma_0,\bhm}^{s_0,(l-\delta,\gamma,l'-\delta',\sfr_0,b_0)}(\Omega_\Qop)} \Bigr),
  \end{equation}
  where we used the symbolic estimate~\eqref{EqKSy} again to reduce the differential and semiclassical order to $s_0$ and $b_0$ (using $b_0+2<b$).

  Finally, for $|\sigma_0|^{-1}\leq h_0$, we can apply Proposition~\ref{PropKmf} to the localization of the error term in~\eqref{EqKAAlmost} to a collar neighborhood of $\mface$ and to these high frequencies; using Proposition~\ref{PropQHRel}\eqref{ItQHRelmfsemi} to pass between Q- and semiclassical cone Sobolev spaces, and using that $l-\delta-\gamma\in(\frac12,\frac32)$ and $(\sfr_0-\gamma)-(l'-\delta')>-\half$, resp.\ $<-\half$ at the incoming, resp.\ outgoing radial set when $\delta,\delta'>0$ are sufficiently small, an application of Proposition~\ref{PropKmf} improves~\eqref{EqKAAlmost} to the desired estimate~\eqref{EqKAHi}.
\end{proof}

\begin{cor}[Absence of high energy resonances]
\label{CorKAHigh}
  There exists $\bhm_1>0$ so that for all $\bhm\in(0,\bhm_1]$, $\sigma_0\in\R$ with $|\sigma_0|\geq h_0^{-1}$, and $\sigma_1\in[-C_1,C_1]$, we have $\sigma_0+i\sigma_1\notin\QNM(\bhm)$.
\end{cor}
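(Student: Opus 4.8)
The plan is to deduce Corollary~\usref{CorKAHigh} from the uniform estimate~\eqref{EqKAHi} of Proposition~\usref{PropKA}, which is the main input, together with a suitable choice of the parameters entering that estimate. First I would fix orders $s>\tfrac32+C_1$, weights $l,\gamma,l',b$ with $\gamma-l\in(-\tfrac32,-\tfrac12)$, and a variable order $\sfr\in\CI(\ol{\TQ^*_\iface}X)$ which is non-increasing along the Hamiltonian flow of the principal symbol of $\Box(\cdot+i\sigma_1)$ and satisfies $\sfr-l'>\tfrac12$ at the incoming and $\sfr-l'<-\tfrac12$ at the outgoing radial set over $\iface\cap\nface$—such an order exists by the discussion in~\S\usref{SssKSyif}. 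Then~\eqref{EqKAHi} reads, for $|\sigma_0|\geq h_0^{-1}$ and $\sigma_1\in[-C_1,C_1]$,
\[
  \|u\|_{\Hext_{\Qop,\sigma_0,\bhm}^{s,(l,\gamma,l',\sfr,b)}(\Omega_\Qop)} \leq C\Bigl( \|\Box_{g_\bhm}(\sigma_0+i\sigma_1)u\|_{\Hext_{\Qop,\sigma_0,\bhm}^{s-1,(l-2,\gamma,l'-2,\sfr-1,b)}(\Omega_\Qop)} + \|u\|_{\Hext_{\Qop,\sigma_0,\bhm}^{s_0,(l_0,\gamma_0,l'_0,\sfr_0,b_0)}(\Omega_\Qop)}\Bigr),
\]
with $C$ independent of $\sigma_0$, $\sigma_1$, and $\bhm\in(0,\bhm_0]$, and with all the lower orders $s_0,l_0,\gamma_0,l'_0,\sfr_0,b_0$ strictly below the corresponding main orders.

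Second, I would use that dropping to strictly lower orders gains a small power of the total boundary defining function of $\zface\cup\mface\cup\nface\cup\iface\cup\sface$, hence a small constant in the relevant asymptotic regime. More precisely: since $\gamma_0<\gamma$, $l_0<l$, $l'_0<l'$, etc., and since in the region $r\gtrsim\bhm$ (which is where $\Omega_\Qop$ lives) each of $\rho_\zface$, $\rho_\nface$ is comparable to $r$ and $\rho_\sface\sim\rho_\iface\sim h/(h+r)$ with $h=|\sigma_0|^{-1}\leq h_0$, the inclusion $\Hext_{\Qop,\sigma_0,\bhm}^{s,(l,\gamma,l',\sfr,b)}(\Omega_\Qop)\hookrightarrow\Hext_{\Qop,\sigma_0,\bhm}^{s_0,(l_0,\gamma_0,l'_0,\sfr_0,b_0)}(\Omega_\Qop)$ has operator norm bounded by $C\epsilon(h_0)$ where $\epsilon(h_0)\to 0$ as $h_0\to 0$ (this uses that $s_0<s$ makes even the differential loss harmless via interpolation, and that the weight losses are uniform in the remaining parameters). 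Then for a resonant state, i.e.\ $u\in\Hext^s(\Omega_\bhm)$ with $\Box_{g_\bhm}(\sigma_0+i\sigma_1)u=0$, the estimate becomes $\|u\|\leq C\epsilon(h_0)\|u\|$, forcing $u\equiv 0$ once $h_0>0$ is fixed small enough (which is a condition on $h_0$ only, hence on $\bhm_1$ implicitly via the earlier constraint that $h_0$ be small enough for Proposition~\usref{PropKmf}). A small subtlety is that a resonant state $u_0\in\CI(\ol{\Omega_\bhm})$ need not a priori lie in the extendible Q-Sobolev space for the chosen variable order $\sfr$; but the threshold conditions at $\cR_{\iface_+,\pm}$ (one-sided, with $\sfr-l'>-\tfrac12$ at the source) are exactly those under which the radial point propagation of~\S\usref{SssKSyif} shows that smoothness of $u_0$ up to $r=\bhm$ implies membership in the space, so this membership is automatic.

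Third, and this is where I expect the only real care is needed, one must confirm that the finitely many hypotheses of Proposition~\usref{PropKA} (and of Propositions~\usref{PropKSy}, \usref{PropKmf}, \usref{PropKnfZ}, \usref{PropKnfNz}, \usref{PropKSyKerr}) can all be met simultaneously by a single choice of $(s,l,\gamma,l',b,\sfr)$: $s>\tfrac32+C_1$ is compatible with the constraint in Proposition~\usref{PropKSy}; $\gamma-l\in(-\tfrac32,-\tfrac12)$ is the range required by Lemma~\usref{LemmaKz} and by the $\tface$-model estimate Lemma~\usref{LemmaKtf} (via $l\in(\tfrac12,\tfrac32)$ after the appropriate rescaling); the radial-set conditions on $\sfr-l'$ at $\iface\cap\nface$ can be satisfied by a variable order because the source/sink structure~\eqref{EqKSyRadif} permits $\sfr$ to decrease from above $-\tfrac12+l'$ at $\cR_{\iface_+,-}$ to below $-\tfrac12+l'$ at $\cR_{\iface_+,+}$; and the $\sface$-order $b$ is free. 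Once this bookkeeping is done, the conclusion $\sigma_0+i\sigma_1\notin\QNM(\bhm)$ for $|\sigma_0|\geq h_0^{-1}$, $\sigma_1\in[-C_1,C_1]$, and $\bhm\in(0,\bhm_1]$ follows immediately. The main obstacle is thus not any new estimate—everything is supplied by Proposition~\usref{PropKA}—but rather verifying the compatibility of the order/weight conditions and the smallness of the embedding constant as $h_0\searrow 0$; in particular one should double-check that the weight improvement is uniform across all five boundary hypersurfaces and all values of $\sigma_1$, so that a single $h_0$ (hence a single $\bhm_1$) works.
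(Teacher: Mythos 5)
Your overall strategy is the right one (derive the corollary by absorbing the error term in the estimate~\eqref{EqKAHi} of Proposition~\usref{PropKA}), and your bookkeeping of the order/weight conditions in the first and third paragraphs is essentially correct, but the mechanism you propose for the absorption step in the second paragraph is not the one that works, and as written it has a genuine gap.

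You claim that the inclusion $\Hext_{\Qop,\sigma_0,\bhm}^{s,(l,\gamma,l',\sfr,b)}(\Omega_\Qop)\hookrightarrow\Hext_{\Qop,\sigma_0,\bhm}^{s_0,(l_0,\gamma_0,l'_0,\sfr_0,b_0)}(\Omega_\Qop)$ has operator norm $C\epsilon(h_0)$ with $\epsilon(h_0)\to 0$ as $h_0\to 0$, and then conclude by fixing $h_0$ small. There are two problems with this. First, $h_0$ is already fixed: it is the constant supplied by Proposition~\usref{PropKmf} and then used in Proposition~\usref{PropKA}, and the constant $C$ in~\eqref{EqKAHi} depends on that choice of $h_0$. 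If you shrink $h_0$ to make the embedding factor small, you have to re-derive Proposition~\usref{PropKA} and its constant, and there is no quantitative control to guarantee that the product stays below $1$; this makes the argument circular. Second, and more to the point, the statement you are trying to prove asks you to produce a number $\bhm_1>0$ (for the given, already-fixed $h_0$), not a smaller $h_0$; your mechanism does not naturally produce one. (A minor further imprecision: $r$ is a joint defining function of $\zface\cup\nface$ and $h/(h+r)$ a joint defining function of $\iface\cup\sface$, so it is their \emph{products} $\rho_\zface\rho_\nface$ and $\rho_\iface\rho_\sface$ that are comparable to $r$ and $h/(h+r)$ respectively, not each factor separately.)

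The correct mechanism uses smallness in $\bhm$, not in $h_0$: since $\bhm\in\DiffQ^{0,(-1,-1,-1,-1,0)}(X)$, the function $\bhm$ is a joint defining function of $\zface\cup\mface\cup\nface\cup\iface$, so
\[
  \|u\|_{\Hext_{\Qop,\sigma_0,\bhm}^{s_0,(l_0,\gamma_0,l'_0,\sfr_0,b_0)}(\Omega_\Qop)} = \bhm^\delta \|u\|_{\Hext_{\Qop,\sigma_0,\bhm}^{s_0,(l_0+\delta,\gamma_0+\delta,l'_0+\delta,\sfr_0+\delta,b_0)}(\Omega_\Qop)} \leq C'\bhm^\delta\|u\|_{\Hext_{\Qop,\sigma_0,\bhm}^{s,(l,\gamma,l',\sfr,b)}(\Omega_\Qop)}
\]
once $\delta>0$ is chosen small enough that $l_0+\delta<l$, $\gamma_0+\delta<\gamma$, $l'_0+\delta<l'$, $\sfr_0+\delta<\sfr$ (and $b_0<b$, with no $\delta$-shift needed at $\sface$). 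Crucially the constant $C$ in~\eqref{EqKAHi} is uniform in $\bhm$, so choosing $\bhm_1$ with $C C'\bhm_1^\delta<1$ closes the argument for all $\bhm\in(0,\bhm_1]$ without touching $h_0$. This is exactly what you need, and it is what the paper does.
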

\begin{proof}
  For $|\sigma_0|\geq h_0^{-1}$, the final, error, term in the estimate~\eqref{EqKAHi} is \emph{small} compared to the left hand side, since
  \[
    \|u\|_{\Hext_{\Qop,\sigma_0,\bhm}^{s_0,(l_0,\gamma_0,l'_0,\sfr_0,b_0)}(\Omega_\Qop)} = \bhm^\delta \|u\|_{\Hext_{\Qop,\sigma_0,\bhm}^{s_0,(l_0+\delta,\gamma_0+\delta,l'_0+\delta,\sfr_0+\delta,b_0)}(\Omega_\Qop)},
  \]
  and $l_0+\delta<l$, $\gamma_0+\delta<\gamma$, $l'_0+\delta<l'$, $\sfr_0+\delta<\sfr$, and $b_0<b$ for sufficiently small $\delta>0$. Thus, when $\bhm_1>0$ is sufficiently small, then for $\bhm\in(0,\bhm_1]$, the estimate~\eqref{EqKAHi} implies
  \[
    \|u\|_{\Hext_{\Qop,\sigma_0,\bhm}^{s,(l,\gamma,l',\sfr,b)}(\Omega_\Qop)} \leq C\|\Box_{g_\bhm}(\sigma_0+i\sigma_1)u\|_{\Hext_{\Qop,\sigma_0,\bhm}^{s-1,(l-2,\gamma,l'-2,\sfr-1,b)}(\Omega_\Qop)}.
  \]
  This implies the claim.
\end{proof}

\subsection{Uniform control of bounded frequencies}
\label{SsKBd}

Having proved that all resonances $\sigma\in\QNM(\bhm)$ with $\sigma_1=\Im\sigma\in[-C_1,C_1]$ satisfy $|\Re\sigma_0|<C_0$ for $C_0=h_0^{-1}$, we may now work with the holomorphic family
\[
  B := [-C_0,C_0]+i[-C_1,C_1] \ni \sigma \mapsto \bigl( (0,\bhm_0]\ni\bhm \mapsto \Box_{g_\bhm}(\sigma) \bigr) \in \Diffq^{2,2,0}(\ol{\Omega_\qop})
\]
of q-differential operators. For this family, we have the uniform estimate~\eqref{EqKABdd}; for its $\mface_\sigma$-normal operator $\Box_{g_\dS}(\sigma)$, we moreover have uniform estimates
\begin{equation}
\label{EqKBddS}
  \|u\|_{\Hbext^{s,l}(\dot\Omega)} \leq C\bigl( \|\Box_{g_\dS}(\sigma)u\|_{\Hbext^{s-1,l-2}(\dot\Omega)} + \|u\|_{\Hbext^{s_0,l_0}(\dot\Omega)} \bigr)
\end{equation}
for any fixed $s_0<s$, $l_0<l$ when $s>\half+C_1$, $l\in(\frac12,\frac32)$. (This follows under the stronger requirement $s>\frac32+C_1$ from the relationship between q- and b-Sobolev spaces, see Proposition~\ref{PropqHRel}\eqref{ItqHRelmf}; or it follows directly by combining elliptic b-theory in $r<1$, and radial point and propagation estimates in $r\geq 1$.)

For the following result, we recall that $\dot\beta\colon\dot X\to X$ is the blow-down map (used before in Lemma~\ref{LemmaKBundleIso}), and we recall $\Omega_\dS:=B(0,2)\subset X$ from Definition~\ref{DefKFamily}.

\begin{lemma}[Properties of the spectral family on de~Sitter space]
\label{LemmaKBddS}
  Let $s>\half+C_1$ and $l\in(\frac12,\frac32)$. Then for all $\sigma\in B$, the operator\footnote{Note that $\Box_{g_\dS}(\sigma)-\Box_{g_\dS}(0)\in r^{-2}\Diffb^1(\ol{\dot\Omega})$, and therefore the domain in~\eqref{EqKBddSOp} can be defined equivalently using $\Box_{g_\dS}(\sigma)$ in place of $\Box_{g_\dS}(0)$.}
  \begin{equation}
  \label{EqKBddSOp}
    \Box_{g_\dS}(\sigma) \colon \bigl\{ u\in\Hbext^{s,l}(\dot\Omega) \colon \Box_{g_\dS}(0)u\in\Hbext^{s-1,l-2}(\dot\Omega) \bigr\} \to \Hbext^{s-1,l-2}(\dot\Omega)
  \end{equation}
  is Fredholm and has index $0$. Moreover, if $u$ lies in its kernel, then $u=\dot\beta^*v$ where $v\in\CI(\ol{\Omega_\dS})$.
\end{lemma}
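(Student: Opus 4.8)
The plan is to establish the three assertions of Lemma~\ref{LemmaKBddS}---Fredholmness, index $0$, and the regularity/structure of kernel elements---by combining the uniform estimates already in hand with standard elliptic/b-theory at the cone point $r=0$.

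\textit{Fredholmness.} First I would observe that $\Box_{g_\dS}(\sigma)$ fits into the b-calculus framework near $r=0$: by Lemma~\ref{LemmaKMetric} and~\eqref{EqKdS}, $r^2\Box_{g_\dS}(\sigma)$ is an elliptic b-differential operator near $r=0$ whose b-normal operator at $r=0$ is (a rescaling of) $(r D_r)^2 - i r D_r + \Delta_\slg$ modulo lower-order terms, with indicial roots $\ell$ and $-\ell-1$ for the degree-$\ell$ spherical harmonic; since $l\in(\frac12,\frac32)$ avoids all of these (i.e.\ $-\ell-1 < l-\frac32 < l+\frac32 \le \ell + \frac32$ with equality never hit), the weight $l$ is not indicial, so $\Box_{g_\dS}(0)$---and hence $\Box_{g_\dS}(\sigma)$, which differs by an element of $r^{-2}\Diffb^1(\ol{\dot\Omega})$---is b-Fredholm near the cone point. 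Combined with the symbolic estimates in $r\ge 1$ (elliptic regularity, radial point estimates at the conormal bundles of the horizons, real principal type propagation, and an energy estimate near the artificial boundary $r=2$---all of which are the content of~\eqref{EqKBddS}), the standard argument gives the two-sided Fredholm estimate
\[
  \|u\|_{\Hbext^{s,l}(\dot\Omega)} \le C\bigl(\|\Box_{g_\dS}(\sigma)u\|_{\Hbext^{s-1,l-2}(\dot\Omega)} + \|u\|_{\Hbext^{s_0,l_0}(\dot\Omega)}\bigr)
\]
together with the analogous estimate for the adjoint $\Box_{g_\dS}(\bar\sigma)$ acting on the dual spaces. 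Since $\Hbext^{s,l}(\dot\Omega)\hookrightarrow\Hbext^{s_0,l_0}(\dot\Omega)$ is compact for $s_0<s$, $l_0<l$, these estimates imply that~\eqref{EqKBddSOp} has finite-dimensional kernel and closed range of finite codimension, i.e.\ it is Fredholm.

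\textit{Index $0$.} I would argue by a standard homotopy/perturbation argument in the spectral parameter. The map $\sigma\mapsto\Box_{g_\dS}(\sigma)$ is a holomorphic (polynomial) family of operators~\eqref{EqKBddSOp} with the fixed domain being independent of $\sigma$ in the sense noted in the footnote, and the Fredholm estimate above is uniform for $\sigma\in B$; hence the index is constant in $\sigma$. It then suffices to compute it at one value, which I would take to be a large purely imaginary $\sigma=i T$ (or, more simply, note that $\Box_{g_\dS}(\sigma)^* = \Box_{g_\dS}(\bar\sigma)$ with respect to the $L^2(\dot\Omega, |\dd (g_\dS)|_X|)$ pairing, which intertwines the space in~\eqref{EqKBddSOp} with the corresponding dual space, so $\ind\Box_{g_\dS}(\sigma) = -\ind\Box_{g_\dS}(\bar\sigma)$; combined with continuity of the index in $\sigma$ and connectedness of $B$, we get $\ind\Box_{g_\dS}(\sigma) = \ind\Box_{g_\dS}(\Re\sigma)$ which must equal its own negative, hence $0$). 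Alternatively, one can deform $\Box_{g_\dS}(\sigma)$ through the elliptic operator $\Box_{g_\dS}(0) + i s$ for $s\in[0,T]$ and large $T$, which is invertible, then deform the $\Im\sigma$ component to $0$; either way the index is $0$.

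\textit{Regularity and structure of kernel elements.} Suppose $u\in\Hbext^{s,l}(\dot\Omega)$ with $\Box_{g_\dS}(\sigma)u=0$. Smoothness of $u$ on $\ol{\dot\Omega}$ away from $\pa\dot X$ follows from standard real principal type propagation and radial point estimates at the horizons (here the threshold condition $s>\frac12+C_1$ enters), plus elliptic regularity in $1<r<2$ and near the Cauchy surface $r=2$; so $u\in\CI$ on $r>0$. Near $r=0$, I would run the usual b-iteration: writing $r^2\Box_{g_\dS}(\sigma) = N + r E$ where $N = (r D_r)^2 - i r D_r + \Delta_\slg$ is the b-normal operator and $E\in\Diffb^2$, the equation $N u = -rEu$ together with invertibility of $N$ on the line $\Im\zeta = -(l-\frac32)$ (no indicial roots there) and the absence of indicial roots in the strip between $l-\frac32$ and the next integer upgrades the b-regularity and weight of $u$ step by step; since there are no indicial roots with real part in $[l,\infty)\cap(\ell+\ZZ)$ that obstruct, and the only indicial roots with $\Re\zeta\ge l-\frac32$ are the nonnegative integers $\ell\ge 0$, the expansion of $u$ at $r=0$ has only nonnegative integer powers of $r$ and (by the structure of $N$) is a smooth function of $x = r\omega\in\R^3$. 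Concretely, $u = \dot\beta^* v$ with $v$ conormal on $\Omega_\dS$ with an asymptotic expansion in nonnegative integer powers of $r$ times spherical harmonics, which combine into a Taylor series in $x$; thus $v\in\CI(\ol{\Omega_\dS})$.

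\textit{Main obstacle.} The two genuinely substantive points are (i) verifying carefully that the weight window $l\in(\frac12,\frac32)$ avoids \emph{all} indicial roots of the cone-point normal operator (this is the same bookkeeping as in Lemma~\ref{LemmaKz} and Lemma~\ref{LemmaKtf}, where the indicial roots are $\ell+1$ and $-\ell$, shifted here by the $r^2$ rescaling), and (ii) the b-iteration showing that the polyhomogeneous expansion at $r=0$ actually assembles into a \emph{smooth} function on the ball $\Omega_\dS$ rather than merely a conormal one---this requires knowing that the coefficients of $r^\ell$ are genuinely degree-$\ell$ spherical harmonics (which follows from the leading part of $N$ being the Euclidean Laplacian) and that no logarithmic terms appear (which follows from the integer indicial roots being simple in the relevant range, or more precisely from the classical fact that harmonic-type expansions at the center of a ball are Taylor series); this is essentially the argument already used in \cite[\S2.1]{HintzXieSdS} and I would cite it rather than reprove it in detail.
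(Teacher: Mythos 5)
Your proof of Fredholmness follows the same outline as the paper: symbolic estimates in $r\gtrsim 1$ combined with elliptic b-theory at the cone point, with the weight window $l\in(\tfrac12,\tfrac32)$ avoiding the indicial roots, plus the corresponding adjoint estimate on the dual (supported) spaces. That is fine. Your treatment of the kernel elements is also acceptable; the paper even mentions the spherical harmonics/regular-singular ODE route you sketch as an alternative to its own (slightly slicker) interpolation argument.

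The genuine gap is in your index-$0$ argument. The duality route is flawed: the adjoint of $\Box_{g_\dS}(\sigma)\colon\Hbext^{s,l}(\dot\Omega)\to\Hbext^{s-1,l-2}(\dot\Omega)$ is $\Box_{g_\dS}(\bar\sigma)$ acting on the \emph{supported} Sobolev spaces $\Hbsupp^{-s+1,-l+2}(\ol{\dot\Omega})\to\Hbsupp^{-s,-l}(\ol{\dot\Omega})$, not on $\Hbext^{s,l}$. The relation $\ind T^* = -\ind T$ therefore compares the index of $\Box_{g_\dS}(\sigma)$ on $\Hbext^{s,l}$ with the index of $\Box_{g_\dS}(\bar\sigma)$ on $\Hbsupp^{-s+1,-l+2}$; these are genuinely different Fredholm problems (the differential orders are $s$ versus $-s+1$, and, more importantly, the direction of propagation in the radial point and real-principal-type estimates is reversed). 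You cannot conclude $\ind\Box_{g_\dS}(\sigma)=-\ind\Box_{g_\dS}(\bar\sigma)$ when both are taken on $\Hbext^{s,l}$ — constancy of the index in $\sigma$ gives at most $\ind\Box_{g_\dS}(\sigma)=\ind\Box_{g_\dS}(\bar\sigma)$, which together with the true adjoint relation yields nothing. Your proposed deformation to $\sigma=iT$ with $T$ large (or through $\Box_{g_\dS}(0)+is$) also has an unaddressed issue: the Fredholm estimates of the paper are only established for $\sigma$ in the fixed strip $|\Im\sigma|\leq C_1$, so you would need to prove separately that Fredholmness persists along the homotopy, which is not for free.

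The correct mechanism — and the one the paper uses — is to send $|\Re\sigma|\to\infty$ within $B$: the high energy estimates of Proposition~\ref{PropKmf} (and their adjoint counterparts on supported spaces) show that $\Box_{g_\dS}(\sigma)$ is actually invertible for $|\Re\sigma|$ large, $\Im\sigma\in[-C_1,C_1]$, and hence has index $0$ there; constancy of the Fredholm index in $\sigma$ then propagates this to all of $B$. This is why the paper proves the semiclassical cone estimates of Proposition~\ref{PropKmf} before Lemma~\ref{LemmaKBddS}: they serve double duty, both eliminating high frequency QNMs and pinning down the index.
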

\begin{proof}
  We complement~\eqref{EqKBddS} by an analogous estimate for the adjoint operator on the dual function spaces,
  \[
    \|v\|_{\Hbsupp^{-s+1,-l+2}\bigl(\ol{\dot\Omega}\bigr)} \leq C\bigl( \|\Box_{g_\dS}(\sigma)^*v\|_{\Hbsupp^{-s,-l}\bigl(\ol{\dot\Omega}\bigr)} + \|v\|_{\Hbsupp^{s_1,l_1}\bigl(\ol{\dot\Omega}\bigr)}\bigr)
  \]
  for any $s_1<-s+1$, $l_1<-l+2$. This is proved as in \cite[\S4]{VasyMicroKerrdS} (see also \cite{ZworskiRevisitVasy}) using radial point and propagation estimates which propagate in the opposite direction compared to the proof of~\eqref{EqKBddS}, with the caveat that at the conic singularity $r=0$, one uses elliptic b-theory and $-l+2\in(\half,\frac32)$. Together with~\eqref{EqKBddS}, this implies that~\eqref{EqKBddSOp} is Fredholm.

  The high energy estimates of Proposition~\ref{PropKmf} imply that~\eqref{EqKBddSOp} is injective when $|\Re\sigma|$ is sufficiently large. One can similarly prove adjoint versions of the high energy estimates of Proposition~\ref{PropKmf}, which imply the triviality of the kernel of the adjoint $\Box_{g_\dS}(\sigma)^*$ on $\Hbsupp^{-s+1,-l+2}(\ol{\dot\Omega})$. Thus, the operator~\eqref{EqKBddSOp} is invertible for large $|\Re\sigma|$, and therefore Fredholm of index $0$ for all $\sigma\in B$ since the Fredholm index is constant.

  If $u\in\ker\Box_{g_\dS}(\sigma)$, then $u\in\Hbext^{\infty,l}(\dot\Omega)$ by elliptic regularity and radial point and propagation estimates. But interpolating between the maps $\bar H^0_\bop(\dot\Omega)\hra L^2(\Omega_\dS)$ and $\bar H^{1,1}_\bop(\dot\Omega)\hra\Hext^1(\Omega_\dS)$ implies that $u=\beta^*v$ where $v\in\Hext^l(\Omega_\dS)\cap\CI(\ol{\Omega_\dS}\setminus\{0\})$. Therefore $\Box_{g_\dS}(\sigma)v$, as an extendible distribution on $\Omega_\dS$, has support in $\{0\}$ but Sobolev regularity $\geq l-2$ (since $\Box_{g_\dS}(\sigma)\in\Diff^2(\Omega_\dS)$). Since $l-2>-\frac32$, we must have $\Box_{g_\dS}(\sigma)v=0$, and therefore $v$ is smooth near $0$ by elliptic regularity. (One can also prove this directly by expanding $u$ near $r=0$ into spherical harmonics and solving the resulting family of regular-singular ODEs at $r=0$.)
\end{proof}

Similarly, we can complement~\eqref{EqKABdd} by a uniform adjoint estimate
\[
  \|u\|_{\Hsupp_{\qop,\bhm}^{-s+1,(-l+2,-\gamma)}(\ol{\Omega_\qop})} \leq C\bigl( \|\Box_{g_\bhm}(\sigma_0+i\sigma_1)^*u\|_{\Hsupp_{\qop,\bhm}^{-s,(-l,-\gamma)}(\ol{\Omega_\qop})} + \|u\|_{\Hsupp_{\qop,\bhm}^{s_0,(l_0,-\gamma)}(\ol{\Omega_\qop})}
\]
for $s_0<-s+1$, $l_0<-l+2$. For any $\bhm>0$, the two estimates together imply that
\begin{equation}
\label{EqKBdMap}
  \Box_{g_\bhm}(\sigma) \colon \cH_\bhm^s := \bigl\{ u\in\Hext^s(\ol{\Omega_\bhm}) \colon \Box_{g_\bhm}(0)u\in\Hext^{s-1}(\ol{\Omega_\bhm}) \bigr\} \to \Hext^{s-1}(\ol{\Omega_\bhm})
\end{equation}
is Fredholm; and it is invertible for $\sigma=\sigma_0+i\sigma_1$, $\sigma_1\in[-C_1,C_1]$, provided $|\Re\sigma_0|$ is sufficiently large, as follows from the absence of a kernel in this semiclassical regime (proved in Corollary~\ref{CorKAHigh}) and of a cokernel (proved by means of an adjoint version of the estimate~\eqref{EqKAHi}). Thus, the map~\eqref{EqKBdMap} has Fredholm index $0$ and a meromorphic inverse.

The following two complementary results describe KdS QNMs for small masses as perturbations of dS QNMs.

\begin{prop}[Absence of QNMs away from de~Sitter QNMs]
\label{PropKBdNo}
  Suppose $\sigma_*\in B$ is such that $\CI(\ol{\Omega_\dS})\cap\ker\Box_{g_\dS}(\sigma_*)$ is trivial. Then there exists $\eps>0$ so that for all $\bhm\in(0,\eps]$ and $\sigma\in R$ with $|\sigma-\sigma_*|\leq\eps$, we have $\sigma\notin\QNM(\bhm)$.
\end{prop}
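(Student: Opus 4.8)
The plan is to run a contradiction/compactness argument, combining the uniform q-estimates of Proposition~\ref{PropKA} (more precisely~\eqref{EqKABdd}) with the triviality of $\ker\Box_{g_\dS}(\sigma_*)$ to upgrade these to an invertibility statement in a neighborhood of $(\sigma_*,\bhm=0)$. First I would note that by the hypothesis together with Lemma~\ref{LemmaKBddS}, $\Box_{g_\dS}(\sigma_*)$ acting as in~\eqref{EqKBddSOp} has trivial kernel, hence (being Fredholm of index~$0$) is invertible; by continuity of the inverse and the Fredholm property, the same holds for all $\sigma$ with $|\sigma-\sigma_*|\leq\eps_0$ for some $\eps_0>0$, with a uniform bound $\|u\|_{\Hbext^{s,l}(\dot\Omega)}\leq C\|\Box_{g_\dS}(\sigma)u\|_{\Hbext^{s-1,l-2}(\dot\Omega)}$. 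I fix $s>\frac32+C_1$ and $l\in(\frac12,\frac32)$ so that the q-estimate~\eqref{EqKABdd}, the $\mface_\sigma$-normal operator estimate~\eqref{EqKBddS}, and the de~Sitter invertibility are all available; I also pick a suitable $\gamma$ with $\gamma-l\in(-\frac32,-\frac12)$ as required in~\S\ref{SsKnf}.

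The main step is to improve the error term $\|u\|_{\Hext_{\qop,\bhm}^{s_0,(l_0,\gamma)}(\Omega_\qop)}$ in~\eqref{EqKABdd} to something that is small as $\bhm\searrow 0$. I would do this exactly as in the proof of Proposition~\ref{PropKA}: localize the error term near $\mface_\qop$ via a cutoff $\chi$, use Proposition~\ref{PropqHRel}\eqref{ItqHRelmf} to pass to $\Hbext^{s_0,\gamma-l+l}$-norms on $\dot\Omega$ of $\phi_{\mface_\qop}^*(\chi u)$, apply the uniform de~Sitter bound just established (noting $\chi(\Box_{g_\bhm}(\sigma)-\Box_{g_\dS}(\sigma))$ has coefficients vanishing at $\mface_\qop$, i.e.\ lies in a class with one extra order of decay at $\zface_\qop$ relative to the leading part), and thereby bound the localized error by $C\|\Box_{g_\bhm}(\sigma)u\|_{\Hext_{\qop,\bhm}^{s-1,(l-2,\gamma)}(\Omega_\qop)}+C\bhm^\delta\|u\|_{\Hext_{\qop,\bhm}^{s,(l,\gamma)}(\Omega_\qop)}$ for some $\delta>0$; away from $\mface_\qop$, the q-weight at $\zface_\qop$ is unconstrained and I instead invoke the $\zface_\qop$-normal operator estimate (Lemma~\ref{LemmaKz}), exactly as in Proposition~\ref{PropKA}, to gain a weight at $\zface_\qop$ and hence a factor $\bhm^\delta$. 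The net result is a uniform estimate $\|u\|_{\Hext_{\qop,\bhm}^{s,(l,\gamma)}(\Omega_\qop)}\leq C\|\Box_{g_\bhm}(\sigma)u\|_{\Hext_{\qop,\bhm}^{s-1,(l-2,\gamma)}(\Omega_\qop)}$ for all $\bhm\in(0,\eps]$ and $|\sigma-\sigma_*|\leq\eps$, with $\eps>0$ chosen small enough that $C\bhm^\delta\leq\frac12$.

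For fixed $\bhm\in(0,\eps]$ this estimate, together with the adjoint estimate already discussed before~\eqref{EqKBdMap}, shows that~\eqref{EqKBdMap} is injective for all $\sigma$ with $|\sigma-\sigma_*|\leq\eps$; since~\eqref{EqKBdMap} is Fredholm of index~$0$, injectivity is equivalent to invertibility, so $\Box_{g_\bhm}(\sigma)^{-1}$ is holomorphic there and $\sigma\notin\QNM(\bhm)$. (Here I use that any resonant state or generalized resonant state at such $\sigma$ would, by elliptic regularity together with the radial point and propagation estimates in $\Omega_\qop$ as in Lemma~\ref{LemmaKz} and~\S\ref{SsKSy}, be smooth and hence lie in $\cH_\bhm^s$, so membership in $\QNM(\bhm)$ is detected by the nontriviality of the kernel of~\eqref{EqKBdMap}; for generalized resonant states one applies the same argument to the nilpotent part, using~\eqref{EqKResMult}.) The main obstacle is making the error-term improvement uniform down to $\bhm=0$ while simultaneously retaining the de~Sitter invertibility at all $\sigma$ near $\sigma_*$: one must check carefully that the operator-theoretic classes appearing (the defect $\chi(\Box_{g_\bhm}(\sigma)-\Box_{g_\dS}(\sigma))$, the commutator $[\Box_{g_\dS}(0),\chi]$) indeed carry the extra decay at $\zface_\qop$ claimed, and that the weight $l$ may be taken in the narrow window $(\frac12,\frac32)$ forced by the cone-point analysis in Lemma~\ref{LemmaKz} while still meeting the radial-point threshold $s>\frac32+C_1$; this is precisely the bookkeeping already carried out in the proof of Proposition~\ref{PropKA}, so I would cite it rather than redo it.
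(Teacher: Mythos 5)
Your proposal is correct and takes essentially the same route as the paper: start from the uniform q-estimate~\eqref{EqKABdd}, use the hypothesis (together with Lemma~\ref{LemmaKBddS} and Fredholm index~$0$) to obtain a uniform inverse estimate for $\Box_{g_\dS}(\sigma)$ for $\sigma$ near $\sigma_*$, apply this to the error term in~\eqref{EqKABdd} near $\mface_\qop$ via Proposition~\ref{PropqHRel}\eqref{ItqHRelmf} to gain a power of $\rho_{\mface_\qop}$, absorb the resulting $O(\bhm^\delta)$ error into the left-hand side, and conclude injectivity (hence, by index~$0$, invertibility) of $\Box_{g_\bhm}(\sigma)$. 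Two small slips worth flagging: the b-weight after pulling back via Proposition~\ref{PropqHRel}\eqref{ItqHRelmf} should be $l_0-\gamma$, not the quantity you wrote (which simplifies to $\gamma$); and the defect $\chi(\Box_{g_\bhm}(\sigma)-\Box_{g_\dS}(\sigma))$ gains one order of decay at $\mface_\qop$, not $\zface_\qop$ (you say the correct thing immediately before, so this is just a lapse). Also, your extra invocation of Lemma~\ref{LemmaKz} is harmless but redundant: the error term of~\eqref{EqKABdd} already carries the improved weight $l_0<l$ at $\zface_\qop$ precisely because the Kerr zero-energy normal operator estimate is baked into Proposition~\ref{PropKA}, so only the $\mface_\qop$-weight still needs to be decreased here.
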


By Lemma~\ref{LemmaKdSQNM}, the assumption on $\sigma_*$ is equivalent to $\sigma_*\notin -i\N_0$.

\begin{proof}[Proof of Proposition~\usref{PropKBdNo}]
  In view of the uniform Fredholm estimates for the spectral family of $\Box_{g_\dS}$, the assumption is satisfied for an open set of $\sigma_*$ (see \cite[\S2.7]{VasyMicroKerrdS} for the relevant functional analysis). Thus, when $\eps>0$ is sufficiently small, then for $\sigma\in B$ with $|\sigma-\sigma_*|\leq\eps$, we have
  \[
    \|u\|_{\Hbext^{s,l}(\dot\Omega)} \leq C\|\Box_{g_\dS}(\sigma)u\|_{\Hbext^{s-1,l-2}(\dot\Omega)}
  \]
  for any fixed $s>\half+C_1$ and $l\in(\frac12,\frac32)$. Using this estimate, with $\half+C_1<s_0<s-1$ and $l_0-\gamma$ in place of $s,l$, we can improve the error term in~\eqref{EqKABdd} (applied with $s>\frac32+C_1$) to $\|u\|_{\Hext_{\qop,\bhm}^{s_0+1,(l_0,\gamma-1)}(\Omega_\qop)}$ (provided $l_0<l$ is sufficiently close to $l$ so that $l_0-\gamma\in(\frac12,\frac32)$ still) by exploiting the relationship between q- and b-Sobolev spaces near $\mface_\qop$, see Proposition~\ref{PropqHRel}\eqref{ItqHRelmf}. But this new error term is now small when $\bhm>0$ is sufficiently small, and can thus be absorbed into the left hand side of~\eqref{EqKABdd}. The resulting estimate, for $\bhm\leq\eps$, is
  \[
    \|u\|_{\Hext_{\qop,\bhm}^{s,(l,\gamma)}(\Omega_\qop)} \leq C \|\Box_{g_\bhm}(\sigma)u\|_{\Hext_{\qop,\bhm}^{s-1,(l-2,\gamma)}(\Omega_\qop)},\qquad |\sigma-\sigma_*|\leq\eps.
  \]
  This implies the triviality of $\ker\Box_{g_\bhm}(\sigma)$ and finishes the proof.
\end{proof}

\begin{prop}[Kerr--de~Sitter QNMs near de~Sitter QNMs]
\label{PropKBdYes}
  Let $\sigma_*\in B^\circ\cap\QNM_\dS$, and let $\eps_0>0$ be so small that $\QNM_\dS\cap\{|\sigma-\sigma_*|\leq 2\eps_0\}=\{\sigma_*\}$. Then for sufficiently small $\eps\in(0,\eps_0]$, there exists $\bhm_1>0$ so that
  \begin{equation}
  \label{EqKBdYes}
    m_\dS(\sigma_*) = \sum_{|\sigma-\sigma_*|\leq\eps} m_\bhm(\sigma),\qquad \bhm\in(0,\bhm_1].
  \end{equation}
  Moreover, for all sufficiently small $r_0>0$ and $K:=[r_0,2]_r\times\Sph^2$, the restriction of $\sum_{|\sigma-\sigma_*|\leq\eps} \Res_\bhm(\sigma)$ to $[0,1]_{t_*}\times K$ converges to $\Res_\dS(\sigma_*)$ in the topology of $\CI([0,1]\times K)$.
\end{prop}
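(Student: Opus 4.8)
\textbf{Proof plan for Proposition~\ref{PropKBdYes}.} The plan is to set up a Grushin problem for the total spectral family $\Box_{g_\bhm}(\sigma)$ on the q-single space whose Schur complement, in the limit $\bhm\searrow 0$, reduces to a Grushin problem for $\Box_{g_\dS}(\sigma)$ at $\sigma_*$, and then apply Rouch\'e's theorem (cf.\ \cite[\S C.3]{DyatlovZworskiBook}). First I would fix $s>\frac32+C_1$ and $l\in(\frac12,\frac32)$, and note that by Lemma~\ref{LemmaKBddS} the operator $\Box_{g_\dS}(\sigma)$, acting as in~\eqref{EqKBddSOp}, is Fredholm of index $0$, holomorphic in $\sigma\in B$, and invertible away from $\QNM_\dS$. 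Let $N:=m_\dS(\sigma_*)$, choose a basis $v_1,\ldots,v_N\in\CI(\ol{\Omega_\dS})$ of (a suitable realization of) the generalized kernel captured by $\Res_\dS(\sigma_*)$, and dually $w_1,\ldots,w_N$ spanning a complement of the range of $\Box_{g_\dS}(\sigma_*)$. Setting $R_+v=(\la v,w_j\ra)_j$ and $R_-(a_j)=\sum a_j v_j$ in the appropriate pairings, the augmented operator
\[
  \cP_\dS(\sigma):=\begin{pmatrix}\Box_{g_\dS}(\sigma) & R_- \\ R_+ & 0\end{pmatrix}
\]
is invertible for $\sigma$ near $\sigma_*$, with inverse $\begin{pmatrix}E(\sigma) & E_+(\sigma)\\ E_-(\sigma) & E_{-+}(\sigma)\end{pmatrix}$, and $\sigma\in\QNM_\dS$ near $\sigma_*$ iff $\det E_{-+}(\sigma)=0$, with the multiplicity of the zero equal to $m_\dS(\sigma)$ by the standard determinant identity; moreover $E_{-+}(\sigma)$ vanishes to order exactly $N$ at $\sigma_*$ since $\sigma_*$ is the only dS QNM in $|\sigma-\sigma_*|\le 2\eps_0$.

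\smallskip

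Next I would transplant this to the KdS side. Pull back $v_j, w_j$ along the blow-down $\dot\beta$ and extend them (via a cutoff supported where $r\gtrsim\bhm$) to $\bhm$-independent elements $\tilde v_j,\tilde w_j$ of weighted q-Sobolev spaces; this uses the relationship between q- and b-Sobolev spaces near $\mface_\qop$ from Proposition~\ref{PropqHRel}\eqref{ItqHRelmf}, which shows $\dot\beta^*\CI(\ol{\Omega_\dS})$ sits inside $H_{\qop,\bhm}^{s,(l,\gamma)}(\Omega_\qop)$ with uniformly bounded norm for any $\gamma$. Define $\cP_\bhm(\sigma)$ by the same block formula with $\Box_{g_\bhm}(\sigma)$ in the corner and $R_\pm$ built from the $\tilde v_j,\tilde w_j$. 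The uniform estimate~\eqref{EqKABdd}, together with its adjoint version and the relationship between q- and b-spaces used exactly as in the proof of Proposition~\ref{PropKBdNo} to weaken the error term at $\mface_\qop$, shows that for $\sigma$ in a fixed punctured neighborhood $\{0<|\sigma-\sigma_*|\le\eps\}$ the augmented operator $\cP_\bhm(\sigma)$ is invertible with uniformly bounded inverse once $\bhm\le\bhm_1$; indeed the rank-$N$ perturbations $R_\pm$ kill precisely the near-zero eigenvalue of $\Box_{g_\dS}(\sigma_*)$ that survives in the limit, so the model operator for the Schur complement becomes invertible. A convergence argument — the KdS Grushin inverse converges, after restriction to $r\gtrsim\bhm$ and in the appropriate operator topology, to the dS Grushin inverse, since $\Box_{g_\bhm}(\sigma)-\dot\beta_*\Box_{g_\dS}(\sigma)$ has coefficients vanishing at $\mface_\qop$ (Proposition~\ref{PropKS}\eqref{ItKSmf}) — then gives $E_{-+}^\bhm(\sigma)\to E_{-+}(\sigma)$ uniformly on the circle $|\sigma-\sigma_*|=\eps$. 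By Rouch\'e's theorem applied to $\det E_{-+}^\bhm$, for small enough $\bhm$ the number of zeros of $\det E_{-+}^\bhm$ in $|\sigma-\sigma_*|<\eps$, counted with multiplicity, equals $N=m_\dS(\sigma_*)$; since by the determinant identity the order of a zero of $\det E_{-+}^\bhm$ at $\sigma$ equals $m_\bhm(\sigma)$ (using~\eqref{EqKResMult} and the fact that the Grushin construction does not change multiplicities), this is exactly~\eqref{EqKBdYes}.

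\smallskip

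For the convergence of resonant states, I would argue as follows. For $\bhm\le\bhm_1$, any $u\in\Res_\bhm(\sigma)$ with $|\sigma-\sigma_*|\le\eps$ can be recovered from a residue of $e^{-i\sigma t_*}\cP_\bhm(\zeta)^{-1}$ applied to a suitable polynomial with values in $\ran R_-$; equivalently the space $\sum_{|\sigma-\sigma_*|\le\eps}\Res_\bhm(\sigma)$ is the image under the (block-$E$ part of the) Grushin inverse of a fixed finite-dimensional space of Laurent tails, of total dimension $N$. The uniform q-estimates upgrade to uniform q-regularity of arbitrary order and at arbitrary weights in $r\gtrsim\bhm$ (elliptic regularity plus radial point and propagation estimates, all with $\bhm$-uniform constants, as in~\S\ref{SsKSy}); restricting to $K=[r_0,2]\times\Sph^2$ with $r_0>0$ fixed, these give $\bhm$-uniform $\CI([0,1]_{t_*}\times K)$ bounds. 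Hence on a subsequence $\bhm_k\searrow 0$ a basis of the restricted space converges in $\CI([0,1]\times K)$; the limit lies in the kernel of $\Box_{g_\dS}(\sigma_\infty)$ with $\sigma_\infty$ a limit of the corresponding KdS QNMs, forcing $\sigma_\infty=\sigma_*$ and the limit to lie in $\Res_\dS(\sigma_*)$ (using the characterization of dS resonant states from Lemma~\ref{LemmaKBddS} and Lemma~\ref{LemmaKdSQNM}). A dimension count — the limiting space has dimension $\le N=m_\dS(\sigma_*)$, hence equals $\Res_\dS(\sigma_*)|_{[0,1]\times K}$ — together with uniqueness of the limit (any two subsequential limits agree because the Grushin inverses converge) promotes subsequential to full convergence. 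The main obstacle I expect is organizing the Grushin problem so that the $\bhm\searrow 0$ limit of the Schur complement is genuinely the dS Schur complement: one must be careful that the cutoff region $r\gtrsim\bhm$ interacts correctly with the finite-rank modifications (so that $R_\pm$ on the KdS side really does excise the limiting near-kernel and nothing more), and that the convergence $\cP_\bhm(\sigma)^{-1}\to\cP_\dS(\sigma)^{-1}$ holds in a topology strong enough to pass to the finite-rank blocks $E_{-+}$ yet weak enough to follow from the uniform estimates — this is exactly the bounded-frequency analogue of the high-frequency absorption arguments already carried out in Propositions~\ref{PropKA} and~\ref{PropKBdNo}, so the tools are all in place, but the bookkeeping with the five-parameter family of q-weights requires care.
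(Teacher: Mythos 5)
Your overall strategy --- Grushin problem, Schur complement, a Rouch\'e-type counting argument, weak compactness plus uniform q-regularity for the convergence of resonant states --- coincides with the paper's, but the specific way you set up the Grushin data contains a gap that would derail the argument, and it is in fact the very issue you flag at the end. You set $N := m_\dS(\sigma_*)$ and propose to take the auxiliary blocks $R_\pm$ to have rank $N$, built from a basis of the ``generalized kernel'' and a spanning set of a complement of $\ran\Box_{g_\dS}(\sigma_*)$. But $\Box_{g_\dS}(\sigma_*)$ is Fredholm of index $0$, so the plain kernel $K_0=\ker\Box_{g_\dS}(\sigma_*)$ and the cokernel have a common dimension $d$ which in general is strictly smaller than $m_\dS(\sigma_*)$: the inclusion $e^{-i\sigma_*t_*}K_0\subseteq\Res_\dS(\sigma_*)$ need not be an equality (Jordan blocks are not excluded). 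The Grushin auxiliary spaces must have dimension exactly $d$; if you take $N>d$, then $R_-$ cannot be injective (its range is supposed to complement a $d$-codimensional subspace but has $N$ generators), so the augmented operator $\cP_\dS(\sigma_*)$ is not invertible and the whole construction fails before it starts. The multiplicity $m_\dS(\sigma_*)$ is not an input; it appears as the order of the zero of $\det E_{-+}$, which is a $d\times d$ matrix.

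The second part of the gap is your attempt to transplant $v_j,w_j$ from $\ol{\Omega_\dS}$ to the KdS side by cutting off in $r\gtrsim\bhm$, which makes $R_\pm$ depend on $\bhm$ and forces you to commute an $\bhm$-dependent cutoff through everything. The paper avoids both problems with one clean move: it does \emph{not} use the kernel and cokernel vectors as the Grushin data. Instead, since $K_0$ and $K_0^*$ are $d$-dimensional and consist of locally smooth functions, one can pick $d$ test functions $u_j^\flat,u_j^\sharp\in\CIc(\dot\Omega\cap\{r>r^\flat\})$, for some fixed $r^\flat>0$, compactly supported away from both $r=0$ and $r=2$, whose $L^2$-pairings against $K_0$ and $K_0^*$ are nondegenerate. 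These probes are literally $\bhm$-independent on $\Omega_\qop$ once $\bhm<r^\flat$, because their supports stay clear of the blown-up region, so $R_\pm$ is a fixed finite-rank operator across the whole family. This trivializes the continuity of the $d\times d$ Schur complement $D_\bhm(\sigma)$ in $\bhm$ down to $\bhm=0$, and the rest of the argument (invertibility of $P_\bhm(\sigma)$ via~\eqref{EqKABdd} plus absorption at $\mface_\qop$, the $\oint D_\bhm^{-1}\pa_\sigma D_\bhm$ multiplicity count, and the weak-compactness/uniqueness step for the convergence of resonant states) proceeds essentially as you envision.
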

\begin{proof}
  The proof is an elaboration on \cite[Theorem~1.1]{HintzXieSdS}. Thus, for $s>\frac32+C_1$ and $l\in(\half,\frac32)$, let
  \[
    K_0 = \ker_{\Hbext^{s,l}(\dot\Omega)}\Box_{g_\dS}(\sigma_*),\qquad
    K_0^* = \ker_{\Hbsupp^{-s+1,-l-2}\bigl(\ol{\dot\Omega}\bigr)} \Box_{g_\dS}(\sigma_*)^*.
  \]
  (Note that $e^{-i\sigma_* t_*}K_0\subseteq\Res_\dS(\sigma_*)$, but equality need not hold.) By Lemma~\ref{LemmaKBddS}, the spaces $K_0$ and $K_0^*$ have equal dimension $d\geq 1$. Choose $r^\flat>0$ and functions $u_j^\flat,u_j^\sharp\in\CIc(\dot\Omega\cap\{r>r^\flat\})$, $j=1,\ldots,d$, so that the maps $K_0\ni u\mapsto(\la u,u_j^\flat\ra)_{j=1,\ldots,d}\in\C^d$ and $K_0^*\ni u^*\mapsto(\la u^*,u_j^\sharp\ra)_{j=1,\ldots,d}\in\C^d$ are isomorphisms. Define the operators
  \begin{alignat*}{3}
    R_+ &\colon &\Hbext^{s,l}(\dot\Omega) &\ni u && \mapsto ( \la u,u_j^\flat\ra )_{j=1,\ldots, d} \in \C^d, \\
    R_- &\colon &\C^d &\ni (w_j)_{j=1,\ldots,d} && \mapsto \sum_{j=1}^d w_j u_j^\sharp \in \CIc(\dot\Omega\setminus\pa\dot X).
  \end{alignat*}
  Recalling the definition of $\cH_\bhm^s$ from~\eqref{EqKBdMap}, the operator
  \[
    P_\bhm(\sigma) := \begin{pmatrix} \Box_{g_\bhm}(\sigma) & R_- \\ R_+ & 0 \end{pmatrix} \colon \cH_\bhm^s \oplus \C^d \to \Hbext^{s-1,l-2}(\Omega_\bhm) \oplus \C^d
  \]
  is Fredholm of index $0$.

  The uniform estimate~\eqref{EqKABdd} (with $\gamma=0$) for $\Box_{g_\bhm}(\sigma)$ for $\sigma\in B$ implies
  \begin{equation}
  \label{EqKBdYesAlmost}
    \|(u,w)\|_{\Hext_{\qop,\bhm}^{s,(l,0)}(\Omega_\qop) \oplus \C^d} \leq C\bigl( \| P_\bhm(\sigma)(u,w) \|_{\Hext_{\qop,\bhm}^{s-1,(l-2,0)}(\Omega_\qop)\oplus\C^d} + \|(u,w)\|_{\Hext_{\qop,\bhm}^{s_0,(l_0,0)}(\Omega_\qop)\oplus\C^d} \bigr).
  \end{equation}
  But now the $\mface_\qop$-normal operator
  \[
    P_\dS(\sigma) := \begin{pmatrix} \Box_{g_\dS}(\sigma) & R_- \\ R_+ & 0 \end{pmatrix}
  \]
  has trivial nullspace for $\sigma=\sigma_*$ by construction, and thus for $|\sigma-\sigma_*|\leq 2\eps$ if we shrink $\eps>0$; we may assume that $2\eps$ is smaller than the distance from $\sigma_*$ to $\pa B$. Therefore, $P_\dS$ obeys an estimate
  \[
    \|(u,w)\|_{\Hbext^{s,l}(\dot\Omega)\oplus\C^d} \leq C\| P_\dS(\sigma)u \|_{\Hbext^{s-1,l-2}(\dot\Omega)\oplus\C^d},\qquad |\sigma-\sigma_*|\leq 2\eps.
  \]
  As in the proof of Proposition~\ref{PropKBdNo}, this can then be used to weaken the norm on the error term in~\eqref{EqKBdYesAlmost} to $\|(u,0)\|_{\Hext_{\qop,\bhm}^{s_0+2,(l_0,-1)}(\Omega_\qop)\oplus\C^d}$; this weakened error term can be absorbed into the left hand side of~\eqref{EqKBdYesAlmost} when $\bhm\in(0,\bhm_1]$ for a sufficiently small $\bhm_1>0$, and for all $\sigma\in\C$ with $|\sigma-\sigma_*|\leq 2\eps$. (Here, as in the proof of Proposition~\ref{PropKBdNo}, we need to use $s>\frac32+C_1$ and take $\half+C_1<s_0<s-1$.) Therefore, the operator $P_\bhm(\sigma)$ is injective and thus invertible for such $\bhm,\sigma$; we write its inverse as
  \[
    P_\bhm(\sigma)^{-1} = \begin{pmatrix} A_\bhm(\sigma) & B_\bhm(\sigma) \\ C_\bhm(\sigma) & D_\bhm(\sigma) \end{pmatrix},\qquad \bhm\in(0,\bhm_1],\quad |\sigma-\sigma_*|\leq 2\eps.
  \]

  By the Schur complement formula, $\Box_{g_\bhm}(\sigma)$ is invertible on $\CI(\ol{\Omega_\bhm})$ (or, equivalently, as a map~\eqref{EqKBdMap}) if and only if the $d\times d$ matrix $D_\bhm(\sigma)$ is invertible; concretely, we have
  \begin{align}
  \label{EqKBdYesBoxInv}
    \Box_{g_\bhm}(\sigma)^{-1} &= A_\bhm(\sigma)-B_\bhm(\sigma)D_\bhm(\sigma)^{-1}C_\bhm(\sigma), \\
    D_\bhm(\sigma)^{-1} &= -R_+ \Box_{g_\bhm}(\sigma)^{-1} R_-. \nonumber
  \end{align}
  Upon setting
  \[
    m'_\bhm(\sigma') := \frac{1}{2\pi i}\tr\oint_{\sigma'} D_\bhm(\sigma)^{-1}\pa_\sigma D_\bhm(\sigma)\,\dd\sigma
  \]
  these formulas imply $m_\bhm(\sigma')\leq m'_\bhm(\sigma')$ and $m'_\bhm(\sigma')\leq m_\bhm(\sigma)$, and therefore $m_\bhm(\sigma)=m'_\bhm(\sigma')$. (Since $D_\bhm(\sigma)$ is an analytic family in $\sigma$ of $d\times d$ matrices, $m_\bhm(\sigma')$ is the order of vanishing of $\det D_\bhm(\sigma)$ at $\sigma=\sigma'$.) We similarly have
  \[
    P_\dS(\sigma)^{-1} = \begin{pmatrix} A_\dS(\sigma) & B_\dS(\sigma) \\ C_\dS(\sigma) & D_\dS(\sigma) \end{pmatrix},\qquad
    m_\dS(\sigma')=\frac{1}{2\pi i}\oint_{\sigma'}D_\dS(\sigma)^{-1}\pa_\sigma D_\dS(\sigma)\,\dd\sigma.
  \]

  Set $D_0(\sigma):=D_\dS(\sigma)$. We then claim that $D_\bhm(\sigma)$ is continuous in $\bhm\in[0,\bhm_1]$ with values in holomorphic families (in $|\sigma-\sigma_*|\leq\frac32\eps$) of $d\times d$ matrices; to this end, since $D_0(\sigma)$ is holomorphic, it suffices to prove the continuity of $D_\bhm(\sigma)$ in $\bhm$ for any fixed $\sigma$ with $|\sigma-\sigma_*|\leq\frac32\eps$. Thus, let $w\in\C^d$ and consider
  \[
    (u_\bhm,w_\bhm) = P_\bhm(\sigma)^{-1} (0,w);
  \]
  we need to show that $w_\bhm=D_\bhm(\sigma)w\to D_\dS(\sigma)w$ as $\bhm\searrow 0$. But $u_\bhm\in\Hext_{\qop,\bhm}^{s,(l,0)}(\Omega_\qop)$ and $w_\bhm\in\C^d$ are uniformly bounded. Fixing $\chi\in\CIc([0,1))$ with $\chi=1$ on $[0,\half]$, this implies in view of Proposition~\ref{PropqHRel}\eqref{ItqHRelmf} that $u'_\bhm:=\chi(\bhm/r)u_\bhm\in\Hbext^{s,l}(\dot\Omega)$ is uniformly bounded. Upon passing to a subsequence of black hole masses $\bhm_j$ with $\bhm_j\searrow 0$ as $j\to\infty$, we may assume that $u'_{\bhm_j}\weakto u_0\in\Hbext^{s,l}(\dot\Omega)$ and $w_{\bhm_j}\to w_0$. When $\bhm_j$ is so small that $\chi(\bhm_j/r)=1$ for $r>r^\flat$, then $u'_{\bhm_j}|_{r>r^\flat}=u_{\bhm_j}|_{r>r^\flat}$ and therefore $R_+ u'_{\bhm_j}=R_+ u_{\bhm_j}=w$; thus, by taking the weak limit of
  \[
    P_{\bhm_j}(\sigma)(u'_{\bhm_j},w_{\bhm_j})=\bigl([\Box_{g_{\bhm_j}},\chi(\bhm_j/r)]u_{\bhm_j},w\bigr),
  \]
  as $j\to\infty$, we obtain
  \[
    \Box_{g_\dS}(\sigma)u_0 + R_- w_0 = 0,\qquad
    R_+ u_0 = w.
  \]
  But since $P_\dS(\sigma)$ is invertible, we must have $(u_0,w_0)=P_\dS(\sigma)^{-1}(0,w)$, so $w_0=D_\dS(\sigma)w$. The weak subsequential limit $(u_0,w_0)=(B_\dS(\sigma)w,D_\dS(\sigma)w)$ is therefore unique, and in particular $w_\bhm\to D_\dS(\sigma) w$, as claimed. For later use, we note that for any fixed $r_0>0$, this also shows that $(B_\bhm(\sigma)w)|_{r>r_0}=u_\bhm|_{r>r_0}=u'_\bhm|_{r>r_0}\weakto u_0|_{r>r_0}=(B_\dS(\sigma)w)|_{r>r_0}$ in $\Hext^s([r_0,2]\times\Sph^2)$ as $\bhm\searrow 0$ (where the second equality holds when $\bhm$ is so small that $\chi(\bhm/r)=1$ for $r>r_0$), and since $s$ here is arbitrary, we indeed have strong convergence
  \begin{equation}
  \label{EqKBdYesB}
    (B_\bhm(\sigma)w)|_{r>r_0} \to (B_\dS(\sigma)w)|_{r>r_0}\quad \text{in}\ \CI([r_0,2]\times\Sph^2),
  \end{equation}
  uniformly in $\sigma$ when $|\sigma-\sigma_*|\leq\frac32\eps$.
  
  As a consequence, if $\gamma=\{|\sigma-\sigma_*|=\eps\}\subset B$, oriented counterclockwise, then, for $\bhm_1>0$ so small that $\gamma\cap\QNM(\bhm)=\emptyset$ for all $\bhm\in(0,\bhm_1]$ (such $\bhm_1$ exists by Proposition~\ref{PropKBdNo}), we have
  \begin{align*}
    m_\dS(\sigma_*) = \sum_{|\sigma-\sigma_*|\leq\eps} m_\dS(\sigma) &= \frac{1}{2\pi i}\oint_\gamma D_\dS(\sigma)^{-1}\pa_\sigma D_\dS(\sigma)\,\dd\sigma \\
      &= \frac{1}{2\pi i}\oint_\gamma D_\bhm(\sigma)^{-1}\pa_\sigma D_\bhm(\sigma)\,\dd\sigma = \sum_{|\sigma-\sigma_*|\leq\eps} m_\bhm(\sigma),
  \end{align*}
  as asserted in~\eqref{EqKBdYes}.

  Finally, we can choose a number $r_\flat>0$ and polynomials $p_j=p_j(\zeta)$ with values in $\CIc(\dot\Omega\cap\{r>r_\flat\})$ for $j=1,\ldots,m_\dS(\sigma_*)$ so that $\Res_\dS(\sigma_*)$ has as a basis
  \[
    u_{\dS,j}(t_*,x)=\res_{\zeta=\sigma_*}\bigl(e^{-i t_*\zeta}\Box_{g_\dS}(\zeta)^{-1}p_j(\zeta)\bigr),\qquad j=1,\ldots,m_\dS(\sigma_*).
  \]
  The restrictions of $u_{\dS,j}$ to $[0,1]_{t_*}\times K$ remain linearly independent for any $K=[r_0,2]\times\Sph^2$ when $r_0\in(0,2)$ is sufficiently small.\footnote{Since the $u_{\dS,j}$ are analytic in an appropriate coordinate system---see \cite{HintzXiedS} for explicit formulas and \cite{GalkowskiZworskiHypo} for a general argument---the smallness requirement on $r_0$ is in fact unnecessary.} With $\gamma$ as above, we can then set
  \begin{align*}
    u_{\bhm,j}(t_*,x) &= \frac{1}{2\pi i}\oint_\gamma e^{-i t_*\zeta} \Box_\bhm(\zeta)^{-1}p_j(\zeta)\,\dd\zeta = -\frac{1}{2\pi i}\oint_\gamma e^{-i t_*\zeta} B_\bhm(\zeta)D_\bhm(\zeta)^{-1}C_\bhm(\zeta)p_j(\zeta)\,\dd\zeta \\
      &\in \sum_{|\sigma-\sigma_*|\leq\eps} \Res_\bhm(\sigma),
  \end{align*}
  where we used~\eqref{EqKBdYesBoxInv} and the holomorphicity of $A_\bhm(\zeta)$ in $\zeta$. Since the span of $p_j(\zeta)$, where $j$ and $\zeta$ range over $1,\ldots,m_\dS(\sigma_*)$ and $\C$ respectively, is a fixed finite-dimensional subspace of $\CIc(\dot\Omega\cap\{r>r_\flat\})$, one can prove the uniform convergence $C_\bhm(\zeta)p_j(\zeta)\to C_\dS(\zeta)p_j(\zeta)$ in $\C^d$ for $|\zeta-\sigma_*|\leq\frac32\eps$ using arguments analogous to those leading to~\eqref{EqKBdYesB}. Using the already established convergence of $D_\bhm(\zeta)$ and $B_\bhm(\zeta)$, we thus conclude that $u_{\bhm,j}|_{[0,1]\times K}\to u_{\dS,j}|_{[0,1]\times K}$ in $\CI([0,1]\times K)$. In particular, for all sufficiently small $\bhm>0$, the span of $u_{\bhm,1},\ldots,u_{\bhm,m_\dS(\sigma_*)}$ is $m_\dS(\sigma_*)$-dimensional. But since we already proved $\dim\sum_{|\sigma-\sigma_*|\leq\eps}\Res_\bhm(\sigma)=m_\dS(\sigma_*)$, the $u_{\bhm,j}$, $j=1,\ldots,m_\dS(\sigma_*)$, span the full space $\sum_{|\sigma-\sigma_*|\leq\eps}\Res_\bhm(\sigma)$. The proof is complete.
\end{proof}

In particular, for $\sigma_*=0$, the equation~\eqref{EqKBdYes} gives $1=m_\dS(0)=\sum_{|\sigma-\sigma_*|\leq\eps}m_\bhm(\sigma)$, and therefore there exists a single resonance $\sigma(\bhm)\in\QNM(\bhm)$ with $|\sigma(\bhm)|\leq\eps$. But since constant functions on $\R_{t_*}\times\ol{\Omega_\bhm}$ lie in the nullspace of $\Box_{g_\bhm}$, we have $0\in\QNM(\bhm)$; therefore, necessarily, $\sigma(\bhm)=0$, with $\Res_\bhm(0)$ equal to the space of constant functions. This proves part~\eqref{ItK0} of Theorem~\ref{ThmK}.

In order to finish the proof of Theorem~\ref{ThmK}, it now remains to show that there exists $h_1>0$ so that for $\sigma\in\QNM(\bhm)$ we have $\Im\sigma\leq h_1^{-1}$ for all sufficiently small $\bhm$; that is, we need to prove uniform estimates not just in strips (as done so far) but also in the full upper half plane. We turn to this next.

\subsection{Uniform analysis in a half space}
\label{SsKU}

We now work in the complement $\{\sigma\in\C\colon \Im\sigma\geq 0, |\sigma|\geq 1\}$ of the unit ball in the closed upper half plane; we parameterize this set via
\[
  [0,\pi]_\vartheta \times [1,\infty)_{\sigma_0} \mapsto \sigma=e^{i\vartheta}\sigma_0.
\]
We can then regard the spectral family $\Box_\bhm(\sigma)$ as a smooth family
\[
  [0,\pi] \ni \vartheta \mapsto \bigl( [1,\infty) \times (0,\bhm_0] \ni (\sigma_0,\bhm)\mapsto \Box_{g_\bhm}(e^{i\vartheta}\sigma_0) \bigr).
\]
In the Q-single space $X_\Qop$, we work only in $\sigma_0\geq 1$. The analogues of Proposition~\ref{PropKS} and Lemma~\ref{LemmaKSyIm} in this setting are then:

\begin{prop}[Properties of the spectral family]
\label{PropKUS}
  We have
  \[
    \Box(e^{i\vartheta}\cdot) \in \DiffQ^{2,(2,0,2,2,2)}(\ol{\Omega_\Qop}),
  \]
  with smooth dependence on $\vartheta\in[0,\pi]$. Moreover:
  \begin{enumerate}
  \item the Q-principal symbol of $\Box(e^{i\vartheta}\cdot)$ is given by~\eqref{EqKSSymb} with $\sigma=e^{i\vartheta}\sigma_0$;
  \item we have $N_\zface(\bhm^2\Box(e^{i\vartheta}\cdot))=\Box_{\hat g}(0)$;
  \item for $\tilde\sigma_0>0$, we have $N_{\nface_{\tilde\sigma_0}}(\Box(e^{i\vartheta}\cdot))=\Box_{\hat g}(e^{i\vartheta}\tilde\sigma_0)$;
  \item for $\sigma_0\geq 1$, we have $N_{\mface_{\sigma_0}}(\Box(e^{i\vartheta}\cdot))=\Box_{g_\dS}(e^{i\vartheta}\sigma_0)$;
  \item the principal symbol of $\Im\Box(e^{i\vartheta}\cdot)$ is
    \[
      (\sigma_0,\bhm;x,\xi)\mapsto 2(\Im\sigma) g_\bhm^{-1}|_x(-\dd t_*,-(\Re\sigma)\dd t_*+\xi),\qquad \sigma=e^{i\vartheta}\sigma_0.
    \]
  \end{enumerate}
\end{prop}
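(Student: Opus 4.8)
The entire statement is a ``half-space version'' of the combination of Proposition~\ref{PropKS} and Lemma~\ref{LemmaKSyIm}, with the substitution $\sigma = \sigma_0 + i\sigma_1 \rightsquigarrow \sigma = e^{i\vartheta}\sigma_0$, and the only essential difference is that now $\Im\sigma = \sigma_0\sin\vartheta$ is \emph{not} bounded as $\sigma_0\to\infty$---it grows linearly with $\sigma_0$, i.e.\ at the same rate as $\Re\sigma$. The plan is therefore to reuse the conceptual proof of Proposition~\ref{PropKS} verbatim wherever the boundedness of $\Im\sigma$ was not used, and to isolate carefully the two places where it \emph{was} used: (i) the assertion $\Box(\cdot+i\sigma_1)\in\rho_\iface^{-2}\rho_\sface^{-2}\cdots\DiffQ^2$ with principal symbol~\eqref{EqKSSymb} given purely by $G_\bhm|_x(-\sigma\,\dd t_* + \xi)$ (in the strip case the $-i\sigma_1\,\dd t_*$ piece was subprincipal in the high-frequency regime, cf.\ the footnote after Proposition~\ref{PropKSpectral}); and (ii) the order of improvement at $\nface$ in~\eqref{EqKSyIm}.

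First I would redo Proposition~\ref{PropKSpectral} for $\nabla^g(e^{i\vartheta}\cdot)$. The algebraic content---$\dd(\sigma) = -i\sigma\,\dd t_* + \dd_X$, the memberships~\eqref{EqQDiffEx}--\eqref{EqQDiffEx2}, and the computation of the Christoffel symbols---is purely formal in $\sigma$ and does not care whether $\sigma = \sigma_0 + i\sigma_1$ or $\sigma = e^{i\vartheta}\sigma_0$; so $\nabla^g(e^{i\vartheta}\cdot)\in\DiffQ^{1,(1,0,1,1,1)}$ with smooth dependence on $\vartheta\in[0,\pi]$ follows exactly as before, and the normal operator identities~\eqref{EqKSpectralzf}--\eqref{EqKSpectralmf} hold with $\tilde\sigma$, $\sigma$ replaced by $e^{i\vartheta}\tilde\sigma$, $e^{i\vartheta}\sigma$ (note that at $\zface$ and $\mface_{\sigma_0}$, where $\tilde\sigma_0 = 0$ resp.\ $\sigma$ stays in a compact set of $\C$, nothing changes). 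The one point requiring attention is the principal symbol: now $-i\sigma\,\dd t_* = -i e^{i\vartheta}\sigma_0\,\dd t_*$ has, in the high-frequency regime $|\sigma_0|\gg 1$, a full-size imaginary part, so the whole expression $-\sigma\,\dd t_* + \xi$ (not just its real part) contributes to the principal symbol---but this is precisely what is asserted in Proposition~\ref{PropKUS}(1), so there is no discrepancy; the footnote after Proposition~\ref{PropKSpectral} already flags this. Then Proposition~\ref{PropKUS}(1)--(4) follow by applying $\tr_g$ to $\nabla^g(e^{i\vartheta}\cdot)\circ\nabla^g(e^{i\vartheta}\cdot)$ and invoking multiplicativity of the principal symbol and normal operator maps, exactly as in the proof of Proposition~\ref{PropKS}.

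For part~(5), the identity $\Box_{g_\bhm}(\sigma)^* = \Box_{g_\bhm}(\bar\sigma)$ (from symmetry of $\Box_{g_\bhm}$ with respect to $|\dd g_\bhm|$) still holds, so $\Im\Box(e^{i\vartheta}\cdot) = \frac{1}{2i}(\Box_{g_\bhm}(e^{i\vartheta}\sigma_0) - \Box_{g_\bhm}(e^{-i\vartheta}\sigma_0))$. I would then run the identical computation as in Lemma~\ref{LemmaKSyIm}: write $\Im\Box_{g_\bhm}(\sigma)$ as a contour/line integral of $\pa_\sigma\Box_{g_\bhm} = -(i[\Box_{g_\bhm},t_*])(\sigma)$, whose principal symbol at $\zeta = -\sigma\,\dd t_* + \xi$ is $\pa_\sigma G_\bhm = 2 g_\bhm^{-1}(-\dd t_*,\zeta) = 2 g_\bhm^{-1}(-\dd t_*, -\sigma\,\dd t_*+\xi)$; taking imaginary parts gives $2(\Im\sigma)g_\bhm^{-1}(-\dd t_*, -(\Re\sigma)\dd t_* + \xi)$ as claimed, with $\sigma = e^{i\vartheta}\sigma_0$. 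The catch, and the one genuine difference from the strip case, is the decay order at $\nface$: in Lemma~\ref{LemmaKSyIm} one got $\Im\Box(\cdot+i\sigma_1)\in\DiffQ^{1,(2,0,\mathbf{1},1,1)}$ rather than the naive $(2,0,2,1,1)$, by using that the $\nface_{\tilde\sigma_0}$-normal operators $\Box_{\hat g}(\tilde\sigma_0)$ are \emph{formally self-adjoint} (real frequency $\tilde\sigma_0$), so that $N_{\nface_{\tilde\sigma_0}}(\Im\Box) = 0$ and hence one gains an order at $\nface$. Here $\Box_{\hat g}(e^{i\vartheta}\tilde\sigma_0)$ is \emph{not} self-adjoint for $\vartheta\in(0,\pi)$, so there is no such gain, and the correct statement is just $\Im\Box(e^{i\vartheta}\cdot)\in\DiffQ^{1,(2,0,2,2,2)}$; since the proposition as stated only asserts the value of the principal symbol (with no claim of an improved $\nface$-order), this is consistent, and I would simply note the absence of the improvement. \textbf{Main obstacle.} The only subtle point---and the one I would check most carefully---is bookkeeping of which rescaled cotangent bundle the covector $\xi$ lives in and why $-i e^{i\vartheta}\sigma_0\,\dd t_*$ really is principal (not subprincipal) at $\sface\cup\iface$ in the half-space setting: one must verify, using $\sigma\in\DiffQ^{0,(0,0,1,1,1)}$ from~\eqref{EqQDiffEx}, that $-i\sigma\,\dd t_*$ and the spatial part $\dd_X$ contribute at the same order $\rho_\iface^{-1}\rho_\sface^{-1}$ to $\dd(\sigma)$, hence both enter the $\DiffQ^{1,(1,0,1,1,1)}$ principal symbol, whereas in the strip the $\sigma_1$-part was lower order because $\sigma_1$ (not $\sigma$) was the relevant coefficient there. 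Once this is pinned down, everything else is a routine transcription of~\S\ref{SsKS}.
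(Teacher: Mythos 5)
Your proposal is correct and takes the same approach the paper implicitly relies on: the paper gives no proof of Proposition~\ref{PropKUS}, simply asserting that it is the analogue of Proposition~\ref{PropKS} and Lemma~\ref{LemmaKSyIm} under $\sigma = \sigma_0 + i\sigma_1 \rightsquigarrow e^{i\vartheta}\sigma_0$, and you have correctly identified and verified the two places where the bounded-$\Im\sigma$ assumption entered (the subprincipality of $-i\sigma_1\,\dd t_*$, and the $\nface$-gain for the skew part). Two small remarks: first, your claimed membership $\Im\Box(e^{i\vartheta}\cdot)\in\DiffQ^{1,(2,0,2,2,2)}$ is correct but not sharp---since $N_\zface(\bhm^2\Box(e^{i\vartheta}\cdot)) = \Box_{\hat g}(0)$ is $\vartheta$-independent and self-adjoint, one retains an improvement at $\zface$; this is immaterial since the proposition only asserts the principal symbol, as you note. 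Second, your derivation actually yields $N_{\nface_{\tilde\sigma_0}}(\bhm^2\Box(e^{i\vartheta}\cdot)) = \Box_{\hat g}(e^{i\vartheta}\tilde\sigma_0)$, with the $\bhm^2$ factor as in Proposition~\ref{PropKS}\eqref{ItKSnf}; the paper's statement of part (3) of Proposition~\ref{PropKUS} omits this factor (apparently a typo), and your version is the correct one.
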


Since we arranged for $\dd t_*$ to be past timelike (see Lemma~\ref{LemmaKdSChi}), the symbolic estimates of~\S\ref{SsKSy} apply uniformly for $\vartheta\in[0,\pi]$ (thus $\Im e^{i\vartheta}=\sin\vartheta\geq 0$), cf.\ \cite[\S7]{VasyMicroKerrdS}; for $\vartheta\in(0,\pi)$, these are propagation estimates with complex absorption which permit propagation in the causal future direction along the Hamiltonian flow. In particular, at the radial points at spatial infinity from the perspective of the Kerr model problems at $\nface$, the need to obtain uniform estimates in $\Im\sigma\geq 0$ down to $\Im\sigma=0$ is what forces the choice of $\cR_{\iface_+,-}$ as the incoming and $\cR_{\iface_+,+}$ as the outgoing radial set (rather than the other way around). (This is the essence of the scattering microlocal proof of the limiting absorption principle, see \cite[\S\S 9 and 14]{MelroseEuclideanSpectralTheory}, or \cite[Proposition~4.13]{VasyMinicourse}.)

Next, the $\zface$-model problem is unchanged, with estimates for it provided by Lemma~\ref{LemmaKz}. For the $\nface$-model problem at frequencies $e^{i\vartheta}\tilde\sigma_0$ with $\tilde\sigma_0$ bounded away from $0$ and $\infty$, one similarly has uniform (in $\vartheta\in[0,\pi]$) symbolic estimates on the same function spaces as in Proposition~\ref{PropKnfNz}, and for the triviality of $\ker\Box_{\hat g}(e^{i\vartheta}\tilde\sigma_0)$ one can use Whiting's original result \cite{WhitingKerrModeStability} for $\vartheta\in(0,\pi)$ (or \cite{ShlapentokhRothmanModeStability,CasalsTeixeiradCModes} for all $\vartheta\in[0,\pi]$). For the uniform low energy estimate~\eqref{EqKnfZ} for $\tilde\sigma=e^{i\vartheta}\tilde\sigma_0$, $\tilde\sigma_0\in[0,1]$, the only additional ingredient is a uniform estimate
\[
  \|u\|_{H_{\scop,\bop}^{s,\sfr,l}(\tface)} \leq C\|\Box_\tface(e^{i\vartheta})u\|_{H_{\scop,\bop}^{s-2,\sfr+1,l-2}(\tface)},\qquad \vartheta\in[0,\pi],
\]
for the $\tface$-model operator $\Box_\tface(e^{i\vartheta})=\tilde\Delta+e^{i\vartheta}$ (see~\eqref{EqKtfOp}); this is again a consequence of uniform symbolic estimates together with the triviality of $\ker\Box_\tface(e^{i\vartheta})$, which for $\vartheta=0,\pi$ was proved in Lemma~\ref{LemmaKtf} and which for $\vartheta\in(0,\pi)$ follows via a direct integration by parts (since tempered elements of $\ker\Box_\tface(e^{i\vartheta})$ are then automatically rapidly decaying as $\tilde r\to\infty$). The uniform high energy estimates of Proposition~\ref{PropKmf} continue to hold for $h^2\Box_{g_\dS}(h^{-1}e^{i\vartheta})$ when $h>0$ is sufficiently small.

These symbolic and normal operator estimates can then be combined as in the proof of Proposition~\ref{PropKA} and yield, as in Corollary~\ref{CorKAHigh}, the existence of $\bhm_1>0$ and $h_1>0$ so that for all $\bhm\in(0,\bhm_1]$ and $\sigma_0\geq h_1^{-1}$ we have $e^{i\vartheta}\sigma_0\notin\QNM(\bhm)$ for all $\vartheta\in[0,\pi]$. Thus, all quasinormal modes $\sigma$ of $\Box_{g_\bhm}$, $\bhm\in(0,\bhm_1]$, satisfy $\Im\sigma\leq h_1^{-1}$. As noted at the end of~\S\ref{SsKBd}, this completes the proof of Theorem~\ref{ThmK}.

\subsection{Quasinormal modes of massive scalar fields}
\label{SsKG}

From \cite[Proposition~2.1]{HintzXieSdS}, we recall the following analogue of Lemma~\ref{LemmaKdSQNM}:

\begin{lemma}[QNMs for massive scalar fields on de~Sitter space]
\label{LemmaKG}
  Let $\nu\in\C$ and $\lambda_\pm=\frac32\pm\sqrt{\frac94-\nu}$ as in Theorem~\usref{ThmIKG}. Then the set $\QNM_\dS(\nu)$ of quasinormal modes of $\Box_{g_\dS}-\nu$ is equal to $\bigcup_\pm(-i\lambda_\pm-i\N_0)$, and the multiplicity of $\sigma\in\QNM_\dS(\nu)$ is
  \begin{equation}
  \label{EqKG}
    m_\dS(\nu;\sigma) = \sum_{\genfrac{}{}{0pt}{}{l\in\N_0}{i\sigma-l\in(\lambda_-+2\N_0)\cup(\lambda_++2\N_0)}} (2 l+1).
  \end{equation}
\end{lemma}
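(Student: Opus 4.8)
The plan is to reduce everything to an explicit computation on de~Sitter space, exactly paralleling the proof of Lemma~\ref{LemmaKdSQNM}. The quasinormal modes of $\Box_{g_\dS}-\nu$, after separation into spherical harmonics, are governed by a family of regular-singular ODEs in the radial variable $r$ on the static patch; this is precisely the content of \cite[Proposition~2.1]{HintzXieSdS}, which computes, for each degree $l\in\N_0$ spherical harmonic, the set of spectral parameters at which there is a (generalized) mode solution with that angular dependence, together with the dimension of the corresponding resonant-state space. First I would quote that proposition: for fixed $l$, the relevant ODE has indicial roots at $r=0$ given by $r^l$ and $r^{-l-1}$ (smoothness forces the former), and at the cosmological horizon $r=1$ one imposes the incoming/smoothness condition; the resulting mode solutions exist exactly when $i\sigma - l \in (\lambda_- + 2\N_0)\cup(\lambda_+ + 2\N_0)$, where $\lambda_\pm = \frac32 \pm \sqrt{\frac94-\nu}$ are the indicial roots of the operator at $r=1$ (equivalently, at infinity after the static coordinate change). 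At each such resonance, the space of generalized resonant states with that fixed angular momentum has dimension $2l+1$ (one for each magnetic quantum number $m$), and these solutions are all genuine (non-generalized) modes for generic $\nu$, with the possibility of Jordan blocks not affecting the multiplicity count.

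Summing over $l$ then gives $\QNM_\dS(\nu) = \bigcup_\pm(-i\lambda_\pm - i\N_0)$: indeed $\sigma\in\QNM_\dS(\nu)$ iff there is some $l$ with $i\sigma-l\in(\lambda_-+2\N_0)\cup(\lambda_++2\N_0)$, i.e.\ $i\sigma\in\lambda_\pm+\N_0$ for one of the two signs (since $l$ ranges over all of $\N_0$ and $2\N_0 + \{0,1,2,\dots\}\cap(l+\cdot)$ fills in $\N_0$), which is exactly $\sigma\in -i\lambda_\pm - i\N_0$. The multiplicity formula~\eqref{EqKG} is then just the assertion that the total multiplicity $m_\dS(\nu;\sigma)$ is the sum over all $l$ contributing a nonzero resonant state at $\sigma$, each weighted by $2l+1$; this follows from the fact that resonant states of different angular momentum are linearly independent and together span $\Res_\dS(\nu;\sigma)$, which in turn follows from the self-adjointness of the Laplacian on $\Sph^2$ and the decoupling of the spectral-family equation after projection onto spherical harmonics. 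I would also note the direct analogue of the residue/trace characterization~\eqref{EqKResMult}, which makes the additivity of multiplicities over the angular decomposition manifest.

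The main obstacle — really the only substantive point — is the bookkeeping at the two indicial roots $\lambda_\pm$ and the possibility that they coincide or differ by an integer, i.e.\ when $\frac94-\nu$ is zero or a nonnegative integer square, so that the two indicial solutions at $r=1$ degenerate or resonate. In that borderline case the contributions from the $\lambda_-$ and $\lambda_+$ families can overlap or produce logarithmic terms, and one must check that~\eqref{EqKG} still records the correct dimension (counting overlapping contributions with the appropriate multiplicity, which is automatic from the union being taken as a multiset in the obvious way). I expect this to be handled exactly as in \cite{HintzXieSdS}, so in the write-up I would simply cite \cite[Proposition~2.1]{HintzXieSdS} for the per-$l$ count including these degenerate cases, and then carry out the elementary summation over $l$, mirroring verbatim the structure of the proof of Lemma~\ref{LemmaKdSQNM} (with $\lambda_-=0$, $\lambda_+=3$ there replaced by the general $\lambda_\pm$ here). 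No new analysis is needed beyond what is already in the cited reference.
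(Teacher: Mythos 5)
Your proposal is correct and matches the paper's approach: the paper proves Lemma~\ref{LemmaKG} by exactly the same route, namely citing \cite[Proposition~2.1]{HintzXieSdS} for the per-$l$ count (mode solutions with angular momentum $l$ exist precisely when $i\sigma-l\in(\lambda_-+2\N_0)\cup(\lambda_++2\N_0)$, with dimension $2l+1$) and then summing over $l$, exactly as in the proof of Lemma~\ref{LemmaKdSQNM}; indeed the paper gives no further argument and just records the resulting formula. Your observation that $\bigcup_{l\in\N_0}(l+\lambda_\pm+2\N_0)=\lambda_\pm+\N_0$ gives the stated description of $\QNM_\dS(\nu)$, and the linear independence across angular momenta giving additivity of multiplicities, are exactly the (implicit) content of the paper's one-line derivation.
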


The formula~\eqref{EqKG} reduces to~\eqref{EqKdSQNM} for $\nu=0$; see the proof of Lemma~\ref{LemmaKdSQNM}. Define
\[
  \QNM(\nu;\bhm),\quad m_\bhm(\nu;\sigma),\quad \Res_\bhm(\nu;\sigma),
\]
and $\Res_\dS(\nu;\sigma)$ as in~\S\ref{SsKMain} but now using the operators $\Box_{g_\bhm}-\nu$ and $\Box_{g_\dS}-\nu$. \emph{Then Theorem~\usref{ThmK}, except for part~\eqref{ItK0}, remains valid upon adding the parameter $\nu$ to the notation throughout.} (This also proves Theorem~\usref{ThmIKG}.)

The proof is \emph{the same} as that of Theorem~\ref{ThmK}; indeed, the presence of the scalar field mass term $\nu$ affects neither the principal symbol of $\Box(\cdot+i\sigma_1)-\nu$ nor any of its normal operators, with the exception of
\[
  N_{\mface_{\sigma_0}}(\Box(\cdot+i\sigma_1)-\nu) = \Box_{g_\dS}(\sigma)-\nu.
\]
Thus, the invertibility properties of $\Box_{g_\dS}(\sigma)-\nu$ are what determine the limiting quasinormal mode spectrum of $\Box_{g_\bhm}(\sigma)-\nu$.

\appendix

\section{Geometric and analytic background}
\label{SP}

We begin by recalling some basic notions of b-analysis; see \cite{MelroseDiffOnMwc,GrieserBasics} for detailed accounts. Let $M$ be a smooth $n$-dimensional manifold with corners whose boundary hypersurfaces $H\subset M$ are embedded submanifolds; we write $M_1(M)$ for the collection of all boundary hypersurfaces of $M$. We write $\CI(M)$, resp.\ $\CIdot(M)$ for the space of smooth functions, resp.\ smooth functions vanishing to infinite order at all boundary hypersurfaces. A \emph{defining function} of $H$ is a smooth function $\rho_H\in\CI(M)$ so that $H=\rho_H^{-1}(0)$, and $\dd\rho_H\neq 0$ on $H$; when $M$ is a manifold with boundary, then a defining function of $\pa M$ is called a \emph{boundary defining function}. For a subset $\cH\subset M_1(M)$, a function $\rho\in\CI(M)$ is called a (joint) defining function for $\bigcup_{H\in\cH}H$ if it is the product of defining functions of $\rho_H$, $H\in\cH$. Moreover, we denote by $M^\circ$ the interior of $M$.

A \emph{boundary face} of $M$ is a nonempty intersection of boundary hypersurfaces. A \emph{p-submanifold} $S\subset M$ is a closed submanifold so that around each point in $S$ there exist local coordinates $x^1,\ldots,x^k\geq 0$, $y^1,\ldots,y^{n-k}\in\R$ on $M$ (with $k$ the codimension of the smallest boundary face containing the point under consideration) so that $S$ is given by the vanishing of the subset of these coordinates. The \emph{blow-up of $M$ along $S$}, denoted $[M;S]$, is given as a set by
\[
  [M;S] = (M\setminus S) \sqcup S {}^+N S,
\]
where $S^+ N S={}^+N S/\R_+$ is the inward pointing spherical normal bundle; here ${}^+N S={}^+T_S M/T S$ is the inward pointing normal bundle, with ${}^+T_q M\subset T M$ denoting the closed orthant of inward pointing tangent vectors (i.e.\ $\sum_{j=1}^k v_j\pa_{x^j}+\sum_{j=1}^{n-k} w_j\pa_{y^j}$ with all $v_j$ non-negative). The manifold $S$ is called the \emph{center} of the blow-up. The \emph{front face} of $[M;S]$ is $S{}^+N S$; the \emph{blow-down map} is the map $\beta\colon[M;S]\to M$ which is the identity on $M\setminus S$ and the base projection on the front face. The set $[M;S]$ can be given the structure of a smooth manifold with corners by putting on it the minimal smooth structure in which lifts of elements of $\CI(M)$ as well as polar coordinates around $S$ are smooth down to the front face; the blow-down map is then smooth. If $T\subset M$ is another p-submanifold so that near points of $S\cap T$, both $S$ and $T$ are given by the vanishing of a subset of a single local coordinate system on $M$, then we define the \emph{lift} $\beta^*T$ of $T$ to $[M;S]$ as follows: if $T\subset S$, then $\beta^*T=\beta^{-1}(T)$, and otherwise $\beta^*T$ is the closure of $\beta^{-1}(T\setminus S)$. In either case, $\beta^*T$ is a p-submanifold of $[M;S]$ and can thus be blown up again; we denote by $[M;S;T]=[[M;S];\beta^*T]$ the iterated blow-up, similarly for deeper blow-ups. The lift of a smooth map $f\colon M\to N$ between manifolds with corners to $[M;S]$ is the composition $f\circ\beta\colon[M;S]\to N$. It may happen that $[M;S;T]$ and $[M;T;S]$ are naturally diffeomorphic in the sense that the identity map on $M\setminus(S\cup T)$ extends to a diffeomorphism $[M;S;T]\cong[M;T;S]$. In this case, we shall occasionally write $[M;S,T]=[M;T,S]$. This happens in particular when $S\subset T$ or $T\subset S$, or when $S$ and $T$ are transversal.

By $\Vb(M)$ we denote the Lie algebra of \emph{b-vector fields} on $M$, i.e.\ those smooth vector fields which are tangent to all boundary hypersurfaces; in local coordinates $x^1,\ldots,x^k\geq 0$, $y^1,\ldots,y^{n-k}\in\R$ near a point on $\pa M$, such vector fields are linear combinations of $x^j\pa_{x^j}$ ($j=1,\ldots,k$) and $\pa_{y^j}$ ($j=1,\ldots,n-k$) with smooth coefficients. Thus, $\Vb(M)$ is the space of smooth sections of the \emph{b-tangent bundle} $\Tb M\to M$, a rank $n$ vector bundle equipped with a bundle map $\Tb M\to T M$ which is an isomorphism over the interior $M^\circ$; a local frame of $\Tb M$ in local coordinates is given by the aforementioned vector fields $x^j\pa_{x^j}$, $\pa_{y^j}$. Given $V\in\Vb(M)$ and a boundary hypersurface $H\subset M$, we denote by $N_H(V)\in\Vb(H)$ the restriction of $V$ to $H$, defined as $N_H(V)u=(V\tilde u)|_H$ for $u\in\CIdot(H)$, where $\tilde u\in\CI(M)$ is any smooth function with $\tilde u|_H=u$. By $\Diffb^m(M)\subset\Diff^m(M)$ we denote the space of b-differential operators of order $m$: these are locally finite sums of up to $m$-fold compositions of b-vector fields; here a $0$-fold composition is, by definition, multiplication by an element of $\CI(M)$. We write $\Diffb(M)=\bigoplus_{m\in\N_0}\Diffb^m(M)$.

If $M$ is a manifold with boundary, with boundary defining function $\rho\in\CI(M)$, then $\Vsc(M):=\rho\Vb(M)=\{\rho V\colon V\in\Vb(M)\}$ is the Lie algebra of \emph{scattering vector fields}; we have
\begin{equation}
\label{EqPVsc}
  [\Vsc(M),\Vsc(M)]\subset\rho\Vsc(M).
\end{equation}
The corresponding \emph{scattering tangent bundle} $\Tsc M\to M$ has a local frame $x^2\pa_x$, $x\pa_{y^j}$ ($j=1,\ldots,n-1$) in local coordinates $x\geq 0$, $y^1,\ldots,y^{n-1}\in\R$ near a point on the boundary. By $\Diffsc^m(M)$ we denote the corresponding space of scattering differential operators.

Let $\alpha=(\alpha_H\colon H\in M_1(M))$ be a collection of real numbers, and denote by $\rho_H\in\CI(M)$ a defining function of $H$. Then $\cA^\alpha(M)$ is the space of all smooth functions $u\in\CI(M^\circ)$ so that for all $A\in\Diffb(M)$
\begin{equation}
\label{EqPConormal}
  A u \in \biggl(\prod_{H\in M_1(M)} \rho_H^{\alpha_H}\biggr) L^\infty_\loc(M).
\end{equation}
We say that $u$ is \emph{conormal} (with weights $\alpha_H$). Given $\delta=(\delta_H\colon H\in M_1(M))$ where $\delta_H\in[0,\half)$, we define more generally $\cA^\alpha_\delta(M)$ to consist of all $u\in\CI(M^\circ)$ so that for all $m\in\N_0$ and $A\in\Diffb^m(M)$ we have $A u\in(\prod_{H\in M_1(M)}\rho_H^{\alpha_H-m\delta_H})L^\infty_\loc(M)$. More generally still, if $\cC\subset M_1(M)$ is a collection of boundary hypersurfaces, and weights $\alpha_H\in\R$ and numbers $\delta_H\in[0,\half)$ are given only for $H\in\cC$, then $\cA_\cC^\alpha(M)$ and $\cA_{\cC,\delta}^\alpha(M)$ are defined as before, but only taking products over $H\in\cC$, and allowing $A\in\Diff^m(M)$ to be any locally finite sum of up to $m$-fold compositions of smooth vector fields on $M$ which are tangent to all $H\in\cC$ (but not necessarily the other boundary hypersurfaces). We shall refer to such conormal distributions as \emph{smooth down to the boundary hypersurfaces $M_1(M)\setminus\cC$}.

Spaces of conormal functions with $\delta>0$ arise in particular as follows; for notational simplicity we only discuss the case that $M$ is a manifold with boundary $\pa M$. Suppose $\sfa\in\CI(\pa M)$ is bounded, and let $\sfa_-=\inf\sfa$. Let $\rho\in\CI(M)$ denote a boundary defining function. Then $\cA^\sfa(M)\subset\bigcap_{\delta>0}\cA^{\sfa_-}_\delta(M)$ is the space of all functions of the form $\rho^{\tilde\sfa}u_0$ where $u_0\in\bigcap_{\delta>0}\cA_\delta^0(M)$, with $\tilde\sfa\in\CI(M)$ any smooth extension of $\sfa$. The relevance of $\delta>0$ is that it ensures that $\rho^{\tilde\sfa}$ itself lies in $\cA^\sfa(M)$.

Suppose next that $E\to M$ is a smooth vector bundle over a manifold with corners. By $\bar E\to M$ we denote the \emph{radial compactification} of $E$; this is a closed ball bundle. The fiber bundle $\bar E$ is defined fiber-wise by means of the radial compactification of $\R^k$ (with $k$ the rank of $E$ as a real vector bundle), which is defined as
\[
  \ol{\R^k} := \Bigl(\R^k \sqcup \bigl( [0,\infty)_\rho \times \Sph^{k-1} \bigr) \Bigr) / \sim,
\]
where we identify $0\neq x=r\omega$ (in polar coordinates on $\R^k)$ with $(\rho,\omega)=(r^{-1},\omega)$. A special case is $\ol\R=[-\infty,\infty]=\R\cup\{-\infty,\infty\}$, with the function $\pm(1,\infty)\ni x\mapsto \pm x^{-1}$ extending to a diffeomorphism $\pm(1,\infty]\to[0,1)$; thus, the function $|x|^{-1}$ is smooth on $\ol\R\setminus\{0\}$, and it is a defining function of $\pa\ol\R=\{-\infty,\infty\}$. For $a\in\R\cup\{-\infty\}$, we shall write $[a,\infty]$ for the closure of $[a,\infty)$ inside $\ol\R$; and we put $(a,\infty]=[a,\infty]\setminus\{a\}$. The sets $[-\infty,a]$, and $[-\infty,a)$ are defined analogously. The boundary at fiber infinity of $\bar E$ is a sphere bundle $S\bar E\to M$.

We denote by $P^m(E)\subset\CI(E)$ the space of smooth functions which are polynomials of degree $m\in\N_0$ on each fiber of $E$. Similarly, $S^s(E)\subset\CI(E)$ denotes the space of symbols (of class $1,0$) of order $s\in\R$ on the fibers of $E$; an equivalent definition is $S^s(E)=\cA^{-s}(\bar E)$ (with smoothness down to $\bar E|_{\pa M}$).

Finally, suppose $S\subset M$ is an interior p-submanifold of an $n$-dimensional manifold $M$ with corners, meaning that $S\cap M^\circ\neq\emptyset$. Thus, in suitable local coordinates $x^1,\ldots,x^k\geq 0$, $y=(y',y'')\in\R^p\times\R^{n-k-p}$, the submanifold $S$ is given by $y'=0$ where $p\geq 1$ is the codimension of $S$. For $s\in\R$, we then denote by $I^s(M,S)\subset\CmI(M)=(\CIdot(M;\Omega M))^*$ the space of conormal distributions at $S$ of order $s$; in local coordinates, such a distribution is given as
\[
  u(x,y) = (2\pi)^{-p}\int_{\R^p} e^{i\eta'\cdot y'}a(x,y'',\eta')\,\dd\eta',
\]
where $a\in S^{s+\frac{n}{4}-\frac{p}{2}}([0,\infty)^k\times\R^{n-k-p};\R^p)$. (We follow the order convention of \cite{HormanderFIO1}.) One can also consider symbols which are merely conormal (with some weight) at $x=0$, and allow for the presence of parameters $\delta_j\in[0,\half)$ for $j$ in some subset of $\{1,\ldots,k\}$ (which in particular allows for variable decay orders along $(x^j)^{-1}(0)$ for these $j$). Moreover, by $I^s(M,S;E)$ we denote the space of conormal distributions with values in the vector bundle $E\to M$. See \cite[\S18]{HormanderAnalysisPDE3} for further details.

\subsection{b- and scattering pseudodifferential operators}
\label{SsPbsc}

Let $X$ denote an $n$-dimensional manifold with boundary. The \emph{b-double space} of $X$ is
\[
  X^2_\bop:=[X^2;(\pa X)^2].
\]
We denote by $\diag_\bop\subset X^2_\bop$ the lift of the diagonal $\diag X\subset X^2$, by $\ff_\bop\subset X^2_\bop$ the front face, and by $\lb_\bop$ and $\rb_\bop$ the lift of $\pa X\times X$ and $X\times\pa X$, respectively. See Figure~\ref{FigPb}.

\begin{figure}[!ht]
\centering
\includegraphics{FigPb}
\caption{The b-double space $X^2_\bop$ of $X=[0,1)$.}
\label{FigPb}
\end{figure}

Furthermore, $\pi_R\colon X^2_\bop\to X$ denotes the right projection, and $\Omegab X\to X$ is the b-density bundle (i.e.\ the density bundle associated with $\Tb X$). The space $\Psib^s(X)$ of \emph{b-pseudodifferential operators} (or \emph{b-ps.d.o.s}) then consists of all continuous linear operators on $\CIdot(X)$ whose Schwartz kernels $\kappa\in I^s(X^2_\bop,\diag_\bop;\pi_R^*\Omegab X)$ vanish to infinite order at all boundary hypersurfaces of $X^2_\bop$ except $\ff_\bop$, and which are properly supported when $X$ is non-compact. (See \cite{MelroseAPS} for an extensive discussion.) The principal symbol $\sigmab^s$ fits into a short exact sequence
\[
  0 \to \Psib^{s-1}(X) \hra \Psib^s(X) \xra{\sigmab^s} S^s(\Tb^*X)/S^{s-1}(\Tb^*X) \to 0.
\]
Composition of operators is a continuous bilinear map $\Psib^{s_1}(X)\circ\Psib^{s_2}(X)\subset\Psib^{s_1+s_2}(X)$, and the principal symbol is multiplicative. In local coordinates $x\geq 0$, $y\in\R^{n-1}$ on $X$, lifted along the left, resp.\ right projection to smooth functions $x,y$, resp.\ $x',y'$ on $X^2_\bop$, local coordinates on $X^2_\bop$ near $\diag_\bop$ are $x,y,\frac{x-x'}{x'},y-y'$. For $\chi\in\CIc(\R)$ identically $1$ near $0$ and supported in a small neighborhood of $0$, the operator
\begin{align*}
  (\Opb(a)u)(x,y) &:= (2\pi)^{-n} \iint_{\R\times\R^{n-1}\times[0,\infty)\times\R^{n-1}} \exp\Bigl(i\Bigl(\frac{x-x'}{x}\xi_\bop+(y-y')\cdot\eta_\bop\Bigr)\Bigr) \\
    &\hspace{5em} \times \chi\Bigl(\Bigl|\log\frac{x}{x'}\Bigr|\Bigr)\chi(|y-y'|)a(x,y,\xi_\bop,\eta_\bop)u(x',y')\,\dd\xi_\bop\,\dd\eta_\bop\,\frac{\dd x'}{x'}\,\dd y',
\end{align*}
for $a$ a symbol of order $s$ in $(\xi_\bop,\eta_\bop)$ with support in the local coordinate system, defines a typical element of $\Psib^s(X)$; it is a \emph{quantization} of $a$. (The two factors of $\chi$ localize to a neighborhood of $\diag_\bop$.) Every element of $\Psib^s(X)$ is a locally finite sum (on the level of Schwartz kernels) of such operators, plus an element of $\Psib^{-\infty}(X)$. Spaces of weighted operators are defined by $\Psib^{s,l}(X)=\rho^{-l}\Psib^s(X)$ where $\rho\in\CI(X)$ is a boundary defining function (lifted to the left factor of $X^2_\bop$); one can more generally quantize symbols of order $s$ in the fibers of $\Tb^*X$ which are conormal with weight $\rho^{-l}$ down to $\Tb^*_{\pa X}X$.

Turning to scattering ps.d.o.s, we recall the \emph{scattering double space}
\[
  X^2_\scop=[X^2_\bop;\diag_\bop\cap\,\ff_\bop],
\]
with front face denoted $\ff_\scop$; we write $\diag_\scop\subset X^2_\scop$ for the lift of $\diag_\bop$. See Figure~\ref{FigPsc}.

\begin{figure}[!ht]
\centering
\includegraphics{FigPsc}
\caption{The scattering double space $X^2_\scop$.}
\label{FigPsc}
\end{figure}

Schwartz kernels of elements of the space $\Psisc^{s,r}(X)$ of \emph{scattering ps.d.o.s} of order $s,r$ are then elements of $\rho^{-r}I^s(X_\scop^2,\diag_\scop;\pi_R^*\Omegasc X)$, with $\Omegasc X\to X$ denoting the density bundle associated with $\Tsc X\to X$. Such operators are discussed in \cite{MelroseEuclideanSpectralTheory}. (In the special case $X=\ol{\R^n}$, a thorough discussion, including the case of variable orders, is given in \cite{VasyMinicourse}. We note that $\Vsc(\ol{\R^n})$ is spanned over $\CI(\ol{\R^n})$---which is equal to the space of classical symbols of order $0$ on $\R^n$---by translation-invariant vector fields on $\R^n$, and the space $\Psisc^{s,r}(\ol{\R^n})$ is equal to the space of quantizations $(2\pi)^{-n}\int e^{i z\cdot\zeta}a(z,\zeta)\,\dd\zeta$ of smooth functions $a$ which are symbols of order $r$, resp.\ $s$ in $z$, resp.\ $\zeta$.) In local coordinates on $X$ as above, a typical element of $\Psisc^{s,r}(X)$ is given by
\begin{align*}
  (\Opsc(a)u)(x,y) &= (2\pi)^{-n} \iiiint_{\R\times\R^{n-1}\times[0,\infty)\times\R^{n-1}} \exp\Bigl(i\Bigl(\frac{x-x'}{x^2}\xi_\scop+\frac{y-y'}{x}\cdot\eta_\scop\Bigr)\Bigr) \\
    &\hspace{2em} \times \chi\Bigl(\Bigl|\log\frac{x}{x'}\Bigr|\Bigr)\chi(|y-y'|)a(x,y,\xi_\scop,\eta_\scop)u(x',y')\,\dd\xi_\scop\,\dd\eta_\scop\,\frac{\dd x'}{x'{}^2}\,\frac{\dd y'}{x'{}^{n-1}}.
\end{align*}

The principal symbol $\sigmasc^{s,r}$ fits into the short exact sequence
\[
  0 \to \Psisc^{s-1,r-1}(X) \hra \Psisc^{s,r}(X) \xra{\sigmasc^{s,r}} (S^{s,r}/S^{s-1,r-1})(\Tsc^*X) \to 0
\]
where $S^{s,r}(\Tsc^*X)$ denotes functions which are conormal on $\ol{\Tsc^*}X$ of order $-s$ at $\Ssc^*X$ (fiber infinity of $\ol{\Tsc^*}X)$ and of order $-r$ at $\ol{\Tsc^*_{\pa X}}X$.

We can more generally consider quantizations of symbols $a\in S^{s,\sfr}(\Tsc^*X)$ with \emph{variable scattering decay order} $\sfr\in\CI(\ol{\Tsc^*}X)$ (i.e.\ conormal functions on $\ol{\Tsc^*}X$ with variable order at $\ol{\Tsc^*_{\pa X}}X$). The resulting space of operators is denoted $\Psisc^{s,\sfr}(X)$, and the principal symbol now takes values in $(S^{s,\sfr}/S^{s-1,\sfr-1+2\delta})(\Tsc^*X)$ for any $\delta\in(0,\half)$. See also \cite[\S2]{HintzConicProp}.

We also use the \emph{semiclassical scattering algebra}; this was introduced by Vasy--Zworski \cite{VasyZworskiScl} in the context of high energy estimates for resolvents on asymptotically Euclidean manifolds. We discuss this in a slightly nonstandard way, mirroring the discussion in \cite[\S3.4]{HintzConicProp} for the semiclassical b-algebra. The underlying Lie algebra of vector fields is
\[
  \Vsch(X) := h \CI([0,1]_h;\Vsc(X)),
\]
i.e.\ in terms of $X_\schop:=[0,1]_h\times X$ this is the space of elements of $\rho\Vb(X_\schop)$ which annihilate $h$ and which vanish at $h=0$. Thus, $[\Vsch(X),\Vsch(X)]\subset h\rho\Vsch(X)$. In local coordinates, $\Vsch(X)$ is spanned by $h\rho^2\pa_\rho$ and $h\rho\pa_{y^j}$ ($j=1,\ldots,n-1$); these vector fields form a frame for the semiclassical scattering tangent bundle $\Tsch X\to X_\schop$. By $\Diffsch^m(X)$ we denote the corresponding space of $m$-th order semiclassical scattering differential operators (which are thus families of scattering operators on $X$ which degenerate in a particular manner as $h\searrow 0$); the principal symbol map gives rise to a short exact sequence
\[
  0 \to h\rho\Diffsch^{m-1}(X) \hra \Diffsch^m(X) \xra{\sigmasch^m} P^m(\Tsch^*X)/h\rho P^{m-1}(\Tsch^*X) \to 0.
\]
Define the semiclassical scattering double space by
\[
   X^2_\schop := \bigl[ [0,1]_h \times X^2_\scop; \{0\}\times\diag_\scop \bigr],
\]
with $\diag_\schop\subset X^2_\schop$ denoting the lift of $[0,1]\times\diag_\scop$. See Figure~\ref{FigPsch}.

\begin{figure}[!ht]
\centering
\includegraphics{FigPsch}
\caption{The semiclassical scattering double space $X^2_\schop$.}
\label{FigPsch}
\end{figure}

Then Schwartz kernels of elements of the corresponding space
\[
  \Psisch^s(X)
\]
of semiclassical scattering ps.d.o.s are those elements of $I^{s-\frac14}(X^2_\schop,\diag_\schop;\pi_R^*\,\Omegasch X)$ which vanish to infinite order at all boundary hypersurfaces of $X^2_\schop$ except those which intersect $\diag_\schop$ nontrivially, and which are smooth down to the lift of $h^{-1}(1)$. Here $\pi_R$ is the lift of $[0,1]\times X\times X\ni(h,z,z')\mapsto(h,z')\in[0,1]\times X$, and $\Omegasch X\to X_\schop$ is the density bundle associated with $\Tsch X\to X_\schop$. In local coordinates, a typical example of such an operator is the family $\Op_{\scop,h}(a)$, $h\in(0,1]$, of bounded linear operators defined by
\begin{align*}
  &(\Op_{\scop,h}(a)u)(h,x,y) \\
  &\qquad = (2\pi h)^{-n} \iiiint_{\R\times\R^{n-1}\times[0,\infty)\times\R^{n-1}} \exp\Bigl(i\Bigl(\frac{x-x'}{x^2}\xi_\schop+\frac{y-y'}{x}\cdot\eta_\schop\Bigr)/h\Bigr) \\
  &\qquad\hspace{2em} \times \chi\Bigl(\Bigl|\log\frac{x}{x'}\Bigr|/h\Bigr)\chi(|y-y'|/h)a(h,x,y,\xi_\schop,\eta_\schop)u(x',y')\,\dd\xi_\schop\,\dd\eta_\schop\,\frac{\dd x'}{x'{}^2}\,\frac{\dd y'}{x'{}^{n-1}},
\end{align*}
where $a$ is smooth in $h,x,y$ and a symbol of order $s$ in $(\xi_\schop,\eta_\schop)$. More generally, we can consider symbols $a\in S^{s,r,b}(\Tsch^*X)$ which are conormal functions on $\ol{\Tsch^*}X$ with weight $-r$ at $x=0$ and weight $-b$ at $h=0$; these two orders may be variable, but we shall only consider the case of variable scattering decay orders $\sfr\in\CI(\ol{\Tsch^*_{[0,1]\times\pa X}}X_\schop)$. The resulting space of operators is denoted
\[
  \Psisch^{s,\sfr,b}(X),
\]
and the principal symbol map $\sigmasch^{s,\sfr,b}$ on it takes values in $(S^{s,\sfr,b}/S^{s-1,\sfr-1+2\delta,b-1})(\Tsch^*X)$ for any $\delta>0$.

\begin{rmk}[Compact parameter space]
\label{RmkPCompact}
  Semiclassical operators are usually defined for parameters $h$ lying in an interval $(0,1)$ that is open at $1$ (with $1$ simply being a convenient positive number). In this paper, we include the value $1$ as well and require smoothness of Schwartz kernels all the way up to $h=1$. The reason is that the main pseudodifferential algebra in this paper, the Q-algebra, contains at the same time semiclassical and non-semiclassical algebras (say with parameters $h\in(0,1]$ and $\sigma\in(0,1]$) which fit together smoothly (at $\sigma^{-1}=h=1$).
\end{rmk}

\begin{rmk}[Variable orders]
\label{RmkPVariable}
  Pseudodifferential operators with variable orders were used already by Unterberger \cite{UnterbergerVariable}. For a discussion of variable order b-ps.d.o.s, see \cite[Appendix~A]{BaskinVasyWunschRadMink}. Semiclassical spaces with variable semiclassical orders (powers of $h$) are discussed in \cite{HillairetWunschConic}; see also \cite[Appendix~A]{HintzVasyCauchyHorizon}.
\end{rmk}

\subsection{Semiclassical cone operators}
\label{SsPch}

Consider a compact $n$-dimensional manifold $X$ with connected and embedded boundary $\pa X\neq 0$. (We can allow for $X$ to be non-compact if we require all Schwartz kernels to be properly supported.) We recall elements of \emph{semiclassical cone analysis} on $X$ from \cite{HintzConicPowers,HintzConicProp}. (This is a semiclassical version of a large parameter calculus developed by Loya \cite{LoyaConicResolvent,LoyaConicPower}; see also \cite{GilConicHeat,CoriascoSchroheSeilerCone,GilKrainerMendozaResolvents} for variants based on Schulze's cone calculus \cite{SchulzePsdoSing,SchulzePsdoBVP94,SchulzeBVP98}.) The \emph{semiclassical cone single space} (or \emph{$\chop$-single space}), introduced in \cite[\S3.1.1]{HintzConicPowers}, is the blow-up\footnote{See Remark~\ref{RmkPCompact} for the reason for including $h=1$.}
\[
  X_\chop := \bigl[ [0,1] \times X; \{0\} \times \pa X \bigr],
\]
with boundary hypersurfaces denoted $\cface$ (the lift of $[0,1]\times\pa X$), $\tface$ (the front face), and $\sface$ (the lift of $\{0\}\times X$). Denote by $h\in[0,1]$ the first coordinate on $[0,1]\times X$ (identified with its lift as a smooth function to $X_\chop$). The Lie algebra of \emph{$\chop$-vector fields} is
\[
  \Vch(X) := \{ V\in\rho_\sface\Vb(X_\chop) \colon V h=0 \},
\]
where $\rho_\sface\in\CI(X_\chop)$ is a defining function of $\sface$. In particular, restriction to any positive level set $h=h_0>0$ of $h$ gives a surjective map $\Vch(X)\to\Vb(X)$. In local coordinates $r\geq 0$, $\omega\in\R^{n-1}$ near a point in $\pa X$, we can take $\rho_\sface=\frac{h}{h+r}$, and the space $\Vch(X)$ is locally spanned by
\[
  \tfrac{h}{h+r}r\pa_r,\quad
  \tfrac{h}{h+r}\pa_{\omega^j}\ (j=1,\ldots,n-1),
\]
over the space of smooth functions of $h+r\geq 0$, $\frac{r-h}{r+h}\in[-1,1]$, and $\omega$. These vector fields give a local frame for the $\chop$-tangent bundle\footnote{We write $\Tch X$ here, as it is slightly less cumbersome than the notation $\Tch X_\chop$ used in \cite{HintzConicProp}.} $\Tch X\to X_\chop$. Denote by $\Diffch^m(X)$ the space of locally finite sums of up to $m$-fold compositions of $\chop$-vector fields and multiplication operators by elements of $\CI(X_\chop)$. Since $[\Vch(X),\Vch(X)]\subset\rho_\sface\Vch(X)$, we then have a well-defined principal symbol map $\sigmach^m$ which fits into a short exact sequence
\[
  0 \to \rho_\sface\Diffch^{m-1}(X) \hra \Diffch^m(X) \xra{\sigmach^m} P^m(\Tch^*X)/\rho_\sface P^{m-1}(\Tch^*X) \to 0.
\]

The front face of $X_\chop$ is the closure $\tface=\ol{{}^+N}\pa X$ of the inward pointing normal bundle of $\pa X$; its two boundary hypersurfaces are the zero section (with defining function $\rho_\cface=\frac{r}{h+r}$) and the boundary at fiber infinity (with defining function $\rho_\sface=\frac{h}{h+r}$). We can thus consider the space $\cV_{\bop,\scop}(\tface)=\rho_\sface\Vb(\tface)$ of b-scattering vector fields on $\tface$. By \cite[Lemma~3.5]{HintzConicProp}, the restriction $N_\tface$ of b-vector fields on $X_\chop$ to $\tface$ gives rise to a short exact sequence
\[
  0 \to \rho_\tface\Vch(X) \hra \Vch(X) \xra{N_\tface} \cV_{\bop,\scop}(\tface) \to 0,
\]
and correspondingly to an isomorphism $\Tch_\tface X\cong{}^{\bop,\scop}T\tface$ of tangent bundles and
\begin{equation}
\label{EqPchPhase}
  \Tch^*_\tface X \cong {}^{\bop,\scop}T^*\tface
\end{equation}
of cotangent bundles. The map $N_\tface$ extends to a multiplicative map $N_\tface\colon\Diffch^m(X)\to\Diff_{\bop,\scop}^m(\tface)$.

The $\chop$-double space is defined as\footnote{In \cite[Definition~3.1]{HintzConicPowers}, the $\chop$-double space is defined without the blow-up of $\{0\}\times\lb_\bop$ and $\{0\}\times\rb_\bop$. What we call the $\chop$-double space here is the `extended $\chop$-double space' of \cite[Definition~3.7]{HintzConicPowers}, which is geometrically slightly more complex, but more natural (e.g.\ the left and right projections $X_\chop^2\to X_\chop$ from the extended $\chop$-double space to the $\chop$-single space are b-fibrations).}
\[
  X^2_\chop := \bigl[ [0,1] \times X_\bop^2; \{0\}\times\ff_\bop; \{0\}\times\diag_\bop, \{0\}\times\lb_\bop, \{0\}\times\rb_\bop \bigr].
\]
We denote by $\ff_2$, $\tface_2$, and $\dface_2$ the lifts of $[0,1]\times\ff_\bop$, $\{0\}\times\ff_\bop$, and $\{0\}\times\diag_\bop$, respectively, and by $\tlb_2$, $\trb_2$ and $\sface_2$ the lifts of $\{0\}\times\lb_\bop$, $\{0\}\times\rb_\bop$, and $\{0\}\times X_\bop^2$, respectively. Finally, $\diag_\chop$ denotes the lift of $[0,1]\times\diag_\bop$. See Figure~\ref{FigPch}.

\begin{figure}[!ht]
\centering
\includegraphics{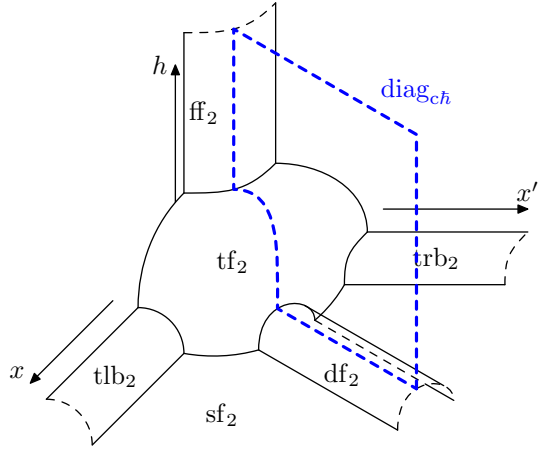}
\caption{The semiclassical cone double space $X^2_\chop$ (called the extended semiclassical cone double space $'X^2_\chop$ in \cite{HintzConicPowers}).}
\label{FigPch}
\end{figure}

The space
\[
  \Psich^s(X)
\]
then consists of smooth (in $h\in(0,1]$) families of continuous linear operators on $\CIdot(X)$ whose Schwartz kernels are elements of $I^{s-\frac14}(X^2_\chop,\diag_\chop;\pi_R^*\Omegach X)$ that vanish to infinite order at all boundary hypersurfaces of $X^2_\chop$ except for $\ff_2,\tface_2,\dface_2$ and the lift of $h^{-1}(1)$. Here $\Omegach X\to X_\chop$ is the density bundle associated with $\Tch X\to X_\chop$, and $\pi_R$ is the lift of the right projection $[0,1]\times X\times X\ni(h,x,x')\mapsto(h,x')\in[0,1]\times X$. In local coordinates as above, a typical element of $\Psich^s(X)$ is the family of operators $\Op_{\cop,h}(a)$ defined by
\begin{align*}
  &(\Op_{\cop,h}(a)u)(r,\omega) \\
  &\quad := (2\pi)^{-n}\iiiint \exp\biggl(i\biggl[\frac{r-r'}{r\frac{h}{h+r}}\xi_\chop + \frac{\omega-\omega'}{\frac{h}{h+r}}\cdot\eta_\chop\biggr]\biggr) \chi\Bigl(\Bigl|\log\frac{r}{r'}\Bigr|\Bigr)\chi(|\omega-\omega'|) \\
  &\quad\qquad\hspace{8em} \times a(h,r,\omega,\xi_\chop,\eta_\chop)u(r',\omega')\,\dd\xi_\chop\,\dd\eta_\chop\,\frac{\dd r'}{r'\frac{h}{h+r'}}\frac{\dd\omega'}{\bigl(\frac{h}{h+r'}\bigr)^{n-1}}.
\end{align*}
Here $a$ is a symbol of order $s$ in $(\xi_\chop,\eta_\chop)$, with smooth dependence on $h+r$, $\frac{r-h}{r+h}$, $\omega$ (i.e.\ on $h$, $r/h$, $\omega$ in $r\lesssim h$ and on $r$, $h/r$, $\omega$ in $h\lesssim r$).

More generally then, we can consider quantizations of symbols $a\in S^{s,l,l',r}(\Tch^*X)$,\footnote{In \cite{HintzConicProp}, the slightly more cumbersome notation $S^{s,l,l',r}(\ol{\Tch^*}X_\chop)$ is used for the same symbol space.} which are conormal functions on $\ol{\Tch^*}X$ of differential order $s$ (i.e.\ have weight $-s$) at fiber infinity $\Sch^*X$, of b-decay order $l$ at $\ol{\Tch^*_\cface}X$, of $\tface$-decay order $l'$ at $\ol{\Tch^*_\tface}X$, and of semiclassical order $r$ at $\ol{\Tch^*_\sface}X$. The resulting space of operators is denoted
\[
  \Psich^{s,l,l',r}(X).
\]
(Restriction of elements of this space to a level set $h=h_0>0$ gives a surjective map to $\Psib^{s,l}(X)$, whereas restriction in both factors of $X$ in $X^2_\chop$ to the interior $X^\circ$ gives a semiclassical ps.d.o. $h^{-r}\Psih^s(X^\circ)$ on $X^\circ$.) The differential and semiclassical orders can moreover be taken to be variable; we only need the case of variable semiclassical orders. Thus, for $\sfr\in\CI(\ol{\Tch^*_\sface}X)$, we denote by
\[
  \Psich^{s,l,l',\sfr}(X)
\]
the corresponding space of operators, defined as the sum of $\Psich^{-\infty,l,l',-\infty}(X)$ and finite sums of quantizations of symbols on $\Tch^*X$ which are conormal on $\ol{\Tch^*}X$ with weights $-s$, $-l$, $-l'$, and $-\sfr$ (thus with an arbitrarily small parameter $\delta_\sface>0$ at $\sface$ in the notation introduced after~\eqref{EqPConormal}). The principal symbol map in this case is
\[
  \sigmach_{s,l,l',\sfr}\colon\Psich^{s,l,l',\sfr}(X) \to (S^{s,l,l',\sfr}/S^{s-1,l,l',\sfr-1+2\delta})(\Tch^*X)
\]
for any $\delta\in(0,\half)$. (See \cite[\S3.2]{HintzConicProp} for further details.)

For those elements of $\Psich^{s,l,0,\sfr}(X)$ which have Schwartz kernels which are smooth down to $\tface_2$ (as distributions conormal to $\diag_\chop$), indicated by a subscript `$\cl$', restriction to $\tface$ gives rise to a surjective map
\[
  N_\tface \colon \Psi_{\chop,\cl}^{s,l,0,\sfr}(X) \to \Psi_{\bop,\scop}^{s,l,\sfr}(\tface)
\]
with kernel $\rho_\tface\Psi_{\chop,\cl}^{s,l,0,\sfr}(X)$; the restriction of the principal symbol of $A\in\Psi_{\chop,\cl}^{s,l,0,\sfr}(X)$ to $\Tch^*_\tface X$ equals the principal symbol of $N_\tface(A)$ under the identification~\eqref{EqPchPhase}. Note here that this identification also relates $\sfr|_{\ol{\Tch^*_\tface}X}$ to a variable scattering decay order $\sfr|_\tface\in\CI(\ol{{}^{\bop,\scop}T^*}\tface)$.

\subsection{Scattering-b-transition algebra}
\label{SsPscbt}

With $X$ denoting a compact $n$-dimensional manifold with connected and embedded boundary $\pa X\neq\emptyset$, the final algebra of families of degenerating ps.d.o.s on $X$ that we recall here was introduced by Guillarmou--Hassell \cite{GuillarmouHassellResI} for the purpose of giving a precise uniform description of the Schwartz kernel of the low energy resolvent on asymptotically conic spaces as one approaches the spectral parameter $0$ from the resolvent set. We only need the small calculus (i.e.\ without boundary terms).

We define the \emph{$\scop$-$\bop$-transition single space} to be
\[
  X_\scbtop := \bigl[ [0,1]\times X; \{0\}\times\pa X \bigr],
\]
with the lift of the first coordinate function denoted $\sigma$. (One can completely analogously study negative $\sigma$, in which case $X_\scbtop=[[-1,0]\times X;\{0\}\times\pa X]$. In the main part of this paper, it will be clear from the context which of the two versions is used.) We denote by $\scface$, $\tface$, and $\zface$ the lift of $[0,1]\times\pa X$, the front face, and the lift of $\{0\}\times X$, respectively. This is the resolved space for low energy spectral theory from \cite[Definition~2.12]{HintzPrice} (and denoted $X_\res^+$ there); a refinement of the corresponding phase space (in the notation introduced below: the blow-up of $\Tscbt^*X$ at the zero section over $\scface$) was previously introduced by Vasy \cite{VasyLowEnergyLag}. With $\rho_H\in\CI(X_\res)$ denoting a defining function of $H$, we set
\[
  \cV_\scbtop(X) := \{ V\in\rho_\scface\Vb(X_\scbtop) \colon V\sigma=0 \}.
\]
In local coordinates $\rho\geq 0$, $\omega\in\R^{n-1}$ near a point on $X$, this space is spanned over $\CI(X_\scbtop)$ by
\[
  \frac{\rho}{\rho+\sigma}\rho\pa_\rho,\qquad
  \frac{\rho}{\rho+\sigma}\pa_{\omega^j}\ (j=1,\ldots,n-1).
\]
The vector bundle which has these vectors as a local frame is the \emph{$\scbtop$-transition tangent bundle} $\Tscbt X\to X_\scbtop$; the dual bundle is denoted $\Tscbt^*X\to X_\scbtop$ as usual, with local frame
\begin{equation}
\label{EqPscbtFrame}
  \frac{\rho+\sigma}{\rho}\frac{\dd\rho}{\rho},\qquad \frac{\rho+\sigma}{\rho}\,\dd\omega^j\ (j=1,\ldots,n-1).
\end{equation}

The space $\Vscbt(X)$ is a Lie algebra, and indeed $[\Vscbt(X),\Vscbt(X)]\subset\rho_\scface\Vscbt(X)$. (Thus, while $X_\scbtop$ is the same as the $\chop$-single space except for renaming $\sigma$ to $h$, the Lie algebras $\Vscbt(X)$ and $\Vch(X)$ are different.) The restriction $N_\tface$ to $\tface$ gives rise to a short exact sequence
\[
  0 \to \rho_\tface\Vscbt(X) \hra \Vscbt(X) \xra{N_\tface} \cV_{\scop,\bop}(\tface) \to 0
\]
and thus to an identification
\begin{equation}
\label{EqPscbtPhase}
  \Tscbt^*_\tface X \cong {}^{\scop,\bop}T^*\tface.
\end{equation}
We also remark that for each $\sigma_0>0$, the restriction of $V\in\Vscbt(X)$ to $\{\sigma=\sigma_0\}\cong X$ defines an element of $\Vsc(X)$, whereas the restriction to the lift $\zface\cong X$ of $\sigma=0$ lies in $\Vb(X)$.

We denote the space of $m$-th order differential operators generated by $\Vscbt(X)$ by $\Diffscbt^m(X)$; the principal symbol map $\sigmascbt^m$ fits into the short exact sequence
\[
  0 \to \rho_\scface\Diffscbt^{m-1}(X) \hra \Diffscbt^m(X) \xra{\sigmascbt^m} P^m(\Tscbt^*X)/\rho_\scface P^{m-1}(\Tscbt^*X) \to 0.
\]

In order to define the microlocalization of $\Diffscbt(X)$, we define the \emph{$\scbtop$-transition double space} as
\[
  X^2_\scbtop := \bigl[ [0,1]\times X^2_\bop; \{0\}\times\ff_\bop, \{0\}\times\lb_\bop, \{0\}\times\rb_\bop; [0,1]\times\pa\diag_\bop \bigr].
\]
(This space is denoted $M^2_{k,sc}$ in \cite{GuillarmouHassellResI}.) We let $\scface_2$, $\tface_2$, and $\zface_2$ denote the lifts of $[0,1]\times\pa\diag_\bop$, $\{0\}\times\ff_\bop$, and $\{0\}\times X^2_\bop$, respectively; and we write $\diag_\scbtop\subset X^2_\scbtop$ for the lift of $[0,1]\times\diag_\bop$. See Figure~\ref{FigPscbt}.

\begin{figure}[!ht]
\centering
\includegraphics{FigPscbt}
\caption{The scattering-b-transition double space $X^2_\scbtop$.}
\label{FigPscbt}
\end{figure}

Then
\[
  \Psiscbt^s(X)
\]
is the space of smooth families (in $\sigma\in(0,1]$) of linear operators on $\CIdot(X)$ whose Schwartz kernels are elements of $I^{s-\frac14}(X^2_\scbtop,\diag_\scbtop;\pi_R^*\Omegascbt X)$ which vanish to infinite order at all boundary hypersurfaces of $X^2_\scbtop$ except those which have nonempty intersection with $\diag_\scbtop$ (these are $\scface_2$, $\tface_2$, and $\zface_2$, as well as the lift of $\sigma^{-1}(1)$). Here $\pi_R$ is the lift of $[0,1]\times X\times X\ni(\sigma,x,x')\mapsto(\sigma,x')$, and $\Omegascbt X\to X_\scbtop$ is the density bundle associated with $\Tscbt X\to X_\scbtop$. (In \cite{GuillarmouHassellResI}, the authors call this space of operators $\Psi_k^s(M)$, and they consider operators acting on b-$\half$-densities instead of scalar functions.)

In local coordinates $\sigma,\rho,\omega$ as above, a typical element of $\Psiscbt^s(X)$ is the family of operators $\Op_{\scbtop,\sigma}(a)$ defined by
\begin{align*}
  &(\Op_{\scbtop,\sigma}(a)u)(\rho,\omega) \\
  &\quad := (2\pi)^{-n}\iiiint \exp\biggl(i\biggl[\frac{\rho-\rho'}{\rho\frac{\rho}{\sigma+\rho}}\xi_\scbtop + \frac{\omega-\omega'}{\frac{\rho}{\sigma+\rho}}\cdot\eta_\scbtop\biggr]\biggr) \chi\Bigl(\Bigl|\log\frac{\rho}{\rho'}\Bigr|\Bigr)\chi(|\omega-\omega'|) \\
  &\quad\qquad\hspace{6em} \times a(\sigma,\rho,\omega,\xi_\scbtop,\eta_\scbtop)u(\rho',\omega')\,\dd\xi_\scbtop\,\dd\eta_\scbtop\,\frac{\dd \rho'}{\rho'\frac{\rho}{\sigma+\rho'}}\frac{\dd\omega'}{\bigl(\frac{\rho}{\sigma+\rho'}\bigr)^{n-1}}.
\end{align*}
Here $a$ is a symbol of order $s$ in $(\xi_\scbtop,\eta_\scbtop)$, with smooth dependence on $\sigma+\rho$, $\frac{\rho-\sigma}{\rho+\sigma}$, $\omega$ (i.e.\ on $\sigma$, $\rho/\sigma$ in $\rho\lesssim\sigma$ and on $\rho$, $\sigma/\rho$ in $\sigma\lesssim\rho$). One can more generally consider quantizations of symbols $a\in S^{s,r,\gamma,l}(\Tscbt^*X)$ which are conormal on $\ol{\Tscbt^*}X$ with differential order $s$ (i.e.\ weight $-s$ at fiber infinity), scattering decay order $r$ (i.e.\ weight $-r$ at $\ol{\Tscbt^*_\scface}X$), $\tface$-decay order $\gamma$ (i.e.\ weight $-\gamma$ at $\ol{\Tscbt^*_\tface}X)$, and $\zface$-order $l$ (i.e.\ weight $-l$ at $\ol{\Tscbt^*_\zface}X$); the resulting space of operators is denoted $\Psiscbt^{s,r,\gamma,l}(X)$. (Restrictions of elements of $\Psiscbt^{s,r,\gamma,l}(X)$ to a level set $\sigma=\sigma_0>0$ lie in $\Psisc^{s,r}(X)$.) Moreover, just as in the scattering calculus, we can generalize this further by allowing $s,r$ to be variable; in this paper we only need to consider variable scattering decay orders $\sfr\in\CI(\ol{\Tscbt^*_\scface}X)$ and the resulting space
\[
  \Psiscbt^{s,\sfr,\gamma,l}(X).
\]

For those elements of $\Psiscbt^{s,r,0,l}(X)$ whose Schwartz kernels are smooth down to $\tface_2$ (as distributions conormal to $\diag_\scbtop$), indicated by the subscript `$\cl$', restriction to $\tface$ gives rise to a surjective map
\[
  N_\tface \colon \Psi_{\scbtop,\cl}^{s,r,0,l}(X) \to \Psi_{\scop,\bop}^{s,r,l}(X)
\]
with kernel $\rho_\tface\Psi_{\scbtop,\cl}^{s,r,0,l}(X)$. The same remains true for variable orders $\sfr$ upon identifying the restriction of $\sfr$ to $\ol{\Tscbt^*_\tface}X$ with an element of $\CI(\ol{{}^{\scop,\bop}T^*}\tface)$ via~\eqref{EqPscbtPhase}.

\subsection{Sobolev spaces}
\label{SsPH}

For all calculi introduced, we can define corresponding $L^2$-based Sobolev spaces and their (possibly parameter-dependent) norms. We assume throughout that the underlying manifold $X$ is compact. Fixing $\alpha_\nu\in\R$, denote by
\[
  \nu = \rho^{\alpha_\nu}\nu_0,\qquad 0<\nu_0\in\CI(X,\Omegab X),
\]
a weighted b-density. All function spaces will be defined relative to the space $L^2(X,\nu)$. When the density is clear from the context (as is the case from here on), we shall omit it from the notation.

Consider first the b-setting; we let $\Hb^0(X,\nu)=L^2(X)$. The weighted space
\[
  \Hb^{s,l}(X) = \rho^l\Hb^s(X)
\]
is then defined for $s\geq 0$ as the space of all $u\in\Hb^{0,l}(X)$ so that $A u\in\Hb^{0,l}(X)$ for any fixed elliptic $A\in\Psib^s(X)$; for negative $s$ we can define $\Hb^{s,l}(X)=(\Hb^{-s,-l}(X))^*$ by duality (with respect to the $\Hb^0(X)$-inner product), or equivalently as the space of all elements of $\cC^{-\infty}(X)=\CIdot(X)^*$ which are of the form $u_0+A u_1$ where $u_0,u_1\in\Hb^{0,l}(X)$, with $A\in\Psib^{-s}(X)$ a fixed elliptic operator.

Weighted scattering Sobolev spaces $\Hsc^{s,r}(X)=\rho^r\Hsc^s(X)$ are defined in a completely analogous manner relative to $L^2(X)$. (For $X=\ol{\R^n}$, the space $\Hsc^{s,r}(X;|\dd x|)$ is equal to the standard weighted Sobolev space $\la x\ra^{-r}H^s(\R^n)$.) If $\sfr\in\CI(\ol{\Tsc^*}X)$ is a variable decay order and $r_0=\inf\sfr$, we define
\[
  \Hsc^{s,\sfr}(X) = \{ u\in \Hsc^{s,r_0}(X) \colon A u\in\Hsc^0(X) \},
\]
where $A\in\Psisc^{s,\sfr}(X)$ is any fixed elliptic operator.

We next consider $\chop$-Sobolev spaces. The base case is again the $L^2$-space $H_{\cop,h}^0(X):=L^2(X)$, with the $h$-independent norm given by the $L^2(X)$-norm. Next, the most general space we shall need is
\[
  H_{\cop,h}^{s,l,l',\sfr}(X),
\]
where $\sfr\in\CI(\ol{\Tch^*_\sface}X)$ is a variable semiclassical order. For each value of $h\in(0,1]$, this space is equal to $\Hb^{s,l}(X)$ as a set, but with a norm that is not uniformly equivalent as $h\searrow 0$. Namely, for $s\geq 0$, we fix an elliptic operator $A\in\Psich^{s,l,l',\sfr}(X)$ and define
\[
  \|u\|_{H_{\cop,h}^{s,l,l',\sfr}(X)}^2 = \|\rho_\cface^{-l}\rho_\tface^{-l'}\rho_\sface^{-\inf\sfr}u\|_{L^2(X)}^2 + \|A u\|_{L^2(X)}^2,
\]
where $\rho_H\in\CI(X_\chop)$ is a defining function of $H$. For $s<0$, we can define $H_{\cop,h}^{s,l,l',\sfr}(X)$ in any one of the two ways explained above for weighted b-Sobolev spaces.

Finally, we define the weighted $\scbtop$-transition Sobolev space
\[
  H_{\scbtop,\sigma}^{s,\sfr,\gamma,l}(X).
\]
This is for any fixed $\sigma>0$ equal to $\Hsc^{s,\sfr}(X)$ as a set; but for $s\geq 0$ it is equipped with the $\sigma$-dependent norm
\[
  \|u\|_{H_{\scbtop,\sigma}^{s,\sfr,\gamma,l}(X)}^2 = \|\rho_\scface^{-\inf\sfr}\rho_\tface^{-\gamma}\rho_\zface^{-l}u\|_{L^2(X)}^2 + \|A u\|_{L^2(X)}^2,
\]
where $A\in\Psiscbt^{s,\sfr,\gamma,l}(X)$ is any fixed elliptic operator. The definition for $s<0$ is analogous to the b-setting explained previously. The norm $\|u\|_{H_{\scbtop,\sigma}^{s,\sfr,\gamma,l}(X)}$ can be related to scattering-b-Sobolev norms near $\tface$ and b-Sobolev norms near $\zface$. Concretely, if we fix as a density on $X$ the scattering density $|\frac{\dd x}{x^2}\frac{\dd\omega}{x^{n-1}}|$ (or any smooth positive multiple thereof), then for $\chi=\chi(\sigma/x)\in\CIc([0,\infty))$, identically $1$ near $0$, we have a uniform (for $\sigma\in[0,1]$) equivalence of norms
\begin{subequations}
\begin{equation}
\label{EqPHEquivzf}
  \|\chi u\|_{H_{\scbtop,\sigma}^{s,\sfr,\gamma,l}(X)} \sim |\sigma|^l \|\chi u\|_{\Hb^{s,\gamma-l}(X)}.
\end{equation}
(That is, there exists constant $C>0$ so that for all $\sigma\in[0,1]$, the left hand side is bounded by $C$ times the right hand side, and vice versa.) Similarly, for a cutoff $\psi=\psi(|\sigma|+x)\in\CIc([0,\eps))$ (for small $\eps>0$, in a collar neighborhood $[0,\eps)\times\pa X$ of $\pa X$), identically $1$ near $0$, we have a uniform equivalence of norms
\begin{equation}
\label{EqPHEquivtf}
  \|\psi u\|_{H_{\scbtop,\sigma}^{s,\sfr,\gamma,l}(X)} \sim |\sigma|^{\frac{n}{2}-\gamma} \|\psi u\|_{H_{\scop,\bop}^{s,\sfr,l-\gamma}(\tface)}
\end{equation}
\end{subequations}
where upon setting $\hat x:=x/\sigma$, we use the density $|\frac{\dd\hat x}{\hat x^2}\frac{\dd\omega}{\hat x^{n-1}}|=|\sigma|^{-n}|\frac{\dd x}{x^2}\frac{\dd\omega}{x^{n-1}}|$ on $\tface$. The equivalences~\eqref{EqPHEquivzf}--\eqref{EqPHEquivtf} are easily checked for $L^2$-spaces ($s=0$) with constant scattering decay order $\sfr$. For general $s,\sfr$, they follow by using the definition of the respective norms using elliptic ps.d.o.s. For~\eqref{EqPHEquivzf}, one notes that the Schwartz kernel of an elliptic b-ps.d.o.\ on $X$ is a distribution on $\zface_2\subset X^2_\scbtop$ and as such can be extended, by $\sigma$-invariance, to the Schwartz kernel of a $\scbtop$-ps.d.o.\ which is elliptic near $\zface$. For~\eqref{EqPHEquivtf}, one uses that the Schwartz kernel of an elliptic scattering-b-ps.d.o.\ on $\tface$ is a distribution on $\tface_2\subset X^2_\scbtop$ which can be extended, by dilation-invariance in $(\sigma,x,x')$, to the Schwartz kernel of a $\scbtop$-ps.d.o.\ which is elliptic near $\tface$. See the proof of Proposition~\ref{PropQHRel} for further details in a similar context.

\section{Very large and extremely large frequency regimes in the Q-setting}
\label{SQSemi}

Here, we relate Q-analysis in the very large and extremely large frequency regimes described in~\S\ref{SsIA} to semiclassical cone analysis and doubly semiclassical cone analysis in the sense of \cite{HintzConicPowers}.

\begin{prop}[Intermediate and fully semiclassical regimes]
\label{PropQPSemi}
  For any fixed $\tilde\sigma_0\in\R\setminus\{0\}$, the level set $X_{\Qop,\tilde\sigma_0}^2:=X_\Qop^2\cap\tilde\sigma^{-1}(\tilde\sigma_0)$ is diffeomorphic, via the coordinates $(\bhm,x,x')\in(0,1]\times X\times X$, to
  \[
    X^2_{\qop\semi} := \bigl[ X^2_\qop; \diag_\qop\cap\,\mface_{\qop,2} \bigr].
  \]
  Moreover, $X^2_{\Qop,\pm,\tilde\semi}:=X^2_\Qop\cap\tilde\sigma^{-1}(\pm[1,\infty])$ is  diffeomorphic to
  \[
    X^2_{\qop\semi\tilde\semi} := \bigl[ [0,1]_{\tilde\semi} \times X^2_{q\semi}; \{0\}\times\diag_{q\semi} \bigr],
  \]
  where $\diag_{\qop\semi}\subset X^2_{q\semi}$ is the lift of $\diag_\qop$.
\end{prop}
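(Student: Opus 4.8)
\textbf{Proof proposal for Proposition~\ref{PropQPSemi}.}

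The plan is to prove both diffeomorphisms by tracking, step by step, which of the blow-ups in the definition~\eqref{EqQPDouble} of $X_\Qop^2$ survive upon restriction to the relevant level set or half space of $\tilde\sigma$, and then matching the result to the explicitly given spaces $X^2_{\qop\semi}$ and $X^2_{\qop\semi\tilde\semi}$. The essential bookkeeping tool is that $\tilde\sigma=\bhm\sigma$, so the locus $\tilde\sigma=\tilde\sigma_0\neq 0$ (with $|\sigma|$ large) sits over $\nface_2$ and avoids $\zface_2$, $\lb_2$, $\rb_2$ entirely, while intersecting $\mface_2$; and for $\tilde\sigma\in\pm[1,\infty]$ we additionally retain the semiclassical face $\sface_2$. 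I would begin by recalling (from Proposition~\ref{PropQPBdy}\eqref{ItQPBdynf2lo}, whose proof in~\eqref{EqQPBdynf2funny}--\eqref{EqQPBdynf2partial} already does most of the work) the description of $\nface_{2}$ near the relevant corner, and then cut it by $\tilde\sigma^{-1}(\tilde\sigma_0)$: since $\tilde\sigma$ is a total defining function of $\tface$-type faces in~\eqref{EqQPBdynf2funny}, fixing it at a nonzero value removes the blow-ups of $\{0\}\times\lb_\bop$, $\{0\}\times\rb_\bop$, $\{0\}\times\ff_\bop$ (these are blow-ups \emph{within} the $\tilde\sigma=0$ fiber and become trivial there), and leaves exactly the blow-up of $\diag_\bop\cap\ff_\bop$ and then of $\diag_\bop$—which is precisely the passage from $X^2_\qop$ to $X^2_{\qop\semi}=[X^2_\qop;\diag_\qop\cap\mface_{\qop,2}]$ followed by the (now-smooth, since the diagonal of the resulting front face has already been blown up) blow-up of $\diag_\qop$ itself. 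More carefully: the blow-up $\pa\ol\R\times(\diag_\qop\cap\mface_{\qop,2})$ in~\eqref{EqQPDouble} restricts, on a fixed $\tilde\sigma_0$-fiber, to the blow-up of $\diag_\qop\cap\mface_{\qop,2}$ inside $X^2_\qop$; the blow-up $\pa\ol\R\times\diag_\qop$ then restricts to the blow-up of (the lift of) $\diag_\qop$, which can be reordered to the end by~\cite[Proposition~5.11.2]{MelroseDiffOnMwc} since $\diag_\qop\cap\mface_{\qop,2}\subset\diag_\qop$; and once $\diag_\qop\cap\mface_{\qop,2}$ is blown up, blowing up $\diag_\qop$ no longer changes the smooth structure near $\mface_{\qop,2}$—but away from $\mface_{\qop,2}$ it does, namely it is exactly the diagonal blow-up already present inside the definition of $X^2_{\qop\semi}$. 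Comparing: $X^2_{\qop\semi}=[X^2_\qop;\diag_\qop\cap\mface_{\qop,2}]$, and one checks that restricting $\pa\ol\R\times\diag_\qop$ to a $\tilde\sigma_0$-level set and reordering gives exactly this space (the extra diagonal blow-up in~\eqref{EqQPDouble} being absorbed), while all remaining blow-ups in~\eqref{EqQPDouble}—of $\zface_{\qop,2}$, $\lb_{\qop,2}$, $\rb_{\qop,2}$, $\mface_{\qop,2}$—restrict to either the empty set or to already-existing boundary faces and hence are vacuous. This yields the first diffeomorphism, with the coordinate identification $(\bhm,x,x')$ being the obvious one away from $\tilde\sigma=0$.

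For the second statement I would proceed analogously but now keep the $\sigma=\pm\infty$ (equivalently $\tilde h=|\tilde\sigma|^{-1}=0$) face. Over $\tilde\sigma\in\pm[1,\infty]$ one has $\tilde\sigma$ bounded away from $0$, so again the $\zface_2$, $\lb_2$, $\rb_2$ blow-ups of~\eqref{EqQPDouble} are vacuous, but now $h=\tilde h\bhm$ degenerates, and the faces $\sface_2$ (lift of $\pa\ol\R\times\diag_\qop$) and $\sface'_2$ (lift of $\pa\ol\R\times X^2_\qop$) become genuine boundary hypersurfaces. The natural coordinate is $\tilde h\in[0,1]$ (together with $(\bhm,x,x')$), and the plan is to show that, in these coordinates, $X^2_{\Qop,\pm,\tilde\semi}$ is obtained from $X^2_{\qop\semi}$ (the $\bhm$- and $x$-geometry, with $\tilde\sigma$ now playing the role of the fixed large parameter in the first part) by crossing with $[0,1]_{\tilde h}$ and blowing up $\{0\}\times\diag_{\qop\semi}$. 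Concretely: the blow-up $\pa\ol\R\times\diag_\qop$ in~\eqref{EqQPDouble}, restricted to the half space $|\sigma|\geq 1$, is the blow-up of $\{\tilde h=0\}\times\diag_\qop$; after first performing (as in part (1)) the blow-up of $\diag_\qop\cap\mface_{\qop,2}$—which over the half space is $\{\tilde h=0\}\times(\diag_\qop\cap\mface_{\qop,2})$ together with its interaction with the $\tilde h=0$ face—and then the blow-up of $\mface_{\qop,2}$, the remaining diagonal blow-up is exactly $\{0\}\times\diag_{\qop\semi}$ inside $[0,1]_{\tilde h}\times X^2_{\qop\semi}$, matching $X^2_{\qop\semi\tilde\semi}$. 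The reordering lemmas~\cite[Propositions~5.11.2 and 5.12.1]{MelroseDiffOnMwc} handle the commutation of these blow-ups, using the nesting $\diag_\qop\subset\diag_{\qop}$-containing-faces and the disjointness of $\mface_{\qop,2}$-related centers from the diagonal.

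The main obstacle I anticipate is \emph{not} the reordering combinatorics (which is routine given~\cite[Chapter~5]{MelroseDiffOnMwc}) but rather keeping precise track of which centers in~\eqref{EqQPDouble} restrict to proper p-submanifolds of the level set / half space versus which restrict to boundary hypersurfaces (hence become vacuous blow-ups) versus which restrict to the empty set. In particular one must verify, near the corner $\nface_2\cap\mface_2$ in the intermediate-semiclassical regime, that the single diagonal blow-up $\pa\ol\R\times(\diag_\qop\cap\mface_{\qop,2})$ together with $\pa\ol\R\times\diag_\qop$ in~\eqref{EqQPDouble} restricts to the \emph{pair} $\diag_\qop\cap\mface_{\qop,2}$, $\diag_\qop$ appearing (in that order) in the definition of $X^2_{\qop\semi}$, and not to something with an extra spurious blow-up; this is best confirmed by an explicit local coordinate computation near $\mface_{\qop,2}$ using coordinates analogous to those in the proof of Lemma~\ref{LemmaqbFib} and Proposition~\ref{PropQPBdy}. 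I expect roughly a page of such coordinate verification to pin it down rigorously, after which the statement follows.
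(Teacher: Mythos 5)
Your high-level plan---tracking which centers in~\eqref{EqQPDouble} survive upon restriction to the level set $\tilde\sigma^{-1}(\tilde\sigma_0)$---is sound and is essentially what the paper's proof does, except the paper carries it out by explicit local coordinates near the corner $[0,1]_h\times\mface_{\qop,2}$ (culminating in~\eqref{EqQPSemi}) rather than by abstract commutation of blow-ups. However, your argument contains a genuine error: you claim that ``the blow-up $\pa\ol\R\times\diag_\qop$ then restricts to the blow-up of (the lift of) $\diag_\qop$,'' i.e.\ of the \emph{full} diagonal, and then struggle to reconcile this with the fact that $X^2_{\qop\semi}=[X^2_\qop;\diag_\qop\cap\mface_{\qop,2}]$ contains no diagonal blow-up (your phrase ``the diagonal blow-up already present inside the definition of $X^2_{\qop\semi}$'' refers to something that is not there). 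In fact the center $\pa\ol\R\times\diag_\qop$ sits at $\sigma=\pm\infty$, which on the level set $h=\bhm/\tilde\sigma_0$ is precisely $\{\bhm=0\}$; so even before any blow-up, the intersection with the level set is only $\diag_\qop\cap(\zface_{\qop,2}\cup\mface_{\qop,2})$, not $\diag_\qop$. After the first blow-up (of $\pa\ol\R\times\zface_{\qop,2}$, which is vacuous on the level set but does change the ambient geometry), the proper transform of $\pa\ol\R\times\diag_\qop$ sits where $h/\rho_{\zface_{\qop,2}}\to 0$, i.e.\ in the $\mu=0$ face; on the level set one has $\mu=0\Leftrightarrow\bhm/\rho_{\zface_{\qop,2}}=0$, so the intersection is exactly $\diag_\qop\cap\mface_{\qop,2}$. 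Blowing this up gives $X^2_{\qop\semi}$ directly, and the subsequent centers $\pa\ol\R\times(\diag_\qop\cap\mface_{\qop,2})$, $\pa\ol\R\times\lb_{\qop,2}$, $\pa\ol\R\times\rb_{\qop,2}$, $\pa\ol\R\times\mface_{\qop,2}$ all restrict to boundary hypersurfaces of the level set and are vacuous. So the identification is cleaner than you believe, and there is no ``extra diagonal blow-up'' to absorb.

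The ``page of coordinate verification'' you defer is precisely where the paper does its work: it sets up coordinates $\dot h=h/\dot\rho$, $\rho_{\mface_{\qop,2}}$, $\dot\rho$, $s$ near $[0,1]_h\times\mface_{\qop,2}$, identifies the six submanifolds from~\eqref{EqQPDouble} in these coordinates, and reads off the iterated blow-up~\eqref{EqQPSemi}, from which both assertions of the proposition follow by restricting to $\tilde h=\text{const}$ or to a collar of $\tilde h=0$. If you carried out that verification you would discover that your intermediate claim about the restriction of the $\diag_\qop$ blow-up is wrong, and the reordering acrobatics become unnecessary. For the second diffeomorphism your sketch is similarly right in outline but would need the same correction near $\mface_{\qop,2}$, and you would need to verify (as the paper does at the very end of the proof and leaves partly to the reader) the analogous structure near the preimages of $[0,1]_h\times\lb_{\qop,2}$ and $[0,1]_h\times\rb_{\qop,2}$.
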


\begin{rmk}[Relationship to (doubly) semiclassical cone algebras]
\label{RmkQPSemi}
  The space $X^2_{\qop\semi}$ carries the Schwartz kernels of an algebra $\Psi_{\qop\semi}(X)$ of pseudodifferential operators which microlocalizes the algebra of differential operators based on the Lie algebra $\cV_{\qop\semi}(X):=\{V\in\rho_{\mface_\qop}\Vb(X_\qop)\colon V\bhm=0\}$. Thus, elements of $\cV_{\qop\semi}(X)$ are semiclassical vector fields on $X\setminus\{0\}$, with semiclassical parameter $\bhm$; there is moreover a normal operator at $\zface_\qop$ which is of scattering type at $\pa\zface_\qop=\zface_\qop\cap\mface_\qop$. Note that the space $\cV_{\qop\semi}(X)$ is closely related to the space $\Vch(X)$ of semiclassical cone vector fields with semiclassical parameter $\bhm$, in that the spaces of restrictions of elements of $\cV_{\qop\semi}(X)$ and $\Vch(\dot X)$ to the set $|x|\gtrsim\bhm$ are equal. One can use such an algebra $\Psi_{\qop\semi}(X)$ for uniform analysis as $\bhm\searrow 0$ in the frequency regime $|\sigma|\sim\bhm^{-1}$ (i.e.\ $|\tilde\sigma|\sim 1$). The algebra $\Psi_{\qop\semi}(X)$ is contained in $\PsiQ(X)$ (in the sense that the space of restrictions of elements of $\PsiQ(X)$ to $\tilde\sigma^{-1}(\tilde\sigma_0)$ is equal to $\Psi_{q'}(X)$ for any $\tilde\sigma_0\in\R\setminus\{0\}$), and therefore we do not describe it in detail by itself. When restricting to $\tilde\sigma\in I$ in the case $I=\pm[\tilde\sigma_0,\infty]$ where $\tilde\sigma_0\in(0,\infty)$, Q-ps.d.o.s are semiclassical versions of $\qop\semi$-ps.d.o.s., with $\tilde h=|\tilde\sigma|^{-1}$ being the semiclassical parameter; this regime is thus closely related to (and in $|x|\gtrsim\bhm$ equal to) the doubly semiclassical cone calculus introduced in \cite[\S4]{HintzConicPowers}, with $\bhm$, resp.\ $\tilde h$ being the first, resp.\ second semiclassical parameter. The difference between the double space $X^2_{\qop\semi\tilde\semi}$ defined here and the doubly semiclassical cone double space of \cite[Definition~4.6]{HintzConicPowers} stems from the fact that only in the latter setting there is a cone point at the spatial origin which necessitates a semiclassical cone resolution at $\tilde h=0$.
\end{rmk}

\begin{proof}[Proof of Proposition~\usref{PropQPSemi}]
  Consider first a neighborhood of $\{\infty\}\times(\zface_{\qop,2})^\circ\subset\ol\R\times X^2_\qop$; this has a collar neighborhood $[0,1)_h\times[0,1]_\bhm\times(\hat X^\circ)^2$. Passing to the blow-up of $h=\bhm=0$, we have local coordinates $\tilde h=\frac{h}{\bhm}$, $\bhm$, $\hat x$, $\hat x'$ near the lift of $h=0$. The space resulting from blowing up the lift $\{0\}\times[0,1]_\bhm\times\diag_{\hat X^\circ}$ of $\pa\R\times\diag_\qop$ contains $[0,1]_\bhm\times(\hat X^\circ)^2_{\tilde\semi}$ where $(\hat X^\circ)^2_{\tilde\semi} = \bigl[ [0,1]_{\tilde h}\times(\hat X^\circ)^2; \{0\}\times\diag_{\hat X^\circ} \bigr]$ is the semiclassical double space of $\hat X^\circ$.

  We shall next analyze a neighborhood of the preimage of $[0,1]_h\times\mface_{\qop,2}=[0,1]_h\times\dot X^2_\bop$ in $X^2_\Qop$, and we in fact restrict attention in the second factor to a collar neighborhood $[0,1)_{\rho_{\mface_{\qop,2}}}\times[0,1)_{\dot\rho}\times[0,\infty]_s\times(\pa\dot X)^2$ of $\mface_{\qop,2}$, where $s=\frac{r}{r'}$ and
  \begin{equation}
  \label{EqQPSemirhomf}
    \rho_{\mface_{\qop,2}}=\frac{\bhm}{\dot\rho \dot\rho_L \dot\rho_R},\qquad
    \dot\rho=r+r',\quad \dot\rho_L=\frac{s}{s+1},\quad\dot\rho_R=\frac{1}{s+1}.
  \end{equation}
  We drop the factor $(\pa\dot X)^2$ from the notation; thus we have a coordinate chart
  \[
    [0,1]_h \times [0,1)_{\rho_{\mface_{\qop,2}}} \times [0,1)_{\dot\rho} \times [0,\infty]_s
  \]
  near $[0,1]_h\times\mface_{\qop,2}\subset\ol\R\times X_\qop^2$. In these coordinates, the $6$ submanifolds blown up in~\eqref{EqQPDouble} take the form\footnote{We combine the ones involving $\lb_{\qop,2}$ and $\rb_{\qop,2}$.}
  \begin{align*}
    \{0\}\times\zface_{\qop,2} &= \{0\}\times[0,1)\times\{0\}\times[0,\infty]; \\
    \{0\}\times\diag_\qop &= \{0\}\times[0,1)\times[0,1)\times\{1\}, \\
    \{0\}\times(\diag_\qop\cap\,\mface_{\qop,2}) &= \{0\}\times\{0\}\times[0,1)\times\{1\}, \\
    \{0\}\times(\lb_{\qop,2}\cup\rb_{\qop,2}) &= \{0\}\times[0,1)\times[0,1)\times\{0,\infty\}, \\
    \{0\}\times\mface_{\qop,2} &= \{0\}\times\{0\}\times[0,1)\times[0,\infty].
  \end{align*}
  Upon blowing up $\{0\}\times\zface_{\qop,2}$, a collar neighborhood of the lift of $h=0$ is given by
  \[
    [0,\infty)_{\dot h} \times [0,1)_{\rho_{\mface_{\qop,2}}} \times [0,1)_{\dot\rho} \times [0,\infty]_s,\qquad \dot h:=\frac{h}{\dot\rho}.
  \]
  The lifts of the remaining $4$ submanifolds all involve the factor $[0,1)_{\hat\rho}$, and therefore, upon blowing them up, we obtain the open submanifold with corners of $X^2_\Qop$
  \begin{equation}
  \label{EqQPSemi}
  \begin{split}
    [0,1)_{\dot\rho} \times &\bigl[ [0,\infty)_{\dot h}\times[0,1)_{\rho_{\mface_{\qop,2}}} \times [0,\infty]_s; \{0\}\times[0,1)\times\{1\}, \{0\}\times[0,1)\times\{0,\infty\}; \\
      &\hspace{16em} \{0\}\times\{0\}\times\{1\}; \{0\}\times\{0\}\times[0,\infty] \bigr].
  \end{split}
  \end{equation}
  See Figure~\ref{FigQPSemi}. %
  \begin{figure}[!ht]
  \centering
  \includegraphics{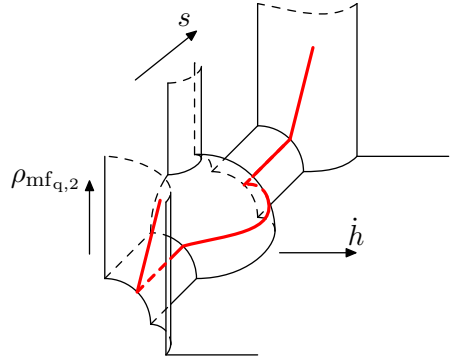}
  \caption{The space~\eqref{EqQPSemi} (with the $\dot\rho$-coordinate suppressed), as a subspace of the Q-double space $X^2_\Qop$. Also shown is the intersection of a level set $\tilde\sigma=\tilde\sigma_0\neq 0$ with $\pa X^2_\Qop$.}
  \label{FigQPSemi}
  \end{figure}%
  In particular, if one does not blow up $\{0\}\times[0,1)\times\{1\}$, then a collar neighborhood of the lift of $\dot h=0$ is given by
  \[
    [0,1)_{\dot\rho} \times [0,\infty)_{\tilde h} \times \bigl[ [0,1)_{\rho_{\mface_{\qop,2}}} \times [0,\infty]_s; \{0\}\times\{1\} \bigr]
  \]
  since $\tilde h=\frac{h}{\bhm}\sim\frac{\dot h}{\rho_{\mface_{\qop,2}}\dot\rho_L\dot\rho_R}$ by~\eqref{EqQPSemirhomf}. Blowing up the lift of $[0,1)\times\{0\}\times[0,1)\times\{1\}$ and reinserting the factor $(\pa\dot X)^2$, the open submanifold~\eqref{EqQPSemi} of $X_\Qop^2$ is thus
  \[
    \bigl[ [0,\infty)_{\tilde h} \times \cN; \{0\}\times\diag_{\qop\semi} \bigr]
  \]
  where $\cN\subset X^2_{\qop\semi}$ is a neighborhood of the preimage of $\mface_{\qop,2}$ under the blow-down map $X^2_{\qop\semi}\to X^2_\qop$. The proof of the Proposition is complete once one performs an analogous analysis of the geometry of $X^2_{\Qop,\tilde\sigma_0}$ and $X^2_{\Qop,+,\semi}$ near the preimages of $[0,1]_h\times\lb_{\qop,2}$ and $[0,1]_h\times\rb_{\qop,2}$; we leave this to the reader.
\end{proof}

\bibliographystyle{alphaurl}
\bibliography{/home/peter/research/bib/math,/home/peter/research/bib/mathcheck,/home/peter/research/bib/phys}

\end{document}